\documentclass[english,11pt]{article}

\usepackage[T1]{fontenc}
\usepackage{geometry}
\RequirePackage{amsthm,amsmath,amsfonts,amssymb}
\RequirePackage[authoryear]{natbib}
\usepackage{subcaption}
\usepackage{graphicx}
\usepackage{comment}

\usepackage[shortlabels]{enumitem}
\usepackage{array}
\usepackage{booktabs}
\usepackage[colorlinks,citecolor=blue,urlcolor=blue]{hyperref}
\usepackage{blkarray}
\usepackage{algorithm}
\usepackage{algorithmic}
\usepackage{float}
\usepackage{url}
\usepackage{longtable}
\usepackage{stmaryrd}
\usepackage{mwe}

\usepackage{multirow}
\DeclareMathAlphabet\mathbfcal{OMS}{cmsy}{b}{n}
\usepackage{comment}

\newcommand{\diag}{{\rm diag}}

\makeatletter
\newcommand*{\rom}[1]{\expandafter\@slowromancap\romannumeral #1@}
\makeatother

\usepackage{listings}
\usepackage[dvipsnames]{xcolor}

\setlist[itemize]{leftmargin=1em}
\setlist[enumerate]{leftmargin=1em}
\allowdisplaybreaks
\floatstyle{ruled}
\restylefloat{algorithm}

\newcommand{\argmin}{\mathop{\rm arg\min}}

\newcommand{\tr}{{\rm tr}}

\newcommand{\norm}[1]{\lVert#1\rVert}
\newcommand{\ran}{{\rm Ran}}

\newcommand{\h}{\mathsf{H}}

\newcommand{\splq}{\mathsf{L^2}(\mathbb{Q})}
\newcommand{\splp}{\mathsf{L^2}(\mathbb{P})}

\newcommand{\oplp}{\mathcal{L}_P}
\newcommand{\oplq}{\mathcal{L}_Q}
\newcommand{\oplpw}{\mathcal{L}_{P^w}}

\newcommand{\oplpnw}{\mathcal{L}_{P_n^w}}

\newcommand{\oplpiw}{\mathcal{L}_{P_i^w}}
\newcommand{\oplpjw}{\mathcal{L}_{P_j^w}}
\newcommand{\oplpwl}{\mathcal{L}_{P^w,\lambda}}
\newcommand{\oplpnwl}{\mathcal{L}_{P_n^w,\lambda}}

\newcommand{\oplpwo}{\mathcal{L}_{P^{w_0}}}

\newcommand{\opt}{\mathcal{T}}
\newcommand{\opid}{\mathcal{I}}
\newcommand{\opv}{\mathcal{V}}

\newcommand{\maid}{\mathbf{I}}
\newcommand{\mah}{\mathbf{h}}
\newcommand{\maw}{\mathbf{W}}
\newcommand{\mak}{\mathbf{K}}

\newcommand{\vkx}{\mathbf{k}_{\mathbf{x}}}
\newcommand{\vy}{\mathbf{y}}
\newcommand{\vvx}{\mathbf{v}_{\mathbf{x}}}
\newcommand{\va}{\mathbf{a}}
\newcommand{\vz}{\mathbf{z}}
\newcommand{\vone}{\mathbf{1}}

\newcommand{\pd}{\mathcal{N}_{\lambda}}

\newcommand{\pdn}{\mathcal{N}_{\lambda_n}}

\newcommand{\fpw}{f_{P^w}}

\newcommand{\hatf}{\hat{f}}

\def\beps{\boldsymbol{{\varepsilon}}}

\def\calA{{\cal  A}}

\def\calD{{\cal  D}} 
\def\calE{{\cal  E}}

\def\calL{{\cal  L}} 
\def\calM{{\cal  M}}

\def\calV{{\cal  V}}

\newcommand{\bfm}[1]{\ensuremath{\mathbf{#1}}}

\def\ba{\bfm a}

     \def\EE{\mathbb{E}}

\def\bk{\bfm k}   \def\bK{\bfm K}

     \def\PP{\mathbb{P}}
     \def\QQ{\mathbb{Q}}
     \def\RR{\mathbb{R}}

   \def\bW{\bfm W}  
\def\bx{\bfm x}     
\def\by{\bfm y}     
\def\bz{\bfm z}

\def\cs{\textsf{CS}}

\def\frakX{\mathfrak{X}}

\def\eps{\varepsilon}
\def\sfH{\mathsf{H}}
\def\sfL{\mathsf{L}}

\def\what{\widehat}
\def\wtilde{\widetilde}
\def\oracle{\textsf{ora}}

\theoremstyle{plain}
\newtheorem{lemma}{\textbf{Lemma}}
\newtheorem{proposition}{\textbf{Proposition}}
\newtheorem{theorem}{\textbf{Theorem}}
\newtheorem{corollary}{\textbf{Corollary}}
\newtheorem{assumption}{\textbf{Assumption}}

\newtheorem{definition}{\textbf{Definition}}
\newtheorem{remark}{\textbf{Remark}}

\numberwithin{equation}{section}
\numberwithin{theorem}{section}
\numberwithin{lemma}{section}
\numberwithin{proposition}{section}
\numberwithin{corollary}{section}
\numberwithin{definition}{section}
\numberwithin{example}{section}
\numberwithin{assumption}{section}

\begin{document}

\title{Nonparametric Goodness-of-fit Testing under Covariate Shift}
\author{Zhen Hou and  Dong Xia\footnote{The research was partially supported by  Hong Kong RGC Grant GRF 16303224.} \\
{\small Department of Mathematics}\\
{\small  Hong Kong University of Science and Technology}
}
\date{(\today)}

\maketitle

\begin{abstract}
This paper develops procedures for nonparametric goodness-of-fit testing under covariate shift, where labelled data are drawn from a \emph{source} population but goodness-of-fit is  evaluated for a \emph{target} population. The distribution mismatch is quantified by either a bounded moment condition or a sub-exponential tail condition on the target-to-source density ratio. Our method combines truncated importance-weighting kernel ridge regression with a multiplier bootstrap to construct confidence sets for the regression function. The truncation stabilizes the importance-weighting kernel ridge regression as well as the bootstrap calibration, making our approach applicable even when the density ratio has heavy tails.   We prove nonasymptotic validity and sharpness of the resulting confidence sets under suitable operator compatibility conditions, and establish explicit error rates for coverage probability under specific conditions on the target-to-source density ratio and on the spectral decay of the kernel integral operator. Numerical experiments corroborate our theoretical findings. 
\end{abstract}

\noindent{\bf Keywords:} Covariate shift, kernel ridge regression, multiplier bootstrap, goodness-of-fit, importance weighting.

\section{Introduction}
\label{sec:introduction}

\emph{Covariate shift} arises when labelled data are collected from a source population but scientific conclusions are sought for a target population with a different covariate distribution. This problem has become increasingly prominent as machine learning methods trained on labelled source data are deployed in specialized target domains \citep{Guan_2022}, where target-domain labels are often scarce or expensive \citep{wang2024comprehensive}. For instance, when general-purpose language models are deployed in specialized biomedical and legal domains, the prompts, documents, and questions can differ drastically from the broad corpora used for pretraining or instruction tuning \citep{gururangan2020dont}. These applications raise important statistical questions, for example, \emph{which statistical methods perform well under target distributions and are robust to target-to-source mismatch?} and \emph{is it possible to evaluate the performance of these methods under the target distribution even when they are trained under the source distribution?} While the first question has been extensively studied in the literature, the second is far less understood. This paper provides an affirmative answer through a general inference framework based on truncated importance-weighted kernel ridge regression (KRR) and the multiplier bootstrap \citep{bickel1981some,efron1994introduction,hall2013bootstrap}.

Let $\PP$ and $\QQ$ denote the \emph{source} and \emph{target} distributions, respectively, on the joint space $\frakX \times \RR$. For simplicity, we assume that both $\PP_X$ and $\QQ_X$ admit \emph{known} densities with respect to a common dominating measure, and that $\QQ_X$ is absolutely continuous with respect to $\PP_X$. Our methods and main theorems hold as long as the density ratio $\rho(x)=(d\QQ_X/d\PP_X)(x)$ is known. Moreover, $\rho(x)$ can be estimated from unlabelled source and target covariates by direct importance-estimation or distribution-matching methods, such as those in \cite{sugiyama2008direct,huang2007correcting}. See further discussion in Section~\ref{sec:discuss}. We assume that the marginal distributions of the covariates may differ, $\PP_X\neq \QQ_X$, whereas the conditional distributions of the response remain the same, $\PP_{Y|X}=\QQ_{Y|X}$.  With a slight abuse of notation, we shall write $\PP$ and $\QQ$ for the covariate marginal distributions $\PP_X$ and $\QQ_X$, respectively, whenever no ambiguity arises.
Suppose that an i.i.d. sample $\mathcal{D}=\{(X_i, Y_i)\}_{i=1}^n\subset \frakX\times \RR$ drawn from the source distribution $\PP$ is available, where the response and covariates satisfy the regression model $f_*(x) = \mathbb{E}[Y|X=x]$, where $f_{\ast}(\cdot): \frakX\mapsto \RR$ is an \emph{unknown} regression function.  Our goal is to evaluate $f_{\ast}$ under the target covariate distribution. More specifically, given a candidate function $f_0$, we aim to test
\begin{equation}
  \label{eq:testing-problem}
    H_0: \norm{f_0 - f_{\ast}}_{\splq} = 0 \quad { v.s.} \quad H_1: \norm{f_0 - f_{\ast}}_{\splq} > 0,
\end{equation}
where the $\splq$-norm is defined by $\|g\|_{\splq}^2:=\int g(x)^2 d\QQ(x)$. Equivalently, the null hypothesis states that $\QQ_X\big(\{ x: f_{\ast}(x)=f_0(x)\}\big)=1$.  For ease of exposition, we focus on the simple test, but our framework can be easily generalized to composite tests and two-sample tests. See  numerical experiments in Section~\ref{subsec:scf-monotonicity} and  the Appendix.

The seminal work \citet{shimodaira2000improving} proposed importance weighting (IW) for estimating the target function $f_{\ast}(x)$ under covariate shift in parametric likelihood models, showing that the asymptotically optimal weighting function is the target-to-source density ratio $\rho(x)$. 
Non-asymptotic analyses of importance weighting have also attracted considerable attention across diverse fields and applications, including the generalization bounds for statistical learning with function classes of finite pseudo-dimension \citep{cortes2010learning}, local convergence rates for learning Lipschitz functions under local doubling conditions \citep{schmidt2024local}, minimax-optimal rates for learning H\"older-continuous functions under covariate shift \citep{pathak2022new}, and kernel-based regression and classification \citep{ma2023optimally,wang2026pseudo}. 

Importance weighting is generally necessary for unbiased estimation of the target risk under covariate shift, but it may be unnecessary for identifying the Bayes-optimal predictor when the conditional distribution is invariant and the model is correctly specified. See, for example, \citet{gogolashvili2023importance} and the references therein. However, importance weighting can degrade finite-sample performance when the density ratio is unbounded or highly variable.  A common remedy is to truncate the density ratio, replacing $\rho(x)$ by $w_n(x):=\min\{\rho(x), \tau_n\}$ where \(\tau_n>0\) is a pre-specified truncation level.  \cite{ma2023optimally} showed that the importance weighted kernel ridge regression (IWKRR) estimator achieves the minimax optimal convergence rate under covariate shift using a truncated importance weight for variance reduction. 
Figure~\ref{fig:intro-iwkrr-truncation} illustrates the stabilizing effect of truncation when the density ratio is large in target-relevant regions. Indeed, a small number of source observations may receive extremely large importance weights, causing the IWKRR estimator to be highly \emph{variable}. Capping the weights can therefore substantially stabilize the $\sfL^2(\QQ)$ error.

\begin{figure}
\centering
\begin{subfigure}[t]{0.5\textwidth}
    \centering
    \includegraphics[width=\textwidth]{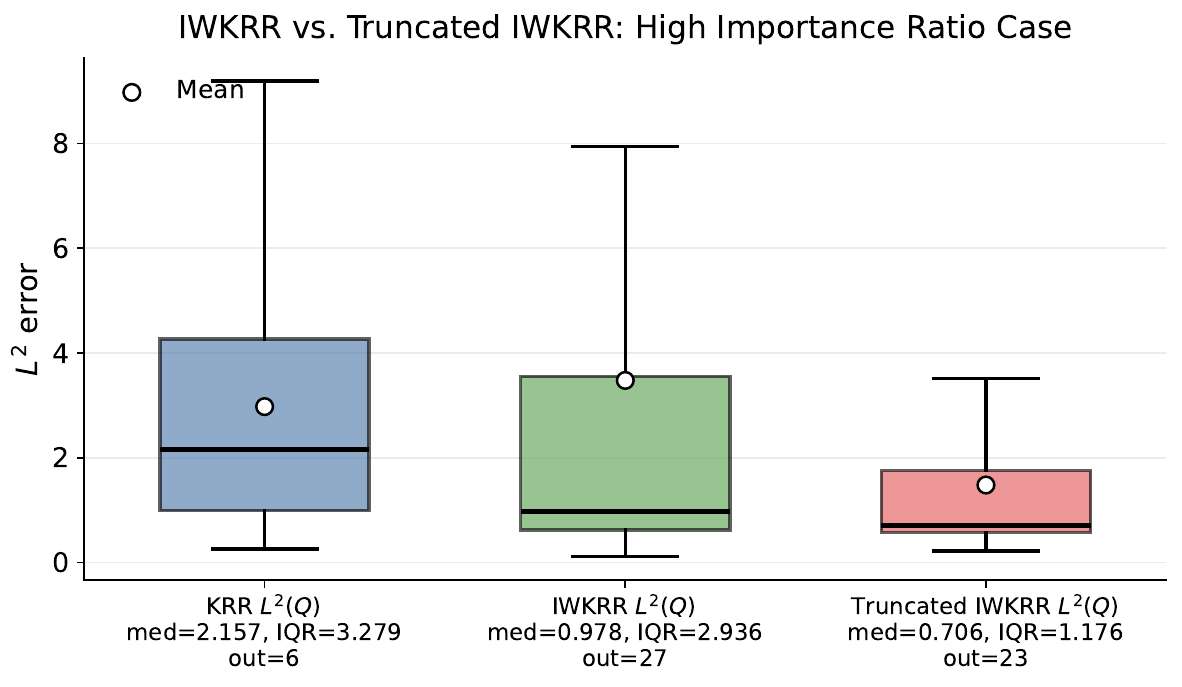}
    \caption{$\splq$-norm error rates: KRR, IWKRR, and IWKRR with truncation.}
\end{subfigure}
    \hfill
    \begin{subfigure}[t]{0.45\textwidth}
    	\centering
    	\includegraphics[width=\textwidth]{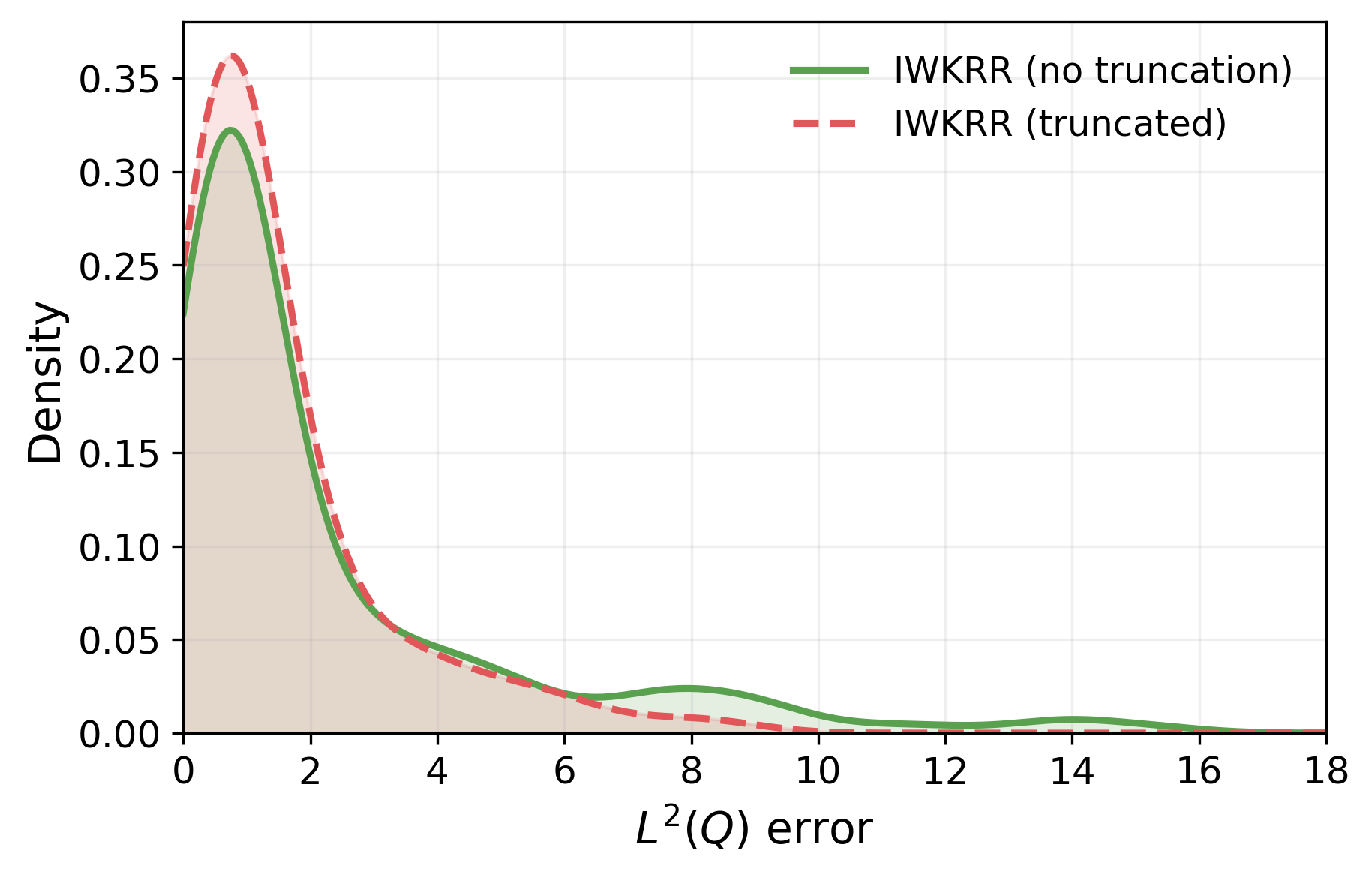}
    	\caption{IWKRR $\splq$-norm error distributions, without and with truncation.}
    \end{subfigure}
\caption{truncating large density-ratio weights $\rho(x):=(d\QQ/d\PP) (x)$ stabilizes the finite-sample performance of the importance-weighting kernel ridge regression (IWKRR). Source and target distributions are {\it truncated normal} ${\rm TN}(-1,0.25;[-2,2])$  and ${\rm TN}(1,0.25;[-2,2])$, respectively.  Panel (a) summarizes the validation-tuned errors of KRR, IWKRR, and truncated IWKRR. The true regression function is $f_{\ast}(x)=|x|$ . Panel (b) shows the Monte Carlo densities of the $\splq$ errors over 200 repetitions, using raw importance weights which differ only in truncation.}
\label{fig:intro-iwkrr-truncation}
\end{figure}

The aforementioned works primarily focus on efficient and robust estimation under covariate shift. While they provide practical guidance and theoretical justification for statistical methods that improve predictive performance under target distributions, it remains largely unclear how to evaluate the performance of these methods under the target distribution and how to quantify the uncertainty of the resulting estimators. 
For instance, in studies of conditional average treatment effects, one may wish not only to estimate the treatment-effect function but also to test whether treatment effects are constant over the target population \citep{crump2008nonparametric,ding2016randomization}. These inferential conclusions depend critically on the target and source distribution mismatch: heterogeneity that is negligible under the source distribution may be substantial under the target distribution if the regions in which the effect varies have substantially different probability densities under the two distributions.

In this paper, we develop statistical procedures to quantify the uncertainty of the truncated IWKRR estimator and to evaluate its performance under the target covariate distribution. More specifically, we construct confidence sets for $f_\ast$ with respect to the $\sfL^2(\QQ)$-norm distance based on the truncated IWKRR estimator, establish non-asymptotic bounds on the absolute coverage error of these confidence sets, and study the goodness-of-fit testing problem in \eqref{eq:testing-problem}. 
The main challenge is that the IWKRR estimator, the inferential loss, and the key technical tools naturally live in different geometries. The IWKRR estimator is governed by the truncated weighted-source risk, whereas coverage of the confidence sets is evaluated in the target $\sfL^2(\QQ)$ metric. On the technical side, the covariance operator of the stochastic fluctuation must be analyzed in the target $\sfL^2(\QQ)$ geometry, whereas its algebraic structure is most transparent through RKHS operators. 
We address this mismatch using unitary isomorphisms generated by the square roots of the target and weighted-source integral operators. These maps realize the relevant $\sfL^2$ geometries inside the RKHS while preserving Hilbert-space norms and inner products, thereby allowing the Gaussian and bootstrap approximations to be compared within a common RKHS operator calculus. This operator-theoretic treatment captures two effects of covariate shift: the bias of the truncated IWKRR estimator and its covariance structure relevant for coverage guarantees of the proposed confidence sets.

Our contributions are as follows. First, based on the truncated IWKRR estimator, trained on finite-sample data from source distribution, and multiplier bootstrap, we propose an inference procedure to construct confidence sets of the unknown regression function in the target distribution. The procedure is computationally efficient where computing each bootstrap statistics mainly involves matrix and vector multiplications. Second, we provide a general theoretical framework establishing the coverage guarantee of the bootstrap confidence sets,  where the error derives from the bias,  Gaussian approximation, and bootstrap approximation.     
The target-to-source distribution mismatch,  kernel integrals, and truncation level jointly determine the validity of multiplier bootstrap method.  We identify the {\it regularized transfer coverage} condition ensuring that the source risk is transferable to the target risk and the \emph{non-degenerate covariance} condition that prevents the Gaussian covariance from degenerating in the target geometry.  Third,  we establish explicit error rates for the coverage probability under distinct density ratio and kernel spectral conditions,  covering the cases of sub-exponential type density ratio and polynomially decaying kernel eigenvalues.  Finally,   our inference methods yield interesting findings in real data experiments.

\subsection{Related work}
\label{sec:related-work}
{\it Covariate shift.} The seminal work \cite{shimodaira2000improving} initiated this field by analyzing weighted likelihood methods for parametric predictive inference under covariate shift and identifying importance weighting as the asymptotically optimal weighting scheme. Finite-sample generalization bounds for importance weighting were developed by \citet{cortes2008sample}, covering kernel-based algorithms and function classes with finite pseudo-dimension. 
The unknown density ratio, as well as its variants, can be estimated using unlabelled samples from source and target populations such as the Kullback--Leibler importance estimation \citep{sugiyama2008direct} and kernel mean matching \citep{huang2007correcting}. 
The $\gamma$-transfer exponent was introduced by \cite{kpotufe2018marginal} as a measure of source and target similarity. 
A systematic review of machine learning methods under covariate shift is provided by \cite{book_covariate_shift}.
More recently, \cite{zamolodtchikov2026minimax} established the minimax rates for nonparametric regression with both source and target samples, characterized through the notion of a transfer function. 
Model selection and cross validation under covariate shift have also gained considerable attention. 
An information criterion was derived in \cite{shimodaira2000improving} for model selection under covariate shift.  \cite{sugiyama2007covariate} showed that the ordinary cross-validation can be biased under covariate shift and proposed importance-weighted cross-validation. 
The instability of importance-weighted cross-validation and its robust variants were studied by  \cite{kouw2019robust}.

{\it Nonparametric inference.} 
\cite{chernozhukov2014gaussian,chernozhukov2017central,Chernozhukov_2014} established nonasymptotic Gaussian and bootstrap approximations for suprema of empirical processes, with applications to confidence bands. 
\cite{gaussianapproximationbounded} and \cite{gaussianapproximationsubgaussian} proved Gaussian approximation results for sums of independent random vectors under bounded and sub-Gaussian conditions, respectively.
\cite{yang2017frequentistcoveragesupnormconvergence} studied posterior coverage and sup-norm credible sets in Gaussian process regression.
\cite{singh2025kernelridgeregressioninference} developed inference procedures for KRR without covariate shift.  Similarly,  \cite{shang2025bootstrap} studied data integration based on KRR and constructed adaptive confidence intervals for $f_{\ast}(x)$ via local and global Bootstrap procedures.  Constructing uniformly valid pointwise confidence intervals is generally more challenging than testing a global null such as $H_0:\|f_{\ast}-f_0\|_{\splp}=0$. The former requires uniform control of the estimation error, typically in sup norm, and is subject to well-known impossibility and lower-bound phenomena in nonparametric models. See, e.g., \cite{low1997nonparametric, genovese2008adaptive,gine2010confidence, cai2006adaptive,juditsky2003nonparametric,li1989honest,robins2006adaptive,spokoiny1996adaptive} on the methods and challenges of constructing honest and adaptive non-parametric confidence sets.

{\it Predictive intervals.}
A related but distinct line of work studies predictive intervals and sets under covariate shift.
\cite{tibshirani2019conformal} developed weighted conformal prediction when the likelihood ratio between the target and source covariate distributions is known or can be accurately estimated, and \cite{barber2023conformal} further studied conformal prediction beyond exchangeability.
These methods provide prediction sets for a future response and are largely distribution-free.
By contrast, our goal is function-level inference: we construct a confidence ball for the regression function \(f_*\), with a radius calibrated in \(\splq\)-norm, and therefore must control the bias, covariance structure, and bootstrap approximation of the  IWKRR estimator with truncated weights.

\subsection{Notations}
\label{sec:notation}
Throughout, \(\mathsf{H}\) denotes the RKHS associated with a kernel function \(k(\cdot, \cdot)\), \(\splq\) denotes the \(L^2\) space with respect to the target covariate distribution $\mathbb{Q}_X$, and we write \(k_x=k(x,\cdot)\). For a Hilbert space \(\mathsf{G}\), let  \(\langle\cdot,\cdot\rangle_{\mathsf{G}}\) and \(\|\cdot\|_{\mathsf{G}}\) denote its inner product and norm, respectively; for \(a,b\in\mathsf{G}\), \(a\otimes_{\mathsf{G}}b\) denotes the rank-one operator \((a\otimes_{\mathsf{G}}b)c=\langle b,c\rangle_{\mathsf{G}}\,a\) for \(c\in\mathsf{G}\). For a bounded linear operator \(\mathcal{A}:\mathsf{G}\to\mathsf{S}\), \(\|\mathcal{A}\|_{\mathsf{G}\to\mathsf{S}}\) denotes its operator norm and \(\|\mathcal{A}\|_{\mathrm{HS},\mathsf{G}\to\mathsf{S}}\) its Hilbert-Schmidt norm, when finite; if \(\mathcal{A}:\mathsf{G}\to\mathsf{G}\) is trace class, \(\tr_{\mathsf{G}}(\mathcal{A})\) denotes its trace; for self-adjoint operators \(\mathcal{A},\mathcal{B}\) on \(\mathsf{G}\), we write \(\mathcal{A}\preceq_{\mathsf{G}}\mathcal{B}\) if \(\mathcal{B}-\mathcal{A}\) is positive semidefinite. When the subscript is omitted, inner products, norms, tensor products, operator norms, Hilbert-Schmidt norms, traces, and Loewner order are understood in \(\mathsf{H}\). For a compact positive self-adjoint operator \(\mathcal{A}\) on \(\mathsf{G}\), \((\mu_k(\mathcal{A},\mathsf{G}),\phi_k(\mathcal{A},\mathsf{G}))\) denotes its \(k\)-th eigenpair, with eigenvalues arranged in nonincreasing order and counted with multiplicity; we use the shorthand \((\mu_k,\phi_k):=(\mu_k(\oplq,\splq),\phi_k(\oplq,\splq))\), define the spectral tail \(\sigma_{\mathsf{G}}^2(m,\mathcal{A}):=\sum_{j=m+1}^{\infty}\mu_j(\mathcal{A},\mathsf{G})\) for \(m\ge0\), and also write \(\sigma^2(m,\mathcal{A},\mathsf{G})\) to specify the ambient space. For a positive trace-class operator \(\mathcal{C}\) on \(\mathsf{G}\), \(\mathrm{N}_{\mathsf{G}}(0,\mathcal{C})\) denotes the centered Gaussian element in \(\mathsf{G}\) with covariance operator \(\mathcal{C}\). Finally, we write \(a\lesssim b\) if \(a\le Cb\) for a universal constant \(C>0\), and \(a\asymp b\) if both \(a\lesssim b\) and \(b\lesssim a\) hold; the empirical average is denoted by \(\mathbb{E}_n[g]=\EE_{\PP_n}[g]=n^{-1}\sum_{i=1}^n g_i\).

\section{Problem Formulation and Preliminaries}
\label{sec:problem-setup}

\subsection{Confidence balls in target distribution}
\label{sec:statistical-inference-under-covariate-shift}

A pair of random variables $(X, Y)\in \frakX\times \RR$ satisfies 
\begin{equation}\label{eq:fmodel}
    Y = f_{\ast}(X) + \eps,
\end{equation}
where $X$ and $Y$ denote the covariate and response, respectively, and the regression function $f_{\ast}$ is unknown.  The random error $\eps$ carries no systematic information in the sense that $\EE[\eps|X]=0$. 

Let $(X, Y), \{(X_i, Y_i)\}_{i=1}^n\in\frakX\times \RR$ be i.i.d. observations, where the covariate $X$ is drawn from a source distribution $\PP$ on $\frakX$. We aim to construct a confidence set for $f_{\ast}$ in the $\splq$ geometry. Given an $\splq$-norm consistent estimator $\what f$, a natural confidence set is the $\splq$-norm ball centered at $\what f$, with a radius $\hat c_{\alpha,n}$ calibrated  using the quantiles of $\|\what f-f_{\ast}\|_{\splq}$. More precisely, the inferential goal is to construct the $100(1-\alpha)\%$ confidence set
\begin{equation}
\cs_{\alpha}(\what f)
    =
    \Big\{f :
    \|\what f-f\|_{\splq}\le \hat c_{\alpha,n}
    \Big\},
\end{equation}
with asymptotically valid coverage probability; that is, $\PP\big(f_{\ast}\in\cs_{\alpha}(\what f)\big)\to 1-\alpha$ as $n\to\infty$. We suppress the dependence of $\what f$ and $\cs_{\alpha}(\what f)$ on the sample size $n$ for simplicity.  
Once a valid confidence set is available, testing the hypothesis in \eqref{eq:testing-problem} is straightforward. For example, one can reject the null hypothesis at significance level $\alpha$ if $f_0\notin \cs_{\alpha}(\what f)$.

We emphasize that the choice of metric is part of the inferential goal, rather than merely a technical convention. A confidence set contains precisely those candidate functions that are close to the point estimator in the chosen geometry. Consequently, different metrics may lead to different confidence sets. 
There are three natural metrics in KRR: the reproducing kernel Hilbert space (RKHS) norm, the \(\mathsf{L}^2(\PP)\)-norm induced by the source distribution, and the \(\splq\)-norm induced by the target distribution. The RKHS norm measures discrepancy in the smoothness geometry imposed by the kernel, whereas the \(\mathsf{L}^2(\PP)\)- and \(\splq\)-norms weight pointwise errors according to the source and target distributions, respectively. Under covariate shift, regions that are important under \(\QQ\) may have little mass under \(\PP\). Hence, a function can be close to \(f_\ast\) in \(\mathsf{L}^2(\PP)\) but not in \(\splq\).

\subsection{Reproducing kernel Hilbert space and integral operators}

Let $\mathsf{H}$ be a reproducing kernel Hilbert space (RKHS,  \cite{scholkopf2002learning}) over $\frakX$ with positive definite kernel $k(\cdot, \cdot): \frakX\times \frakX\mapsto \RR$.  Throughout,  we assume that $\mathfrak{X}$ is compact.  
 More precisely,  $\mathsf{H}$ is the Hilbert space of functions on $\mathfrak{X}$ obtained as the completion of the linear span of  $\big\{k_x:=k(x, \cdot)\}_{x \in \mathfrak{X}}$.   For any $\{x_i\}_{i=1}^p,  \{\tilde x_j\}_{j=1}^q\subset \frakX$ ,  consider functions of the form $f(\cdot)=\sum_{i=1}^p a_i k(x_i, \cdot)$ and $g(\cdot)=\sum_{j=1}^q b_j k(\tilde x_j, \cdot)$.  Their inner product in $\sfH$ is given by $\langle f,  g\rangle_{\sfH}=\sum_{i=1}^p\sum_{j=1}^q a_ib_jk(x_i, \tilde x_j)$.  
 
 For any $f \in \mathsf{H}$, the reproducing property holds:
$ f(x) = \langle f, k(x, \cdot) \rangle_{\mathsf{H}}$,  for $\forall x\in\frakX$.   Throughout the paper,  we assume that the kernel $k(\cdot, \cdot)$ is \emph{bounded},  namely,  $\sup_{x \in \mathfrak{X}} k(x, x) \leq \kappa^2$  for some constant  $\kappa > 1$.  This  immediately implies that point evaluation is bounded by the RKHS norm.   Indeed,  for any $f\in\sfH$,  we have $\sup_x|f(x)|=\sup_x \big|\langle f,  k(x,\cdot)\rangle_{\sfH} \big|\leq \|f\|_{\sfH} \cdot \sup_x\|k(x,\cdot)\|_{\sfH}\leq \kappa \|f\|_{\sfH}$.

The \emph{kernel integral operator} is central to the analysis of kernel methods, as its spectral structure encodes both the geometry of the data distribution and the effective complexity of the induced feature space.   
Given the probability measure $\QQ$,  define the integral operator $\calL_Q: \sfL^2(\QQ)\to \sfL^2(\QQ)$ by
$$
\oplq f = \int_{\mathfrak{X}} k(x, \cdot) f(x) d\QQ(x),\quad \forall f\in\sfL^2(\QQ). 
$$
Note that the integral operator $\calL_Q$ is closely related to the RKHS \emph{covariance operator} $\Sigma_{\QQ}=\EE_{\QQ}[k_X\otimes_{\sfH} k_X]: \sfH \to \sfH$,
where, for $g, h\in\sfH$,   the rank-one operator $g\otimes_{\sfH}h: \sfH\to\sfH$ is defined by $(g\otimes_{\sfH}h)f=\langle h, f\rangle_{\sfH}g$, $\forall f\in\sfH$. 

By the reproducing property, we have $\Sigma_{\QQ}f=\int_{\frakX} k(x,\cdot)f(x)d\QQ(x)$, $\forall f\in\sfH$. Under the bounded-kernel assumption, both $\calL_Q$ and $\Sigma_{\QQ}$ are bounded, positive, self-adjoint, and trace-class operators.  Since these two operators are given by the same integral expression but act on different spaces, we abuse notation and write $\oplq$ for both whenever the underlying domain is clear from context.

The spectral decomposition of $\oplq$ is often used to characterize the kernel $k(\cdot, \cdot)$, the distribution $\QQ$, and the associated RKHS $\sfH$.  

\begin{lemma}\label{lem:eigendecomposition_LQ}[\cite{mercer1909functions}]
Suppose that $\frakX$ is compact, $k(\cdot,\cdot)$ is continuous, symmetric, and positive definite, $\QQ$ is a Borel probability measure on $\frakX$, and $\operatorname{supp}(\QQ)=\frakX$.
Then there exist nonnegative eigenvalues $\{\mu_j\}_{j\ge 1}$ and corresponding eigenfunctions $\{\phi_j\}_{j\ge 1}\subset \sfL^2(\QQ)$, with
 $\langle \phi_i, \phi_j \rangle_{\splq} = \delta_{ij}$ and $k(x,y) = \sum_{j=1}^{\infty} \mu_j \phi_j(x) \phi_j(y)$, $x, y\in\frakX$.  The convergence is absolute and uniform on $\frakX\times\frakX$.
\end{lemma}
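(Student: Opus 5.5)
The proof follows the classical route through the spectral theorem for compact self-adjoint operators on $\splq$, combined with Dini's theorem to upgrade pointwise convergence of the kernel expansion to uniform convergence.

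\textbf{Step 1: properties of $\oplq$.} Since $k$ is continuous on the compact set $\frakX\times\frakX$, it is bounded. Hence $\oplq$ is Hilbert--Schmidt on $\splq$ with $\|\oplq\|_{\mathrm{HS}}^2=\iint k^2\,d\QQ\,d\QQ<\infty$, so it is compact. Symmetry of $k$ makes $\oplq$ self-adjoint. Positive definiteness gives $\langle f,\oplq f\rangle_{\splq}\ge0$; to see this, approximate $\QQ$ by discrete measures, or approximate $f$ by continuous functions and use Riemann-sum approximations of the double integral.

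\textbf{Step 2: spectral decomposition.} The spectral theorem then yields nonnegative eigenvalues $\mu_j\downarrow0$ and an orthonormal system $\{\phi_j\}$ in $\splq$ spanning $\overline{\ran(\oplq)}$. For $\mu_j>0$ we may take the continuous representative $\phi_j=\mu_j^{-1}\oplq\phi_j$. It is continuous because $x\mapsto k(x,\cdot)$ is uniformly continuous and $\QQ$ is a probability measure.

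\textbf{Step 3: continuity and positivity of the remainder.} Define $R_m(x,y)=k(x,y)-\sum_{j\le m}\mu_j\phi_j(x)\phi_j(y)$. This is the kernel of $\oplq-\sum_{j\le m}\mu_j\,\phi_j\otimes\phi_j$, which is a positive operator, and $R_m$ is continuous. The key claim is that $R_m(x,x)\ge0$ for all $x$. If $R_m(x_0,x_0)<0$, continuity gives a neighborhood $U$ of $(x_0,x_0)$ on which $R_m<0$. Because $\operatorname{supp}(\QQ)=\frakX$, the neighborhood of $x_0$ has positive $\QQ$-mass. Testing against $f=\mathbf{1}_{B(x_0,r)}$ for small $r$ then gives $\langle f,(\cdot)f\rangle<0$, which is a contradiction.

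\textbf{Step 4: pointwise convergence.} From Step 3, $\sum_j\mu_j\phi_j(x)^2\le k(x,x)\le\kappa^2$ for every $x$. Cauchy--Schwarz then shows that $\sum_j\mu_j\phi_j(x)\phi_j(y)$ converges absolutely and uniformly in $y$ for each fixed $x$. Its limit $\tilde k(x,\cdot)$ is therefore continuous in $y$. Moreover, $k(x,\cdot)-\tilde k(x,\cdot)$ is orthogonal in $\splq$ to every $\phi_j$ and also lies in $\overline{\ran(\oplq)}$. This second fact holds because $k(x,\cdot)$ is orthogonal to $\ker\oplq$: if $\oplq g=0$ then $(\oplq g)(x)=0$ for every $x$ by continuity together with full support. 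It follows that $k(x,\cdot)=\tilde k(x,\cdot)$ $\QQ$-a.e. Since both functions are continuous and $\QQ$ has full support, they agree everywhere, so $k(x,y)=\sum_j\mu_j\phi_j(x)\phi_j(y)$ pointwise.

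\textbf{Step 5: uniform convergence.} In particular, the partial sums $\sum_{j\le m}\mu_j\phi_j(x)^2$ increase monotonically to the continuous function $k(x,x)$ on the compact set $\frakX$. Dini's theorem upgrades this to uniform convergence. Applying Cauchy--Schwarz to the tails, $\sum_{j>m}\mu_j|\phi_j(x)\phi_j(y)|\le(\sum_{j>m}\mu_j\phi_j(x)^2)^{1/2}(\sum_{j>m}\mu_j\phi_j(y)^2)^{1/2}$, gives absolute and uniform convergence on $\frakX\times\frakX$.

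\textbf{Main obstacle.} The delicate parts are Steps 3--4. One must pass from $\QQ$-a.e. statements to everywhere statements, and this is exactly where the full-support assumption and the continuity of $k$ are used.
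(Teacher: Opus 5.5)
The paper gives no proof of this lemma; it quotes Mercer's theorem from \cite{mercer1909functions} and uses it as a black box, so there is no in-paper argument to compare with. Your proposal is the standard textbook proof, and it is correct:
\begin{itemize}
\item apply the spectral theorem for the compact self-adjoint operator $\oplq$;
\item show the remainder kernel is nonnegative on the diagonal, using full support;
\item identify the series pointwise through orthogonality to $\ker\oplq$;
\item finish with Dini's theorem and Cauchy--Schwarz on the tails.
\end{itemize}

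Two small remarks.

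First, choosing the continuous representatives $\phi_j=\mu_j^{-1}\oplq\phi_j$ for $\mu_j>0$ is necessary, not just convenient. The identity $k(x,y)=\sum_j\mu_j\phi_j(x)\phi_j(y)$ for \emph{all} $x,y$ has no meaning for arbitrary $\splq$-equivalence classes, and the paper's statement leaves this implicit.

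Second, the positivity in Step~1 can be made rigorous without Riemann sums or weak approximation of $\QQ$. Take i.i.d.\ $X_1,\dots,X_n\sim\QQ$ and any $f\in\splq$. Positive definiteness of $k$ gives
\[
0\le \EE\Big[\frac{1}{n^2}\sum_{i,j=1}^n k(X_i,X_j)f(X_i)f(X_j)\Big]
=\Big(1-\frac1n\Big)\langle f,\oplq f\rangle_{\splq}+\frac1n\int k(x,x)f(x)^2\,d\QQ(x).
\]
The last term is at most $\kappa^2\|f\|_{\splq}^2/n$, so letting $n\to\infty$ yields $\langle f,\oplq f\rangle_{\splq}\ge0$ directly for every $f\in\splq$.
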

Mercer's theorem immediately implies several useful identities. For $\mu_i, \mu_j>0$, $\langle \phi_i, \phi_j \rangle_{\mathsf{H}} = \mu_i^{-1} \delta_{ij}$. Thus, $\big\{\mu_j^{1/2} \phi_j: \mu_j>0\big\}_{j=1}^{\infty}$ is an orthonormal basis of $\mathsf{H}$. In particular, if $f=\sum_{j=1}^{\infty}a_j\phi_j$, then $\|f\|_{\sfH}^2=\sum_{j:\mu_j>0}a_j^2\mu_j^{-1}$. 

The fractional powers of $\calL_Q$ are defined via the spectral theorem. For $r>0$, define the operator $\calL^r_Q: \sfL^2(\QQ)\to \sfL^2(\QQ)$ by 
$$
\calL^r_Q f= \sum_{j\geq 1}\mu_j^r \langle f, \phi_j\rangle_{\sfL^2(\QQ)}\phi_j,\quad f\in \sfL^2(\QQ). 
$$
Although $\calL_Q^{1/2}$ is defined as an operator on $\sfL^2(\QQ)$, its range admits a natural RKHS interpretation. Define $\ran(\oplq)=\big\{\sum_{j: \mu_j>0} a_j\phi_j: \sum_{j:\mu_j>0}a_j^2\mu_j^{-2}<\infty\big\}$.  
Indeed, if $f\in\overline{\ran(\oplq)}$,  then $\big\|\calL_Q^{1/2}f\big\|_{\sfH}^2=\sum_{j:\mu_j>0}\langle f, \phi_j\rangle_{\sfL^2(\QQ)}^2=\|f\|^2_{\sfL^2(\QQ)}$. Hence $\calL_Q^{1/2}$ can be viewed as an isometry from $\overline{\ran(\oplq)}$ to $\sfH$.

\begin{lemma}\label{lem:LQ_half_unitary_isomorphism}
 The operator $\oplq^{1/2}:\overline{\ran(\oplq)} \to \mathsf{H}$ is a
 unitary isomorphism onto $\mathsf H$. In particular, it preserves inner products:
$
    \langle f, g \rangle_{\splq}
    = \langle \oplq^{1/2} f, \oplq^{1/2} g \rangle_{\mathsf{H}},
$
for $f,  g\in \overline{\ran(\oplq)}$.  
\end{lemma}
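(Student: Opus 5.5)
We work entirely in Mercer coordinates from Lemma~\ref{lem:eigendecomposition_LQ}. Let $J=\{j:\mu_j>0\}$. The subspace $\overline{\ran(\oplq)}$ is the $\splq$-closure of finite combinations of $\{\phi_j\}_{j\in J}$. Because $\{\phi_j\}$ is orthonormal in $\splq$, this closure is exactly $\{\sum_{j\in J}a_j\phi_j:\sum_{j\in J}a_j^2<\infty\}$, and $\{\phi_j\}_{j\in J}$ is an orthonormal basis of it. On the RKHS side we use the fact recorded after Lemma~\ref{lem:eigendecomposition_LQ}: $\{\mu_j^{1/2}\phi_j\}_{j\in J}$ is an orthonormal basis of $\sfH$. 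The plan is to show that $\oplq^{1/2}$ maps the first orthonormal basis bijectively onto the second. A linear map between Hilbert spaces that sends an orthonormal basis onto an orthonormal basis extends to a unitary isomorphism, and inner-product preservation follows.

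\textbf{Step 1 (action on the basis).} By the spectral definition, $\oplq^{1/2}\phi_j=\mu_j^{1/2}\phi_j$ for $j\in J$. As an element of $\sfH$, this is precisely the $j$-th member of the RKHS orthonormal basis.

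\textbf{Step 2 (well-definedness and isometry).} Take $f=\sum_{j\in J}a_j\phi_j$ with $(a_j)\in\ell^2$. Then $\oplq^{1/2}f=\sum_{j\in J}a_j\mu_j^{1/2}\phi_j$, and we must check that this $\splq$-limit is a genuine element of $\sfH$.
\begin{itemize}
\item The partial sums are Cauchy in $\sfH$, since $\|\sum_{j=m}^{M}a_j\mu_j^{1/2}\phi_j\|_{\sfH}^2=\sum_{j=m}^{M}a_j^2$.
\item So they converge in $\sfH$ to some $g$.
\item Because $\|h\|_\infty\le\kappa\|h\|_{\sfH}$, the convergence is also uniform, hence in $\splq$.
\item The $\sfH$-limit $g$ and the $\splq$-limit $\oplq^{1/2}f$ therefore agree $\QQ$-a.e.
\end{itemize}
This gives $\|\oplq^{1/2}f\|_{\sfH}^2=\sum_{j\in J}a_j^2=\|f\|_{\splq}^2$. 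Polarization then yields $\langle f,g\rangle_{\splq}=\langle\oplq^{1/2}f,\oplq^{1/2}g\rangle_{\sfH}$.

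\textbf{Step 3 (surjectivity).} Any $h\in\sfH$ expands as $h=\sum_{j\in J}b_j\mu_j^{1/2}\phi_j$ with $\sum_j b_j^2=\|h\|_{\sfH}^2<\infty$. Setting $f=\sum_{j\in J}b_j\phi_j\in\overline{\ran(\oplq)}$ gives $\oplq^{1/2}f=h$ by Step 2. Together with injectivity, which follows from the isometry, the map is unitary onto $\sfH$.

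\textbf{Main obstacle.} The delicate point is the identification in Step 2: $\oplq^{1/2}f$ is defined as an $\splq$ object, while membership in $\sfH$ requires a genuine function on $\frakX$. We handle this with the bounded-kernel embedding $\|h\|_\infty\le\kappa\|h\|_{\sfH}$, together with $\operatorname{supp}(\QQ)=\frakX$. Continuity of elements of $\sfH$, which holds because $k$ is continuous, then makes the $\QQ$-a.e. identification unique, so the equivalence class has a unique RKHS representative. Completeness of $\{\mu_j^{1/2}\phi_j\}$ in $\sfH$ is taken from the remark following Mercer's theorem.
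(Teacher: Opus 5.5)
Your proof is correct and follows the paper's argument. You expand in Mercer coordinates and use $\langle\phi_i,\phi_j\rangle_{\sfH}=\mu_i^{-1}\delta_{ij}$ to get $\|\oplq^{1/2}f\|_{\sfH}^2=\sum_{j\in J}a_j^2=\|f\|_{\splq}^2$, then obtain surjectivity from the explicit preimage $\sum_{j\in J}b_j\phi_j$; your Step 2 (the $\sfH$-limit agreeing $\QQ$-a.e.\ with the $\splq$-limit, made unique by continuity and $\operatorname{supp}(\QQ)=\frakX$) only makes rigorous the membership in $\sfH$ that the paper asserts in one line by citing Mercer's theorem.
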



We can similarly define the integral operator and covariance operator by the probability measure $\PP$ on $\frakX$,  denoted both by $\calL_P$ for notational simplicity.  These notations can be generalized to the sampling distribution $\PP_n$,  defined from the i.i.d.  sample $X_1,\cdots, X_n\sim \PP$.  For instance,  we write the integral operator 
$$
\calL_{P_n}f=\frac{1}{n}\sum_{i=1}^nf(X_i)k_{X_i},  \quad f\in \sfL^2(\PP),
$$
which may also represent the \emph{sample covariance operator} $f\mapsto n^{-1}\sum_{i=1}^n(k_{X_i}\otimes_{\sfH} k_{X_i})f,  \forall f\in\sfH$.  

Let $w:\frakX\to(0,\infty)$ be a weight function. Define the weighted measure $\PP^w$ on $\frakX$ by
$\PP^w(\calA)=\int_{\calA} w(x)\,d\PP(x), \calA\subseteq \frakX .$
For instance, if \(w\) is the target-to-source density ratio
$\rho(x)=(d\QQ/d\PP)(x)$, then $\PP^w=\QQ$, assuming that $\QQ\ll \PP$. The measure $\PP^w$ and its empirical counterpart induce the following integral operators:
$$
    \oplpw f
    :=
    \int_{\frakX} k_x f(x) w(x)\,d\PP(x),
    \quad {\rm and}\quad 
    \oplpnw f
    :=
    \frac{1}{n}\sum_{i=1}^n w(X_i) k_{X_i} f(X_i).
$$

\subsection{Importance weighted kernel ridge regression}
\label{sec:reproducing-kernel-hilbert-space}

Given the  i.i.d. sample $\mathcal{D}=\{(X_i, Y_i)\}_{i=1}^n\subset \frakX\times \RR$,  the kernel ridge regression (KRR) seeks a function $f\in\sfH$ that minimizes the empirical risk
$
n^{-1}\sum_{i=1}^n \big(f(X_i)-Y_i\big)^2
$
with an RKHS-norm penalization $\lambda_n \|f\|_{\sfH}^2$,  
where $\lambda_n>0$ is the regularization parameter controlling the trade-off between fitting the data and model complexity.   The regularization prevents KRR from overfitting,  especially when $\sfH$ is  dense in the space of continuous functions.  The error rates and optimality of KRR are well understood. See \cite{wahba1990spline, caponnetto2007optimal, vapnik2013nature} and references therein.

Importance weighting (IW,  \cite{sugiyama2007covariate}) is a technique for unbiased estimation of target risk under covariate shift.  In particular,  the importance-weighted kernel ridge regression (IWKRR) estimator is defined by 
\begin{equation}
  \label{eq:iwkrr-definition-setup}
  \what f
  := \argmin_{f\in\mathsf{H}}\ 
    \frac{1}{n}\sum_{i=1}^n w_n(X_i)\{Y_i-f(X_i)\}^2
    + \lambda_n \|f\|_{\mathsf{H}}^2,
\end{equation}
where  $w_n:\frakX\to(0,\infty)$ is a weight function.  If equipped with a proper weight function,  IW transports the source risk to the target risk.  For instance,  using the density ratio weight $w_n(x)=\rho(x)$,  we get $\EE_{\PP}[w_n(X)(Y-f(X))^2]=\EE_{\QQ}[(Y-f(X))^2]$.  This density-ratio weighting thus corrects the population mismatch at the level of the squared-error risk. For notational simplicity, we may write $\lambda$ and $w(\cdot)$ and suppress their dependence on $n$.  

However,  the IWKRR estimator can  have large finite-sample variances if the target-to-source density ratio is unbounded or has heavy tails,  in which case a few observations carry excessively large weights and destabilize the estimator.  
A common remedy is to use the truncated density ratio $w_n(x):=\min\{\rho(x), \tau_n\}$,  where $\tau_n>0$ is the truncation level.  The truncation introduces a controlled bias but reduces variance.  Under suitable density ratio and spectral conditions,  with appropriate choices of regularization parameter $\lambda_n$ and truncation parameter $\tau_n$,  the truncated IWKRR attains the minimax optimal rate for the target $\splq$-norm risk.   See \cite{ma2023optimally} and Section~\ref{sec:theory} for detailed discussions.

In this paper,  we aim to construct the confidence ball of $f_{\ast}$ with respect to the $\splq$-norm distance based on the truncated IWKRR estimator (\ref{eq:iwkrr-definition-setup}).  We begin with presenting the explicit matrix form of the IWKRR estimator.   Denote the $n\times n$ matrix $\bK$  with its $(i,j)$-th entry $k(X_i,  X_j)$,  the weight matrix $\bW={\rm diag}\big(w_n(X_1),\cdots, w_n(X_n)\big)$,  and $\by=(Y_1,\cdots,  Y_n)^{\top}$.  Then,  the IWKRR estimator can be compactly written as 
\begin{align}
  \label{eq:matrix-vector-form-of-iwkrr-estimator}
    \what f &= \vkx^{\top}(\maw\mak+n\lambda_n\maid)^{-1}\maw\vy=\vkx^{\top}(\mak+n\lambda_n\maw^{-1})^{-1}\vy,
\end{align}
where the vector function $\bk_{\bx}=\big(k(X_1,\cdot),\cdots,  k(X_n, \cdot)\big)^{\top}$ and the second equation holds provided that $\bW$ is nonsingular.

The IWKRR estimator is biased from two sources: the RKHS-norm regularization and the truncated density-ratio weight.  The population IWKRR estimator is 
\begin{align}\label{eq:psupw-def}
\fpw=\argmin_{f \in \mathsf{H}}\  \mathbb{E}_{\mathbb{P}} \left[ w_n(X) (f(X) - Y)^2 \right] + \lambda_n \|f\|_{\mathsf{H}}^2 =(\oplpw + \lambda_n \opid)^{-1} \oplpw f_*.
\end{align}
If the weight $w_n(x)=(d\QQ/d\PP)(x)$ without truncation,  the population IWKRR estimator reduces to 
\begin{align}\label{eq:psuq-def}
f_Q= \argmin_{f \in \mathsf{H}}   \mathbb{E}_{\mathbb{Q}} \left[ (f(X) - Y)^2 \right] + \lambda_n \|f\|_{\mathsf{H}}^2 =(\oplq + \lambda_n \opid)^{-1} \oplq f_*,
\end{align}
which minimizes the regularized risk in target distribution.  The explicit formulas in (\ref{eq:psupw-def}) and (\ref{eq:psuq-def}) follow from the first-order optimality conditions for the two population objectives.  
Finally,  we remark that the IWKRR estimator (\ref{eq:matrix-vector-form-of-iwkrr-estimator}) can be explicitly written as $\what f=(\oplpnw + \lambda_n \opid)^{-1} \left(n^{-1}\vkx^{\top} \maw\vy\right)$.

\section{Inference Procedure}
\label{sec:inference-procedure}

\subsection{Bias-variance decomposition and Bahadur representation}\label{sec:bahadue}
We begin with the bias-variance decomposition of $\what f$ and derive its Bahadur representation. 
Notice that $\sqrt{n}(\what f - f_{\ast})=\sqrt{n}(f_{P^w}-f_{\ast})+\sqrt{n}(\what f- f_{P^w})$, where the first and second terms capture the bias and variance, respectively.  
Moreover,  the bias term can be explicitly written as
$$
   \sqrt{n}(\fpw-f_*)= 
-\sqrt{n}\lambda_n(\oplpw+\lambda_n\opid)^{-1}f_* ,
$$
implying that the bias term can be estimated by $-\sqrt{n}\lambda_n(\oplpnw+\lambda_n\opid)^{-1}\what f$. However, we will show in Section~\ref{sec:theory} that the bias term is asymptotically negligible if $\lambda_n$ is small, i.e., under-smoothing IWKRR. Our bootstrap inference procedure therefore focuses on the variance term $\sqrt{n}(\what f-f_{P^w})$. Nevertheless, we also provide a bias-corrected inference procedure and numerical experiments in Appendix~\ref{app:ext-bCboot}.

The second term, $\sqrt{n}(\what f-\fpw)$, captures the stochastic fluctuation of the IWKRR estimator around its population counterpart.  We show in Section~\ref{sec:theory} that its leading term, in $\sfL^2(\QQ)$-norm, admits the Bahadur representation:
\begin{equation}\label{eq:bahadur}
  \sqrt{n}(\what f - \fpw)
  \approx_{\splq}
  \frac{1}{\sqrt{n}}\sum_{i=1}^n U_i,
\end{equation}
where
$
U_i
:=
(\oplpw+\lambda_n\opid)^{-1}
\big(
w_n(X_i)\eps_i k_{X_i}
+
\big(w_n(X_i)(k_{X_i}\otimes_{\h} k_{X_i})-\oplpw\big)(f_*-\fpw)
\big).
$
It suggests that $\sqrt{n}(\what f - \fpw)$ can be approximated by a centered Gaussian process in $\sfL^2(\QQ)$ with the covariance operator $\opv=\mathbb{E}[U_i\otimes_{\splq}U_i]$, and Gaussian multiplier bootstrap can be used to calibrate the radius of $\sfL^2(\QQ)$-norm confidence balls for $f_{\ast}$.

\subsection{Multiplier bootstrap}
\label{subsec:bootstrap-algorithm}
The random element $U_i$  is not directly applicable because it relies on the unknown $f_*$ and the population IWKRR estimator $f_{P^w}$.   
Nevertheless,  the Bahadur representation (\ref{eq:bahadur}) motivates us to consider the \emph{oracle} bootstrap function, defined by the anti-symmetric multiplier process: 
\begin{equation*}
    B_{\oracle}
    :=
    \frac{1}{n}\sum_{1\leq i\neq j\leq n} h_{ij}\cdot
    \frac{U_i-U_j}{\sqrt{2}}
    =
    \frac{1}{n}\sum_{1\leq i\neq j\leq n} h_{ij}\cdot 
    \frac{V_i-V_j}{\sqrt{2}},
\end{equation*}
where $h_{ij}\stackrel{{\rm i.i.d.}}{\sim} N(0, 1)$ are independent of the data $\calD$ and
\begin{equation*}
V_i:=U_i+(\oplpw+\lambda_n\opid)^{-1}\oplpw(f_*-\fpw)=
w_n(X_i)\big(Y_i-\fpw(X_i)\big)(\oplpw+\lambda_n\opid)^{-1}k_{X_i}.
\end{equation*}
Conditioned on $\calD$,  the oracle bootstrap function $B_{\oracle}$ is a centered Gaussian process in $\sfL^2(\QQ)$ with covariance operator $ \what{\opv}_{\oracle}=\EE_{\PP_n}[U_i\otimes_{\splq}U_i]-\EE_{\PP_n}[U_i]\otimes_{\splq}\EE_{\PP_n}[U_i] $,  which is the sample version of the covariance operator $\calV=\mathbb{E}[U_i\otimes_{\splq}U_i]$ defined after (\ref{eq:bahadur}).  The conditional covariance operator of $B_{\oracle}$ can be equivalently written as $ \what{\opv}_{\oracle}=\EE_{\PP_n}[V_i\otimes_{\splq}V_i]-\EE_{\PP_n}[V_i]\otimes_{\splq}\EE_{\PP_n}[V_i]$.

The oracle bootstrap function $B_{\oracle}$  remains unavailable in practice,  due to the unknown function $f_{P^w}$.  
By plugging in the empirical counterparts,  we define 
$$
    \what V_i:=
    w_n(X_i)\big(Y_i-\what f(X_i)\big)(\oplpnw+\lambda_n\opid)^{-1}k_{X_i}
    =
    w_n(X_i)\what{\eps}_i(\oplpnw+\lambda_n\opid)^{-1}k_{X_i},
$$
which leads to the computable multiplier bootstrap function: 
\begin{equation}\label{eq:B-def}
    B:=
    \frac{1}{n}\sum_{1\leq i\neq j\leq n} h_{ij}\cdot 
    \frac{\what V_i-\what V_j}{\sqrt{2}} =
    \sqrt n\,\vkx^{\top}(\mak+n\lambda_n\maw^{-1})^{-1}
    \diag(\what \beps)\vz,
\end{equation}
with $\mah:=(h_{ij})_{n\times n}$ and $\vz:=(2n)^{-1/2}(\mah-\mah^{\top})\vone_n\sim N(0,\maid_n-\vone\vone^{\top}/n)$.   Similarly,  conditioned on data,  $B$ is a centered Gaussian process in $\sfL^2(\QQ)$ with covariance operator 
$
    \what{\opv}:  =
    \EE_{\PP_n}[\what V_i\otimes_{\splq}\what V_i]
    -
    \EE_{\PP_n}[\what V_i]\otimes_{\splq}\EE_{\PP_n}[\what V_i].$

\emph{Fast computation of $\sfL^2(\QQ)$-norm.} We use bootstrap percentiles of $n^{-1/2}\|B\|_{\splq}$ to calibrate the radius of confidence balls.   It is computationally expensive to evaluate these norms by definition.  Fortunately,  $B$ is a linear combination of functions $k_{X_i}$'s,  which permits fast computation of $\|B\|_{\splq}^2$.  Indeed,  given the sampled vector $\bz$,  we have  $B=\sqrt n\,\vkx^{\top}\ba$ with $\va =(\mak+n\lambda_n\maw^{-1})^{-1}\diag(\what{\beps})\vz$. Then, $\norm{B}_{\splq}^2=n\norm{\vkx^{\top}\va}_{\splq}^2=n\va^{\top}\mak_{\QQ}\va$, where $\bK_{\QQ}$ is an $n\times n$ matrix with its $(i,j)$-th entry being $(\bK_{\QQ})_{ij}=\langle k_{X_i}, k_{X_j}\rangle_{\splq}$.  Since the matrix $\bK_{\QQ}$ is fixed throughout bootstrap repetitions, computing $\norm{B}_{\splq}^2$ amounts to evaluating a quadratic form.

\begin{algorithm}
\caption{Multiplier bootstrap for $\sfL^2(\QQ)$-norm confidence balls}
\label{alg:bootstrap-confidence-set}
\begin{algorithmic}[1]
\REQUIRE Data $\calD = \big\{(X_i, Y_i)\big\}_{i=1}^n$, kernel $k(\cdot, \cdot)$, density ratio $\rho(x)=q(x)/p(x)$, weight $w_n(x)=\min\{\rho(x), \tau_n\}$, confidence level $1-\alpha$,  bootstrap repetitions $M$, regularization parameter $\lambda_n$;
\STATE Form the matrix $\mak = \big(k(X_i, X_j)\big)_{n \times n}$, the weight matrix $\maw = \diag(w_n(X_1), \dots, w_n(X_n))$, the  vector $\vy = (Y_1, \dots, Y_n)^{\top}$, and the vector function $\bk_{\bx}=\big(k(X_1,\cdot),\cdots,  k(X_n, \cdot)\big)^{\top}$;
\STATE Compute the IWKRR estimator $\what f = \vkx^{\top} (\mak + n\lambda_n \maw^{-1})^{-1} \vy$;
\STATE Compute the residual vector $\what \beps = \big(Y_1 - \what f(X_1), \dots, Y_n - \what f(X_n)\big)^{\top}$;
\STATE Form the $n\times n$ matrix $\mak_{\QQ}$ with its $(i,j)$-th entry being $(\bK_{\QQ})_{ij}=\langle k_{X_i}, k_{X_j}\rangle_{\splq}$;
\FOR{$b = 1, \dots, M$}
  \STATE Sample $\vz^{(b)} \sim N(0, \maid_n - \vone\vone^{\top}/n)$.
  \STATE Set the coefficient vector
  $
  \va^{(b)} = (\mak + n\lambda_n \maw^{-1})^{-1} \diag(\what{\beps}) \vz^{(b)}.
  $
  \STATE Compute the bootstrap statistic
  $
  T^{(b)} = \norm{\vkx^{\top} \va^{(b)}}_{\splq} = \big((\va^{(b)})^{\top} \mak_{\QQ} \va^{(b)}\big)^{1/2}.
  $
\ENDFOR
\STATE Let $\hat{c}_{\alpha,n}$ be the empirical $(1-\alpha)\%$-percentile of $\{T^{(b)}\}_{b=1}^M$.
\STATE Define the confidence set
$
\cs_{\alpha}(\what f)
=
\big\{
f\in\mathsf{H}:
\norm{\what f-f}_{\splq}\le \hat c_{\alpha,n}
\big\}.
$
\ENSURE the IWKRR $\what  f$ and the radius $\hat c_{\alpha,n}$.
\end{algorithmic}
\end{algorithm}

The multiplier bootstrap procedure is detailed in Algorithm~\ref{alg:bootstrap-confidence-set}.  Let $\what f$ and $\hat c_{\alpha,n}$ be the IWKRR estimator and radius output by Algorithm~\ref{alg:bootstrap-confidence-set}. The resulting $100(1-\alpha)\%$ confidence ball is
\begin{equation}
  \label{eq:feasible-confidence-ball}
    \cs_{\alpha}(\what f)
    =
    \left\{
    f\in\mathsf{H}:
    \norm{\what f-f}_{\splq}\le \hat c_{\alpha, n}
    \right\}.
\end{equation}
The computational cost of (IW)KRR grows fast when the sample size $n$ increases. Fortunately, the multiplier bootstrap Algorithm~\ref{alg:bootstrap-confidence-set} requires solving IWKRR only once, and the subsequent bootstrap repetitions involve just matrix-vector multiplications.  We remark that the residual bootstrap \citep{hall2013bootstrap} is also applicable for constructing the $\splq$-norm confidence balls. See Appendix~\ref{app:ext-residual-bootstrap} for details.

\section{Theoretical Guarantees}
\label{sec:theory}

This section establishes the asymptotic validity of the bootstrap confidence balls (\ref{eq:feasible-confidence-ball}). The proof mainly involves four components. First, we introduce key covariate-shift \emph{compatibility conditions} connecting the target $\splq$ geometry with the regularized truncated weighted-source geometry $\sfL^2(\PP^w)$. Second, we provide a Bahadur
representation of the IWKRR estimator $\what f$ and use Gaussian coupling to approximate the stochastic fluctuation of
$\what f$ by a Gaussian element in \(\splq\). Third, a multiplier
bootstrap function is developed to couple with the same Gaussian element. Finally, we establish the coverage-error bounds for the bootstrap confidence balls using Gaussian  approximation and anti-concentration. 

\subsection{Assumptions and compatibility conditions}

\subsubsection{Statistical and spectral assumptions}

We first collect the statistical and spectral assumptions needed for Bahadur representation and Gaussian approximation under covariate shift and truncated importance weighting.

\begin{assumption}[uniformly bounded noise]\label{assump:error}
There exist constants \(0<\underline{\sigma}\leq\overline{\sigma}<\infty\)
such that
$$
  |\eps|\leq\overline{\sigma}\quad {\rm and}\quad 
  \mathbb E[\eps^2\mid X]\geq\underline{\sigma}^2
  \quad\text{almost surely}.
$$
\end{assumption}
For technical convenience, we assume the random error is uniformly bounded. It is possible to generalize the main theorems to sub-Gaussian noise, but we leave it for  future work. The major challenge is to control the bias and variance induced by truncated importance weighting.

\begin{assumption}[source condition]\label{assump:source-condition}
There exist \(s\in[0,1]\), \(\tau_0>0\), and \(g_{\ast}\in\mathsf H\) such that
$$
  f_*=\oplpwo^s (g_{\ast}),
  \  {\rm where}\ 
  \oplpwo(g)
  = \EE_{\PP}\big[w_0(X)g(X)k_X\big]\ {\rm with}\ w_0(X)=\min\big\{\rho(X), \tau_0\big\}. 
$$
\end{assumption}

The source condition is typical for KRR that characterizes the smoothness of $f_{\ast}$ relative to the kernel integral operator. Assumption~\ref{assump:source-condition} differs slightly from the standard ones in KRR literature  due to the truncated importance weighting. See, e.g., \cite{caponnetto2007optimal, ma2023optimally}. 
The larger smoothness parameter $s$ ensures sharper control on the $\splq$-norm bias and the
RKHS-norm quantities appearing in the bootstrap approximation. Note that we assume the truncation level used in Algorithm~\ref{alg:bootstrap-confidence-set} satisfies
$\tau_n\geq\tau_0$ so that $\oplpwo\preceq_{\mathsf H}\oplpw$.

\begin{assumption}[effective dimension]\label{assump:pseudo-dimn}
  There exists \(d \in [0,1]\) such that
\[
  E_d
  :=
  \max\Big\{
    1,\,
    \sup\nolimits_{\lambda\in(0,1]}
    \big(\pd\,\lambda^d\big)^{1/2}
  \Big\}
  <\infty,
  \ {\rm where}\ 
  \pd:
  =
  \tr_{\mathsf H}\!\left((\oplq+\lambda\opid)^{-1}\oplq\right).
\]
\end{assumption}

Assumption~\ref{assump:pseudo-dimn} always holds with $d=1$ under the bounded-kernel condition, since $\tr_{\sfH}(\calL_Q)=\sum_{i}\mu_i\leq \kappa^2$ by Lemma~\ref{lem:eigendecomposition_LQ}.  If $d\in(0,1)$, Assumption~\ref{assump:pseudo-dimn} holds if the eigenvalues of $\calL_{Q}$ decrease fast enough, for instance, $\mu_i=O(i^{-1/d})$; see \cite{caponnetto2007optimal} for more discussions.

\subsubsection{Compatibility conditions under covariate shift}

The covariate shift and truncated importance weighting require compatibility conditions on the integral operators defined on the measure $\QQ$ and $\PP^w$. 
The two compatibility conditions play distinct roles. The first compatibility condition ensures the \emph{regularized transfer coverage} so that the source-side risk and RKHS-norm regularization control the target-side risk.  The second compatibility condition guarantees a \emph{non-degenerate covariance} in the target distribution, which essentially prevents the Gaussian covariance from degenerating in target geometry and is generally necessary for anti-concentration properties. 

\begin{assumption}[regularized transfer coverage]
\label{assump:covariate-shift-estimation}
There exists a constant \(\mathfrak{C}_1\geq 1\), independent of \(n\), such that 
$
\big\|\oplq^{1/2}(\oplpw+\lambda_n \opid)^{-1/2}\big\|_{\mathsf{H}\to\mathsf{H}}
\leq \mathfrak{C}_1,
$ for the chosen \(\tau_n\) and \(\lambda_n\). 
\end{assumption}

Equivalently, Assumption~\ref{assump:covariate-shift-estimation} requires the \emph{regularized operator domination}: 
$
  \oplq
  \preceq_{\mathsf H}
  \mathfrak{C}_1^2
  (\oplpw+\lambda_n\opid),
$
or, in quadratic-form notation, 
$
  \norm{f}_{\splq}^2
  \leq
  \mathfrak{C}_1^2
  \big(
    \mathbb E_{\mathbb P}\ [w_n(X)f(X)^2]
    +\lambda_n\norm{f}_{\mathsf H}^2
  \big) 
$ for every \(f\in\mathsf H\). 
Moreover, by Lemma~\ref{lem:LQ_half_unitary_isomorphism}, the operator norm in 
Assumption~\ref{assump:covariate-shift-estimation} admits the equivalent
variational representation
\[
  \big\|\oplq^{1/2}(\oplpw+\lambda_n\opid)^{-1/2}\big\|_{\mathsf{H}\to\mathsf{H}}^2
  =
  \sup_{f\in\mathsf{H},\, f\ne0}
  \frac{\norm{f}_{\splq}^2}
  {\mathbb{E}_{\mathbb{P}}\!\left[w_n(X)f(X)^2\right]
  +\lambda_n\norm{f}_{\mathsf{H}}^{2}}.
\]
Therefore, Assumption~\ref{assump:covariate-shift-estimation} essentially states that the weighted source distribution $\PP^w$ covers the target distribution $\QQ$ in all kernel directions, up to the regularization level $\lambda_n$. If the constant $\mathfrak{C}_1$ is large, there exist directions that have small regularized size in the weighted source distribution but large size in the target distribution, which may lead to unstable transfer from source distribution to the target geometry.  The constant $\mathfrak{C}_1$ measures the residual mismatch
after importance weighting, truncation, and regularization; it is therefore
determined jointly by the distribution shift, the kernel geometry, and the
choices of \(\tau_n\) and \(\lambda_n\), rather than by the density ratio alone.

\begin{remark}[sufficient conditions of Assumption~\ref{assump:covariate-shift-estimation}]\label{rmk:density-ratio}
The following three typical density-ratio conditions are sufficient to ensure Assumption~\ref{assump:covariate-shift-estimation}. Recall the truncated weight function $w_n(x)=\min\{\rho(x), \tau_n\}$. 
\begin{enumerate}[(i)]
  \item \textit{Bounded density ratio.}
  Suppose that \(\rho(X)\leq \overline{\rho}\) \(\mathbb P\)-almost surely. If
  \(\tau_n\leq \overline{\rho}\), then
  $
    \rho(x)
    \leq
    (\bar\rho/\tau_n) w_n(x).
  $
  Therefore, 
  $
    \oplq
    \preceq_{\mathsf H}
    (\bar\rho/\tau_n)\oplpw
    \preceq_{\mathsf H}
    (\bar\rho/\tau_n)(\oplpw+\lambda\opid),
  $
  and Assumption~\ref{assump:covariate-shift-estimation} holds with
  $
    \mathfrak{C}_1
    \leq (\bar\rho/\tau_n)^{1/2}.
  $
  If \(\tau_n>\overline{\rho}\), then \(w_n(x)=\rho(x)\). Then, 
  \(\oplpw=\oplq\), and Assumption~\ref{assump:covariate-shift-estimation} holds with $\mathfrak{C}_1=1$. 
  Combining the two regimes gives
  $
    \mathfrak{C}_1
    \leq
    \max\big\{1, (\bar\rho/\tau_n)^{1/2} \big\}.
  $

  \item \textit{Sub-exponential density ratio.} The bounded density ratio condition is stringent. Assumption~\ref{assump:covariate-shift-estimation} still holds if the density ratio has light tails. 
  \begin{definition}[Bernstein-type moment condition]
    \label{def:bernstein-moment-condition-for-weight} There exist \(\gamma\geq 1, W>0\), and \(\sigma>0\) such that, for every
    integer \(m\geq2\),
    $
      \left(
        \mathbb E_{\mathbb Q}\!\left[
          \rho(X)^{\gamma(m-1)}
        \right]
      \right)^{1/\gamma}
      \leq
      (m!/2)W^{m-2}\sigma^2.
    $
  \end{definition}
  The density ratio  may be unbounded, but has light tails under Definition~\ref{def:bernstein-moment-condition-for-weight}.  Nevertheless, Lemma~\ref{lem:bound-a-under-infinite-moment} in the Appendix verifies  Assumption~\ref{assump:covariate-shift-estimation} with
  \(\mathfrak{C}_1\leq\sqrt2\), if the truncation level
  \[
    \tau_n
    \geq
    \left(
      2^{2\gamma-1}\kappa^{2\gamma}E_d^{2\gamma}
      m_0!W^{m_0-2}\sigma^2
    \right)^{\frac{1}{m_0-1}}
    \lambda^{-\frac{(1+d)\gamma}{m_0-1}},
  \]
  where $m_0\geq 2$ is some integer and $E_d$ is the effective dimension defined in Assumption~\ref{assump:pseudo-dimn}. 

  \item \textit{Heavy-tailed density ratio.} Even if the density ratio has heavy tails, Assumption~\ref{assump:covariate-shift-estimation} still holds if the truncation level is properly chosen. 
  \begin{definition}[Bounded $\theta$-th moment]
    \label{def:second-moment-condition-for-weight}
    For a $\theta\geq 2$, there exists some
    \(\Omega<\infty\) such that 
    $
      \mathbb E_{\mathbb P}\!\left[\rho(X)^{\theta}\right]
      \leq\Omega.
    $
    If $\theta=2$, then $
    \chi^2(\mathbb Q\,\|\,\mathbb P)\leq\Omega-1.
  $  
  \end{definition}
  Lemma~\ref{lem:bound-a-under-vartheta-moment} in the Appendix verifies
  Assumption~\ref{assump:covariate-shift-estimation} with
  \(\mathfrak{C}_1\leq\sqrt2\) for \(\theta>2\), if
  $
    \tau_n^{(\theta-1)}
    \geq
    4\kappa^2\Omega E_d^2\lambda^{-(1+d)},
  $
  where \(E_d\) is the effective dimension defined in
  Assumption~\ref{assump:pseudo-dimn}.  
\end{enumerate}
\end{remark}

Note that similar comparison inequality between integral operators already appeared in the literature to quantify the information
transferred from the source distribution to the target task. In particular,
\cite{wang2026pseudo} proposes an effective sample size 
defined by
\[
  n_{\mathrm{eff}}
  :=
  \sup\left\{
    t\leq n:
    \oplq\preceq_{\mathsf H}
    \frac{n}{t}\left(\oplp+\tilde{\lambda}\opid\right)
  \right\},
\]
with a prescribed regularization level \(\tilde{\lambda}\).
Equivalently, \(n/n_{\mathrm{eff}}\) is the smallest \(c\geq1\) such that
\(\oplq\preceq_{\mathsf H}c(\oplp+\tilde{\lambda}\opid)\). Suppose that
\(\tilde{\lambda}=\lambda_n\), the truncation \(w_n(X)\geq\underline{\rho}\) for some
\(\underline{\rho}\in(0,1]\)
\(\mathbb P\)-almost surely, and \(\tau_n\geq1\). Since
$
  \underline{\rho}(\oplp+\lambda_n\opid)
  \preceq_{\mathsf H}
  \oplpw+\lambda_n\opid
  \preceq_{\mathsf H}
  \tau_n(\oplp+\lambda_n\opid),
$
our Assumption~\ref{assump:covariate-shift-estimation} holds with 
\(\mathfrak{C}_1\) satisfying 
$
  n/(n_{\mathrm{eff}}\tau_n)
  \leq
  \mathfrak{C}_1^2
  \leq
  n/(n_{\mathrm{eff}}\underline{\rho}).
$
The effective sample size \(n_{\mathrm{eff}}\) measures source-to-target compatibility in the
regularized unweighted-source geometry, whereas
\(\mathfrak{C}_1\) measures the remaining mismatch
after incorporating truncated importance weighting.

\begin{remark}[other distribution shift measures]
There exist other measures of source--target mismatch serving different purposes, which do not directly supply the domination inequality in
Assumption~\ref{assump:covariate-shift-estimation}. Hypothesis-class
discrepancies compare distributions through the largest difference of expected
losses over a chosen class; for example, the discrepancy distance of
\cite{mansour2009domainadaptation} is used to establish generalization bounds for domain adaptation. While the loss-reliant discrepancies provide convenient risk transfer from source to target distributions, they cannot reflect the impact of the truncated weights or the regularization level.

Distribution-matching measures such as kernel mean matching \citep{huang2007correcting} choose weights to match the kernel mean embeddings in target and weighted source distributions,  i.e.,  controlling the first-order differences of the form \(\mathbb E_{\mathbb Q}f(X)-\mathbb E_{\mathbb P}[w_n(X)f(X)]\).  
Local nonparametric measures, including the transfer exponent of
\cite{kpotufe2018marginal} and the ball-probability
ratio measure of \cite{pathak2022new}, compare how
source and target mass scale on metric balls and are designed to characterize
local transfer rates for Lipschitz or H\"older function classes.  In contrast,  Assumption~\ref{assump:covariate-shift-estimation} goes beyond the source and target distributions: it is a kernel-, truncation-, and regularization-dependent domination condition for the specific RKHS geometry in
which the IWKRR estimator resides.
\end{remark}

In addition,  we require a non-degenerate covariance condition for valid inference,  expressed as compatibility between integral operators.   This lower bound condition on the variance is typical in high-dimensional statistical inference where estimators are usually biased.  See,  e.g.,  Assumption 4 in \cite{xia2021statistical}. 
This non-degenerate covariance condition is imposed to establish anti-concentration of the
norm of Gaussian element in $\splq$.

\begin{assumption}[non-degenerate covariance operator]
  \label{assump:lower-boundedness-of-covariance-operator}
There exists a constant \(\underline{\rho}>0\) such that \(\rho(X)\geq\underline{\rho}\) \(\mathbb P\)-almost surely and  \(\tau_n\geq\underline{\rho}\). Moreover, there exists a constant $\mathfrak{C}_2\geq 1$ such that $(\oplpw+\lambda_n \opid)^{-1} \oplpw (\oplpw+\lambda_n \opid)^{-1}
  \succeq_{\h} \mathfrak{C}_2^{-2}
  (\oplq + \lambda_n \opid)^{-1} \oplq (\oplq + \lambda_n \opid)^{-1}$ for the chosen $\tau_n, \lambda_n$. 
\end{assumption}

The uniform lower bound on the density ratio is operationally less restrictive than a uniform upper bound, especially when source
data are abundant. For instance, source observations in the low-ratio region \(A_\delta=\{x:\rho(x)<\delta\}\) may be screened out
entirely for a small $\delta>0$. This region has target probability
$
  \mathbb Q_X(A_\delta)
  =
  \mathbb E_{\mathbb P_X}
  \bigl[\rho(X)\mathbf 1\{X\in A_\delta\}\bigr]
  \leq \delta.
$
Alternatively, we can randomly retain source observations in this region with probability \(a_\delta(x):=\min\{1,\rho(x)/\delta\}\). Denote by \(\mathbb P_X^{(\delta)}\) the covariate distribution of the retained source sample.  The resulting density ratio
satisfies 
$
  \big(d\mathbb Q_X/d\mathbb P_X^{(\delta)}\big)(x)
  =
  Z_\delta\big(\rho(x)/a_\delta(x)\big)
  \geq Z_\delta\delta,
  \ \mathbb P_X^{(\delta)}\text{-almost surely},
$
where \(Z_\delta:=\mathbb E_{\mathbb P_X}[a_\delta(X)]\). 
Thus, partial thinning of source-rich yet target-irrelevant regions can enforce a uniform lower bound while retaining some observations from those
regions. By contrast, a uniform upper bound on \(\rho(x)\) rules out target-important regions
that are poorly represented by the source distribution, which is a more
consequential restriction for target-domain inference.

Assumption~\ref{assump:lower-boundedness-of-covariance-operator} lower bounds the regularized covariance operator $(\oplpw+\lambda_n\opid)^{-1}\oplpw(\oplpw+\lambda_n\opid)^{-1}$ by $(\oplq+\lambda_n\opid)^{-1}\oplq(\oplq+\lambda_n\opid)^{-1}$, which prevents the Gaussian approximation from becoming degenerate in the target geometry and thereby supports anti-concentration. In contrast, the regularized transfer coverage condition in Assumption~\ref{assump:covariate-shift-estimation} enables 
transfer of estimation and bootstrap errors to the target norm. 
Neither condition implies the other in general.

\subsection{General coverage guarantee of bootstrap confidence balls}
\label{subsec:approximation-framework}
Recall the bias-variance decomposition of IWKRR estimator $\sqrt n(\what f-f_*)=\sqrt n(\what f-\fpw)+\sqrt n(\fpw-f_*)$. 
The bootstrap confidence balls constructed by Algorithm~\ref{alg:bootstrap-confidence-set} are valid if the quantiles of $\|B\|_{\splq}^2$, conditioned on $\calD$, approximate the quantiles of $n\|\what f-f_{\ast}\|_{\splq}^2$. Broadly speaking,  we need to study three approximations to establish the connection between these two quantiles: 

\begin{enumerate}
\item[-] {\it Bias control of IWKRR}.  There exists $\Delta_{\rm bias}>0$ such that the deterministic bias satisfies
  \begin{equation}
    \label{eq:theory-bias-bound}
    \sqrt n\norm{\fpw-f_*}_{\splq}\leq \Delta_{\mathrm{bias}}.
  \end{equation}
  \item[-] {\it Gaussian approximation of IWKRR.} For any $\eta\in(0,1)$, there exist $\Delta_G>0$ and \(Z\sim\mathrm{N}_{\splq}(0,\opv)\) such that
  \begin{equation}
    \label{eq:theory-gaussian-approximation}
    P\!\left(
      \big\|\sqrt n(\what f-\fpw)-Z\big\|_{\splq}\leq \Delta_G
    \right)
    \geq1-\eta.
  \end{equation}
  \item[-] {\it Bootstrap approximation.} For any $\eta\in(0,1)$, there exist $\Delta_B>0$ and \(Z'\sim\mathrm{N}_{\splq}(0,\opv)\), conditionally coupled with \(B\), such
  that, with probability at least \(1-\eta\) over the data \(\mathcal D\),
  \begin{equation}
    \label{eq:theory-bootstrap-approximation}
    P\!\left(
      \norm{B-Z'}_{\splq}\leq \Delta_B
      \,\middle|\,\mathcal D
    \right)
    \geq1-\eta.
  \end{equation}
\end{enumerate}

The above approximation error terms \(\Delta_G,\Delta_B\) may depend on the probability tolerance \(\eta\). The total approximation error is $\Delta:=\Delta_G+\Delta_B+\Delta_{\mathrm{bias}}$, which plays the primary role in the coverage guarantee of bootstrap confidence balls. 
For ease of exposition, we first present the coverage guarantees assuming that the inequalities (\ref{eq:theory-bias-bound})-(\ref{eq:theory-bootstrap-approximation}) hold. Explicit bounds for \(\Delta_G\), \(\Delta_B\), and \(\Delta_{\mathrm{bias}}\) will be established in Section~\ref{sec:theory-explicit} for specific source-target and kernel spectral conditions.

Note that if the quantiles of $\|\what f-f_{\ast}\|_{\splq}^2$ is discontinuous, establishing the $\splq$-norm confidence balls may be infeasible. To rule out these ill-posed cases, we require that the distribution of $\|\what f-f_{\ast}\|_{\splq}^2$ does not concentrate too sharply near its quantiles. Due to the Gaussian approximation (\ref{eq:theory-gaussian-approximation}), it suffices to show that the density function of \(\norm{Z}_{\splq}\) is uniformly bounded. Recall the covariate operator $\calV$ defined in Section~\ref{sec:bahadue}.

\begin{lemma}
  \label{lem:density-upper-bound-for-gaussian-in-splq}
Suppose Assumptions~\ref{assump:error} - \ref{assump:lower-boundedness-of-covariance-operator} hold, \(\tau_n\geq\tau_0\), \(0<\lambda_n\leq1\), and $\oplq$ has either  polynomial-decay eigenvalues $\mu_i\asymp i^{-\beta}$ for some $\beta>1$ or exponential-decay eigenvalues $\mu_i\asymp \exp(-\beta i^{\gamma})$ for some $\beta, \gamma>0$. Let
$Z\sim\mathrm{N}_{\splq}(0,\opv)$ with $\opv=\mathbb E[U_i\otimes_{\splq}U_i]$. 
Then the density of \(\norm{Z}_{\splq}\) is uniformly upper bounded by $C_1\tau_n^{1/2}$, where the constant $C_1>0$ only depend on the noise and source-target distributions, kernel spectral condition, 
and compatibility constants.
\end{lemma}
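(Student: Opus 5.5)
We bound the density of $\norm{Z}_{\splq}$ through a general bound for the density of the norm of a Gaussian element in a Hilbert space. Such a bound is controlled by the top eigenvalues of the covariance operator.

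\textbf{Reduction to eigenvalues.} Write $\opv=\sum_k \nu_k\, e_k\otimes_{\splq} e_k$ with $\nu_1\ge\nu_2\ge\cdots$. Then $\norm{Z}_{\splq}^2\stackrel{d}{=}\sum_k \nu_k\xi_k^2$ with $\xi_k$ i.i.d.\ $N(0,1)$. We use the known bound for weighted chi-square sums (e.g.\ Götze--Naumov--Ulyanov type, or the elementary argument below): the density of $\norm{Z}_{\splq}$ is at most $C(\nu_1\nu_2)^{-1/4}$.

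The elementary argument goes as follows. Condition on $\xi_k$ for $k\ge3$, and write $\norm{Z}^2=\nu_1\xi_1^2+\nu_2\xi_2^2+R$ with $R\ge0$. The variable $\nu_1\xi_1^2+\nu_2\xi_2^2$ has density bounded by $(2\sqrt{\nu_1\nu_2})^{-1}$, and hence so does $\norm{Z}^2$. The density of $\norm{Z}=\sqrt{\norm{Z}^2}$ at $t$ equals $2t\,p_{\norm{Z}^2}(t^2)$. For small $t$ this is trivially fine. For large $t$ we instead use a single-coordinate bound: conditionally on $R'=\sum_{k\ge2}\nu_k\xi_k^2$, the density of $\sqrt{\nu_1\xi_1^2+R'}$ at $t$ is $t/\sqrt{\nu_1(t^2-R')}\cdot\phi(\cdot)$. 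This is bounded by combining the two-dimensional bound with Gaussian tail decay. Thus it suffices to lower bound $\nu_1$ and $\nu_2$ (or finitely many $\nu_k$) by a quantity of order $\tau_n^{-1}$, up to constants.

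\textbf{Lower bound on $\opv$.} Write $U_i=(\oplpw+\lambda_n\opid)^{-1}(A_i+B_i)$ with $A_i=w_n(X_i)\eps_ik_{X_i}$. Because $\EE[\eps\mid X]=0$, the term $A_i$ is uncorrelated with the $X$-measurable centered term $B_i$. Hence, as operators on $\sfH$ (before transporting to $\splq$),
\[
\EE[U_i\otimes U_i]\succeq(\oplpw+\lambda_n\opid)^{-1}\EE[w_n^2\eps^2 k_X\otimes k_X](\oplpw+\lambda_n\opid)^{-1}.
\]
By Assumption~\ref{assump:error}, $\EE[w_n^2\eps^2k_X\otimes k_X]\succeq\underline\sigma^2\EE[w_n^2 k_X\otimes k_X]$. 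Since $w_n\ge\underline\rho$ (Assumption~\ref{assump:lower-boundedness-of-covariance-operator}), we have $w_n^2\ge\underline\rho\, w_n$, so this is $\succeq\underline\sigma^2\underline\rho\,\oplpw$.

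Assumption~\ref{assump:lower-boundedness-of-covariance-operator} then gives
\[
\EE[U\otimes U]\succeq\underline\sigma^2\underline\rho\,\mathfrak C_2^{-2}(\oplq+\lambda_n\opid)^{-1}\oplq(\oplq+\lambda_n\opid)^{-1}.
\]
Transporting to $\splq$ via the isometry of Lemma~\ref{lem:LQ_half_unitary_isomorphism}, the $\splq$-covariance is unitarily equivalent to $\oplq^{1/2}\EE[U\otimes U]\oplq^{1/2}$. The right-hand side becomes $\underline\sigma^2\underline\rho\mathfrak C_2^{-2}\oplq^2(\oplq+\lambda_n\opid)^{-2}$, whose top eigenvalues are $\mu_k^2/(\mu_k+\lambda_n)^2\ge1/4$ for the finitely many $k$ with $\mu_k\ge\lambda_n$. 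There are at least two such $k$ because $\lambda_n\le1$; the first eigenvalues are fixed constants under either spectral decay, and if needed $\lambda_n$ is rescaled into the constant. By Courant--Fischer, $\nu_1,\nu_2\gtrsim\underline\sigma^2\underline\rho\,\mathfrak C_2^{-2}$, which is a constant.

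\textbf{Where $\tau_n^{1/2}$ enters, and the main obstacle.} The lower bound above does not produce $\tau_n$, so the factor $\tau_n^{1/2}$ must come from the large-$t$ regime. The density of $\norm{Z}$ at $t$ involves $t\cdot p_{\norm{Z}^2}(t^2)$, and relevant $t$ range up to order $\sqrt{\tr\opv}$. Controlling this needs an upper bound on $\tr_{\splq}\opv$, equivalently on the upper eigenvalues. Using $|\eps|\le\overline\sigma$, $w_n\le\tau_n$, and $\norm{f_*-\fpw}_\infty$ bounded through the source condition, we get
\[
\EE[U\otimes U]\preceq C\tau_n(\oplpw+\lambda_n)^{-1}\oplpw(\oplpw+\lambda_n)^{-1}.
\]
Assumption~\ref{assump:covariate-shift-estimation} then gives $\tr_{\splq}\opv\lesssim\tau_n\mathfrak C_1^2\pd$.

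The delicate step is obtaining a density bound that is \emph{independent of dimension}, i.e.\ not growing with $\pd$ (the spectral decay hypotheses are used exactly here). The plan is to bound the density by $C\sup_t t\,p_{\norm{Z}^2}(t^2)$ using the sharp estimate $p_{\norm{Z}^2}(x)\lesssim(\nu_1\nu_2)^{-1/2}\min\{1,\ \mathrm{Var}^{-1/2}\cdots\}$. For the matching scale, $t\asymp\sqrt{\tr\opv}$ is balanced against $\norm{\opv}_{\mathrm{HS}}^{-1}$. Under polynomial or exponential decay, the eigenvalues of $\oplq^2(\oplq+\lambda_n)^{-2}$ make $\tr\opv/\norm{\opv}_{\mathrm{HS}}^2$ comparable to $\tau_n/(\text{const})$ after using the lower bound for the Hilbert--Schmidt norm. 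This yields the bound $C_1\tau_n^{1/2}$. If this balancing fails, the fallback is to use the Chernozhukov et al.\ anti-concentration for Gaussian norms: the density of the norm is at most $C/\sqrt{\mathrm{Var}(\norm{Z})}$-type quantities, with the variance lower bounded via the $\nu_k$ and $\tau_n$ upper bounds above.
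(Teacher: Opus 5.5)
Your reduction step is false, and the whole argument rests on it: the density of $\norm{Z}_{\splq}$ is \emph{not} bounded by $C(\nu_1\nu_2)^{-1/4}$. So lower bounds on $\nu_1,\nu_2$, or on any fixed finite number of eigenvalues, cannot suffice.

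Take $\nu_1=1$ and $\nu_2=\cdots=\nu_{d+1}=d^{-1/2}$. Then $\norm{Z}^2=\xi_1^2+d^{-1/2}\chi^2_d\approx\sqrt d+\xi_1^2+\sqrt2\,G$, which has density of constant order near $\sqrt d+1$. Hence $p_{\norm{Z}}(t)=2t\,p_{\norm{Z}^2}(t^2)\asymp d^{1/4}$ at $t\approx d^{1/4}$, whereas $(\nu_1\nu_2)^{-1/4}=d^{1/8}$.

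The failure occurs in the bulk, $t\asymp\sqrt{\tr\opv}$. There the Jacobian $2t$ is large and there is no Gaussian tail decay to exploit. With only $p_{\norm{Z}^2}\le(2\sqrt{\nu_1\nu_2})^{-1}$ and $\nu_1,\nu_2\gtrsim1$, you get $p_{\norm Z}\lesssim\sqrt{\tr\opv}$ there. But $\tr\opv\gtrsim\sum_i(\mu_i/(\mu_i+\lambda_n))^2\to\infty$ as $\lambda_n\to0$, so this bound diverges. Note also that if your reduction were true, the lemma would hold with a $\tau_n$-free constant. Your later sentence about where $\tau_n^{1/2}$ ``must'' enter is inconsistent with it.

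Your final paragraph does point to the right balance, trace against the Hilbert--Schmidt size of $\opv$, and this is the paper's route. However, none of its load-bearing steps is supplied.

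\textbf{Dimension-free density bound.} You need a bound on the density of $\norm Z^2$ that does not depend on dimension. The paper uses $\sup_u p_{\norm Z^2}(u)\lesssim(\Lambda_1\Lambda_2)^{-1/2}$ with $\Lambda_k^2=\sum_{j\ge k}\nu_j^2$ (Lemma~\ref{lem:density-upper-bound-for-gaussian-in-hilbert-space}). Your ``$(\nu_1\nu_2)^{-1/2}\min\{1,\mathrm{Var}^{-1/2}\cdots\}$'' is not a usable statement. The Chernozhukov et al.\ fallback is not an available result for Hilbert-space norm densities in that form.

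The bound above gives $p_{\norm Z}(t)\lesssim\sqrt{\tr\opv}\,(\Lambda_1\Lambda_2)^{-1/2}$ for $t\le4\EE\norm Z$. For $t>4\EE\norm Z$, the paper uses log-concavity of $t\mapsto\PP(\norm Z\le t)$ together with Paley--Zygmund ($\EE\norm Z\gtrsim\sqrt{\tr\opv}$). This gives $p_{\norm Z}(t)\lesssim(\tr\opv)^{-1/2}$, which is dominated by the same quantity. A Borell-type tail estimate, as you hint, could replace this step, but it has to be carried out.

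\textbf{Lower bound on every eigenvalue.} You need $\nu_i\gtrsim(\mu_i/(\mu_i+\lambda_n))^2$ for all $i$. Your operator inequality plus Courant--Fischer already gives this, but you only extracted $i=1,2$. This yields $\Lambda_k^2\gtrsim\sum_{i\ge k}(\mu_i/(\mu_i+\lambda_n))^4$ for $k=1,2$.

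\textbf{Spectral comparison.} The spectral hypotheses enter exactly through $\sum_i\mu_i/(\mu_i+\lambda_n)\asymp\sum_i(\mu_i/(\mu_i+\lambda_n))^4$. Both sums are of order $\lambda_n^{-1/\beta}$, respectively $\log^{1/\gamma}(1/\lambda_n)$ (Lemma~\ref{lem:spectral-summation-rates}). You also need that dropping the first summand does not change the order, so $\Lambda_2\asymp\Lambda_1$. Only then does $\tr\opv/(\Lambda_1\Lambda_2)\lesssim\tau_n\pd/\sum_i(\mu_i/(\mu_i+\lambda_n))^4\lesssim\tau_n$ follow. You assert this ratio bound without the comparison.

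Your covariance lower bound and the trace bound $\tr_{\splq}\opv\lesssim\tau_n\mathfrak C_1^2\pd$ are correct. They coincide with Lemmas~\ref{lem:anti-concentration-variance-lower-bound} and~\ref{lem:sigma-l2-trace-operator-norm-bound}.
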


The general coverage guarantee of bootstrap confidence balls is presented in the following theorem.

\begin{theorem}[general coverage guarantee]
\label{thm:main-coverage}
Let \(\alpha\in(0,1)\) be a miscoverage level. Suppose the conditions of Lemma~\ref{lem:density-upper-bound-for-gaussian-in-splq} hold and inequalities (\ref{eq:theory-bias-bound}) - (\ref{eq:theory-bootstrap-approximation}) hold with a probability tolerance satisfying \(0<\eta<\min\{\alpha,1-\alpha\}\).  Denote by \(\hat t_\alpha\) the bootstrap quantile such that $ P(\|B\|_{\splq} \ge \hat{t}_{\alpha} \mid\mathcal{D}) = \alpha$,
and set \(\hat c_{\alpha,n}=\hat t_{\alpha}/n^{1/2}\). The miscoverage of the confidence ball $\cs_{\alpha}(\what f)$ defined in (\ref{eq:feasible-confidence-ball}) satisfies
    \begin{equation*}
        \big|
        \PP\!\big(f_*\notin  \cs_{\alpha}(\what f)\big)-\alpha
        \big|
        \lesssim
        \eta + \sqrt{\tau_n}\Delta.
    \end{equation*}
\end{theorem}

Therefore, if $\eta_n\to0$ and $\sqrt{\tau_n}\,\{\Delta_{G,n}+\Delta_{B,n}+\Delta_{\mathrm{bias},n}\}\to 0$ as $n\to\infty$, then the coverage probability of the bootstrap confidence balls converges to the nominal level.

\subsection{Finite-sample bounds for coverage probability}\label{sec:theory-explicit}

We establish explicit bounds for the terms $\Delta_{G}, \Delta_{B}$, and $\Delta_{\rm bias}$, which characterize the coverage guarantee of confidence balls.  

\subsubsection{Bias control of IWKRR}

\begin{proposition}\label{prop:main-bias}
Suppose that Assumptions~\ref{assump:source-condition} and \ref{assump:covariate-shift-estimation} hold with \(\tau_n\geq\tau_0\). Then, 
	\begin{equation*}
		\sqrt{n}\norm{\fpw - f_*}_{\splq}
		\le
		\Delta_{\mathrm{bias}}
		:=
		\max\{\kappa^{2s-1},1\}\,
		\mathfrak{C}_1\sqrt n\,
		\lambda_n^{\min\{s+1/2,1\}}\norm{g_{\ast}}_{\sfH}.
	\end{equation*}
\end{proposition}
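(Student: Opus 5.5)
Starting from the bias identity $f_{P^w}-f_*=-\lambda_n(\oplpw+\lambda_n\opid)^{-1}f_*$, the plan is to first move the $\splq$-norm into $\sfH$. By Lemma~\ref{lem:LQ_half_unitary_isomorphism}, for $h\in\sfH$ we have $\|h\|_{\splq}=\|\oplq^{1/2}h\|_{\sfH}$. Writing $A:=\oplpw+\lambda_n\opid$ and inserting $A^{-1/2}A^{1/2}$ gives
\[
\sqrt n\|\fpw-f_*\|_{\splq}
=\sqrt n\lambda_n\big\|\oplq^{1/2}A^{-1/2}\,A^{-1/2}f_*\big\|_{\sfH}
\le \mathfrak C_1\sqrt n\,\lambda_n\,\|A^{-1/2}f_*\|_{\sfH},
\]
where the inequality is Assumption~\ref{assump:covariate-shift-estimation}. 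It therefore suffices to show
\[
\lambda_n\|A^{-1/2}f_*\|_{\sfH}\le \max\{\kappa^{2s-1},1\}\,\lambda_n^{\min\{s+1/2,1\}}\|g_*\|_{\sfH}.
\]

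Next, apply the source condition $f_*=\oplpwo^s g_*$ and split into two cases. In the case $s\le 1/2$, the plan is to use $\oplpwo\preceq\oplpw\preceq A$, which holds because $\tau_n\ge\tau_0$ gives $w_0\le w_n$. Operator monotonicity of $t\mapsto t^{2s}$ for $2s\in[0,1]$ (Löwner--Heinz) then yields $\oplpwo^{2s}\preceq A^{2s}$. From this, $\|A^{-1/2}\oplpwo^{s}\|=\|\oplpwo^{s}A^{-1}\oplpwo^s\|^{1/2}$. Since $\oplpwo^{2s}\preceq A^{2s}$ implies $\|A^{-s}\oplpwo^s\|\le1$, we get $\|A^{-1/2}\oplpwo^s g_*\|\le\|A^{-(1/2-s)}\|\,\|g_*\|\le\lambda_n^{s-1/2}\|g_*\|$. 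Multiplying by $\lambda_n$ gives $\lambda_n^{s+1/2}\|g_*\|$, and the constant $1$ suffices in this case.

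In the case $s>1/2$, the exponent saturates at $1$. I would bound $\|A^{-1/2}\|\le\lambda_n^{-1/2}$ crudely, which seems to leave an extra $\lambda_n^{-1/2}$. To avoid this, I would instead bound $\|A^{-1/2}f_*\|_{\sfH}$ directly, without losing $\lambda_n$. Write $\oplpwo^s=\oplpwo^{1/2}\oplpwo^{s-1/2}$ and apply the $s=1/2$ case: $\|A^{-1/2}\oplpwo^{1/2}\|\le1$. This gives $\|A^{-1/2}f_*\|\le\|\oplpwo^{s-1/2}g_*\|\le\|\oplpwo\|^{s-1/2}\|g_*\|$. Since $\|\oplpwo\|\le\tr(\oplpwo)\le\EE_{\PP}[w_0k(X,X)]\le\kappa^2\EE_{\PP}\rho=\kappa^2$, the last factor is at most $\kappa^{2s-1}\|g_*\|$, which yields $\lambda_n\cdot\kappa^{2s-1}\|g_*\|$. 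Combining the two cases produces the $\max\{\kappa^{2s-1},1\}$ factor and the exponent $\min\{s+1/2,1\}$.

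The main obstacle is the careful justification of the operator-monotonicity step $\|A^{-s}\oplpwo^{s}\|\le1$, equivalently $\|A^{-1/2}\oplpwo^{1/2}\|\le 1$ together with its fractional version. This step needs Löwner--Heinz applied to the ordering $\oplpwo\preceq A$ in $\sfH$, followed by the identity $\|A^{-s}B^s\|^2=\|A^{-s}B^{2s}A^{-s}\|$. A secondary point is checking that $w_0\le\rho$ gives $\|\oplpwo\|\le\kappa^2$.
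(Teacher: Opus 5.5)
Your proposal is correct and matches the paper's proof step for step. Both bound $\sqrt n\|\fpw-f_*\|_{\splq}$ by $\mathfrak C_1\sqrt n\lambda_n\|(\oplpw+\lambda_n\opid)^{-1/2}f_*\|$ via Assumption~\ref{assump:covariate-shift-estimation}, then split on $s$ as in Lemma~\ref{lem:source-condition-bound-1}, using the same factorization $\oplpwo^s=\oplpwo^{1/2}\oplpwo^{s-1/2}$ when $s>1/2$. The differences are only in how the operator inequalities are justified. For $s\le 1/2$ the paper uses $(\oplpw+\lambda_n\opid)^{-1}\preceq(\oplpwo+\lambda_n\opid)^{-1}$ plus functional calculus for $\oplpwo$ alone, so it avoids fractional L\"owner--Heinz. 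For $s>1/2$ it passes through $\|\oplq^{s-1/2}g_*\|$ instead of your direct bound $\|\oplpwo\|\le\kappa^2$.
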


Thus the bias bound inequality (\ref{eq:theory-bias-bound}) holds with above choice of \(\Delta_{\mathrm{bias}}\),  which diminishes asymptotically if $n\lambda_n^{\min\{2s+1, 2\}}\to 0$ as $n\to\infty$.

\subsubsection{Gaussian approximation of IWKRR}

To establish the inequality (\ref{eq:theory-gaussian-approximation}), we must: (i) bound the approximation error of Bahadur representation (\ref{eq:bahadur}); (ii) bound the Gaussian coupling of the normalized sum of the influence elements. The Gaussian coupling proof uses techniques similar to those in \cite{singh2025kernelridgeregressioninference}, adapted
to the target \(\splq\) geometry and the covariance operator induced by
truncated importance weighting.

\begin{proposition}[Bahadur representation]
\label{prop:main-bahadur} 
Suppose Assumptions~\ref{assump:error},  \ref{assump:source-condition},  and \ref{assump:covariate-shift-estimation} hold with
\(\tau_n\geq\tau_0\),  and $\tau_n \to \infty,  \lambda_n \to 0$ as $n\to\infty$.  If $n> \max\big\{(\tau_n \kappa^2 + 1)\log(e+\pdn)/(C_2\lambda_n),  \tau_n/(\lambda_n \pdn)\big\}$,  where $C_2$ is a constant depending on $\log(1/\eta)$,  then with probability at least $1-\eta$,
\begin{equation*}
\bigg\|\sqrt{n}(\what f - \fpw)-\frac{1}{\sqrt{n}} \sum_{i=1}^n U_i\bigg\|_{\splq}
\lesssim
\Delta_{G,\mathrm{Bah}}
:=
\mathfrak{C}_1\tau_{n}\overline{\sigma}\,\kappa^{3}
\bigl(1+\norm{g_{*}}_{\sfH}\bigr) \sqrt{\frac{\pdn \log(e+\pdn)}{\lambda_n n}}\log^2\left(\frac{2}{\eta}\right)
\end{equation*}
where $U_i=
(\oplpw + \lambda_n \opid)^{-1}\big(w_n(X_i)\eps_i k_{X_i}
+ (w_n(X_i) k_{X_i} \otimes k_{X_i} - \oplpw)(f_* - \fpw)
\big)$.  Moreover,  $\|U_i\|_{\splq}\leq \bar U:= \tau_n\mathfrak{C}_1\lambda_n^{-1/2}(\kappa^2\|f_{\ast}\|_{\sfH}+\kappa\bar\sigma)$.  
\end{proposition}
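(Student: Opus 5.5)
We will follow the standard operator-perturbation route for KRR Bahadur representations, moved into the target geometry by Lemma~\ref{lem:LQ_half_unitary_isomorphism}. For any $g\in\mathsf H$ we have $\|g\|_{\splq}=\|\oplq^{1/2}g\|_{\mathsf H}$. Combined with Assumption~\ref{assump:covariate-shift-estimation}, this gives
\[
\|g\|_{\splq}\le \mathfrak C_1\,\|(\oplpw+\lambda_n\opid)^{1/2}g\|_{\mathsf H},
\]
so every target-norm bound reduces to an RKHS bound after multiplying by $(\oplpw+\lambda_n\opid)^{1/2}$.

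\textbf{Step 1: exact algebraic decomposition.} Write $A=\oplpw+\lambda_n\opid$ and $A_n=\oplpnw+\lambda_n\opid$. From the normal equations we have $A_n\what f=\mathbb E_n[w_n Y k_X]$ and $A\fpw=\oplpw f_*$. These give
\[
\what f-\fpw=A_n^{-1}\big(\mathbb E_n[w_n(X)\eps k_X]+(\oplpnw-\oplpw)(f_*-\fpw)\big)=A_n^{-1}A\,\bar U,
\]
where $\bar U=n^{-1}\sum_i U_i$. Using $A_n^{-1}-A^{-1}=A_n^{-1}(A-A_n)A^{-1}$, we get
\[
\sqrt n(\what f-\fpw)-\tfrac{1}{\sqrt n}\sum_i U_i=\sqrt n\,A_n^{-1}(\oplpw-\oplpnw)\bar U.
\]
Taking the target norm and inserting $A^{\pm1/2}$, the remainder is at most
\[
\mathfrak C_1\,\big\|A^{1/2}A_n^{-1}A^{1/2}\big\|\;\big\|A^{-1/2}(\oplpw-\oplpnw)A^{-1/2}\big\|\;\sqrt n\,\big\|A^{1/2}\bar U\big\|_{\mathsf H}.
\]

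\textbf{Step 2: three concentration bounds.}
\begin{enumerate}[(a)]
\item Apply an operator Bernstein inequality with intrinsic dimension (Minsker/Tropp type) to the summands $A^{-1/2}w_n(X_i)(k_{X_i}\otimes k_{X_i})A^{-1/2}$. These are bounded in norm by $\tau_n\kappa^2/\lambda_n$, and their variance is controlled by $\tau_n\kappa^2\lambda_n^{-1}\,A^{-1}\oplpw$. This yields
\[
\big\|A^{-1/2}(\oplpw-\oplpnw)A^{-1/2}\big\|\lesssim\sqrt{\tfrac{\tau_n\kappa^2\log(e+\pdn)}{\lambda_n n}}\log(2/\eta)+\tfrac{\tau_n\kappa^2\log(e+\pdn)}{\lambda_n n}.
\]
Here the log of the intrinsic dimension must be related to $\pdn$ through Assumption~\ref{assump:covariate-shift-estimation}, which gives $\tr(A^{-1}\oplpw)\lesssim$ quantities comparable to $\pdn$ up to $\mathfrak C_1$.
\item The sample-size condition $n\gtrsim(\tau_n\kappa^2+1)\log(e+\pdn)/(C_2\lambda_n)$ makes the bound in (a) at most $1/2$. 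A Neumann series then gives $\|A^{1/2}A_n^{-1}A^{1/2}\|\le 2$.
\item Apply Pinelis' Hilbert-space Bernstein inequality to $A^{1/2}U_i$. Its noise part is $A^{-1/2}w_n\eps k_X$, whose second moment is at most $\tau_n\bar\sigma^2\tr(A^{-1}\oplpw)$. Its bias part is controlled by $\|f_*-\fpw\|_\infty\le\kappa\|f_*-\fpw\|_{\mathsf H}\lesssim\kappa\|g_*\|_{\mathsf H}$, using the source condition with $\tau_n\ge\tau_0$. This gives
\[
\sqrt n\,\|A^{1/2}\bar U\|_{\mathsf H}\lesssim\sqrt{\tau_n}\,\bar\sigma\kappa(1+\|g_*\|_{\mathsf H})\big(\sqrt{\pdn}+\sqrt{\tau_n/(\lambda_n n)}\big)\log(2/\eta).
\]
The condition $n>\tau_n/(\lambda_n\pdn)$ absorbs the second term.
\end{enumerate}

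\textbf{Step 3: combine.} Multiplying the three bounds gives, with the product of $\sqrt{\tau_n}$ factors bounded by $\tau_n$, the advertised rate
\[
\mathfrak C_1\tau_n\bar\sigma\kappa^3(1+\|g_*\|_{\mathsf H})\sqrt{\pdn\log(e+\pdn)/(\lambda_n n)}\,\log^2(2/\eta).
\]
The two $\log(2/\eta)$ factors come from two union-bounded events.

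\textbf{Step 4: the bound on $U_i$.} We have $\|U_i\|_{\splq}\le\mathfrak C_1\|A^{-1/2}(\cdots)\|_{\mathsf H}$ and $\|A^{-1/2}\|\le\lambda_n^{-1/2}$. Also $\|w_n\eps k_X\|\le\tau_n\kappa\bar\sigma$. For the second part, $\|w_n(k_X\otimes k_X)h\|\le\tau_n\kappa^2\|h\|_{\mathsf H}$ and $\|\oplpw h\|\le\tau_n\kappa^2\|h\|$, applied with $h=f_*-\fpw$ and $\|f_*-\fpw\|_{\mathsf H}\le\|f_*\|_{\mathsf H}$. Together these give $\bar U$ up to constants.

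The main obstacle will be Step 2(a). The intrinsic-dimension Bernstein bound must be stated for the weighted operator, and the log term has to be expressed through the target quantity $\pdn$ rather than the $\PP^w$ effective dimension. We will do this by bounding $\tr(A^{-1}\oplpw)$ via the operator domination $\oplq\preceq\mathfrak C_1^2A$ and the lower bound $w_n\ge\underline\rho$, accepting loss only in constants absorbed into $C_2$.
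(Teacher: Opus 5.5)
Your decomposition is correct and is the paper's own: $\sqrt n\,A_n^{-1}(\oplpw-\oplpnw)\,n^{-1}\sum_i U_i$ is exactly $\sqrt n\,u$ in the paper's proof. Your three-factor split and the bound $\|A^{1/2}A_n^{-1}A^{1/2}\|\le 2$ also match. The gap is in the step you flag as the main obstacle: converting $\tr_{\mathsf H}(A^{-1}\oplpw)$ into $\pdn$. You need this twice, for the dimension factor in (a) and for the variance $\tau_n\overline{\sigma}^2\tr_{\mathsf H}(A^{-1}\oplpw)$ in (c).

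Both tools you propose point the wrong way. The transfer condition $\oplq\preceq_{\mathsf H}\mathfrak C_1^2(\oplpw+\lambda_n\opid)$ bounds the target operator by the weighted-source one. By min--max it gives $\mu_j\le\mathfrak C_1^2(\nu_j+\lambda_n)$, which are \emph{lower} bounds on the eigenvalues $\nu_j$ of $\oplpw$. Likewise $w_n\ge\underline{\rho}$ gives $\oplpw\succeq_{\mathsf H}\underline{\rho}\,\oplp$, another lower bound. An upper bound on $\tr_{\mathsf H}(A^{-1}\oplpw)=\sum_j\nu_j/(\nu_j+\lambda_n)$ needs an upper bound on $\oplpw$, and neither supplies one. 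Besides, $\rho\ge\underline{\rho}$ belongs to Assumption~\ref{assump:lower-boundedness-of-covariance-operator}, which is not a hypothesis of this proposition.

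Even the one way to get an upper bound from it does not help. Combining $\oplp\preceq_{\mathsf H}\underline\rho^{-1}\oplq$ with $w_n\le\tau_n$ gives only $\oplpw\preceq_{\mathsf H}(\tau_n/\underline\rho)\oplq$, hence $\tr_{\mathsf H}(A^{-1}\oplpw)\le(\tau_n/\underline\rho)\pdn$. In (c) this inflates the variance by a factor of order $\tau_n$ and the final rate by $\sqrt{\tau_n}\to\infty$. That is not a constant you can absorb into $C_2$.

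The missing observation is that truncation never increases the weight: $w_n=\min\{\rho,\tau_n\}\le\rho$. Hence for every $h\in\mathsf H$, $\langle h,(\oplq-\oplpw)h\rangle_{\mathsf H}=\int(\rho-w_n)h^2\,d\PP\ge 0$, i.e.\ $\oplpw\preceq_{\mathsf H}\oplq$. Since $t\mapsto t/(t+\lambda_n)$ is increasing, eigenvalue comparison gives $\tr_{\mathsf H}(A^{-1}\oplpw)\le\pdn$. This holds with no constant and without using $\mathfrak C_1$ or $\underline\rho$; it is what the paper uses in Lemmas~\ref{lem:bound-bahadur-1} and~\ref{lem:bound-bahadur-3}.

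With that replacement your argument goes through. The only real difference from the paper is in the last factor. You bound $\|A^{1/2}n^{-1}\sum_iU_i\|_{\mathsf H}$ with a single Hilbert-space Bernstein step on the whole influence element. The paper applies Bernstein only to the noise part. It rewrites the bias part as $\lambda_n[A^{-1/2}(\oplpnw-\oplpw)A^{-1/2}][A^{-1/2}f_*]$, which is quadratic in the operator deviation and is controlled by Lemma~\ref{lem:source-condition-bound-1}.

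Your cruder treatment of the bias part is still dominated by the noise term, so both routes give $\Delta_{G,\mathrm{Bah}}$ up to constants. It uses $\|f_*-\fpw\|_\infty\le\kappa\max\{\kappa^{2s-1},1\}\|g_*\|_{\sfH}$, slightly larger than your $\kappa\|g_*\|_{\sfH}$. Also rename your empirical mean, since $\bar U$ already denotes the envelope in the statement.
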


We note that the upper bound $\bar U$ for $\|U_i\|_{\splq}$ will appear frequently in the subsequent theorems.  The Bahadur representation error $\Delta_{G,{\rm Bah}}$ diminishes asymptotically if $n\lambda_n \gg \tau_n^2\pdn\log(e+\pdn)$.

Recall the spectral tail $\sigma^2_{\splq}(m, \calV)=\sum_{j=m+1}^{\infty} \mu_j(\calV,  \splq)$,  where $\mu_j(\calV, \splq)$ denotes the $j$-th eigenvalue of $\calV$  in the $\splq$ space.  The spectral tail characterizes the smallest error of approximating $\calV$ by a finite-rank operator.  

\begin{proposition}[Gaussian coupling in target \(\mathsf{L^2}(\mathbb Q)\)]\label{prop:main-gaussian-coupling}
There exists \(Z\sim\mathrm{N}_{\splq}(0,\opv)\) such that, with probability at least \(1-\eta\),
\[
  \bigg\|
  \frac{1}{\sqrt{n}} \sum_{i=1}^n U_i - Z
  \bigg\|_{\splq}
  \lesssim
  \Delta_{G,\mathrm{coup}}:=
  \inf_{m\ge1}
  \bigg\{
    \log^{1/2}(6/\eta) \sigma_{\splq}(m,\opv)
    +
    \frac{m^2\bar U\log(m^2/\eta)}{\sqrt{n}}
  \bigg\}  
\]
\end{proposition}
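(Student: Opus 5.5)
We follow the standard finite-dimensional projection plus Gaussian coupling route, as in \cite{singh2025kernelridgeregressioninference}, carried out in the target space $\splq$.

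\textbf{Projection and truncation.} The $U_i$ are i.i.d. and centered in $\splq$, since $\EE[\eps\mid X]=0$ and $\EE[w_n(X)k_X\otimes k_X]=\oplpw$. They are also bounded: $\|U_i\|_{\splq}\le\bar U$ by Proposition~\ref{prop:main-bahadur}. Fix $m\ge1$ and let $\Pi_m$ be the $\splq$-orthogonal projection onto the span of the top $m$ eigenfunctions of $\opv=\EE[U_i\otimes_{\splq}U_i]$, with $\Pi_m^\perp=\opid-\Pi_m$. We split
\[
S_n:=n^{-1/2}\textstyle\sum_i U_i=\Pi_m S_n+\Pi_m^\perp S_n .
\]
Later we take $Z=Z_m+Z_m^\perp$, where $Z_m$ is coupled with $\Pi_m S_n$ and $Z_m^\perp\sim \mathrm N(0,\Pi_m^\perp\opv\Pi_m^\perp)$ is independent. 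Since $\Pi_m$ reduces $\opv$, this $Z$ has covariance exactly $\opv$. The triangle inequality then gives
\[
\|S_n-Z\|\le\|\Pi_mS_n-Z_m\|+\|\Pi_m^\perp S_n\|+\|Z_m^\perp\|.
\]

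\textbf{Tail terms.} We have $\EE\|\Pi_m^\perp S_n\|^2=\tr(\Pi_m^\perp\opv)=\sigma^2_{\splq}(m,\opv)$. A bounded-difference or Bernstein inequality in Hilbert space (Pinelis) gives, with probability $1-\eta/3$,
\[
\|\Pi_m^\perp S_n\|\lesssim\sigma_{\splq}(m,\opv)\log^{1/2}(6/\eta)+\bar U\log(6/\eta)/\sqrt n .
\]
The second summand is absorbed into $m^2\bar U\log(m^2/\eta)/\sqrt n$. For the Gaussian piece, Gaussian concentration of the norm (Borell--TIS) gives $\|Z_m^\perp\|\lesssim\sigma_{\splq}(m,\opv)\log^{1/2}(6/\eta)$ with probability $1-\eta/3$.

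\textbf{Finite-dimensional coupling (main obstacle).} We must couple $\Pi_mS_n$, which is a normalized sum of i.i.d. bounded vectors in $\RR^m$ (in the eigenbasis) with Euclidean norm at most $\bar U$, to its Gaussian counterpart. The error should be $m^2\bar U\log(m^2/\eta)/\sqrt n$ in probability, with no lower bound on the eigenvalues of the covariance. We will invoke a Yurinskii/Zaitsev-type strong coupling for bounded vectors, in the form used by \cite{singh2025kernelridgeregressioninference}, or Zaitsev's coupling applied coordinatewise and then union-bounded. This yields, with probability $1-\eta/3$, a constructed $Z_m\sim\mathrm N(0,\Pi_m\opv\Pi_m)$ satisfying
\[
\|\Pi_mS_n-Z_m\|\lesssim m^2\bar U\log(m^2/\eta)/\sqrt n .
\]
The $m^2$ factor arises from a third-moment term $\beta\lesssim m\bar U^3$ combined with the dimension factor. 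The main technical care lies in checking the constants and the dependence of the coupling on dimension, and in enlarging the probability space so that $Z_m^\perp$ can be made independent.

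\textbf{Conclusion.} A union bound over the three events gives the claimed bound for each fixed $m$. Since $m$ is deterministic, we then take the infimum over $m$, which gives $\Delta_{G,\mathrm{coup}}$.
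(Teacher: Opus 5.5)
Your proposal follows the paper's proof essentially step for step: projection onto the top-$m$ eigenfunctions of $\opv$, Zaitsev's multivariate coupling for bounded vectors on the leading block, Pinelis's Hilbert-space Bernstein inequality for $\Pi_m^\perp S_n$, Borell's inequality for the Gaussian tail, and a three-way union bound. The paper builds $Z$ from per-observation pieces $A_m^*\zeta_i$ plus independent tail Gaussians, which is the same as your $Z_m+Z_m^\perp$.

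Two of your side remarks should be dropped. First, the factor $m^2\bar U\log(m^2/\eta)/\sqrt n$ comes directly from Zaitsev's tail bound $c_1m^2\exp\{-\lambda\sqrt n/(c_2m^2\bar U)\}$, not from a Yurinskii-type third-moment term, which would give a slower rate. Second, your fallback of coupling each coordinate separately would not produce a jointly Gaussian vector with covariance $\Pi_m\opv\Pi_m$. The multivariate Zaitsev coupling is the step that is actually needed.
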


The Gaussian coupling error $\Delta_{G,{\rm coup}}$ is determined by the decay rate of the eigenvalues of $\calV$ in $\splq$.  
By Propositions~\ref{prop:main-bahadur} and \ref{prop:main-gaussian-coupling},  the Gaussian approximation inequality (\ref{eq:theory-gaussian-approximation}) holds with 
$
  \Delta_G
  =
  \Delta_{G,\mathrm{Bah}}
  +
  \Delta_{G,\mathrm{coup}}.
$

\subsubsection{Bootstrap approximation}

To establish (\ref{eq:theory-bootstrap-approximation}),  we must: (i) bound the difference between the oracle bootstrap function $B_{\oracle}$ and the computable bootstrap function $B$ in $\splq$-norm; (ii) bound the Gaussian coupling of $B_{\oracle}$.  The regularized transfer coverage Assumption~\ref{assump:covariate-shift-estimation} is required in the first step  because both the residual and the empirical regularized operator must be transferred from the weighted-source geometry to target \(\splq\) geometry.

\begin{proposition}[difference between oracle and computable bootstrap functions]
\label{prop:main-feasible-bootstrap}
Suppose Assumptions~\ref{assump:error},  \ref{assump:source-condition},  and \ref{assump:covariate-shift-estimation} hold with
\(\tau_n\geq\tau_0\),  and $\tau_n \to \infty,  \lambda_n \to 0$ as $n\to\infty$.  If $n> \max\big\{(\tau_n \kappa^2 + 1)\log(e+\pdn)/(C_2\lambda_n),  \tau_n/(\lambda_n \pdn)\big\}$,  where $C_2$ is a constant depending on $\log(1/\eta)$, then there exists a data event $\calE$ with $\PP(\calE)\geq 1-\eta$ such that, conditioned on $\calE$ and data \(\calD\), with probability at least \(1-\eta\),
\begin{equation*}
\norm{B-B_{\oracle}}_{\splq}
\lesssim
\Delta_{B,\mathrm{feas}}:=\mathfrak{C}_1\,\overline{\sigma}\,\kappa^{2s+4}
\bigl(1+\norm{g_{\ast}}_{\sfH}\bigr)
\frac{\tau_n \pdn}{\sqrt{n\lambda_n}}\cdot \log^3(2/\eta),
\end{equation*}

where 
\begin{equation*}
B
= \frac{1}{n} \sum_{1\leq i\neq j\leq n}  h_{ij}\cdot \frac{\what{V}_i-\what{V}_j}{\sqrt{2}} \quad {\rm and} \quad 
B_{\oracle}= \frac{1}{n} \sum_{1\leq i\neq j\leq n} h_{ij}\cdot \frac{V_i-V_j}{\sqrt{2}}, 
\end{equation*}
with $V_i = w_n(X_i)\bigl(Y_i - \fpw(X_i)\bigr)\,(\oplpw+\lambda_n \opid)^{-1} k_{X_i}$ and  $\what{V}_i = w_n(X_i)\bigl(Y_i - \what f(X_i)\bigr)\,(\oplpnw+\lambda_n \opid)^{-1} k_{X_i}$. 
Moreover, $
B_{\oracle}\mid\mathcal{D}
\sim
\mathrm{N}_{\splq}(0,\what{\opv}_{\oracle})$
and 
$
B\mid\mathcal{D}
\sim
\mathrm{N}_{\splq}(0,\what{\opv}),
$
where $\what{\opv}_{\oracle}=\mathbb{E}_{\PP_n}[V_i\otimes_{\splq}V_i]-\mathbb{E}_{\PP_n}[V_i]\otimes_{\splq}\mathbb{E}_{\PP_n}[V_i]$ and $\what \calV=\mathbb{E}_{\PP_n}[\what V_i\otimes_{\splq}\what V_i]-\mathbb{E}_{\PP_n}[\what V_i]\otimes_{\splq}\mathbb{E}_{\PP_n}[\what V_i]$, respectively.
\end{proposition}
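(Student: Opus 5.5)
Since the multipliers $h_{ij}$ enter linearly, we start by writing the difference of the two bootstrap functions as one Gaussian multiplier process:
\[
B-B_{\oracle}=\frac{1}{n}\sum_{i\neq j}h_{ij}\frac{D_i-D_j}{\sqrt2},\qquad D_i:=\what V_i-V_i .
\]
Conditionally on $\calD$, this is a centered Gaussian element of $\splq$ with covariance $\EE_{\PP_n}[D_i\otimes_{\splq}D_i]-\EE_{\PP_n}[D_i]\otimes_{\splq}\EE_{\PP_n}[D_i]$. The Gaussian distributional claims for $B$ and $B_{\oracle}$ follow from the same computation with $D_i$ replaced by $\what V_i$ and $V_i$. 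To see this, note that $\sum_{i\neq j}h_{ij}(a_i-a_j)=\sum_i a_i\sum_{j\neq i}(h_{ij}-h_{ji})$ and that the vector $(2n)^{-1/2}(\mah-\mah^\top)\vone$ has covariance $\maid-\vone\vone^\top/n$. By Gaussian concentration of the norm (Borell--TIS, or simply Markov applied to a Gaussian chaos of order two), with conditional probability at least $1-\eta$ we get
\[
\|B-B_{\oracle}\|_{\splq}\lesssim\log^{1/2}(2/\eta)\Big(\frac1n\sum_i\|D_i\|_{\splq}^2\Big)^{1/2}.
\]
The task therefore reduces to a data-event bound on $\max_i\|D_i\|_{\splq}$ or on its empirical second moment.

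Next we decompose $D_i$. Write $A:=\oplpw+\lambda_n\opid$, $\what A:=\oplpnw+\lambda_n\opid$, and $\varepsilon_i^{\oracle}:=Y_i-\fpw(X_i)$. Then
\[
D_i=w_n(X_i)\big(\fpw-\what f\big)(X_i)\,\what A^{-1}k_{X_i}+w_n(X_i)\varepsilon_i^{\oracle}\big(\what A^{-1}-A^{-1}\big)k_{X_i}.
\]
For the second term we use $\what A^{-1}-A^{-1}=\what A^{-1}(\oplpw-\oplpnw)A^{-1}$ and then transfer to the target norm through Assumption~\ref{assump:covariate-shift-estimation} and Lemma~\ref{lem:LQ_half_unitary_isomorphism}:
\[
\|g\|_{\splq}=\|\oplq^{1/2}g\|_{\sfH}\le\mathfrak C_1\|A^{1/2}g\|_{\sfH}.
\]
On a data event $\calE$ we control two quantities by operator Bernstein inequalities with effective dimension $\pdn$; this is exactly the source of the sample-size condition $n\gtrsim\tau_n\kappa^2\log(e+\pdn)/\lambda_n$. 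The first is the concentration $\|A^{-1/2}(\oplpw-\oplpnw)A^{-1/2}\|\le 1/2$, which gives $\|A^{1/2}\what A^{-1}A^{1/2}\|\le2$. The second is $\|A^{-1/2}(\oplpw-\oplpnw)\|_{\mathrm{HS}}\lesssim\kappa\sqrt{\tau_n\pdn\log(2/\eta)/n}$. Together with $|w_n\varepsilon_i^{\oracle}|\le\tau_n(\overline\sigma+\kappa\|f_*-\fpw\|_{\sfH})$ and $\|A^{-1/2}k_{X_i}\|\le\kappa\lambda_n^{-1/2}$, these give a bound of order $\mathfrak C_1\tau_n\kappa^2\overline\sigma\sqrt{\pdn/(n\lambda_n)}$ up to logarithmic factors. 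We bound $\|f_*-\fpw\|_{\sfH}$ through the source condition, Assumption~\ref{assump:source-condition}, using $\oplpwo\preceq\oplpw$, which yields a multiple of $\kappa^{2s}\|g_*\|_{\sfH}$. For the first term, the sup-norm bound $|(\fpw-\what f)(X_i)|\le\kappa\|\what f-\fpw\|_{\sfH}$ is too crude. Instead we use the Bahadur analysis underlying Proposition~\ref{prop:main-bahadur}: on $\calE$, $\|A^{1/2}(\what f-\fpw)\|_{\sfH}\lesssim\kappa\overline\sigma\tau_n(1+\|g_*\|_{\sfH})\sqrt{\pdn/n}\log(2/\eta)$. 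We then bound $|(\what f-\fpw)(X_i)|\le\|A^{-1/2}k_{X_i}\|\,\|A^{1/2}(\what f-\fpw)\|$, multiply by $w_n\le\tau_n$, and bound $\|\what A^{-1}k_{X_i}\|_{\splq}\le2\mathfrak C_1\|A^{-1/2}k_{X_i}\|\le2\mathfrak C_1\kappa\lambda_n^{-1/2}$.

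Bounding the first term naively gives a factor $\tau_n^2/\lambda_n$, whereas the target rate is $\tau_n\pdn/\sqrt{n\lambda_n}$. \textbf{This step is the main obstacle.} To get the stated rate, we will avoid the maximum and bound the empirical second moment instead:
\[
\frac1n\sum_i w_n(X_i)^2(\what f-\fpw)(X_i)^2\|\what A^{-1}k_{X_i}\|^2_{\splq}\le\tau_n\,\max_i\|\what A^{-1}k_{X_i}\|^2_{\splq}\cdot\langle \what f-\fpw,\oplpnw(\what f-\fpw)\rangle_{\sfH}.
\]
The last factor is at most $2\|A^{1/2}(\what f-\fpw)\|^2$ on $\calE$. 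Alternatively, we can write $\|\what A^{-1}k_{X_i}\|^2_{\splq}$ as $\langle k_{X_i},\what A^{-1}\oplq\what A^{-1}k_{X_i}\rangle$ and average it, which produces a trace $\lesssim\mathfrak C_1^2\pdn$ instead of $\lambda_n^{-1}$. This trace form is how the factor $\pdn$, rather than $\lambda_n^{-1}$, will enter, and it is also the source of the $\kappa^{2s+4}$ and $\log^3(2/\eta)$ bookkeeping. Taking the intersection of the Bernstein events and the Bahadur event as $\calE$, and combining the bounds on the two terms with the Gaussian norm concentration from the first step, gives $\Delta_{B,\mathrm{feas}}$.
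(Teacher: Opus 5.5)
Your route is sound and genuinely different from the paper's.

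\textbf{How the paper does it.} The paper telescopes in the opposite order: $B-B_{\oracle}=\Delta_{\mathrm{inv}}+\Delta_{\mathrm{res}}$. Here $\Delta_{\mathrm{inv}}=(\what A^{-1}-A^{-1})u$, with $u$ the multiplier sum built from the \emph{feasible} residuals $\what\varepsilon_i$, and $\Delta_{\mathrm{res}}=A^{-1}\bigl[\frac1n\sum_{i,j}h_{ij}\tfrac{1}{\sqrt2}(\oplpiw-\oplpjw)\bigr](\fpw-\what f)$. It pulls the data-measurable operator (respectively function) out of each Gaussian sum, then applies Borell's inequality separately to $A^{-1/2}u$ in $\sfH$ and to a Gaussian Hilbert--Schmidt operator. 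The residual piece is bounded by $\mathfrak C_1\|\what f-\fpw\|_{\sfH}\bigl(\frac1n\sum_i\|A^{-1/2}\oplpiw\|_{\mathrm{HS}}^2\bigr)^{1/2}\lesssim\sqrt{\tau_n\pdn/(n\lambda_n)}\cdot\sqrt{\tau_n\pdn}$, which is exactly where $\tau_n\pdn/\sqrt{n\lambda_n}$ comes from. The inverse piece picks up an extra factor $1+\sqrt{\tau_n\pdn/(n\lambda_n)}$ through $\|f_*-\what f\|_{\sfH}$.

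\textbf{What your route buys.} Your version needs a single Borell step. Putting the oracle residual in the inverse-difference term avoids the paper's extra factor. For the residual-difference term, with $g=\what f-\fpw$, you use $\frac1n\sum_i w_n(X_i)^2g(X_i)^2\le\tau_n\langle g,\oplpnw g\rangle_{\sfH}\le\tfrac32\tau_n\|A^{1/2}g\|^2$ together with $\max_i\|\what A^{-1}k_{X_i}\|_{\splq}\le2\mathfrak C_1\kappa\lambda_n^{-1/2}$. This uses the energy norm of $g$ rather than its sup-norm, and gives order $\mathfrak C_1\tau_n\sqrt{\pdn/(n\lambda_n)}$ for this term (up to constants and logarithms), a factor of order $\sqrt{\pdn}$ sharper than the paper. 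Your alternative reproduces the paper's rate exactly: the sup-norm of $g$ times $\frac1n\sum_i w_n(X_i)\|\what A^{-1}k_{X_i}\|_{\splq}^2\le\tr(\oplq^{1/2}\what A^{-1}\oplq^{1/2})\le2(1+\mathfrak C_1^2)\pdn$. This holds on the event $\what A\succeq\frac12A$, using $\oplq+\lambda_n\opid\preceq(1+\mathfrak C_1^2)A$.

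Two pieces of bookkeeping must be fixed for this to close.

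\textbf{First: average the inverse-difference term over $i$.} It cannot be bounded pointwise. The ingredients you list give $\|w_n(X_i)\varepsilon_i^{\oracle}(\what A^{-1}-A^{-1})k_{X_i}\|_{\splq}\lesssim\mathfrak C_1\tau_n\sqrt{\tau_n\log(e+\pdn)/(n\lambda_n)}\,\kappa\lambda_n^{-1/2}$. This exceeds $\Delta_{B,\mathrm{feas}}$ by a factor of order $\sqrt{\tau_n\log(e+\pdn)/\lambda_n}/\pdn$, which is unbounded under the spectral decays of Section~\ref{subsec:spectral-rate-consequences}. You can average in either of two ways:
\begin{enumerate}
\item Combine the operator-norm bound on $A^{-1/2}(\oplpw-\oplpnw)A^{-1/2}$ with $\frac1n\sum_i w_n(X_i)^2\|A^{-1/2}k_{X_i}\|^2\lesssim\tau_n\pdn$, a scalar Bernstein bound as in Lemma~\ref{lem:supporting-lemma-1-for-feasible-bootstrap}.
\item Set $M:=\oplq^{1/2}\what A^{-1}(\oplpw-\oplpnw)A^{-1}$ and use $\frac1n\sum_i w_n(X_i)\|Mk_{X_i}\|^2=\tr(M\oplpnw M^{*})\le\tfrac32\|MA^{1/2}\|_{\mathrm{HS}}^2$. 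This is where your Hilbert--Schmidt bound actually produces the order you state.
\end{enumerate}

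\textbf{Second: your energy bound is loose by $\sqrt{\tau_n}$.} You quote $\|A^{1/2}(\what f-\fpw)\|\lesssim\tau_n\sqrt{\pdn/n}$. With it, the energy route gives $\tau_n^{3/2}\sqrt{\pdn/(n\lambda_n)}$, which is within $\Delta_{B,\mathrm{feas}}$ only when $\tau_n\lesssim\pdn$. The correct order is $\sqrt{\tau_n\pdn/n}$, from Lemma~\ref{lem:bound-bahadur-3} and the proof of Lemma~\ref{lem:pre-estimation-error-bound-in-h}. The noise sum has second moment at most $\tau_n\overline{\sigma}^2\tr(A^{-1}\oplpw)\le\tau_n\overline{\sigma}^2\pdn$, and the condition $n>\tau_n/(\lambda_n\pdn)$ absorbs the Bernstein range term.

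With these two corrections your argument proves the proposition.
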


The error $\Delta_{B,{\rm feas}}$ diminishes asymptotically if $n\lambda_n\gg \tau_n^2\pdn^2$.  
Denote the trace $\tr_{\splq}(\calV)=\sigma^2_{\splq}(0,\calV)=\sum_{j=1}^{\infty}\mu_j(\calV, \splq)$. 

\begin{proposition}[bootstrap coupling in \(\mathsf{L^2}(\mathbb Q)\)]\label{prop:main-bootstrap-coupling}
Let  $g$ be the isonormal Gaussian process in $\splq$.  Then there exists a
random element \(G\stackrel{{\rm d.}}{=}\what{\opv}_{\oracle}^{1/2} g\big|\calD\) such that, conditioned on $\calD$, with total probability at
least \(1 - 6\eta\),
\begin{equation*}
  \left\| {\opv}^{1/2} g - G \right\|_{\splq}
  \lesssim
  \Delta_{B,\mathrm{coup}}:=\inf_{m \geq 1} \bigg[
  m^{1/4} \bigg(\frac{\bar U^2\tr_{\splq}(\calV)}{n} + \frac{\bar U^4}{n^2}\bigg)^{1/4}
  + \sigma_{\splq}(m, \opv)\bigg]\log^{3/2}(2/\eta) ,
\end{equation*}
where the covariance operator $\opv= \mathbb{E}\!\left[ U_i \otimes_{\splq} U_i \right]$. 
\end{proposition}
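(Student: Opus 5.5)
The plan is to couple two centered Gaussian elements in $\splq$ that share the isonormal process $g$. The first is $\opv^{1/2}g$ with population covariance $\opv=\EE[U_i\otimes_{\splq}U_i]$; the second is $G:=\what\opv_{\oracle}^{1/2}g$ with empirical covariance $\what\opv_{\oracle}=\EE_{\PP_n}[U_i\otimes_{\splq}U_i]-\EE_{\PP_n}[U_i]\otimes_{\splq}\EE_{\PP_n}[U_i]$, where we use the equivalent representation via $U_i$ instead of $V_i$. Conditionally on $\calD$, $G$ has the required law, and the task reduces to bounding $\|(\opv^{1/2}-\what\opv_{\oracle}^{1/2})g\|_{\splq}$.

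\textbf{Step 1: spectral splitting.} Let $\Pi_m$ be the orthogonal projection in $\splq$ onto the top $m$ eigenfunctions of $\opv$, and $\Pi_m^\perp=\opid-\Pi_m$. Split
\[
(\opv^{1/2}-\what\opv_{\oracle}^{1/2})g=\Pi_m(\opv^{1/2}-\what\opv_{\oracle}^{1/2})g+\Pi_m^\perp\opv^{1/2}g-\Pi_m^\perp\what\opv_{\oracle}^{1/2}g .
\]
\begin{enumerate}[(a)]
\item The middle term is a Gaussian element with trace $\sigma^2_{\splq}(m,\opv)$. Gaussian norm concentration gives it size $\lesssim\sigma_{\splq}(m,\opv)\log^{1/2}(1/\eta)$.
\item For the last term, $\EE[\|\Pi_m^\perp\what\opv_{\oracle}^{1/2}g\|^2\mid\calD]\le\tr(\Pi_m^\perp\what\opv_{\oracle}\Pi_m^\perp)$. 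This is at most $\sigma^2_{\splq}(m,\opv)+\tr(\Pi_m^\perp(\what\opv_{\oracle}-\opv)\Pi_m^\perp)$.
\item In the first term, the finite-dimensional block is controlled through $\|\Pi_m(\opv^{1/2}-\what\opv_{\oracle}^{1/2})\|_{\mathrm{HS}}$.
\end{enumerate}

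\textbf{Step 2: square-root perturbation.} I will use the Powers--Størmer type inequality $\|A^{1/2}-B^{1/2}\|_{\mathrm{HS}}^2\le\|A-B\|_{\mathrm{tr}}$. Applied to the rank-$m$ compression, with $\|\cdot\|_{\mathrm{tr}}\le\sqrt m\|\cdot\|_{\mathrm{HS}}$, it gives, conditionally with probability $1-\eta$,
\[
\|\Pi_m(\opv^{1/2}-\what\opv_{\oracle}^{1/2})g\|\lesssim\big(\sqrt m\,\|\what\opv_{\oracle}-\opv\|_{\mathrm{HS}}\big)^{1/2}\log^{1/2}(1/\eta).
\]
This produces the $m^{1/4}$ factor and the fourth root of the rate. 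Rather than projecting $\what\opv_{\oracle}^{1/2}$, which is not compatible with the square root, I would apply the inequality to $\Pi_m\opv\Pi_m$ versus $\Pi_m\what\opv_{\oracle}\Pi_m$. The mismatch is absorbed into the tail terms via $\Pi_m^\perp$, using the same trace bound as in Step 1(b).

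\textbf{Step 3: covariance concentration.} Apply Bernstein's inequality in the Hilbert space of Hilbert--Schmidt operators on $\splq$ to the i.i.d. summands $U_i\otimes U_i-\opv$. These satisfy $\|U_i\otimes U_i\|_{\mathrm{HS}}\le\bar U^2$ and $\EE\|U_i\otimes U_i\|_{\mathrm{HS}}^2\le\bar U^2\tr_{\splq}(\opv)$. The result is
\[
\|\EE_{\PP_n}[U\otimes U]-\opv\|_{\mathrm{HS}}\lesssim\Big(\sqrt{\bar U^2\tr(\opv)/n}+\bar U^2/n\Big)\log(2/\eta).
\]
The centering term $\|\EE_{\PP_n}U\|^2$ uses vector Bernstein and is bounded by the same quantity. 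Squaring gives $\big(\bar U^2\tr(\opv)/n+\bar U^4/n^2\big)^{1/2}$. Inserted into Step 2, this yields the $\big(\bar U^2\tr(\opv)/n+\bar U^4/n^2\big)^{1/4}$ term. The tail trace perturbation in Step 1(b) is bounded by the trace norm of the same deviation, i.e. by the variance of $\sum_i\|\Pi_m^\perp U_i\|^2$, which is lower order. Finally, collect the failure probabilities from the Gaussian tails and the two Bernstein events (total $\le6\eta$), combine the log powers into $\log^{3/2}(2/\eta)$, and take the infimum over $m$.

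\textbf{Main obstacle.} The hardest part is Step 2: obtaining a clean $m^{1/4}$ dependence when comparing square roots of non-commuting operators whose eigenspaces differ. I will first try the trace-norm Powers--Størmer inequality on the compressed operators. If the cross terms resist, I will instead couple through the Wasserstein-2 distance between the Gaussians, $W_2^2\le\|\opv-\what\opv_{\oracle}\|_{\mathrm{tr}}$ restricted to the top-$m$ block. Since this is an existence statement, an optimal coupling in place of the shared $g$ also suffices.
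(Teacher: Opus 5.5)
Your proposal is correct. Its skeleton matches the paper's: the top-$m$ eigenprojection $\Pi_m$ of $\opv$, Powers--St{\o}rmer on the compressed pair $\Pi_m\opv\Pi_m$, $\Pi_m\what{\opv}_{\oracle}\Pi_m$ with $\|\cdot\|_{\mathrm{tr}}\le\sqrt m\|\cdot\|_{\mathrm{HS}}$, Borell's inequality, and Bernstein bounds for $\Delta_{\mathrm{HS}}:=\|\what{\opv}_{\oracle}-\opv\|_{\mathrm{HS}}$ and $\Delta_{\mathrm{tail}}:=|\tr_{\splq}\{\Pi_m^\perp(\what{\opv}_{\oracle}-\opv)\Pi_m^\perp\}|$.

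The genuine difference is the coupling. The paper compares $\Pi_m\opv^{1/2}g$ only with $(\Pi_m\what{\opv}_{\oracle}\Pi_m)^{1/2}g$. It then builds $G$ by gluing on the $\Pi_m^\perp$ part through the conditional Strassen--Dudley lemma, so its $G$ is non-explicit and depends on $m$.

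You instead take the synchronous coupling $G=Ag$ with $A:=\what{\opv}_{\oracle}^{1/2}$. This forces you to compare $\Pi_m A$ with $(\Pi_m\what{\opv}_{\oracle}\Pi_m)^{1/2}$. You only asserted that this mismatch is absorbed by the tail, and your displayed Step~2 bound leaves it out. It does hold, in one line. Since
\[
(\Pi_m A\Pi_m)^2=\Pi_m\what{\opv}_{\oracle}\Pi_m-\Pi_m A\Pi_m^\perp A\Pi_m ,
\]
Powers--St{\o}rmer gives
\[
\bigl\|\Pi_m A\Pi_m-(\Pi_m\what{\opv}_{\oracle}\Pi_m)^{1/2}\bigr\|_{\mathrm{HS}}^2
\le\|\Pi_m^\perp A\Pi_m\|_{\mathrm{HS}}^2
\le\tr_{\splq}(\Pi_m^\perp\what{\opv}_{\oracle}\Pi_m^\perp)
\le\sigma^2_{\splq}(m,\opv)+\Delta_{\mathrm{tail}} .
\]
The off-diagonal block $\Pi_m A\Pi_m^\perp$ obeys the same bound. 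Using $\Pi_m\opv^{1/2}=(\Pi_m\opv\Pi_m)^{1/2}$, you get
\[
\|\opv^{1/2}-A\|_{\mathrm{HS}}\lesssim m^{1/4}\Delta_{\mathrm{HS}}^{1/2}+\sigma_{\splq}(m,\opv)+\Delta_{\mathrm{tail}}^{1/2}.
\]
A single Borell step for the conditionally Gaussian element $(\opv^{1/2}-A)g$ then finishes.

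What each route buys: yours gives an explicit, $m$-independent $G$ and avoids conditional coupling theory entirely, so your Wasserstein fallback is unnecessary. The paper's route avoids this extra operator inequality, at the price of a non-constructive $G$.

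One correction in Step~3. $\Delta_{\mathrm{tail}}$ is not controlled by the trace norm of $\what{\opv}_{\oracle}-\opv$, which is itself uncontrolled in infinite dimensions. As your ``i.e.'' indicates, it is controlled by scalar Bernstein for $n^{-1}\sum_i\|\Pi_m^\perp U_i\|_{\splq}^2$ (range $\bar U^2$, variance at most $\bar U^2\sigma^2_{\splq}(m,\opv)$), plus the centering term. This contribution is not lower order. It is, however, absorbed into the $m^{1/4}$ term, because $\sigma^2_{\splq}(m,\opv)\le\tr_{\splq}(\opv)$ and $m\ge1$.
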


Putting together Propositions ~\ref{prop:main-feasible-bootstrap} and \ref{prop:main-bootstrap-coupling}, we can couple the Bootstrap function $B$ with a Gaussian element in $\splq$, conditioned on data $\calD$, which is formalized in the following corollary. 

\begin{corollary}[bootstrap approximation in \(\mathsf{L^2}(\mathbb Q)\)]
\label{cor:main-approximate-bootstrap}
Suppose the conditions of Propositions~\ref{prop:main-feasible-bootstrap} and~\ref{prop:main-bootstrap-coupling} hold. There exist a universal constant \(c>0\) and a random element $Z'\stackrel{{\rm d.}}{=} \mathrm{N}_{\splq}(0,\opv)\big|\calD$, conditionally coupled with $B$, such that with probability at least
\(1-\eta\) over the data,
\[
  P\!\left(
    \norm{B-Z'}_{\splq}
    \le
    \Delta_{B,\mathrm{feas}}
    +
    \Delta_{B,\mathrm{coup}}
    \,\middle|\,\mathcal D
  \right)
  \geq 1-\eta ,
\]
where \(\Delta_{B,\mathrm{feas}}\) and \(\Delta_{B,\mathrm{coup}}\) denote the
bounds in Propositions~\ref{prop:main-feasible-bootstrap}
and~\ref{prop:main-bootstrap-coupling}, respectively, evaluated with
tolerance \(\eta_{\mathrm{tol}}=c\eta^2\). 
\end{corollary}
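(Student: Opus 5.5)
The plan is to build $Z'$ by chaining two conditional couplings through the Gaussian element $G$ of Proposition~\ref{prop:main-bootstrap-coupling}, with the triangle inequality
$$\norm{B-Z'}_{\splq}\le \norm{B-B_{\oracle}}_{\splq}+\norm{B_{\oracle}-G}_{\splq}+\norm{G-Z'}_{\splq}.$$

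\textbf{Step 1 (realize $G$ as $B_{\oracle}$).} Conditionally on $\calD$, Proposition~\ref{prop:main-feasible-bootstrap} gives $B_{\oracle}\sim\mathrm{N}_{\splq}(0,\what\opv_{\oracle})$. This is exactly the conditional law of $G\stackrel{{\rm d.}}{=}\what\opv_{\oracle}^{1/2}g$. So on the same probability space as $(h_{ij})$ we may take $G=B_{\oracle}$. Formally, the pair $(\opv^{1/2}g,G)$ is a conditional coupling. Since $\splq$ is Polish, the gluing lemma (Skorokhod / Berkes--Philipp) lets us enlarge the space with an independent uniform variable. We then define $Z'$ so that $(Z',B_{\oracle})$ has the conditional joint law of $(\opv^{1/2}g,G)$. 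Then $Z'\mid\calD\sim\mathrm{N}_{\splq}(0,\opv)$, the middle term vanishes, and $\norm{B_{\oracle}-Z'}_{\splq}\stackrel{{\rm d.}}{=}\norm{G-\opv^{1/2}g}_{\splq}$ conditionally.

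\textbf{Step 2 (probability bookkeeping).} Run both propositions with tolerance $\eta_{\rm tol}=c\eta^2$.

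For Proposition~\ref{prop:main-feasible-bootstrap} there is a data event $\calE_1$ with $\PP(\calE_1)\ge1-c\eta^2$. On $\calE_1$, we have $P(\norm{B-B_{\oracle}}_{\splq}>\Delta_{B,\rm feas}\mid\calD)\le c\eta^2$.

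Proposition~\ref{prop:main-bootstrap-coupling} only gives a \emph{total} probability $6c\eta^2$ of failure, mixing data and multiplier randomness. Let $p(\calD)=P(\norm{\opv^{1/2}g-G}_{\splq}>\Delta_{B,\rm coup}\mid\calD)$. Then $\EE[p(\calD)]\le 6c\eta^2$, and Markov's inequality gives $\PP(p(\calD)>\eta/2)\le 12c\eta$.

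This conversion from a total-probability statement to a "with probability $1-\eta$ over the data, conditional probability $\ge1-\eta$" statement is the main subtle point. It is why the tolerance is squared: $c\eta^2$ splits as roughly $\eta$ for the data and $\eta$ for the conditional part.

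\textbf{Step 3 (combine).} Let $\calE_2=\{p(\calD)\le\eta/2\}$ and choose $c$ small, e.g.\ $c\le1/48$ with $\eta<1$, so that $\PP(\calE_1\cap\calE_2)\ge1-c\eta^2-12c\eta\ge1-\eta$. On $\calE_1\cap\calE_2$, a union bound gives
$$P\big(\norm{B-Z'}_{\splq}\le \Delta_{B,\rm feas}+\Delta_{B,\rm coup}\mid\calD\big)\ge1-c\eta^2-\eta/2\ge1-\eta.$$
The implicit absolute constants from the $\lesssim$ in both propositions are absorbed into the stated bound.
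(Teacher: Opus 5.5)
Your argument is correct and is essentially the paper's proof. You chain the feasible-bootstrap bound with the coupling of Proposition~\ref{prop:main-bootstrap-coupling}, transfer \(G\) onto the actual \(B_{\oracle}\) by a conditional coupling on an enlarged space, and use the squared tolerance \(c\eta^2\) so that Markov's inequality turns a total-probability bound into a data event. The paper differs only in bookkeeping: its conditional Strassen--Dudley lemma preserves just the total failure probability, so it applies Markov once to the combined conditional failure, with \(\EE A(\calD)\le 8c\eta^2\) and \(c\le 1/8\). Your transfer construction preserves the full conditional joint law, so you can keep the data event of Proposition~\ref{prop:main-feasible-bootstrap} and apply Markov only to the coupling failure.
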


By Corollary~\ref{cor:main-approximate-bootstrap}, the bootstrap approximation inequality (\ref{eq:theory-bootstrap-approximation}) holds with $ \Delta_B
  =
  \Delta_{B,\mathrm{feas}}
  +
  \Delta_{B,\mathrm{coup}}$, where \(\Delta_{B,\mathrm{feas}}\) and \(\Delta_{B,\mathrm{coup}}\) are evaluated at tolerance \(c\eta^2\).

\subsubsection{Examples}
\label{subsec:spectral-rate-consequences}

We now provide explicit error rates for the coverage probability of bootstrap confidence sets. By Propositions~\ref{prop:main-bias}-\ref{prop:main-bootstrap-coupling} and Corollary~\ref{cor:main-approximate-bootstrap}, the key roles in the error rates are played by: (i) target-source distribution mismatch; (ii) the spectral decay rate of the covariance operator $\calV$, or equivalently, of the integral operator $\calL_Q$.

For ease of exposition, we ignore the constants $\mathfrak{C}_1, \mathfrak{C}_2, \underline{\rho}$, and assume that Assumptions~\ref{assump:error} - \ref{assump:pseudo-dimn} hold with
\(s\geq 1/2\) and \(\tau_n\geq\tau_0\).  Similarly, the quantities $\gamma, W, \sigma,\bar{\rho}$, and $\Omega$, appearing in Remark~\ref{rmk:density-ratio}, will also be treated as constants. The probability parameter is fixed at $\eta\asymp n^{-1}$ so that $\log(1/\eta)\asymp \log n$. Recall that $\mu_k$ denotes the $k$-th eigenvalue of $\calL_Q$. 

\begin{corollary}[bounded density ratio]\label{cor:bd-density}
Suppose that \(\rho(X)\leq\overline\rho\) \(\mathbb P\)-almost surely, and let
\(\tau_n=C_0\) for a sufficiently large fixed constant
\(C_0\geq\max\{\overline\rho,\tau_0\}\). 
\begin{enumerate}
\item[(1)] if $\mu_k\asymp k^{-\beta}$ for some $\beta>5$ and all $k\geq 1$, by choosing the regularization parameter $\lambda_n\asymp n^{-\frac{2(\beta+1)}{3(\beta+3)}}$, we get 
$$
\big|\PP\big(f_{\ast}\notin \cs_{\alpha}(\what f)\big) -\alpha \big|\lesssim n^{-\frac{(\beta-5)(\beta+10)}{6(\beta+3)(2\beta+5)}}\log^3n;
$$
\item[(2)] if $\mu_k \asymp \exp(-c k^{\zeta})$ for some $c,\zeta>0$ and all $k\geq 1$, by choosing the regularization parameter $\lambda_n\asymp n^{-3/5}$, we get 
$$
\big|\PP\big(f_{\ast}\notin \cs_{\alpha}(\what f)\big) -\alpha \big|\lesssim n^{-1/10}(\log n)^{\frac32+\frac{1}{2\zeta}};
$$
\end{enumerate}
\end{corollary}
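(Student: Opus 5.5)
The plan is to derive the corollary from Theorem~\ref{thm:main-coverage}. I will take $\eta\asymp n^{-1}$ and bound each piece of $\Delta=\Delta_{\mathrm{bias}}+\Delta_{G,\mathrm{Bah}}+\Delta_{G,\mathrm{coup}}+\Delta_{B,\mathrm{feas}}+\Delta_{B,\mathrm{coup}}$ as an explicit power of $n$ and $\lambda_n$. Then I will choose $\lambda_n$, and the truncation indices $m$ in Propositions~\ref{prop:main-gaussian-coupling} and~\ref{prop:main-bootstrap-coupling}, to balance these terms.

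\textbf{Simplifications from the bounded ratio.} Since $\tau_n=C_0\ge\overline\rho$, we have $w_n=\rho$ and $\PP^w=\QQ$, so $\oplpw=\oplq$.
\begin{itemize}
\item Remark~\ref{rmk:density-ratio}(i) gives $\mathfrak C_1=1$.
\item Assumption~\ref{assump:lower-boundedness-of-covariance-operator} holds trivially with $\mathfrak C_2=1$. I take $\rho\ge\underline\rho$ as part of the standing assumptions.
\item The factor $\sqrt{\tau_n}$ in Theorem~\ref{thm:main-coverage} is a constant.
\item The sample-size requirements of Propositions~\ref{prop:main-bahadur} and~\ref{prop:main-feasible-bootstrap} read $n\lambda_n\gg\log(e+\pdn)$ and $n\lambda_n\pdn\gg1$. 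Both hold for the chosen $\lambda_n$.
\end{itemize}

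For the effective dimension I use the standard estimates $\pdn\asymp\lambda_n^{-1/\beta}$ in the polynomial case and $\pdn\lesssim\log^{1/\zeta}(1/\lambda_n)$ in the exponential case. With $s\ge 1/2$ and $\lambda_n\le 1$, the remaining terms become:
\begin{itemize}
\item Proposition~\ref{prop:main-bias}: $\Delta_{\mathrm{bias}}\lesssim\sqrt n\,\lambda_n$.
\item Proposition~\ref{prop:main-bahadur}: $\Delta_{G,\mathrm{Bah}}\lesssim\sqrt{\pdn\log(e+\pdn)/(n\lambda_n)}\,\log^2 n$.
\item Proposition~\ref{prop:main-feasible-bootstrap}: $\Delta_{B,\mathrm{feas}}\lesssim\pdn(n\lambda_n)^{-1/2}\log^3 n$.
\item $\bar U\lesssim\lambda_n^{-1/2}$.
\end{itemize}

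\textbf{Spectrum of $\opv$.} The coupling terms need the eigenvalues of $\opv$ in $\splq$. By Lemma~\ref{lem:LQ_half_unitary_isomorphism}, $\langle U,U'\rangle_{\splq}=\langle\oplq^{1/2}U,\oplq^{1/2}U'\rangle_{\sfH}$. Hence $\opv$ is unitarily equivalent to an operator on $\sfH$ of the form $\oplq^{1/2}(\oplq+\lambda_n\opid)^{-1}\,\mathcal S\,(\oplq+\lambda_n\opid)^{-1}\oplq^{1/2}$. Here $\mathcal S=\EE[(w\eps k_X+\cdots)\otimes(\cdots)]$, and bounded noise and bounded weights give $\mathcal S\preceq C\oplq$. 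The part coming from $f_*-\fpw$ is bounded the same way, using $\|f_*-\fpw\|_\infty\lesssim\kappa\|f_*-\fpw\|_{\sfH}$. This yields
$$\mu_j(\opv,\splq)\lesssim \frac{\mu_j^2}{(\mu_j+\lambda_n)^2}\le\min\{1,\mu_j^2/\lambda_n^2\},$$
and therefore $\tr_{\splq}(\opv)\lesssim\pdn$ and $\sigma^2_{\splq}(m,\opv)\lesssim\sum_{j>m}\min\{1,\mu_j^2\lambda_n^{-2}\}$.
\begin{itemize}
\item Polynomial case: for $m\gtrsim\lambda_n^{-1/\beta}$ this gives $\sigma^2_{\splq}(m,\opv)\lesssim m^{1-2\beta}\lambda_n^{-2}$.
\item Exponential case: the tail is negligible once $m\gtrsim\log^{1/\zeta}(n)$.
\end{itemize}

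\textbf{Optimization and expected obstacle.} Substituting into Propositions~\ref{prop:main-gaussian-coupling} and~\ref{prop:main-bootstrap-coupling} gives
$$\Delta_{G,\mathrm{coup}}\lesssim\sigma_{\splq}(m,\opv)\log^{1/2}n+\frac{m^2\log n}{\sqrt{n\lambda_n}},\qquad \Delta_{B,\mathrm{coup}}\lesssim\Bigl[m^{1/4}\Bigl(\frac{\pdn}{n\lambda_n}\Bigr)^{1/4}+\sigma_{\splq}(m,\opv)\Bigr]\log^{3/2}n.$$
The main work, and the likely obstacle, is the joint optimization over $m$ and $\lambda_n$. In the polynomial case the binding terms should be the bias $\sqrt n\lambda_n$, the coupling term $m^2/\sqrt{n\lambda_n}$, and the tail $m^{1/2-\beta}\lambda_n^{-1}$. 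I will first choose $m$ to balance $m^{1/2-\beta}\lambda_n^{-1}$ against $m^2(n\lambda_n)^{-1/2}$. Then I will balance the result against $\sqrt n\lambda_n$. The outcome should be $\lambda_n\asymp n^{-2(\beta+1)/(3(\beta+3))}$ and the stated rate, whose exponent is positive exactly when $\beta>5$. After that I will check that the $\pdn$-dependent terms $\Delta_{G,\mathrm{Bah}}$, $\Delta_{B,\mathrm{feas}}$ and the $m^{1/4}$ term are of no larger order, and collect the logarithmic factors into $\log^3 n$. In the exponential case, $m\asymp\log^{1/\zeta}n$ and $\pdn$ polylogarithmic leave $\sqrt n\lambda_n$ and $(n\lambda_n)^{-1/4}$ up to logarithms. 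Balancing these gives $\lambda_n\asymp n^{-3/5}$ and the rate $n^{-1/10}$ with the stated logarithmic power. If the exponents do not match exactly, I will recheck which of the five terms binds; I expect the fourth-root term in $\Delta_{B,\mathrm{coup}}$ to be the one that is easiest to mis-weight.
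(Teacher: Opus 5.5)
\textbf{Where the proposal stands.} Your ingredients are sound, and in one place sharper than the paper's. Because \(\oplpw=\oplq\), the bound \(\mu_j(\opv,\splq)\lesssim\mu_j^2/(\mu_j+\lambda_n)^2\) is correct. The paper only uses \(\opv\preceq_{\splq}C\lambda_n^{-1}\oplq\) (Lemma~\ref{lem:tail-sum-sigma-bound}), i.e.\ \(\sigma_{\splq}(m,\opv)\lesssim\lambda_n^{-1/2}m^{-(\beta-1)/2}\). Your exponential-decay argument is the paper's: balance \(\sqrt n\lambda_n\) against \((n\lambda_n)^{-1/4}\), with \(m\asymp\log^{1/\zeta}n\) and \(\pdn\lesssim\log^{1/\zeta}n\) giving the factor \((\log n)^{3/2+1/(2\zeta)}\). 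The gap is in the polynomial case, in the step you call the main work. The balancing you describe does not produce the statement, and the binding terms are misidentified.

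\textbf{What goes wrong in the polynomial case.} With your own tail bound, equating \(m^{1/2-\beta}\lambda_n^{-1}\) and \(m^2(n\lambda_n)^{-1/2}\) gives \(m\asymp(n/\lambda_n)^{1/(2\beta+3)}\). Balancing against \(\sqrt n\lambda_n\) then gives \(\lambda_n\asymp n^{-(4\beta+2)/(6\beta+13)}\) with exponent \((2\beta-9)/\{2(6\beta+13)\}\). That is neither the prescribed \(n^{-2(\beta+1)/(3(\beta+3))}\) nor a threshold at \(\beta=5\). The prescribed \(\lambda_n\) comes from balancing \(\sqrt n\lambda_n\) against the paper's cruder coupling bound \(\lambda_n^{-1/2}n^{-(\beta-1)/(2(\beta+3))}\).

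More importantly, the stated exponent \(\widetilde r_\beta=\frac{(\beta-5)(\beta+10)}{6(\beta+3)(2\beta+5)}\) is not the output of any balance. It equals the bias exponent \(\frac{\beta-5}{6(\beta+3)}\) times \(\frac{\beta+10}{2\beta+5}<1\). This is a deliberate non-piecewise loss that pays for the fourth-root term of \(\Delta_{B,\mathrm{coup}}\), and that term binds for large \(\beta\). With your inputs, the best bound Proposition~\ref{prop:main-bootstrap-coupling} gives for it is, for every \(m\ge1\), at least of order \((\pdn/(n\lambda_n))^{1/4}\asymp(n\lambda_n^{1+1/\beta})^{-1/4}\). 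At the prescribed \(\lambda_n\) its exponent tends to \(-1/12\) as \(\beta\to\infty\), while the bias exponent tends to \(-1/6\). In the paper's accounting \(\Delta_{B,\mathrm{coup}}\) overtakes the bias once \(\beta>6+\sqrt{34}\). So your claim that the binding terms are the bias, the Gaussian coupling and the tail is false on part of the range. No rebalancing among those three can yield \(\widetilde r_\beta\).

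\textbf{What the proof needs instead.} This is a verification, not an optimization. Fix \(\lambda_n=n^{-a}\) with \(a=\frac{2(\beta+1)}{3(\beta+3)}\). Write the exponents of \(\Delta_{\mathrm{bias}}\), \(\Delta_{G,\mathrm{coup}}\), \(\Delta_{B,\mathrm{feas}}\) (which absorbs \(\Delta_{G,\mathrm{Bah}}\)) and \(\Delta_{B,\mathrm{coup}}\), and check that each is at most \(-\widetilde r_\beta\). The paper does this by showing each gap is an explicitly nonnegative rational function of \(x=\beta-5\). Since \(\min\{1,\mu_j^2/\lambda_n^2\}\le\mu_j/\lambda_n\), your two coupling bounds are no larger than the paper's at every \(m\). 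So that check carries over to your estimates; your sharper tail could even support a better rate, but that is a different statement.

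Finally, Propositions~\ref{prop:main-bahadur} and~\ref{prop:main-feasible-bootstrap} are stated under \(\tau_n\to\infty\). With \(\tau_n=C_0\) you need to note, as the paper does, that their nonasymptotic bounds remain valid for a fixed weight envelope.
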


\begin{corollary}[sub-exponential tail]\label{cor:subexp-density}
Suppose that the density ratio satisfies the sub-exponential tail condition in Definition~\ref{def:bernstein-moment-condition-for-weight}.
\begin{enumerate}
\item[(1)] if $\mu_k\asymp k^{-\beta}$ for some $\beta>5$ and all $k\geq 1$, by choosing the truncation level
$\tau_n\asymp n^{\frac{\beta-5}{9(\beta+3)(2\beta+5)}}$ and the regularization parameter
$\lambda_n\asymp n^{-\frac{2(\beta+1)}{3(\beta+3)}}$, we get
$$
\big|\PP\big(f_{\ast}\notin \cs_{\alpha}(\what f)\big) -\alpha \big|\lesssim n^{-\frac{(\beta-5)(\beta+9)}{6(\beta+3)(2\beta+5)}}\log^3n;
$$
\item[(2)] if $\mu_k \asymp \exp(-c k^{\zeta})$ for some $c,\zeta>0$ and all $k\geq 1$, by choosing the truncation level
$\tau_n\asymp n^{1/150}$ and the regularization parameter
$\lambda_n\asymp n^{-3/5}$, we get
$$
\big|\PP\big(f_{\ast}\notin \cs_{\alpha}(\what f)\big) -\alpha \big|\lesssim n^{-\frac9{100}}\log^3n;
$$
\end{enumerate}
\end{corollary}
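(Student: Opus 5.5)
The plan is to obtain the corollary from Theorem~\ref{thm:main-coverage}. I will verify its hypotheses for the stated $(\tau_n,\lambda_n)$, then bound each of $\Delta_{\mathrm{bias}}$, $\Delta_{G,\mathrm{Bah}}$, $\Delta_{G,\mathrm{coup}}$, $\Delta_{B,\mathrm{feas}}$ and $\Delta_{B,\mathrm{coup}}$ explicitly, with $\eta\asymp n^{-1}$. The miscoverage is then bounded by $\eta+\sqrt{\tau_n}\,\Delta$.

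\emph{Verifying the assumptions.} Under $\mu_k\asymp k^{-\beta}$, Assumption~\ref{assump:pseudo-dimn} holds with $d=1/\beta$, since $\pd\asymp\lambda^{-1/\beta}$. Under $\mu_k\asymp e^{-ck^\zeta}$ it holds with any small $d>0$, since $\pd\asymp\log^{1/\zeta}(1/\lambda)$. For Assumption~\ref{assump:covariate-shift-estimation} I use Remark~\ref{rmk:density-ratio}(ii). It gives $\mathfrak C_1\le\sqrt2$ as soon as $\tau_n\gtrsim\lambda_n^{-(1+d)\gamma/(m_0-1)}$. Because the Bernstein condition holds for every $m\ge2$, I can take the integer $m_0$ large, depending only on $\gamma$, $\beta$ and the target exponent of $\tau_n$. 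Then the polynomially small requirement $\lambda_n^{-(1+d)\gamma/(m_0-1)}$ falls below $n^{(\beta-5)/(9(\beta+3)(2\beta+5))}$ (respectively $n^{1/150}$), and the constant $m_0!W^{m_0-2}\sigma^2$ is absorbed. The remaining sample-size conditions in Propositions~\ref{prop:main-bahadur} and~\ref{prop:main-feasible-bootstrap} then reduce to $n\lambda_n\gg\tau_n\log n$, which is immediate.

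\emph{The explicit rates.} With $s\ge1/2$, Proposition~\ref{prop:main-bias} gives $\Delta_{\mathrm{bias}}\lesssim\sqrt n\lambda_n$. For $\lambda_n\asymp n^{-2(\beta+1)/(3(\beta+3))}$ this equals $n^{-(\beta-5)/(6(\beta+3))}$, which explains the requirement $\beta>5$. For the two coupling terms I need the spectral tail of $\opv$ in $\splq$. Using $\tau_n$ as a bound on $w_n$, Assumption~\ref{assump:error}, and the isometry of Lemma~\ref{lem:LQ_half_unitary_isomorphism} together with Assumption~\ref{assump:covariate-shift-estimation}, I will bound $\opv$ in Loewner order in $\splq$ by $\tau_n\lambda_n^{-1}\oplq$ up to constants. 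This gives $\mu_j(\opv,\splq)\lesssim\tau_n\min\{1,\mu_j/\lambda_n\}$. It follows that $\sigma^2_{\splq}(m,\opv)\lesssim\tau_n\lambda_n^{-1}m^{1-\beta}$ and $\tr_{\splq}(\opv)\lesssim\tau_n\pdn$. I then insert $\bar U\asymp\tau_n\lambda_n^{-1/2}$ and optimize over $m$ in Propositions~\ref{prop:main-gaussian-coupling} and~\ref{prop:main-bootstrap-coupling}. The Bahadur and feasibility terms are explicit, of order $\tau_n\pdn/\sqrt{n\lambda_n}$ up to logarithms, and behave like $n^{-(\beta-5)/(6(\beta+3))}$ times a power of $\tau_n$. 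Balancing all terms at the given $\lambda_n$ should give the bounded-ratio rate of Corollary~\ref{cor:bd-density}, multiplied by a power of $\tau_n$. The prescribed $\tau_n$ inflates the exponent only from $(\beta+10)$ to $(\beta+9)$ in the numerator. In the exponential-decay case, $\pdn$ and the tail are logarithmic, so $m$ is chosen polylogarithmic. Then $\lambda_n\asymp n^{-3/5}$ gives $n^{-1/10}$ before truncation, and $\tau_n=n^{1/150}$ costs a factor $n^{1/100}$, giving $n^{-9/100}$.

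\emph{Main obstacle.} The delicate step is the bootstrap coupling term $m^{1/4}(\bar U^2\tr(\opv)/n)^{1/4}$. Its dependence on $\tau_n$ is the worst, roughly $\tau_n^{3/4}$, and after the extra $\sqrt{\tau_n}$ from Theorem~\ref{thm:main-coverage} this is exactly what dictates the admissible truncation exponent. I would first track every power of $\tau_n$ separately, optimize $m$ with $\tau_n$ kept symbolic, and only then substitute $\tau_n$. This confirms that the $\tau_n$ penalty multiplies the bounded-ratio rate by $n^{(\beta-5)/(6(\beta+3)(2\beta+5))}$, matching the stated exponent. A secondary check is that Lemma~\ref{lem:density-upper-bound-for-gaussian-in-splq} applies unchanged, since its spectral hypotheses are exactly those assumed here.
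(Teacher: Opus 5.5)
Your plan is essentially the paper's proof: verify Assumption~\ref{assump:covariate-shift-estimation} via Lemma~\ref{lem:bound-a-under-infinite-moment} with a large fixed $m_0$, put $\tau_n=n^{\delta}$ back into the bounded-ratio error envelope at the same $\lambda_n$, and pay one extra $\sqrt{\tau_n}$ from Theorem~\ref{thm:main-coverage}. One correction to your bookkeeping: the coupling term is not the one most sensitive to $\tau_n$, since $\Delta_{G,\mathrm{Bah}}$ and $\Delta_{B,\mathrm{feas}}$ are linear in $\tau_n$, so the stated exponents $(\beta+9)$ and $9/100$ are exactly the uniform worst-case penalty $\tau_n^{3/2}=n^{3\delta/2}$, not the coupling term's roughly $\tau_n^{3/4}\cdot\tau_n^{1/2}=n^{5\delta/4}$ --- and in part~(2) you need that leftover $n^{\delta/4}$ slack on the active coupling term to absorb its $(\log n)^{1/(2\zeta)}$ factor and end with $\log^3 n$ for every $\zeta>0$.
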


If the density ratio $\rho(X)$ has heavy tails in $\PP$, our methods can still provide valid inference as long as the eigenvalues of $\calL_{Q}$ decay sufficiently fast and the density ratio has a bounded higher moment. For ease of exposition, we only present the theoretical guarantees for the case when $\calL_Q$ has exponentially decaying eigenvalues. If $\calL_Q$ has polynomial-decay eigenvalues, the explicit error rate for coverage probability is complicated (see Appendix~\ref{app:moment-density-poly} for more details).

\begin{corollary}[bounded $\theta$-th moment]\label{cor:moment-density}
Suppose that
\(\EE_{\PP}[\rho(X)^\theta]\leq\Omega_\theta<\infty\) for some
\(\theta>7\) as in Definition~\ref{def:second-moment-condition-for-weight}. 
If \(\mu_k\asymp\exp(-c k^\zeta)\) for some \(c,\zeta>0\)
and all \(k\geq1\), then, by choosing
\[
 \lambda\asymp
 n^{-\frac{3(\theta-1)}{5\theta-2}}
 \quad\text{and}\quad
 \tau_n
 \asymp
 n^{\frac{3}{5\theta-2}}
 (\log n)^{\frac{1}{\zeta(\theta-1)}},
\]
we obtain
\[
\big|\PP\big(f_{\ast}\notin \cs_{\alpha}(\what f)\big)-\alpha\big|
\lesssim
n^{-\frac{\theta-7}{2(5\theta-2)}}
(\log n)^{\frac32+\frac{1}{2\zeta}
+\frac{5}{4\zeta(\theta-1)}}.
\]
\end{corollary}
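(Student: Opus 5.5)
The plan is to assemble the general coverage bound of Theorem~\ref{thm:main-coverage}, $|\PP(f_*\notin\cs_\alpha(\what f))-\alpha|\lesssim \eta+\sqrt{\tau_n}\,\Delta$, with $\Delta=\Delta_{\rm bias}+\Delta_{G,{\rm Bah}}+\Delta_{G,{\rm coup}}+\Delta_{B,{\rm feas}}+\Delta_{B,{\rm coup}}$. Each piece will be made explicit under $\mu_k\asymp e^{-ck^\zeta}$, $\eta\asymp n^{-1}$, and the stated choices of $\lambda$ and $\tau_n$. After that, only bookkeeping of exponents remains.

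First I need to verify the hypotheses of the propositions. For Assumption~\ref{assump:covariate-shift-estimation} I will use Remark~\ref{rmk:density-ratio}(iii), which requires $\tau_n^{\theta-1}\gtrsim \lambda^{-(1+d)}$. Under exponential decay the effective dimension satisfies $\pd\lesssim(\log(1/\lambda))^{1/\zeta}$, so Assumption~\ref{assump:pseudo-dimn} holds for every $d>0$ with a logarithmic constant. I will therefore replace the remark's condition by its sharp version, $\tau_n^{\theta-1}\gtrsim \lambda^{-1}\pdn$. With $\lambda^{-1}=n^{3(\theta-1)/(5\theta-2)}$ and $\pdn\asymp(\log n)^{1/\zeta}$, this is exactly the stated $\tau_n\asymp n^{3/(5\theta-2)}(\log n)^{1/(\zeta(\theta-1))}$. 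The sample-size conditions of Propositions~\ref{prop:main-bahadur} and \ref{prop:main-feasible-bootstrap} reduce to $n\lambda\gg\tau_n\log n$, which holds since $3(\theta-1)+3<5\theta-2$ for $\theta>7$. The tolerances $\eta$ and $c\eta^2$ only produce $\log n$ factors.

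Next come the individual terms. With $s\ge1/2$, the bias term is $\Delta_{\rm bias}\lesssim\sqrt n\,\lambda$. The dominant stochastic term is $\Delta_{B,{\rm feas}}\asymp\tau_n\pdn(n\lambda)^{-1/2}\log^3 n$, which is at least as large as $\Delta_{G,{\rm Bah}}\asymp\tau_n\sqrt{\pdn\log\pdn/(n\lambda)}\log^2 n$. For the coupling terms I need the spectrum of $\opv$ in $\splq$. Using Assumption~\ref{assump:covariate-shift-estimation} together with $w_n\le\tau_n$ and bounded noise, $\opv\preceq \tau_n\mathfrak C_1^2\,C\,(\oplq+\lambda)^{-1}\oplq$ in the transported sense via Lemma~\ref{lem:LQ_half_unitary_isomorphism}. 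This gives $\mu_j(\opv)\lesssim\tau_n\min\{1,\mu_j/\lambda\}$, and hence $\sigma^2(m,\opv)\lesssim\tau_n\lambda^{-1}e^{-cm^\zeta}$. Taking $m\asymp(\log n)^{1/\zeta}$ makes this tail polynomially negligible. The coupling bounds then become $\Delta_{G,{\rm coup}}\lesssim m^2\bar U\log n/\sqrt n$ and $\Delta_{B,{\rm coup}}\lesssim m^{1/4}(\bar U^2\tr(\opv)/n)^{1/4}$, with $\bar U\asymp\tau_n\lambda^{-1/2}$ and $\tr(\opv)\lesssim\tau_n\pdn$. The bootstrap coupling term is therefore of order $\tau_n^{3/4}(n\lambda)^{-1/4}$ times logarithms.

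The main obstacle is this $(n\lambda)^{-1/4}$ term. After multiplication by $\sqrt{\tau_n}$ it dominates $\tau_n^{3/2}(n\lambda)^{-1/2}$, and it must be balanced against the bias $\sqrt{\tau_n}\sqrt n\,\lambda$. Writing $\lambda=n^{-a}$ and $\tau_n=n^{t}$, the two constraints are the bias exponent $t/2+1/2-a$ and the coupling exponent $5t/4-(1-a)/4$, together with the truncation constraint $t(\theta-1)=a$. Solving these jointly is the step I expect to be hardest to get exactly right. In particular, the powers of $\tau_n$ entering via $\bar U$, the trace, and the density bound $C_1\tau_n^{1/2}$ from Lemma~\ref{lem:density-upper-bound-for-gaussian-in-splq} all have to be tracked precisely. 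I would first check that the claimed choice gives $a=3(\theta-1)/(5\theta-2)$ and $t=3/(5\theta-2)$. With these values the bias exponent is $\frac{3}{2(5\theta-2)}+\frac12-\frac{3(\theta-1)}{5\theta-2}=-\frac{\theta-7}{2(5\theta-2)}$, which matches the claimed rate and explains the requirement $\theta>7$. I would then verify that the coupling exponent is no larger, that is, that the balancing is attained by the bias and the bootstrap terms together.

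The logarithmic exponent $\frac32+\frac1{2\zeta}+\frac{5}{4\zeta(\theta-1)}$ should come from three sources: the $\log^{3/2}$ factor in $\Delta_{B,{\rm coup}}$, the factor $m^{1/4}\pdn^{1/4}$ or $\pdn^{1/2}$, and the $(\log n)^{1/(\zeta(\theta-1))}$ factor in $\tau_n$ raised to the power $5/4$. Together these give the stated polylogarithmic factor, and Theorem~\ref{thm:main-coverage} with $\eta\asymp n^{-1}$ then completes the argument.
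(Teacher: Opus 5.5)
Your proposal is correct and follows the paper's proof essentially step for step. Like the paper, you verify compatibility through the sharpened moment condition $\tau_n^{\theta-1}\gtrsim\lambda^{-1}\pd$ (Lemma~\ref{lem:lq-lw-bound-under-vartheta-moment}), multiply the envelope $\sqrt n\lambda+\tau_n/\sqrt{n\lambda}+(\tau_n^3/(n\lambda))^{1/4}$ by $\sqrt{\tau_n}$, and balance the bias against the bootstrap-coupling term. The check you deferred also goes through: at $a=3(\theta-1)/(5\theta-2)$ and $t=3/(5\theta-2)$, the coupling exponent $5t/4-(1-a)/4$ equals $-(\theta-7)/(2(5\theta-2))$ as well, while the $\tau_n^{3/2}(n\lambda)^{-1/2}$ term is smaller by a factor $n^{-(\theta-1)/(2(5\theta-2))}$, which absorbs its logarithms.
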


\subsection{Hypothesis tests and power analysis}
\label{sec:power-analysis}

For a prescribed function
\(f_0\), we reject the null hypothesis in \eqref{eq:testing-problem} whenever
\(f_0\notin  \cs_{\alpha}(\what f)\), or equivalently whenever
\(\|\what f-f_0\|_{\splq}>\hat c_{\alpha, n}\).  The coverage guarantees in Sections~\ref{subsec:approximation-framework}-\ref{sec:theory-explicit} control the size of the inverted test under the null
hypothesis.

 We now provide the power guarantee under fixed
alternatives. The key observation is that, under \(H_1\), the target-domain
separation
$
    \delta:=\|f_*-f_0\|_{\splq}
$
is strictly positive. Therefore, the inverted test is powerful whenever the estimation
error and the scaled critical value are both negligible compared with this
fixed separation. 

\begin{theorem}[power guarantee]\label{thm:power}
Suppose \(H_1\) holds with a fixed separation \(\delta=\|f_*-f_0\|_{\splq}>0\) in the target domain. If
$
    \|\what f-f_*\|_{\splq}
    +
    \hat{c}_{\alpha, n}
    \xrightarrow{\PP}0$  as  $n\to\infty$,  
then 
\begin{equation*}
    \PP\!\big(\|\what f-f_0\|_{\splq}>\hat c_{\alpha,n}\big)
    \to 1, \quad \textrm{ as } n\to\infty.
\end{equation*}
\end{theorem}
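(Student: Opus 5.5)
The plan is a direct triangle-inequality argument. Write $A_n:=\|\what f-f_*\|_{\splq}+\hat c_{\alpha,n}$, which by hypothesis converges to $0$ in probability. Since $\what f-f_0=(f_*-f_0)+(\what f-f_*)$, the reverse triangle inequality in $\splq$ gives
\[
\|\what f-f_0\|_{\splq}\geq \|f_*-f_0\|_{\splq}-\|\what f-f_*\|_{\splq}=\delta-\|\what f-f_*\|_{\splq}.
\]
Consequently the event $\{\|\what f-f_0\|_{\splq}\le \hat c_{\alpha,n}\}$ is contained in $\{\delta-\|\what f-f_*\|_{\splq}\le \hat c_{\alpha,n}\}=\{A_n\ge\delta\}$.

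The key step is then simply
\[
\PP\big(\|\what f-f_0\|_{\splq}\le \hat c_{\alpha,n}\big)\le \PP(A_n\ge \delta)\to0,
\]
where the limit uses convergence in probability with the fixed threshold $\delta>0$; fixedness of $\delta$ under the alternative is exactly what is needed here. Taking complements yields $\PP(\|\what f-f_0\|_{\splq}>\hat c_{\alpha,n})\to1$. Measurability of $\hat c_{\alpha,n}$ (a conditional quantile of $\|B\|_{\splq}/\sqrt n$, a measurable function of $\mathcal D$) and of the norms is routine and only needs to be remarked on.

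There is no genuine obstacle, since the statement assumes the convergence of $A_n$ rather than deriving it. The only point that deserves care is noting that $\hat c_{\alpha,n}\ge0$, so that $A_n\to0$ in probability controls both summands, and that no uniformity over alternatives is claimed. One could optionally add a remark on sufficient conditions, e.g.\ $\hat c_{\alpha,n}=O_\PP(n^{-1/2}\sqrt{\tr_{\splq}(\hat\opv)})$ together with the error bounds of Propositions~\ref{prop:main-bias}--\ref{prop:main-bahadur}, but that is not required for the stated theorem.
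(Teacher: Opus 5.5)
Your proposal is correct and follows the same route as the paper. The reverse triangle inequality in \(\splq\) places the non-rejection event inside \(\{\delta\le \|\what f-f_*\|_{\splq}+\hat c_{\alpha,n}\}\), and this event has vanishing probability because \(\delta>0\) is fixed. Your use of the non-strict event \(\{\|\what f-f_0\|_{\splq}\le\hat c_{\alpha,n}\}\) is slightly tidier than the paper's version. The paper bounds the strict event \(\{\|\what f-f_0\|_{\splq}<\hat c_{\alpha,n}\}\) and then passes to the strict complement without addressing the boundary case separately.
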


\section{Numerical Experiments}
\label{sec:numerical-experiments}

\subsection{Simulation design}

We numerically examine the finite-sample properties of our
proposed method from three perspectives: (i) the calibration of bootstrap radius; (ii) the empirical
coverage of the resulting confidence ball; (iii) the power of the induced test
under fixed alternatives. The source covariates are generated from
\(\mathcal{N}(0, 4)\), while the target distribution is
\(\mathcal{N}(1,1)\). The true regression function is $f_*(x)=\exp(-0.5x^2)$, and the independent noise $\eps_i\sim \mathcal{N}(0,0.09)$. We use the Gaussian radial kernel to examine the performance of bootstrap radius calibration and coverage guarantee. For power evaluation, we compare three kernel functions with a fixed length scale \(\ell=1\). All regularization and truncation parameters are selected by validation as described below. These kernels are
\[
k_{\mathrm{G}}(x,x')=\exp(-r^2/2),\qquad
k_{3/2}(x,x')=(1+\sqrt{3}r)\exp(-\sqrt{3}r),\qquad
k_{1/2}(x,x')=\exp(-r),
\]
with $r=|x-x'|$, where the latter two are the Matérn kernels with smoothness parameters
\(\nu=3/2\) and \(\nu=1/2\), respectively. Note that $k_{1/2}(\cdot, \cdot)$ is also known as
the Laplacian kernel. All three kernels satisfy \(k(x,x)=1\).

The target-domain norm \(\|\cdot\|_{\splq}\) is approximated by Monte Carlo
integration using \(1200\) target samples. The IWKRR estimator uses normalized
importance weights and truncated weights \(w_n(x)=\min\{\rho(x),\tau_n\}\).

\emph{Tuning the regularization and truncation parameters.} For calibration, coverage at each sample size, and power, we independently draw a source-labelled tuning sample \(D^{\mathrm{tune}}\) and a source-labelled validation sample \(D^{\mathrm{val}}\), each of size \(n\), where \(n=300\) for calibration and power and \(n\in\{200,300\}\) for coverage. The parameters are chosen from
\[
\begin{aligned}
\lambda&\in n^{-0.6}
\{1/128,1/64,1/32,1/16,1/8,1/4,1/2,1,2,4\},\\
\tau&\in\left(\frac{n}{300}\right)^{0.06}
\{0.0625,0.125,0.25,0.5,0.75,1,1.25,1.5,1.75,2,2.5\}.
\end{aligned}
\]
For each pair $(\lambda, \tau)$, the candidate model is fitted on \(D^{\mathrm{tune}}\), whereas its target prediction
risk is estimated on \(D^{\mathrm{val}}\) using the self-normalized,
untruncated importance-weighted criterion
\[
\widehat R_{k}^{\mathrm{val}}(\lambda,\tau)
=
\frac{\sum_{i=1}^{n}\rho(X_{i}^{\mathrm{val}})
	\{Y_{i}^{\mathrm{val}}
	-\widehat f_{k,\lambda,\tau}(X_{i}^{\mathrm{val}})\}^{2}}
{\sum_{i=1}^{n}\rho(X_{i}^{\mathrm{val}})}.
\]
If numerical ties occur, we prefer larger $\lambda$ and smaller $\tau$. The selected pair is held fixed across all subsequent Monte Carlo repetitions: one pair is used for calibration, one pair for each \(n\) in coverage, and one pair for each kernel in power across all \(20\) independent experiments.

\subsection{Bootstrap calibration}

We use \(1000\) independent draws to simulate the bootstrap distribution.
Figure~\ref{fig:true-vs-bootstrap} compares the conditional bootstrap
distribution with the empirical distribution of the true scaled target error.
The true distribution is estimated using Monte Carlo simulation from \(1000\) independent training samples, each with a sample size \(n=300\). The bootstrap distribution is computed conditional on one representative fitted data set.

\begin{table}[htbp]
	\centering
	\caption{Bootstrap estimates of upper quantiles at $\alpha\in\{0.10, 0.05\}$ and \(n=300\).}
	\label{tab:bootstrap-calibration}
	\begin{tabular}{cccc}
\toprule
$\alpha$ & Sampling quantile & Conditional bootstrap quantile & Relative error \\
\midrule
0.10 & 1.109 & 1.071 & -0.034 \\
0.05 & 1.222 & 1.162 & -0.049 \\
\bottomrule
\end{tabular}

\end{table}

Table~\ref{tab:bootstrap-calibration} reports the calibration summary, whereas Figure~\ref{fig:true-vs-bootstrap} provides a smoother display of the corresponding empirical and conditional bootstrap distributions. The table's bootstrap quantiles are close to the empirical sampling quantiles at both levels, with absolute relative errors between three and five percent.

\begin{figure}[htbp]
	\centering
	\IfFileExists{simulation/numerical_experiments/figures/bootstrap_calibration.pdf}{%
		\includegraphics[width=0.72\textwidth]{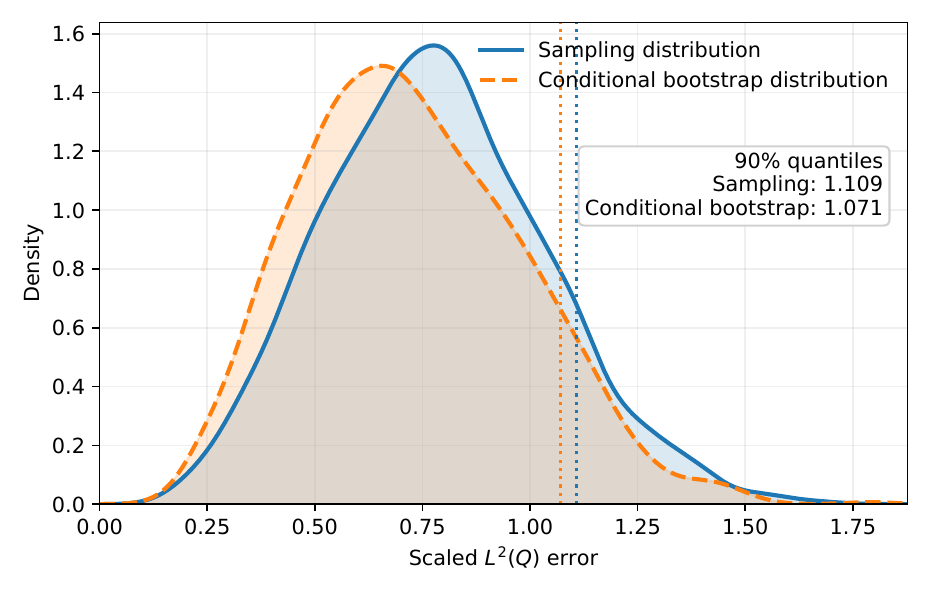}%
	}{%
		\fbox{\parbox{0.85\textwidth}{\centering Missing figure file:
				\texttt{simulation/numerical\_experiments/figures/bootstrap\_calibration.pdf}}}%
	}
	\caption{Bootstrap distribution of the $\splq$ error. The figure compares
		the true sampling distribution of
		\(\sqrt n\|\what f-f_*\|_{\splq}\) with the conditional bootstrap distribution
		of \(\|B\|_{\splq}\).}
	\label{fig:true-vs-bootstrap}
\end{figure}

\subsection{Empirical coverage and size}

We evaluate the empirical coverage of the feasible confidence ball
\(\cs_{\alpha}(\what f)\) with the critical value $\hat c_{\alpha, n}$ estimated from $250$ multiplier-bootstrap repetitions. Under the null hypothesis
\(f_0=f_*\), we record the relative frequency of rejecting the null hypothesis using the inverted test
$
    \mathbf{1}\{
    \|\what f-f_0\|_{\splq}>\hat c_{\alpha,n}
    \}.
$
For each sample size, the rejection probability is estimated from \(1000\)
independent repetitions. Table~\ref{tab:coverage-size} shows modest finite-sample over-rejection and under-coverage at \(n=200\), whereas at \(n=300\) the empirical size and coverage are close to their nominal levels.

\begin{table}[htbp]
\centering
\caption{Empirical size of the inverted test and coverage of the confidence
ball under \(H_0:f_0=f_*\).}
\label{tab:coverage-size}
\begin{tabular}{rrrr}
\toprule
$n$ & $\alpha$ & Empirical size & Empirical coverage \\
\midrule
200 & 0.10 & 0.138 & 0.862 \\
200 & 0.05 & 0.084 & 0.916 \\
300 & 0.10 & 0.099 & 0.901 \\
300 & 0.05 & 0.054 & 0.946 \\
\bottomrule
\end{tabular}

\end{table}

\subsection{Power performance}

We evaluate the empirical power of the induced test under fixed alternatives.  Let 
$
    f_{0,\delta}(x)
    =
    f_*(x)+\delta\, h(x)$ with
$
    h(x)=g(x)\|g\|_{\splq}^{-1},
$
where $g(x):=k_{\mathrm G}(x,1)-k_{\mathrm G}(x,0)=\exp\{-(x-1)^2/2\}-\exp(-x^2/2)$ and $\|g\|_{\splq}=3^{-1/4}(1-e^{-1/3})^{1/2}$.   
Since \(g\) is a difference of Gaussian kernel sections and is a Schwartz function, it belongs to each of the three RKHSs considered.   Similarly,   \(f_{0,\delta}\) belongs to all three RKHSs and
\(\|f_{0,\delta}-f_*\|_{\splq}=\delta\). 

The significance level is fixed at \(\alpha=0.10\) with \(n=300\).  The critical value $\hat c_{\alpha, n}$ is estimated using 300 bootstrap draws.  
For each separation $\delta$ and kernel function, the empirical rejection probability is estimated from \(100\) independent test repetitions, whose empirical variation is evaluated via \(20\) independent experiments. The results are displayed in Figure~\ref{fig:power-curve}.  
The rejection probabilities
increase with the target-domain separation for all three scenarios,  reaching
approximately one at \(\delta=0.12\). At \(\delta=0\), the average rejection
probabilities for the Gaussian, Matérn \(\nu=3/2\), and Matérn \(\nu=1/2\)
scenarios are \(0.1030\), \(0.0985\), and \(0.1075\), respectively. These values are empirical sizes, which are close to the nominal level \(0.10\).

\begin{figure}[htbp]
\centering
\IfFileExists{simulation/numerical_experiments/figures/power_curve.pdf}{%
    \includegraphics[width=0.98\textwidth]{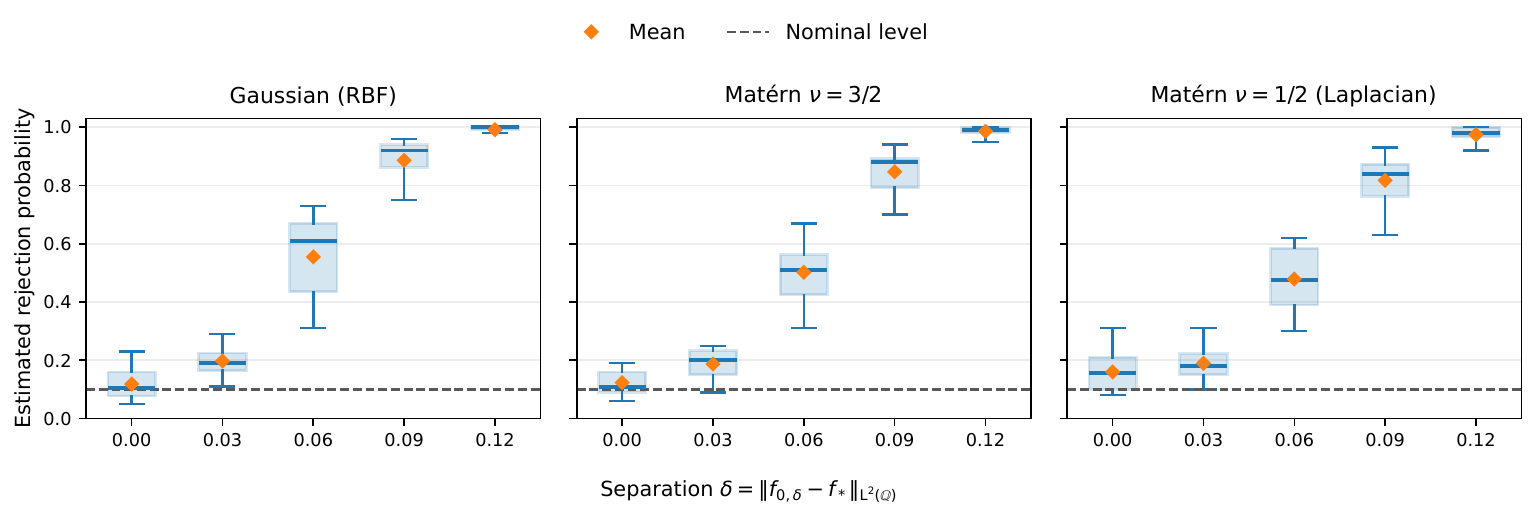}%
}{%
    \fbox{\parbox{0.85\textwidth}{\centering Missing figure file:
    \texttt{simulation/numerical\_experiments/figures/power\_curve.pdf}}}%
}
\caption{
Empirical power under fixed
alternatives \(\|f_{0,\delta}-f_*\|_{\splq}=\delta\) for three kernel functions.
 From left to right, the panels correspond to the Gaussian RBF, Matérn
\(\nu=3/2\), and Matérn \(\nu=1/2\) (Laplacian) kernels, all with length scale
\(\ell=1\). The parameters \((\lambda,\tau)\) are tuned
using independent source-labelled tuning and validation samples of size \(300\); the selected pair for each kernel is held fixed across all \(20\) simulation experiments. The rejection critical values are estimated using $300$ multiplier-bootstrap draws.
 The boxplots summarize variation across the \(20\) experiments, and the diamonds mark the corresponding means. The dashed horizontal line marks the nominal level \(\alpha=0.10\).
}
\label{fig:power-curve}
\end{figure}

\section{Application to Survey-Weighted Household Finance}
\label{sec:scf-application}

\subsection{Data and survey weights}
\label{subsec:scf-data-design}

The Survey of Consumer Finances (SCF) is a triennial cross-sectional survey of U.S. families that records
income, balance sheets, credit use, and demographic characteristics. Our experiment focuses on the implicate 1 of 2022 public summary extract released by \cite{federalreserve2023scf}, which contains $4,595$ records representing distinct families. Because the SCF records are not randomly sampled, each observation comes with a survey weight indicating how many U.S. families that observation represents. These survey weights are  normalized to have mean one, denoted by $\rho_i$ satisfying $n^{-1}\sum_{i=1}^n\rho_i=1$, which help convert averages over the unweighted records into averages for the full U.S. family population represented by the SCF.  We regard the unweighted records as a sample from the source distribution \(\PP\) and the survey-weighted records as a sample from the target distribution $\QQ$, under which the financial conclusions are sought.

Figure~\ref{fig:scf-weighting-summary} illustrates the empirical distribution of the normalized weights and the effect of reweighting on two specific covariates. Two measures are provided in panel A describing the variation of normalized weights, where the coefficient of variation ${\rm CV}:=\big(\sum_{i=1}^n(\rho_i-1)^2/n\big)^{1/2}$ and effective size size ${\rm ESS}=n^2/\big(\sum_{i=1}^{n}\rho_i^2\big)$. More summary statistics are reported in Appendix~\ref{app:scf-data-description}.

Our experiments primarily study four questions: (i) how does the recorded characteristics affect the household net worth; (ii) is the probability of holding stock equity nondecreasing in usual income; (iii) do households allocate the same share of wealth to stock equity; (iv) does the late-payment probability change across years? Due to page constraints, we present the two studies in this section and other results in the Appendix.

\begin{figure}[htbp]
\centering
\IfFileExists{applications/scf_household_finance/figures/scf_weighting_overview.pdf}{%
    \includegraphics[
        width=0.98\textwidth,
        trim=72bp 459bp 6bp 0bp,
        clip
    ]{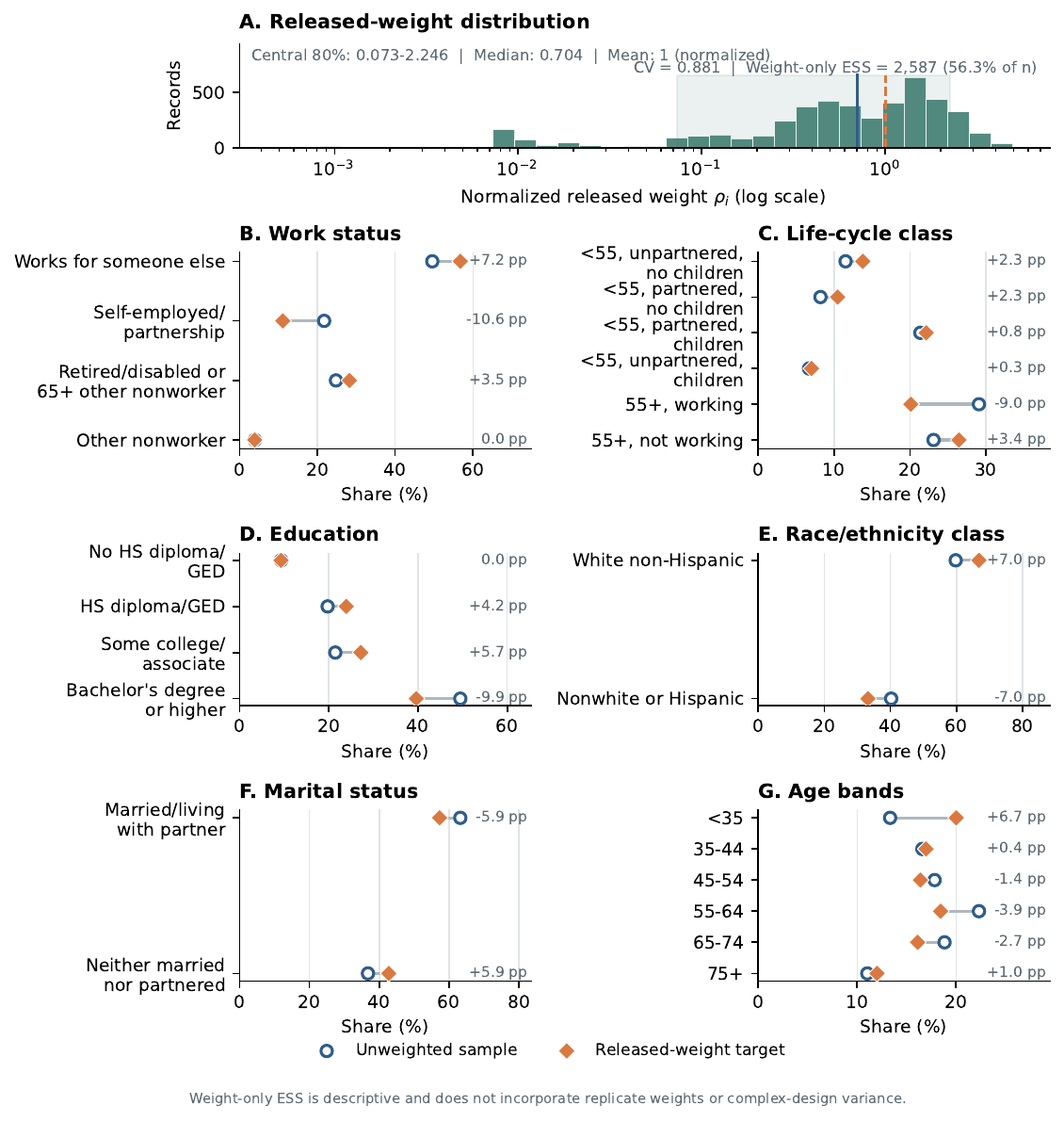}%
    \par\vspace{0.4em}
    \begin{minipage}[t]{0.48\textwidth}
    \centering
    \includegraphics[
        width=\textwidth,
        trim=14bp 179bp 268bp 258bp,
        clip
    ]{applications/scf_household_finance/figures/scf_weighting_overview.pdf}%
    \end{minipage}\hfill
    \begin{minipage}[t]{0.48\textwidth}
    \centering
    \includegraphics[
        width=\textwidth,
        trim=280bp 42bp 6bp 396bp,
        clip
    ]{applications/scf_household_finance/figures/scf_weighting_overview.pdf}%
    \end{minipage}
    \par\vspace{0.3em}
    \includegraphics[
        width=0.56\textwidth,
        trim=132bp 17bp 122bp 532bp,
        clip
    ]{applications/scf_household_finance/figures/scf_weighting_overview.pdf}%
}{%
    \fbox{\parbox{0.85\textwidth}{\centering
    The selected released-weight diagnostic is unavailable.}}%
}
\caption{Panel A shows the normalized released weights on a
logarithmic scale; the shaded region spans the 10th through 90th percentiles,
and the solid and dashed vertical lines mark the median and normalized mean,
respectively.  Panels D and G compare unweighted and weighted sample percentages for education and age bands, respectively.}
\label{fig:scf-weighting-summary}
\end{figure}

\subsection{Study 1: predicting household net worth}
\label{subsec:scf-frozen-benchmark}
The response is $Y={\rm asinh}\big(\texttt{NETWORTH}\times 10^{-5}\big)$ and target regression function $f_{\ast}(x)=\EE (Y\mid X=x)$. 
 Numerical variables are standardized, and categorical variables are encoded, yielding a $33$-dimensional covariate variable $X$.  The variable construction details are reported in Appendix. We propose four candidate functions for $f_0$ and test against $H_0: \|f_{\ast}-f_0\|_{\splq}=0$, among which the ordinary KRR $f_{0,{\rm KRR}}$ trained on $\PP$, i.e., unweighted household records, is of primary interest. 

The data is randomly split into three disjoint subsets with size: 
$
\lvert\mathcal D_0\rvert=1{,}400, 
\lvert\mathcal D_I\rvert=1{,}800, 
\lvert\mathcal D_Q\rvert=1{,}395. 
$
The subset $\calD_0$ is reserved for training the candidate function. Three constant-valued candidate functions are $f_{0,\mathrm{tgt}}(x)\equiv \sum_{i\in\mathcal D_0}\rho_iY_i/\big(\sum_{i\in\mathcal D_0}\rho_i\big)$, $f_{0,\mathrm{src}}(x)\equiv\sum_{i\in\mathcal D_0}Y_i/|\calD_0|$, and $f_{0,\mathrm{zero}}(x)\equiv 0$. For $f_{0,{\rm KRR}}$, we use the Gaussian RBF kernel with \(\gamma=1/33\) with $\lambda_n$ tuned by cross validation. 
The subset $\calD_I$ is used for fitting IWKRR $\what f$ and implementing the multiplier bootstrap. The tuning parameters $\lambda_n$ and $\tau_n$ are selected by cross validation using $\calD_0$.  The subset $\calD_Q$ is used for evaluation, based on which we define the empirical $\splq$-norm: $\norm{g}_{\widehat{\QQ}}^2
=\sum_{i\in\mathcal D_Q}\rho_i g^2(X_i)/\big(\sum_{i\in\mathcal D_Q}\rho_i\big)$.

By \(B=5{,}000\) multiplier-bootstrap draws, we get the $90\%$-percentile \(\hat c_{0.10}=0.1165\) as an estimate of the quantile for $\|\what f-f_{\ast}\|_{\splq}$. By the triangle inequality
$\big|
\norm{f_{\ast}-f_0}_{\widehat{\QQ}}
-
\norm{\what f-f_0}_{\widehat{\QQ}}
\big|
\leq
\norm{\what f-f_{\ast}}_{\widehat{\QQ}}, 
$
we construct the $90\%$ confidence interval for $\|f_{\ast}-f_0\|_{\splq}$ by $\|\what f-f_0\|_{\what\QQ}\pm \hat c_{0.10}$, where the left endpoint is rounded to zero if negative. The confidence interval is conservative by nature. 

The bootstrap distribution, test values $\|\what f - f_0\|_{\what\QQ}$ for four candidate functions, and the corresponding $90\%$ confidence intervals are illustrated in Figure~\ref{fig:scf-frozen-benchmark}. All four intervals exclude zero.  In particular, the source-only KRR candidate $f_{0,{\rm KRR}}$ is rejected at the nominal level.  Further details are provided in Appendix~\ref{app:scf-study1-details}.



\begin{figure}[htbp]
\centering
\IfFileExists{applications/scf_household_finance/figures/scf_study1_frozen_benchmark_comparison_v1.pdf}{%
    \includegraphics[width=\textwidth]{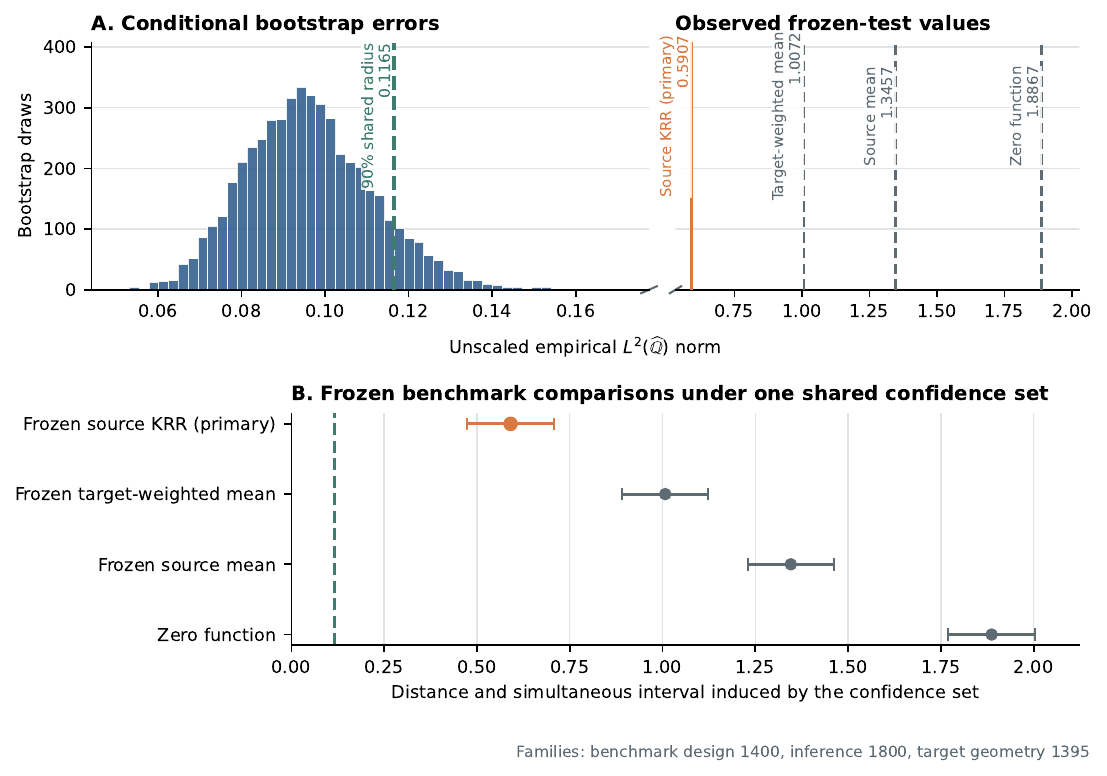}%
}{%
    \fbox{\parbox{0.85\textwidth}{\centering
    The frozen-benchmark comparison figure is unavailable.}}%
}
\caption{testing candidate functions for estimating household net worth based on IWKRR $\what f$. The four candidate functions include the source-only KRR $f_{0,{\rm KRR}}$, the target-weighted mean $f_{0,\mathrm{tgt}}$, the source mean $f_{0,\mathrm{src}}$, and the trivial function $f_{0,\mathrm{zero}}$.  Panel A
displays the bootstrap distribution over \(5{,}000\) draws,  \(90\%\) confidence-set radius $\hat c_{0.1}$, and the test values $\|\what f-f_0\|$ of four candidate functions.   Panel B
shows the $90\%$ CI for $\|f_{\ast}-f_0\|_{\splq}$ for four candidate functions, based on which all are rejected at the nominal level $\alpha=0.10$. }
\label{fig:scf-frozen-benchmark}
\end{figure}

\
\subsection{Study 2: does higher income increase the probability of holding equity?}
\label{subsec:scf-monotonicity}

We are interested in the relationship between household income and the probability of holding stock equity. Let the response variable be $Y=\mathbf 1\{\texttt{HEQUITY}=1\}$, the single covariate $Z=\log\{1+\texttt{NORMINC}\}$, and unknown true regression function $f_{\ast}(z)=\QQ(Y=1\mid Z=z)$, where \(\texttt{NORMINC}\) denotes usual or normal income. In particular, we want to test whether $f_{\ast}(z)$ is  monotonic in $z$.  Let \(\mathcal M_{\uparrow}\) be the class of
\([0,1]\)-valued nondecreasing functions on the income domain, where functions
equal \(\QQ\)-almost surely are identified. The class $\calM_{\uparrow}$ is a closed subset of \(\mathsf{L^2}(\QQ)\).
We aim to test against $
H_0: f_{\ast}\in\mathcal M_{\uparrow}$.  
This is a composite test. Alternatively, we test against $H_0: d_{\QQ}(f_{\ast}, \calM_{\uparrow})=0$, where the projection distance $d_{\QQ}(h, \calM_{\uparrow}):=\inf_{g\in\calM_{\uparrow}}\|h-g\|_{\splq}$.

We random split the data into the training set $\calD_t$ of size \(1{,}800\) and the evaluation set $\calD_e$ of size \(2{,}795\). The IWKRR estimator $\what f$ is learned using the Gaussian RBF kernel on the training set where the parameters $(\lambda_n, \tau_n)$ are selected by cross validation. The empirical target distribution $\what \QQ$ is computed using the evaluation set by $\what \QQ(Z\leq z):=\sum_{i\in\calD_e}\rho_i \mathbf{1}(Z_i\leq z)/\sum_{i\in\calD_e} \rho_i$. 
The empirical distance is $d(\what f,\mathcal M_{\uparrow})=\inf_{g\in\calM_{\uparrow}}\|\what f- g\|_{\sfL^2(\what \QQ)}=0.0151$, and the $90\%$-percentile based on $B=5,000$ multiplier bootstrap is $\hat c_{0.10}=0.0252$. Note that $\hat c_{0.10}$ is an estimate of the $90\%$-quantile of $\|\what f- f_{\ast}\|_{\splq}$. By the inequality $\big|d_{\QQ}(\what f,\calM_{\uparrow})-d_{\QQ}(f_{\ast},\calM_{\uparrow})\big|
\leq
\norm{\what f-f_{\ast}}_{\splq}$, we construct the $90\%$ confidence interval for $d_{\QQ}(f_{\ast},\calM_{\uparrow})$ by $d_{\QQ}(\what f, \calM_{\uparrow})\pm \hat c_{0.10}$, resulting into the interval $[0, 0.0403]$, where the left endpoint is rounded to zero if negative.  Since zero is included in the interval, we do not reject the null hypothesis at the significance level $\alpha=0.10$.
Figure~\ref{fig:scf-income-monotonicity} reports the multiplier-bootstrap distribution, the empirical distance $d(\what f, \calM_{\uparrow})$, and the critical value $\hat c_{0.1}$. More details can be found in the Appendix.

\begin{figure}[htbp]
\centering
\IfFileExists{applications/scf_household_finance/figures/scf_study3_bootstrap_error_monotonicity.pdf}{%
    \includegraphics[
        width=\textwidth,
        trim=0bp 0bp 0bp 16bp,
        clip
    ]{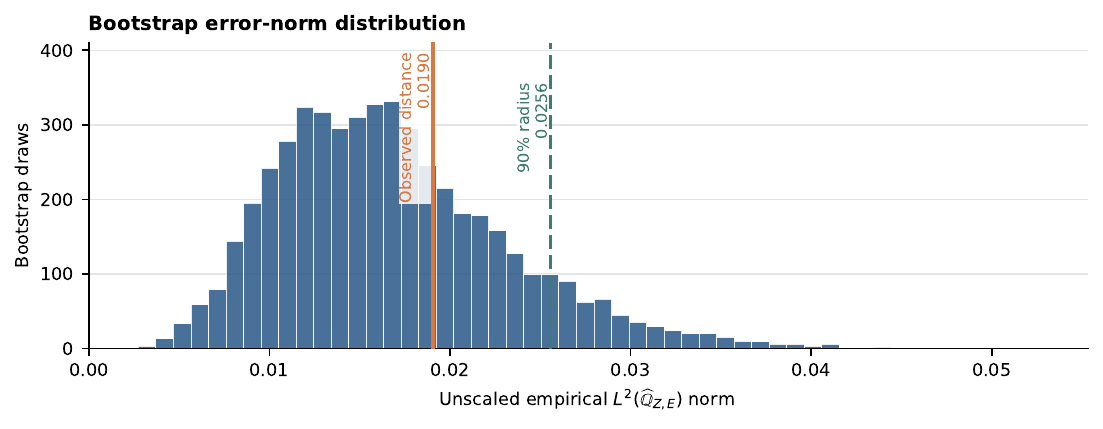}%
}{%
    \fbox{\parbox{0.85\textwidth}{\centering
    The income-monotonicity bootstrap-calibration figure is unavailable.}}%
}
\caption{testing whether the probability of holding stock equity is monotone with respect to the usual income.  The histogram plot shows the empirical bootstrap distribution based on  \(5{,}000\) draws.  The solid orange line is the
observed distance 
\(d(\what f,\mathcal M_{\uparrow})=0.0151\), and the dashed green line is the
\(90\%\) bootstrap percentile \(\hat c_{0.10}=0.0252\).  The null hypothesis is not rejected at level \(0.10\).}
\label{fig:scf-income-monotonicity}
\end{figure}

\section{Discussion}
\label{sec:discuss}

{\it Unknown density ratio.} For simplicity and technical convenience, we assumed known density ratio $\rho(x)$ throughout the paper. While this assumption may hold in semi-supervised learning where unlabelled datasets are abundant, it can be restrictive in many applications. There are several existing methods to estimate the density ratio if it is unknown. Suppose that a sample of source covariates $X_1,\cdots, X_n\stackrel{\rm i.i.d.}{\sim} \PP$ and target covariates $\wtilde X_1,\cdots, \wtilde X_N\stackrel{{\rm i.i.d.}}{\sim}\QQ$ are available. For example, the least squares density-ratio estimator \citep{kanamori2009least} approximates $\rho(x)$ by minimizing $\EE_{\PP_n}[\rho^2(X)]-2\EE_{\QQ_N}[\rho(X)]$; the Kullback-Leibler importance estimation procedure (KLIEP, \cite{sugiyama2008direct}) maximizes the target likelihood $\EE_{\QQ_N}\log \rho(X)$ subject to the normalization constraint $\EE_{\PP_n}\rho(X)=1$; the kernel mean matching (KMM, \cite{huang2007correcting,yu2012analysis}) approach aims to minimize the discrepancy between mean feature maps $\|\EE_{\PP_n}\rho(X)\phi(X)-\EE_{\QQ_N}\phi(X) \|_{\sfH}$ where $\phi$ denotes the canonical feature map. These empirical estimates of $\rho(x)$ are applicable to our procedures. However, their effect on inference remains unclear and challenging to characterize, because the main issue is not only consistency of $\what \rho$, but also its impact to the first-order approximation, bootstrap approximation, as well as the finite-sample coverage guarantees. We shall leave this in potential future work.

{\it Sub-Gaussian kernel function}.  The main theorems are established for the bounded kernel where $\sup_{x\in\frakX}k(x,x)\leq \kappa^2$.  Our methods and theories are still valid if the kernel evaluating function $k_X=k(X,\cdot)$ is sub-Gaussian as an $\sfH$-valued element in the source distribution.  In Appendix~\ref{app:ext-subG},  we provide the theoretical guarantees for the Gaussian and bootstrap approximation under sub-Gaussian kernel functions \citep{wang2026pseudo}.

{\it Density ratio moment condition and kernel spectral decay rate}.  The explicit error rates for the coverage probability established in Corollary ~\ref{cor:bd-density} - \ref{cor:moment-density} require a bounded high-order moment of the density ratio and sufficiently fast-decaying eigenvalues of the integral operator $\calL_Q$.  These conditions are stronger than the typical ones in existing literature on the consistency of (IW)KRR.  The stronger conditions arise because valid inference procedures require accurate distributional approximation such as the Bahadur representation involving $\sqrt{n}$-scale linearization,  Gaussian and bootstrap approximations, and anti-concentration for coverage guarantee.   Higher moments of the density ratio make the truncation-induced discrepancy between the target and weighted-source integral operators decay faster.  Moreover,  faster spectral decay reduces the effective dimension and the spectral-tail error in the infinite-dimensional Gaussian and bootstrap approximations.  Together, these conditions place the truncation level, effective dimension,  and spectral tail in a joint regime under which all inferential remainders are asymptotically negligible,  thereby yielding valid coverage.  It is possible that the moment order and spectral-decay requirements in those corollaries are improvable,  which we leave in potential future work.

\bibliography{reference}

@article{MINSKER2017111,
title = {On some extensions of Bernstein's inequality for self-adjoint operators},
journal = {Statistics \& Probability Letters},
volume = {127},
pages = {111-119},
year = {2017},
author = {Stanislav Minsker}
}

@article{singh2025kernelridgeregressioninference,
title={Kernel ridge regression inference},
  author={Singh, Rahul and Vijaykumar, Suhas},
  journal={arXiv preprint arXiv:2302.06578},
  year={2023}
  }

@book{vershynin2018high,
  title={High-dimensional probability: An introduction with applications in data science},
  author={Vershynin, Roman},
  volume={47},
  year={2018},
  publisher={Cambridge university press}
}

@article{kanamori2009least,
  title={A least-squares approach to direct importance estimation},
  author={Kanamori, Takafumi and Hido, Shohei and Sugiyama, Masashi},
  journal={The Journal of Machine Learning Research},
  volume={10},
  pages={1391--1445},
  year={2009},
  publisher={JMLR. org}
}

@article{yu2012analysis,
  title={Analysis of kernel mean matching under covariate shift},
  author={Yu, Yaoliang and Szepesv{\'a}ri, Csaba},
  journal={arXiv preprint arXiv:1206.4650},
  year={2012}
}

@book{hall2013bootstrap,
  title={The bootstrap and Edgeworth expansion},
  author={Hall, Peter},
  year={2013},
  publisher={Springer Science \& Business Media}
}

@article{xia2021statistical,
  title={Statistical inferences of linear forms for noisy matrix completion},
  author={Xia, Dong and Yuan, Ming},
  journal={Journal of the Royal Statistical Society Series B: Statistical Methodology},
  volume={83},
  number={1},
  pages={58--77},
  year={2021},
  publisher={Oxford University Press}
}

@book{efron1994introduction,
  title={An introduction to the bootstrap},
  author={Efron, Bradley and Tibshirani, Robert J},
  year={1994},
  publisher={Chapman and Hall/CRC}
}

@article{bickel1981some,
  title={Some asymptotic theory for the bootstrap},
  author={Bickel, Peter J and Freedman, David A},
  journal={The annals of statistics},
  volume={9},
  number={6},
  pages={1196--1217},
  year={1981},
  publisher={Institute of Mathematical Statistics}
}

@book{wahba1990spline,
  title={Spline models for observational data},
  author={Wahba, Grace},
  year={1990},
  publisher={SIAM}
}

@book{vapnik2013nature,
  title={The nature of statistical learning theory},
  author={Vapnik, Vladimir},
  year={2013},
  publisher={Springer science \& business media}
}

@book{scholkopf2002learning,
  title={Learning with kernels: support vector machines, regularization, optimization, and beyond},
  author={Sch{\"o}lkopf, Bernhard and Smola, Alexander J},
  year={2002},
  publisher={MIT press}
}

@article{mercer1909functions,
  title={Functions of positive and negative type, and their connection with the theory of integral equations},
  author={Mercer, James},
  journal={Philosophical Transactions of the Royal Society of London. Series A},
  volume={209},
  pages={415--446},
  year={1909}
}

@article{gogolashvili2023importance,
  title={When is Importance Weighting Correction Needed for Covariate Shift Adaptation?},
  author={Gogolashvili, Davit and Zecchin, Matteo and Kanagawa, Motonobu and Kountouris, Marios and Filippone, Maurizio},
  journal={arXiv preprint arXiv:2303.04020},
  year={2023}
}

@book{Cordes_1987, 
  place={Cambridge}, 
  series={London Mathematical Society Lecture Note Series}, 
  title={Spectral Theory of Linear Differential Operators and Comparison Algebras}, 
  publisher={Cambridge University Press}, author={Cordes, Heinz Otto}, 
  year={1987}, 
  collection={London Mathematical Society Lecture Note Series}
}

@article{caponnetto2007optimal,
  title={Optimal rates for the regularized least-squares algorithm},
  author={Caponnetto, Andrea and De Vito, Ernesto},
  journal={Foundations of Computational Mathematics},
  volume={7},
  number={3},
  pages={331--368},
  year={2007},
  publisher={Springer}
}

@article{shimodaira2000improving,
  title={Improving predictive inference under covariate shift by weighting the log-likelihood function},
  author={Shimodaira, Hidetoshi},
  journal={Journal of statistical planning and inference},
  volume={90},
  number={2},
  pages={227--244},
  year={2000},
  publisher={Elsevier}
}

@book{book_covariate_shift,
 author = {Masashi Sugiyama and Motoaki Kawanabe},
 publisher = {The MIT Press},
 title = {Machine Learning in Non-Stationary Environments: Introduction to Covariate Shift Adaptation},
 urldate = {2026-01-07},
 year = {2012}
}

@article{Guan_2022,
   title={Domain Adaptation for Medical Image Analysis: A Survey},
   volume={69},
   number={3},
   journal={IEEE Transactions on Biomedical Engineering},
   publisher={Institute of Electrical and Electronics Engineers (IEEE)},
   author={Guan, Hao and Liu, Mingxia},
   year={2022},
   month={March}, pages={1173--1185} }

@article{ma2023optimally,
  title={Optimally tackling covariate shift in RKHS-based nonparametric regression},
  author={Ma, Cong and Pathak, Reese and Wainwright, Martin J},
  journal={The Annals of Statistics},
  volume={51},
  number={2},
  pages={738--761},
  year={2023},
  publisher={Institute of Mathematical Statistics}
}

@article{cortes2010learning,
  title={Learning bounds for importance weighting},
  author={Cortes, Corinna and Mansour, Yishay and Mohri, Mehryar},
  journal={Advances in neural information processing systems},
  volume={23},
  year={2010}
}

@article{mansour2009domainadaptation,
      title={Domain adaptation: Learning bounds and algorithms},
  author={Mansour, Yishay and Mohri, Mehryar and Rostamizadeh, Afshin},
  journal={arXiv preprint arXiv:0902.3430},
  year={2009}
}

@inproceedings{kpotufe2018marginal,
  title={Marginal singularity, and the benefits of labels in covariate-shift},
  author={Kpotufe, Samory and Martinet, Guillaume},
  booktitle={Conference On Learning Theory},
  pages={1882--1886},
  year={2018},
  organization={PMLR}
}

@article{wang2024comprehensive,
title = {A comprehensive survey on deep active learning in medical image analysis},
journal = {Medical Image Analysis},
volume = {95},
pages = {103201},
year = {2024},
author = {Haoran Wang and Qiuye Jin and Shiman Li and Siyu Liu and Manning Wang and Zhijian Song},
}

@inproceedings{pathak2022new,
  title={A new similarity measure for covariate shift with applications to nonparametric regression},
  author={Pathak, Reese and Ma, Cong and Wainwright, Martin},
  booktitle={International Conference on Machine Learning},
  pages={17517--17530},
  year={2022},
  organization={PMLR}
}

@article{wang2026pseudo,
  title={Pseudo-labeling for kernel ridge regression under covariate shift},
  author={Wang, Kaizheng},
  journal={The Annals of Statistics},
  volume={54},
  number={1},
  pages={252--276},
  year={2026},
  publisher={Institute of Mathematical Statistics}
}

@article{schmidt2024local,
  title={Local convergence rates of the nonparametric least squares estimator with applications to transfer learning},
  author={Schmidt-Hieber, Johannes and Zamolodtchikov, Petr},
  journal={Bernoulli},
  volume={30},
  number={3},
  pages={1845--1877},
  year={2024},
  publisher={Bernoulli Society for Mathematical Statistics and Probability}
}

@article{Chernozhukov_2014,
   title={Anti-concentration and honest, adaptive confidence bands},
   volume={42},
   number={5},
   journal={The Annals of Statistics},
   publisher={Institute of Mathematical Statistics},
   author={Chernozhukov, Victor and Chetverikov, Denis and Kato, Kengo},
   year={2014},
   month={October}
    }

@article{chernozhukov2014gaussian,
author = {Victor Chernozhukov and Denis Chetverikov and Kengo Kato},
title = {{Gaussian approximation of suprema of empirical processes}},
volume = {42},
journal = {The Annals of Statistics},
number = {4},
publisher = {Institute of Mathematical Statistics},
pages = {1564 -- 1597},
year = {2014}
}

@article{chernozhukov2017central,
  author = {Victor Chernozhukov and Denis Chetverikov and Kengo Kato},
title = {{Central limit theorems and bootstrap in high dimensions}},
volume = {45},
journal = {The Annals of Probability},
number = {4},
publisher = {Institute of Mathematical Statistics},
pages = {2309 -- 2352},
year = {2017}
}

@inproceedings{gaussianapproximationsubgaussian,
  title={Strong Gaussian approximation for the sum of random vectors},
  author={Buzun, Nazar and Shvetsov, Nikolay and Dylov, Dmitry V},
  booktitle={Conference on Learning Theory},
  pages={1693--1715},
  year={2022},
  organization={PMLR}
}

@article{gaussianapproximationbounded,
author = {Zaitsev, A. Yu.},
title = {Estimates of the L{\'e}vy--Prokhorov Distance in the Multivariate Central Limit Theorem for Random Variables with Finite Exponential Moments},
journal = {Theory of Probability \& Its Applications},
volume = {31},
number = {2},
pages = {203-220},
year = {1987}
}

@article{yang2017frequentistcoveragesupnormconvergence,
        title={Frequentist coverage and sup-norm convergence rate in Gaussian process regression},
  author={Yang, Yun and Bhattacharya, Anirban and Pati, Debdeep},
  journal={arXiv preprint arXiv:1708.04753},
  year={2017}
}

@article{shang2025bootstrap,
  title={Bootstrap Nonparametric Inference under Data Integration},
  author={Shang, Zuofeng and Sang, Peijun and Jin, Chong},
  journal={arXiv preprint arXiv:2501.01610},
  year={2025}
}

@article{low1997nonparametric,
  title={On nonparametric confidence intervals},
  author={Low, Mark G},
  journal={The Annals of Statistics},
  volume={25},
  number={6},
  pages={2547--2554},
  year={1997},
  publisher={Institute of Mathematical Statistics}
}

@article{genovese2008adaptive,
author = {Christopher Genovese and Larry Wasserman},
title = {{Adaptive confidence bands}},
volume = {36},
journal = {The Annals of Statistics},
number = {2},
publisher = {Institute of Mathematical Statistics},
pages = {875 -- 905},
year = {2008}
}

@article{gine2010confidence,
  author={Gin{\'e}, Evarist and Nickl, Richard},
  title = {{Confidence bands in density estimation}},
volume = {38},
journal = {The Annals of Statistics},
number = {2},
publisher = {Institute of Mathematical Statistics},
pages = {1122 -- 1170},
year = {2010},
}

@article{cai2006adaptive,
  author = {T. Tony Cai and Mark G. Low},
title = {{Adaptive confidence balls}},
volume = {34},
journal = {The Annals of Statistics},
number = {1},
publisher = {Institute of Mathematical Statistics},
pages = {202 -- 228},
year = {2006}
}

@article{juditsky2003nonparametric,
  title={Nonparametric confidence set estimation},
  author={Juditsky, Anatoli and Lambert-Lacroix, Sophie},
  journal={Mathematical Methods of Statistics},
  volume={12},
  number={4},
  pages={410--428},
  year={2003},
  publisher={New York: Allerton Press, c1992-}
}

@article{li1989honest,
  title={Honest confidence regions for nonparametric regression},
  author={Li, Ker-Chau},
  journal={The Annals of Statistics},
  volume={17},
  number={3},
  pages={1001--1008},
  year={1989},
  publisher={Institute of Mathematical Statistics}
}

@article{spokoiny1996adaptive,
  title={Adaptive hypothesis testing using wavelets},
  author={Spokoiny, Vladimir G},
  journal={The Annals of Statistics},
  volume={24},
  number={6},
  pages={2477--2498},
  year={1996},
  publisher={Institute of Mathematical Statistics}
}

@article{robins2006adaptive,
 author = {James Robins and Aad van der Vaart},
title = {{Adaptive nonparametric confidence sets}},
volume = {34},
journal = {The Annals of Statistics},
number = {1},
publisher = {Institute of Mathematical Statistics},
pages = {229 -- 253},
year = {2006}
}

@article{gotze2019large,
  title={Large ball probabilities, Gaussian comparison and anti-concentration},
  author={G{\"o}tze, Friedrich and Naumov, Alexey and Spokoiny, Vladimir and Ulyanov, Vladimir},
  journal={Bernoulli},
  volume={25},
  number={4A},
  pages={2538--2563},
  year={2019},
  publisher={JSTOR}
}

@article{sugiyama2007covariate,
  title={Covariate Shift Adaptation by Importance Weighted Cross Validation},
  author={Sugiyama, Masashi and Krauledat, Matthias and M{\"u}ller, Klaus-Robert},
  journal={Journal of Machine Learning Research},
  volume={8},
  pages={985--1005},
  year={2007}
}

@inproceedings{kouw2019robust,
  title={Robust importance-weighted cross-validation under sample selection bias},
  author={Kouw, Wouter M and Krijthe, Jesse H and Loog, Marco},
  booktitle={2019 IEEE 29th International Workshop on Machine Learning for Signal Processing (MLSP)},
  pages={1--6},
  year={2019},
  organization={IEEE}
}

@article{sugiyama2008direct,
  title={Direct Importance Estimation for Covariate Shift Adaptation},
  author={Sugiyama, Masashi and Suzuki, Taiji and Nakajima, Shinichi and Kashima, Hisashi and von B{\"u}nau, Paul and Kawanabe, Motoaki},
  journal={Annals of the Institute of Statistical Mathematics},
  volume={60},
  number={4},
  pages={699--746},
  year={2008}
}

@inproceedings{huang2007correcting,
  title={Correcting Sample Selection Bias by Unlabeled Data},
  author={Huang, Jiayuan and Smola, Alexander J. and Gretton, Arthur and Borgwardt, Karsten M. and Sch{\"o}lkopf, Bernhard},
  booktitle={Advances in Neural Information Processing Systems},
  volume={19},
  pages={601--608},
  year={2007}
}

@inproceedings{cortes2008sample,
  title={Sample Selection Bias Correction Theory},
  author={Cortes, Corinna and Mohri, Mehryar and Riley, Michael and Rostamizadeh, Afshin},
  booktitle={Algorithmic Learning Theory},
  series={Lecture Notes in Computer Science},
  volume={5254},
  pages={38--53},
  year={2008},
  publisher={Springer},
  doi={10.1007/978-3-540-87987-9_8}
}

@inproceedings{tibshirani2019conformal,
  title={Conformal Prediction Under Covariate Shift},
  author={Tibshirani, Ryan J. and Barber, Rina Foygel and Candes, Emmanuel J. and Ramdas, Aaditya},
  booktitle={Advances in Neural Information Processing Systems},
  volume={32},
  year={2019}
}

@article{barber2023conformal,
  title={Conformal Prediction Beyond Exchangeability},
  author={Barber, Rina Foygel and Candes, Emmanuel J. and Ramdas, Aaditya and Tibshirani, Ryan J.},
  journal={The Annals of Statistics},
  volume={51},
  number={2},
  pages={816--845},
  year={2023},
  doi={10.1214/23-AOS2276}
}

@article{zamolodtchikov2026minimax,
  title={A Minimax Theory of Nonparametric Regression Under Covariate Shift},
  author={Zamolodtchikov, Petr},
  journal={arXiv preprint arXiv:2603.05897},
  year={2026}
}

@article{monrad1991nearby,
  title={Nearby Variables with Nearby Conditional Laws and a Strong Approximation Theorem for Hilbert Space Valued Martingales},
  author={Monrad, Ditlev and Philipp, Walter},
  journal={Probability Theory and Related Fields},
  volume={88},
  number={3},
  pages={381--404},
  year={1991},
  doi={10.1007/BF01418867}
}

@article{crump2008nonparametric,
  title={Nonparametric Tests for Treatment Effect Heterogeneity},
  author={Crump, Richard K. and Hotz, V. Joseph and Imbens, Guido W. and Mitnik, Oscar A.},
  journal={Review of Economics and Statistics},
  volume={90},
  number={3},
  pages={389--405},
  year={2008}
}

@article{ding2016randomization,
  title={Randomization Inference for Treatment Effect Variation},
  author={Ding, Peng and Feller, Avi and Miratrix, Luke},
  journal={Journal of the Royal Statistical Society: Series B (Statistical Methodology)},
  volume={78},
  number={3},
  pages={655--671},
  year={2016}
}

@inproceedings{gururangan2020dont,
  title={Don't Stop Pretraining: Adapt Language Models to Domains and Tasks},
  author={Gururangan, Suchin and Marasovi{\'c}, Ana and Swayamdipta, Swabha and Lo, Kyle and Beltagy, Iz and Downey, Doug and Smith, Noah A.},
  booktitle={Proceedings of the 58th Annual Meeting of the Association for Computational Linguistics},
  pages={8342--8360},
  year={2020}
}

@article{pinelis1986remarks,
  title={Remarks on Inequalities for Large Deviation Probabilities},
  author={Pinelis, I. F. and Sakhanenko, A. I.},
  journal={Theory of Probability and Its Applications},
  volume={30},
  number={1},
  pages={143--148},
  year={1986},
  doi={10.1137/1130013}
}

@misc{federalreserve2023scf,
  author={{Board of Governors of the Federal Reserve System}},
  title={{2022 Survey of Consumer Finances}},
  year={2023},
  howpublished={Data set},
  url={https://www.federalreserve.gov/econres/scfindex.htm}
}
\bibliographystyle{plainnat}

\newpage
\appendix

\section{Extension I: sub-Gaussian Kernel Function}\label{app:ext-subG}

In addition to the bounded-kernel condition assumed throughout the paper, this section considers instead the sub-Gaussian feature regularity and derives the
corresponding Gaussian and bootstrap coupling bounds and their specific spectral-rate consequences.

\begin{definition}[Sub-Gaussian in Hilbert Space]\label{def:sub-gaussian-hilbert-space}
  For a square-integrable random element $X$ in a Hilbert space $\mathsf{G}$,
  with its expectation understood as a Bochner mean, we say $X$ is
  $\mathsf{G}$-$b$-sub-Gaussian if
  \begin{equation*}
    \log \mathbb{E}_{\PP}\left[\exp \left\langle X-\mathbb EX, t\right\rangle_{\mathsf{G}}\right]
    \leq \frac{b^{2}}{2}
    \mathbb{E}_{\PP}\left(\left\langle X-\mathbb EX, t\right\rangle_{\mathsf{G}}^{2}\right),
    \quad \forall t \in \mathsf{G}.
  \end{equation*}
\end{definition}

\begin{assumption}[Sub-Gaussian Feature]\label{assump:sub-gaussian-feature}
For some \(b_0\geq1\), the feature element \(k_X=k(X,\cdot)\) is
\(\mathsf H\)-\(b_0\)-sub-Gaussian. Equivalently, for every
\(t\in\mathsf H\),
\begin{equation*}
    \begin{aligned}
        \log \mathbb{E}_{\PP}\left[
        \exp \left\langle
        k_X-\mathbb{E}_{\PP}k_X,t
        \right\rangle_{\mathsf H}
        \right]
        &=
        \log \mathbb{E}_{\PP}\left[
        \exp\{t(X)-\mathbb{E}_{\PP}[t(X)]\}
        \right] \\
        &\leq
        \frac{b_0^{2}}{2}\mathbb{E}_{\PP}\left[
        \{t(X)-\mathbb{E}_{\PP}[t(X)]\}^{2}
        \right].
    \end{aligned}
\end{equation*}
For \(\tau\geq\tau_0\), let
\(w_\tau(x):=\min\{\rho(x),\tau\}\). Since importance weighting is used, we additionally assume that, for every
\(\tau\geq\tau_0\) and \(t\in\mathsf H\),
\begin{equation*}
    \begin{aligned}
        \log \mathbb{E}_{\PP}\left[
        \exp \left\langle
        w_\tau(X)k_X-\mathbb{E}_{\PP}[w_\tau(X)k_X],t
        \right\rangle_{\mathsf H}
        \right]
        &=
        \log \mathbb{E}_{\PP}\left[
        \exp\{w_\tau(X)t(X)-\mathbb{E}_{\PP}[w_\tau(X)t(X)]\}
        \right] \\
        &\leq
        \frac{b_0^{2}}{2}\mathbb{E}_{\PP}\left[
        w_\tau(X)^2t(X)^2
        \right].
    \end{aligned}
\end{equation*}
The weighted-feature condition is imposed on the truncation family and does
not follow from sub-Gaussianity of \(k_X\) alone.
\end{assumption}

We first verify that the influence elements \(U_i\) in the Bahadur
representation~\eqref{eq:bahadur} satisfy the required sub-Gaussian condition.

\begin{lemma}\label{lem:sub-gaussianity-of-ui}
  Under Assumptions~\ref{assump:error}, \ref{assump:source-condition}, and
  \ref{assump:sub-gaussian-feature}, if \(\tau_n\geq\tau_0\), then \(U_i\) is
  \(\splq\)-\(b\)-sub-Gaussian with
  \begin{equation*}
    b = C_{sg}^3 \cdot \frac{(\kappa \|f_*\| + \overline{\sigma})}{\underline{\sigma}} b_0,
\end{equation*}
where $C_{sg}$ is a universal constant as in {\protect\cite{vershynin2018high}}.
\end{lemma}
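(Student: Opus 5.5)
Write $\mathcal{A}:=(\oplpw+\lambda_n\opid)^{-1}$ and $h:=f_*-\fpw$. Then
\[
U_i=\mathcal{A}\big(w_n(X_i)\eps_i k_{X_i}\big)+\mathcal{A}\big(w_n(X_i)h(X_i)k_{X_i}-\oplpw h\big)=:U_i^{(1)}+U_i^{(2)}.
\]
Both pieces have mean zero. For the first, use $\EE[\eps\mid X]=0$. For the second, note that $\EE_{\PP}[w_n(X)h(X)k_X]=\oplpw h$.

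The plan is to fix a direction $t\in\splq$ and reduce the claim to a scalar sub-Gaussian statement. Since $\mathcal{A}$ is self-adjoint on $\sfH$, Lemma~\ref{lem:LQ_half_unitary_isomorphism} lets us rewrite the $\splq$ inner product with $U_i$ as an RKHS pairing. Concretely, $\langle U_i,t\rangle_{\splq}=\langle U_i,\oplq t\rangle_{\sfH}$, since $\oplq t$ represents the $\splq$ pairing. Setting $u:=\mathcal{A}\oplq t\in\sfH$, this gives
\[
\langle U_i,t\rangle_{\splq}=w_n(X_i)\eps_i u(X_i)+\big(w_n(X_i)h(X_i)u(X_i)-\EE_{\PP}[w_nhu]\big).
\]
The target inequality is therefore
\[
\log\EE\exp\langle U_i,t\rangle_{\splq}\le \tfrac{b^2}{2}\EE\langle U_i,t\rangle_{\splq}^2 .
\]

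The two pieces are handled separately.

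\emph{Noise term.} Conditionally on $X$, the variable $\eps$ is bounded by $\overline{\sigma}$, so Hoeffding's lemma applies. Writing $a:=w_n(X)u(X)$, it gives $\log\EE[e^{a\eps}\mid X]\le \overline{\sigma}^2a^2/2$. Taking expectation over $X$ yields an exponent of $\EE\exp(\overline{\sigma}^2 w_n^2u^2/2)$. This must be controlled using the weighted-feature condition in Assumption~\ref{assump:sub-gaussian-feature}, applied to the truncation level $\tau_n\ge\tau_0$ and to $u$. That condition makes $w_n(X)u(X)$ centered-sub-Gaussian with variance proxy $b_0^2\EE[w_n^2u^2]$. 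The square of a sub-Gaussian variable is sub-exponential, and the standard equivalences in \cite{vershynin2018high} (moment generating function, $\psi_2$ norm, Orlicz norms) then convert the bound into a sub-Gaussian bound for the product $\eps\,w_nu$. This costs a universal constant $C_{sg}$ each time we pass between formulations. The resulting variance proxy is $C\,\overline{\sigma}^2b_0^2\EE[w_n^2u^2]$.

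\emph{Bias term.} Use $\|h\|_\infty\le\kappa\|h\|_{\sfH}\le\kappa\|f_*\|_{\sfH}$, since $\fpw$ is a contraction of $f_*$. With this, $w_nhu$ is a product of the sub-Gaussian variable $w_nu$ with the bounded multiplier $h$. A contraction or comparison argument, again via $\psi_2$ norms, gives a proxy of order $\kappa^2\|f_*\|^2 b_0^2\EE[w_n^2u^2]$.

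\emph{Combining.} The sum of the two pieces is sub-Gaussian with $\psi_2$ norm at most $C_{sg}^2 b_0(\kappa\|f_*\|+\overline{\sigma})(\EE[w_n^2u^2])^{1/2}$. It remains to compare this with the actual second moment. The two pieces are orthogonal, because $\EE[\eps\mid X]=0$. Together with Assumption~\ref{assump:error}, this gives
\[
\EE\langle U_i,t\rangle_{\splq}^2\ge\EE[w_n^2\eps^2u^2]\ge\underline{\sigma}^2\EE[w_n^2u^2].
\]
Dividing, the ratio of the proxy to the true variance is at most $C_{sg}^{4}b_0^2(\kappa\|f_*\|+\overline{\sigma})^2/\underline{\sigma}^2$. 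One more use of the $\psi_2$-to-MGF equivalence for centered variables supplies the final factor of $C_{sg}$. This yields $b=C_{sg}^3(\kappa\|f_*\|+\overline{\sigma})b_0/\underline{\sigma}$.

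\emph{Main obstacle.} I expect the noise term to be the hardest step. Hoeffding's lemma gives a bound involving $w_n^2u^2$ inside an exponential. Keeping the variance proxy proportional to $\EE[w_n^2u^2]$ requires the weighted sub-Gaussian assumption rather than any boundedness of $w_n$. Otherwise a factor of $\tau_n$ would enter, and $b$ would not be free of $\tau_n$ as claimed. My first attempt will use the $\psi_2$-norm product bound $\|\eps Z\|_{\psi_2}\le\overline{\sigma}\|Z\|_{\psi_2}$ for bounded $\eps$, with $Z=w_nu$. The route through Hoeffding's lemma is the fallback.
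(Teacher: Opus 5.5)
Your proposal is correct and follows essentially the same route as the paper: reduce to the scalar projection $\langle U_i,t\rangle_{\splq}=w_n(X_i)\{\eps_i+(f_*-\fpw)(X_i)\}u(X_i)$ up to centering, dominate it pointwise by $(\overline{\sigma}+\kappa\|f_*\|)\,|w_n(X_i)u(X_i)|$, and control $\|w_nu\|_{\psi_2}$ through the weighted-feature condition and Vershynin's Proposition~2.5.2. You then compare with the second moment via $\EE\langle U_i,t\rangle_{\splq}^2\ge\underline{\sigma}^2\EE[w_n^2u^2]$, exactly as the paper does through Lemma~\ref{lem:sigma-lower-operator-bound}; the only differences are cosmetic: the paper bounds the combined residual $|\eps_i+(f_*-\fpw)(X_i)|$ in one step rather than splitting into noise and bias pieces, and your bound $\|f_*-\fpw\|_{\sfH}\le\|f_*\|_{\sfH}$ should be justified by $f_*-\fpw=\lambda_n(\oplpw+\lambda_n\opid)^{-1}f_*$ (contractivity of $\fpw$ alone only gives $2\|f_*\|_{\sfH}$).
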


The sub-Gaussian condition yields the following new Gaussian and bootstrap
coupling bounds, whereas the Bahadur representation and feasible-bootstrap
approximation remain as in the main analysis.

\begin{proposition}[Gaussian coupling under sub-Gaussian regularity]
\label{prop:gaussian-coupling-subgaussian}
Suppose \(U_1,\ldots,U_n\) are i.i.d.,
centered, $\splq$-$b$-sub-Gaussian random elements with covariance operator
\(\opv\). For any \(\eta\in(0,1)\) and any integer
\(m\in\{1,\ldots,n\}\), if
\(n \ge \max\{2,\log(2/\eta)\}\), then there exists
\(Z\sim\mathrm{N}_{\splq}(0,\opv)\) such that, with probability at least
\(1-\eta\),
\[ \left\| \left( \frac{1}{\sqrt{n}} \sum_{i=1}^n U_i \right) - Z \right\|_{\splq} \lesssim b \sigma_{\splq}(m, \opv) \sqrt{\log\!\left(\frac{6}{\eta}\right)} + \| \opv \|_{\splq}^{\frac{1}{2}} b^3 \left\{ \frac{m^{3/2} \log(n)^2}{\sqrt{n}} \right\} (3/\eta)^{1/\log(mn)}. \]
\end{proposition}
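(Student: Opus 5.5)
The plan is to reduce the infinite-dimensional coupling to a finite-dimensional one by spectral truncation, and to control the discarded tail with a Gaussian-type concentration bound. Diagonalise $\opv$ in $\splq$ with eigenpairs $(\mu_j(\opv,\splq),\psi_j)$. Let $\Pi_m$ be the orthogonal projection onto $\mathrm{span}\{\psi_1,\dots,\psi_m\}$ and $\Pi_m^\perp=\opid-\Pi_m$. Write $S_n=n^{-1/2}\sum_i U_i$. We split
\[
S_n-Z=(\Pi_mS_n-\Pi_mZ)+\Pi_m^\perp S_n-\Pi_m^\perp Z ,
\]
and choose the coupling $Z$ so that its first $m$ coordinates are coupled with $\Pi_mS_n$. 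The tail $\Pi_m^\perp Z$ is then generated independently, conditionally on $\Pi_mZ$, with the correct conditional law. Since $\opv$ is diagonal in the $\psi_j$ basis, this is just independent Gaussian coordinates.

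\textbf{Tail terms.} We have $\Pi_m^\perp Z\sim\mathrm N(0,\Pi_m^\perp\opv\Pi_m^\perp)$ with trace $\sigma^2_{\splq}(m,\opv)$. Gaussian concentration of norms (Borell--TIS) then gives $\|\Pi_m^\perp Z\|_{\splq}\lesssim\sigma_{\splq}(m,\opv)\sqrt{\log(6/\eta)}$ with probability at least $1-\eta/3$, since the weak variance is at most the trace.

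For $\Pi_m^\perp S_n$, the summands $\Pi_m^\perp U_i$ are i.i.d., centred and $b$-sub-Gaussian, because projections preserve the defining inequality. Their covariance is $\Pi_m^\perp\opv\Pi_m^\perp$. A Hilbert-space sub-Gaussian norm bound (Hanson--Wright type, applied to the sum, which is again $b$-sub-Gaussian with the same covariance) gives $\|\Pi_m^\perp S_n\|\lesssim b\,\sigma_{\splq}(m,\opv)\sqrt{\log(6/\eta)}$ with probability at least $1-\eta/3$. This produces the first term of the stated bound.

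\textbf{Finite-dimensional coupling.} Write $\Pi_mU_i$ in coordinates $\xi_i=(\langle U_i,\psi_j\rangle_{\splq})_{j\le m}\in\RR^m$. These are i.i.d. and centred, with covariance $D_m=\diag(\mu_1,\dots,\mu_m)$, and each linear functional is $b$-sub-Gaussian. After whitening by $D_m^{-1/2}$, the vectors are isotropic with sub-Gaussian constant $b$. We then invoke a strong Gaussian approximation for sums of sub-Gaussian random vectors, namely the result of \cite{gaussianapproximationsubgaussian}. It yields a coupling $\zeta\sim\mathrm N(0,I_m)$ with $\|n^{-1/2}\sum D_m^{-1/2}\xi_i-\zeta\|\lesssim b^3 m^{3/2}\log^2 n/\sqrt n$ in an $L^p$ sense. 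Taking $p=\log(mn)$ and applying Markov's inequality gives the factor $(3/\eta)^{1/\log(mn)}$. The condition $n\ge\max\{2,\log(2/\eta)\}$ serves exactly to make that theorem applicable.

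Undoing the whitening multiplies distances by at most $\|D_m^{1/2}\|=\|\opv\|^{1/2}_{\splq}$, which gives the second term. A union bound over the three events finishes the argument.

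\textbf{Main obstacle.} The delicate step is matching the finite-dimensional theorem's constants and exponents precisely: the $m^{3/2}$ dependence, the $b^3$ factor, and the moment-to-probability conversion. Care is also needed to ensure that the whitening does not introduce a dependence on $\mu_m^{-1}$. Whitening is used only to put the input in isotropic form, and the output error is rescaled by $D_m^{1/2}$. For this to work, the sub-Gaussian constant of the whitened vectors must remain $b$, and this follows because Definition~\ref{def:sub-gaussian-hilbert-space} is relative to the covariance itself. I expect this bookkeeping to be the crux.
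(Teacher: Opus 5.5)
Your proposal is correct and follows the same architecture as the paper's proof. Both project onto the top $m$ eigenfunctions of $\opv$, couple the leading coordinates via \cite{gaussianapproximationsubgaussian}, and complete $Z$ with an independent Gaussian tail controlled by Borell's inequality, finishing with a union bound. Two sub-steps differ.

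For the empirical tail, the paper bounds each summand's moments, $\{\mathbb E\|(I-\Pi_m)U_i\|_{\splq}^p\}^{1/p}\lesssim b\sqrt p\,\sigma_{\splq}(m,\opv)$, and then applies Pinelis's Hilbert-space Bernstein inequality \cite{pinelis1986remarks}. This leaves a linear term $b\,\sigma_{\splq}(m,\opv)\log(6/\eta)/\sqrt n$. The hypothesis $n\ge\log(2/\eta)$ is there precisely to absorb that term, not to make the finite-dimensional theorem applicable as you suggest.

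Your route is to note that $S_n=n^{-1/2}\sum_iU_i$ is itself $b$-sub-Gaussian with covariance $\opv$, as the paper also observes in the proof of Proposition~\ref{prop:bootstrap-coupling-subgaussian}. Then $\Pi_m^\perp S_n$ is $b$-sub-Gaussian with covariance trace $\sigma^2_{\splq}(m,\opv)$. This avoids the linear term and the hypothesis altogether. However, the tool that makes it work is not Hanson--Wright, since the coordinates of $S_n$ are not independent. Use instead a quadratic-form tail bound for sub-Gaussian vectors in the Hsu--Kakade--Zhang sense, or the norm-moment bound proved inside Proposition~\ref{prop:bootstrap-coupling-subgaussian}, applied to $S_n$ with $p\asymp\log(3/\eta)$ and followed by Markov's inequality.

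For the leading coordinates, the paper applies the coupling theorem directly with covariance bounded by $\|\opv\|_{\splq}$. Your whitening is equivalent, and it keeps the constant $b$ because the sub-Gaussian definition is relative to the covariance. But your $D_m^{-1/2}$ is undefined if one of the leading $m$ eigenvalues of $\opv$ vanishes. You must first restrict to the range of $\Pi_m\opv\Pi_m$, where the zero-variance coordinates vanish almost surely, as the paper does.
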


\begin{proposition}[Bootstrap coupling under sub-Gaussian regularity]
\label{prop:bootstrap-coupling-subgaussian}
Suppose \(U_1,\ldots,U_n\) are i.i.d.,
centered, \(\splq\)-\(b\)-sub-Gaussian random elements. Let \(g\) be the
isonormal Gaussian process in \(\splq\). Then there exists a random element
\(G\stackrel{{\rm d.}}{=}\what{\opv}_{\oracle}^{1/2}g\big|\calD\) such that,
conditioned on \(\calD\), with total probability at least \(1-6\eta\),
\begin{equation*}
  \left\| {\opv}^{1/2} g - G \right\|_{\splq}
  \lesssim
  \inf_{m \geq 1} \left[
    m^{1/4} \left\{
      \frac{b^4\sigma_{\splq}^4(0, \opv)}{n}
    \right\}^{1/4}
    + \sigma_{\splq}(m, \opv)
  \right]
  \left(\log\!\left(\frac{2}{\eta}\right)\right)^{3/2}.
\end{equation*}
where the covariance operator
\(\opv=\mathbb{E}[U_i\otimes_{\splq}U_i]\).
\end{proposition}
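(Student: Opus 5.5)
The proof will follow the same three-step template as Proposition~\ref{prop:main-bootstrap-coupling}, with the bounded-summand concentration replaced by sub-Gaussian concentration in $\splq$. We fix $m\ge1$ and let $\Pi_m$ be the $\splq$-orthogonal projection onto the top-$m$ eigenspace of $\opv$, with $\Pi_m^\perp=\opid-\Pi_m$. We then write
\[
\opv^{1/2}g-G=\big(\Pi_m\opv^{1/2}g-\Pi_mG\big)+\Pi_m^\perp\opv^{1/2}g-\Pi_m^\perp G,
\]
and bound the three pieces separately.

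\textbf{Head piece.} On the head, the comparison is between two Gaussian vectors in $\RR^m$ with covariances $\Pi_m\opv\Pi_m$ and $\Pi_m\what\opv_{\oracle}\Pi_m$. We take the synchronous coupling $G=\what\opv_{\oracle}^{1/2}g$, built from the same isonormal process $g$. For such a coupling, the Powers--St{\o}rmer-type inequality $\|A^{1/2}-B^{1/2}\|_{\mathrm{HS}}^2\le\|A-B\|_{\mathrm{tr}}$ controls the conditional second moment:
\[
\EE\big[\|\Pi_m(\opv^{1/2}-\what\opv_{\oracle}^{1/2})g\|^2\,\big|\,\calD\big]\le \sqrt m\,\|\Pi_m(\opv-\what\opv_{\oracle})\Pi_m\|_{\mathrm{HS}},
\]
where we passed from trace norm to HS norm on the rank-$m$ block. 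Gaussian concentration then gives a $\log^{1/2}(2/\eta)$ tail. This produces the $m^{1/4}(\cdot)^{1/4}$ shape, once $\|\opv-\what\opv_{\oracle}\|_{\mathrm{HS}}\lesssim b^2\sigma^2_{\splq}(0,\opv)n^{-1/2}\,\mathrm{polylog}(1/\eta)$ is available.

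\textbf{Covariance concentration (main obstacle).} The needed bound on $\|\opv-\what\opv_{\oracle}\|_{\mathrm{HS}}$ is the key step. We split $\what\opv_{\oracle}-\opv$ into the centered sum $n^{-1}\sum_i(U_i\otimes U_i-\opv)$ plus the rank-one term $\EE_{\PP_n}U_i\otimes\EE_{\PP_n}U_i$. Since $\what\opv_{\oracle}$ is defined through $V_i=U_i+\text{const}$ and the subtracted mean removes the constant, the $U_i$ can be used directly.
\begin{itemize}
\item By the sub-Gaussian hypothesis, $\|U_i\|_{\splq}$ has sub-Gaussian tails with scale $b\,\tr(\opv)^{1/2}$, via the Hanson--Wright / Hilbert-space norm bound for $b$-sub-Gaussian elements. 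Hence $\|U_i\otimes U_i\|_{\mathrm{HS}}=\|U_i\|^2$ is sub-exponential with $\psi_1$-norm $\lesssim b^2\tr(\opv)$.
\item A Bernstein inequality for sub-exponential sums in the Hilbert space of HS operators (Pinelis--Sakhanenko with truncation) gives $\|n^{-1}\sum(U_i\otimes U_i-\opv)\|_{\mathrm{HS}}\lesssim b^2\tr(\opv)\{\sqrt{\log(2/\eta)/n}+\log(2/\eta)\log n/n\}$.
\item The mean term is $\|\EE_{\PP_n}U_i\|^2\lesssim b^2\tr(\opv)\log(2/\eta)/n$, of lower order.
\end{itemize}
The care required here is tracking that the second-order term stays subdominant and that the $\log$ powers combine into $(\log(2/\eta))^{3/2}$ after the fourth root and the Gaussian tail.

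\textbf{Tail pieces.} $\|\Pi_m^\perp\opv^{1/2}g\|$ is the norm of a Gaussian with trace $\sigma^2_{\splq}(m,\opv)$, so it is $\lesssim\sigma_{\splq}(m,\opv)\log^{1/2}(2/\eta)$. For $\|\Pi_m^\perp G\|$, its conditional second moment is $\tr(\Pi_m^\perp\what\opv_{\oracle})\le\sigma^2_{\splq}(m,\opv)+\tr(\Pi_m^\perp(\what\opv_{\oracle}-\opv))$. We bound the latter by a scalar Bernstein inequality for the sub-exponential variables $\|\Pi_m^\perp U_i\|^2$, whose $\psi_1$-norm is $\lesssim b^2\sigma^2_{\splq}(m,\opv)$, which keeps it within a constant times $\sigma^2_{\splq}(m,\opv)\log(2/\eta)$. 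A union bound over the data events and the conditional Gaussian events gives total probability $1-6\eta$, and taking the infimum over $m$ finishes the proof.
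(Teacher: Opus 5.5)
Your covariance-concentration step matches the paper's: moment bounds for $\|PU_i\|_{\splq}$, Pinelis--Sakhanenko in the Hilbert--Schmidt space, and scalar Bernstein for $\tr_{\splq}\{\Pi_m^\perp(\what\opv_{\oracle}-\opv)\Pi_m^\perp\}$. The head step, however, is wrong as written.

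With the synchronous coupling $G=\what\opv_{\oracle}^{1/2}g$, the head difference is $\Pi_m(\opv^{1/2}-\what\opv_{\oracle}^{1/2})g$. Its conditional second moment is $\|\Pi_m(\opv^{1/2}-\what\opv_{\oracle}^{1/2})\|_{\mathrm{HS}}^2$. Powers--St{\o}rmer applied to the compressions only controls $(\Pi_m\opv\Pi_m)^{1/2}-(\Pi_m\what\opv_{\oracle}\Pi_m)^{1/2}$. Since $\Pi_m$ is a spectral projection of $\opv$ but not of $\what\opv_{\oracle}$, we have $\Pi_m\what\opv_{\oracle}^{1/2}\neq(\Pi_m\what\opv_{\oracle}\Pi_m)^{1/2}$, so the off-diagonal block $\Pi_m\what\opv_{\oracle}\Pi_m^\perp$ leaks into the head. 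Your displayed inequality is in fact false. Take $\RR^2$ with $m=1$, $\opv=\mathrm{diag}(1,0)$, and $\what\opv_{\oracle}=vv^\top$ with $v=(1,c)^\top$. Then $\Pi_1(\opv-\what\opv_{\oracle})\Pi_1=0$, while $\what\opv_{\oracle}^{1/2}=vv^\top/\sqrt{1+c^2}$ gives $\|\Pi_1(\opv^{1/2}-\what\opv_{\oracle}^{1/2})\|_{\mathrm{HS}}^2=2\{1-(1+c^2)^{-1/2}\}>0$.

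There are two ways to repair this.
\begin{itemize}
\item \textbf{The paper's route.} The paper does not couple synchronously on the head. It compares $(\Pi_m\opv\Pi_m)^{1/2}g$ with $(\Pi_m\what\opv_{\oracle}\Pi_m)^{1/2}g$, which has the correct head law and to which the $m$-dimensional Powers--St{\o}rmer bound applies. It then uses the conditional Strassen--Dudley lemma (Lemma~\ref{lem:strassen-coupling}) to attach a tail, so that the glued $G$ has conditional law $\what\opv_{\oracle}^{1/2}g$.
\item \textbf{Keep your synchronous coupling.} Apply Powers--St{\o}rmer globally, $\|\opv^{1/2}-\what\opv_{\oracle}^{1/2}\|_{\mathrm{HS}}^2\le\|\opv-\what\opv_{\oracle}\|_{\mathrm{tr}}$, and split the trace norm into blocks. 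The three blocks touching $\Pi_m$ have rank at most $m$, so each has trace norm at most $\sqrt m\,\|\opv-\what\opv_{\oracle}\|_{\mathrm{HS}}$. The remaining block satisfies $\|\Pi_m^\perp(\opv-\what\opv_{\oracle})\Pi_m^\perp\|_{\mathrm{tr}}\le 2\sigma^2_{\splq}(m,\opv)+|\tr_{\splq}\{\Pi_m^\perp(\what\opv_{\oracle}-\opv)\Pi_m^\perp\}|$. Borell's inequality then gives the same shape as \eqref{eq:abstract-bootstrap-coupling-l2q}, with no gluing lemma and no separate head/tail split.
\end{itemize}

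One smaller point: in the tail Bernstein step, keep the factor $n^{-1/2}$. The deviation is $b^2\sigma^2_{\splq}(m,\opv)\{\sqrt{\log(2/\eta)/n}+\log(2/\eta)/n\}$, not merely a constant times $\sigma^2_{\splq}(m,\opv)\log(2/\eta)$. The resulting term $b\,\sigma_{\splq}(m,\opv)n^{-1/4}$ must be absorbed into the $m^{1/4}$ term via $\sigma_{\splq}(m,\opv)\le\sigma_{\splq}(0,\opv)$. This is necessary because the statement allows no factor $b$ on $\sigma_{\splq}(m,\opv)$.
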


Combining these two coupling bounds with the unchanged bias, Bahadur,
feasible-bootstrap, and anti-concentration arguments from the main analysis
yields sub-Gaussian counterparts of
Corollaries~\ref{cor:bd-density}--\ref{cor:moment-density}.  Throughout the
following corollaries, fix \(\alpha\in(0,1)\), take
\(\eta\asymp n^{-1}\), and let \(n\to\infty\). Suppose
Assumptions~\ref{assump:error}, \ref{assump:source-condition},
\ref{assump:pseudo-dimn}, \ref{assump:lower-boundedness-of-covariance-operator},
and~\ref{assump:sub-gaussian-feature} hold, with source exponent
\(s\geq1/2\) and \(\tau_n\geq\tau_0\) for all sufficiently large \(n\).
In each corollary, the stated density-ratio and truncation conditions verify
Assumption~\ref{assump:covariate-shift-estimation} with
\(\mathfrak C_1\) bounded uniformly in \(n\). All other constants in these
assumptions are independent of \(n\).

\begin{corollary}[bounded density ratio under sub-Gaussian regularity]
\label{cor:bd-density-subgaussian}
Suppose that \(\rho(X)\leq\overline\rho\) \(\mathbb P\)-almost surely, and let
\(\tau_n=C_0\) for a sufficiently large fixed constant
\(C_0\geq\max\{\overline\rho,\tau_0\}\).
\begin{enumerate}
\item[(1)] If \(\mu_k\asymp k^{-\beta}\) for some \(\beta>3\) and all
\(k\geq1\), then, by choosing
\[
\lambda_n\asymp n^{-\frac{2\beta}{3\beta+2}},
\]
we obtain
\[
\big|\PP\big(f_{\ast}\notin\cs_{\alpha}(\what f)\big)-\alpha\big|
\lesssim
n^{-\frac{\beta-3}{2(3\beta+2)}}\log^3n.
\]
\item[(2)] If \(\mu_k\asymp\exp(-c k^\zeta)\) for some \(c,\zeta>0\) and
all \(k\geq1\), then, by choosing \(\lambda_n\asymp n^{-2/3}\), we obtain
\[
\big|\PP\big(f_{\ast}\notin\cs_{\alpha}(\what f)\big)-\alpha\big|
\lesssim
n^{-1/6}(\log n)^{3+1/\zeta}.
\]
\end{enumerate}
\end{corollary}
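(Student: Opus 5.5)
The plan is to instantiate Theorem~\ref{thm:main-coverage} with $\tau_n=C_0$ fixed and $\eta\asymp n^{-1}$. Since $\tau_n$ is constant, the factor $\sqrt{\tau_n}$ is harmless, and it suffices to show that $\Delta=\Delta_{\mathrm{bias}}+\Delta_G+\Delta_B$ is of the stated order. Throughout, Assumption~\ref{assump:covariate-shift-estimation} holds with $\mathfrak C_1\le\max\{1,(\overline\rho/C_0)^{1/2}\}=1$ by Remark~\ref{rmk:density-ratio}(i), because $C_0\ge\overline\rho$ gives $w_n=\rho$ and $\oplpw=\oplq$. This identity also makes Assumption~\ref{assump:lower-boundedness-of-covariance-operator} automatic with $\mathfrak C_2=1$, and Lemma~\ref{lem:density-upper-bound-for-gaussian-in-splq} supplies the anti-concentration with a constant density bound. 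The terms are assembled as follows.
\begin{itemize}
\item The bias is controlled by Proposition~\ref{prop:main-bias}; since $s\ge 1/2$, this gives $\Delta_{\mathrm{bias}}\lesssim\sqrt n\,\lambda_n$.
\item The Bahadur and feasible-bootstrap remainders come unchanged from Propositions~\ref{prop:main-bahadur} and~\ref{prop:main-feasible-bootstrap}. I take $\Delta_{G,\mathrm{coup}}$ and $\Delta_{B,\mathrm{coup}}$ from the sub-Gaussian Propositions~\ref{prop:gaussian-coupling-subgaussian} and~\ref{prop:bootstrap-coupling-subgaussian}. The sub-Gaussian parameter $b$ is a constant by Lemma~\ref{lem:sub-gaussianity-of-ui}.
\item Corollary~\ref{cor:main-approximate-bootstrap}, applied with tolerance $c\eta^2$, changes only the logarithmic factors.
\end{itemize}

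The spectral inputs come next. Polynomial decay $\mu_k\asymp k^{-\beta}$ gives $\pdn\asymp\lambda_n^{-1/\beta}$, while exponential decay gives $\pdn\asymp\log^{1/\zeta}(1/\lambda_n)$. For the covariance $\opv$, I transport to $\mathsf H$ using Lemma~\ref{lem:LQ_half_unitary_isomorphism}. Using $\oplpw=\oplq$ and bounded $\varepsilon$, $w_n$ and kernel, this gives
\begin{equation*}
\opv\preceq C\,\oplq^{1/2}(\oplq+\lambda_n\opid)^{-1}\oplq(\oplq+\lambda_n\opid)^{-1}\oplq^{1/2}.
\end{equation*}
Hence $\mu_j(\opv,\splq)\lesssim\mu_j^2/(\mu_j+\lambda_n)^2\le\min\{1,\mu_j^2\lambda_n^{-2}\}$. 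From this I get $\|\opv\|_{\splq}\lesssim1$ and $\tr_{\splq}(\opv)\lesssim\pdn$. The tail satisfies $\sigma^2_{\splq}(m,\opv)\lesssim\sum_{j>m}\min\{1,\mu_j^2\lambda_n^{-2}\}$, which is $\lambda_n^{-2}m^{1-2\beta}$ (for $m\gtrsim\lambda_n^{-1/\beta}$) in the polynomial case and exponentially small beyond $m\asymp\log^{1/\zeta}(n)$ in the exponential case.

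The tuning is then a balance between bias and the feasible-bootstrap term. The bias $\sqrt n\lambda_n$ is balanced against
\begin{equation*}
\Delta_{B,\mathrm{feas}}\asymp\pdn(n\lambda_n)^{-1/2}\log^3 n.
\end{equation*}
In the polynomial case this reads $\lambda_n^{3/2+1/\beta}\asymp n^{-1}$, i.e.\ $\lambda_n\asymp n^{-2\beta/(3\beta+2)}$, exactly the stated choice. Both terms are then of order $n^{-(\beta-2)/(2(3\beta+2))}$. The Bahadur term, of order $\sqrt{\pdn/(n\lambda_n)}$, is smaller.

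For the two coupling terms I choose $m=n^{a}$ to balance $b\,\sigma_{\splq}(m,\opv)$ against $m^{3/2}n^{-1/2}\log^2 n$, and against $m^{1/4}(\pdn^2/n)^{1/4}$. I expect this balancing to be the main bookkeeping obstacle: it is what degrades the exponent from $\beta-2$ to the stated $\beta-3$, and it forces $\beta>3$. I would first try $m\asymp\lambda_n^{-1/\beta}n^{\epsilon}$, so that the tail $\lambda_n^{-1}m^{1/2-\beta}$ becomes negligible, and then check that the $m^{3/2}/\sqrt n$ term is at most $n^{-(\beta-3)/(2(3\beta+2))}$. If it is not, I would optimize $m$ exactly and accept whichever exponent results. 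The factor $(3/\eta)^{1/\log(mn)}$ is $O(1)$ when $\eta\asymp n^{-1}$.

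In the exponential case, take $\lambda_n\asymp n^{-2/3}$ and $m\asymp(\log n)^{1/\zeta}$. The bias is then $n^{-1/6}$. The feasible term is $n^{-1/6}(\log n)^{1/\zeta+3}$, which is the dominant polylogarithm. The coupling terms are $n^{-1/2}$ and $n^{-1/4}$ times polylogarithms, with an exponentially small tail. Summing these and inserting into Theorem~\ref{thm:main-coverage} together with $\eta\asymp n^{-1}$ yields the bound $n^{-1/6}(\log n)^{3+1/\zeta}$.
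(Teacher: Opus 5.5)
Your argument is correct, and for exponential decay it coincides with the paper's. For polynomial decay you take a sharper route.

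\textbf{Where your route differs.} You exploit $\oplpw=\oplq$ to get $\mu_j(\opv,\splq)\lesssim(\mu_j/(\mu_j+\lambda_n))^2$, hence $\sigma^2_{\splq}(m,\opv)\lesssim\lambda_n^{-2}m^{1-2\beta}$. The paper instead plugs in the generic Lemma~\ref{lem:tail-sum-sigma-bound}, $\sigma^2_{\splq}(m,\opv)\lesssim(\tau_n/\lambda_n)\sigma^2(m,\oplq)\asymp\lambda_n^{-1}m^{1-\beta}$. That bound holds for any truncation level and is reused in Corollaries~\ref{cor:subexp-density-subgaussian} and~\ref{cor:moment-density-subgaussian}.

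\textbf{Where the $\beta-3$ actually comes from.} The paper's cruder tail is the only source of it. At $\lambda_n\asymp n^{-2\beta/(3\beta+2)}$, the paper's bootstrap-coupling term $\sqrt{\tau_n/\lambda_n}\,(\tau_n\lambda_n^{1-2/\beta}/\sqrt n)^{(\beta-1)/(2\beta-1)}$ is of order $n^{-(\beta-3)(3\beta-2)/(2(3\beta+2)(2\beta-1))}$. The stated exponent $(\beta-3)/(2(3\beta+2))$ is then a single exponent dominating all four terms.

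So your expectation that the coupling balance degrades $\beta-2$ to $\beta-3$ is wrong within your own setup. Your trial choice already works. Take $m\asymp\lambda_n^{-1/\beta}n^{\epsilon}$ with $\epsilon=\beta/((3\beta+2)(2\beta-1))$. Then $\lambda_n^{-1}m^{1/2-\beta}$ is exactly of order $n^{-(\beta-2)/(2(3\beta+2))}$. Meanwhile $m^{3/2}n^{-1/2}$ and $m^{1/4}\pdn^{1/2}n^{-1/4}$ are of orders $n^{-(3\beta-1)(\beta-2)/((3\beta+2)(2\beta-1))}$ and $n^{-(3\beta^2-6\beta+2)/(2(3\beta+2)(2\beta-1))}$, both smaller once $\beta>2$.

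Your route therefore yields $n^{-(\beta-2)/(2(3\beta+2))}\log^3 n$ for every $\beta>2$, which implies the stated bound. This agrees with the paper's own sharp piecewise rate for $\beta>3+\sqrt5$ and improves on it for $3<\beta<3+\sqrt5$. The trade-off is that the paper's bookkeeping is uniform across truncation regimes, while yours is specific to the untruncated case $\oplpw=\oplq$.

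\textbf{Two bookkeeping points to add.}
\begin{enumerate}
\item Propositions~\ref{prop:main-bahadur} and~\ref{prop:main-feasible-bootstrap} are stated under $\tau_n\to\infty$, which fails for $\tau_n=C_0$. As the paper does, you must note that their proofs use only the envelope $w_n\le C_0$ and the sample-size conditions. Those conditions hold here, since $n\lambda_n$ grows polynomially while $C_2^{-1}\asymp\log^2 n$ at $\eta\asymp n^{-1}$.
\item Assumption~\ref{assump:lower-boundedness-of-covariance-operator} is only partly automatic. The identity $\oplpw=\oplq$ gives the operator inequality with $\mathfrak C_2=1$. But the lower bound $\rho\ge\underline\rho$, used in Lemma~\ref{lem:anti-concentration-variance-lower-bound}, remains a standing hypothesis.
\end{enumerate}
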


\begin{corollary}[sub-exponential density ratio under sub-Gaussian regularity]
\label{cor:subexp-density-subgaussian}
Suppose that the density ratio satisfies the sub-exponential tail condition
in Definition~\ref{def:bernstein-moment-condition-for-weight}.
\begin{enumerate}
\item[(1)] If \(\mu_k\asymp k^{-\beta}\) for some \(\beta>3\) and all
\(k\geq1\), then, by choosing
\[
\tau_n\asymp
n^{\frac{\beta-3}{6(3\beta+2)}},
\qquad
\lambda_n\asymp n^{-\frac{2\beta}{3\beta+2}},
\]
we obtain
\[
\big|\PP\big(f_{\ast}\notin\cs_{\alpha}(\what f)\big)-\alpha\big|
\lesssim
n^{-\frac{\beta-3}{4(3\beta+2)}}\log^3n.
\]
\item[(2)] If \(\mu_k\asymp\exp(-c k^\zeta)\) for some \(c,\zeta>0\) and
all \(k\geq1\), then, by choosing
\[
\tau_n\asymp n^{1/150},
\qquad
\lambda_n\asymp n^{-2/3},
\]
we obtain
\[
\big|\PP\big(f_{\ast}\notin\cs_{\alpha}(\what f)\big)-\alpha\big|
\lesssim
n^{-47/300}(\log n)^{3+1/\zeta}.
\]
\end{enumerate}
\end{corollary}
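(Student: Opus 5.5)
The plan is to repeat the argument behind Corollary~\ref{cor:subexp-density}, using the sub-Gaussian coupling bounds of Propositions~\ref{prop:gaussian-coupling-subgaussian} and \ref{prop:bootstrap-coupling-subgaussian} in place of Propositions~\ref{prop:main-gaussian-coupling} and \ref{prop:main-bootstrap-coupling}. Everything then feeds into Theorem~\ref{thm:main-coverage} with $\eta\asymp n^{-1}$.

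First I would verify Assumption~\ref{assump:covariate-shift-estimation} with $\mathfrak{C}_1\le\sqrt2$ via Remark~\ref{rmk:density-ratio}(ii). The required condition is $\tau_n\gtrsim\lambda_n^{-(1+d)\gamma/(m_0-1)}$. For any polynomially growing $\tau_n$ this holds once the integer $m_0$ is taken large enough, since the Bernstein moments hold for every $m$. In case (1) I take $d=1/\beta$ in Assumption~\ref{assump:pseudo-dimn}, so that $\pdn\asymp\lambda_n^{-1/\beta}$. In case (2) any $d>0$ works, and $\pdn\lesssim\log^{1/\zeta}(1/\lambda_n)$.

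Next I would collect the error terms at the given $(\tau_n,\lambda_n)$. The bias term from Proposition~\ref{prop:main-bias} is $\sqrt n\lambda_n$, using $s\ge1/2$. The Bahadur term is $\tau_n\sqrt{\pdn\log n/(n\lambda_n)}\log^2 n$. The feasibility term is $\tau_n\pdn/\sqrt{n\lambda_n}\cdot\log^3 n$.

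The coupling terms are handled as follows. Lemma~\ref{lem:sub-gaussianity-of-ui} gives $b=O(1)$. The spectrum of $\opv$ is dominated, via Assumption~\ref{assump:covariate-shift-estimation}, by $\tau_n$ times that of $(\oplq+\lambda_n\opid)^{-2}\oplq\cdot\oplq$-type operators. This gives $\sigma^2_{\splq}(m,\opv)\lesssim\tau_n\sum_{j>m}\mu_j^2/(\mu_j+\lambda_n)^2$, i.e.\ roughly $\tau_n\max\{m^{1-2\beta}\lambda_n^{-2},\ \lambda_n^{-1/\beta}\mathbf 1\{m\lesssim\lambda_n^{-1/\beta}\}\}$. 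It also gives $\|\opv\|\lesssim\tau_n$ and $\tr(\opv)\lesssim\tau_n\pdn$.

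I would take $m\asymp\lambda_n^{-1/\beta}$ up to a small polynomial enlargement, so that the tail is $o(1)$. Then the Gaussian coupling term is about $\tau_n^{1/2}m^{3/2}\log^2 n/\sqrt n$, and the bootstrap coupling term is about $m^{1/4}(\tau_n^2\pdn^2/n)^{1/4}\log^{3/2}n$.

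For the final step, multiply the total $\Delta$ by $\sqrt{\tau_n}$, as required by Theorem~\ref{thm:main-coverage} and Lemma~\ref{lem:density-upper-bound-for-gaussian-in-splq}. With $\lambda_n\asymp n^{-2\beta/(3\beta+2)}$ one has $n\lambda_n\asymp n^{(\beta+2)/(3\beta+2)}$ and $\pdn\asymp n^{2/(3\beta+2)}$. The dominant terms are then the feasibility term, of order $\tau_n^{3/2}n^{-(\beta-2)/(2(3\beta+2))}$, and the bias term, of order $n^{(2-\beta)/(2(3\beta+2))}$. The coupling terms should be of the same or smaller order.

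Plugging in $\tau_n\asymp n^{(\beta-3)/(6(3\beta+2))}$ gives $\tau_n^{3/2}\asymp n^{(\beta-3)/(4(3\beta+2))}$. The resulting rate is $n^{-(\beta-3)/(4(3\beta+2))}$ after the bookkeeping: the bound in Corollary~\ref{cor:bd-density-subgaussian} is $n^{-(\beta-3)/(2(3\beta+2))}$, and the extra $\tau_n^{3/2}$ factor halves the exponent. Case (2) is the same computation with logarithmic $\pdn$ and $m\asymp\log^{1/\zeta}n$. With $\lambda_n\asymp n^{-2/3}$ and $\tau_n\asymp n^{1/150}$, the base rate $n^{-1/6}$ is degraded by $\tau_n^{3/2}=n^{1/100}$, giving $n^{-1/6+1/100}=n^{-47/300}$.

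I expect the main obstacle to be the exponent bookkeeping. Specifically, it is showing that the Gaussian coupling term with $m^{3/2}$ and the bootstrap coupling term do not dominate after the $\sqrt{\tau_n}$ factor. This needs a careful choice of $m$ balancing $\sigma_{\splq}(m,\opv)$ against $m^{3/2}/\sqrt n$, and sharp spectral bounds for $\opv$ in $\splq$ via the unitary map of Lemma~\ref{lem:LQ_half_unitary_isomorphism}. A secondary issue is confirming that $m_0$ can be chosen independent of $n$ so that $\mathfrak{C}_1$ stays bounded, and that the sample-size conditions of Propositions~\ref{prop:main-bahadur} and \ref{prop:main-feasible-bootstrap} hold for large $n$.
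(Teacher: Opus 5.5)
Your overall strategy and final bookkeeping are the paper's. You verify Assumption~\ref{assump:covariate-shift-estimation} through the Bernstein condition with a fixed large $m_0$. You note that every error term grows at most linearly in $\tau_n=n^t$, so $\sqrt{\tau_n}\,\Delta\lesssim n^{-\widetilde r_\beta+3t/2}\log^3 n$ with $\widetilde r_\beta=(\beta-3)/(2(3\beta+2))$, and $t=\widetilde r_\beta/3$ halves the exponent. In the exponential case you get $-1/6+\tfrac32\cdot\tfrac1{150}=-47/300$; there you should carry the factor $(\log n)^{3+1/\zeta}$, which comes from $\log^3(2/\eta)\,\pdn$ in the feasibility term.

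\textbf{Unjustified spectral tail bound.} The step that fails as justified is $\sigma^2_{\splq}(m,\opv)\lesssim\tau_n\sum_{j>m}\mu_j^2/(\mu_j+\lambda_n)^2$ ``via Assumption~\ref{assump:covariate-shift-estimation}''. What you actually have is $\opv\preceq_{\splq}C\tau_n(\oplpw+\lambda_n\opid)^{-1}\oplpw(\oplpw+\lambda_n\opid)^{-1}\oplq$. Its $\splq$-spectrum is that of $\oplq^{1/2}(\oplpw+\lambda_n\opid)^{-1}\oplpw(\oplpw+\lambda_n\opid)^{-1}\oplq^{1/2}$ on $\sfH$, by Lemma~\ref{lem:same-spectrum-of-tlq-and-lq-half-tlq-half}.
\begin{itemize}
\item Assumption~\ref{assump:covariate-shift-estimation} only yields $(\oplpw+\lambda_n\opid)^{-1}\preceq(1+\mathfrak C_1^2)(\oplq+\lambda_n\opid)^{-1}$. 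That gives eigenvalues of order $\mu_j/(\mu_j+\lambda_n)$, not $\mu_j^2/(\mu_j+\lambda_n)^2$.
\item The paper uses the cruder Lemma~\ref{lem:tail-sum-sigma-bound}: $\sigma^2_{\splq}(m,\opv)\lesssim(\tau_n/\lambda_n)\sigma^2(m,\oplq)\asymp\tau_n\lambda_n^{-1}m^{1-\beta}$.
\item Your squared form would need the upper comparison $(\oplpw+\lambda_n\opid)^{-1}\oplpw(\oplpw+\lambda_n\opid)^{-1}\preceq c\,(\oplq+\lambda_n\opid)^{-1}\oplq(\oplq+\lambda_n\opid)^{-1}$. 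That is the reverse of Assumption~\ref{assump:lower-boundedness-of-covariance-operator}. It is not assumed, and it does not follow from operator domination, because $x\mapsto x/(x+\lambda)^2$ is not operator monotone. It holds exactly only when $\oplpw=\oplq$.
\end{itemize}

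\textbf{Consequence for the choice of $m$.} Your choice of $m\asymp\lambda_n^{-1/\beta}$ times a ``small'' power of $n$ is therefore unsupported. The criterion ``tail $=o(1)$'' is also the wrong target. The halving argument needs every term of $\Delta$ to be $\lesssim\tau_n n^{-\widetilde r_\beta}$ up to logarithms.

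\textbf{Repair.} With the valid tail bound, that target forces $m$ to be roughly of order $n^{3/(3\beta+2)}$, compared with $\lambda_n^{-1/\beta}\asymp n^{2/(3\beta+2)}$. That choice works for both couplings. So do the paper's choices: $m\asymp(n/\lambda_n)^{1/(\beta+2)}$ in Proposition~\ref{prop:gaussian-approximation-polynomial-sub-gaussianity}, and the $m$ of Proposition~\ref{prop:bootstrap-approximation-polynomial-sub-gaussianity}. At $\lambda_n=n^{-2\beta/(3\beta+2)}$ the two coupling terms then have:
\begin{itemize}
\item $n$-exponents $-(3\beta^2-7\beta-2)/(2(\beta+2)(3\beta+2))$ and $-(\beta-3)(3\beta-2)/(2(3\beta+2)(2\beta-1))$, both at most $-\widetilde r_\beta$;
\item $\tau_n$-powers $1/2$ and $(4\beta-3)/(2(2\beta-1))<1$.
\end{itemize}
That is all the halving argument needs.

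\textbf{Which terms dominate.} Your claim that the bias and feasibility terms dominate is not right for all $\beta>3$. For $\beta$ near $3$, the available bound on the Gaussian-coupling term, $\sqrt{\tau_n/n}\,(n/\lambda_n)^{3/(2(\beta+2))}$, decays more slowly than $n^{-(\beta-2)/(2(3\beta+2))}$. This is why the corollary's rate is built on $\widetilde r_\beta$, as in your final appeal to Corollary~\ref{cor:bd-density-subgaussian}.
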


\begin{corollary}[bounded \(\theta\)-th moment under sub-Gaussian regularity]
\label{cor:moment-density-subgaussian}
Suppose that
\(\EE_{\PP}[\rho(X)^\theta]\leq\Omega_\theta<\infty\) for some
\(\theta>5\). If \(\mu_k\asymp\exp(-c k^\zeta)\) for some
\(c,\zeta>0\) and all \(k\geq1\), then, by choosing
\[
\lambda_n\asymp
n^{-\frac{8(\theta^2-1)}{3(4\theta^2+3\theta-3)}},
\qquad
\tau_n\asymp
n^{\frac{19\theta+1}{6(4\theta^2+3\theta-3)}},
\]
we obtain
\[
\big|\PP\big(f_{\ast}\notin\cs_{\alpha}(\what f)\big)-\alpha\big|
\lesssim
n^{-\frac{(\theta-5)(8\theta+1)}
{12(4\theta^2+3\theta-3)}}
(\log n)^{3+1/\zeta}.
\]
\end{corollary}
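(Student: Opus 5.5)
The plan is to feed explicit bounds into Theorem~\ref{thm:main-coverage}, which gives a miscoverage error of order $\eta+\sqrt{\tau_n}\,\Delta$, and then balance exponents in $n$. Only the two coupling terms change: Propositions~\ref{prop:gaussian-coupling-subgaussian} and~\ref{prop:bootstrap-coupling-subgaussian} replace Propositions~\ref{prop:main-gaussian-coupling} and~\ref{prop:main-bootstrap-coupling}. Their sub-Gaussian parameter $b$ is supplied by Lemma~\ref{lem:sub-gaussianity-of-ui} and does not depend on $n$ or $\tau_n$.

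\textbf{Step 1 (transfer constant).} Verify Assumption~\ref{assump:covariate-shift-estimation} through Remark~\ref{rmk:density-ratio}(iii), i.e.\ Lemma~\ref{lem:bound-a-under-vartheta-moment}. This requires $\tau_n^{\theta-1}\gtrsim E_d^2\lambda_n^{-(1+d)}$. Under exponential decay $\mathcal N_\lambda\asymp\log^{1/\zeta}(1/\lambda)$, so Assumption~\ref{assump:pseudo-dimn} holds for every small $d>0$ with $E_d$ bounded up to logarithms. The constraint therefore reads $\tau_n^{\theta-1}\gtrsim\lambda_n^{-1}$ up to $\log$ factors, and I will check it holds for the stated choices $\lambda_n=n^{-8(\theta^2-1)/[3(4\theta^2+3\theta-3)]}$ and $\tau_n=n^{(19\theta+1)/[6(4\theta^2+3\theta-3)]}$. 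With $\mathfrak C_1\le\sqrt2$ and $\eta\asymp n^{-1}$, the factor $\log(1/\eta)$ becomes $\log n$.

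\textbf{Step 2 (spectral quantities of $\opv$).} Lemma~\ref{lem:LQ_half_unitary_isomorphism} transports $\opv$ to $\mathsf H$, where it becomes $\oplq^{1/2}(\oplpw+\lambda_n\opid)^{-1}\Sigma_U(\oplpw+\lambda_n\opid)^{-1}\oplq^{1/2}$. Here $\Sigma_U\preceq C\tau_n\oplpw$, using $w_n^2\le\tau_n w_n$, $|\eps|\le\overline\sigma$, and the bound on $\|f_*-\fpw\|_\infty$ from Proposition~\ref{prop:main-bias}. Assumption~\ref{assump:covariate-shift-estimation} then gives the following.
\begin{itemize}
\item $\|\opv\|\lesssim\tau_n\lambda_n^{-1}$.
\item $\tr_{\splq}(\opv)\lesssim\tau_n\mathcal N_{\lambda_n}$, after comparing with $\mathcal N_\lambda$ through Assumption~\ref{assump:covariate-shift-estimation} once more.
\item $\mu_j(\opv)\lesssim\tau_n\min\{\mu_j/\lambda_n^2,\lambda_n^{-1}\}$.
\end{itemize}
Choosing $m\asymp(\log n)^{1/\zeta}$ makes $\sigma_{\splq}(m,\opv)$ polynomially negligible, since $\mu_j$ decays like $e^{-cj^\zeta}$. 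All polynomial-in-$m$ factors then become powers of $\log n$, which produces the $(\log n)^{3+1/\zeta}$ factor.

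\textbf{Step 3 (collect and balance).} With $s\ge1/2$ the remainders are:
\begin{itemize}
\item bias, $\sqrt n\lambda_n$;
\item Bahadur term, $\tau_n\sqrt{\mathcal N/(n\lambda_n)}$;
\item feasibility term, $\tau_n\mathcal N/\sqrt{n\lambda_n}$;
\item Gaussian coupling, $\|\opv\|^{1/2}m^{3/2}/\sqrt n\asymp\sqrt{\tau_n/(n\lambda_n)}$;
\item bootstrap coupling, $(\tr^2/n)^{1/4}\asymp(\tau_n^2/n)^{1/4}$.
\end{itemize}
Each is multiplied by the anti-concentration factor $\sqrt{\tau_n}$ of Lemma~\ref{lem:density-upper-bound-for-gaussian-in-splq}. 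The dominant terms will be $\sqrt{\tau_n}\sqrt n\lambda_n$, $\tau_n^{3/2}(n\lambda_n)^{-1/2}$ and $\tau_n(n)^{-1/4}$, together with the truncation constraint of Step~1. I will equate the bias term with the feasibility term and impose the Step~1 constraint (possibly with a logarithmic slack) to solve for the two exponents. I will then check that the common rate equals $n^{-(\theta-5)(8\theta+1)/[12(4\theta^2+3\theta-3)]}$, and that it is negative exactly when $\theta>5$.

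\textbf{Main obstacle.} I expect the hardest step to be the exact $\tau_n$-bookkeeping in Step~2. The powers of $\tau_n$ in $\|\opv\|$ and $\tr(\opv)$, the $\tau_n$ inside the Bahadur and feasibility terms, and the extra $\sqrt{\tau_n}$ from anti-concentration all feed into the final exponents, so an error in any one of them shifts the rate. My first check will be that Lemma~\ref{lem:sub-gaussianity-of-ui} really yields a $b$ independent of $\tau_n$. If instead $b$ scales with $\tau_n$, the $b^3$ factor must be added to the balancing, which may be the source of the unusual $(19\theta+1)$ numerator.
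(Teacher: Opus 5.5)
Your architecture matches the paper's. You use the sub-Gaussian envelope \eqref{eq:exp-subgaussian-total-envelope} (bias, Bahadur, feasible bootstrap, and the two new coupling terms), compatibility through the $\theta$-moment perturbation bound, and the anti-concentration factor $\sqrt{\tau_n}$. The step that fails is how you plan to reach the stated parameters.

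Write $\lambda_n=n^{-a}$, $\tau_n=n^{t}$ and $D:=4\theta^2+3\theta-3$. The stated choice is \emph{not} a point where bias and the feasible-bootstrap term balance with the truncation constraint tight. Equating $\sqrt n\lambda_n$ with $\tau_n/\sqrt{n\lambda_n}$ forces $t=1-3a/2$. Together with $t=a/(\theta-1)$ (up to logarithms), this gives $a=2(\theta-1)/(3\theta-1)$, $t=2/(3\theta-1)$, and a miscoverage exponent $(\theta-5)/\{2(3\theta-1)\}$. Those are different exponents and a different rate from the corollary, and a logarithmic slack in $\tau_n$ would also change the power of $\log n$. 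So your final check that the common rate equals $n^{-(\theta-5)(8\theta+1)/(12D)}$ would fail.

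At the stated values, the truncation constraint has polynomial slack, $t-a/(\theta-1)=(\theta-5)/(2D)>0$. The feasible-bootstrap exponent $-\tfrac12+\tfrac a2+t$ exceeds the bias exponent $\tfrac12-a$ by $(\theta-5)/(6D)>0$. The rate is therefore not a balance; it is simply the value of the single dominant term. Your guess that $(19\theta+1)$ comes from $b$ growing with $\tau_n$ is also wrong. Lemma~\ref{lem:sub-gaussianity-of-ui} gives $b=C_{sg}^3b_0(\kappa\|f_*\|+\overline\sigma)/\underline\sigma$, which is free of $n$ and $\tau_n$. The numerator just reflects the particular slack the paper chose.

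The fix is direct substitution, not optimization. You need four checks.
\begin{itemize}
\item[(i)] Compatibility. The condition $\tau_n\gtrsim(\lambda_n^{-1}\mathcal N_{\lambda_n})^{1/(\theta-1)}$ from \eqref{eq:moment-corollary-compatibility} follows from Lemma~\ref{lem:lq-lw-bound-under-vartheta-moment} with $\mathcal N_{\lambda_n}\lesssim(\log n)^{1/\zeta}$. It holds thanks to the slack $(\theta-5)/(2D)$. Your route through $E_d$ with a fixed $d>0$ costs a factor $\lambda_n^{-d}$, which is a polynomial loss, not a logarithmic one. It is harmless here only because of that slack.
\item[(ii)] Dominance. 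The feasible term beats the bias by $(\theta-5)/(6D)$. It beats the Gaussian coupling $\sqrt{\tau_n/n}$ by $(a+t)/2$, or by $t/2$ with your looser $\|\opv\|$ bound. It beats the bootstrap coupling $\sqrt{\tau_n}\,n^{-1/4}$ by $(a+t)/2-1/4>0$. The Bahadur term has the same power but fewer logarithms.
\item[(iii)] Rate. After the $\sqrt{\tau_n}$ factor, $\tfrac12-\tfrac a2-\tfrac{3t}{2}=(\theta-5)(8\theta+1)/(12D)$.
\item[(iv)] Sample size. $1-a-t=(8\theta^2-\theta-3)/(6D)>0$, which secures the sample-size conditions of the propositions; you did not address these.
\end{itemize}
Finally, the factor $(\log n)^{3+1/\zeta}$ comes from $\log^3(2/\eta)\,\mathcal N_{\lambda_n}$ in the feasible-bootstrap term. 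It does not come from powers of $m$, which sit in polynomially smaller terms.
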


\section{Extension II: Bias-Corrected Multiplier Bootstrap}\label{app:ext-bCboot}

The asymptotic validity of the confidence ball constructed by Algorithm~\ref{alg:bootstrap-confidence-set} relies on undersmoothing to render the deterministic bias
\[
  \sqrt n(\fpw-f_*)
  =
  -\sqrt n\,\lambda_n(\oplpw+\lambda_n\opid)^{-1}f_*
\]
negligible in \(\splq\). The procedure developed here instead estimates the corresponding bias-correction term by replacing \((\oplpw,f_*)\) with \((\oplpnw,\what f)\), yielding, with \(\what\alpha:=(\mak+n\lambda_n\maw^{-1})^{-1}\vy\),
\begin{equation}
  \label{eq:bias-correction-empirical}
  \what b
  :=
  \sqrt n\,\lambda_n(\oplpnw+\lambda_n\opid)^{-1}\what f
  =
  \sqrt n\,\vkx^{\top}(\mak+n\lambda_n\maw^{-1})^{-1}
  n\lambda_n\maw^{-1}\hat\alpha .
\end{equation}
As \(\what b\) estimates the negative of \(\sqrt n(\fpw-f_*)\), the bias-adjusted analogue of \(\sqrt n(\what f-f_*)\) is \(B-\what b\). With \(\what d:=n\lambda_n\maw^{-1}\what\alpha\), equations~\eqref{eq:B-def} and \eqref{eq:bias-correction-empirical} give
\[
  B-\what b
  =
  \sqrt n\,\vkx^{\top}(\mak+n\lambda_n\maw^{-1})^{-1}
  \left\{
    \diag(\what\beps)\vz-\what d
  \right\}.
\]

Algorithm~\ref{alg:bias-corrected-bootstrap-confidence-set} summarizes this practical bias-corrected variant of Algorithm~\ref{alg:bootstrap-confidence-set}, which may improve finite-sample calibration accuracy.    

\begin{algorithm}[htbp]
\caption{Bias-corrected multiplier bootstrap for \(\splq\)-norm confidence balls}
\label{alg:bias-corrected-bootstrap-confidence-set}
\begin{algorithmic}[1]
\REQUIRE Data \(\calD=\{(X_i,Y_i)\}_{i=1}^n\), kernel \(k(\cdot,\cdot)\), density ratio \(\rho(x)=q(x)/p(x)\), weight \(w_n(x)=\min\{\rho(x),\tau_n\}\), confidence level \(1-\alpha\), bootstrap repetitions \(M\), regularization parameter \(\lambda_n\);
\STATE Form \(\mak\), \(\maw\), \(\vy\), \(\vkx\), and \(\mak_{\QQ}\) as in Algorithm~\ref{alg:bootstrap-confidence-set}.
\STATE Compute \(\what\alpha=(\mak+n\lambda_n\maw^{-1})^{-1}\vy\) and \(\what f=\vkx^{\top}\what\alpha\).
\STATE Compute the residual vector \(\what\beps=(Y_1-\what f(X_1),\dots,Y_n-\what f(X_n))^{\top}\).
\STATE Form the dual bias-correction vector \(\what d=n\lambda_n\maw^{-1}\what\alpha\).
\FOR{$b = 1,\dots,M$}
  \STATE Sample \(\vz^{(b)}\sim N(0,\maid_n-\vone\vone^{\top}/n)\).
  \STATE Set the bias-corrected coefficient vector
  \[
    \va_{\mathrm{bc}}^{(b)}
    =
    \sqrt n\,(\mak+n\lambda_n\maw^{-1})^{-1}
    \{\diag(\what\beps)\vz^{(b)}-\what d\}.
  \]
  \STATE Compute the bootstrap statistic
  \[
    T_{\mathrm{bc}}^{(b)}
    =
    \norm{\vkx^{\top}\va_{\mathrm{bc}}^{(b)}}_{\splq}
    =
    \sqrt{(\va_{\mathrm{bc}}^{(b)})^{\top}
    \mak_{\QQ}\va_{\mathrm{bc}}^{(b)}} .
  \]
\ENDFOR
\STATE Let \(\hat r_{\alpha}^{\mathrm{bc}}\) be the empirical \((1-\alpha)100\%\)-percentile of \(\{T_{\mathrm{bc}}^{(b)}\}_{b=1}^M\).
\STATE Define \( \cs_{\alpha}^{\mathrm{bc}}(\what f)=\{f\in\mathsf{H}:\sqrt n\,\norm{\what f-f}_{\splq}\le\hat r_{\alpha}^{\mathrm{bc}}\}\).
\ENSURE the IWKRR \(\what f\) and the bias-corrected critical value \(\hat r_{\alpha}^{\mathrm{bc}}\).
\end{algorithmic}
\end{algorithm}

\section{Extension III: Residual Bootstrap}\label{app:ext-residual-bootstrap}

The residual bootstrap resamples with replacement from the empirical
distribution of the centered fitted residuals.
Conditional on the observed covariates, it generates bootstrap responses
\(Y_{i,\mathrm{res}}^{(b)}=\what f(X_i)+
\what\eps_{i,\mathrm{res}}^{(b)}\), \(i=1,\dots,n\), and refits IWKRR with
the same \(w_n(x)\) and \(\lambda_n\).

\begin{algorithm}[htbp]
\caption{Residual bootstrap for $\splq$-norm confidence balls}
\label{alg:residual-bootstrap-confidence-set}
\begin{algorithmic}[1]
\REQUIRE Data \(\calD=\{(X_i,Y_i)\}_{i=1}^n\), kernel \(k(\cdot,\cdot)\), density ratio \(\rho(x)=q(x)/p(x)\), weight \(w_n(x)=\min\{\rho(x),\tau_n\}\), confidence level \(1-\alpha\), bootstrap repetitions \(M\), regularization parameter \(\lambda_n\).
\STATE Form \(\mak\), \(\maw\), \(\vy\), \(\vkx\), and \(\mak_{\QQ}\) as in Algorithm~\ref{alg:bootstrap-confidence-set}.
\STATE Compute \(\what\alpha=(\mak+n\lambda_n\maw^{-1})^{-1}\vy\) and \(\what f=\vkx^{\top}\what\alpha\).
\STATE Compute the residual vector \(\what\beps=(Y_1-\what f(X_1),\dots,Y_n-\what f(X_n))^{\top}\).
\STATE Center the fitted residuals by setting
\[
  \overline{\what\eps}
  :=
  \frac{1}{n}\sum_{j=1}^n\what\eps_j,
  \qquad
  \what\eps_j^\circ
  :=
  \what\eps_j-\overline{\what\eps}.
\]
\FOR{\(b = 1,\dots,M\)}
  \STATE Sample
  \(\what\eps_{1,\mathrm{res}}^{(b)},\dots,
  \what\eps_{n,\mathrm{res}}^{(b)}\) independently with replacement from
  \(\{\what\eps_1^\circ,\dots,\what\eps_n^\circ\}\).
  \STATE Form the bootstrap-response vector
  \[
    \vy_{\mathrm{res}}^{(b)}
    :=
    \big(
      \what f(X_1)+\what\eps_{1,\mathrm{res}}^{(b)},\dots,
      \what f(X_n)+\what\eps_{n,\mathrm{res}}^{(b)}
    \big)^{\top}.
  \]
  \STATE Refit IWKRR with the same regularized inverse:
  \[
    \what f_{\mathrm{res}}^{(b)}
    :=
    \vkx^{\top}
    (\mak+n\lambda_n\maw^{-1})^{-1}
    \vy_{\mathrm{res}}^{(b)}.
  \]
  \STATE Define the coefficient perturbation
  \[
    \va_{\mathrm{res}}^{(b)}
    :=
    (\mak+n\lambda_n\maw^{-1})^{-1}
    \big(\vy_{\mathrm{res}}^{(b)}-\vy\big),
    \qquad
    \what f_{\mathrm{res}}^{(b)}-\what f
    =
    \vkx^{\top}\va_{\mathrm{res}}^{(b)}.
  \]
  \STATE Compute the bootstrap statistic
  \[
    T_{\mathrm{res}}^{(b)}
    :=
    \norm{\what f_{\mathrm{res}}^{(b)}-\what f}_{\splq}
    =
    \norm{\vkx^{\top}\va_{\mathrm{res}}^{(b)}}_{\splq}
    =
    \left\{
      (\va_{\mathrm{res}}^{(b)})^{\top}
      \mak_{\QQ}\va_{\mathrm{res}}^{(b)}
    \right\}^{1/2}.
  \]
\ENDFOR
\STATE Let \(\hat c_{\alpha,n}^{\mathrm{res}}\) be the empirical
\((1-\alpha)100\%\)-percentile of
\(\{T_{\mathrm{res}}^{(b)}\}_{b=1}^M\).
\STATE Define \(\cs_{\alpha}^{\mathrm{res}}(\what f)\) by replacing
\(\hat c_{\alpha,n}\) with \(\hat c_{\alpha,n}^{\mathrm{res}}\) in
\eqref{eq:feasible-confidence-ball}.
\ENSURE The residual-bootstrap confidence ball
\(\cs_{\alpha}^{\mathrm{res}}(\what f)\).
\end{algorithmic}
\end{algorithm}

{\it Step 1: Gaussian and bootstrap approximation.} The residual-bootstrap coverage guarantee can be established using the same
proof framework as the multiplier bootstrap. Because the point
estimator is unchanged, it suffices to replace the bootstrap approximation in
\eqref{eq:theory-bootstrap-approximation} by its residual-bootstrap analogue: conditionally on \(\calD\),
\(\sqrt n(\what f_{\mathrm{res}}^{(b)}-\what f)\) can be approximated by the
same Gaussian element as in \eqref{eq:theory-gaussian-approximation}. If the
corresponding error is
\(\Delta_{B,\mathrm{res}}\), the proof of
Theorem~\ref{thm:main-coverage} yields the same
coverage bound with \(\Delta_B\) replaced by \(\Delta_{B,\mathrm{res}}\).

To identify the required bootstrap approximation, let
\[
  \psi_n(x)
  :=
  w_n(x)(\oplpw+\lambda_n\opid)^{-1}k_x,
  \qquad
  \what\psi_{i,n}
  :=
  w_n(X_i)(\oplpnw+\lambda_n\opid)^{-1}k_{X_i}.
\]
The leading noise component of the influence function is
\[
  U_{i,\eps}
  :=
  \eps_i\psi_n(X_i).
\]
The normal equations yield the exact decompositions
\begin{align}
  \what f-f_*
  &=
  \frac{1}{n}\sum_{i=1}^n
  \eps_i\what\psi_{i,n}
  -
  \lambda_n(\oplpnw+\lambda_n\opid)^{-1}f_*,
  \\
  \what f_{\mathrm{res}}^{(b)}-\what f
  &=
  \frac{1}{n}\sum_{i=1}^n
  \what\eps_{i,\mathrm{res}}^{(b)}\what\psi_{i,n}
  -
  \lambda_n(\oplpnw+\lambda_n\opid)^{-1}\what f .
\end{align}
The same Bahadur, undersmoothing, and operator-approximation methods used in
the main analysis can be applied to the exact decompositions above to obtain
the first-order representations
\begin{align*}
  \sqrt n(\what f-f_*)
  &\approx_{\splq}
  \frac{1}{\sqrt n}\sum_{i=1}^n
  \eps_i\psi_n(X_i),
  \\
  \sqrt n(\what f_{\mathrm{res}}^{(b)}-\what f)
  &\approx_{\splq}
  \frac{1}{\sqrt n}\sum_{i=1}^n
  \what\eps_{i,\mathrm{res}}^{(b)}\what\psi_{i,n},
\end{align*}
where we ignore the bias $f_{P^w}-f_{\ast}$ for simplicity. 
Starting from these representations, the bootstrap-approximation and coverage
arguments in the main text can be adapted to the residual bootstrap, subject
to the covariance-consistency condition below.

{\it Step 2: covariance comparison.}
The population covariance operator of
\(\frac{1}{\sqrt n}\sum_{i=1}^n\eps_i\psi_n(X_i)\) is
\begin{equation}
  \label{eq:residual-bootstrap-target-covariance}
  \opv_{\eps,n}
  :=
  \mathbb E\!\left[
    U_{i,\eps}\otimes_{\splq}U_{i,\eps}
  \right]
  =
  \mathbb E\!\left[
    \mathbb E[\eps^2\mid X]\,
    \psi_n(X)\otimes_{\splq}\psi_n(X)
  \right].
\end{equation}
Conditional on \(\calD\), the covariance operator of the leading term
generated by Algorithm~\ref{alg:residual-bootstrap-confidence-set} is
\begin{equation}
  \label{eq:residual-bootstrap-covariance}
  \what{\opv}_{\mathrm{res}}
  :=
  \what\sigma_{\mathrm{res}}^2\,
  \mathbb E_n\!\left[
    \what\psi_{i,n}\otimes_{\splq}\what\psi_{i,n}
  \right],
  \qquad
  \what\sigma_{\mathrm{res}}^2
  :=
  \mathbb E_n\!\left[(\what\eps_i^\circ)^2\right].
\end{equation}
Under conditional homoskedasticity,
\(\mathbb E[\eps^2\mid X]=\mathbb E[\eps^2]\) almost surely,
\eqref{eq:residual-bootstrap-target-covariance} factorizes as
\(\opv_{\eps,n}=\mathbb E[\eps^2]\,
\mathbb E[\psi_n(X)\otimes_{\splq}\psi_n(X)]\).
Write
\[
  \what{\mathcal A}_n
  :=
  \mathbb E_n\!\left[
    \what\psi_{i,n}\otimes_{\splq}\what\psi_{i,n}
  \right],
  \qquad
  \mathcal A_n
  :=
  \mathbb E\!\left[
    \psi_n(X)\otimes_{\splq}\psi_n(X)
  \right].
\]
Covariance consistency is therefore ensured by
\[
  \left|\what\sigma_{\mathrm{res}}^2-\mathbb E[\eps^2]\right|
  \left\|\what{\mathcal A}_n\right\|_{\mathrm{HS},\splq}
  +
  \mathbb E[\eps^2]
  \left\|\what{\mathcal A}_n-\mathcal A_n\right\|_{\mathrm{HS},\splq}
  \xrightarrow{\PP}0,\quad \textrm{as } n\to\infty. 
\]

{\it Step 3: comparison with the multiplier bootstrap.}
Notice that \(\what V_i=\what\eps_i\what\psi_{i,n}\). The conditional
covariance operators of the multiplier and residual bootstraps are,
respectively,
\[
  \what{\opv}
  =
  \mathbb E_n\!\left[
    \what\eps_i^2
    \bigl(
      \what\psi_{i,n}\otimes_{\splq}\what\psi_{i,n}
    \bigr)
  \right]
  -
  \mathbb E_n[\what V_i]
  \otimes_{\splq}
  \mathbb E_n[\what V_i],
  \qquad
  \what{\opv}_{\mathrm{res}}
  =
  \what\sigma_{\mathrm{res}}^2
  \mathbb E_n[
    \what\psi_{i,n}\otimes_{\splq}\what\psi_{i,n}
  ].
\]
Thus, the multiplier bootstrap weights each
\(\what\psi_{i,n}\otimes_{\splq}\what\psi_{i,n}\) by its paired squared
residual \(\what\eps_i^2\), whereas the residual bootstrap uses the common
factor \(\what\sigma_{\mathrm{res}}^2\) for all \(i\). The multiplier
bootstrap can therefore accommodate conditional heteroskedasticity, whereas
the residual bootstrap requires the covariance-matching condition above.

\section{Extension IV: mis-coverage rate for heavy-tailed density ratio}\label{app:moment-density-poly}

For ease of exposition,  the mis-coverage rate is presented in Corollary~\ref{cor:moment-density} only for the case when $\calL_Q$ has exponential-decay eigenvalues. The following corollary showcases the mis-coverage rate if $\calL_Q$ has polynomial-decay eigenvalues. 

\begin{corollary}[Polynomial spectral decay under a bounded
\(\theta\)-th moment]
\label{cor:moment-density-polynomial}
Under the assumptions of
Section~\ref{subsec:spectral-rate-consequences}, suppose that
\(\EE_{\PP}[\rho(X)^\theta]\leq\Omega_\theta<\infty\) for some
\(\theta>2\).  If \(\mu_k\asymp k^{-\beta}\) for some \(\beta>5\) and all
\(k\geq1\), assume that
\[
\theta-1>
\frac{4(\beta+1)^2(2\beta+5)}
{\beta(\beta-5)(\beta+10)}.
\]
Take \(d=1/\beta\) in Assumption~\ref{assump:pseudo-dimn} and choose
\[
 \delta=
 \frac{\beta(\theta-1)(\beta-5)(\beta+10)
 -4(\beta+1)^2(2\beta+5)}
 {3(2\beta+5)(\beta+10)
 \{3\beta(\theta-1)(\beta+3)+2(\beta+1)^2\}}.
\]
By choosing
\[
\begin{aligned}
 \lambda_n
 &\asymp
 n^{-\frac{2\beta(\theta-1)(\beta+1)}
 {3\beta(\theta-1)(\beta+3)+2(\beta+1)^2}},\\
 \tau_n
 &\asymp
 n^{\frac{\beta(\theta-1)(\beta-5)(\beta+10)
 +2(3\beta+28)(\beta+1)^2(2\beta+5)}
 {3(2\beta+5)(\beta+10)
 \{3\beta(\theta-1)(\beta+3)+2(\beta+1)^2\}}},
\end{aligned}
\]
we obtain
\[
\big|\PP\big(f_{\ast}\notin \cs_{\alpha}(\what f)\big)-\alpha\big|
\lesssim
n^{-\frac{(\beta+9)
\{\beta(\theta-1)(\beta-5)(\beta+10)
-4(\beta+1)^2(2\beta+5)\}}
{2(\beta+10)(2\beta+5)
\{3\beta(\theta-1)(\beta+3)+2(\beta+1)^2\}}}
\log^3 n.
\]
In particular, the displayed rate converges to zero under any of the
following sufficient regimes:
\[
 \theta\geq9,\ \beta\geq113;
 \qquad
 \theta>9,\ \beta>5+\frac{144}{5(\theta-9)};
 \qquad
 \theta\geq10,\ \beta\geq20.
\]
Another sufficient regime is \(\theta\geq11\) and \(\beta\geq15\).
\end{corollary}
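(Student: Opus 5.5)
The plan is to specialize Theorem~\ref{thm:main-coverage} with $\eta\asymp n^{-1}$. This gives
\[
\big|\PP(f_*\notin\cs_\alpha(\what f))-\alpha\big|\lesssim n^{-1}+\sqrt{\tau_n}\,\big(\Delta_{\mathrm{bias}}+\Delta_{G,\mathrm{Bah}}+\Delta_{G,\mathrm{coup}}+\Delta_{B,\mathrm{feas}}+\Delta_{B,\mathrm{coup}}\big).
\]
Each term is then made explicit as a power of $n$, up to $\log^3 n$, for the proposed $\lambda_n$ and $\tau_n$.

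\textbf{Step 1: verify the hypotheses.}
\begin{itemize}
\item With $\mu_k\asymp k^{-\beta}$ we have $\pdn\asymp\lambda_n^{-1/\beta}$, so Assumption~\ref{assump:pseudo-dimn} holds with $d=1/\beta$.
\item Remark~\ref{rmk:density-ratio}(iii) gives Assumption~\ref{assump:covariate-shift-estimation} with $\mathfrak C_1\le\sqrt2$ as soon as $\tau_n^{\theta-1}\gtrsim\lambda_n^{-(1+1/\beta)}$. I will check this by comparing exponents. The exponent of $\tau_n$, multiplied by $\theta-1$, must dominate $\frac{\beta+1}{\beta}$ times the exponent of $\lambda_n^{-1}$.
\item The displayed choice of $\tau_n$ seems to be exactly this threshold, inflated by the slack $\delta$ so that the remaining terms balance. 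I would confirm this by direct substitution; it is where the factor $2(\beta+1)^2$ in the denominators arises.
\item Lemma~\ref{lem:density-upper-bound-for-gaussian-in-splq} applies because the spectrum is polynomial.
\item I will also check the sample-size conditions of Propositions~\ref{prop:main-bahadur} and~\ref{prop:main-feasible-bootstrap}, namely $n\lambda_n\gg\tau_n\log n$ and $n\lambda_n\pdn\gg\tau_n$. These follow once the final rate is shown to vanish.
\end{itemize}

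\textbf{Step 2: bound each error term.}
\begin{itemize}
\item \emph{Bias.} With $s\ge1/2$, $\Delta_{\mathrm{bias}}\asymp\sqrt n\,\lambda_n$.
\item \emph{Bahadur.} $\Delta_{G,\mathrm{Bah}}\asymp\tau_n\lambda_n^{-(1+1/\beta)/2}n^{-1/2}\log^{5/2}n$.
\item \emph{Feasible bootstrap.} $\Delta_{B,\mathrm{feas}}\asymp\tau_n\lambda_n^{-1/2-1/\beta}n^{-1/2}\log^3 n$.
\item \emph{Envelope.} $\bar U\asymp\tau_n\lambda_n^{-1/2}$.
\item \emph{Spectral tail of $\opv$.} This is needed for both coupling terms. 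Through the unitary map $\oplq^{1/2}$ of Lemma~\ref{lem:LQ_half_unitary_isomorphism}, $\opv$ is unitarily equivalent on $\mathsf H$ to $\oplq^{1/2}\mathcal{M}\oplq^{1/2}$, where
\[
\mathcal{M}:=(\oplpw+\lambda_n\opid)^{-1}\,\EE\big[w_n^2\,\xi^2\,k_X\otimes k_X\big]\,(\oplpw+\lambda_n\opid)^{-1}
\]
and $\xi$ is the residual in $U_i$.
\begin{itemize}
\item Since $w_n^2\xi^2\lesssim\tau_n w_n$, we get $\mathcal{M}\preceq\tau_n(\oplpw+\lambda_n\opid)^{-1}$.
\item Assumption~\ref{assump:covariate-shift-estimation} then gives $\opv\preceq\tau_n\mathfrak C_1^2\lambda_n^{-1}\,\oplq$.
\item By Weyl monotonicity, $\mu_j(\opv)\lesssim\tau_n\lambda_n^{-1}j^{-\beta}$, hence $\sigma^2_{\splq}(m,\opv)\lesssim\tau_n\lambda_n^{-1}m^{1-\beta}$ and $\tr(\opv)\lesssim\tau_n\lambda_n^{-1}$.
\end{itemize}
A sharper tail is possible using $\oplq^{1/2}(\oplq+\lambda_n)^{-1}$-type factors, but I would first try this crude one.
\item \emph{Coupling terms.} I plug these into Propositions~\ref{prop:main-gaussian-coupling} and~\ref{prop:main-bootstrap-coupling} and optimize $m$ separately in each, as a power $m\asymp n^{a}$.
\end{itemize}

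\textbf{Step 3: optimize the exponents.} Each term is now $n^{-e_k}$ times logs, with $e_k$ affine in the exponents of $\lambda_n$ and $\tau_n$.
\begin{itemize}
\item I would take the exponents from the bounded-ratio Corollary~\ref{cor:bd-density}(1) as the baseline, where $\lambda_n\asymp n^{-2(\beta+1)/(3(\beta+3))}$ and the rate exponent is $\frac{(\beta-5)(\beta+10)}{6(\beta+3)(2\beta+5)}$.
\item I then perturb them to absorb the truncation constraint of Step 1. The binding constraint is between the bias term, the truncation constraint, and the two coupling terms. Its solution should reproduce the stated $\lambda_n$, $\tau_n$ and slack $\delta$; the extra $\sqrt{\tau_n}$ factor accounts for the $\beta+9$ in place of $\beta+10$ in the rate numerator.
\item Positivity of the final exponent is exactly the displayed condition on $\theta-1$. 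The listed sufficient regimes are then elementary inequalities in $(\beta,\theta)$, which I would check directly.
\end{itemize}

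\textbf{Main obstacle.} I expect the hardest part to be the coupling terms. Their $\tau_n$- and $\lambda_n$-inflated spectral tails and the $m^2\bar U/\sqrt n$ term interact with the truncation lower bound on $\tau_n$. One must show that the optimal $m$ and the constraint $\tau_n^{\theta-1}\gtrsim\lambda_n^{-(1+1/\beta)}$ are simultaneously satisfiable with a positive net exponent. If the crude bound $\opv\preceq\tau_n\lambda_n^{-1}\oplq$ loses too much, I would instead use Assumption~\ref{assump:lower-boundedness-of-covariance-operator} together with the domination, which compares $\opv$ to $\tau_n(\oplq+\lambda_n)^{-1}\oplq(\oplq+\lambda_n)^{-1}$ and gives a tail $\tau_n\sum_{j>m}\mu_j/(\mu_j+\lambda_n)^2$.
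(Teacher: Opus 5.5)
Your architecture is the paper's. You apply Theorem~\ref{thm:main-coverage} at $\eta\asymp n^{-1}$, obtain Assumption~\ref{assump:covariate-shift-estimation} from the threshold $\tau_n^{\theta-1}\gtrsim\lambda_n^{-(1+1/\beta)}$ (the stated $\tau_n$ is this threshold times $n^{\delta}$), and track exponents of the bias, Bahadur, feasible-bootstrap and two coupling terms. Your bounds on the bias, Bahadur and feasible-bootstrap terms are exactly the ones the paper uses, as are your bound on $\bar U$ and your tail bound $\sigma^2_{\splq}(m,\opv)\lesssim\tau_n\lambda_n^{-1}m^{1-\beta}$ (the last is Lemma~\ref{lem:tail-sum-sigma-bound}). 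With them the Gaussian coupling works with $m\asymp(n/\tau_n)^{1/(\beta+3)}$.

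The gap is the trace bound $\tr_{\splq}(\opv)\lesssim\tau_n\lambda_n^{-1}$ that you feed into Proposition~\ref{prop:main-bootstrap-coupling}. Since $\bar U^2\asymp\tau_n^2\lambda_n^{-1}$, it gives $\bar U^2\tr_{\splq}(\opv)/n\lesssim\tau_n^3/(n\lambda_n^2)$. The first term of $\Delta_{B,\mathrm{coup}}$ increases in $m$, so your bound on $\Delta_{B,\mathrm{coup}}$ is never below $(\tau_n^3/(n\lambda_n^2))^{1/4}$. But your bias bound $\sqrt n\,\lambda_n$ must vanish, so $n\lambda_n^2\to0$, and this bound diverges for \emph{every} admissible tuning; no choice of $m$ or of the exponents rescues it.

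The loss occurs when you bound $\mathcal M\preceq\tau_n(\oplpw+\lambda_n\opid)^{-1}$ and so discard the middle factor $\oplpw$. The paper keeps it: $\opv\preceq_{\splq}C\tau_n(\oplpw+\lambda_n\opid)^{-1}\oplpw(\oplpw+\lambda_n\opid)^{-1}\oplq$, and the $\splq$-trace of the right-hand operator is
\[
\tr_{\mathsf H}\bigl(\oplq^{1/2}(\oplpw+\lambda_n\opid)^{-1}\oplpw(\oplpw+\lambda_n\opid)^{-1}\oplq^{1/2}\bigr)\le\mathfrak C_1^2\,\tr\bigl((\oplpw+\lambda_n\opid)^{-1}\oplpw\bigr)\le\mathfrak C_1^2\,\pdn .
\]
The first inequality uses the transfer-coverage bound in operator norm. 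The second uses $\oplpw\preceq\oplq$, which follows from $w_n\le\rho$. This is Lemma~\ref{lem:sigma-l2-trace-operator-norm-bound}.

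Thus $\tr_{\splq}(\opv)\lesssim\tau_n\lambda_n^{-1/\beta}$. Taking $m\asymp\{n/(\tau_n\lambda_n^{1-1/\beta})\}^{1/(2\beta-1)}$ gives $\Delta_{B,\mathrm{coup}}\lesssim\sqrt{\tau_n/\lambda_n}\,(\tau_n\lambda_n^{1-1/\beta}/n)^{(\beta-1)/(2(2\beta-1))}\log^{3/2}n$. At the stated tuning, even after multiplying by $\sqrt{\tau_n}$, this decays strictly faster than the target rate.

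Your fallback would not have helped. Assumption~\ref{assump:lower-boundedness-of-covariance-operator} is a \emph{lower} bound on the covariance, used only for anti-concentration, and cannot upper-bound $\opv$, its tail or its trace. Moreover, the tail was never the problem.

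With the trace fixed, what remains is the paper's bookkeeping, which is more specific than your Step~3. Let $D=3\beta(\theta-1)(\beta+3)+2(\beta+1)^2$ and $N=\beta(\theta-1)(\beta-5)(\beta+10)-4(\beta+1)^2(2\beta+5)$. The active balance is bias against Gaussian coupling, not the bootstrap terms; it gives $\lambda_n=n^{-a}$ with $a=2\beta(\theta-1)(\beta+1)/D$. The feasible-bootstrap and bootstrap-coupling exponents are then checked to be smaller, and all four are bounded by the conservative rate $r_{\mathrm M}=N/\{2(2\beta+5)D\}$.

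The slack $\delta$ plays no balancing role; it only makes the truncation threshold hold for large $n$ without tracking constants. Restoring $n^{\delta}$ in $\tau_n$ raises every exponent by at most $\tfrac32\delta$, the largest power of $\tau_n$ after multiplying by $\sqrt{\tau_n}$. The choice $\delta=2r_{\mathrm M}/\{3(\beta+10)\}$ is what turns $r_{\mathrm M}$ into $\tfrac{\beta+9}{\beta+10}\,r_{\mathrm M}$.
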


\section{Additional Results of SCF Data Analysis}
\label{app:scf-additional-applications}

This appendix provides additional data, implementation, and diagnostic details
for Studies~1 and~2 reported in Section~\ref{sec:scf-application}, and presents additional results from 
Studies~3 and~4.

\subsection{Data description.}
\label{app:scf-data-description}

Studies~1--3 use the 2022 implicate-1 sample described in
Section~\ref{subsec:scf-data-design}. Study~4 additionally uses implicate~1 of
the 2019 public summary extract, comprising \(5{,}777\) distinct families. 

To illustrate how the released weights change covariate distribution,
we partition the \(33\) scalar coordinates into a collection \(\mathcal B\) of
\(12\) demographic and household covariate blocks.  For a coordinate \(Z\)
observed on the \(n\) records, define the unweighted moments
\(\widehat\mu_P(Z):=n^{-1}\sum_{i=1}^{n}Z_i\) and
\(\widehat\sigma_P^2(Z):=n^{-1}\sum_{i=1}^{n}
\{Z_i-\widehat\mu_P(Z)\}^2\), and the released-weight moments
\(\widehat\mu_Q(Z):=n^{-1}\sum_{i=1}^{n}\rho_iZ_i\) and
\(\widehat\sigma_Q^2(Z):=n^{-1}\sum_{i=1}^{n}\rho_i
\{Z_i-\widehat\mu_Q(Z)\}^2\).  The coordinate-level signed standardized
difference and the score of each block \(b\in\mathcal B\) are
\[
D(Z)
:=
\frac{\widehat\mu_Q(Z)-\widehat\mu_P(Z)}
{\{[\widehat\sigma_P^2(Z)+\widehat\sigma_Q^2(Z)]/2\}^{1/2}},
\qquad
S_b:=\max_{Z\in b}\lvert D(Z)\rvert.
\]
We rank the \(12\) block scores \(S_b\), and
Figure~\ref{fig:scf-weighting-overview} displays the six highest-ranked blocks.

\begin{figure}[htbp]
\centering
\IfFileExists{applications/scf_household_finance/figures/scf_weighting_overview.pdf}{%
    \includegraphics[width=0.88\textwidth]{applications/scf_household_finance/figures/scf_weighting_overview.pdf}%
}{%
    \fbox{\parbox{0.85\textwidth}{\centering
    The released-weight diagnostic figure is unavailable.}}%
}
\caption{Distribution of survey weights and their impact in 2022
implicate-1 records.  Panel A summarizes the normalized weights; the shaded
band is the 10--90\% percentile range, and the vertical lines mark the
median and normalized mean.  Panels B--G compare unweighted and
weighted shares for the six blocks with the largest \(S_b\); labels
are target-minus-sample percentage-point differences. }
\label{fig:scf-weighting-overview}
\end{figure}

\subsection{Study 1: predicting household net worth}
\label{app:scf-study1-details}

{\it Variables and preprocessing.}
Table~\ref{tab:scf-study1-variables} describes the variables used in the
\(33\)-dimensional design summarized in
Section~\ref{subsec:scf-frozen-benchmark}. The work-status block is given by
the four-level public-file variable \texttt{OCCAT1}.

\begin{table}[htbp]
\centering
\small
\caption{Additional details of variables used for Study~1.}
\label{tab:scf-study1-variables}
\begin{tabular}{>{\raggedright\arraybackslash}m{0.15\textwidth}
                p{0.13\textwidth}
                p{0.27\textwidth}
                >{\raggedright\arraybackslash}m{0.34\textwidth}}
\toprule
Object & SCF variable & Meaning & Processing \\
\midrule
Response
& \texttt{NETWORTH}
& Total family net worth
& \(\operatorname{asinh}(\texttt{NETWORTH}/100000)\), truncated to
  zero  for negative net worth. \tabularnewline
\midrule
Target weight
& \texttt{WGT}
& \# of U.S. families represented 
&  normalized to mean one \tabularnewline
\midrule
\multirow{2}{0.15\textwidth}{Continuous coordinates}
& \texttt{AGE}
& Age of the reference person;
& \multirow{2}{0.34\textwidth}{standardized.} \tabularnewline
& \texttt{KIDS}
& Number of children
& \tabularnewline
\midrule
Ordered block
& \texttt{EDCL}
& Four levels of education 
&
  \(\mathbf 1\{\texttt{EDCL}\geq k\}\),
  \(k=2,3,4\). \tabularnewline
\midrule
\multirow{9}{0.15\textwidth}[0\baselineskip]{Unordered blocks}
& \texttt{HHSEX}
& Sex of the reference person;
& \multirow{9}{0.34\textwidth}[0\baselineskip]{one-hot encodings yielding \(28\)
  coordinates,
  three education indicators, and two continuous coordinates; total number: \(33\).} \tabularnewline
& \texttt{MARRIED}
& Married or not;
& \tabularnewline
& \texttt{LF}
& Working versus not working;
& \tabularnewline
& \texttt{LIFECL}
& Six categories of life cycle;
& \tabularnewline
& \texttt{FAMSTRUCT}
& Five-class family structure;
& \tabularnewline
& \texttt{RACECL}
& (non)White, (non)Hispanic;
& \tabularnewline
& \texttt{OCCAT1}
& Four-category work status;
& \tabularnewline
& \texttt{INDCAT}
& Three industry classes;
& \tabularnewline
& \texttt{HHOUSES}
& own residence or not
& \tabularnewline
\bottomrule
\end{tabular}
\end{table}

{\it KRR and parameters tuning on \(\mathcal D_0\).}
Within the \(980/420\) inner split of \(\mathcal D_0\), the \(980\) training
families determine the continuous-variable standardization constants and
categorical encoding.  This transformation is applied directly 
to the \(420\) validation families. Writing \(n_{\mathrm{tr}}=980\), the
ordinary KRR search uses
\[
\lambda_n(a)=a n_{\mathrm{tr}}^{-0.6},
\qquad
a\in\mathcal A_\lambda
:=\{0.0625,0.125,0.25,0.5,1,1.5,2,3,4\},
\]
 and are compared by unweighted
validation mean squared error.  The selected multiplier \(a=0.0625\) is evaluated
at \(n=\lvert\mathcal D_0\rvert=1{,}400\), giving
\(\lambda_{\mathrm{KRR}}=0.000809475\).  

{\it IWKRR  and multiplier bootstrap on \(\mathcal D_I\).}
The same \(980/420\) split of \(\mathcal D_0\) is used to tune the IWKRR parameters.  Writing
\(n_{\mathrm{tr}}=980\), the joint search uses
\[
\lambda_n(a)=a n_{\mathrm{tr}}^{-0.6},
\quad a\in\mathcal A_\lambda,
\qquad
\tau_n(b)=b n_{\mathrm{tr}}^{0.06},
\quad
b\in\mathcal A_\tau:=\{0.25,0.5,1,1.5,2\},
\]
together with \(\tau=10\) as the truncation cap.   After validation, the selected multipliers are fixed at \(a=0.25\) and
\(b=1\).  Substituting \(n=\lvert\mathcal D_I\rvert=1{,}800\) into the two
sequences gives
\(\lambda_{\mathrm{IW}}=0.002784688\) and \(\tau_n=1.567893\).   For the final IWKRR fit, the weights
are
\[
w_i=\rho_i\wedge\tau_n,
\qquad
\bar w_i
=\frac{\lvert\mathcal D_I\rvert w_i}
{\sum_{\ell\in\mathcal D_I}w_\ell},
\]
and the final inference-sample fit solves
\[
\widehat f_{\mathrm{IW}}
=\arg\min_{f\in\sfH}
\left[
  \frac{1}{\lvert\mathcal D_I\rvert}
  \sum_{i\in\mathcal D_I}
  \bar w_i\{Y_i-f(X_i)\}^2
  +\lambda_{\mathrm{IW}}\norm{f}_{\sfH}^{2}
\right].
\]
The multiplier bootstrap is then carried out using all \(1{,}800\) records in
\(\mathcal D_I\).

{\it Confidence set construction and \(p\)-values.}
The confidence-set radius in Section~\ref{subsec:scf-frozen-benchmark}
corresponds to the \(90\%\) critical value \(4.942249\) on the
\(\sqrt{1{,}800}\) scale.  Let
\(R_b\) denote the target-norm bootstrap error in draw \(b\).  For a candidate
\(f_0\), the corresponding plus-one bootstrap \(p\)-value is
\[
\widehat p(f_0)
=
\frac{1+\sum_{b=1}^{B}
\mathbf 1\!\left\{
R_b\geq\norm{\widehat f_{\mathrm{IW}}-f_0}_{\widehat{\QQ}}
\right\}}{B+1}.
\]
All bootstrap statistics are smaller than the four observed
distances, so \(\widehat p(f_0)=1/(B+1)\approx0.0002\) for all the four proposed candidate $f_0$.

{\it Response distribution.}
Figure~\ref{fig:scf-networth-overview} compares the empirical distribution of household net worth with and without re-weighting. 

\begin{figure}[htbp]
\centering
\IfFileExists{applications/scf_household_finance/figures/scf_networth_overview.pdf}{%
    \includegraphics[width=0.72\textwidth]{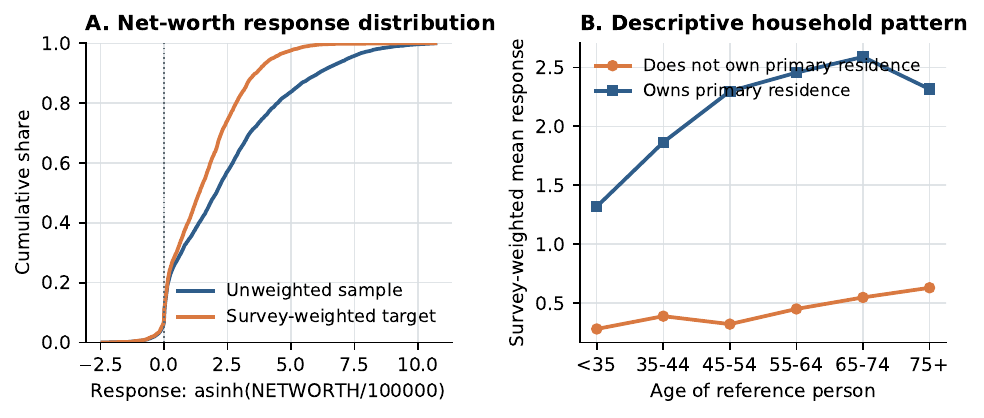}%
}{%
    \fbox{\parbox{0.85\textwidth}{\centering
    The transformed-net-worth diagnostic is unavailable.}}%
}
\caption{empirical distribution of household net worth with and without re-weighting in 
Study~1, where the full-sample means are \(2.497\) and \(1.587\), respectively.}
\label{fig:scf-networth-overview}
\end{figure}

{\it Covariate distributions before and after weighting.}
Figure~\ref{fig:scf-covariate-balance-blocks} reports the leading change of coordinates 
from each of the 12 blocks.  For example, the label ``Work status:
Self-employed/partnership'' means that, among the one-hot coordinates in the
work-status block, the self-employed/partnership indicator has the largest
\(\lvert D(Z)\rvert\).   
Figure~\ref{fig:scf-covariate-balance-components} reports $|D(Z)|$ for every 
coordinate.  These standardized differences illustrate the change of covariates' distribution under reweighting. 

\begin{figure}[htbp]
\centering
\IfFileExists{applications/scf_household_finance/figures/scf_covariate_balance_blocks.pdf}{%
    \includegraphics[width=0.88\textwidth]{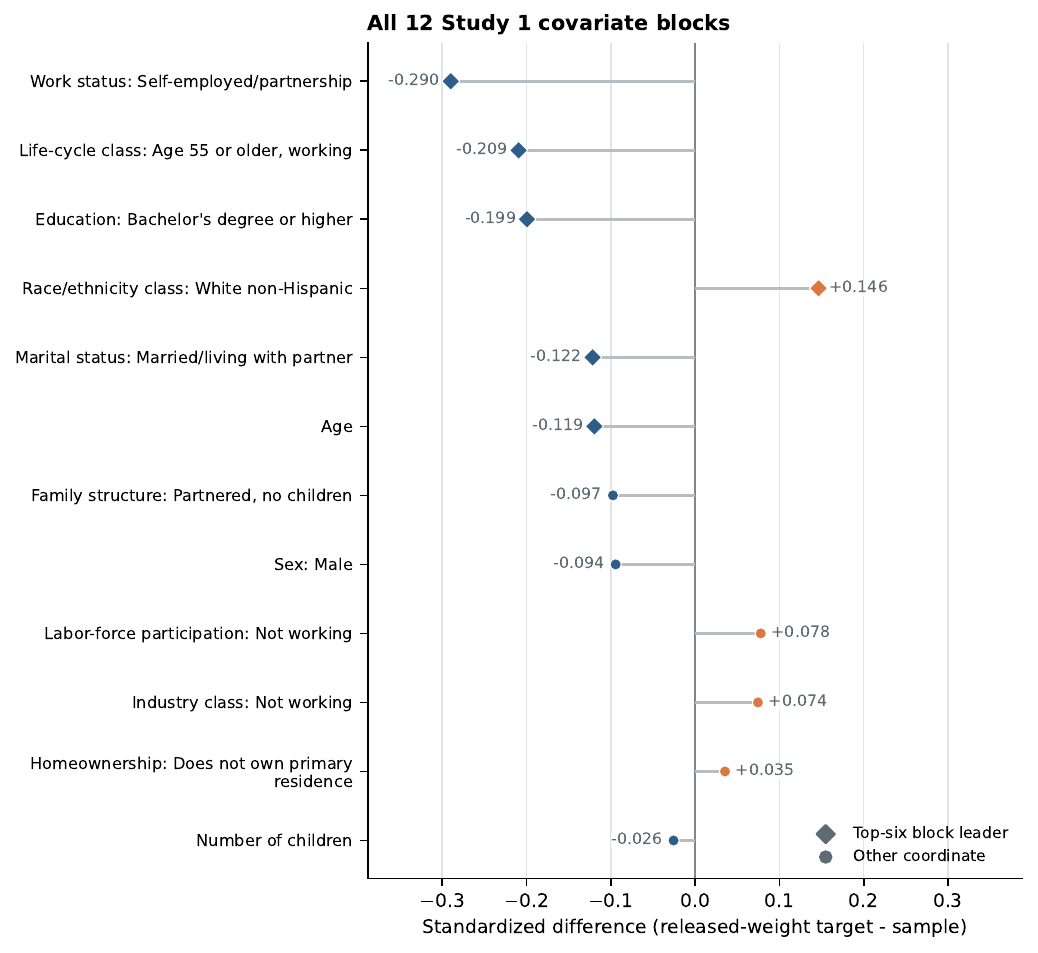}%
}{%
    \fbox{\parbox{0.85\textwidth}{\centering
    The block-level source--target diagnostic is unavailable.}}%
}
\caption{block-level source--target comparison of covariates used in Study~1.  The coordinate with the largest $|D(Z)|$ 
within each block is presented. The top-six covariates are already reported in 
Figure~\ref{fig:scf-weighting-overview}.}
\label{fig:scf-covariate-balance-blocks}
\end{figure}

\begin{figure}[htbp]
\centering
\IfFileExists{applications/scf_household_finance/figures/scf_covariate_balance_components.pdf}{%
    \includegraphics[width=\textwidth]{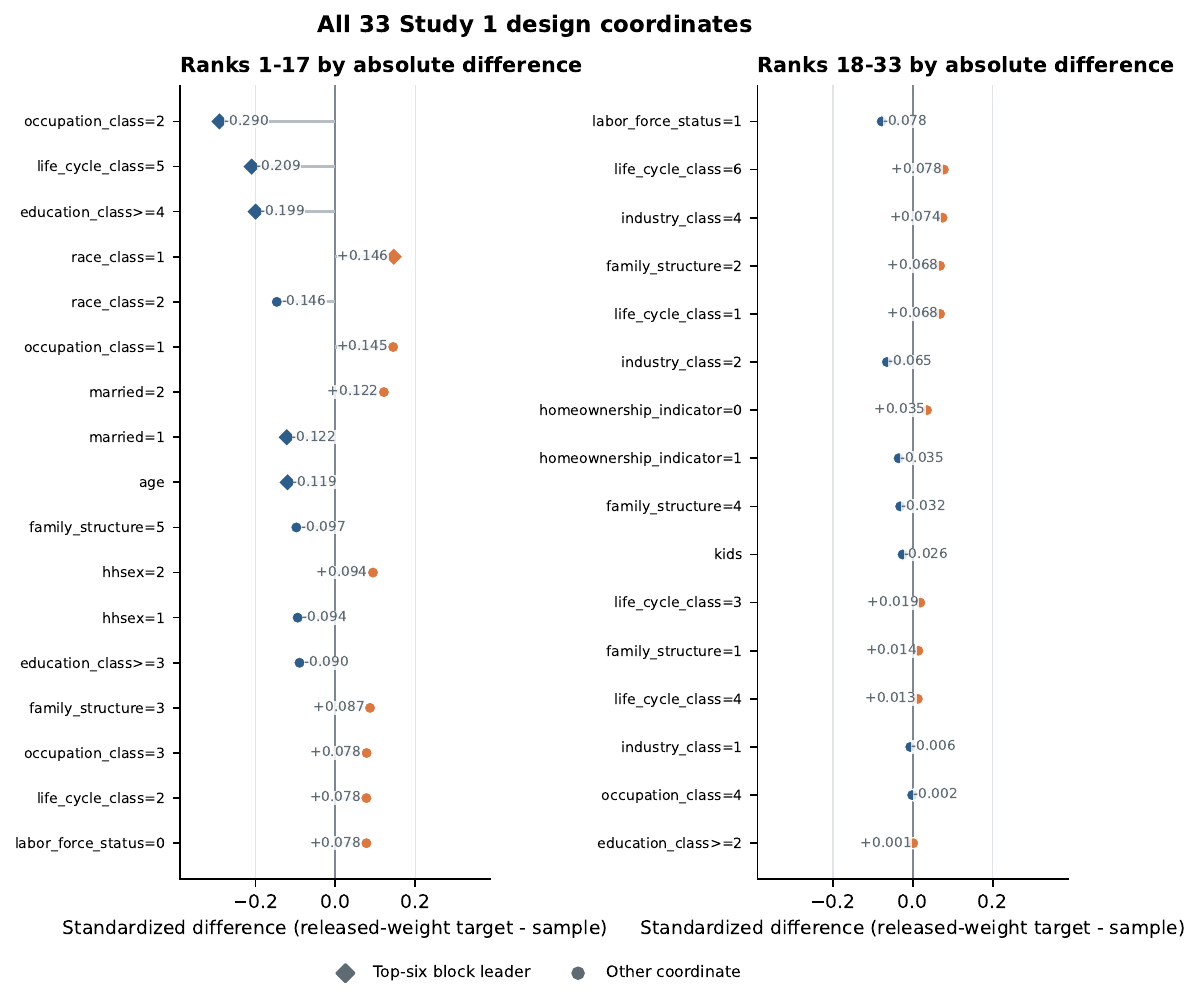}%
}{%
    \fbox{\parbox{0.85\textwidth}{\centering
    The coordinate-level source--target diagnostic is unavailable.}}%
}
\caption{Coordinate-level source--target comparison for all \(33\) covariates used in 
Study~1, ordered by $|D(Z)|$. The top-six covariates are already reported in 
Figure~\ref{fig:scf-weighting-overview}.}
\label{fig:scf-covariate-balance-components}
\end{figure}

\subsection{Study 2: does higher income increase the probability of holding equity?}
\label{app:scf-monotonicity-details}

The training set \(\calD_t\) is 
divided into a fitting subset \(\calD_{t,\mathrm{fit}}\) of size \(1{,}260\)
and a validation subset \(\calD_{t,\mathrm{val}}\) of size \(540\) for parameter tuning. The RBF kernel with \(\gamma=1\) is used after 
standardizing the income variable.

{\it Parameter tuning.} 
Writing \(n_t=|\calD_t|=1{,}800\), the validation search uses
\[
\lambda_n(a)=a n_t^{-0.6},
\quad
a\in\mathcal A_{\lambda}
:=\{0.015625,0.03125,0.0625,0.125,0.25,0.5,1,1.5,2,3,4\},
\]
\[
\tau_n(b)=b n_t^{0.06},
\quad
b\in\mathcal A_{\tau}
:=\{0.0625,0.125,0.25,0.5,1,1.5,2\},
\]
as well as the weak-truncation candidate \(\tau_n=10\).  The final selection is 
\((\lambda_n,\tau_n)=(0.005569,0.1960)\), corresponding to
\((a,b)=(0.5,0.125)\).  There are \(1{,}480\) 
records in \(\calD_t\) have \(\rho_i>\tau_n\),  which carry \(98.9\%\) of the total weight in \(\calD_t\) if no truncation.

{\it Projection and evaluation.} 
Let \(\QQ_Z\) denote the marginal law of \(Z\) under \(\QQ\), and let
\(\mathcal I_e\) denote the index set of the \(2{,}795\) records in
\(\calD_e\).  The empirical
target measure uses the untruncated \(\rho_i\) values, normalized to sum to one
over \(\mathcal I_e\):
\[
\pi_i
:=
\frac{\rho_i}{\sum_{j\in\mathcal I_e}\rho_j},
\quad i\in\mathcal I_e,
\qquad
\widehat{\QQ}
:=
\sum_{i\in\mathcal I_e}\pi_i\delta_{Z_i}.
\]
Let \(z_{(1)}<\cdots<z_{(K)}\) be the distinct values in
\(\{Z_i:i\in\mathcal I_e\}\); here \(K=802\).  For
\(I_k=\{i\in\mathcal I_e:Z_i=z_{(k)}\}\), define
\[
W_k=\sum_{i\in I_k}\pi_i,
\qquad
\bar f_k
=W_k^{-1}\sum_{i\in I_k}\pi_i\widehat f(Z_i).
\]
The pool-adjacent-violators algorithm (PAVA) computes the bounded weighted
projection onto the nondecreasing class:
\[
(\widehat g_1,\ldots,\widehat g_K)
=
\arg\min_{0\leq g_1\leq\cdots\leq g_K\leq1}
\sum_{k=1}^{K}W_k(\bar f_k-g_k)^2,
\]
and assigns \(\widehat g(Z_i)=\widehat g_k\) for every \(i\in I_k\).  

The fitted IWKRR and its monotonic projection are presented in 
Figure~\ref{fig:scf-income-monotonicity-curve}.   Among the \(5{,}000\) multiplier bootstrap draws, 
\(2{,}697\) exceed the observed distance, yielding the 
\(p\)-value \(0.5395\).

\begin{figure}[htbp]
\centering
\IfFileExists{applications/scf_household_finance/figures/stock_participation_income_monotonicity_2022_curve_projection.pdf}{%
    \includegraphics[width=\textwidth]{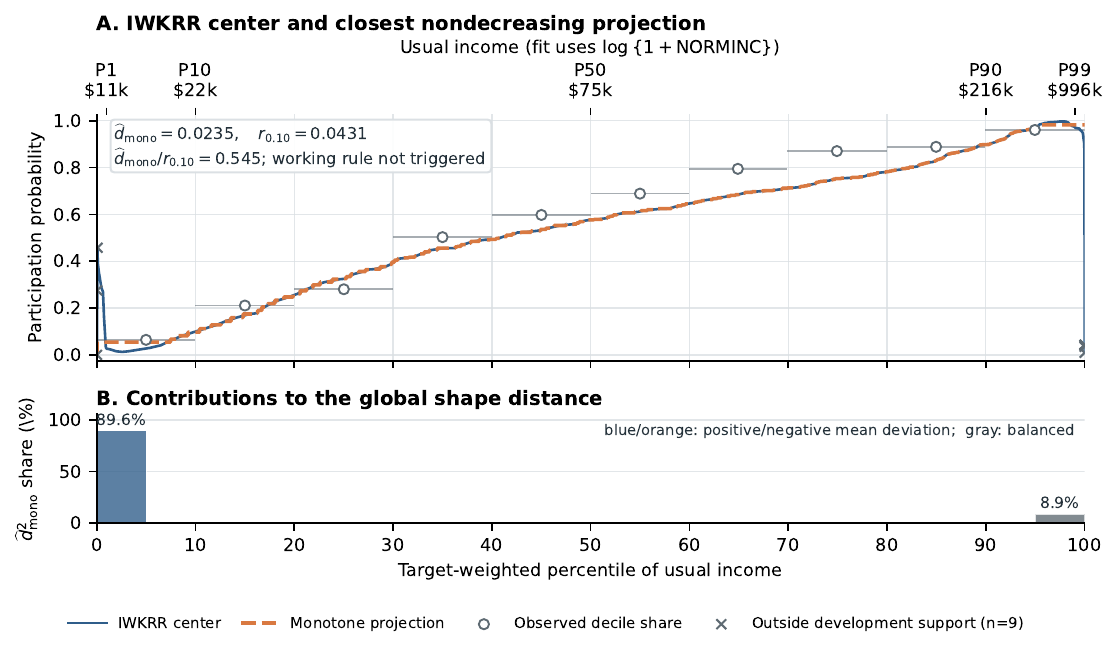}%
}{%
    \fbox{\parbox{0.85\textwidth}{\centering
    The income-monotonicity projection figure is unavailable.}}%
}
\caption{probability of holding stock-equity as a function of usual income.  Panel A shows the IWKRR center and its
closest nondecreasing projection.  Horizontal segments and open
circles give the income range and observed participation mean, respectively.  Panel B
partitions the difference between $\what f$ and its monotonic projection into \(20\) intervals, each
containing approximately \(5\%\) of the evaluation target weight. }
\label{fig:scf-income-monotonicity-curve}
\end{figure}

\subsection{Study 3: equity-share allocation over financial wealth}
\label{app:scf-flatness}

We are interested in whether, among families holding stock equity
(\(\texttt{HEQUITY}=1\)) and having positive financial assets
(\(\texttt{FIN}>0\)), the share of financial assets invested in stock equity
(\(\texttt{EQUITY}/\texttt{FIN}\)) is constant across levels of financial
wealth (\(\texttt{FIN}\)).  Define
\[
S=\mathbf 1\{\texttt{HEQUITY}=1,\ \texttt{FIN}>0\},
\qquad
\QQ^{\mathrm{alloc}}=\QQ(\,\cdot\mid S=1),
\]
\[
Y=\frac{\texttt{EQUITY}}{\texttt{FIN}},
\qquad
Z=\log\{1+\texttt{FIN}\},
\qquad
f^{\ast}(z)
=\EE_{\QQ^{\mathrm{alloc}}}(Y\mid Z=z).
\]
Let \(\QQ_Z^{\mathrm{alloc}}\) denote the marginal law of \(Z\) under
\(\QQ^{\mathrm{alloc}}\) and 
\[
\mathcal C
=\{g_c:g_c(z)\equiv c,\ c\in[0,1]\}
\]
be the class of constant-valued functions.  We test the composite
null
$
H_0:f^{\ast}\in\mathcal C, \QQ_Z^{\mathrm{alloc}} \textrm{-almost surely}.
$
Equivalently,
\[
H_0:d_{\QQ^{\mathrm{alloc}}}(f^{\ast},\mathcal C)=0,
\qquad
d_{\QQ^{\mathrm{alloc}}}(h,\mathcal C)
:=\inf_{g\in\mathcal C}
\norm{h-g}_{\mathsf{L^2}(\QQ_Z^{\mathrm{alloc}})}.
\]
There are \(2{,}895\) eligible records.   A random permutation  assigns
\(800\) records to the training set and the remaining  to
the held-out evaluation set.  After tuning, we select \(\lambda_n=0.001132\), \(\tau_n=10\), and refit the
 IWKRR on the full training set.

On the held-out survey-weighted evaluation set, let
\(\widehat{\QQ}^{\mathrm{alloc}}\) denote the normalized released-weight
empirical measure of \(Z\).  
Projecting the resulting IWKRR $\what f$ onto \(\mathcal C\) gives the nearest constant
\(0.4432\), and
\[
d(\widehat f,\mathcal C)=0.08703,
\qquad
\hat c_{0.10}=0.043620.
\]
The constructed confidence interval is 
$
d(f^{\ast},\mathcal C)\in[0.043406,0.130646],
$ and the  bootstrap \(p\)-value is
\(0.0002\), leading to the rejection of the null hypothesis. The empirical bootstrap distribution is displayed in
Figure~\ref{fig:scf-equity-flatness-bootstrap-errors}. The fitted IWKRR $\what f$ is illustrated in Figure~\ref{fig:scf-equity-flatness}.

\begin{figure}[htbp]
\centering
\IfFileExists{applications/scf_household_finance/figures/scf_study4_bootstrap_error_flatness.pdf}{%
    \includegraphics[width=\textwidth]{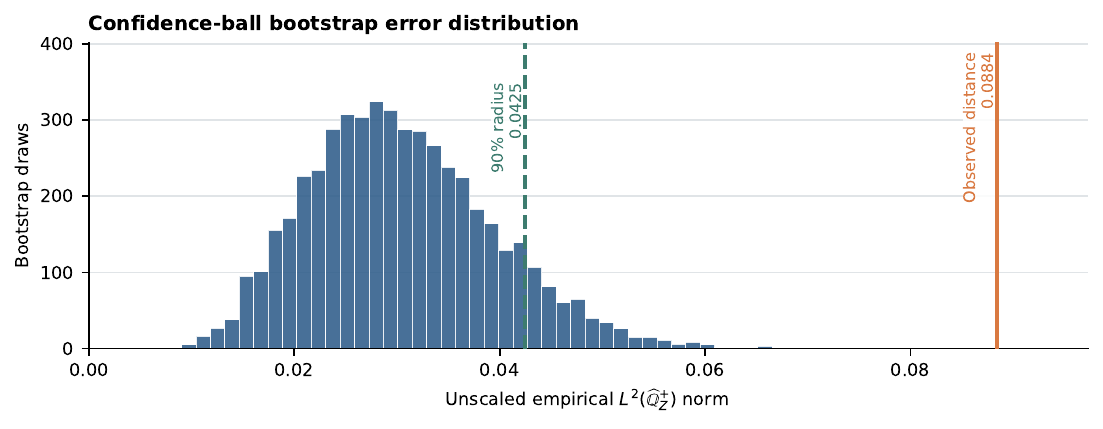}%
}{%
    \fbox{\parbox{0.85\textwidth}{\centering
    The Study~3 bootstrap-error figure is unavailable.}}%
}
\caption{empirical bootstrap distribution for Study 3 based on \(5{,}000\) bootstrap draws.  The dashed green line
is the 90\%  percentile \(\hat c_{0.10}=0.043620\), and the orange line is
the observed statistic \(d(\widehat f,\mathcal C)=0.087026\), yielding the plus-one \(p\)-value
\(0.0002\). We reject the null hypothesis that households allocate the same share of wealth to equity stocks.}
\label{fig:scf-equity-flatness-bootstrap-errors}
\end{figure}

\begin{figure}[htbp]
\centering
\IfFileExists{applications/scf_household_finance/figures/equity_share_wealth_flatness_2022_curve_projection.pdf}{%
    \includegraphics[width=0.96\textwidth]{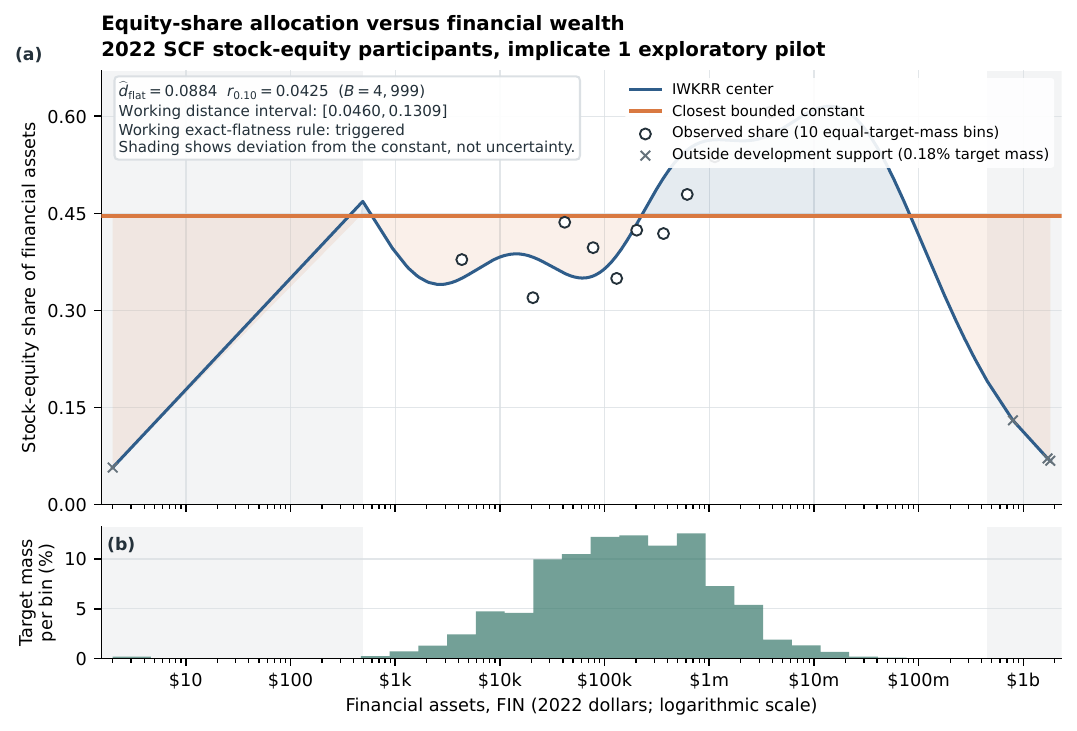}%
}{%
    \fbox{\parbox{0.85\textwidth}{\centering
    The equity-share flatness figure is unavailable.}}%
}
\caption{equity-share allocation over financial wealth.  The
upper panel shows the IWKRR $\what f$, its closest constant-valued function, and
descriptive equity shares in ten approximately equal-target-mass wealth
groups.  The lower panel shows the held-out target mass over financial wealth.}
\label{fig:scf-equity-flatness}
\end{figure}

\subsection{Study 4: does late-payment occurrence change across years?}
\label{app:scf-structural-change}

We study  whether late-payment occurrence in the full target population
changed between the 2019 and 2022 SCF waves.  Let \(\QQ_{2019}\) and
\(\QQ_{2022}\) denote the survey-weighted target distributions in the two
waves.  For each \(t\in\{2019,2022\}\), the SCF public summary extract contains
the Federal Reserve--constructed binary late-payment variable
\[
Y_t=\texttt{LATE}_t
=\mathbf 1\{\texttt{X3004}_t=5\}.
\]
Here, \(Y_t=1\) indicates that the family sometimes fell behind or missed
payments during the preceding year.  The value \(Y_t=0\) includes both
families with \(\texttt{X3004}_t=1\), indicating payments made on or ahead of
schedule, and families with \(\texttt{X3004}_t=0\), for whom the question was
inapplicable because they held none of the listed loan types.

For each \(t\in\{2019,2022\}\), let \(X_t\) denote the wave-\(t\) vector of the
same twelve SCF variables used in Study~1.  Let \(\QQ_{t,X}\) denote the marginal
distribution of \(X_t\) under \(\QQ_t\), and define the target regression
function
\[
f_t^{\ast}(x)=\QQ_t(Y_t=1\mid X_t=x).
\]
We test against the null hypothesis 
\[
H_0:
\norm{f_{2022}^{\ast}-f_{2019}^{\ast}}
_{\mathsf{L^2}(\QQ_{2022,X})}
=0.
\]
 Let
\(\widehat f_t\) be the wave-specific IWKRR estimator, and let
\(\widehat{\QQ}_{2022,E}\) be the normalized released-weight empirical
covariate measure on a held-out 2022 evaluation set.  Define
\[
\widehat D
=
\norm{\widehat f_{2022}-\widehat f_{2019}}
_{\mathsf{L^2}(\widehat{\QQ}_{2022,E})}.
\]
The triangle inequality gives
\[
\left|
\widehat D
-
\norm{f_{2022}^{\ast}-f_{2019}^{\ast}}
_{\mathsf{L^2}(\widehat{\QQ}_{2022,E})}
\right|
\leq
\norm{
(\widehat f_{2022}-f_{2022}^{\ast})
-(\widehat f_{2019}-f_{2019}^{\ast})
}_{\mathsf{L^2}(\widehat{\QQ}_{2022,E})}.
\]
For \(b=1,\ldots,B\), independently generate centered-Gaussian
residual-multiplier perturbations \(\widehat e_t^{(b)}\) of the two fitted
functions and compute
\[
R_b
=
\norm{\widehat e_{2022}^{(b)}-\widehat e_{2019}^{(b)}}
_{\mathsf{L^2}(\widehat{\QQ}_{2022,E})}.
\]
If \(\widehat c^{\mathrm{joint}}_{\alpha}\) is the 
\((1-\alpha)100\%\)-percentile of \(R_1,\ldots,R_B\), the corresponding 
interval for
\(\norm{f_{2022}^{\ast}-f_{2019}^{\ast}}
_{\mathsf{L^2}(\widehat{\QQ}_{2022,E})}\) is 
\[
\left[
\{\widehat D-\widehat c^{\mathrm{joint}}_{\alpha}\}_{+},
\widehat D+\widehat c^{\mathrm{joint}}_{\alpha}
\right],
\]
and its plus-one
upper-tail \(p\)-value is
\[
\widehat p
=
\frac{1+\sum_{b=1}^{B}\mathbf 1\{R_b\geq\widehat D\}}{B+1}.
\]

 In each wave, \(1{,}200\) records form the wave-specific
training set. The parameters \((\lambda_n,\tau_n)\) are tuned separately within each wave, using the Gaussian RBF kernel with \(\gamma=1/33\). The final selections are: 
\[
(\lambda_n,\tau_n)=(0.003551659,1.530209)
\quad\text{for 2019},
\qquad
(\lambda_n,\tau_n)=(0.028413272,0.765105)
\quad\text{for 2022}.
\]

The joint multiplier bootstrap uses \(B=5{,}000\) draws.  The fitted distance and 90th-percentile  are
\[
\widehat D=0.046569,
\qquad
\widehat c^{\mathrm{joint}}_{0.10}=0.049029.
\]
The corresponding interval is \([0,0.095598]\), and the plus-one
upper-tail \(p\)-value is \(0.1562\), leading to the non-rejection of the null hypothesis. 
Figure~\ref{fig:scf-cross-wave-change} showcases the survey-weighted distribution of the difference $\what f_{2022}-\what f_{2019}$ on the held-out 2022 evaluation set.

\begin{figure}[htbp]
\centering
\IfFileExists{applications/scf_household_finance/figures/scf_cross_wave_late_payment_structural_change_implicate1_pilot.pdf}{%
    \includegraphics[width=\textwidth]{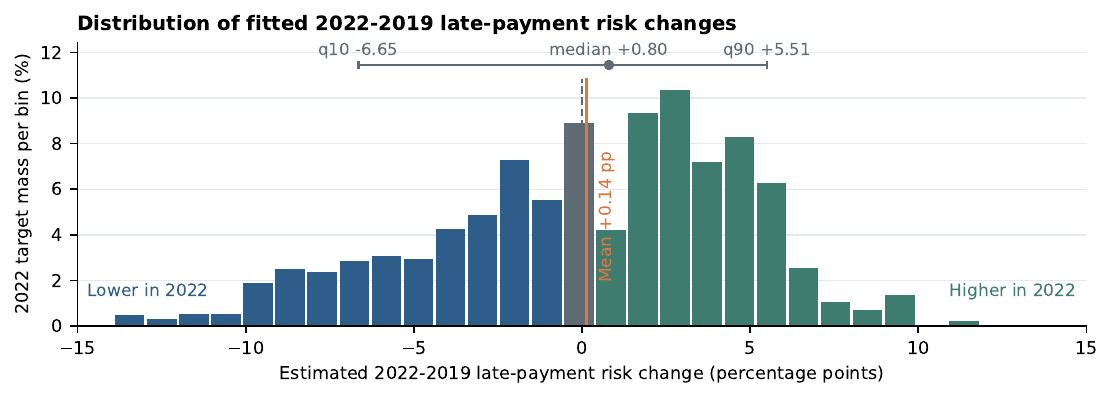}%
}{%
    \fbox{\parbox{0.85\textwidth}{\centering
    The cross-wave late-payment figure is unavailable.}}%
}
\caption{survey-weighted distribution of the difference $\what f_{2022}-\what f_{2019}$ on the held-out 2022 evaluation set.}
\label{fig:scf-cross-wave-change}
\end{figure}

\section{Proofs of Main Results}
\label{app:proofs-of-main-results}

We collect the proofs of main theorems and propositions that appear
in the main text and Appendix~\ref{app:ext-subG}. Proofs of auxiliary results
are collected in Appendix~\ref{app:proofs-of-technical-lemmas}.  For notational simplicity,  we may write $\lambda$ and $w(x)$ in short for $\lambda_n$ and $w_n(x)$,  respectively.  The notation $\|\cdot\|$ is reserved for $\|\cdot\|_{\sfH}$. 

\subsection{Proof of Lemma~\ref{lem:density-upper-bound-for-gaussian-in-splq}}

\begin{lemma}
  \label{lem:general-order-of-summation}
For fixed \(k\in\mathbb N_+\) and sufficiently small \(\lambda>0\),
\[
\sum_{i=1}^\infty
\left(\frac{\mu_i}{\mu_i+\lambda}\right)^k
\asymp
s^*
+\frac1{\lambda^k}\sum_{i=s^*+1}^\infty\mu_i^k,
\]
where \(s^*=\min\{s:\mu_s\leq\lambda\}\).
\end{lemma}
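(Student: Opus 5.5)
The plan is to split the series at the index $s^*$ and compare each summand with an elementary two-sided bound. The key observation is that for $\mu\ge 0$ and $\lambda>0$,
\[
\tfrac12\min\{1,\mu/\lambda\}\;\le\;\frac{\mu}{\mu+\lambda}\;\le\;\min\{1,\mu/\lambda\}.
\]
The upper bound holds because $\mu/(\mu+\lambda)\le 1$ and $\mu/(\mu+\lambda)\le\mu/\lambda$. For the lower bound, if $\mu\ge\lambda$ then $\mu+\lambda\le 2\mu$; if $\mu<\lambda$ then $\mu+\lambda<2\lambda$. Raising to the $k$-th power gives
\[
2^{-k}\min\{1,(\mu_i/\lambda)^k\}\le\Big(\frac{\mu_i}{\mu_i+\lambda}\Big)^k\le\min\{1,(\mu_i/\lambda)^k\}
\]
termwise, with constants depending only on the fixed $k$.

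Since the eigenvalues are arranged in nonincreasing order, $\mu_i>\lambda$ for $i<s^*$ and $\mu_i\le\lambda$ for $i\ge s^*$. Summing the termwise bound therefore yields
\[
\sum_{i=1}^\infty\Big(\frac{\mu_i}{\mu_i+\lambda}\Big)^k\asymp (s^*-1)+\lambda^{-k}\sum_{i=s^*}^\infty\mu_i^k .
\]
It remains to move the index $s^*$ from the tail sum into the counting term.

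We have $\lambda^{-k}\mu_{s^*}^k\le 1$. Hence the right-hand side above lies between $s^*-1+\lambda^{-k}\sum_{i>s^*}\mu_i^k$ and $s^*+\lambda^{-k}\sum_{i>s^*}\mu_i^k$. For the lower comparison we need $s^*-1\gtrsim s^*$, i.e.\ $s^*\ge 2$. This is where ``sufficiently small $\lambda$'' enters: once $\lambda<\mu_1$ (assuming $\mu_1>0$), $s^*\ge2$, and then $s^*-1\ge s^*/2$. Well-definedness of $s^*$ is automatic because $\mu_i\to0$ (trace class), so some $\mu_s\le\lambda$.

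The main obstacle is only bookkeeping at the boundary index $s^*$, together with ensuring the implicit constants depend only on $k$ (they are $2^{k}$ and $2$). Convergence of $\sum_i\mu_i^k$ for $k\ge1$ follows from $\sum_i\mu_i\le\kappa^2$ and $\mu_i\le\kappa^2$, so all quantities are finite.
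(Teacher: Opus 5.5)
Your proof is correct and follows essentially the same route as the paper: split the series at $s^*$, use $2^{-k}\le(\mu_i/(\mu_i+\lambda))^k\le 1$ for $i<s^*$ and $2^{-k}(\mu_i/\lambda)^k\le(\mu_i/(\mu_i+\lambda))^k\le(\mu_i/\lambda)^k$ for $i>s^*$, absorb the boundary index $i=s^*$ into the counting term, and use $s^*\ge 2$ for small $\lambda$. The only cosmetic difference is that you first fold the $i=s^*$ term into the tail and then peel it off via $\lambda^{-k}\mu_{s^*}^k\le 1$, whereas the paper bounds that single term directly between $0$ and $1$.
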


\begin{lemma}
  \label{lem:spectral-summation-rates}
  For fixed \(k\in\mathbb{N}_+\) and sufficiently small \(\lambda>0\), the following rates hold.
  If \(\mu_i\asymp i^{-\beta}\) for some \(\beta>1\), then
  \[
    \sum_{i=1}^{\infty}
    \left(\frac{\mu_i}{\mu_i+\lambda}\right)^k
    \asymp
    \lambda^{-1/\beta}.
  \]
  If \(\mu_i\asymp \exp(-\beta i^\gamma)\) for some
  \(\beta>0\) and \(\gamma>0\), then
  \[
    \sum_{i=1}^{\infty}
    \left(\frac{\mu_i}{\mu_i+\lambda}\right)^k
    \asymp
    \{\log(1/\lambda)\}^{1/\gamma}.
  \]
\end{lemma}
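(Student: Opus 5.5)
The statement to prove is Lemma~\ref{lem:spectral-summation-rates}. I will derive it directly from Lemma~\ref{lem:general-order-of-summation}. That lemma reduces the sum to two pieces: the index $s^*=\min\{s:\mu_s\le\lambda\}$, and the tail $\lambda^{-k}\sum_{i>s^*}\mu_i^k$. In each spectral regime the plan is to compute $s^*$ up to constants and then show the tail term is $O(s^*)$. Since the tail is nonnegative, this gives the sum $\asymp s^*$.

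\emph{Polynomial decay.} Suppose $c_1 i^{-\beta}\le\mu_i\le c_2 i^{-\beta}$.
\begin{itemize}
\item[] Comparing with $c_1i^{-\beta}\le\lambda$ and $c_2 i^{-\beta}\le\lambda$ gives $(c_1/\lambda)^{1/\beta}\lesssim s^*\lesssim (c_2/\lambda)^{1/\beta}+1$. For small $\lambda$ this means $s^*\asymp\lambda^{-1/\beta}$.
\item[] For the tail, use $\mu_i^k\le c_2^k i^{-k\beta}$ and $k\beta>1$, and compare with an integral: $\sum_{i>s^*} i^{-k\beta}\lesssim (s^*)^{1-k\beta}/(k\beta-1)$.
\item[] Substituting $s^*\asymp\lambda^{-1/\beta}$ gives $(s^*)^{1-k\beta}\asymp\lambda^{-1/\beta}\lambda^{k}$. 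Hence $\lambda^{-k}\sum_{i>s^*}\mu_i^k\lesssim\lambda^{-1/\beta}$.
\end{itemize}
The two terms are therefore of the same order $\lambda^{-1/\beta}$.

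\emph{Exponential decay.} Suppose $c_1e^{-\beta i^\gamma}\le\mu_i\le c_2e^{-\beta i^\gamma}$.
\begin{itemize}
\item[] Solving $c e^{-\beta i^\gamma}=\lambda$ gives $i=(\beta^{-1}\log(c/\lambda))^{1/\gamma}$. For small $\lambda$ this yields $s^*\asymp\{\log(1/\lambda)\}^{1/\gamma}$.
\item[] For the tail, write $\mu_i^k/\lambda^k=(\mu_i/\lambda)^k$. Let $i_0$ satisfy $c_2e^{-\beta i_0^\gamma}=\lambda$, so that $i_0\asymp s^*$. For $i>s^*$ we have $\mu_i/\lambda\lesssim e^{-\beta(i^\gamma-i_0^\gamma)}$.
\item[] The terms with $i\le 2i_0$ contribute at most a constant times $i_0\asymp s^*$, since each ratio is $O(1)$ (it is at most $c_2/c_1$ times $\mu_{s^*}/\lambda\le 1$).
\item[] For $i>2i_0$ we have $i^\gamma-i_0^\gamma\ge(1-2^{-\gamma})i^\gamma$. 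The remaining sum is then bounded by $\sum_i e^{-k\beta(1-2^{-\gamma})i^\gamma}=O(1)$.
\end{itemize}
So the tail term is $O(s^*)$, and the sum is $\asymp\{\log(1/\lambda)\}^{1/\gamma}$.

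\emph{Main obstacle.} The only delicate point is the exponential-decay tail. Near the threshold index, consecutive eigenvalues need not be separated by a constant factor, especially when $\gamma<1$, because the gaps $i^\gamma-(i-1)^\gamma$ shrink. A naive geometric-series bound therefore fails. Splitting at $2i_0$, as above, handles this: the block $[s^*,2i_0]$ costs $O(s^*)$ trivially, which is already the target order. In both regimes the hidden constants depend on $k,\beta,\gamma,c_1,c_2$ and not on $\lambda$.
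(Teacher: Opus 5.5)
Your proposal is correct and follows the same route as the paper. Both arguments apply Lemma~\ref{lem:general-order-of-summation}, show that $s^*$ has the target order ($\lambda^{-1/\beta}$, respectively $\{\log(1/\lambda)\}^{1/\gamma}$), and bound the tail term $\lambda^{-k}\sum_{i>s^*}\mu_i^k$ by that same order. The only difference is in the exponential-decay tail: the paper disposes of it with a one-line integral comparison, whereas your split at $2i_0$ spells that step out in a fully elementary way.
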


\begin{lemma}\label{lem:sigma-lower-operator-bound}
Under Assumption~\ref{assump:error}, for \(\lambda>0\),
\[
\opv \succeq_{\splq} \underline{\sigma}^{2} \mathbb{E}\left\{\left[\left(\oplpw+\lambda \opid\right)^{-1} w(X_{i}) k_{X_{i}}\right] \otimes_{\splq}\left[\left(\oplpw+\lambda \opid\right)^{-1} w(X_{i}) k_{X_{i}}\right]\right\}.
\]
\end{lemma}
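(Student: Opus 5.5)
We need to prove Lemma~\ref{lem:sigma-lower-operator-bound}: $\opv \succeq_{\splq} \underline\sigma^2\,\EE[\psi(X)\otimes_{\splq}\psi(X)]$ with $\psi(x):=(\oplpw+\lambda\opid)^{-1}w(x)k_x$.

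The plan is to split $U_i$ into a noise part and a design part, $U_i=A_i+D_i$. Here $A_i:=\eps_i\psi(X_i)$ and
\[
D_i:=(\oplpw+\lambda\opid)^{-1}\big(w(X_i)(k_{X_i}\otimes k_{X_i})-\oplpw\big)(f_*-\fpw).
\]
The term $D_i$ is $\sigma(X_i)$-measurable, while $A_i$ is linear in $\eps_i$. We then expand the covariance as
\[
\opv=\EE[A_i\otimes_{\splq}A_i]+\EE[A_i\otimes_{\splq}D_i]+\EE[D_i\otimes_{\splq}A_i]+\EE[D_i\otimes_{\splq}D_i].
\]
All these expectations are finite Bochner integrals. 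Indeed, $\|U_i\|_{\splq}\le\bar U$ under the bounded noise of Assumption~\ref{assump:error} and the bounded kernel. This is the bound in Proposition~\ref{prop:main-bahadur}, whose derivation uses Assumption~\ref{assump:covariate-shift-estimation} only for the $\splq$-norm bound, but any finite bound suffices here.

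The cross terms vanish by conditioning on $X_i$. For $g\in\splq$ we have
\[
\langle\EE[A_i\otimes_{\splq}D_i]g,g\rangle_{\splq}=\EE\big[\EE[\eps_i\mid X_i]\,\langle\psi(X_i),g\rangle_{\splq}\langle D_i,g\rangle_{\splq}\big]=0,
\]
since $\EE[\eps\mid X]=0$. The same argument applies to the other cross term.

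The design term $\EE[D_i\otimes_{\splq}D_i]$ is positive semidefinite in $\splq$, because $\langle(D\otimes D)g,g\rangle=\langle D,g\rangle^2\ge0$. Dropping it therefore gives $\opv\succeq_{\splq}\EE[A_i\otimes_{\splq}A_i]$.

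For the noise term, we condition again on $X_i$:
\[
\langle\EE[A_i\otimes A_i]g,g\rangle_{\splq}=\EE\big[\EE[\eps_i^2\mid X_i]\,\langle\psi(X_i),g\rangle_{\splq}^2\big]\ge\underline\sigma^2\,\EE\langle\psi(X_i),g\rangle_{\splq}^2 .
\]
This uses the lower variance bound in Assumption~\ref{assump:error}. The right-hand side is exactly the quadratic form of the claimed lower operator, and the Loewner ordering in $\splq$ follows.

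The only delicate point is measure-theoretic. One has to justify interchanging the quadratic form with the Bochner expectation and with conditional expectation. This follows from the integrability of $\|U_i\|_{\splq}^2$ and $\|\psi(X_i)\|_{\splq}^2$, both bounded by constants depending on $\tau_n,\lambda,\kappa$.

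One must also check that $\psi(X_i)\in\sfH$ is viewed in $\splq$ consistently with the definition of $U_i$. This holds because the inclusion $\sfH\hookrightarrow\splq$ is bounded by $\kappa$.
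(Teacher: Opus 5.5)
Your proposal is correct and follows the paper's proof. The paper uses the same split $U_i=U_{i,1}+U_{i,2}$ into a noise part and a design part, removes the cross terms via $\EE[\eps_i\mid X_i]=0$, drops the positive semidefinite design covariance, and bounds the noise covariance below using $\EE[\eps^2\mid X]\geq\underline{\sigma}^2$; your version just spells out the conditioning and integrability steps that the paper leaves implicit.
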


\begin{lemma}
\label{lem:simplification-of-a-useful-operator}
\[
\mathbb{E}\!\left\{
w(X_i)\left[
(\oplpw+\lambda \opid)^{-1} k_{X_i}
\right]
\otimes_{\splq}
\left[
\bigl((\oplpw+\lambda \opid)^{-1} k_{X_i}\bigr)
\right]
\right\}
=
(\oplpw+\lambda \opid)^{-1} \oplpw (\oplpw+\lambda \opid)^{-1} \oplq
\]
\end{lemma}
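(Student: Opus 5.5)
The plan is to verify the identity by applying both sides to an arbitrary test function $f\in\splq$ and showing that they agree. Both sides are bounded linear operators, so this suffices. Throughout, write $\calA:=(\oplpw+\lambda\opid)^{-1}$, viewed as an operator on $\mathsf H$. Since $\oplpw$ is positive and self-adjoint on $\mathsf H$, the operator $\calA$ is bounded (by $\lambda^{-1}$) and self-adjoint on $\mathsf H$.

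By the definition of the rank-one operator $a\otimes_{\splq}b$, the left-hand side applied to $f$ equals
\[
\mathbb E\big[w(X)\,\langle \calA k_X, f\rangle_{\splq}\,\calA k_X\big].
\]
The key step is to rewrite the scalar $\langle \calA k_X, f\rangle_{\splq}$ in RKHS terms. For any $g\in\mathsf H$ and $f\in\splq$, the reproducing property gives
\[
\langle g,\oplq f\rangle_{\mathsf H}=\int \langle g,k_y\rangle_{\mathsf H} f(y)\,d\QQ(y)=\int g(y)f(y)\,d\QQ(y)=\langle g,f\rangle_{\splq}.
\]
Here $\oplq f=\int k_y f(y)\,d\QQ(y)$ lies in $\mathsf H$ as a Bochner integral, because the kernel is bounded and $f\in\sfL^2(\QQ)\subset\sfL^1(\QQ)$. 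Taking $g=\calA k_X$ and using the self-adjointness of $\calA$ in $\mathsf H$ together with the reproducing property, we get
\[
\langle \calA k_X,f\rangle_{\splq}=\langle k_X,\calA\oplq f\rangle_{\mathsf H}=(\calA\oplq f)(X).
\]

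Substituting this back and pulling the bounded operator $\calA$ outside the expectation gives
\[
\calA\,\mathbb E\big[w(X)\,(\calA\oplq f)(X)\,k_X\big]=\calA\,\oplpw\,\calA\,\oplq f,
\]
where the last equality is the definition of $\oplpw$ applied to the function $h:=\calA\oplq f\in\mathsf H$. This is exactly the right-hand side applied to $f$.

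The only point needing care is justifying that the expectations are well-defined Bochner integrals and that $\calA$ can be interchanged with $\mathbb E$. This follows from the bounds $w\le\tau_n$, $\|k_X\|\le\kappa$ and $\|\calA\|\le\lambda^{-1}$: the integrand has $\splq$-norm at most $\tau_n\kappa^2\lambda^{-2}\|f\|_{\splq}$ up to the constant of the inclusion $\mathsf H\hookrightarrow\splq$, and bounded operators commute with Bochner integrals. I expect no genuine obstacle. The main thing to keep straight is which inner product (that of $\splq$ or of $\mathsf H$) is in play at each step, since $\calA$ is self-adjoint in $\mathsf H$ but not in $\splq$.
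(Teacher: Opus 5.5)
Your proposal is correct and follows essentially the same route as the paper: you apply both sides to an arbitrary $f\in\splq$, rewrite $\langle (\oplpw+\lambda\opid)^{-1}k_{X_i},f\rangle_{\splq}$ as $\langle k_{X_i},(\oplpw+\lambda\opid)^{-1}\oplq f\rangle_{\mathsf H}$ using the duality $\langle g,f\rangle_{\splq}=\langle g,\oplq f\rangle_{\mathsf H}$ and self-adjointness of the resolvent in $\mathsf H$, and then pull the resolvent out of the expectation to recognize $\oplpw$. Your explicit justification of the Bochner integrals and of the interchange of the bounded operator with the expectation is a detail the paper leaves implicit.
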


\begin{lemma}
  \label{lem:same-spectrum-of-tlq-and-lq-half-tlq-half}
  Let $\opt: \mathsf{H} \to \mathsf{H}$ be a bounded positive self-adjoint
  operator in $\mathsf{H}$.
  If $\opt \oplq: \splq \to \splq$ is a compact operator in $\splq$ and $\oplq^{1/2} \opt \oplq^{1/2}: \mathsf{H} \to \mathsf{H}$ is a compact operator in $\mathsf{H}$,
  then their nonzero eigenvalues, counted with multiplicity, coincide.  In
  particular, after padding a finite nonzero spectrum by zeros,
  \[
    \mu_k(\opt \oplq, \splq)
    =
    \mu_k(\oplq^{1/2} \opt \oplq^{1/2}, \mathsf{H}),
    \qquad k\geq1.
  \]
\end{lemma}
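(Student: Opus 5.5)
The plan is to reduce the statement to two applications of the elementary ``$AB$ versus $BA$'' principle, after making explicit which space each operator acts on. Let $\iota:\mathsf H\to\splq$ denote the canonical map $f\mapsto f$. It is bounded because $|f(x)|\le\kappa\norm{f}_{\mathsf H}$. By the reproducing property, its adjoint $\iota^*:\splq\to\mathsf H$ is $g\mapsto\int k_x g(x)\,d\QQ(x)$, i.e.\ the integral operator $\oplq$ with values in $\mathsf H$. Consequently:
\begin{itemize}
\item the operator ``$\opt\oplq$ on $\splq$'' in the statement is $\iota\,\opt\,\iota^*$;
\item the covariance operator $\oplq$ on $\mathsf H$ is $\iota^*\iota$.
\end{itemize}
Identifying these two factorizations is the first step, and all later steps use only them.

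The algebraic core is the following standard fact. Let $A:\mathsf G_1\to\mathsf G_2$ and $B:\mathsf G_2\to\mathsf G_1$ be bounded, and let $\mu\neq0$. Then $v\mapsto Bv$ maps $\ker(AB-\mu)$ injectively into $\ker(BA-\mu)$. Indeed, $BA(Bv)=B(ABv)=\mu Bv$, and $Bv=0$ would force $\mu v=ABv=0$. By symmetry $A$ maps $\ker(BA-\mu)$ injectively back. Hence the geometric multiplicities of $\mu$ agree whenever one of them is finite.

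We apply this twice, in two different settings.
\begin{enumerate}
\item With $A=\iota\opt:\mathsf H\to\splq$ and $B=\iota^*$, we get $AB=\iota\opt\iota^*=\opt\oplq$ on $\splq$ and $BA=\iota^*\iota\opt=\oplq\opt$ on $\mathsf H$.
\item Inside $\mathsf H$, with $A=\oplq^{1/2}$ and $B=\oplq^{1/2}\opt$, we get $AB=\oplq^{1/2}\opt\oplq^{1/2}$ and $BA=\oplq\opt$.
\end{enumerate}
Chaining the two comparisons gives, for every $\mu\neq0$,
\[
\dim\ker(\opt\oplq-\mu)=\dim\ker(\oplq\opt-\mu)=\dim\ker(\oplq^{1/2}\opt\oplq^{1/2}-\mu).
\]

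It remains to pass from geometric to algebraic multiplicities and to identify the full nonzero spectrum with eigenvalues. For $\oplq^{1/2}\opt\oplq^{1/2}$ on $\mathsf H$ this is immediate: it is compact, positive and self-adjoint. For $\opt\oplq=\iota\opt\iota^*$ on $\splq$ it is also immediate: this operator equals $(\iota\opt^{1/2})(\iota\opt^{1/2})^*$, so it is positive and self-adjoint, and it is compact by hypothesis. For both operators, therefore, nonzero spectrum consists of eigenvalues of finite multiplicity, and geometric and algebraic multiplicities coincide. The intermediate operator $\oplq\opt$ is not self-adjoint, but this does no harm: we only use the equality of its eigenspace dimensions, and these are finite because they are squeezed between two finite numbers. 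Listing the common nonzero eigenvalues in nonincreasing order with multiplicity, and padding with zeros, then yields $\mu_k(\opt\oplq,\splq)=\mu_k(\oplq^{1/2}\opt\oplq^{1/2},\mathsf H)$ for all $k$.

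The main obstacle I expect is bookkeeping rather than analysis. One must check carefully that the operator written ``$\opt\oplq$'' really is $\iota\opt\iota^*$, including the possible non-injectivity of $\iota$ when $\operatorname{supp}\QQ\neq\frakX$. One must also confirm that the square root $\oplq^{1/2}$ used in step 2 is the $\mathsf H$-square root of $\iota^*\iota$, not the $\splq$-version from Lemma~\ref{lem:LQ_half_unitary_isomorphism}. With the factorization through $\iota$ both points are harmless, because the $AB/BA$ argument requires neither injectivity nor compactness of the intermediate operator.
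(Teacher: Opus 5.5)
Your proof is correct, but it takes a different route from the paper's.

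The paper works with the unitary isomorphism $U=\oplq^{1/2}:\overline{\ran(\oplq)}\to\mathsf H$ of Lemma~\ref{lem:LQ_half_unitary_isomorphism}. It splits $\splq=\overline{\ran(\oplq)}\oplus\ker(\oplq)$ and notes that $\opt\oplq$ vanishes on the kernel. It then checks $\langle f,\opt\oplq g\rangle_{\splq}=\langle Uf,\oplq^{1/2}\opt\oplq^{1/2}Ug\rangle_{\mathsf H}$ on $\overline{\ran(\oplq)}$ and transports the Courant--Fischer min--max formula through $U$. In effect this shows that $\opt\oplq$ restricted to $\overline{\ran(\oplq)}$ is unitarily equivalent to $\oplq^{1/2}\opt\oplq^{1/2}$, which is stronger than equality of spectra. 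The cost is that it relies on the Mercer structure behind Lemma~\ref{lem:LQ_half_unitary_isomorphism} (in particular $\operatorname{supp}\QQ=\frakX$, so that $U$ is onto $\mathsf H$) and on min--max.

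You instead factor through the inclusion, $\opt\oplq=\iota\opt\iota^*$ and $\oplq=\iota^*\iota$, and compare eigenspaces twice by $AB$/$BA$ intertwining, using the non-self-adjoint $\oplq\opt$ on $\mathsf H$ as a bridge. Your argument is purely algebraic and needs neither Lemma~\ref{lem:LQ_half_unitary_isomorphism} nor injectivity of $\iota$. It also gives positivity and self-adjointness of $\opt\oplq$ on $\splq$ for free via $(\iota\opt^{1/2})(\iota\opt^{1/2})^*$, whereas the paper reads these off from the transported identity. The price is that you get only equality of eigenspace dimensions, not an explicit unitary equivalence. You then need compactness and self-adjointness of the two end operators to identify their nonzero spectra with the eigenvalue lists, and you do supply this.

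Two small points:
\begin{itemize}
\item In step~2 your labels are swapped. With $A=\oplq^{1/2}$ and $B=\oplq^{1/2}\opt$ one gets $AB=\oplq\opt$ and $BA=\oplq^{1/2}\opt\oplq^{1/2}$. This is harmless because the principle is symmetric.
\item Identifying your $\mathsf H$-square root $(\iota^*\iota)^{1/2}$ with the paper's $\oplq^{1/2}$ restricted to $\mathsf H$ is not made automatic by the $AB$/$BA$ argument, contrary to what your last sentence suggests. It is a one-line check: both act on the basis $\{\mu_j^{1/2}\phi_j\}$ of $\mathsf H$ as multiplication by $\mu_j^{1/2}$.
\end{itemize}
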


\begin{lemma}
  \label{lem:bias-bound-in-h}
Under Assumption \ref{assump:source-condition}, when $\tau_n \ge \tau_0$, we have
\[
  \| \fpw - f_* \|
  \le
  \max\{\kappa^{2s-1},1\}
  \lambda^{\min\{s,1/2\}}
  \norm{g_{\ast}}.
\]
\end{lemma}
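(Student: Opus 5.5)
We need to prove Lemma (bias bound in H): $\|f_{P^w}-f_*\|_{\sfH}\le \max\{\kappa^{2s-1},1\}\lambda^{\min\{s,1/2\}}\|g_*\|$.

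The plan starts from the explicit bias formula $\fpw-f_*=-\lambda(\oplpw+\lambda\opid)^{-1}f_*$ and inserts the source condition $f_*=\oplpwo^s g_*$. This gives
\[
\|\fpw-f_*\|=\lambda\big\|(\oplpw+\lambda\opid)^{-1}\oplpwo^{s}g_*\big\|\le \lambda\big\|(\oplpw+\lambda\opid)^{-1}\oplpwo^{s}\big\|\,\|g_*\|.
\]
The two operators $\oplpw$ and $\oplpwo$ do not commute, so I will compare them through the Loewner order. Since $\tau_n\ge\tau_0$, we have $w_0\le w_n$ pointwise, hence $0\preceq\oplpwo\preceq\oplpw$ on $\sfH$. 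Löwner–Heinz (operator monotonicity of $t\mapsto t^{2s}$ for $2s\in[0,1]$) then gives $\oplpwo^{2s}\preceq\oplpw^{2s}$ whenever $s\le 1/2$.

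\emph{Case $s\le 1/2$.} Write $A=(\oplpw+\lambda\opid)^{-1}$. Then
\[
\|A\oplpwo^{s}\|^2=\|A\oplpwo^{2s}A\|\le\|A\oplpw^{2s}A\|=\sup_{t\ge0}\frac{t^{2s}}{(t+\lambda)^2}.
\]
The supremum is at most $\lambda^{2s-2}$, because $t^{s}/(t+\lambda)\le \lambda^{s-1}\sup_u u^s/(1+u)\le\lambda^{s-1}$. Multiplying by $\lambda$ gives the bound $\lambda^{s}\|g_*\|$, which is the claim with constant $1$.

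\emph{Case $s>1/2$.} Here the target rate is $\lambda^{1/2}$, and the order comparison is no longer available at power $2s>1$. Instead I factor $\oplpwo^s=\oplpwo^{1/2}\oplpwo^{s-1/2}$. The first factor is handled exactly as in the previous case with exponent $1/2$, giving $\lambda\|A\oplpwo^{1/2}\|\le\lambda^{1/2}$. For the second factor, $\|\oplpwo^{s-1/2}\|\le\|\oplpwo\|^{s-1/2}$, and
\[
\|\oplpwo\|\le \EE_{\PP}[w_0(X)]\,\kappa^2\le\kappa^2,
\]
using $\EE_{\PP} w_0\le\EE_{\PP}\rho=1$ and $\|k_x\|^2\le\kappa^2$. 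This yields the factor $\kappa^{2s-1}$. Since $\kappa>1$ the constants combine into $\max\{\kappa^{2s-1},1\}$, and the rate into $\lambda^{\min\{s,1/2\}}$.

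The main obstacle is the non-commutativity between $\oplpw$ and $\oplpwo$. It is resolved by the $A X A^*$ trick combined with Löwner–Heinz, which is exactly why the exponent applied to $\oplpwo$ inside that trick must be capped at $1/2$. This cap is the reason the rate saturates at $\lambda^{1/2}$.
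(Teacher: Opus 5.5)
Your proof is correct. It shares the paper's skeleton: the identity $\fpw-f_*=-\lambda(\oplpw+\lambda\opid)^{-1}f_*$, the ordering $\oplpwo\preceq\oplpw$ coming from $\tau_n\ge\tau_0$, and a split at $s=1/2$. Where you differ is in where operator monotonicity is applied.

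\textbf{The paper's route.} It first peels off $\lambda^{1/2}(\oplpw+\lambda\opid)^{-1/2}$, whose norm is at most one, and reduces everything to bounding $\|(\oplpw+\lambda\opid)^{-1/2}f_*\|$ (Lemma~\ref{lem:source-condition-bound-1}).
\begin{itemize}
\item For $s\le1/2$ it uses resolvent monotonicity, $(\oplpw+\lambda\opid)^{-1}\preceq(\oplpwo+\lambda\opid)^{-1}$. The spectral computation is then done for the single operator $\oplpwo$, and no L\"owner--Heinz is needed.
\item For $s>1/2$ it uses $\|(\oplpw+\lambda\opid)^{-1/2}\oplpwo^{1/2}\|\le1$. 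It then compares $\oplpwo^{2s-1}\preceq\oplq^{2s-1}$ (L\"owner--Heinz plus $w_0\le\rho$) and uses $\|\oplq\|\le\kappa^2$.
\end{itemize}

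\textbf{Your route.} You swap where the nontrivial tool enters.
\begin{itemize}
\item For $s\le1/2$ you invoke L\"owner--Heinz to get $\oplpwo^{2s}\preceq\oplpw^{2s}$ inside the congruence $A(\cdot)A$, and compute the spectrum of $\oplpw$.
\item For $s>1/2$ you bound $\|\oplpwo\|\le\kappa^2\,\EE_{\PP}[w_0(X)]\le\kappa^2$ directly. This is more elementary than the paper's detour through $\oplq$.
\end{itemize}

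Both arguments give exactly the stated constant $\max\{\kappa^{2s-1},1\}$ and rate $\lambda^{\min\{s,1/2\}}$. The paper's factorization has a side benefit: the intermediate bound on $\|(\oplpw+\lambda\opid)^{-1/2}f_*\|$ is reused for the $\splq$ bias in Proposition~\ref{prop:main-bias}. Your congruence trick would deliver that quantity just as easily with $(\oplpw+\lambda\opid)^{-1/2}$ in place of $A$, since $\sup_{t\ge0}t^{2s}/(t+\lambda)\le\lambda^{2s-1}$ for $2s\le1$.
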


\begin{lemma}
  \label{lem:sigma-l2-trace-operator-norm-bound}
  Under Assumptions~\ref{assump:error}, \ref{assump:source-condition}, and
  \ref{assump:covariate-shift-estimation}, when $\tau_n \ge \tau_0$, we have
  \begin{equation*}
    \begin{aligned}
    \tr_{\splq}(\opv)
    &\le \tau_n \left(
      \overline{\sigma}^2
      + \kappa^2 \max\{\kappa^{2s-1},1\}^{2}
      \lambda^{2\min\{s,1/2\}} \norm{g_{\ast}}^2
    \right) \bigl(\mathfrak{C}_1\bigr)^2
    \tr_{\mathsf{H}}\bigl((\oplpw + \lambda \opid)^{-1} \oplpw\bigr) \\
    &\le \tau_n \left(
      \overline{\sigma}^2
      + \kappa^2 \max\{\kappa^{2s-1},1\}^{2}
      \lambda^{2\min\{s,1/2\}} \norm{g_{\ast}}^2
    \right) \bigl(\mathfrak{C}_1\bigr)^2\pd, \\
    \|\opv\|_{\splq}
    &\le \tau_n \left(
      \overline{\sigma}^2
      + \kappa^2 \max\{\kappa^{2s-1},1\}^{2}
      \lambda^{2\min\{s,1/2\}} \norm{g_{\ast}}^2
    \right) \bigl(\mathfrak{C}_1\bigr)^2.
    \end{aligned}
  \end{equation*}
\end{lemma}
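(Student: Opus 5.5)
The plan is to bound $\tr_{\splq}(\opv)$ and $\|\opv\|_{\splq}$ by rewriting $\opv$ through the operator identity of Lemma~\ref{lem:simplification-of-a-useful-operator}, transferring to $\mathsf H$ via Lemma~\ref{lem:same-spectrum-of-tlq-and-lq-half-tlq-half}, and then applying Assumption~\ref{assump:covariate-shift-estimation}.

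\textbf{Step 1: dominate $\opv$ by a single weighted operator.} Write
\[
U_i=(\oplpw+\lambda\opid)^{-1}\big(w(X_i)\,\xi_i\,k_{X_i}\big)-(\oplpw+\lambda\opid)^{-1}\oplpw(f_*-\fpw),
\qquad \xi_i:=\eps_i+(f_*-\fpw)(X_i).
\]
The second term is deterministic and equals the mean of the first, so $\opv$ is the covariance of the first term. Hence
\[
\opv\preceq_{\splq}\EE\big[w(X)^2\xi^2\,(\oplpw+\lambda\opid)^{-1}k_X\otimes_{\splq}(\oplpw+\lambda\opid)^{-1}k_X\big].
\]

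\textbf{Step 2: bound the scalar factor.} I bound $w(X)^2\xi^2\le \tau_n\,w(X)\,\xi^2$. I would also bound $\EE[\xi^2\mid X]$ by $\overline\sigma^2+\|f_*-\fpw\|_\infty^2$, using $\EE[\eps\mid X]=0$. By Lemma~\ref{lem:bias-bound-in-h} and $\|\cdot\|_\infty\le\kappa\|\cdot\|_{\sfH}$, the second term is at most $\kappa^2\max\{\kappa^{2s-1},1\}^2\lambda^{2\min\{s,1/2\}}\|g_*\|^2$.

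This Step needs care: the pure bound $\xi^2\le 2\overline\sigma^2+2\|f_*-\fpw\|_\infty^2$ would give an extra constant $2$. I would try to avoid it by conditioning on $X$, so that the cross term $2\,\EE[\eps\mid X]\,(f_*-\fpw)(X)$ vanishes. Since $\xi$ multiplies a function of $X$ only, the $\splq$-tensor factor depends only on $X$, and this conditioning goes through. Call the resulting constant $c_\sigma$, so that
\[
\opv\preceq_{\splq}\tau_n\,c_\sigma\,\EE\big[w(X)\,(\oplpw+\lambda\opid)^{-1}k_X\otimes_{\splq}(\oplpw+\lambda\opid)^{-1}k_X\big].
\]

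\textbf{Step 3: identify the operator.} By Lemma~\ref{lem:simplification-of-a-useful-operator}, the last expectation equals $\mathcal T\oplq$ with
\[
\mathcal T=(\oplpw+\lambda\opid)^{-1}\oplpw(\oplpw+\lambda\opid)^{-1}.
\]
This is a positive self-adjoint operator on $\mathsf H$. Lemma~\ref{lem:same-spectrum-of-tlq-and-lq-half-tlq-half} then gives
\[
\tr_{\splq}(\mathcal T\oplq)=\tr_{\mathsf H}\big(\oplq^{1/2}\mathcal T\oplq^{1/2}\big),\qquad \|\mathcal T\oplq\|_{\splq}=\|\oplq^{1/2}\mathcal T\oplq^{1/2}\|.
\]
Both transfers rely on monotonicity of trace and top eigenvalue under $\preceq_{\splq}$.

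\textbf{Step 4: operator norm bound.} Write
\[
\oplq^{1/2}\mathcal T\oplq^{1/2}=A\,\mathcal S\,A^*,\qquad A=\oplq^{1/2}(\oplpw+\lambda\opid)^{-1/2},\quad \mathcal S=(\oplpw+\lambda\opid)^{-1/2}\oplpw(\oplpw+\lambda\opid)^{-1/2}.
\]
Since $\|A\|\le\mathfrak C_1$ by Assumption~\ref{assump:covariate-shift-estimation} and $\|\mathcal S\|\le1$, the operator norm is at most $\mathfrak C_1^2$.

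\textbf{Step 5: trace bounds.} For the trace, $\tr(A\mathcal SA^*)\le\|A\|^2\tr(\mathcal S)$. Moreover $\tr(\mathcal S)=\tr_{\mathsf H}((\oplpw+\lambda\opid)^{-1}\oplpw)$, which is the first displayed bound. For the second line I need
\[
\tr_{\mathsf H}\big((\oplpw+\lambda\opid)^{-1}\oplpw\big)\le\pd.
\]
This step is the main obstacle, because it compares effective dimensions under $\PP^w$ and $\QQ$. Since $w\le\rho$ we have $\oplpw\preceq\oplq$, and $x\mapsto x/(x+\lambda)$ is operator monotone. That gives $(\oplpw+\lambda)^{-1}\oplpw\preceq(\oplq+\lambda)^{-1}\oplq$, and taking traces yields the claim.
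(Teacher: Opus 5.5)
Your proof is correct and follows essentially the same route as the paper. Both arguments dominate $\opv$ by $\tau_n\bigl(\overline{\sigma}^2+\kappa^2\max\{\kappa^{2s-1},1\}^2\lambda^{2\min\{s,1/2\}}\norm{g_{\ast}}^2\bigr)(\oplpw+\lambda\opid)^{-1}\oplpw(\oplpw+\lambda\opid)^{-1}\oplq$ via Lemma~\ref{lem:simplification-of-a-useful-operator}, transfer to $\mathsf H$ with Lemma~\ref{lem:same-spectrum-of-tlq-and-lq-half-tlq-half}, and finish with Assumption~\ref{assump:covariate-shift-estimation} and $\oplpw\preceq_{\mathsf H}\oplq$. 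The differences are cosmetic: you treat noise and bias together through the residual $\xi=\eps+(f_*-\fpw)(X)$ and remove the cross term by conditioning on $X$, where the paper splits $\opv=\opv_1+\opv_2$ (the same cancellation); your factorization $A\mathcal S A^*$ also makes explicit the operator-norm step that the paper only asserts.
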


\begin{lemma}
  \label{lem:anti-concentration-variance-lower-bound}
  Under Assumptions~\ref{assump:error}
  and~\ref{assump:lower-boundedness-of-covariance-operator}, writing
  \(\lambda=\lambda_n\), we have
  \[
  \opv \succeq_{\splq} 
  \bigl(\mathfrak{C}_2\bigr)^{-2}
  \underline{\rho}
  \underline{\sigma}^{2}
  (\oplq+\lambda \opid)^{-1} \oplq (\oplq+\lambda \opid)^{-1} \oplq.
  \]
\end{lemma}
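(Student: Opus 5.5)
The plan is to chain three facts: the noise lower bound of Lemma~\ref{lem:sigma-lower-operator-bound}, a pointwise lower bound on the truncated weight, and the operator identity of Lemma~\ref{lem:simplification-of-a-useful-operator}. The remaining step is to transfer the $\mathsf H$-Loewner inequality of Assumption~\ref{assump:lower-boundedness-of-covariance-operator} into the $\splq$-Loewner order. Throughout, write $\mathcal A:=(\oplpw+\lambda\opid)^{-1}$.

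\emph{Step 1 (noise lower bound).} By Lemma~\ref{lem:sigma-lower-operator-bound},
\[
\opv\succeq_{\splq}\underline\sigma^2\,\mathbb E\big[w(X)^2\,(\mathcal A k_X)\otimes_{\splq}(\mathcal A k_X)\big].
\]

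\emph{Step 2 (removing one factor of the weight).} Assumption~\ref{assump:lower-boundedness-of-covariance-operator} gives $\rho(X)\ge\underline\rho$ almost surely and $\tau_n\ge\underline\rho$, so $w(X)=\min\{\rho(X),\tau_n\}\ge\underline\rho$ almost surely. Hence
\[
\mathbb E\big[w(X)\{w(X)-\underline\rho\}(\mathcal Ak_X)\otimes_{\splq}(\mathcal Ak_X)\big]\succeq_{\splq}0,
\]
because it is an expectation of nonnegative multiples of rank-one positive operators. Therefore
\[
\opv\succeq_{\splq}\underline\rho\,\underline\sigma^2\,\mathbb E\big[w(X)(\mathcal Ak_X)\otimes_{\splq}(\mathcal Ak_X)\big].
\]
Lemma~\ref{lem:simplification-of-a-useful-operator} identifies the last expectation as $\mathcal A\oplpw\mathcal A\,\oplq$.

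\emph{Step 3 (transfer of the order to $\splq$).} I need the following claim: if $\mathcal T_1\succeq_{\mathsf H}\mathcal T_2$ are bounded self-adjoint operators on $\mathsf H$, then $\mathcal T_1\oplq\succeq_{\splq}\mathcal T_2\oplq$. The key identity is that for $g\in\mathsf H$ and $f\in\splq$,
\[
\langle g,f\rangle_{\splq}=\langle g,\oplq f\rangle_{\mathsf H},
\]
which follows from the reproducing property since $\oplq f=\int k_xf(x)\,d\QQ(x)$. Applying this with $g=\mathcal T_j\oplq f$ gives
\[
\langle \mathcal T_j\oplq f,f\rangle_{\splq}=\langle \mathcal T_j\oplq f,\oplq f\rangle_{\mathsf H}.
\]
The right-hand side is monotone in $\mathcal T_j$ by the $\mathsf H$-order, which proves the claim. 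The same identity shows each $\mathcal T_j\oplq$ is self-adjoint in $\splq$, so the Loewner comparison is meaningful.

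Apply the claim with $\mathcal T_1=\mathcal A\oplpw\mathcal A$ and $\mathcal T_2=\mathfrak C_2^{-2}(\oplq+\lambda\opid)^{-1}\oplq(\oplq+\lambda\opid)^{-1}$; the ordering $\mathcal T_1\succeq_{\mathsf H}\mathcal T_2$ is exactly Assumption~\ref{assump:lower-boundedness-of-covariance-operator}. Combined with Steps 1--2, this yields the lemma.

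The main point to verify carefully is Step 3. The Loewner order lives in a different geometry from the one in which the covariance is compared, so the isometric pairing above is what bridges the two. Step 2 is elementary but depends on $\tau_n\ge\underline\rho$, so that truncation does not destroy the lower bound on the weight.
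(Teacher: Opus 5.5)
Your proposal is correct and follows the paper's proof essentially step for step. The paper likewise combines Lemma~\ref{lem:sigma-lower-operator-bound}, the pointwise bound $w\ge\underline\rho$ to remove one weight factor, and Lemma~\ref{lem:simplification-of-a-useful-operator}, and then transfers the $\mathsf H$-Loewner inequality of Assumption~\ref{assump:lower-boundedness-of-covariance-operator} to $\splq$ via the identity $\langle f,\mathcal T\oplq f\rangle_{\splq}=\langle \oplq f,\mathcal T\oplq f\rangle_{\mathsf H}$. Your Step~3 simply packages that transfer as a general monotonicity claim rather than writing it out for the specific operators.
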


\begin{lemma}[Theorem~2.6 of \protect\cite{gotze2019large}]
  \label{lem:density-upper-bound-for-gaussian-in-hilbert-space}
  Let \(\xi\sim\mathrm{N}_{\mathsf{G}}(0,\opv_\xi)\), where \(\mathsf{G}\) is a separable Hilbert space.
Denoting the probability density function of $\|\xi\|^2_{\mathsf{G}}$ by $p_{\norm{\xi}_{\mathsf{G}}^2}$, we have
$$\sup_{t \geq 0} p_{\norm{\xi}_{\mathsf{G}}^2}(t) \lesssim \varkappa(\opv_\xi) \lesssim(\Lambda_{1\xi}\Lambda_{2\xi})^{-1/2} $$
where
\[
    \varkappa(\opv_\xi)=
    \begin{cases}
      \Lambda_{1\xi}^{-1}, & \text{if } 3\lambda_{1\xi}^2 < \Lambda_{1\xi}^2; \\
      (\lambda_{1\xi}\Lambda_{2\xi})^{-1/2}, & 
      \text{if } 3\lambda_{1\xi}^2 \ge \Lambda_{1\xi}^2,\; 3\lambda_{2\xi}^2 < \Lambda_{2\xi}^2; \\
      (\lambda_{1\xi}\lambda_{2\xi})^{-1/2}, 
      & \text{if } 3\lambda_{1\xi}^2 \ge \Lambda_{1\xi}^2,\; 3\lambda_{2\xi}^2 \ge \Lambda_{2\xi}^2.
    \end{cases}
  \]
  with $\lambda_{i\xi}=\lambda_i(\opv_\xi,\mathsf{G})$ and
  $\Lambda_{k\xi}^{2}=\sum_{i=k}^\infty \lambda_{i\xi}^2$ for \(k=1,2\).
\end{lemma}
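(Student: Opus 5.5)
Write $\xi=\sum_{i\ge1}\lambda_{i\xi}^{1/2}\zeta_i e_i$ in the eigenbasis $\{e_i\}$ of $\opv_\xi$, where the $\zeta_i$ are i.i.d.\ $N(0,1)$. Then $\|\xi\|_{\mathsf G}^2=\sum_i\lambda_{i\xi}\zeta_i^2$ is a weighted chi-square series. Its characteristic function is
\[
\varphi(t)=\prod_{i\ge1}(1-2\mathrm{i}\lambda_{i\xi}t)^{-1/2},
\qquad
|\varphi(t)|=\prod_{i\ge1}(1+4\lambda_{i\xi}^2t^2)^{-1/4}.
\]
The plan has two tools. First, Fourier inversion gives $\sup_t p_{\|\xi\|^2}(t)\le(2\pi)^{-1}\int|\varphi|$ whenever $|\varphi|$ is integrable. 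Second, if $X$ and $Y$ are independent, the density of $X+Y$ is bounded by $\sup p_X$. The second tool lets us bound the density by that of any sub-collection of the summands. We split into the three cases defining $\varkappa$. We then check the second inequality $\varkappa\lesssim(\Lambda_{1\xi}\Lambda_{2\xi})^{-1/2}$ by elementary comparisons.

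\emph{Case 1} ($3\lambda_{1\xi}^2<\Lambda_{1\xi}^2$, no dominant eigenvalue). We aim to show $\int|\varphi|\lesssim\Lambda_{1\xi}^{-1}$.
\begin{itemize}
\item For $|t|\le(2\lambda_{1\xi})^{-1}$, every factor satisfies $\log(1+4\lambda_{i\xi}^2t^2)\ge c\lambda_{i\xi}^2t^2$. Hence $|\varphi(t)|\le\exp(-c\Lambda_{1\xi}^2t^2)$, whose integral is $\lesssim\Lambda_{1\xi}^{-1}$.
\item For $|t|>(2\lambda_{1\xi})^{-1}$ we need a decay rate strictly faster than $|t|^{-1}$. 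Let $J_t=\{i:2\lambda_{i\xi}|t|\ge1\}$. Then
\[
|\varphi(t)|\le\prod_{i\in J_t}(2\lambda_{i\xi}|t|)^{-1/2}\,\exp\Bigl(-ct^2\sum_{i\notin J_t}\lambda_{i\xi}^2\Bigr).
\]
If $|J_t|\ge3$, the first three factors alone give decay $\lesssim(\lambda_{1\xi}\lambda_{2\xi}\lambda_{3\xi})^{-1/2}|t|^{-3/2}$. If $|J_t|\le2$, the condition $3\lambda_{1\xi}^2<\Lambda_{1\xi}^2$ forces the remaining mass to be large. In that subcase I would further split according to whether $\lambda_{2\xi}^2$ carries most of the remainder.
\end{itemize}
Integrating both pieces over $|t|>(2\lambda_{1\xi})^{-1}$ should give a contribution $\lesssim\Lambda_{1\xi}^{-1}$. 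This tail integral is where I expect the main obstacle. I would first try to reduce it to the clean sub-case by grouping eigenvalues into three blocks of comparable $\ell^2$-mass, each of order $\Lambda_{1\xi}^2$. Such a grouping is possible precisely because no single $\lambda_{i\xi}^2$ exceeds $\Lambda_{1\xi}^2/3$. Each block then contributes at least $(1+c t^2\Lambda_{1\xi}^2)^{-1/4}$, and the product of three such factors is integrable with integral $\lesssim\Lambda_{1\xi}^{-1}$.

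\emph{Case 2} ($\lambda_{1\xi}$ dominant, the tail from index $2$ spread). Write $\|\xi\|^2=\lambda_{1\xi}\zeta_1^2+R$ with $R=\sum_{i\ge2}\lambda_{i\xi}\zeta_i^2$.
\begin{itemize}
\item Applying Case 1 to $R$ shows that $R$ has characteristic function integrable with integral $\lesssim\Lambda_{2\xi}^{-1}$.
\item The density of $\lambda_{1\xi}\zeta_1^2$ is $(2\pi\lambda_{1\xi}s)^{-1/2}e^{-s/(2\lambda_{1\xi})}$.
\item Bound the convolution directly. On $\{s\le\Lambda_{2\xi}\}$, use $\sup p_R\lesssim\Lambda_{2\xi}^{-1}$ times $\int_0^{\Lambda_{2\xi}}(\lambda_{1\xi}s)^{-1/2}ds\lesssim(\Lambda_{2\xi}/\lambda_{1\xi})^{1/2}$. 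On $\{s>\Lambda_{2\xi}\}$, the $\lambda_{1\xi}\zeta_1^2$ density is $\le(\lambda_{1\xi}\Lambda_{2\xi})^{-1/2}$.
\end{itemize}
Both pieces give $(\lambda_{1\xi}\Lambda_{2\xi})^{-1/2}$.

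\emph{Case 3} (two dominant eigenvalues). By the second tool, the density is at most the density of $\lambda_{1\xi}\zeta_1^2+\lambda_{2\xi}\zeta_2^2$. This is the convolution of two scaled $\chi^2_1$ densities, which an explicit Beta-integral computation bounds by $\lesssim(\lambda_{1\xi}\lambda_{2\xi})^{-1/2}$.

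\emph{Second inequality.}
\begin{itemize}
\item Case 1: $\Lambda_{1\xi}^{-1}\le(\Lambda_{1\xi}\Lambda_{2\xi})^{-1/2}$ because $\Lambda_{2\xi}\le\Lambda_{1\xi}$.
\item Case 2: $\lambda_{1\xi}\ge\Lambda_{1\xi}/\sqrt3$.
\item Case 3: additionally $\lambda_{2\xi}\ge\Lambda_{2\xi}/\sqrt3$.
\end{itemize}
These comparisons yield $\varkappa(\opv_\xi)\lesssim(\Lambda_{1\xi}\Lambda_{2\xi})^{-1/2}$ in all cases.
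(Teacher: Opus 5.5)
The paper gives no proof of this lemma. It imports Theorem~2.6 of \cite{gotze2019large} and only remarks that the displayed form is slightly corrected. Your argument is therefore an independent, self-contained route, and it is correct apart from the presentation of Case~1 discussed below. The following steps all check out:
\begin{enumerate}
\item the reduction to $\|\xi\|_{\mathsf G}^2=\sum_i\lambda_{i\xi}\zeta_i^2$;
\item the bound $\sup p_{X+Y}\le\sup p_X$ for independent $X,Y$;
\item the Case~2 convolution split at $s=\Lambda_{2\xi}$, where Case~1 applies to $R$ because its defining condition is exactly $3\lambda_{2\xi}^2<\Lambda_{2\xi}^2$;
\item the Case~3 Beta integral $\int_0^t s^{-1/2}(t-s)^{-1/2}\,ds=\pi$, which gives the bound $\tfrac12(\lambda_{1\xi}\lambda_{2\xi})^{-1/2}$;
\item the three comparisons with $(\Lambda_{1\xi}\Lambda_{2\xi})^{-1/2}$.
\end{enumerate}
The citation buys brevity. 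Your proof buys explicit universal constants and shows that the strict versus non-strict placement of the case boundaries is immaterial. It also delivers exactly the bound $\sup p_{T^2}\lesssim(\Lambda_{1Z}\Lambda_{2Z})^{-1/2}$ that the proof of Lemma~\ref{lem:density-upper-bound-for-gaussian-in-splq} uses.

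Case~1 needs rewriting. Drop the split at $|t|=(2\lambda_{1\xi})^{-1}$ and the sets $J_t$. As you sketched it, the $|J_t|\ge3$ bound integrated over $|t|\ge(2\lambda_{3\xi})^{-1}$ only yields $(\lambda_{1\xi}\lambda_{2\xi})^{-1/2}$. With $N$ equal eigenvalues this exceeds $\Lambda_{1\xi}^{-1}$ by a factor $\sqrt N$.

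Make the three-block grouping the whole argument, and state its two ingredients. The first is $\prod_{i\in G}(1+4\lambda_{i\xi}^2t^2)\ge 1+4t^2\sum_{i\in G}\lambda_{i\xi}^2$.

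The second is a greedy partition lemma. Set $x_i=\lambda_{i\xi}^2/\Lambda_{1\xi}^2$; these are nonincreasing, sum to one, and satisfy $x_1\le 1/3$. Take $A=\{1,\dots,k\}$ and then $B=\{k+1,\dots,\ell\}$, with $k$ and $\ell$ minimal so that each block has mass at least $1/6$.
\begin{enumerate}
\item If $k=1$, then $x(A)=x_1\le1/3$ and $x(B)<1/6+x_\ell\le 1/6+x_2\le1/2$.
\item If $k\ge2$, then $x_k\le x_{k-1}<1/6$, so $x(A)<1/3$ and $x(B)<1/6+x_k<1/3$.
\end{enumerate}
In either case the remainder $C$ has mass greater than $1/6$. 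Hence $|\varphi(t)|\le(1+\tfrac23 t^2\Lambda_{1\xi}^2)^{-3/4}$ for every $t$. Fourier inversion then gives $\sup p\le C\Lambda_{1\xi}^{-1}$ in one step, with no small-$t$/large-$t$ split. This also covers the boundary case $3\lambda_{1\xi}^2=\Lambda_{1\xi}^2$.
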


\begin{remark}
Lemma~\ref{lem:density-upper-bound-for-gaussian-in-hilbert-space} states a slightly corrected version of Theorem~2.6 in \cite{gotze2019large}, which does not affect any of the
\(\lesssim\)-order consequences used in this proof.
\end{remark}

\begin{proof}[Proof of Lemma~\ref{lem:density-upper-bound-for-gaussian-in-splq}]
  Let
  \((\nu_j,e_j)_{j\ge1}\) be the \(\splq\)-eigenpairs of \(\opv\), ordered so that
  \(\nu_1\geq\nu_2\geq\cdots\), and define
  \[
    \Lambda_{kZ}^{2}:=\sum_{j=k}^{\infty}\nu_j^2,
    \qquad k=1,2.
  \]
  We first prove the general bound
\begin{equation*}
\begin{aligned}
\sup_{t \ge 0} p_{\norm{Z}_{\splq}} (t)
&\lesssim 
\sqrt{\tr_{\splq}(\opv)}
(\Lambda_{1Z}\Lambda_{2Z})^{-1/2}.
\end{aligned}
\end{equation*}
Let
\[
  T:=\norm{Z}_{\splq}.
\]
By Lemma~\ref{lem:density-upper-bound-for-gaussian-in-hilbert-space}, applied with
\(\mathsf{G}=\splq\) and covariance \(\opv\), the random variable \(T^2\) has a density satisfying
  \[
  \sup_{u\ge0}p_{T^2}(u)
  \lesssim
  (\Lambda_{1Z}\Lambda_{2Z})^{-1/2}.
  \]
For \(t>0\), the density of \(T\) satisfies
\[
  p_T(t)=2t\,p_{T^2}(t^2).
\]
  It remains to turn this pointwise bound into a bound that is uniform in \(t\). For a finite-rank truncation
\[
  Z_m=\sum_{j=1}^m \sqrt{\nu_j}g_j e_j,
\]
the distribution function of \(T_m=\norm{Z_m}_{\splq}\) can be written as
\[
  \mathbb{P}(T_m\le t)=\gamma_m(tK_m),
\]
  where \(\gamma_m\) is a centered Gaussian measure on
  \(\operatorname{span}\{e_1,\ldots,e_m\}\) and \(K_m\) is the unit ball in that subspace. Since Gaussian measures are log-concave and
  \[
    a(tK_m)+(1-a)(sK_m)=\{at+(1-a)s\}K_m,
    \qquad a\in[0,1],
  \]
  the distribution function \(F_{T_m}(t)=\mathbb{P}(T_m\leq t)\) is log-concave on \([0,\infty)\). Moreover, \(T_m\to T\) almost surely. Since \(T\) has a density, \(F_{T_m}(t)\to F_T(t)\) for every \(t\), and hence \(F_T\) is also log-concave.

  Put
  \[
    A_Z:=(\Lambda_{1Z}\Lambda_{2Z})^{-1/2},
    \qquad
    m_Z:=\{\tr_{\splq}(\opv)\}^{1/2}.
  \]
  The Gaussian fourth-moment identity gives
  \[
    \mathbb{E}T^4
    =\{\tr_{\splq}(\opv)\}^2+2\norm{\opv}_{HS,\splq}^2
    \leq 3m_Z^4.
  \]
  Applying the Paley--Zygmund inequality to \(T^2\) therefore yields
  \[
    \mathbb{P}\left(T\geq \frac{m_Z}{\sqrt2}\right)
    \geq \frac1{12},
    \qquad
    \mathbb{E}T\geq\frac{m_Z}{12\sqrt2}.
  \]
  If \(0<t\leq4\mathbb{E}T\), the squared-norm density bound gives
  \[
    p_T(t)=2t p_{T^2}(t^2)\lesssim A_Zm_Z.
  \]
  If \(t>4\mathbb{E}T\), Markov's inequality gives \(F_T(t/2)\geq1/2\). At every differentiability point of \(\log F_T\), its concavity implies
  \[
    \frac{p_T(t)}{F_T(t)}
    =\{\log F_T\}'(t)
    \leq
    \frac{\log F_T(t)-\log F_T(t/2)}{t/2}
    \leq\frac{2\log2}{t}.
  \]
  Thus \(p_T(t)\lesssim m_Z^{-1}\) for almost every \(t\) in this second regime. Finally,
  \[
    m_Z^2=\tr_{\splq}(\opv)
    \geq\Lambda_{1Z}
    \geq(\Lambda_{1Z}\Lambda_{2Z})^{1/2},
  \]
  so that \(m_Z^{-1}\leq A_Zm_Z\). Choosing the density version satisfying these almost-everywhere bounds and combining the two regimes proves
  \[
    \sup_{t\geq0}p_{\norm{Z}_{\splq}}(t)
    \lesssim
    \sqrt{\tr_{\splq}(\opv)}
    (\Lambda_{1Z}\Lambda_{2Z})^{-1/2}.
  \]

We now bound the two covariance quantities in the preceding display. By Lemma~\ref{lem:sigma-l2-trace-operator-norm-bound} and its proof,
\[
  \tr_{\splq}(\opv)
  \lesssim
  \tau_n \tr_{\mathsf H}\!\left((\oplq+\lambda\opid)^{-1}\oplq\right)
  =
  \tau_n\pd
  =
  \tau_n\sum_{i=1}^{\infty}\frac{\mu_i}{\mu_i+\lambda},
\]
where fixed constants depending on the noise, source condition, and compatibility constant are absorbed into \(\lesssim\). On the other hand, Lemma~\ref{lem:anti-concentration-variance-lower-bound} gives
\[
  \opv
  \succeq_{\splq}
  \bigl(\mathfrak{C}_2\bigr)^{-2}
  \underline{\rho}\underline{\sigma}^{2}
  (\oplq+\lambda\opid)^{-1}\oplq(\oplq+\lambda\opid)^{-1}\oplq.
\]
By monotonicity of eigenvalues under the Loewner order for positive compact operators,
the ordered eigenvalues satisfy
  \[
  \nu_i
  \gtrsim
  \left(\frac{\mu_i}{\mu_i+\lambda}\right)^2,
  \qquad i\geq1,
  \]
because the eigenvalues of
  \((\oplq+\lambda\opid)^{-1}\oplq(\oplq+\lambda\opid)^{-1}\oplq\)
  are \((\mu_i/(\mu_i+\lambda))^2\). Consequently, for \(k=1,2\),
  \[
  \Lambda_{kZ}^{2}
  \gtrsim
  \sum_{i=k}^{\infty}
  \left(\frac{\mu_i}{\mu_i+\lambda}\right)^4.
  \]
Under either two-sided spectral condition, Lemma~\ref{lem:spectral-summation-rates}
shows that the full sum on the right-hand side diverges as \(\lambda\to0\).
Since each summand is at most one, deleting its first term does not change its
order for all sufficiently small \(\lambda\). On any remaining compact subinterval
of \((0,1]\), the ratio of the tail sum to the full sum is continuous and strictly
positive. Thus, after enlarging the implicit constant, for \(k=1,2\),
  \[
  \Lambda_{kZ}
  \gtrsim
  \left\{
  \sum_{i=1}^{\infty}
  \left(\frac{\mu_i}{\mu_i+\lambda}\right)^4
  \right\}^{1/2}.
  \]
Combining the preceding bounds yields
\[
\sup_{t \ge 0} p_{\norm{Z}_{\splq}}(t)
\lesssim
\sqrt{\tau_n}
\left\{
\frac{\sum_{i=1}^{\infty} \mu_i/(\mu_i+\lambda)}
{\sum_{i=1}^{\infty} \{\mu_i/(\mu_i+\lambda)\}^{4}}
\right\}^{1/2}.
\]
Under either the two-sided polynomial or two-sided exponential spectral decay regime, Lemma~\ref{lem:spectral-summation-rates}, after the same compact-interval adjustment over \(0<\lambda\leq1\),
\[
  \sum_{i=1}^{\infty} \frac{\mu_i}{\mu_i+\lambda}
  \asymp
  \sum_{i=1}^{\infty}
  \left(\frac{\mu_i}{\mu_i+\lambda}\right)^4;
\]
as \(\lambda\to0\), both sums are of order \(\lambda^{-1/\beta}\) in the polynomial case and
\(\{\log(1/\lambda)\}^{1/\gamma}\) in the exponential case, up to constants depending only on the spectral parameters. Therefore
\[
  \sup_{t\ge0}p_{\norm{Z}_{\splq}}(t)
  \lesssim \sqrt{\tau_n}.
\]
\end{proof}

\subsection{Proof of Theorem~\ref{thm:main-coverage}}

\begin{proof}[Proof of Theorem \ref{thm:main-coverage}]
Let
\[
  T:=\sqrt n\norm{\hat f-f_*}_{\splq},
  \qquad
  S:=\norm{Z}_{\splq},
  \qquad
  a_n:=\sup_{t\geq0}p_{\norm{Z}_{\splq}}(t),
\]
and let \(t_\beta\) denote the deterministic upper \(\beta\)-quantile of
\(S\). By the Gaussian approximation and the bias bound, for every
deterministic \(t\),
\begin{align*}
  P(T\geq t)
  &\leq
  P(S\geq t-\Delta_G-\Delta_{\mathrm{bias}})+\eta,\\
  P(T\geq t)
  &\geq
  P(S\geq t+\Delta_G+\Delta_{\mathrm{bias}})-\eta.
\end{align*}
Let \(\mathcal E_B\) be the data event on which the conditional bootstrap
approximation~\eqref{eq:theory-bootstrap-approximation} holds. By the statement
preceding~\eqref{eq:theory-bootstrap-approximation},
\(P(\mathcal E_B)\geq1-\eta\). On \(\mathcal E_B\),
the conditional coupling of \(B\) and \(Z'\) gives, for every \(s\geq0\),
\begin{align*}
  P(S\geq s+\Delta_B)
  &\leq
  P\!\left(\|B\|_{\splq}\geq s\mid\mathcal D\right)+\eta,\\
  P\!\left(\|B\|_{\splq}\geq s+\Delta_B\mid\mathcal D\right)
  &\leq
  P(S\geq s)+\eta,
\end{align*}
because \(Z'\mid\mathcal D\) has the same law as \(Z\). Taking
\(s=\hat t_\alpha\) in the first inequality and
\(s=t_{\alpha-\eta}\) in the second yields
\[
  t_{\alpha+\eta}-\Delta_B
  \leq
  \hat t_\alpha
  \leq
  t_{\alpha-\eta}+\Delta_B.
\]
Consequently,
\begin{align*}
P(T\geq\hat t_\alpha)
&\leq
P(\mathcal E_B^c)
+
P(T\geq t_{\alpha+\eta}-\Delta_B)\\
&\leq
\alpha+3\eta+a_n\Delta,
\end{align*}
where the last step uses the density bound to compare the Gaussian upper-tail
probabilities at \(t_{\alpha+\eta}-\Delta\) and
\(t_{\alpha+\eta}\). Similarly,
\begin{align*}
P(T\geq\hat t_\alpha)
&\geq
P(T\geq t_{\alpha-\eta}+\Delta_B)-P(\mathcal E_B^c)\\
&\geq
\alpha-3\eta-a_n\Delta.
\end{align*}
Lemma~\ref{lem:density-upper-bound-for-gaussian-in-splq} gives
\(a_n\lesssim\sqrt{\tau_n}\), which proves the result.
\end{proof}

\begin{corollary}[Asymptotic coverage]\label{cor:asymptotic-coverage}
Under the conditions of Theorem~\ref{thm:main-coverage}, let the tuning
parameters and probability tolerance depend on \(n\). If
\[
  \eta_n\to0
  \qquad\text{and}\qquad
  \sqrt{\tau_n}\,
  \{\Delta_{G,n}+\Delta_{B,n}+\Delta_{\mathrm{bias},n}\}
  \to0,
\]
then, for every fixed \(\alpha\in(0,1)\),
\[
  P\!\left(f_*\in\cs_{\alpha}(\hat f)\right)\to1-\alpha.
\]
\end{corollary}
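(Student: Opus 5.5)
This is a direct consequence of Theorem~\ref{thm:main-coverage}; no new probabilistic input is needed. The plan is to use the explicit miscoverage bound and then make sure the tolerance conditions of that theorem hold for every large $n$.

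\emph{Step 1: admissibility of $\eta_n$.} Fix $\alpha\in(0,1)$. Since $\eta_n\to0$, there is an $N_\alpha$ such that $0<\eta_n<\min\{\alpha,1-\alpha\}$ for all $n\geq N_\alpha$. For these $n$, the inequalities \eqref{eq:theory-bias-bound}--\eqref{eq:theory-bootstrap-approximation} hold with tolerance $\eta_n$ and errors $\Delta_{G,n},\Delta_{B,n},\Delta_{\mathrm{bias},n}$, by hypothesis. The conditions of Lemma~\ref{lem:density-upper-bound-for-gaussian-in-splq} are also assumed, so Theorem~\ref{thm:main-coverage} applies at each such $n$.

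\emph{Step 2: apply the bound.} Theorem~\ref{thm:main-coverage} gives a constant $C$, universal up to the constant $C_1$ of Lemma~\ref{lem:density-upper-bound-for-gaussian-in-splq}, such that
\[
\big|\PP\big(f_*\notin\cs_{\alpha}(\what f)\big)-\alpha\big|
\leq C\Big(\eta_n+\sqrt{\tau_n}\,\{\Delta_{G,n}+\Delta_{B,n}+\Delta_{\mathrm{bias},n}\}\Big).
\]
Both terms on the right tend to zero by assumption. Since $\PP(f_*\in\cs_\alpha(\what f))=1-\PP(f_*\notin\cs_\alpha(\what f))$, the coverage probability converges to $1-\alpha$.

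\emph{Main obstacle.} The only delicate point is that the constant hidden in $\lesssim$ must not depend on $n$. To settle this, I will trace the proof of Theorem~\ref{thm:main-coverage}. There the explicit bound reads $3\eta+a_n\Delta$, and $a_n\leq C_1\sqrt{\tau_n}$. The constant $C_1$ comes from Lemma~\ref{lem:density-upper-bound-for-gaussian-in-splq} and depends only on the noise, the spectral parameters, the source--target distributions and the compatibility constants $\mathfrak{C}_1,\mathfrak{C}_2,\underline\rho$. All of these are fixed in $n$. That lemma's constants are uniform over $0<\lambda_n\leq1$ because of the compact-interval adjustment made in its proof. Hence the bound holds uniformly in $n\geq N_\alpha$, and the limit follows.
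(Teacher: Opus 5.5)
Your proposal is correct and takes the paper's route: the paper says only that the corollary ``follows immediately from Theorem~\ref{thm:main-coverage},'' and your argument spells that application out. The two checks you add are exactly what ``immediately'' leaves implicit: that $\eta_n<\min\{\alpha,1-\alpha\}$ for all large $n$, and that the density constant $C_1$ of Lemma~\ref{lem:density-upper-bound-for-gaussian-in-splq} is uniform in $n$ (uniform over $0<\lambda_n\le 1$, with the compatibility constants fixed).
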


Corollary~\ref{cor:asymptotic-coverage} follows immediately from Theorem~\ref{thm:main-coverage}.

\subsection{Proof of Proposition~\ref{prop:main-bias}}

\begin{lemma}
  \label{lem:source-condition-bound-1}
  Under Assumption \ref{assump:source-condition}, when $\tau_n \ge \tau_0$, we have
\[ \| (\oplpw + \lambda \opid)^{-\frac{1}{2}} f_* \| \le \max\{\kappa^{2s-1}, 1\}\,\lambda^{\min\{s-\frac{1}{2}, 0\}}\,\|g_{\ast}\|. \]
\end{lemma}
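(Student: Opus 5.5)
The plan is to reduce the statement to a spectral calculation in the eigenbasis of a single positive operator that dominates $\oplpw$. Write $f_*=\oplpwo^s g_*$ with $g_*\in\sfH$. By Assumption~\ref{assump:source-condition} and $\tau_n\ge\tau_0$, we have $w_0\le w_n$ pointwise, hence $\oplpwo\preceq_{\sfH}\oplpw$. The quantity to bound is $\|(\oplpw+\lambda\opid)^{-1/2}\oplpwo^s g_*\|$, and it suffices to show
\[
\big\|(\oplpw+\lambda\opid)^{-1/2}\oplpwo^s\big\|_{\sfH\to\sfH}\le \max\{\kappa^{2s-1},1\}\,\lambda^{\min\{s-1/2,0\}}.
\]

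\textbf{Case $s\le 1/2$.} Because $0\le s\le 1$, operator monotonicity of $t\mapsto t^{2s}$ (Löwner--Heinz) gives $\oplpwo^{2s}\preceq\oplpw^{2s}$. The squared norm is then
\[
\|\oplpwo^{s}(\oplpw+\lambda\opid)^{-1}\oplpwo^{s}\|\le \|(\oplpw+\lambda\opid)^{-1/2}\oplpwo^{2s}(\oplpw+\lambda\opid)^{-1/2}\|\le \|(\oplpw+\lambda\opid)^{-1/2}\oplpw^{2s}(\oplpw+\lambda\opid)^{-1/2}\|.
\]
This uses $\|A^*A\|=\|AA^*\|$ with $A=(\oplpw+\lambda\opid)^{-1/2}\oplpwo^s$, followed by the Löwner inequality. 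The last expression equals $\sup_{t\ge0} t^{2s}/(t+\lambda)\le \lambda^{2s-1}$ when $2s\le1$, since $t^{2s}\le (t+\lambda)^{2s}$. Taking square roots gives $\lambda^{s-1/2}$, and the factor $\max\{\cdot,1\}\ge 1$ absorbs everything.

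\textbf{Case $s>1/2$.} Split $\oplpwo^s=\oplpwo^{1/2}\oplpwo^{s-1/2}$. The first factor satisfies $\|(\oplpw+\lambda\opid)^{-1/2}\oplpwo^{1/2}\|\le 1$, by the case above with $s=1/2$ (directly, $\oplpwo\preceq\oplpw+\lambda\opid$). The second factor satisfies $\|\oplpwo^{s-1/2}\|\le \|\oplpwo\|^{s-1/2}\le (\tau_0\kappa^2)^{s-1/2}$, because $\|\oplpwo\|\le \EE_{\PP}[w_0(X)]\kappa^2\le\kappa^2$ using $w_0\le\rho$ and $\EE_{\PP}\rho=1$. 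Hence the bound is $\kappa^{2s-1}$ with $\lambda^{0}$, matching $\max\{\kappa^{2s-1},1\}\lambda^{\min\{s-1/2,0\}}$.

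\textbf{Main obstacle.} The only delicate point is that $\oplpwo$ and $\oplpw$ do not commute. A naive spectral-calculus argument therefore fails, and this is why I route through Löwner--Heinz and the $\|A^*A\|=\|AA^*\|$ trick rather than a common eigenbasis. I also need to double-check the trace/operator bound $\|\oplpwo\|\le\kappa^2$. This should come from $\|\oplpwo\|\le \EE_{\PP}[w_0\|k_X\|^2]\le\kappa^2\EE_{\QQ}1$; if that bound were unavailable, the constant would be $(\tau_0\kappa^2)^{s-1/2}$ instead, which is still a fixed constant.
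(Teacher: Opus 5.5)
Your proof is correct and follows the paper's structure. You use the same case split at $s=1/2$, and for $s>1/2$ the same factorization $\oplpwo^s=\oplpwo^{1/2}\oplpwo^{s-1/2}$ with $\|(\oplpw+\lambda\opid)^{-1/2}\oplpwo^{1/2}\|\le 1$.

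The differences are only technical:

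\begin{itemize}
\item For $s\le 1/2$, the paper uses $(\oplpw+\lambda\opid)^{-1}\preceq(\oplpwo+\lambda\opid)^{-1}$ and then spectral calculus in $\oplpwo$ alone. This needs only monotonicity of the inverse, not L\"owner--Heinz.
\item For $s>1/2$, the paper uses L\"owner--Heinz ($\oplpwo^{2s-1}\preceq\oplq^{2s-1}$) and $\|\oplq\|\le\kappa^2$. You instead bound $\|\oplpwo\|\le\kappa^2\EE_{\PP}[w_0]\le\kappa^2$ directly, which is fine.
\item The stray intermediate $(\tau_0\kappa^2)^{s-1/2}$ in your chain should simply be dropped, and your closing hedge about it is unnecessary.
\end{itemize}
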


\begin{proof}[Proof of Proposition~\ref{prop:main-bias}]
Applying Lemma~\ref{lem:source-condition-bound-1} with
\(\lambda=\lambda_n\), we obtain
\begin{align*}
\sqrt n\,\left\|\fpw-f_*\right\|_{\splq}
&=
\sqrt n\,\lambda_n
\left\|
\oplq^{\frac12}
\left(\oplpw+\lambda_n\opid\right)^{-1}f_*
\right\| \\
&\leq
\mathfrak C_1\sqrt n\,\lambda_n
\left\|
\left(\oplpw+\lambda_n\opid\right)^{-\frac12}f_*
\right\| \\
&\leq
\max\left\{\kappa^{2s-1},1\right\}
\mathfrak C_1\sqrt n\,
\lambda_n^{\min\left\{s+\frac12,1\right\}}
\norm{g_{\ast}}_{\sfH}
=
\Delta_{\mathrm{bias}}.
\end{align*}
This proves the proposition.
\end{proof}

\subsection{Proof of Proposition~\ref{prop:main-bahadur}}

Let \(\beps=(\eps_1,\ldots,\eps_n)^\top\).  
The following extension of Bernstein's inequality for Hilbert--Schmidt operators is useful.

\begin{lemma}[\protect\cite{MINSKER2017111}, Theorem~3.1]\label{lem:hs-inequality}
Let $\mathsf G$ be a separable Hilbert space, let $R > 0$, and let
$\{ \xi_i \}_{i=1}^n$ be independent, self-adjoint Hilbert--Schmidt
random operators on $\mathsf G$ such that
\[
\mathbb{E}[\xi_i] = 0
\quad\text{and}\quad
\|\xi_i\| \le R
\quad\text{a.s.}
\]

Suppose there exist \(V,W>0\) such that
\[
\left\|
\mathbb{E}\Big[
\Big(\sum_{i=1}^n \xi_i \Big)^2
\Big]
\right\|
\le V
\]
\[
\operatorname{Tr}\!\left(
\mathbb{E}\Big[
\Big(\sum_{i=1}^n \xi_i \Big)^2
\Big]
\right)
\le V W
\]

Then $\forall\, t \ge V^{1/2} + R/3$,
\[
\mathbb{P}\left(
\left\|
\sum_{i=1}^n \xi_i
\right\|
\ge t
\right)
\le
4 W \exp\!\left(
-\frac{t^2}{2\big(V + R t / 3\big)}
\right).
\]
\end{lemma}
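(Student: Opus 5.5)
We plan to follow the matrix Laplace-transform route of Tropp, modified in two ways: use an intrinsic-dimension trace function in place of the ambient dimension, and reduce the infinite-dimensional setting to finite dimensions. Write $S=\sum_{i=1}^n\xi_i$ and $\varphi(x)=e^{x}-x-1$. The function $\varphi$ is nonnegative and convex, increasing on $[0,\infty)$, and satisfies $\varphi(x)\le x^2e^{|x|}/2$.

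\emph{Reduction to a trace.} For $\theta>0$, since $S$ is self-adjoint Hilbert--Schmidt, $\varphi(\theta S)$ is trace class. All its eigenvalues $\varphi(\theta s_j)$ are nonnegative, so $\operatorname{Tr}\varphi(\theta S)\ge \varphi(\theta\|S\|)$ when the top eigenvalue in absolute value is positive. For a negative extreme eigenvalue we apply the same argument to $-S$; this costs a factor $2$, which is absorbed in the constant $4$. Markov's inequality then gives
\[
\mathbb P(\|S\|\ge t)\le \frac{\mathbb E\operatorname{Tr}\varphi(\theta S)}{\varphi(\theta t)} .
\]
Since $\mathbb E S=0$, we have $\mathbb E\operatorname{Tr}\varphi(\theta S)=\mathbb E\operatorname{Tr}(e^{\theta S}-I)$, interpreted through the spectral decomposition.

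\emph{Bounding the trace.} Next we control $\mathbb E\operatorname{Tr}(e^{\theta S}-I)$ using Lieb's concavity theorem, iterated as in Tropp's subadditivity of matrix cumulant generating functions. Together with the scalar Bernstein bound $\mathbb E e^{\theta\xi_i}\preceq \exp\{g(\theta)\mathbb E\xi_i^2\}$, where $g(\theta)=\frac{\theta^2/2}{1-R\theta/3}$ for $0<\theta<3/R$, this yields
\[
\mathbb E\operatorname{Tr}(e^{\theta S}-I)\le \operatorname{Tr}\big(\exp\{g(\theta)A\}-I\big),\qquad A:=\mathbb E S^2=\sum_i\mathbb E\xi_i^2 .
\]
Both Lieb's theorem and the Bernstein bound are finite-dimensional statements. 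To transfer them, we apply them to $P_k\xi_iP_k$ for an increasing sequence of finite-rank orthogonal projections $P_k\uparrow I$, and pass to the limit using $\|P_kSP_k-S\|_{\mathrm{HS}}\to0$ and dominated convergence, justified by $\varphi(x)\le x^2e^{|x|}/2$ and $\|S\|\le nR$. We expect this limiting step to be the main technical obstacle. The subtracted identity is essential here, because $\operatorname{Tr}e^{\theta S}$ itself is infinite.

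\emph{Intrinsic dimension and the tail bound.} The function $x\mapsto e^{g x}-1$ is convex and vanishes at $0$. Since the eigenvalues of $A$ lie in $[0,V]$, each satisfies $e^{g a}-1\le (a/V)(e^{gV}-1)$. Hence
\[
\operatorname{Tr}(e^{gA}-I)\le \frac{\operatorname{Tr}A}{V}(e^{gV}-1)\le W(e^{gV}-1).
\]
Combining the three steps gives $\mathbb P(\|S\|\ge t)\le W e^{g(\theta)V}/\varphi(\theta t)$ (up to the factor $2$ from symmetrization). We choose $\theta=t/(V+Rt/3)$, which lies in $(0,3/R)$, and write $\varphi(\theta t)=e^{\theta t}\big(1-(1+\theta t)e^{-\theta t}\big)$. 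The exponent then becomes $g(\theta)V-\theta t\le -t^2/\{2(V+Rt/3)\}$ by the standard Bernstein algebra. The hypothesis $t\ge V^{1/2}+R/3$ forces $\theta t\ge1$ or so: indeed $t^2\ge V+Rt/3$ gives $\theta t\ge 1$. This in turn gives $1-(1+\theta t)e^{-\theta t}\ge 1-2/e$, a constant bounded below. Absorbing this constant and the symmetrization factor yields the stated prefactor $4W$.
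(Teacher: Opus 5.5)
Your route is the standard one behind the cited result; the paper gives no proof of its own, only the reference. The ingredients are the trace--Markov bound with $\varphi(x)=e^x-x-1$, Lieb's theorem with the matrix Bernstein moment-generating-function bound on finite-rank compressions, and the convexity step $\operatorname{Tr}(e^{gA}-I)\le W(e^{gV}-1)$. Those steps are sound.

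The gap is the last step: your bookkeeping does not give the constant $4$. After you discard the $-1$ in $W(e^{gV}-1)$, each tail is bounded by $W e^{-x/2}\{1-(1+x)e^{-x}\}^{-1}$ with $x=\theta t\ge 1$. At $x=1$ the factor is $e/(e-2)\approx 3.78$, and doubling for $\pm S$ gives about $7.6\,W$. Nothing absorbs this into $4W$. Dropping the $-1$ is what costs you: at $x=1$, even the two-sided device $\operatorname{Tr}\{\varphi(\theta S)+\varphi(-\theta S)\}\ge 2(\cosh\theta t-1)$ would only give $2/(1-e^{-1})^2\approx 5$.

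The repair is short. At $\theta=t/(V+Rt/3)$ you have $1-R\theta/3=V/(V+Rt/3)$, hence $g(\theta)V=\theta t/2=x/2$. Keeping the $-1$,
\[
\mathbb{P}\bigl(\lambda_{\max}(S)\ge t\bigr)\le\frac{W\,(e^{x/2}-1)}{e^{x}-x-1}\le 2W e^{-x/2}.
\]
The last inequality is equivalent to $e^{x}+e^{x/2}-2x-2\ge 0$. This holds at $x=1$, where the value is $e+\sqrt{e}-4>0$, and the left side has positive derivative on $[1,\infty)$. The same bound for $-S$ and a union bound give exactly $4W\exp\{-t^2/(2(V+Rt/3))\}$.

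One simplification for the infinite-dimensional step. You need no limit on the right-hand side: $\xi_iP_k\xi_i\preceq\xi_i^2$ gives $\sum_i\mathbb{E}(P_k\xi_iP_k)^2\preceq P_kAP_k$, whose norm and trace are at most $V$ and $VW$, so the finite-dimensional bound is already $W(e^{gV}-1)$. On the left, Fatou's lemma applied to $\operatorname{Tr}\varphi(\theta P_kSP_k)\to\operatorname{Tr}\varphi(\theta S)$ suffices. Invoke the identity with $\operatorname{Tr}(e^{\theta S}-I)$ only for the compressions, since $S$ itself need not be trace class.
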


\begin{lemma}\label{lem:bound-bahadur-1}
  For \(0<\eta<1\) and \(0<\lambda\leq1\), with probability at least
  \(1-\eta\),
\begin{equation*}
    \begin{aligned}
  &\left\|
    \left(\oplpw + \lambda \opid\right)^{-\frac{1}{2}}
    \left(\oplpnw - \oplpw\right)
    \left(\oplpw + \lambda \opid\right)^{-\frac{1}{2}}
  \right\|\\
  <&
  C_1
  \left[
    \frac{\bigl(\tau_n \kappa^2 + 1\bigr)\log(e+\pd)}{n \lambda}
    +
    \sqrt{
      \frac{\bigl(\tau_n \kappa^2 + 1\bigr)\log(e+\pd)}{n \lambda}
    }
  \right]
  \log \frac{2}{\eta}.
    \end{aligned}
\end{equation*}
If further $\tau_n \to \infty$ and $\lambda \to 0$ as $n \to \infty$, and
\begin{equation*}
  \frac{\bigl(\tau_n \kappa^2 + 1\bigr)\log(e+\pd)}
       {n \lambda}
  < 1,
\end{equation*}
then with probability at least $1 - \eta$,
\begin{equation*}
\begin{aligned}
  &\left\|
    \left(\oplpw + \lambda \opid\right)^{-\frac{1}{2}}
    \left(\oplpnw - \oplpw\right)
    \left(\oplpw + \lambda \opid\right)^{-\frac{1}{2}}
  \right\|
  \lesssim&
    \sqrt{
      \frac{\bigl(\tau_n \kappa^2 + 1\bigr)\log(e+\pd)}
           {n \lambda}
    }
  \log \frac{2}{\eta}.
    \end{aligned}
\end{equation*}
\end{lemma}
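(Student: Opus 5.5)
The plan is to write the whitened empirical deviation as a sum of i.i.d. centered self-adjoint operators and apply the Hilbert--Schmidt Bernstein inequality, Lemma~\ref{lem:hs-inequality}. Set $\mathcal{A}:=(\oplpw+\lambda\opid)^{-1/2}$, $\psi_i:=\mathcal{A}k_{X_i}$, and
\[
\xi_i:=\frac1n\Big(w(X_i)\,\psi_i\otimes\psi_i-\mathcal{A}\oplpw\mathcal{A}\Big).
\]
Then $\sum_i\xi_i=\mathcal{A}(\oplpnw-\oplpw)\mathcal{A}$, and each $\xi_i$ is self-adjoint, centered and rank at most two plus a bounded piece, hence Hilbert--Schmidt.

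\textbf{Step 1: uniform bound $R$.} We have $\|\psi_i\|^2=\langle k_{X_i},(\oplpw+\lambda\opid)^{-1}k_{X_i}\rangle\le\kappa^2/\lambda$ and $w\le\tau_n$. Also $\|\mathcal{A}\oplpw\mathcal{A}\|\le1$. Hence
\[
\|\xi_i\|\le \frac{\tau_n\kappa^2/\lambda+1}{n}\le\frac{\tau_n\kappa^2+1}{n\lambda}=:R,
\]
using $\lambda\le1$.

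\textbf{Step 2: variance proxy $V$ and intrinsic dimension $W$.} Since $\mathbb{E}[\xi_i^2]\preceq n^{-2}\mathbb{E}[w^2\|\psi_i\|^2\,\psi_i\otimes\psi_i]\preceq n^{-2}\tau_n\kappa^2\lambda^{-1}\,\mathbb{E}[w\,\psi_i\otimes\psi_i]$, summing over $i$ gives
\[
\mathbb{E}\Big[\Big(\sum_i\xi_i\Big)^2\Big]\preceq\frac{\tau_n\kappa^2}{n\lambda}\,\mathcal{A}\oplpw\mathcal{A}.
\]
Take $V:=R\ge \tau_n\kappa^2/(n\lambda)$, whose operator norm bound uses $\|\mathcal{A}\oplpw\mathcal{A}\|\le1$. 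The trace is then at most $R\,\tr\big((\oplpw+\lambda\opid)^{-1}\oplpw\big)$, so take $W\asymp\tr\big((\oplpw+\lambda\opid)^{-1}\oplpw\big)$, floored at $1$ (or at $e$). To express this through $\pd$, as in the statement, compare with $\oplq$. Either we use the bound already invoked in Lemma~\ref{lem:sigma-l2-trace-operator-norm-bound}, $\tr((\oplpw+\lambda\opid)^{-1}\oplpw)\le\pd$ up to constants, or, if that is unavailable, we note it is bounded by $\kappa^2/\lambda$ and keep the logarithm. Only $\log W$ enters, so $W\lesssim e+\pd$ suffices.

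\textbf{Step 3: invert the tail bound.} Lemma~\ref{lem:hs-inequality} gives $4W\exp\{-t^2/(2(V+Rt/3))\}\le\eta$ once
\[
t\gtrsim\sqrt{V\log(4W/\eta)}+R\log(4W/\eta).
\]
Using $\log(4W/\eta)\lesssim\log(e+\pd)\log(2/\eta)$, valid since both factors are at least a constant, together with $V=R$, this yields the first display with $R\log(e+\pd)$ and $\sqrt{R\log(e+\pd)}$ multiplied by $\log(2/\eta)$. The side condition $t\ge V^{1/2}+R/3$ is automatic for this choice of $t$. For the second claim, the assumption $R\log(e+\pd)<1$ makes the linear term dominated by the square-root term, giving the $\lesssim$ bound.

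\textbf{Main obstacle.} The one delicate point is Step 2's replacement of the $\PP^w$-effective dimension by $\pd$. It requires a comparison $\tr((\oplpw+\lambda\opid)^{-1}\oplpw)\lesssim\pd$, which does not follow from Assumption~\ref{assump:covariate-shift-estimation} alone. I would first try the lower bound $w\ge\underline{\rho}\ge$ const from Assumption~\ref{assump:lower-boundedness-of-covariance-operator} together with truncation, or else simply absorb the discrepancy into the logarithm via the crude bound $W\le\kappa^2\tau_n/\lambda$.
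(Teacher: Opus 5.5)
Your plan is the paper's proof: apply Lemma~\ref{lem:hs-inequality} to the whitened rank-one summands, with $R$ and $V$ both of order $(\tau_n\kappa^2+1)/(n\lambda)$ and $W=e+\pd$. Then invert the tail bound using $\log\{4(e+\pd)/\eta\}\lesssim\log(e+\pd)\log(2/\eta)$, and absorb the linear term under the extra condition. Your variance step is even a little tidier than the paper's, since you discard the squared-mean term by the Loewner order rather than the triangle inequality. However, the one step you leave open is exactly the step that puts $\log(e+\pd)$ into the statement, and neither of your proposed fixes works.

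The comparison $\tr\bigl((\oplpw+\lambda\opid)^{-1}\oplpw\bigr)\le\pd$ needs nothing beyond the definition of the truncated weight. Since $w_n=\min\{\rho,\tau_n\}\le\rho$, every $h\in\mathsf H$ satisfies $\langle h,(\oplq-\oplpw)h\rangle=\int(\rho-w_n)h^2\,d\PP\ge0$, i.e.\ $\oplpw\preceq_{\mathsf H}\oplq$. The map $A\mapsto A(A+\lambda\opid)^{-1}=\opid-\lambda(A+\lambda\opid)^{-1}$ is operator monotone; equivalently, min--max gives $\mu_k(\oplpw,\mathsf H)\le\mu_k(\oplq,\mathsf H)$ and $t\mapsto t/(t+\lambda)$ is increasing. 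Taking traces then yields the bound with constant one, and this is precisely how the paper justifies $W=e+\pd$.

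Your first fallback, the lower bound $w\ge\underline\rho$ from Assumption~\ref{assump:lower-boundedness-of-covariance-operator}, points the wrong way: it bounds $\oplpw$ from below. It is also not a hypothesis of this lemma. Your second fallback, $W\lesssim\kappa^2\tau_n/\lambda$, replaces $\log(e+\pd)$ by $\log(\tau_n/\lambda)$, which is not within a universal constant of $\log(e+\pd)$ in general. For finite-rank $\oplq$, $\log(e+\pd)$ stays bounded, and under exponential spectral decay $\log(e+\pd)\asymp\log\log(1/\lambda)$. That route therefore proves a strictly weaker lemma than the one stated. The downstream sample-size conditions (Lemma~\ref{lem:bound-bahadur-2} onward), which are phrased through $\log(e+\pd)$, would not follow from it.
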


\begin{lemma}
  \label{lem:bound-bahadur-2}
  For \(0<\eta<1\) and \(0<\lambda\leq1\), if
  \begin{equation*}
  \frac{(\tau_n \kappa^2 + 1)\log(e+\pd)}{n \lambda}
  <
  C_2:=
  \min \left\{
    1,\,
    \frac{1}{\bigl(4 C_1 \log \tfrac{2}{\eta}\bigr)^2}
  \right\}
\end{equation*}
where $C_1$ is the constant in Lemma \ref{lem:bound-bahadur-1}, then on the same $1 - \eta$ probability event of Lemma \ref{lem:bound-bahadur-1},
\begin{equation*}
  \bigl\|
    (\oplpw + \lambda \opid)^{\frac{1}{2}}
    (\oplpnw + \lambda \opid)^{-1}
    (\oplpw + \lambda \opid)^{\frac{1}{2}}
  \bigr\|
  \le 2.
\end{equation*}
\end{lemma}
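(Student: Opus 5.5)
The plan is the standard Neumann-series (perturbation) argument on the same event as Lemma~\ref{lem:bound-bahadur-1}. Write $A:=(\oplpw+\lambda\opid)^{-1/2}(\oplpnw-\oplpw)(\oplpw+\lambda\opid)^{-1/2}$. This is a self-adjoint operator on $\mathsf H$. The key algebraic identity is
\[
(\oplpnw+\lambda\opid)
=(\oplpw+\lambda\opid)^{1/2}\bigl(\opid+A\bigr)(\oplpw+\lambda\opid)^{1/2}.
\]
It follows by inserting $\oplpnw=\oplpw+(\oplpnw-\oplpw)$ and factoring out the square roots, which is legitimate because $\oplpw+\lambda\opid$ is positive definite with bounded inverse for $\lambda>0$. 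Inverting both sides, and sandwiching by $(\oplpw+\lambda\opid)^{1/2}$ on each side, gives
\[
(\oplpw+\lambda\opid)^{1/2}(\oplpnw+\lambda\opid)^{-1}(\oplpw+\lambda\opid)^{1/2}=(\opid+A)^{-1},
\]
provided $\opid+A$ is invertible.

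Next I control $\|A\|$. On the event of Lemma~\ref{lem:bound-bahadur-1}, write $x:=(\tau_n\kappa^2+1)\log(e+\pd)/(n\lambda)$. The first bound there gives $\|A\|<C_1(x+\sqrt{x})\log(2/\eta)$. The hypothesis $x<C_2\le 1$ yields $x\le\sqrt{x}$, so $\|A\|<2C_1\sqrt{x}\log(2/\eta)$. The hypothesis also gives $\sqrt{x}<1/(4C_1\log(2/\eta))$. Combining the two, $\|A\|<1/2$.

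Given $\|A\|<1/2$, the Neumann series shows that $\opid+A$ is invertible with
\[
\|(\opid+A)^{-1}\|\le \sum_{k\ge0}\|A\|^k=\frac{1}{1-\|A\|}\le 2.
\]
Alternatively, since $A$ is self-adjoint, $\opid+A\succeq\frac12\opid$, which gives the same bound. This is exactly the claimed bound of $2$.

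There is no real obstacle here. The only points requiring care are that the factorization is valid for operators on $\mathsf H$, since $\oplpnw\succeq0$ makes $\oplpnw+\lambda\opid$ invertible anyway, and that we use the first (non-asymptotic) display of Lemma~\ref{lem:bound-bahadur-1} rather than its $\lesssim$ version, so that the explicit constant $C_1$ matches the definition of $C_2$.
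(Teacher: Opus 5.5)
Your proposal is correct and uses essentially the same perturbation argument as the paper. The paper writes $M=\opid-AM$ for $M=(\oplpw+\lambda\opid)^{1/2}(\oplpnw+\lambda\opid)^{-1}(\oplpw+\lambda\opid)^{1/2}$ and solves $\|M\|\le 1+\tfrac12\|M\|$, which is equivalent to your identity $M=(\opid+A)^{-1}$ with the Neumann-series bound; your step getting $\|A\|<1/2$ from the first display of Lemma~\ref{lem:bound-bahadur-1} (via $x\le\sqrt{x}$ and $\sqrt{x}<1/(4C_1\log(2/\eta))$) is exactly what the choice of $C_2$ is designed for, and you spell it out more explicitly than the paper does.
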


\begin{lemma}
  \label{lem:bound-bahadur-3}
  Under Assumption~\ref{assump:error}, for \(0<\eta<1\) and
  \(\lambda>0\), with probability at least \(1-\eta\),
\[ \| (\oplpw+\lambda \opid)^{-\frac{1}{2}} \left(\frac{1}{n}\vkx^{\top}\bW\beps\right) \| \lesssim 2 \overline{\sigma} \log\!\left(\frac{2}{\eta}\right) \max \{ \sqrt{\frac{\tau_n \pd}{n}}, \frac{2 \tau_n \kappa}{n\sqrt{\lambda}} \} \]
\end{lemma}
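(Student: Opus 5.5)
Write $\mathcal{A}_\lambda:=(\oplpw+\lambda\opid)^{-1/2}$ and $\xi_i:=w(X_i)\eps_i k_{X_i}$. The plan is to apply a Bernstein inequality for bounded random vectors in a separable Hilbert space (the Pinelis--Sakhanenko inequality, already cited in the bibliography) to the i.i.d. centered $\mathsf H$-valued elements $\zeta_i:=\mathcal{A}_\lambda\xi_i$. Since $\frac1n\vkx^{\top}\bW\beps=\frac1n\sum_{i=1}^n\xi_i$, the quantity to control is $\|\frac1n\sum_i\zeta_i\|$. Centering is immediate from $\EE[\eps_i\mid X_i]=0$.

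Next I bound the almost-sure norm. Using $|\eps_i|\le\overline\sigma$ (Assumption~\ref{assump:error}), $w\le\tau_n$, and $\|k_x\|\le\kappa$, together with $\|\mathcal{A}_\lambda\|\le\lambda^{-1/2}$, I get
\[
\|\zeta_i\|\le \overline\sigma\,\tau_n\kappa\,\lambda^{-1/2}=:R .
\]

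For the second moment, I condition on $X_i$:
\[
\EE\|\zeta_i\|^2\le\overline\sigma^2\,\EE\big[w(X)^2\|\mathcal{A}_\lambda k_X\|^2\big]\le\overline\sigma^2\tau_n\,\EE\big[w(X)\langle k_X,(\oplpw+\lambda\opid)^{-1}k_X\rangle\big].
\]
The last expectation equals $\tr_{\mathsf H}\big((\oplpw+\lambda\opid)^{-1}\oplpw\big)$, because $\EE[w(X)k_X\otimes k_X]=\oplpw$. This gives the variance proxy $\sigma^2:=\overline\sigma^2\tau_n\tr((\oplpw+\lambda\opid)^{-1}\oplpw)$. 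The Hilbert-space Bernstein inequality then yields, with probability at least $1-\eta$,
\[
\Big\|\frac1n\sum_i\zeta_i\Big\|\le 2\log\frac2\eta\Big(\frac{R}{n}+\frac{\sigma}{\sqrt n}\Big)\le 4\log\frac2\eta\max\Big\{\frac{\sigma}{\sqrt n},\frac{R}{n}\Big\},
\]
which is the claimed form once the trace is identified with $\pd$.

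The main obstacle is this identification, because $\pd$ is defined through $\oplq$, not $\oplpw$. I would first try to argue that $\tr\big((\oplpw+\lambda\opid)^{-1}\oplpw\big)\lesssim\pd$, either up to a constant absorbed in $\lesssim$, or by noting that the effective dimensions in the weighted-source and target geometries are comparable under the running compatibility assumptions, in the same way as in Lemma~\ref{lem:sigma-l2-trace-operator-norm-bound}. If that comparison is not available without extra assumptions, I would state the bound with the weighted-source effective dimension $\tr((\oplpw+\lambda\opid)^{-1}\oplpw)$ in place of $\pd$. This is the form that is actually used downstream, together with Lemma~\ref{lem:bound-bahadur-1}, which is phrased with the same quantity $\pd$. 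All the other steps are routine.
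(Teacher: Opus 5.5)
Your concentration argument is the paper's argument verbatim: the same envelope $\overline{\sigma}\tau_n\kappa\lambda^{-1/2}$, the same variance proxy from $w^2\le\tau_n w$ and $\EE[w(X)k_X\otimes k_X]=\oplpw$, and the same Pinelis--Sakhanenko Bernstein inequality. The gap is that you leave open the one step that is not routine, and neither of your fallbacks closes it.

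The first fallback, appealing to the compatibility conditions, is not available: the lemma is stated under Assumption~\ref{assump:error} alone. It would also not help. Assumption~\ref{assump:covariate-shift-estimation} says $\oplq\preceq_{\sfH}\mathfrak{C}_1^2(\oplpw+\lambda\opid)$, so it dominates the target operator by the regularized weighted-source operator. That is the opposite of what you need to bound $\tr_{\sfH}\bigl((\oplpw+\lambda\opid)^{-1}\oplpw\bigr)$ from above by $\pd$. Assumption~\ref{assump:lower-boundedness-of-covariance-operator} likewise only bounds the weighted-source side from below. The second fallback, stating the bound with the weighted-source effective dimension, proves a different statement from the one asked for.

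The missing observation is that truncation only lowers the weight. Since $w_n=\min\{\rho,\tau_n\}\le\rho$, for every $h\in\sfH$
\[
\langle h,(\oplq-\oplpw)h\rangle_{\sfH}=\int_{\frakX}\bigl(\rho(x)-w_n(x)\bigr)h(x)^2\,d\PP(x)\ge 0,
\]
that is, $\oplpw\preceq_{\sfH}\oplq$. By the min--max principle, $\mu_k(\oplpw,\sfH)\le\mu_k$ for every $k$. Since $t\mapsto t/(t+\lambda)$ is increasing on $[0,\infty)$,
\[
\tr_{\sfH}\bigl((\oplpw+\lambda\opid)^{-1}\oplpw\bigr)=\sum_{k}\frac{\mu_k(\oplpw,\sfH)}{\mu_k(\oplpw,\sfH)+\lambda}\le\sum_k\frac{\mu_k}{\mu_k+\lambda}=\pd .
\]
This holds with constant one, deterministically, for every $\tau_n>0$ and $\lambda>0$, with no compatibility assumption. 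It is exactly how the paper passes to $\pd$; the step is spelled out in the proof of Lemma~\ref{lem:bound-bahadur-1} and at the end of the proof of Lemma~\ref{lem:sigma-l2-trace-operator-norm-bound}. In the latter, $\mathfrak{C}_1$ enters only through a separate step, not this one. With this line inserted, your proof is complete and coincides with the paper's.
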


\begin{proof}[Proof of Proposition~\ref{prop:main-bahadur}]
For brevity, write \(\lambda=\lambda_n\), \(w=w_n\), and \(\pd=\pdn\)
throughout this proof.

\begin{equation*}
\begin{aligned}
\hatf - \fpw
&= (\oplpnw + \lambda \opid)^{-1}  \left(\frac{1}{n}\vkx^{\top} \bW\vy\right)
 - (\oplpw + \lambda \opid)^{-1} \oplpw f_*
\\
&= (\oplpnw + \lambda \opid)^{-1}  \left(\frac{1}{n}\vkx^{\top}\bW\beps\right)
\\
&\quad
+ (\oplpnw + \lambda \opid)^{-1} \oplpnw f_*
 - (\oplpw + \lambda \opid)^{-1} \oplpw f_*
\\
&= (\oplpnw + \lambda \opid)^{-1}  \left(\frac{1}{n}\vkx^{\top}\bW\beps\right)
\\
&\quad
+(\oplpnw + \lambda \opid)^{-1} \oplpnw f_*
 - (\oplpnw + \lambda \opid)^{-1} \oplpw f_*
 \\
&\quad
+\left[ (\oplpnw + \lambda \opid)^{-1} - (\oplpw + \lambda \opid)^{-1}  \right]  \left( \oplpw f_* \right)
\\
&= (\oplpnw + \lambda \opid)^{-1}  \left(\frac{1}{n}\vkx^{\top}\bW\beps\right)
\\
&\quad
+(\oplpnw + \lambda \opid)^{-1} (\oplpnw
 - \oplpw) f_*
 \\
&\quad
- (\oplpnw + \lambda \opid)^{-1} (\oplpnw - \oplpw) \fpw
\\
&= (\oplpw + \lambda \opid)^{-1} \left[ \left(\frac{1}{n}\vkx^{\top}\bW\beps\right)
+(\oplpnw - \oplpw)(f_* - \fpw)\right]
\\
&\quad
+ \Big[ (\oplpnw + \lambda \opid)^{-1} - (\oplpw + \lambda \opid)^{-1} \Big]
  \big[ \left(\frac{1}{n}\vkx^{\top}\bW\beps\right) + (\oplpnw - \oplpw)(f_* - \fpw) \big]
\end{aligned}
\end{equation*}

Denote
\[
u = \Big[ (\oplpnw + \lambda \opid)^{-1} - (\oplpw + \lambda \opid)^{-1} \Big]
    \big[ \left(\frac{1}{n}\vkx^{\top}\bW\beps\right) + (\oplpnw - \oplpw)(f_* - \fpw) \big].
\]
Since
\[
\left(\frac1n\vkx^{\top}\bW\beps\right)
=\frac1n\sum_{i=1}^n w(X_i)\eps_i k_{X_i},
\qquad
\oplpnw-\oplpw
=\frac1n\sum_{i=1}^n
\bigl(w(X_i)k_{X_i}\otimes k_{X_i}-\oplpw\bigr),
\]
the first term in the preceding decomposition is \(n^{-1}\sum_{i=1}^n U_i\).
Thus
\[
\hatf-\fpw=\frac1n\sum_{i=1}^n U_i+u.
\]

With the bound
\[
\big\| \oplq^{\tfrac12} (\oplpw + \lambda \opid)^{-\tfrac12} \big\|
\le \mathfrak{C}_1,
\]
we can bound $\|u\|_{\splq}$

\begin{equation*}
\begin{aligned}
\|u\|_{\splq}
&= \big\| \oplq^{\tfrac12} u \big\|
\\
&\le \mathfrak{C}_1 \cdot
   \big\| (\oplpw + \lambda \opid)^{\tfrac12} u \big\|.
\end{aligned}
\end{equation*}

\begin{equation*}
\begin{aligned}
(\oplpnw + \lambda \opid)^{-1} - (\oplpw + \lambda \opid)^{-1}
&= \Big[ (\oplpnw + \lambda \opid)^{-1} (\oplpw + \lambda \opid) - \opid \Big] (\oplpw + \lambda \opid)^{-1}
\\
&= (\oplpnw + \lambda \opid)^{-1} (\oplpw - \oplpnw) (\oplpw + \lambda \opid)^{-1}
\\
&= (\oplpw + \lambda \opid)^{-1} (\oplpw - \oplpnw) (\oplpnw + \lambda \opid)^{-1}.
\end{aligned}
\end{equation*}

Combining this identity with
\begin{equation*}
\begin{aligned}
f_* - \fpw &= f_* - (\oplpw + \lambda \opid)^{-1} \oplpw f_* \\
&= \lambda (\oplpw + \lambda \opid)^{-1} f_*,
\end{aligned}
\end{equation*}
rewrite
\begin{equation*}
\begin{aligned}
(\oplpw + \lambda \opid)^{\frac{1}{2}} u
&= (\oplpw + \lambda \opid)^{-\frac{1}{2}}
    (\oplpw - \oplpnw)
    (\oplpnw + \lambda \opid)^{-1}
    \left(\frac{1}{n}\vkx^{\top}\bW\beps\right) \\
&\quad
  - (\oplpw + \lambda \opid)^{-\frac{1}{2}}
    (\oplpnw - \oplpw)
    (\oplpnw + \lambda \opid)^{-1}
    (\oplpnw - \oplpw)(f_* - \fpw) \\[0.7em]
&=
  \Bigl[(\oplpw + \lambda \opid)^{-\frac{1}{2}}
        (\oplpw - \oplpnw)
        (\oplpw + \lambda \opid)^{-\frac{1}{2}}\Bigr] \\
&\qquad\quad \cdot
  \Bigl[(\oplpw + \lambda \opid)^{\frac{1}{2}}
        (\oplpnw + \lambda \opid)^{-1}
        (\oplpw + \lambda \opid)^{\frac{1}{2}}\Bigr]\\
&\qquad\quad \cdot
  \Bigl[(\oplpw + \lambda \opid)^{-\frac{1}{2}}
  \left(\frac{1}{n}\vkx^{\top}\bW\beps\right)\Bigr]\\[0.7em]
&\quad
  - \lambda
  \Bigl[(\oplpw + \lambda \opid)^{-\frac{1}{2}}
        (\oplpnw - \oplpw)
        (\oplpw + \lambda \opid)^{-\frac{1}{2}}\Bigr] \\
&\qquad\quad \cdot
  \Bigl[(\oplpw + \lambda \opid)^{\frac{1}{2}}
        (\oplpnw + \lambda \opid)^{-1}
        (\oplpw + \lambda \opid)^{\frac{1}{2}}\Bigr] \\
&\qquad\quad \cdot
  \Bigl[(\oplpw + \lambda \opid)^{-\frac{1}{2}}
        (\oplpnw - \oplpw)
        (\oplpw + \lambda \opid)^{-\frac{1}{2}}\Bigr] \\
&\qquad\quad \cdot
  \Bigl[(\oplpw + \lambda \opid)^{-\frac{1}{2}} f_* \Bigr].
\end{aligned}
\end{equation*}
Apply the stochastic bounds below with suitably split failure probabilities;
this changes their logarithmic factors only by universal constants. Combining
Lemmas~\ref{lem:bound-bahadur-1}, \ref{lem:bound-bahadur-2},
\ref{lem:bound-bahadur-3}, and \ref{lem:source-condition-bound-1}, and then
using both sample-size conditions in Proposition~\ref{prop:main-bahadur},
gives, with probability at least \(1-\eta\),
\[
\norm{u}_{\splq}
\lesssim
\mathfrak C_1\tau_n\overline\sigma\,\kappa^3
\bigl(1+\norm{g_{\ast}}_{\sfH}\bigr)
\frac1n\sqrt{\frac{\pdn\log(e+\pdn)}{\lambda_n}}
\log^2\!\left(\frac2\eta\right).
\]
Consequently,
\[
\begin{aligned}
\left\|
\sqrt n(\hatf-\fpw)-\frac1{\sqrt n}\sum_{i=1}^n U_i
\right\|_{\splq}
&=\sqrt n\,\norm{u}_{\splq}\\
&\lesssim
\mathfrak C_1\tau_n\overline\sigma\,\kappa^3
\bigl(1+\norm{g_{\ast}}_{\sfH}\bigr)
\sqrt{\frac{\pdn\log(e+\pdn)}{\lambda_n n}}
\log^2\!\left(\frac2\eta\right)
=\Delta_{G,\mathrm{Bah}}.
\end{aligned}
\]
Finally, using the definition of \(U_i\), the transfer bound, and
\(\norm{(\oplpw+\lambda_n\opid)^{-1/2}}\leq\lambda_n^{-1/2}\), we obtain
\[
\begin{aligned}
\norm{U_i}_{\splq}
&=\norm{\oplq^{1/2}U_i}\\
&\leq
\frac{\tau_n\mathfrak C_1}{\sqrt{\lambda_n}}
\left(\kappa^2\norm{f_*}_{\sfH}+\kappa\overline\sigma\right)
=\bar U,
\end{aligned}
\]
which is the asserted uniform bound.
\end{proof}

\subsection{Proof of Proposition~\ref{prop:main-gaussian-coupling}}

Let \(\{(\nu_j,e_j)\}_{j\geq1}\) be an eigensystem of \(\opv\) in
\(\splq\), ordered so that \(\nu_1\geq\nu_2\geq\cdots\), and define
\[
  \Pi_m:=\sum_{j=1}^m e_j\otimes_{\splq}e_j.
\]

\begin{proof}[Proof of Proposition~\ref{prop:main-gaussian-coupling}]
It suffices to prove the bound inside the braces for an arbitrary fixed \(m\), with a Gaussian element \(Z_m\) that may depend on \(m\). The displayed right-hand side is attained over \(m\ge1\); indeed, if \(\bar U>0\), the finite-dimensional coupling term diverges as \(m\to\infty\), while if \(\bar U=0\), any \(m\) is admissible. Applying the resulting construction at a minimizer gives the stated infimum. The proof follows the same projection-coupling strategy as in \cite{singh2025kernelridgeregressioninference}, with all projections and norms taken in \(\splq\). Let
\[
  A_m:\splq\to\mathbb{R}^{m},
  \qquad
  A_m f=(\langle f,e_1\rangle_{\splq},\ldots,\langle f,e_m\rangle_{\splq}),
\]
so that \(A_m^*z=\sum_{s=1}^m z_s e_s\) and \(A_m^*A_m=\Pi_m\). The covariance of \(A_mU_i\) is the diagonal matrix \(\operatorname{diag}(\nu_1,\ldots,\nu_m)\), and \(\|A_mU_i\|_{\mathbb{R}^m}\le \|U_i\|_{\splq}\le \bar U\).

By the finite-dimensional Gaussian coupling theorem of \cite{gaussianapproximationbounded}, there exist independent Gaussian vectors \(\zeta_i\in\mathbb{R}^m\), each having the same covariance as \(A_mU_i\), such that
\[
  \left\|
  \frac1{\sqrt n}\sum_{i=1}^n\{\Pi_mU_i-A_m^*\zeta_i\}
  \right\|_{\splq}
  \lesssim
  \frac{\bar U m^2\log\!\left(\frac{m^2}{\eta}\right)}{\sqrt n}
\]
with probability at least \(1-\eta/3\). Define
\[
  Z_i=A_m^*\zeta_i+\sum_{s=m+1}^{\infty}\sqrt{\nu_s}\,h_{i,s}e_s .
\]
Then the \(Z_i\)'s are independent centered Gaussian elements in \(\splq\) with covariance \(\opv\).
Set \(Z=n^{-1/2}\sum_{i=1}^n Z_i\). Then
\(Z\sim\mathrm{N}_{\splq}(0,\opv)\).

It remains to control the two tails. Since \(\mathbb{E}U_i=0\),
\[
  \mathbb{E}\left\|
  \frac1{\sqrt n}\sum_{i=1}^n (I-\Pi_m)U_i
  \right\|_{\splq}^2
  =
  \operatorname{tr}_{\splq}\{(I-\Pi_m)\opv\}
  =
  \sigma_{\splq}^2(m,\opv).
\]
The boundedness of \(U_i\) and the Hilbert-space Bernstein inequality \cite[Corollary~1]{pinelis1986remarks} imply, with probability at least \(1-\eta/3\),
\[
  \left\|
  \frac1{\sqrt n}\sum_{i=1}^n (I-\Pi_m)U_i
  \right\|_{\splq}
  \lesssim
  \sigma_{\splq}(m,\opv)\sqrt{\log\!\left(\frac{6}{\eta}\right)}
  +\frac{\bar U\log\!\left(\frac{6}{\eta}\right)}{\sqrt n}.
\]
Similarly, by Borell's inequality for the Gaussian element
\(\frac1{\sqrt n}\sum_{i=1}^n (I-\Pi_m)Z_i\), whose second moment is
\(\sigma_{\splq}^2(m,\opv)\), with probability at least \(1-\eta/3\),
\[
  \left\|
  \frac1{\sqrt n}\sum_{i=1}^n (I-\Pi_m)Z_i
  \right\|_{\splq}
  \lesssim
  \sigma_{\splq}(m,\opv)\sqrt{\log\!\left(\frac{6}{\eta}\right)}.
\]
Combining these three bounds by the triangle inequality and a union bound gives the claim, since the term \(\bar U\log\!\left(\frac{6}{\eta}\right)/\sqrt n\) is absorbed by the displayed finite-dimensional coupling term for \(m\ge1\).
\end{proof}

\subsection{Proof of Proposition~\ref{prop:main-feasible-bootstrap}}

Let
\begin{equation*}
\begin{aligned}
V_i &= w(X_i)\bigl(Y_i - \fpw(X_i)\bigr)\,(\oplpw+\lambda \opid)^{-1} k_{X_i}, \\
\hat{V}_i &= w(X_i)\bigl(Y_i - \hatf(X_i)\bigr)\,(\oplpnw+\lambda \opid)^{-1} k_{X_i} .
\end{aligned}
\end{equation*}

Let
\begin{equation*}
\begin{aligned}
B
&= \frac{1}{n} \sum_{i=1}^n \sum_{j=1}^n h_{ij}\,\frac{\hat{V}_i-\hat{V}_j}{\sqrt{2}}, \\
B_{\mathrm{oracle}}
&= \frac{1}{n} \sum_{i=1}^n \sum_{j=1}^n h_{ij}\,\frac{V_i-V_j}{\sqrt{2}} .
\end{aligned}
\end{equation*}

For notational convenience, write
\begin{equation*}
\begin{aligned}
\oplpiw &:= w(X_i)\,k_{X_i}\otimes k_{X_i}, \\
\oplpwl &:= \oplpw+\lambda \opid, \\
\oplpnwl &:= \oplpnw+\lambda \opid,
\\
\epsilon_i^{P^w} &:= Y_i - \fpw(X_i)\\
\hat{\epsilon}_i &:= Y_i - \hatf(X_i).
\end{aligned}
\end{equation*}

Then
\begin{equation*}
\begin{aligned}
  V_i &= w(X_i)\,\epsilon_i^{P^w}\,\oplpwl^{-1} k_{X_i}\\
\hat{V}_i &= w(X_i)\,\hat{\epsilon}_i\,\oplpnwl^{-1} k_{X_i}.
\end{aligned}
\end{equation*}

Let $\mathbb{E}_h[\cdot]$ be the expectation with respect to the random variables $\{h_{ij}\}_{i,j=1}^n$ under the conditional law given \(\mathcal{D}=\{(X_i,Y_i)\}_{i=1}^n\). Note that the randomness comes from data and $\{h_{ij}\}_{i,j=1}^n$. We first find a bound under \(\mathbb{P}(\cdot\mid\mathcal{D})\), and then bound it further.

\begin{lemma}\label{lem:feasible-bootstrap-decomposition}
  If we denote
\begin{equation*}
\begin{aligned}
\alpha_{ij}
&= \frac{1}{\sqrt{2}}
\Bigl(
w(X_i)\,\epsilon_i^{P^w}\,k_{X_i}
-
w(X_j)\,\epsilon_j^{P^w}\,k_{X_j}
\Bigr), \\
\hat{\alpha}_{ij}
&= \frac{1}{\sqrt{2}}
\Bigl(
w(X_i)\,\what{\eps}_i\,k_{X_i}
-
w(X_j)\,\what{\eps}_j\,k_{X_j}
\Bigr),
\end{aligned}
\end{equation*}
then
\begin{equation*}
\begin{aligned}
B - B_{\oracle} = \Delta_{\mathrm{inv}} + \Delta_{\mathrm{res}},
\end{aligned}
\end{equation*}
where
\begin{equation*}
\begin{aligned}
\Delta_{\mathrm{inv}}
&= \bigl(\oplpnwl^{-1} - \oplpwl^{-1}\bigr)
\left(
\frac{1}{n}\sum_{i=1}^n \sum_{j=1}^n h_{ij}\,\hat{\alpha}_{ij}
\right), \\
\Delta_{\mathrm{res}}
&= \oplpwl^{-1}
\left(
\frac{1}{n}\sum_{i=1}^n \sum_{j=1}^n h_{ij}\,(\hat{\alpha}_{ij}-\alpha_{ij})
\right).
\end{aligned}
\end{equation*}
\end{lemma}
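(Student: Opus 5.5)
This decomposition is purely algebraic, so the plan is to rewrite each bootstrap function by pulling the regularized inverse operator outside the double sum, and then add and subtract a mixed term. By definition,
\[
V_i=\oplpwl^{-1}\bigl(w(X_i)\epsilon_i^{P^w}k_{X_i}\bigr),
\qquad
\hat V_i=\oplpnwl^{-1}\bigl(w(X_i)\hat\epsilon_i k_{X_i}\bigr).
\]
Since $\oplpwl^{-1}$ and $\oplpnwl^{-1}$ are bounded linear operators on $\mathsf H$, and $\oplpnwl^{-1}$ is data-dependent but does not depend on $i,j$, both operators commute with the finite sums over $(i,j)$. This gives
\[
B_{\oracle}=\oplpwl^{-1}\Bigl(\tfrac1n\textstyle\sum_{i,j}h_{ij}\alpha_{ij}\Bigr),
\qquad
B=\oplpnwl^{-1}\Bigl(\tfrac1n\textstyle\sum_{i,j}h_{ij}\hat\alpha_{ij}\Bigr).
\]

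The main step is the telescoping. I would write
\[
B-B_{\oracle}
=\bigl(\oplpnwl^{-1}-\oplpwl^{-1}\bigr)\Bigl(\tfrac1n\textstyle\sum_{i,j}h_{ij}\hat\alpha_{ij}\Bigr)
+\oplpwl^{-1}\Bigl(\tfrac1n\textstyle\sum_{i,j}h_{ij}(\hat\alpha_{ij}-\alpha_{ij})\Bigr).
\]
The first term is exactly $\Delta_{\mathrm{inv}}$ and the second is $\Delta_{\mathrm{res}}$, using linearity of $\oplpwl^{-1}$ to combine the two sums.

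Two small points need checking. First, the proposition defines $B$ and $B_{\oracle}$ as sums over $i\neq j$, while the lemma's setup sums over all $i,j$. The diagonal terms vanish because $\alpha_{ii}=\hat\alpha_{ii}=0$, so the two conventions agree. Second, the operators must be well defined as elements of $\mathsf H$ (viewed in $\splq$). This holds because $\oplpnw$ and $\oplpw$ are positive, so both regularized operators are invertible with norm at most $\lambda^{-1}$.

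The only possible obstacle is bookkeeping of which inverse multiplies which sum. I would resolve it by choosing the mixed term $\oplpwl^{-1}\bigl(\tfrac1n\sum_{i,j}h_{ij}\hat\alpha_{ij}\bigr)$ to add and subtract. This choice is what makes $\Delta_{\mathrm{inv}}$ act on the feasible residual sum, matching the stated form.
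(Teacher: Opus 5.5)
Your proposal is correct and matches the paper's proof: write $(\hat V_i-\hat V_j)/\sqrt2$ and $(V_i-V_j)/\sqrt2$ as the respective regularized inverses applied to $\hat\alpha_{ij}$ and $\alpha_{ij}$, then add and subtract the mixed term $\oplpwl^{-1}\hat\alpha_{ij}$. Your remark that the diagonal terms vanish, so the sums over $i\neq j$ and over all $(i,j)$ agree, makes explicit a point the paper handles silently by writing the sums over all $i,j$.
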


\begin{lemma}
  \label{lem:estimation-error-bound-in-h}
  For \(0<\eta<1\), under Assumptions~\ref{assump:error}
  and~\ref{assump:source-condition}, suppose
  \(\tau_n\geq\tau_0\) for all sufficiently large \(n\),
  \(\tau_n\to\infty\), and \(\lambda=\lambda_n\to0\). If, for all
  sufficiently large \(n\),
\begin{equation*}
  n
  >
  \max
  \left\{
  \frac{(\tau_n \kappa^2 + 1)\log(e+\pd)}{ C_2\lambda},
  \frac{\tau_n}{\lambda \pd}
  \right\}
\end{equation*}
where $C_2$ is the constant in Lemma \ref{lem:bound-bahadur-2} after replacing
its failure probability by the fixed fraction of \(\eta\) used in the union
bound, then, for all sufficiently large \(n\), with probability at least
\(1-\eta\),
\begin{equation*}
  \begin{aligned}
  \|\hatf - \fpw\|
  &\lesssim
  \sqrt{\frac{\tau_n \pd}{n\lambda}}
  \overline{\sigma}\,\kappa^{2s+1}
  \bigl(1+\norm{g_{\ast}}_{\sfH}\bigr)
  \log(\frac{2}{\eta}), \\
  \|\hatf - f_* \|
  &\lesssim
  \left[\sqrt{\frac{\tau_n \pd}{n\lambda}}
  +\lambda^{\min\{s, \frac{1}{2}\}}\right]
  \overline{\sigma}\,\kappa^{2s+1}
  \bigl(1+\norm{g_{\ast}}_{\sfH}\bigr)
  \log(\frac{2}{\eta}).
  \end{aligned}
\end{equation*}
\end{lemma}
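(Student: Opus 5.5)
We will prove Lemma~\ref{lem:estimation-error-bound-in-h} by reusing the exact decomposition from the proof of Proposition~\ref{prop:main-bahadur}, but now measuring everything in the RKHS norm $\|\cdot\|$ rather than in $\splq$. Write
\[
\hatf-\fpw=\oplpnwl^{-1}\Bigl[\tfrac1n\vkx^{\top}\bW\beps+(\oplpnw-\oplpw)(f_*-\fpw)\Bigr],
\]
which follows from the same algebra (before splitting off $u$). Inserting $\oplpwl^{1/2}$ factors gives
\[
\|\hatf-\fpw\|\le\|\oplpwl^{-1/2}\|\cdot\bigl\|\oplpwl^{1/2}\oplpnwl^{-1}\oplpwl^{1/2}\bigr\|\cdot\Bigl(\bigl\|\oplpwl^{-1/2}\tfrac1n\vkx^{\top}\bW\beps\bigr\|+\bigl\|\oplpwl^{-1/2}(\oplpnw-\oplpw)\oplpwl^{-1/2}\bigr\|\,\bigl\|\oplpwl^{1/2}(f_*-\fpw)\bigr\|\Bigr).
\]
The first factor is at most $\lambda^{-1/2}$, and the second is at most $2$ by Lemma~\ref{lem:bound-bahadur-2}. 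This step uses the first sample-size condition, with $\eta$ replaced by a fixed fraction of itself so that a union bound over the three events costs only constants.

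For the noise term we will apply Lemma~\ref{lem:bound-bahadur-3}. The second sample-size condition $n>\tau_n/(\lambda\pd)$ gives $\tau_n\kappa/(n\sqrt\lambda)\le\kappa\sqrt{\tau_n\pd/n}$, so the maximum there is $\lesssim\kappa\sqrt{\tau_n\pd/n}$. Combined with the $\lambda^{-1/2}$ factor, this yields $\sqrt{\tau_n\pd/(n\lambda)}\,\overline\sigma\kappa\log(2/\eta)$.

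For the bias–fluctuation cross term we will use $f_*-\fpw=\lambda\oplpwl^{-1}f_*$. Then $\|\oplpwl^{1/2}(f_*-\fpw)\|=\lambda\|\oplpwl^{-1/2}f_*\|\le\lambda^{\min\{s+1/2,1\}}\max\{\kappa^{2s-1},1\}\|g_*\|$ by Lemma~\ref{lem:source-condition-bound-1}. Lemma~\ref{lem:bound-bahadur-1} bounds the middle operator by $\sqrt{(\tau_n\kappa^2+1)\log(e+\pd)/(n\lambda)}\log(2/\eta)$.

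The main obstacle is showing that this cross term is dominated by $\sqrt{\tau_n\pd/(n\lambda)}$ up to the stated constants, since it carries a $\log(e+\pd)$ factor not present in the target bound. We will handle this by noting that the extra factor $\lambda^{-1/2}\lambda^{\min\{s+1/2,1\}}=\lambda^{\min\{s,1/2\}}\le1$. It then suffices to have $\log(e+\pd)\lesssim\pd$, which holds for $\lambda\le1$ and small enough $\lambda$ once $\pd\ge1$; for bounded $\pd$ the bound is immediate. The constants combine as $\kappa\max\{\kappa^{2s-1},1\}\le\kappa^{2s+1}$, using $\kappa>1$. If the $\log(e+\pd)$ factor cannot be absorbed cleanly, we will instead keep it inside the constant, because $\pd\ge\log(e+\pd)/C$ for large $\pd$. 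Since $\tau_n\to\infty$, $\tau_n\kappa^2+1\lesssim\tau_n\kappa^2$, which gives the first display.

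The second display then follows from the triangle inequality $\|\hatf-f_*\|\le\|\hatf-\fpw\|+\|\fpw-f_*\|$, with Lemma~\ref{lem:bias-bound-in-h} giving $\|\fpw-f_*\|\le\max\{\kappa^{2s-1},1\}\lambda^{\min\{s,1/2\}}\|g_*\|$. This is absorbed into the same prefactor $\overline\sigma\kappa^{2s+1}(1+\|g_*\|)\log(2/\eta)$, assuming $\overline\sigma\gtrsim1$ in the constant convention, or absorbed into $\lesssim$ otherwise.
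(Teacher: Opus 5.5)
Your proposal is correct and is essentially the paper's argument. The paper passes through Lemma~\ref{lem:pre-estimation-error-bound-in-h}, whose Bahadur-plus-remainder decomposition gives the same bound $\|\oplpwl^{1/2}(\hatf-\fpw)\|\le 2(\gamma+\delta\,\|\oplpwl^{1/2}(f_*-\fpw)\|)$ that you get in one step from the exact identity $\hatf-\fpw=\oplpnwl^{-1}[\cdots]$ and Lemma~\ref{lem:bound-bahadur-2}; it then absorbs the noise term via $n>\tau_n/(\lambda\pd)$ and the cross term via $\log(e+\pd)\lesssim\pd$ exactly as you do, with the only point to tighten being that this last comparison needs $\pd$ bounded away from zero, which the paper gets from $\pd\ge\mu_1/(\mu_1+\lambda)\ge 1/2$ for small $\lambda$.
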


\begin{lemma}
    \label{lem:feasible-bootstrap-inverse-bound}
    For \(0<\eta<1\), under Assumptions~\ref{assump:error},
    \ref{assump:source-condition}, and
    \ref{assump:covariate-shift-estimation}, suppose
    \(\tau_n\geq\tau_0\) for all sufficiently large \(n\),
    \(\tau_n\to\infty\), and \(\lambda=\lambda_n\to0\). If, for all
    sufficiently large \(n\),
\begin{equation*}
  n
  >
  \max
  \left\{
  \frac{(\tau_n \kappa^2 + 1)\log(e+\pd)}{ C_2\lambda},
  \frac{\tau_n}{\lambda \pd}
  \right\}
\end{equation*}
where $C_2$ is the constant in Lemma \ref{lem:bound-bahadur-2} after replacing
its failure probability by the fixed fraction of \(\eta\) used in the union
bound, then, for all sufficiently large \(n\), with probability at least
\(1-\eta\) over the data, conditionally on the data with
probability at least \(1-\eta\),
\[
\norm{\Delta_{\mathrm{inv}}}_{\splq} \lesssim
\mathfrak{C}_1
\overline{\sigma}\,\kappa^{2s+4}
\bigl(1+\norm{g_{\ast}}_{\sfH}\bigr)
\frac{\tau_n\pd}{\sqrt{n\lambda}}
\left(\log\!\left(\frac{2}{\eta}\right)\right)^3.
\]
\end{lemma}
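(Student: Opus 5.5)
We bound $\Delta_{\mathrm{inv}}$ by moving everything into $\mathsf H$ and splitting it into deterministic operator factors and a conditionally Gaussian vector. By Lemma~\ref{lem:LQ_half_unitary_isomorphism} and the resolvent identity used in the proof of Proposition~\ref{prop:main-bahadur},
\[
\norm{\Delta_{\mathrm{inv}}}_{\splq}
=\bigl\|\oplq^{1/2}\oplpwl^{-1}(\oplpw-\oplpnw)\oplpnwl^{-1}\xi\bigr\|,
\qquad
\xi:=\frac1n\sum_{i,j}h_{ij}\hat\alpha_{ij}.
\]
Insert $\oplpwl^{\pm1/2}$ to obtain the product
\[
\bigl\|\oplq^{1/2}\oplpwl^{-1/2}\bigr\|
\cdot\bigl\|\oplpwl^{-1/2}(\oplpw-\oplpnw)\oplpwl^{-1/2}\bigr\|
\cdot\bigl\|\oplpwl^{1/2}\oplpnwl^{-1}\oplpwl^{1/2}\bigr\|
\cdot\bigl\|\oplpwl^{-1/2}\xi\bigr\|.
\]
Each factor is then handled by an earlier result.
\begin{itemize}
\item The first factor is at most $\mathfrak C_1$ by Assumption~\ref{assump:covariate-shift-estimation}.
\item The second is $\lesssim\sqrt{(\tau_n\kappa^2+1)\log(e+\pd)/(n\lambda)}\,\log(2/\eta)$ by Lemma~\ref{lem:bound-bahadur-1}.
\item The third is at most $2$ by Lemma~\ref{lem:bound-bahadur-2}, which applies because of the sample-size condition.
\end{itemize}
These three hold on a single data event of probability $1-\eta$, after splitting $\eta$ by a constant fraction.

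The main work is the last factor, $\|\oplpwl^{-1/2}\xi\|$. Conditionally on $\mathcal D$, the element $\oplpwl^{-1/2}\xi$ is a centered Gaussian in $\mathsf H$. Its covariance is $\oplpwl^{-1/2}\hat\Sigma\,\oplpwl^{-1/2}$, where
\[
\hat\Sigma=\mathbb E_n[\hat a_i\otimes\hat a_i]-\mathbb E_n\hat a_i\otimes\mathbb E_n\hat a_i,
\qquad
\hat a_i=w(X_i)\hat\eps_i k_{X_i}.
\]
By Gaussian concentration (Borell--TIS), conditionally with probability $1-\eta$ the norm is at most a constant times $\sqrt{\tr}+\sqrt{\|\cdot\|\log(2/\eta)}\lesssim\sqrt{\tr(\oplpwl^{-1/2}\hat\Sigma\oplpwl^{-1/2})\log(2/\eta)}$. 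This trace is bounded by $\mathbb E_n[w(X_i)^2\hat\eps_i^2\|\oplpwl^{-1/2}k_{X_i}\|^2]$.

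To control $\hat\eps_i$, write $\hat\eps_i=\eps_i+(f_*-\hat f)(X_i)$ and use $|\hat\eps_i|\le\overline\sigma+\kappa\|\hat f-f_*\|$. Lemma~\ref{lem:estimation-error-bound-in-h} bounds $\|\hat f-f_*\|$ by a quantity that is $O(\kappa^{2s+1}(1+\|g_*\|))$ under the sample-size condition, since $\tau_n\pd/(n\lambda)\lesssim1$ and $\lambda\le1$. Using $w\le\tau_n$, it then remains to bound $\mathbb E_n[w(X_i)\|\oplpwl^{-1/2}k_{X_i}\|^2]=\tr(\oplpwl^{-1/2}\oplpnw\oplpwl^{-1/2})$. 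Its population mean is $\tr(\oplpwl^{-1}\oplpw)$. One could compare this with $\pd$ through Assumption~\ref{assump:covariate-shift-estimation}, but it is simpler to use Bernstein's inequality for the nonnegative scalars $w(X_i)\|\oplpwl^{-1/2}k_{X_i}\|^2\le\tau_n\kappa^2/\lambda$. This gives a bound of order $\pd+\tau_n\kappa^2\log(2/\eta)/(n\lambda)\lesssim\kappa^2\pd\log(2/\eta)$, using $n>\tau_n/(\lambda\pd)$ and $\tr_{\mathsf H}(\oplpwl^{-1}\oplpw)\lesssim\pd$ as in Lemma~\ref{lem:sigma-l2-trace-operator-norm-bound}. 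Hence, conditionally,
\[
\|\oplpwl^{-1/2}\xi\|\lesssim\sqrt{\tau_n\pd}\,\overline\sigma\,\kappa^{2s+2}(1+\|g_*\|)\log(2/\eta).
\]

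Multiplying the four factors gives
\[
\mathfrak C_1\,\sqrt{\tau_n\kappa^2\log(e+\pd)/(n\lambda)}\;\sqrt{\tau_n\pd}\;\overline\sigma\,\kappa^{2s+2}(1+\|g_*\|)\,\log^{3}(2/\eta).
\]
Using $\log(e+\pd)\lesssim\pd$ and collecting the powers of $\kappa$ (with $\kappa>1$) yields $\mathfrak C_1\overline\sigma\kappa^{2s+4}(1+\|g_*\|)\,\tau_n\pd/\sqrt{n\lambda}\,\log^3(2/\eta)$, which is the claimed bound.

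The main obstacle is the conditional Gaussian step. The plug-in residuals $\hat\eps_i$ depend on the whole sample, so they must be bounded uniformly, through $\|\hat f-f_*\|_{\sfH}$ and the kernel bound $\kappa$, rather than handled by an independence argument. This uniform bound is what produces the extra $\kappa$ powers and the slightly lossy factor $\tau_n\pd$ instead of $\sqrt{\tau_n\pd\log(e+\pd)}\cdot\sqrt{\tau_n}$. Finally, union-bounding the data events fixes the data event $\calE$, and the Borell step supplies the conditional $1-\eta$ probability.
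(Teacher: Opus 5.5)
Your four-factor factorization, the use of Lemmas~\ref{lem:bound-bahadur-1} and~\ref{lem:bound-bahadur-2}, and the Borell-plus-Bernstein treatment of the conditional Gaussian factor follow the paper's argument closely. The paper splits $w(X_i)\hat\eps_i k_{X_i}$ into $w(X_i)\eps_i k_{X_i}$ and $\oplpiw(f_*-\hatf)$ and runs two Bernstein bounds. Your uniform bound $|\hat\eps_i|\le\overline\sigma+\kappa\norm{\hatf-f_*}$ with a single Bernstein bound gives the same order.

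The gap is the claim that Lemma~\ref{lem:estimation-error-bound-in-h} gives $\norm{\hatf-f_*}=O(\kappa^{2s+1}(1+\norm{g_{\ast}}))$ because $\tau_n\pd/(n\lambda)\lesssim1$. Neither sample-size condition implies this. The first only gives $\tau_n\kappa^2\log(e+\pd)/(n\lambda)<C_2$. The second, $\tau_n/(n\lambda)<\pd$, bounds the ratio in the wrong direction. Taking $n\lambda$ of order $\tau_n\kappa^2\log(e+\pd)/C_2$ satisfies both conditions, yet $\tau_n\pd/(n\lambda)\asymp\pd/\log(e+\pd)\to\infty$ whenever $\pd\to\infty$, for example under polynomial eigendecay. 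So you cannot drop the factor $1+\sqrt{\tau_n\pd/(n\lambda)}$ from Lemma~\ref{lem:estimation-error-bound-in-h} at that stage.

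The repair is what the paper does. Carry the factor through, so the conditional bound reads
\[
\norm{\oplpwl^{-1/2}\xi}\lesssim\overline\sigma\,\kappa^{2s+3}(1+\norm{g_{\ast}}_{\sfH})\sqrt{\tau_n\pd}\,\Bigl(1+\sqrt{\tau_n\pd/(n\lambda)}\Bigr)\log^2(2/\eta).
\]
The correct intermediate exponents are $\kappa^{2s+3}$ and $\log^2$, not $\kappa^{2s+2}$ and $\log$. Borell contributes $\sqrt{\log}$, your Bernstein step contributes $\kappa\sqrt{\log}$, and the estimation error contributes $\kappa^{2s+2}\log$.

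Then multiply by the factor $\kappa\sqrt{\tau_n\log(e+\pd)/(n\lambda)}\,\log(2/\eta)$ from Lemma~\ref{lem:bound-bahadur-1}. Use $\log(e+\pd)\lesssim\pd$ only on the leading term. On the cross term, keep the logarithm, since the first sample-size condition gives
\[
\frac{\tau_n\pd}{\sqrt{n\lambda}}\cdot\kappa\sqrt{\frac{\tau_n\log(e+\pd)}{n\lambda}}\le\sqrt{C_2}\,\frac{\tau_n\pd}{\sqrt{n\lambda}}.
\]
Both terms are then $\lesssim\mathfrak C_1\overline\sigma\kappa^{2s+4}(1+\norm{g_{\ast}}_{\sfH})\,\tau_n\pd\log^3(2/\eta)/\sqrt{n\lambda}$, which is the stated bound.
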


\begin{lemma}
    \label{lem:feasible-bootstrap-residual-bound}
    For \(0<\eta<1\), under Assumptions~\ref{assump:error},
    \ref{assump:source-condition}, and
    \ref{assump:covariate-shift-estimation}, suppose
    \(\tau_n\geq\tau_0\) for all sufficiently large \(n\),
    \(\tau_n\to\infty\), and \(\lambda=\lambda_n\to0\). If, for all
    sufficiently large \(n\),
\begin{equation*}
  n
  >
  \max
  \left\{
  \frac{(\tau_n \kappa^2 + 1)\log(e+\pd)}{ C_2\lambda},
  \frac{\tau_n}{\lambda \pd}
  \right\}
\end{equation*}
where $C_2$ is the constant in Lemma \ref{lem:bound-bahadur-2} after replacing
its failure probability by the fixed fraction of \(\eta\) used in the union
bound, then, for all sufficiently large \(n\), with probability at least
\(1-\eta\) over the data, conditionally on the data with
probability at least \(1-\eta\),
\[
\norm{\Delta_{\mathrm{res}}}_{\splq} \lesssim
\mathfrak{C}_1
\overline{\sigma}\,\kappa^{2s+3}
\bigl(1+\norm{g_{\ast}}_{\sfH}\bigr)
\frac{\tau_n \pd}{\sqrt{n\lambda}}
\left(\log\!\left(\frac{2}{\eta}\right)\right)^2.
\]
\end{lemma}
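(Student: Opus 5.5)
We need to bound $\|\Delta_{\mathrm{res}}\|_{\splq}$. Write $\hat\alpha_{ij}-\alpha_{ij}=\frac{1}{\sqrt2}\bigl(w(X_i)(\fpw-\hatf)(X_i)k_{X_i}-w(X_j)(\fpw-\hatf)(X_j)k_{X_j}\bigr)$, since $\hat\eps_i-\epsilon_i^{P^w}=(\fpw-\hatf)(X_i)$. Conditional on $\calD$, $\Delta_{\mathrm{res}}$ is then a centered Gaussian element of $\splq$ that is linear in $\{h_{ij}\}$. The first step is to apply the transfer bound of Assumption~\ref{assump:covariate-shift-estimation}:
\[
\|\Delta_{\mathrm{res}}\|_{\splq}=\|\oplq^{1/2}\oplpwl^{-1}G\|\le \mathfrak C_1\lambda^{-1/2}\|G\|,
\]
where $G:=n^{-1}\sum_{i,j}h_{ij}(\hat\alpha_{ij}-\alpha_{ij})\in\sfH$. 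Using $\|\oplpwl^{-1/2}\|\le\lambda^{-1/2}$ here is crude. A better route keeps $\oplpwl^{-1/2}$ acting on $G$ and controls it through the empirical covariance, which is what yields the $\pd$ factor in the target bound.

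The second step handles the conditional Gaussian. Given $\calD$, $\oplpwl^{-1/2}G$ is Gaussian in $\sfH$ with covariance
\[
\Sigma_h=\frac{2}{n}\sum_i\Bigl(1-\tfrac1n\Bigr)\delta_i^2\,\phi_i\otimes\phi_i+\text{(rank-one centering term)},
\]
where $\phi_i=w(X_i)\oplpwl^{-1/2}k_{X_i}$ and $\delta_i=(\fpw-\hatf)(X_i)$. Borell--TIS / Gaussian concentration then gives, with conditional probability at least $1-\eta$,
\[
\|\oplpwl^{-1/2}G\|\lesssim \sqrt{\tr\Sigma_h}\,\sqrt{\log(2/\eta)}.
\]
Next, $\delta_i^2\le\kappa^2\|\hatf-\fpw\|^2$. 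Also $\tr\bigl(n^{-1}\sum_i w(X_i)^2\oplpwl^{-1/2}k_{X_i}\otimes k_{X_i}\oplpwl^{-1/2}\bigr)\le\tau_n\tr(\oplpwl^{-1/2}\oplpnw\oplpwl^{-1/2})$, and this is at most $2\tau_n\tr(\oplpwl^{-1}\oplpw)\lesssim\tau_n\pd$ on the event of Lemma~\ref{lem:bound-bahadur-1}. That lemma gives $\|\oplpwl^{-1/2}(\oplpnw-\oplpw)\oplpwl^{-1/2}\|\le1/2$, so $\oplpwl^{-1/2}\oplpnw\oplpwl^{-1/2}\preceq \tfrac32\oplpwl^{-1/2}\oplpwl\oplpwl^{-1/2}$. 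That comparison alone is too weak for a trace bound, so instead I would bound the empirical trace $n^{-1}\sum_i w(X_i)\|\oplpwl^{-1/2}k_{X_i}\|^2$ by its mean $\tr(\oplpwl^{-1}\oplpw)$ plus a scalar Bernstein deviation using $\|\oplpwl^{-1/2}k_x\|^2\le\kappa^2/\lambda$. The condition $n>\tau_n/(\lambda\pd)$ makes this deviation $\lesssim\pd\log(2/\eta)$. The last step relates $\tr(\oplpwl^{-1}\oplpw)$ to $\pd$ (at most $\mathfrak C_1^2$-comparable, as in Lemma~\ref{lem:sigma-l2-trace-operator-norm-bound}).

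Combining: $\|\Delta_{\mathrm{res}}\|_{\splq}\lesssim \mathfrak C_1\kappa\sqrt{\tau_n\pd}\,\|\hatf-\fpw\|\,\lambda^{-1/2}\sqrt{\log(2/\eta)}\cdot\sqrt{\log(2/\eta)}$. Here the remaining $\lambda^{-1/2}$ comes from $\|\oplpwl^{-1/2}\|$ in $\oplpwl^{-1}=\oplpwl^{-1/2}\oplpwl^{-1/2}$ after the $\mathfrak C_1$ transfer. Plugging in Lemma~\ref{lem:estimation-error-bound-in-h}, $\|\hatf-\fpw\|\lesssim\sqrt{\tau_n\pd/(n\lambda)}\,\overline\sigma\kappa^{2s+1}(1+\|g_*\|)\log(2/\eta)$. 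This gives a bound of order $\tau_n\pd/(\sqrt n\,\lambda)$, which carries an extra $\lambda^{-1/2}$ compared to the stated bound.

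The main obstacle is removing that extra $\lambda^{-1/2}$. The fix I would try is to avoid spending $\|\oplpwl^{-1/2}\|$, and instead write
\[
\|\oplq^{1/2}\oplpwl^{-1}G\|\le\mathfrak C_1\|\oplpwl^{-1/2}G\|,
\]
so that only one factor $\oplpwl^{-1/2}$ is absorbed by $\mathfrak C_1$ and the other acts on $G$. Then $\tr\Sigma_h$ is computed with $\phi_i=w(X_i)\oplpwl^{-1/2}k_{X_i}$, giving exactly $\sqrt{\tau_n\pd}$ with no stray $\lambda$. Together with $\|\hatf-\fpw\|$, this yields $\mathfrak C_1\overline\sigma\kappa^{2s+2}(1+\|g_*\|)\tau_n\pd/\sqrt{n\lambda}\cdot\log^2(2/\eta)$, within the stated bound. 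The data events (Lemmas~\ref{lem:bound-bahadur-1}, \ref{lem:estimation-error-bound-in-h}, and the scalar Bernstein step) are combined by a union bound with split failure probabilities.
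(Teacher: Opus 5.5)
Your final route is essentially the paper's proof: you absorb one factor $\oplpwl^{-1/2}$ into $\mathfrak C_1$, apply Borell's inequality conditionally on $\mathcal D$, bound the conditional second moment by $\kappa^2\|\hatf-\fpw\|^2\, n^{-1}\sum_i w(X_i)^2\|\oplpwl^{-1/2}k_{X_i}\|^2$, control that empirical sum by scalar Bernstein using $n>\tau_n/(\lambda\pd)$, and finish with Lemma~\ref{lem:estimation-error-bound-in-h}; the paper differs only in pulling $\|\fpw-\hatf\|$ out first, via the Hilbert--Schmidt norm of the Gaussian operator $n^{-1}\sum_{i,j}h_{ij}(\oplpiw-\oplpjw)/\sqrt2$. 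Two bookkeeping fixes. First, $\tr(\oplpwl^{-1}\oplpw)\le\pd$ holds with constant one because $w\le\rho$ gives $\oplpw\preceq\oplq$; Assumption~\ref{assump:covariate-shift-estimation} points the wrong way for this, and a $\mathfrak C_1^2$ there would leave $\mathfrak C_1^2$ in the final bound. Second, your Bernstein deviation carries a factor $\kappa^2$, since each summand is at most $\tau_n\kappa^2/\lambda$, and this is exactly what produces the stated $\kappa^{2s+3}$ rather than your $\kappa^{2s+2}$.
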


\begin{proof}[Proof of Proposition~\ref{prop:main-feasible-bootstrap}]
Conditional on \(\mathcal D\), the processes \(B_{\oracle}\) and \(B\)
are centered Gaussian elements in \(\splq\), because the multipliers \(h_{ij}\)
are independent standard Gaussian variables. The anti-symmetric covariance
calculation gives the covariance operators
\(\hat{\opv}_{\mathrm{oracle}}\) and \(\hat{\opv}_{\mathrm{feas}}\) displayed in
Proposition~\ref{prop:main-feasible-bootstrap}. The high-probability bound for
\(\norm{B-B_{\oracle}}_{\splq}\) follows by combining
Lemmas~\ref{lem:feasible-bootstrap-decomposition},
\ref{lem:feasible-bootstrap-inverse-bound}, and
\ref{lem:feasible-bootstrap-residual-bound}. In these lemmas, the
data-dependent events are intersected first, and the conditional multiplier
events are then intersected on that data event. Replacing \(\eta\) in each
constituent bound by a fixed fraction of \(\eta\) and applying the union bound
changes only the universal constants inside the logarithms.
\end{proof}

\subsection{Proof of Proposition~\ref{prop:main-bootstrap-coupling}}

\begin{lemma}[Conditional Strassen--Dudley coupling lemma]\label{lem:strassen-coupling}
The following simplified form is a direct consequence of the conditional
Strassen--Dudley theorem of \citet[Theorem~4]{monrad1991nearby}. Let
\((\mathsf S,d)\) be a complete separable metric space, let
\(\mathcal A\) be a countably generated sub-\(\sigma\)-field, and suppose the
underlying probability space supports a \([0,1]\)-uniform random variable
independent of \(\mathcal A\vee\sigma(X)\). If \(X,X',Y'\) are
\(\mathsf S\)-valued random elements such that
\[
  X\mid\mathcal A \stackrel{d}{=} X'\mid\mathcal A,
  \qquad
  P\{d(X',Y')>r\}\leq\varepsilon
\]
for some \(r,\varepsilon\ge0\), then there exists an \(\mathsf S\)-valued
random element \(Y\) such that
\[
  Y\mid\mathcal A
  \stackrel{d}{=}
  Y'\mid\mathcal A
  \qquad\text{and}\qquad
  P\{d(X,Y)>r\}\leq\varepsilon .
\]
\end{lemma}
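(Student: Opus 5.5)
The plan is to avoid the full Prokhorov-distance machinery of \citet[Theorem~4]{monrad1991nearby} and instead prove the simplified statement directly by transplanting the joint conditional law of \((X',Y')\) onto \(X\). The key observation is that \(X\) and \(X'\) have the same conditional law given \(\mathcal A\). It therefore suffices to build \(Y\) so that the triple formed by the ``\(\mathcal A\)-information'', \(X\) and \(Y\) has the same joint law as the corresponding triple with \(X'\) and \(Y'\). Both conclusions then follow by equality in law.

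\emph{Step 1 (encoding \(\mathcal A\)).} Since \(\mathcal A\) is countably generated, pick generators \(A_1,A_2,\ldots\). Set \(\xi:=(\mathbf 1_{A_k})_{k\ge1}\), which takes values in the Polish space \(\{0,1\}^{\mathbb N}\). Then \(\sigma(\xi)=\mathcal A\), so conditioning on \(\mathcal A\) is the same as conditioning on \(\xi\). The hypothesis \(X\mid\mathcal A\stackrel{d}{=}X'\mid\mathcal A\) then says precisely that \((\xi,X)\stackrel{d}{=}(\xi,X')\) as random elements of \(\{0,1\}^{\mathbb N}\times\mathsf S\).

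\emph{Step 2 (disintegration).} Because \(\mathsf S\) is complete separable, the pair \((\xi,X')\) lives in a Polish space. Hence there is a regular conditional distribution \(\kappa(a,x;\cdot)\) of \(Y'\) given \((\xi,X')=(a,x)\), which is a probability kernel from \(\{0,1\}^{\mathbb N}\times\mathsf S\) to \(\mathsf S\). By the randomization lemma for Polish spaces (e.g.\ Kallenberg's transfer theorem), there is a jointly measurable map \(\varphi\) with the property that \(\varphi(a,x,U)\sim\kappa(a,x;\cdot)\) whenever \(U\) is uniform on \([0,1]\).

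\emph{Step 3 (construction and verification).} Using the uniform variable \(U\), which by assumption is independent of \(\mathcal A\vee\sigma(X)=\sigma(\xi,X)\), define \(Y:=\varphi(\xi,X,U)\).
\begin{itemize}
\item Conditionally on \((\xi,X)\), the law of \(Y\) is \(\kappa(\xi,X;\cdot)\).
\item Combined with \((\xi,X)\stackrel{d}{=}(\xi,X')\), this gives \((\xi,X,Y)\stackrel{d}{=}(\xi,X',Y')\).
\item Projecting onto \((\xi,Y)\) yields \(Y\mid\mathcal A\stackrel{d}{=}Y'\mid\mathcal A\).
\item Applying the Borel function \((x,y)\mapsto d(x,y)\) yields \(P\{d(X,Y)>r\}=P\{d(X',Y')>r\}\le\varepsilon\).
\end{itemize}

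\emph{Main obstacle.} The only delicate point is measurability. We need the kernel \(\kappa\) to exist jointly in \((a,x)\), and \(\varphi\) to be jointly Borel so that \(Y\) is a genuine random element. This is exactly where completeness and separability of \(\mathsf S\), countable generation of \(\mathcal A\), and the independent uniform variable are used. I would handle it by citing the standard disintegration and transfer theorems on Polish spaces rather than reproving them. Alternatively, one can note that Theorem~4 of \citet{monrad1991nearby}, applied with conditional Prokhorov distance zero between \(X\mid\mathcal A\) and \(X'\mid\mathcal A\), gives the same conclusion.
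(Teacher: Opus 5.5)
Your proof is correct, but it takes a different route from the paper. The paper stays within \citet[Theorem~4]{monrad1991nearby}. It takes a regular conditional distribution $H(\cdot\mid\mathcal A)$ of the pair $(X',Y')$ and lets $F$ and $G$ be its two marginals, so that $F$ is also a version of the conditional law of $X$. It then verifies the Prokhorov-type hypothesis $\mathbb E\sup_C[F(C\mid\mathcal A)-G(C^r\mid\mathcal A)]\le\varepsilon$ through the inclusion $C\times\mathsf S\subset(\mathsf S\times C^r)\cup D_r^c$, with $D_r=\{(s,t):d(s,t)\le r\}$. This inclusion bounds the bracket by $P\{d(X',Y')>r\mid\mathcal A\}$ uniformly in $C$, and the actual construction of $Y$ is delegated to the cited theorem.

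You instead construct $Y$ by hand with the transfer/randomization theorem, transplanting the kernel of $Y'$ given $(\xi,X')$ onto $(\xi,X)$. This is more elementary and yields more:
\begin{itemize}
\item you get $(\xi,X,Y)\stackrel{d}{=}(\xi,X',Y')$, hence $P\{d(X,Y)>r\}=P\{d(X',Y')>r\}$ with equality;
\item a single $Y$ serves every $r$ simultaneously;
\item you never have to form the supremum over all Borel sets $C$ in the Monrad--Philipp condition.
\end{itemize}
In fact, the transfer theorem allows an arbitrary measurable conditioning space. You could therefore take $\xi$ to be the identity map $(\Omega,\mathcal F)\to(\Omega,\mathcal A)$ and dispense with countable generation of $\mathcal A$ altogether. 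The paper's route buys generality that this lemma does not use: Theorem~4 only requires the conditional laws of $X$ and $Y'$ to be close in the Prokhorov sense, without a pre-existing joint coupling such as $(X',Y')$.

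One small correction to your closing remark. Applying Theorem~4 with zero conditional distance between $X\mid\mathcal A$ and $X'\mid\mathcal A$ only produces a copy of $X'$ that equals $X$. That is trivial and says nothing about $Y$; you would still need your transfer step to attach $Y'$. The paper instead applies the theorem to $X\mid\mathcal A$ versus $Y'\mid\mathcal A$ at level $(r,\varepsilon)$. Your main argument does not rely on this remark.
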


Let \(\{(\nu_j,e_j)\}_{j\geq1}\) be the eigensystem of \(\opv\) introduced
above, and retain the notation
\(\Pi_m=\sum_{j=1}^m e_j\otimes_{\splq}e_j\).

\begin{proof}[Proof of Proposition~\ref{prop:main-bootstrap-coupling}]
The proof uses the covariance-comparison and anti-symmetric multiplier-bootstrap techniques of \cite{singh2025kernelridgeregressioninference}, but all covariance operators, projections, Hilbert-Schmidt norms, and Gaussian elements below are taken in \(\splq\).

First, the anti-symmetric multiplier process has the desired empirical covariance. Since \(V_i=U_i+\mu\), where
\[
  \mu=(\oplpw+\lambda\opid)^{-1}\oplpw(f_*-\fpw),
\]
we have \(V_i-V_j=U_i-U_j\). Conditional on the data,
\[
  \frac1n\sum_{i=1}^n\sum_{j=1}^n
  h_{ij}\frac{V_i-V_j}{\sqrt2}
\]
is a centered Gaussian element in \(\splq\). Its conditional covariance is
\begin{equation*}
\begin{aligned}
  \mathbb{E}_h\left[
  \left(\frac1n\sum_{i,j}h_{ij}\frac{U_i-U_j}{\sqrt2}\right)
  \otimes_{\splq}
  \left(\frac1n\sum_{k,\ell}h_{k\ell}\frac{U_k-U_\ell}{\sqrt2}\right)
  \right]
  &=
  \mathbb{E}_n[U_i\otimes_{\splq}U_i]
  -
  \mathbb{E}_n[U_i]\otimes_{\splq}\mathbb{E}_n[U_i]  \\
  &=\what{\opv}_{\oracle}.
\end{aligned}
  \end{equation*}
Thus the conditional law of the anti-symmetric multiplier process is the same as that of \(\what{\opv}_{\oracle}^{1/2}g\).

Since \(\opv\) is trace class, writing \(g_i:=g(e_i)\) for the coordinates of
the isonormal process in Proposition~\ref{prop:main-bootstrap-coupling} gives
\[
  \opv^{1/2}g
  =\sum_{i=1}^{\infty}\sqrt{\nu_i}\,g_i e_i
  \in\splq
  \quad\text{a.s.},
  \qquad
  \mathbb E\!\left[
    \bigl(\opv^{1/2}g\bigr)\otimes_{\splq}
    \bigl(\opv^{1/2}g\bigr)
  \right]
  =\opv.
\]

We next compare the Gaussian elements with covariances \(\opv\) and \(\what{\opv}_{\oracle}\). Fix \(m\ge1\), and write \(\Pi_m^\perp=I-\Pi_m\). Define
\[
  \Delta_{\mathrm{HS}}=\|\what{\opv}_{\oracle}-\opv\|_{\mathrm{HS},\splq},
  \qquad
  \Delta_{\mathrm{tail}}=
  \left|\tr_{\splq}\{\Pi_m^\perp(\what{\opv}_{\oracle}-\opv)\Pi_m^\perp\}\right|.
\]
The covariance-comparison argument gives a random element \(G\), with the same conditional distribution as \(\what{\opv}_{\oracle}^{1/2}g\), such that conditional on the data, with probability at least \(1-3\eta\),
\begin{equation}
\label{eq:abstract-bootstrap-coupling-l2q}
  \|\opv^{1/2}g-G\|_{\splq}
  \le
  \{1+\sqrt{2\log\!\left(\frac{1}{\eta}\right)}\}
  \left\{
    m^{1/4}\Delta_{\mathrm{HS}}^{1/2}
    +\Delta_{\mathrm{tail}}^{1/2}
    +2\sigma_{\splq}(m,\opv)
  \right\}.
\end{equation}
Indeed, decompose \(\opv^{1/2}g\) into its \(\Pi_m\) and \(\Pi_m^\perp\) parts. On the leading \(m\)-dimensional subspace, the Powers--St{\o}rmer inequality gives
\[
  \|(\Pi_m\opv\Pi_m)^{1/2}-(\Pi_m\what{\opv}_{\oracle}\Pi_m)^{1/2}\|_{\mathrm{HS},\splq}
  \le
  m^{1/4}\|\opv-\what{\opv}_{\oracle}\|_{\mathrm{HS},\splq}^{1/2}.
\]
Borell's inequality then controls the corresponding Gaussian norm. The
\(\Pi_m^\perp\opv\Pi_m^\perp\) tail contributes \(\sigma_{\splq}(m,\opv)\).
The remaining \(\Pi_m^\perp\what{\opv}_{\oracle}\Pi_m^\perp\) tail is coupled to the full
\(\what{\opv}_{\oracle}^{1/2}g\) by Lemma~\ref{lem:strassen-coupling} and is bounded by
Borell's inequality through
\[
  \|(\Pi_m^\perp\what{\opv}_{\oracle}\Pi_m^\perp)^{1/2}\|_{\mathrm{HS},\splq}
  \le
  \Delta_{\mathrm{tail}}^{1/2}+\sigma_{\splq}(m,\opv).
\]
This proves \eqref{eq:abstract-bootstrap-coupling-l2q}.

It remains to bound \(\Delta_{\mathrm{HS}}\) and \(\Delta_{\mathrm{tail}}\) under the target-norm
bound on \(U_i\). Hilbert-space Bernstein bounds (Lemma~\ref{lem:hs-inequality}) for the empirical covariance
operator imply, with probability at least \(1-3\eta\),
\[
  \Delta_{\mathrm{HS}}
  \lesssim
  \left(\log\!\left(\frac{2}{\eta}\right)\right)^2
  \left\{
    \left(\frac{\bar U^2\sigma_{\splq}^2(0,\opv)}{n}\right)^{1/2}
    +\frac{\bar U^2}{n}
  \right\},
\]
and
\[
  \Delta_{\mathrm{tail}}
  \lesssim
  \log\!\left(\frac{2}{\eta}\right)
  \left\{
    \left(\frac{\bar U^2\sigma_{\splq}^2(m,\opv)}{n}\right)^{1/2}
    +\frac{\bar U^2}{n}
  \right\}.
\]
Substituting these bounds into \eqref{eq:abstract-bootstrap-coupling-l2q}, using
\[
  \Delta_{\mathrm{tail}}^{1/2}
  \lesssim
  \left(\log\!\left(\frac{2}{\eta}\right)\right)^{1/2}
  \left[
    \sigma_{\splq}(m,\opv)
    +
    \left\{
      \frac{\bar U^2\sigma_{\splq}^2(0,\opv)}{n}
      +
      \frac{\bar U^4}{n^2}
    \right\}^{1/4}
  \right],
\]
and absorbing logarithmic factors into \(\left(\log\!\left(\frac{2}{\eta}\right)\right)^{3/2}\), gives the claim after taking the infimum over \(m\ge1\). The total probability is at least \(1-6\eta\) by a union bound over the covariance-estimation event and the conditional Gaussian-coupling event.
\end{proof}

\subsection{Proof of Corollary~\ref{cor:main-approximate-bootstrap}}

\begin{proof}[Proof of Corollary~\ref{cor:main-approximate-bootstrap}]
Fix \(\eta\in(0,1)\), and let \(\eta_{\mathrm{tol}}=c\eta^2\), where \(c>0\) is a sufficiently
small universal constant. In this proof, \(\Delta_{B,\mathrm{feas}}\) and
\(\Delta_{B,\mathrm{coup}}\) are evaluated with tolerance \(\eta_{\mathrm{tol}}\). Write
\[
  W=B_{\oracle},
  \qquad
  W'=B,
  \qquad
  r_{\mathrm{feas}}=\Delta_{B,\mathrm{feas}},
  \qquad
  r_{\mathrm{coup}}=\Delta_{B,\mathrm{coup}}.
\]
Let \(\mathcal E_{\mathrm{feas}}\) be the data event from
Proposition~\ref{prop:main-feasible-bootstrap}, evaluated with tolerance
\(\eta_{\mathrm{tol}}\). Then
\(P(\mathcal E_{\mathrm{feas}}^c)\leq\eta_{\mathrm{tol}}\), and, on
\(\mathcal E_{\mathrm{feas}}\), the conditional failure probability is at most
\(\eta_{\mathrm{tol}}\). Consequently,
\[
\begin{aligned}
P\!\left(\norm{W'-W}_{\splq}>r_{\mathrm{feas}}\right)
&\leq
P(\mathcal E_{\mathrm{feas}}^c)
+
\mathbb E\!\left[
\mathbf 1_{\mathcal E_{\mathrm{feas}}}
P\!\left(
\norm{W'-W}_{\splq}>r_{\mathrm{feas}}
\,\middle|\,\mathcal D
\right)
\right] \\
&\leq 2\eta_{\mathrm{tol}}.
\end{aligned}
\]
Next, the identity \(V_i=U_i+\mu\), with
\[
  \mu=(\oplpw+\lambda\opid)^{-1}\oplpw(f_*-\fpw),
\]
implies \(V_i-V_j=U_i-U_j\). Hence, conditional on \(\mathcal D\), the oracle
anti-symmetric bootstrap \(W\) is centered Gaussian with covariance
\[
  \what{\opv}_{\oracle}
  =
  \mathbb E_n[U_i\otimes_{\splq}U_i]
  -
  \mathbb E_n[U_i]\otimes_{\splq}\mathbb E_n[U_i].
\]
Thus \(W\mid\mathcal D\) has the same law as \(\what{\opv}_{\oracle}^{1/2}g\). By
Proposition~\ref{prop:main-bootstrap-coupling}, applied with tolerance
\(\eta_{\mathrm{tol}}\), there is a Gaussian element \(Z_0\sim\mathrm{N}_{\splq}(0,\opv)\)
and a random element \(G\), with \(G\mid\mathcal D\) distributed as
\(W\mid\mathcal D\), such that
\[
  P\!\left(\norm{Z_0-G}_{\splq}>r_{\mathrm{coup}}\right)
  \leq 6\eta_{\mathrm{tol}} .
\]
By Lemma~\ref{lem:strassen-coupling}, applied to the regular conditional laws given
\(\mathcal D\), we may replace the copy \(G\) by the actual oracle bootstrap
\(W\): there exists a copy \(Z\) with
\[
  Z\mid\mathcal D\sim\mathrm{N}_{\splq}(0,\opv)
\]
such that
\[
  P\!\left(\norm{Z-W}_{\splq}>r_{\mathrm{coup}}\right)\leq 6\eta_{\mathrm{tol}} .
\]
Gluing this coupling with the original pair \((W,W')\) and using the triangle
inequality gives
\[
  P\!\left(
    \norm{Z-W'}_{\splq}
    >
    r_{\mathrm{coup}}+r_{\mathrm{feas}}
  \right)
  \leq 8\eta_{\mathrm{tol}} .
\]
Let
\[
  A(\mathcal D)
  =
  P\!\left(
    \norm{Z-W'}_{\splq}
    >
    r_{\mathrm{coup}}+r_{\mathrm{feas}}
    \,\middle|\,\mathcal D
  \right).
\]
Then \(\mathbb E A(\mathcal D)\leq8\eta_{\mathrm{tol}}\). Markov's inequality gives
\[
  P\{A(\mathcal D)>\eta\}
  \leq
  \frac{8\eta_{\mathrm{tol}}}{\eta}.
\]
Choosing \(c\leq 1/8\) yields \(8\eta_{\mathrm{tol}}/\eta\leq\eta\). Therefore, with
probability at least \(1-\eta\) over the data,
\[
  P\!\left(
    \norm{B-Z}_{\splq}
    \leq
    \Delta_{B,\mathrm{feas}}
    +
    \Delta_{B,\mathrm{coup}}
    \,\middle|\,\mathcal D
  \right)
  \geq 1-\eta .
\]
Renaming \(Z\) as \(Z'\) proves the corollary.
\end{proof}

\subsection{Proof of Corollary~\ref{cor:bd-density}}

\begin{lemma}
  \label{lem:effective-dimension-bound-polynomial-decay}
Suppose that the eigenvalues of \(\oplq\) satisfy
\(\mu_k\leq\alpha k^{-\beta}\) for all \(k\geq1\), where
\(\alpha>0\) and \(\beta>1\). Then, for every \(\lambda>0\),
\[
\pd
\leq
\alpha^{1/\beta}\frac{\beta}{\beta-1}\lambda^{-1/\beta}.
\]
\end{lemma}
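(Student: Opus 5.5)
The bound comes from comparing the sum with an integral. Write $\pd=\tr_{\mathsf H}\big((\oplq+\lambda\opid)^{-1}\oplq\big)=\sum_{k\geq1}\mu_k/(\mu_k+\lambda)$, using the Mercer basis of Lemma~\ref{lem:eigendecomposition_LQ}. The map $t\mapsto t/(t+\lambda)$ is nondecreasing on $[0,\infty)$. With $\mu_k\le\alpha k^{-\beta}$, each summand is therefore at most
\[
\frac{\alpha k^{-\beta}}{\alpha k^{-\beta}+\lambda}=\frac{1}{1+(\lambda/\alpha)k^{\beta}} .
\]
This reduces the statement to bounding the explicit series $\sum_{k\ge1}\big(1+(\lambda/\alpha)k^\beta\big)^{-1}$.

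Set $k_0:=(\alpha/\lambda)^{1/\beta}$ and split the sum at $k_0$.

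For $k\le k_0$, bound each term by $1$. This part contributes at most $\lfloor k_0\rfloor\le k_0$.

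For $k>k_0$, bound each term by $(\alpha/\lambda)k^{-\beta}$. This is decreasing in $k$, so the tail sum is at most the integral from $\lfloor k_0\rfloor$ onward. The integral over $[k_0,\infty)$ equals
\[
(\alpha/\lambda)\,\frac{k_0^{1-\beta}}{\beta-1}=\frac{k_0}{\beta-1}.
\]
Adding the two parts gives $k_0\big(1+\tfrac{1}{\beta-1}\big)=\tfrac{\beta}{\beta-1}\alpha^{1/\beta}\lambda^{-1/\beta}$, which is exactly the claimed constant.

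The only delicate point is bookkeeping at the boundary when $k_0$ is not an integer. The terms with $\lfloor k_0\rfloor<k\le \lceil k_0\rceil$ have to be charged to the integral part without exceeding the total constant. I would handle this by bounding the whole series by $\int_0^\infty\big(1+(\lambda/\alpha)t^\beta\big)^{-1}dt$, which is valid because the summand is decreasing in $k$. Then I would split that integral at $t=k_0$, using integrand $\le 1$ on $[0,k_0]$ and integrand $\le(\alpha/\lambda)t^{-\beta}$ on $[k_0,\infty)$. This gives the same value $k_0+k_0/(\beta-1)$ with no rounding issues.

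The case $k_0<1$ (that is, $\lambda>\alpha$) is covered by the same integral bound, so it needs no separate treatment.
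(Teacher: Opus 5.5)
Your proposal is correct and matches the paper's proof: it also bounds each summand by $\alpha/(\alpha+\lambda k^{\beta})$, dominates the series by $\int_0^\infty \alpha/(\alpha+\lambda x^{\beta})\,dx$, and, after substituting $t=(\lambda/\alpha)^{1/\beta}x$, splits at $t=1$, which is exactly your split at $x=k_0$. Your first discrete split has the rounding issue you identified, but the integral version you settle on avoids it and gives the stated constant.
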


\begin{lemma}
  \label{lem:sigma-bound-polynomial-decay}
Suppose that the eigenvalues of \(\oplq\) satisfy
\(\mu_k\leq\alpha k^{-\beta}\) for all \(k\geq1\), where
\(\alpha>0\) and \(\beta>1\). Then, for every \(m\in\mathbb N_+\),
\[
\sigma(m,\oplq)
\leq
\left(\frac{\alpha}{\beta-1}\right)^{1/2}
m^{1/2-\beta/2}.
\]
\end{lemma}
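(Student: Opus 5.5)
The statement is Lemma~\ref{lem:sigma-bound-polynomial-decay}. Recall that $\sigma(m,\oplq)=\sigma_{\splq}(m,\oplq)=\bigl(\sum_{j>m}\mu_j\bigr)^{1/2}$, where $\mu_j$ are the eigenvalues of $\oplq$ arranged in nonincreasing order. The plan is therefore to bound the spectral tail $\sum_{j=m+1}^\infty \mu_j$ by an integral and then take square roots.

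\textbf{Step 1: tail bound.} Insert the hypothesis $\mu_j\le \alpha j^{-\beta}$ termwise. Since $x\mapsto x^{-\beta}$ is decreasing on $(0,\infty)$, we have $j^{-\beta}\le\int_{j-1}^{j}x^{-\beta}\,dx$ for each $j\ge m+1\ge 2$. Summing these inequalities gives
\[
\sum_{j=m+1}^{\infty}\mu_j\le \alpha\int_m^\infty x^{-\beta}\,dx=\frac{\alpha}{\beta-1}\,m^{1-\beta},
\]
where the integral is finite because $\beta>1$ and $m\ge1$.

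\textbf{Step 2: conclude.} Taking square roots yields $\sigma(m,\oplq)\le (\alpha/(\beta-1))^{1/2}m^{1/2-\beta/2}$, which is the claimed bound.

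There is no real obstacle. The only points to check are that eigenvalues are counted with multiplicity and indexed from $1$, consistent with the paper's notation, so that the termwise hypothesis applies to every index. Convergence of the series is guaranteed by the trace-class property of $\oplq$, or directly by $\beta>1$.
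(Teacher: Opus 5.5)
Your proposal is correct and matches the paper's proof: both bound the spectral tail $\sum_{i>m}\mu_i$ by $\alpha\int_m^\infty t^{-\beta}\,dt=\frac{\alpha}{\beta-1}m^{1-\beta}$ and then take square roots. Your termwise bound $j^{-\beta}\le\int_{j-1}^{j}x^{-\beta}\,dx$ just makes explicit the sum-to-integral comparison that the paper uses without comment.
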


\begin{lemma}
  \label{lem:tail-bound-exponential-decay-integral}
For \(\beta>0\) and \(\gamma>0\),
\[
\lim_{a\to\infty}
\frac{\int_a^\infty\exp(-\beta x^\gamma)\,dx}
{\exp(-\beta a^\gamma)/(\beta\gamma a^{\gamma-1})}
=1.
\]
\end{lemma}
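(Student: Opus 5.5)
We want to show that the ratio of $\int_a^\infty \exp(-\beta x^\gamma)\,dx$ to $\exp(-\beta a^\gamma)/(\beta\gamma a^{\gamma-1})$ tends to $1$ as $a\to\infty$. The plan is L'H\^opital's rule, with an integration-by-parts argument kept in reserve.

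\emph{Setup.} Write $N(a):=\int_a^\infty \exp(-\beta x^\gamma)\,dx$ and $D(a):=\exp(-\beta a^\gamma)\,a^{1-\gamma}/(\beta\gamma)$.
\begin{itemize}
\item $N(a)$ is finite for all $a>0$: for $x\ge 1$ we have $\beta x^\gamma\ge 2\log x$ eventually, so the integrand is $O(x^{-2})$.
\item $N(a)\to 0$ as $a\to\infty$.
\item $D(a)\to 0$ as $a\to\infty$, because the exponential decay beats any power of $a$; this holds for every $\gamma>0$, including $\gamma<1$, where $a^{1-\gamma}$ grows.
\item $D$ is differentiable and its derivative is nonzero for large $a$, as the computation below shows.
\end{itemize}
So the $0/0$ form of L'H\^opital's rule at $+\infty$ applies.

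\emph{Derivatives.} Clearly $N'(a)=-\exp(-\beta a^\gamma)$. For $D$, differentiate the product:
\[
D'(a)=\frac{1}{\beta\gamma}\Bigl(-\beta\gamma a^{\gamma-1}\,a^{1-\gamma}+(1-\gamma)a^{-\gamma}\Bigr)\exp(-\beta a^\gamma)
=-\exp(-\beta a^\gamma)\Bigl(1-\frac{1-\gamma}{\beta\gamma}a^{-\gamma}\Bigr).
\]
The bracket tends to $1$ as $a\to\infty$. Hence $D'(a)\neq 0$ for large $a$, and
\[
\frac{N'(a)}{D'(a)}=\Bigl(1-\frac{1-\gamma}{\beta\gamma}a^{-\gamma}\Bigr)^{-1}\to 1.
\]
L'H\^opital's rule then gives $N(a)/D(a)\to 1$, which is the claim.

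\emph{Alternative if L'H\^opital feels too loose.} Substitute $u=\beta x^\gamma$ and integrate by parts. This yields
\[
N(a)=D(a)+\frac{1-\gamma}{\beta\gamma}\int_a^\infty x^{-\gamma}\exp(-\beta x^\gamma)\,dx.
\]
The remainder term is at most $a^{-\gamma}\,|1-\gamma|/(\beta\gamma)\cdot N(a)=o(N(a))$, since $x^{-\gamma}\le a^{-\gamma}$ on $[a,\infty)$. Rearranging gives $N(a)=D(a)(1+o(1))$.

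\emph{Main obstacle.} There is no substantive difficulty. The only point needing care is checking the hypotheses of L'H\^opital's rule: both $N$ and $D$ vanish at infinity, and $D'$ is eventually nonzero. These are verified above, so the argument goes through for every $\gamma>0$.
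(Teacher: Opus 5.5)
Your proof is correct, and both routes you give are valid, but they differ from the paper's. The paper substitutes $u=\beta x^\gamma$ to rewrite the integral as $\frac{1}{\gamma\beta^{1/\gamma}}\int_{\beta a^\gamma}^\infty u^{1/\gamma-1}e^{-u}\,du$. It then cites the standard incomplete-gamma tail asymptotic $\int_t^\infty u^{q-1}e^{-u}\,du\sim t^{q-1}e^{-t}$ with $q=1/\gamma$.

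Your L'H\^opital computation works directly in the original variable and in effect proves that cited fact instead of invoking it. This makes the argument self-contained at essentially no extra cost. The derivative $D'(a)=-e^{-\beta a^\gamma}\bigl(1-\frac{1-\gamma}{\beta\gamma}a^{-\gamma}\bigr)$ and your check of the $0/0$ hypotheses are right.

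Your integration-by-parts identity gives more. Since the remainder is bounded in absolute value by $\frac{|1-\gamma|}{\beta\gamma}a^{-\gamma}N(a)$, you get the quantitative statement $N(a)/D(a)=1+O(a^{-\gamma})$. For $\gamma\ge 1$ you even get the exact one-sided bound $N(a)\le D(a)$ for every $a>0$, because the remainder is then nonpositive. That nonasymptotic form is what the paper actually needs downstream in Lemma~\ref{lem:sigma-bound-exponential-decay}. There it has to pass from the limit statement to an inequality ``for all sufficiently large $m$'' and then enlarge the constant to cover the remaining finitely many $m$.
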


\begin{lemma}
  \label{lem:effective-dimension-bound-exponential-decay}
Suppose that the eigenvalues of \(\oplq\) satisfy
\(\mu_k\leq\alpha\exp(-\beta k^\gamma)\) for all \(k\geq1\), where
\(\alpha>0\), \(\beta>0\), and \(\gamma>0\). Then there exist
\(\lambda_0\in(0,1)\) and \(C_{\alpha,\beta,\gamma}>0\) such that, for every
\(0<\lambda\leq\lambda_0\),
\[
\pd
\leq
C_{\alpha,\beta,\gamma}\{\log(1/\lambda)\}^{1/\gamma}.
\]
\end{lemma}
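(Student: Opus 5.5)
The plan is to bound the effective dimension termwise and split the series at the index where the eigenvalues fall below $\lambda$. Since $\mu_k/(\mu_k+\lambda)\leq\min\{1,\mu_k/\lambda\}$, we have
\[
\pd=\sum_{k\geq1}\frac{\mu_k}{\mu_k+\lambda}\leq K+\frac{1}{\lambda}\sum_{k>K}\mu_k\leq K+\frac{\alpha}{\lambda}\sum_{k>K}\exp(-\beta k^\gamma)
\]
for any integer $K\geq1$. I will take $K:=\lceil(\beta^{-1}\log(\alpha/\lambda))^{1/\gamma}\rceil$. This choice gives $\alpha\exp(-\beta K^\gamma)\leq\lambda$, and it is well defined once $\lambda<\alpha$.

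For the tail, the map $x\mapsto\exp(-\beta x^\gamma)$ is decreasing on $[0,\infty)$, so $\sum_{k>K}\exp(-\beta k^\gamma)\leq\int_K^\infty\exp(-\beta x^\gamma)\,dx$. By Lemma~\ref{lem:tail-bound-exponential-decay-integral}, there is $a_0$, depending only on $\beta,\gamma$, such that for $K\geq a_0$ this integral is at most $2\exp(-\beta K^\gamma)/(\beta\gamma K^{\gamma-1})$. Hence
\[
\frac{\alpha}{\lambda}\sum_{k>K}\exp(-\beta k^\gamma)\leq\frac{2}{\beta\gamma}K^{1-\gamma}.
\]
If $\gamma\geq1$ this is bounded by the constant $2/(\beta\gamma)$. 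If $\gamma<1$ it is at most $\frac{2}{\beta\gamma}K$ because $K\geq1$. In both cases $\pd\leq(1+\frac{2}{\beta\gamma})K+\frac{2}{\beta\gamma}$.

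It remains to choose $\lambda_0\in(0,1)$ small enough that three things hold for all $0<\lambda\leq\lambda_0$:
\begin{enumerate}[(i)]
  \item $\lambda<\alpha$;
  \item $K\geq\max\{a_0,1\}$, which holds since $K\to\infty$ as $\lambda\to0$;
  \item $\log(\alpha/\lambda)\leq2\log(1/\lambda)$ and $\log(1/\lambda)\geq1$.
\end{enumerate}
Then $K\leq(2\beta^{-1}\log(1/\lambda))^{1/\gamma}+1\lesssim\{\log(1/\lambda)\}^{1/\gamma}$, and the additive constants are absorbed because $\{\log(1/\lambda)\}^{1/\gamma}\geq1$. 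Collecting these bounds yields $\pd\leq C_{\alpha,\beta,\gamma}\{\log(1/\lambda)\}^{1/\gamma}$.

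The only delicate point is the factor $K^{1-\gamma}$ from the tail when $\gamma<1$: for slow stretched-exponential decay the tail does not contribute a constant. The observation $K^{1-\gamma}\leq K$ settles this, so the tail contributes at most the same order as the head.
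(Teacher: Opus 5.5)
Your proof is correct and follows essentially the same route as the paper. Both arguments bound the head by the cutoff $x_0=\{\beta^{-1}\log(\alpha/\lambda)\}^{1/\gamma}$ and the tail by $(\alpha/\lambda)\int_{x_0}^\infty e^{-\beta x^\gamma}\,dx$ via Lemma~\ref{lem:tail-bound-exponential-decay-integral}; the paper turns the whole series into an integral before splitting, whereas you split the series at $K=\lceil x_0\rceil$ first. Your explicit handling of the $K^{1-\gamma}$ factor when $\gamma<1$, and of the conditions that define $\lambda_0$, supplies details the paper compresses into its final $\lesssim$.
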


\begin{lemma}
  \label{lem:sigma-bound-exponential-decay}
Suppose that the eigenvalues of \(\oplq\) satisfy
\(\mu_k\leq\alpha\exp(-\beta k^\gamma)\) for all \(k\geq1\), where
\(\alpha>0\), \(\beta>0\), and \(\gamma>0\). Then there exists
\(C_{\alpha,\beta,\gamma}>0\) such that, for every \(m\in\mathbb N_+\),
\[
\sigma(m,\oplq)
\leq
C_{\alpha,\beta,\gamma}
m^{1/2-\gamma/2}\exp(-\beta m^\gamma/2).
\]
\end{lemma}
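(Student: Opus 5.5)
This bound follows from the tail-integral comparison of Lemma~\ref{lem:tail-bound-exponential-decay-integral}. We compare the sum with an integral and then absorb subexponential prefactors into the exponential by giving up half of the exponent.

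First, by definition,
\[
\sigma^2(m,\oplq)=\sum_{j>m}\mu_j\le \alpha\sum_{j>m}\exp(-\beta j^\gamma).
\]
Since $x\mapsto\exp(-\beta x^\gamma)$ is nonincreasing on $(0,\infty)$, each summand satisfies $\exp(-\beta j^\gamma)\le\int_{j-1}^{j}\exp(-\beta x^\gamma)\,dx$. This gives
\[
\sigma^2(m,\oplq)\le\alpha\int_m^\infty \exp(-\beta x^\gamma)\,dx .
\]

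Next, Lemma~\ref{lem:tail-bound-exponential-decay-integral} yields an $a_0$ such that, for $m\ge a_0$, the integral is at most
\[
2\exp(-\beta m^\gamma)\big/\bigl(\beta\gamma m^{\gamma-1}\bigr).
\]
Taking square roots, for $m\ge a_0$,
\[
\sigma(m,\oplq)\le \Bigl(\tfrac{2\alpha}{\beta\gamma}\Bigr)^{1/2} m^{1/2-\gamma/2}\exp(-\beta m^\gamma/2).
\]
This already has the claimed form, so no loss is incurred in this range.

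The finitely many $m<a_0$ are the only remaining point, and they are handled by enlarging the constant. For each such $m$ we have
\[
\sigma(m,\oplq)\le\Bigl(\alpha\int_0^\infty e^{-\beta x^\gamma}\,dx\Bigr)^{1/2}<\infty ,
\]
while $m^{1/2-\gamma/2}\exp(-\beta m^\gamma/2)$ is positive and bounded away from zero on $\{1,\dots,\lceil a_0\rceil\}$. Hence the ratio of the two sides is bounded on this finite set. Taking $C_{\alpha,\beta,\gamma}$ to be the maximum of $(2\alpha/(\beta\gamma))^{1/2}$ and these finitely many ratios proves the bound for every $m\in\mathbb N_+$.

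If one wants to avoid the limit lemma, an alternative is the substitution $u=\beta x^\gamma$. This turns the integral into an incomplete Gamma function $\Gamma(1/\gamma,\beta m^\gamma)$, whose standard bound $\Gamma(s,t)\lesssim t^{s-1}e^{-t}$ for $t$ bounded below gives the same estimate.
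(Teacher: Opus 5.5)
Your proposal is correct and matches the paper's proof essentially step for step: bound the tail sum by $\alpha\int_m^\infty e^{-\beta x^\gamma}\,dx$ via monotonicity, apply Lemma~\ref{lem:tail-bound-exponential-decay-integral} for all sufficiently large $m$, enlarge the constant to cover the finitely many remaining $m$, and take square roots. One cosmetic point: your opening phrase about ``giving up half of the exponent'' is misleading, since the factor $1/2$ in $\exp(-\beta m^\gamma/2)$ comes only from the square root and nothing is actually sacrificed.
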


\begin{lemma}
  \label{lem:tail-sum-sigma-bound}
Under Assumptions~\ref{assump:error} and \ref{assump:source-condition}, when
\(\tau_n\geq\tau_0\), for every \(m\in\mathbb N_+\) and \(\lambda>0\),
we have
\[
\sigma^2_{\splq}(m,\opv)
\leq
\frac{\tau_n}{\lambda}
\left(
  \overline{\sigma}^2
  +\kappa^2\max\{\kappa^{2s-1},1\}^{2}
  \lambda^{2\min\{s,1/2\}}\|g_{\ast}\|^2
\right)
\sigma^2(m,\oplq).
\]
\end{lemma}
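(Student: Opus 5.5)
The plan is to bound the spectral tail of $\opv$ in $\splq$ by dominating $\opv$, in the Loewner order on $\splq$, by a multiple of an operator whose spectrum is explicitly related to that of $\oplq$.

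First, write $U_i=(\oplpw+\lambda\opid)^{-1}\xi_i k_{X_i}$ with the scalar
\[
\xi_i:=w(X_i)\eps_i+w(X_i)\bigl((f_*-\fpw)(X_i)-\EE\ldots\bigr),
\]
where the centering term is handled separately. Since subtracting the mean only decreases the covariance, we have $\opv\preceq_{\splq}\EE[\tilde U\otimes_{\splq}\tilde U]$ with
\[
\tilde U=w(X)\bigl(\eps+(f_*-\fpw)(X)\bigr)(\oplpw+\lambda\opid)^{-1}k_X .
\]
Using $|\eps|\le\overline\sigma$ and $\sup_x|(f_*-\fpw)(x)|\le\kappa\|f_*-\fpw\|$, together with Lemma~\ref{lem:bias-bound-in-h}, the squared scalar factor is bounded by
\[
w(X)^2\cdot 2\Bigl(\overline\sigma^2+\kappa^2\max\{\kappa^{2s-1},1\}^2\lambda^{2\min\{s,1/2\}}\|g_*\|^2\Bigr).
\]
Then use $w\le\tau_n$ to pull out one factor of $w$. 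This gives $\opv\preceq_{\splq} c\,\tau_n\,\EE\bigl[w(X)\,a_X\otimes_{\splq}a_X\bigr]$ with $a_X=(\oplpw+\lambda\opid)^{-1}k_X$. By Lemma~\ref{lem:simplification-of-a-useful-operator}, the latter expectation equals $\opt\oplq$ with
\[
\opt=(\oplpw+\lambda\opid)^{-1}\oplpw(\oplpw+\lambda\opid)^{-1}.
\]

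Second, observe that $\opt\preceq_{\sfH}\lambda^{-1}\opid$, since the function $t\mapsto t/(t+\lambda)^2$ is bounded by $1/(4\lambda)\le 1/\lambda$. Then use Lemma~\ref{lem:same-spectrum-of-tlq-and-lq-half-tlq-half}: the eigenvalues of $\opt\oplq$ on $\splq$ coincide with those of $\oplq^{1/2}\opt\oplq^{1/2}$ on $\sfH$. The latter operator is dominated by $\lambda^{-1}\oplq$, so by Weyl monotonicity
\[
\mu_j(\opt\oplq,\splq)\le\lambda^{-1}\mu_j .
\]
The Loewner domination of $\opv$ by $c\tau_n\opt\oplq$ (both self-adjoint on $\splq$) then gives $\mu_j(\opv,\splq)\le c\tau_n\lambda^{-1}\mu_j$. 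Summing over $j>m$ yields the claim, $\sigma^2_{\splq}(m,\opv)\le c\,\tau_n\lambda^{-1}(\cdots)\sigma^2(m,\oplq)$.

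The main obstacle is constant bookkeeping: the factor 2 coming from $(a+b)^2\le 2a^2+2b^2$ does not appear in the stated bound. I would try to avoid it by using $\EE[\eps\mid X]=0$, so that the cross term vanishes in the conditional second moment. This gives $\EE[\xi^2\mid X]\le w^2\bigl(\overline\sigma^2+\kappa^2\|f_*-\fpw\|^2\bigr)$ exactly. I would also need to check that $\opt\oplq$ is self-adjoint on $\splq$, which is required for the Loewner comparison to transfer to eigenvalues; this should follow from the identity in Lemma~\ref{lem:simplification-of-a-useful-operator} as an expectation of rank-one positive operators.
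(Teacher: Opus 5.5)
Your argument is correct and is essentially the paper's proof. The paper likewise separates the noise and bias parts of $U_i$ (the cross terms vanish since $\EE[\eps\mid X]=0$) and drops the centering. It then uses $w\le\tau_n$ and Lemma~\ref{lem:bias-bound-in-h} to dominate $\opv$ by $\tau_n(\overline{\sigma}^2+\kappa^2\max\{\kappa^{2s-1},1\}^2\lambda^{2\min\{s,1/2\}}\|g_{\ast}\|^2)\,\opt_1$, where $\opt_1=(\oplpw+\lambda\opid)^{-1}\oplpw(\oplpw+\lambda\opid)^{-1}\oplq$, and concludes by min--max monotonicity. The only cosmetic difference is that the paper verifies $\opt_1\preceq_{\splq}\lambda^{-1}\oplq$ directly through $\langle f,\opt_1 f\rangle_{\splq}=\langle \oplq f,(\oplpw+\lambda\opid)^{-1}\oplpw(\oplpw+\lambda\opid)^{-1}\oplq f\rangle_{\sfH}$, instead of passing to $\sfH$ via Lemma~\ref{lem:same-spectrum-of-tlq-and-lq-half-tlq-half}.
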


\begin{proposition}
  \label{prop:gaussian-approximation-polynomial-boundedness}
Suppose \(\mu_k\leq\alpha k^{-\beta}\) for some \(\alpha>0\) and
\(\beta>1\). Under Assumptions~\ref{assump:error},
\ref{assump:source-condition}, and
\ref{assump:covariate-shift-estimation}, when
\(\tau_n\geq\tau_0\), if \(\tau_n\to\infty\), \(\lambda\to0\), and
\[
n>
\max\left\{
\frac{(\tau_n\kappa^2+1)\log(e+\pd)}{C_2\lambda},
\frac{\tau_n}{\lambda\pd}
\right\},
\]
where \(C_2\) is the constant in Lemma~\ref{lem:bound-bahadur-2} after
replacing its failure probability by the fixed fraction of \(\eta\) used in
the union bound, then with probability at least \(1-\eta\), there exist
i.i.d. Gaussian elements \(Z_i\sim\mathrm{N}_{\splq}(0,\opv)\) such that
\[
\begin{aligned}
&\left\|
\sqrt n(\hatf-\fpw)-\frac1{\sqrt n}\sum_{i=1}^n Z_i
\right\|_{\splq}\\
&\quad\lesssim
\frac{\tau_n}{\sqrt n}
\lambda^{-1/2-1/(2\beta)}
\log\!\left(\frac1\lambda\right)
\left(\log\!\left(\frac2\eta\right)\right)^2
+\sqrt{\frac{\tau_n}{\lambda}}
\left(\frac{\tau_n}{n}\right)^{\frac{\beta-1}{2(\beta+3)}}
\log(n)\log\!\left(\frac2\eta\right).
\end{aligned}
\]
\end{proposition}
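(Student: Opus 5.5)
The plan is to prove the statement by the triangle inequality, splitting the error into the Bahadur remainder and the Gaussian coupling of the linearized sum. First, Proposition~\ref{prop:main-bahadur} (whose sample-size condition is exactly the one assumed) gives, on an event of probability at least $1-\eta/2$,
\[
\Big\|\sqrt n(\hatf-\fpw)-\tfrac1{\sqrt n}\textstyle\sum_i U_i\Big\|_{\splq}\lesssim \tau_n\sqrt{\frac{\pd\log(e+\pd)}{\lambda n}}\log^2(2/\eta).
\]
I will then insert the polynomial effective-dimension bound of Lemma~\ref{lem:effective-dimension-bound-polynomial-decay}, $\pd\lesssim\lambda^{-1/\beta}$, and use $\log(e+\pd)\lesssim\log(1/\lambda)$. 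This gives $\tau_n n^{-1/2}\lambda^{-1/2-1/(2\beta)}\log^{1/2}(1/\lambda)\log^2(2/\eta)$, which is dominated by the first displayed term.

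For the coupling term I will build the i.i.d.\ $Z_i$ exactly as in the proof of Proposition~\ref{prop:main-gaussian-coupling}. That is, $Z_i=A_m^*\zeta_i+\sum_{s>m}\sqrt{\nu_s}h_{i,s}e_s$, obtained from Zaitsev's coupling on the top $m$ eigendirections of $\opv$, with the tails controlled by Pinelis--Sakhanenko and Borell. This yields, with probability $1-\eta/2$,
\[
\sigma_{\splq}(m,\opv)\log^{1/2}(6/\eta)+\frac{m^2\bar U\log(m^2/\eta)}{\sqrt n}.
\]
Lemma~\ref{lem:tail-sum-sigma-bound} combined with Lemma~\ref{lem:sigma-bound-polynomial-decay} gives $\sigma_{\splq}(m,\opv)\lesssim\sqrt{\tau_n/\lambda}\,m^{(1-\beta)/2}$. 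Proposition~\ref{prop:main-bahadur} gives $\bar U\lesssim\tau_n\lambda^{-1/2}$. Both terms therefore carry a factor $\sqrt{\tau_n/\lambda}$, and what remains is $m^{(1-\beta)/2}+m^2\sqrt{\tau_n/n}$ up to logarithms.

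Balancing the two pieces, $m^{(1-\beta)/2}=m^2(\tau_n/n)^{1/2}$, gives $m\asymp(n/\tau_n)^{1/(\beta+3)}$. The common value is then $(\tau_n/n)^{(\beta-1)/(2(\beta+3))}$, which matches the second displayed term. Since $m\le n$, we have $\log(m^2/\eta)\lesssim\log n\,\log(2/\eta)$, and the $\log^{1/2}(6/\eta)$ factor is absorbed into the same bound. Note that I deliberately use $\bar U\lesssim\tau_n\lambda^{-1/2}$ rather than $\sqrt{\tau_n/\lambda}$. With that choice the factor $\sqrt{\tau_n}$ inside the balance produces exactly the $(\tau_n/n)$ base. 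I will recheck carefully that this bookkeeping of $\tau_n$ powers is consistent with the stated exponent. If it is not, the remedy is to balance with $\bar U$ kept separately, accepting an extra $\sqrt{\tau_n}$ as a constant-order factor.

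Finally, a union bound over the two events, each with failure probability $\eta/2, gives the claim. The main obstacle is the exponent bookkeeping in the optimization over $m$, in particular matching the $\tau_n$ powers and confirming that the chosen $m$ is an integer of at least $1$ under the sample-size condition. The condition $n\gg\tau_n$ suffices for the latter.
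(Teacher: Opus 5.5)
Your proposal is correct and takes essentially the same route as the paper's proof. Both bound the Bahadur remainder via Proposition~\ref{prop:main-bahadur} with $\pd\lesssim\lambda^{-1/\beta}$, and both bound the coupling term via Proposition~\ref{prop:main-gaussian-coupling} with $\sigma_{\splq}(m,\opv)\lesssim\sqrt{\tau_n/\lambda}\,m^{-(\beta-1)/2}$, $\bar U\lesssim\tau_n/\sqrt{\lambda}$ and $m=\lceil (n/\tau_n)^{1/(\beta+3)}\rceil$, then finish with a union bound; the paper gets the i.i.d.\ $Z_i$ from an independent Gaussian-bridge step, whereas you read them off the Zaitsev construction directly, which is equally valid. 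Your $\tau_n$ bookkeeping checks out exactly, so the fallback you mention is unnecessary, and it would not be valid anyway: since $\tau_n\to\infty$, an extra factor $\sqrt{\tau_n}$ is not constant order.
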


\begin{proof}[Proof of Proposition~\ref{prop:gaussian-approximation-polynomial-boundedness}]
Apply Propositions~\ref{prop:main-bahadur}
and~\ref{prop:main-gaussian-coupling} with a fixed fraction of the tolerance
\(\eta\). By Lemma~\ref{lem:effective-dimension-bound-polynomial-decay},
\[
  \pd\lesssim\lambda^{-1/\beta}.
\]
Consequently, the Bahadur remainder satisfies
\[
  \Delta_{G,\mathrm{Bah}}
  \lesssim
  \frac{\tau_n}{\sqrt n}
  \lambda^{-\frac12-\frac1{2\beta}}
  \log\!\left(\frac1\lambda\right)
  \left(\log\!\left(\frac2\eta\right)\right)^2.
\]
Lemmas~\ref{lem:tail-sum-sigma-bound}
and~\ref{lem:sigma-bound-polynomial-decay} give
\[
  \sigma_{\splq}(m,\opv)
  \lesssim
  \sqrt{\frac{\tau_n}{\lambda}}\,
  m^{-\frac{\beta-1}{2}}.
\]
Moreover, Proposition~\ref{prop:main-bahadur} gives
\(\bar U\lesssim\tau_n/\sqrt\lambda\). Hence
Proposition~\ref{prop:main-gaussian-coupling} yields
\[
\begin{aligned}
  \Delta_{G,\mathrm{coup}}
  \lesssim\inf_{m\geq1}\bigg\{&
  \sqrt{\frac{\tau_n}{\lambda}}\,
  m^{-\frac{\beta-1}{2}}
  \left(\log\!\left(\frac2\eta\right)\right)^{1/2}
  +\frac{\tau_n}{\sqrt{n\lambda}}m^2
  \log\!\left(\frac{m^2}{\eta}\right)
  \bigg\}.
\end{aligned}
\]
Take
\[
  m=\left\lceil
  \left(\frac n{\tau_n}\right)^{1/(\beta+3)}
  \right\rceil.
\]
The sample-size condition implies \(n/\tau_n\to\infty\), and substitution
gives
\[
  \Delta_{G,\mathrm{coup}}
  \lesssim
  \sqrt{\frac{\tau_n}{\lambda}}
  \left(\frac{\tau_n}{n}\right)^{\frac{\beta-1}{2(\beta+3)}}
  \log(n)\log\!\left(\frac2\eta\right).
\]
The triangle inequality and a union bound combine the Bahadur and coupling
events. Finally, the coupled Gaussian element can be represented as
\(n^{-1/2}\sum_{i=1}^n Z_i\) by adjoining an independent Gaussian bridge,
with the \(Z_i\)'s i.i.d. with covariance \(\opv\). This proves the claim.
\end{proof}

\begin{proposition}
  \label{prop:bootstrap-approximation-polynomial-boundedness}
Suppose \(\mu_k\leq\alpha k^{-\beta}\) for some \(\alpha>0\) and
\(\beta>1\). Under Assumptions~\ref{assump:error},
\ref{assump:source-condition}, and
\ref{assump:covariate-shift-estimation}, together with
Definition~\ref{def:bernstein-moment-condition-for-weight}, when
\(\tau_n\geq\tau_0\), if \(\tau_n\to\infty\), \(\lambda\to0\), and
\[
n>
\max\left\{
\frac{(\tau_n\kappa^2+1)\log(e+\pd)}{C_2\lambda},
\frac{\tau_n}{\lambda\pd}
\right\},
\]
where \(C_2\) is the constant in Lemma~\ref{lem:bound-bahadur-2} evaluated at
the component tolerance \(\eta_{\mathrm{tol}}=c\eta^2\) used in
Corollary~\ref{cor:main-approximate-bootstrap}, then, on a data event of
probability at least \(1-\eta\), there exists a coupling of \(B\) and \(Z\),
conditionally on \(\mathcal D\), such that
\(Z\mid\mathcal D\sim\mathrm{N}_{\splq}(0,\opv)\). On this event, let
\[
\begin{aligned}
R_{\mathrm{B,poly}}
:={}&
\left(\log\!\left(\frac2\eta\right)\right)^3
\frac{\tau_n}{\sqrt n}\lambda^{-1/2-1/\beta}\\
&+
\left(\log\!\left(\frac2\eta\right)\right)^{3/2}
\sqrt{\frac{\tau_n}{\lambda}}
\left(
\frac{\tau_n\lambda^{1-1/\beta}}{n}
\right)^{\frac{\beta-1}{2(2\beta-1)}}.
\end{aligned}
\]
Then
\[
\mathbb P\!\left\{
\|Z-B\|_{\splq}\lesssim R_{\mathrm{B,poly}}
\;\middle|\;\mathcal D
\right\}
\geq1-\eta.
\]
\end{proposition}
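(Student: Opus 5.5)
The plan is to instantiate Corollary~\ref{cor:main-approximate-bootstrap}, with both component bounds evaluated at tolerance $\eta_{\mathrm{tol}}=c\eta^2$. Since $\log(2/\eta_{\mathrm{tol}})\lesssim\log(2/\eta)$, that corollary already supplies a data event of probability at least $1-\eta$ and a conditional coupling of $B$ with $Z\mid\mathcal D\sim\mathrm{N}_{\splq}(0,\opv)$. On this event, $\|B-Z\|_{\splq}\le\Delta_{B,\mathrm{feas}}+\Delta_{B,\mathrm{coup}}$ holds with conditional probability at least $1-\eta$. It therefore remains to bound the two deterministic quantities under the polynomial decay $\mu_k\le\alpha k^{-\beta}$, absorbing $\mathfrak C_1,\overline\sigma,\kappa,\|g_*\|$ into constants.

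\emph{Feasibility term.} By Lemma~\ref{lem:effective-dimension-bound-polynomial-decay}, $\pd\lesssim\lambda^{-1/\beta}$. Substituting this into Proposition~\ref{prop:main-feasible-bootstrap} gives
\[
\Delta_{B,\mathrm{feas}}\lesssim\frac{\tau_n\lambda^{-1/\beta}}{\sqrt{n\lambda}}\log^3(2/\eta)=\log^3(2/\eta)\frac{\tau_n}{\sqrt n}\lambda^{-1/2-1/\beta},
\]
which is the first term of $R_{\mathrm{B,poly}}$. The sample-size condition is exactly the one required by that proposition.

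\emph{Coupling term.} Proposition~\ref{prop:main-bahadur} gives $\bar U\lesssim\tau_n\lambda^{-1/2}$. By Lemma~\ref{lem:sigma-l2-trace-operator-norm-bound} and Lemma~\ref{lem:effective-dimension-bound-polynomial-decay}, $\tr_{\splq}(\opv)\lesssim\tau_n\pd\lesssim\tau_n\lambda^{-1/\beta}$. Hence
\[
\frac{\bar U^2\tr_{\splq}(\opv)}{n}\lesssim\frac{\tau_n^3\lambda^{-1-1/\beta}}{n}
\quad\text{and}\quad
\frac{\bar U^4}{n^2}\lesssim\frac{\tau_n^4\lambda^{-2}}{n^2}.
\]
Under the sample-size condition $n\gtrsim\tau_n/\lambda$, the second quantity is dominated by the first up to constants. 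This needs a short check, using $\pd\ge$ const and $\tau_n/(n\lambda)\lesssim1$.

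For the spectral tail, Lemmas~\ref{lem:tail-sum-sigma-bound} and~\ref{lem:sigma-bound-polynomial-decay} give $\sigma_{\splq}(m,\opv)\lesssim\sqrt{\tau_n/\lambda}\,m^{-(\beta-1)/2}$. Factoring out $\sqrt{\tau_n/\lambda}$, the coupling bound becomes
\[
\sqrt{\tfrac{\tau_n}{\lambda}}\Big[m^{1/4}\Big(\tfrac{\tau_n\lambda^{1-1/\beta}}{n}\Big)^{1/4}+m^{-(\beta-1)/2}\Big]\log^{3/2}(2/\eta).
\]
Here I use $(\tau_n^3\lambda^{-1-1/\beta}/n)^{1/4}=\sqrt{\tau_n/\lambda}\,(\tau_n\lambda^{1-1/\beta}/n)^{1/4}$. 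To balance the two terms, set $x:=\tau_n\lambda^{1-1/\beta}/n$ and choose
\[
m\asymp x^{-1/(2\beta-1)},
\]
since $1/4+(\beta-1)/2=(2\beta-1)/4$. Both terms then equal $x^{(\beta-1)/(2(2\beta-1))}$, which gives the second term of $R_{\mathrm{B,poly}}$. Because $x\to0$ under the sample-size condition, this $m$ tends to infinity; rounding $m$ up to an integer changes only constants.

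\emph{Main obstacle.} The delicate step is the bookkeeping in the coupling term. It has two parts: verifying that $\bar U^4/n^2$ is negligible against $\bar U^2\tr(\opv)/n$, and making sure the spectral-tail bound of Lemma~\ref{lem:tail-sum-sigma-bound}, which carries a $\lambda^{-1}$ factor, is matched against the trace bound, which carries only $\lambda^{-1/\beta}$. Only with both in place does the exponent reduce to the stated $(\beta-1)/(2(2\beta-1))$ in the variable $\tau_n\lambda^{1-1/\beta}/n$. If the negligibility check fails in some regime, I would keep $\bar U^4/n^2$ as an extra additive term; under the stated condition I expect it to be absorbed.

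The remaining steps are routine: combine the two bounds by the triangle inequality, and handle the union of failure events exactly as in Corollary~\ref{cor:main-approximate-bootstrap}.
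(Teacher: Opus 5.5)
Your proposal is correct and follows the paper's proof essentially step for step: you invoke Corollary~\ref{cor:main-approximate-bootstrap}, bound $\Delta_{B,\mathrm{feas}}$ using $\pd\lesssim\lambda^{-1/\beta}$, bound $\Delta_{B,\mathrm{coup}}$ using $\bar U\lesssim\tau_n\lambda^{-1/2}$ together with the trace and spectral-tail bounds, and choose $m=\lceil x^{-1/(2\beta-1)}\rceil$ with $x=\tau_n\lambda^{1-1/\beta}/n$. The only cosmetic difference is that the paper absorbs $\bar U^4/n^2$ directly through the second sample-size condition $n>\tau_n/(\lambda\pd)$, since $\tau_n^4/(n^2\lambda^2)=\{\tau_n^3/(n\lambda)\}\{\tau_n/(n\lambda)\}<\tau_n^3\pd/(n\lambda)$, so your ``short check'' reduces to that one line.
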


\begin{proof}[Proof of Proposition~\ref{prop:bootstrap-approximation-polynomial-boundedness}]
Apply Corollary~\ref{cor:main-approximate-bootstrap}. By
Proposition~\ref{prop:main-feasible-bootstrap} and
Lemma~\ref{lem:effective-dimension-bound-polynomial-decay},
\[
  \Delta_{B,\mathrm{feas}}
  \lesssim
  \left(\log\!\left(\frac2\eta\right)\right)^3
  \frac{\tau_n}{\sqrt n}\lambda^{-\frac12-\frac1\beta}.
\]
Also, Proposition~\ref{prop:main-bahadur} and
Lemma~\ref{lem:sigma-l2-trace-operator-norm-bound} imply
\[
  \frac{\bar U^2\tr_{\splq}(\opv)}{n}
  +\frac{\bar U^4}{n^2}
  \lesssim
  \frac{\tau_n^3\pd}{n\lambda}
  +\frac{\tau_n^4}{n^2\lambda^2}.
\]
The condition \(n>\tau_n/(\lambda\pd)\) ensures that the second term is
bounded by the first. Therefore, Proposition~\ref{prop:main-bootstrap-coupling}
and Lemmas~\ref{lem:tail-sum-sigma-bound},
\ref{lem:effective-dimension-bound-polynomial-decay}, and
\ref{lem:sigma-bound-polynomial-decay} give
\[
\begin{aligned}
  \Delta_{B,\mathrm{coup}}
  \lesssim
  \left(\log\!\left(\frac2\eta\right)\right)^{3/2}
  \inf_{m\geq1}\bigg\{&
  m^{1/4}
  \left(\frac{\tau_n^3\lambda^{-1/\beta}}{n\lambda}\right)^{1/4}
  +\sqrt{\frac{\tau_n}{\lambda}}\,
  m^{-\frac{\beta-1}{2}}
  \bigg\}.
\end{aligned}
\]
Choose
\[
  m=\left\lceil
  \left(\frac{n}{\tau_n\lambda^{1-1/\beta}}\right)^{1/(2\beta-1)}
  \right\rceil.
\]
Both terms inside the braces are then bounded by
\[
  \sqrt{\frac{\tau_n}{\lambda}}
  \left(
  \frac{\tau_n\lambda^{1-1/\beta}}{n}
  \right)^{\frac{\beta-1}{2(2\beta-1)}}.
\]
Combining the feasible and coupling bounds in
Corollary~\ref{cor:main-approximate-bootstrap}, with tolerances adjusted by
universal constants, proves the stated outer and conditional probability
claim.
\end{proof}

\begin{proposition}
  \label{prop:gaussian-approximation-exponential-boundedness}
Suppose \(\mu_k\leq\alpha\exp(-\beta k^\gamma)\) for some
\(\alpha,\beta,\gamma>0\). Under Assumptions~\ref{assump:error},
\ref{assump:source-condition}, and
\ref{assump:covariate-shift-estimation}, when \(\tau_n\geq\tau_0\), if
\(\tau_n\to\infty\), \(\lambda\to0\), and
\[
n>
\max\left\{
\frac{(\tau_n\kappa^2+1)\log(e+\pd)}{C_2\lambda},
\frac{\tau_n}{\lambda\pd}
\right\},
\]
where \(C_2\) is the constant in Lemma~\ref{lem:bound-bahadur-2} after
replacing its failure probability by the fixed fraction of \(\eta\) used in
the union bound, then with probability at least \(1-\eta\), there exist
i.i.d. Gaussian elements \(Z_i\sim\mathrm{N}_{\splq}(0,\opv)\) such that
\[
\begin{aligned}
&\left\|
\sqrt n(\hatf-\fpw)-\frac1{\sqrt n}\sum_{i=1}^n Z_i
\right\|_{\splq}\\
&\quad\lesssim
\frac{\tau_n}{\sqrt{n\lambda}}
\log^2\!\left(\frac2\eta\right)
\Biggl[
\log^{1/(2\gamma)}\!\left(\frac1\lambda\right)
\log\log\!\left(\frac1\lambda\right)
+
\log^{2/\gamma}\!\left(\frac n{\tau_n}\right)
\log\log\!\left(\frac n{\tau_n}\right)
\Biggr].
\end{aligned}
\]
\end{proposition}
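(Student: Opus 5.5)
This follows the template of Proposition~\ref{prop:gaussian-approximation-polynomial-boundedness}: combine the Bahadur bound of Proposition~\ref{prop:main-bahadur} with the coupling bound of Proposition~\ref{prop:main-gaussian-coupling}. Each is applied with half of the tolerance $\eta$, and a union bound and the triangle inequality join them. The one change is that the polynomial spectral inputs are replaced by their exponential-decay analogues, Lemmas~\ref{lem:effective-dimension-bound-exponential-decay} and~\ref{lem:sigma-bound-exponential-decay}. The final representation of the Gaussian element as $n^{-1/2}\sum_i Z_i$ with i.i.d.\ $Z_i\sim\mathrm N_{\splq}(0,\opv)$ is obtained exactly as before, by adjoining an independent Gaussian bridge.

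\textbf{Bahadur term.} Lemma~\ref{lem:effective-dimension-bound-exponential-decay} gives $\pd\lesssim\log^{1/\gamma}(1/\lambda)$ for small $\lambda$. Hence $\log(e+\pd)\lesssim\log\log(1/\lambda)$, and
\[
\Delta_{G,\mathrm{Bah}}\lesssim\frac{\tau_n}{\sqrt{n\lambda}}\log^{1/(2\gamma)}\!\Big(\frac1\lambda\Big)\,\Big\{\log\log\Big(\frac1\lambda\Big)\Big\}^{1/2}\log^2\!\Big(\frac2\eta\Big).
\]
This is dominated by the first bracketed term in the statement, since $\{\log\log\}^{1/2}\le\log\log$ for small $\lambda$.

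\textbf{Coupling term.} Lemmas~\ref{lem:tail-sum-sigma-bound} and~\ref{lem:sigma-bound-exponential-decay} give
\[
\sigma_{\splq}(m,\opv)\lesssim\sqrt{\tau_n/\lambda}\;m^{(1-\gamma)/2}e^{-\beta m^\gamma/2},
\]
and Proposition~\ref{prop:main-bahadur} gives $\bar U\lesssim\tau_n/\sqrt\lambda$. Then
\[
\Delta_{G,\mathrm{coup}}\lesssim\inf_{m}\Big\{\sqrt{\tau_n/\lambda}\,m^{(1-\gamma)/2}e^{-\beta m^\gamma/2}\log^{1/2}(6/\eta)+\frac{\tau_n}{\sqrt{n\lambda}}\,m^2\log(m^2/\eta)\Big\}.
\]
Choose $m=\lceil(c\log(n/\tau_n))^{1/\gamma}\rceil$ with $c\ge 2/\beta$. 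Then $e^{-\beta m^\gamma/2}\le(\tau_n/n)^{1/2}$, and the first term becomes $\lesssim\tau_n^{1/2}\cdot\tau_n^{1/2}/\sqrt{n\lambda}\cdot m^{(1-\gamma)/2}$, which is absorbed by the second. In the second term, $m^2\asymp\log^{2/\gamma}(n/\tau_n)$ and $\log(m^2/\eta)\lesssim\log\log(n/\tau_n)+\log(2/\eta)$. Bounding this logarithm by the product $\log\log(n/\tau_n)\cdot\log(2/\eta)$ gives the second bracketed term, with the $\log$-in-$\eta$ powers absorbed into $\log^2(2/\eta)$.

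\textbf{Main obstacle.} The delicate step is the bookkeeping in the coupling term. The factor $m^{(1-\gamma)/2}$ from the spectral tail can be large when $\gamma<1$. It has to be dominated by $m^2$, which holds since $(1-\gamma)/2<2$. We also need $n/\tau_n\to\infty$, which follows from the sample-size condition, so that $m\ge1$ and the iterated logarithms are positive. I would handle this by treating $\gamma\ge1$ and $\gamma<1$ jointly via $m^{(1-\gamma)/2}\le m^2$, and by checking that the choice of $c$ depends only on $\beta$.
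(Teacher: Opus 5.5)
Your proposal is correct and follows the paper's proof essentially step for step: the Bahadur remainder from Proposition~\ref{prop:main-bahadur} with the exponential effective-dimension bound, the coupling from Proposition~\ref{prop:main-gaussian-coupling} with the spectral tail from Lemmas~\ref{lem:tail-sum-sigma-bound} and~\ref{lem:sigma-bound-exponential-decay}, the choice $m\asymp\{\log(n/\tau_n)\}^{1/\gamma}$, absorption of $m^{(1-\gamma)/2}$ into $m^2$, and a union bound. The only differences are inessential: you take the constant $c\ge 2/\beta$ where the paper takes $1/\beta$, and you keep $\{\log\log(1/\lambda)\}^{1/2}$ before bounding it by $\log\log(1/\lambda)$.
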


\begin{proof}[Proof of Proposition~\ref{prop:gaussian-approximation-exponential-boundedness}]
Apply Propositions~\ref{prop:main-bahadur}
and~\ref{prop:main-gaussian-coupling} with a fixed fraction of \(\eta\).
Lemma~\ref{lem:effective-dimension-bound-exponential-decay} gives
\[
  \pd\lesssim
  \left(\log\!\left(\frac1\lambda\right)\right)^{1/\gamma},
\]
and hence
\[
  \Delta_{G,\mathrm{Bah}}
  \lesssim
  \left(\log\!\left(\frac2\eta\right)\right)^2
  \frac{\tau_n}{\sqrt{n\lambda}}
  \left(\log\!\left(\frac1\lambda\right)\right)^{1/(2\gamma)}
  \log\log\!\left(\frac1\lambda\right).
\]
Lemmas~\ref{lem:tail-sum-sigma-bound}
and~\ref{lem:sigma-bound-exponential-decay} imply
\[
  \sigma_{\splq}(m,\opv)
  \lesssim
  \sqrt{\frac{\tau_n}{\lambda}}\,
  m^{\frac12-\frac\gamma2}
  \exp\!\left(-\frac{\beta m^\gamma}{2}\right).
\]
Together with \(\bar U\lesssim\tau_n/\sqrt\lambda\), this yields
\[
\begin{aligned}
  \Delta_{G,\mathrm{coup}}
  \lesssim\inf_{m\geq1}\bigg\{&
  \sqrt{\frac{\tau_n}{\lambda}}\,
  m^{\frac12-\frac\gamma2}e^{-\beta m^\gamma/2}
  +\frac{\tau_n}{\sqrt{n\lambda}}m^2
  \log\!\left(\frac{m^2}{\eta}\right)
  \bigg\}\log\!\left(\frac2\eta\right).
\end{aligned}
\]
Take
\[
  m=\left\lceil
  \left\{\frac1\beta\log\!\left(\frac n{\tau_n}\right)\right\}^{1/\gamma}
  \right\rceil.
\]
The sample-size condition makes this choice well defined for all sufficiently
large \(n\). Since \(m^{1/2-\gamma/2}\leq1+m^2\), substitution gives
\[
  \Delta_{G,\mathrm{coup}}
  \lesssim
  \frac{\tau_n}{\sqrt{n\lambda}}
  \left(\log\!\left(\frac n{\tau_n}\right)\right)^{2/\gamma}
  \log\log\!\left(\frac n{\tau_n}\right)
  \log\!\left(\frac2\eta\right).
\]
The triangle inequality, a union bound, and the Gaussian-bridge
representation used in the polynomial case complete the proof.
\end{proof}

\begin{proposition}
  \label{prop:bootstrap-approximation-exponential-boundedness}
Suppose \(\mu_k\leq\alpha\exp(-\beta k^\gamma)\) for some
\(\alpha,\beta,\gamma>0\). Under Assumptions~\ref{assump:error},
\ref{assump:source-condition}, and
\ref{assump:covariate-shift-estimation}, together with
Definition~\ref{def:bernstein-moment-condition-for-weight}, when
\(\tau_n\geq\tau_0\), if \(\tau_n\to\infty\), \(\lambda\to0\), and
\[
n>
\max\left\{
\frac{(\tau_n\kappa^2+1)\log(e+\pd)}{C_2\lambda},
\frac{\tau_n}{\lambda\pd}
\right\},
\]
where \(C_2\) is the constant in Lemma~\ref{lem:bound-bahadur-2} evaluated at
the component tolerance \(\eta_{\mathrm{tol}}=c\eta^2\) used in
Corollary~\ref{cor:main-approximate-bootstrap}, then, on a data event of
probability at least \(1-\eta\), there exists a coupling of \(B\) and \(Z\),
conditionally on \(\mathcal D\), such that
\(Z\mid\mathcal D\sim\mathrm{N}_{\splq}(0,\opv)\) and
\[
\begin{aligned}
\mathbb P\Biggl\{
\|Z-B\|_{\splq}
\lesssim{}&
\left(\log\!\left(\frac2\eta\right)\right)^3
\frac{\tau_n}{\sqrt{n\lambda}}
\left(\log\!\left(\frac1\lambda\right)\right)^{1/\gamma}\\
&+
\left(\log\!\left(\frac2\eta\right)\right)^{3/2}
\left(\frac{\tau_n^3}{n\lambda}\right)^{1/4}
\left(\log\!\left(\frac1\lambda\right)\right)^{1/(4\gamma)}
\left(\log\!\left(\frac n{\tau_n\lambda}\right)\right)^{1/(4\gamma)}
\;\Bigm|\;\mathcal D
\Biggr\}
\geq1-\eta.
\end{aligned}
\]
\end{proposition}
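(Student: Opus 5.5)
The plan mirrors the proof of Proposition~\ref{prop:bootstrap-approximation-polynomial-boundedness}: apply Corollary~\ref{cor:main-approximate-bootstrap} at the component tolerance $\eta_{\mathrm{tol}}=c\eta^2$, and bound the two pieces $\Delta_{B,\mathrm{feas}}$ and $\Delta_{B,\mathrm{coup}}$ separately. The only change is that the polynomial spectral inputs are replaced by Lemmas~\ref{lem:effective-dimension-bound-exponential-decay} and~\ref{lem:sigma-bound-exponential-decay}. Throughout, $\log(2/\eta_{\mathrm{tol}})\lesssim\log(2/\eta)$, so all logarithmic factors may be written in $\eta$.

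\emph{Step 1 (feasibility).} Lemma~\ref{lem:effective-dimension-bound-exponential-decay} gives $\pd\lesssim\{\log(1/\lambda)\}^{1/\gamma}$ for small $\lambda$. Inserting this into Proposition~\ref{prop:main-feasible-bootstrap} yields
\[
\Delta_{B,\mathrm{feas}}\lesssim(\log(2/\eta))^3\,\frac{\tau_n}{\sqrt{n\lambda}}\{\log(1/\lambda)\}^{1/\gamma},
\]
which is the first term of the claim.

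\emph{Step 2 (first part of the coupling term).} Proposition~\ref{prop:main-bahadur} gives $\bar U\lesssim\tau_n\lambda^{-1/2}$. Lemma~\ref{lem:sigma-l2-trace-operator-norm-bound} gives $\tr_{\splq}(\opv)\lesssim\tau_n\pd$. Hence
\[
\frac{\bar U^2\tr_{\splq}(\opv)}{n}+\frac{\bar U^4}{n^2}\lesssim\frac{\tau_n^3\pd}{n\lambda}+\frac{\tau_n^4}{n^2\lambda^2}.
\]
The sample-size condition $n>\tau_n/(\lambda\pd)$ makes the second term at most the first. So the first part of $\Delta_{B,\mathrm{coup}}$ is
\[
m^{1/4}\Big(\frac{\tau_n^3}{n\lambda}\Big)^{1/4}\{\log(1/\lambda)\}^{1/(4\gamma)}.
\]

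\emph{Step 3 (tail part of the coupling term).} Lemmas~\ref{lem:tail-sum-sigma-bound} and~\ref{lem:sigma-bound-exponential-decay} give
\[
\sigma_{\splq}(m,\opv)\lesssim\sqrt{\tau_n/\lambda}\,m^{1/2-\gamma/2}e^{-\beta m^\gamma/2}.
\]
Choose $m=\lceil\{\beta^{-1}\log(n/(\tau_n\lambda))\}^{1/\gamma}\rceil$, so that $e^{-\beta m^\gamma/2}\le(\tau_n\lambda/n)^{1/2}$. The tail term then becomes at most $(\tau_n^2/n)^{1/2}m^{1/2-\gamma/2}$.

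I expect the main obstacle to be showing that this tail term is dominated by the displayed rate, since $m^{1/2-\gamma/2}$ can grow when $\gamma<1$. I would handle it in two cases.
\begin{itemize}
\item[(i)] If $\gamma\ge1$, the factor $m^{1/2-\gamma/2}$ is bounded. Then $\sqrt{\tau_n^2/n}\le(\tau_n^3/(n\lambda))^{1/4}$ follows from $\tau_n\lambda\le n$, which is implied by the sample-size condition.
\item[(ii)] If $\gamma<1$, I would enlarge $m$ by a constant factor, or replace $\log(n/(\tau_n\lambda))$ by $2\log(n/(\tau_n\lambda))$. This makes the exponential beat the polynomial factor, at the cost of only a universal constant in front of the $m^{1/4}$ term.
\end{itemize}
With this choice, $m^{1/4}\asymp\{\log(n/(\tau_n\lambda))\}^{1/(4\gamma)}$. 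This produces exactly the second term of the claim, with the prefactor $(\log(2/\eta))^{3/2}$.

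\emph{Step 4 (conclusion).} Combining Steps 1--3 through Corollary~\ref{cor:main-approximate-bootstrap} gives the stated bound on a data event of probability at least $1-\eta$, with the conditional probability statement $\mathbb P\{\|Z-B\|_{\splq}\lesssim\cdots\mid\mathcal D\}\ge1-\eta$. The failure probabilities of the component events are adjusted by universal constants when taking the union bound, which only changes constants inside the logarithms.
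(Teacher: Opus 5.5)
Your proposal is correct and follows the paper's proof essentially step for step. Both use Corollary~\ref{cor:main-approximate-bootstrap}, the exponential effective-dimension bound in $\Delta_{B,\mathrm{feas}}$, the reduction of $\bar U^2\tr_{\splq}(\opv)/n+\bar U^4/n^2$ to $\tau_n^3\pd/(n\lambda)$ via $n>\tau_n/(\lambda\pd)$, and $m\asymp\{C\log(n/(\tau_n\lambda))\}^{1/\gamma}$ with $C>1/(2\beta)$; your $C=1/\beta$ qualifies.

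Your case split on $\gamma$ is unnecessary. With your choice, the tail term is at most $(\tau_n^3/(n\lambda))^{1/4}(\tau_n\lambda/n)^{1/4}m^{(1-\gamma)/2}$, and the polynomial factor $(\tau_n\lambda/n)^{1/4}$ already absorbs the polylogarithmic $m^{(1-\gamma)/2}$ for every $\gamma>0$. This is exactly the paper's absorption argument.
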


\begin{proof}[Proof of Proposition~\ref{prop:bootstrap-approximation-exponential-boundedness}]
Apply Corollary~\ref{cor:main-approximate-bootstrap}. Proposition
\ref{prop:main-feasible-bootstrap} and
Lemma~\ref{lem:effective-dimension-bound-exponential-decay} give
\[
  \Delta_{B,\mathrm{feas}}
  \lesssim
  \left(\log\!\left(\frac2\eta\right)\right)^3
  \frac{\tau_n}{\sqrt{n\lambda}}
  \left(\log\!\left(\frac1\lambda\right)\right)^{1/\gamma}.
\]
As in the polynomial case, the sample-size condition and
Lemma~\ref{lem:sigma-l2-trace-operator-norm-bound} imply
\[
  \frac{\bar U^2\tr_{\splq}(\opv)}{n}
  +\frac{\bar U^4}{n^2}
  \lesssim
  \frac{\tau_n^3\pd}{n\lambda}.
\]
Thus Proposition~\ref{prop:main-bootstrap-coupling} and
Lemmas~\ref{lem:tail-sum-sigma-bound},
\ref{lem:effective-dimension-bound-exponential-decay}, and
\ref{lem:sigma-bound-exponential-decay} yield
\[
\begin{aligned}
  \Delta_{B,\mathrm{coup}}
  \lesssim
  \left(\log\!\left(\frac2\eta\right)\right)^{3/2}
  \inf_{m\geq1}\bigg\{&
  m^{1/4}
  \left(\frac{\tau_n^3}{n\lambda}\right)^{1/4}
  \left(\log\!\left(\frac1\lambda\right)\right)^{1/(4\gamma)}\\
  &+\sqrt{\frac{\tau_n}{\lambda}}\,
  m^{\frac12-\frac\gamma2}e^{-\beta m^\gamma/2}
  \bigg\}.
\end{aligned}
\]
For a sufficiently large fixed constant \(C>1/(2\beta)\), take
\[
  m=\left\lceil
  \left\{C\log\!\left(\frac n{\tau_n\lambda}\right)\right\}^{1/\gamma}
  \right\rceil.
\]
The first term in braces is bounded by
\[
  \left(\frac{\tau_n^3}{n\lambda}\right)^{1/4}
  \left(\log\!\left(\frac1\lambda\right)\right)^{1/(4\gamma)}
  \left(\log\!\left(\frac n{\tau_n\lambda}\right)\right)^{1/(4\gamma)}.
\]
Because \(C\beta/2>1/4\), the exponential factor in the spectral-tail
term absorbs its polynomial factor, so that term is bounded by the same
display for all sufficiently large \(n\). Combining the coupling and
feasible-bootstrap bounds proves the result.
\end{proof}

\begin{proof}[Proof of Corollary~\ref{cor:bd-density}]
Let \(\bar\rho<\infty\) be an almost-sure upper bound for the density
ratio and choose \(C_0\geq\max\{\bar\rho,\tau_0\}\).  Then
\(w_n=\rho\), \(\oplpw=\oplq\), and
Assumption~\ref{assump:covariate-shift-estimation} holds with
\(\mathfrak C_1=1\).  A bounded density ratio also satisfies
Definition~\ref{def:bernstein-moment-condition-for-weight}: one may take
\(\gamma=1\) and
\(W=\sigma^2=\max\{\bar\rho,1\}\).

We next justify the use of the spectral approximation bounds with a fixed
envelope.  Under either two-sided spectral condition in the corollary,
\(\pd\to\infty\) as \(\lambda\to0\).  Since \(\tau_n=C_0\), the first,
fully nonasymptotic inequality in Lemma~\ref{lem:bound-bahadur-1} implies
the same simplified Bahadur bound used in
Proposition~\ref{prop:main-bahadur}: the linear term is absorbed by the square-root
term under the stated sample-size condition, while
\(\tau_n\kappa^2+1\lesssim C_0\) and
\(\log(e+\pd)\asymp\log\pd\) for all sufficiently large \(n\).
The remaining Gaussian and bootstrap arguments use only this bound, the
fixed truncated-weight envelope, and their displayed sample-size conditions.
Thus their nonasymptotic conclusions, and hence the specialized spectral
bounds, remain valid for \(\tau_n=C_0\).

First consider polynomial spectral decay.  At
\(\eta\asymp n^{-1}\), Proposition~\ref{prop:main-bias} and
Propositions~\ref{prop:gaussian-approximation-polynomial-boundedness}
and~\ref{prop:bootstrap-approximation-polynomial-boundedness} give
\begin{equation}
\label{eq:poly-total-approximation-envelope}
\Delta
\lesssim
\log^3 n\,
\bigl(\Delta_0+\Delta_1+\Delta_2+\Delta_3\bigr),
\end{equation}
where
\[
\begin{aligned}
\Delta_0&=\sqrt n\,\lambda,\\
\Delta_1&=
n^{-\frac{\beta-1}{2(\beta+3)}}
\lambda^{-\frac12}
\tau_n^{\frac{\beta+1}{\beta+3}},\\
\Delta_2&=
n^{-\frac12}
\lambda^{-\frac12-\frac1\beta}\tau_n,\\
\Delta_3&=
n^{-\frac{\beta-1}{2(2\beta-1)}}
\lambda^{-\frac{\beta^2+\beta-1}{2\beta(2\beta-1)}}
\tau_n^{\frac{3\beta-2}{2(2\beta-1)}}.
\end{aligned}
\]
Here the Gaussian Bahadur term
\(\tau_n n^{-1/2}\lambda^{-1/2-1/(2\beta)}\) is absorbed by
\(\Delta_2\), since their ratio is
\(\lambda^{1/(2\beta)}\to0\).  The factor \(\log^3n\) in
\eqref{eq:poly-total-approximation-envelope} is obtained directly from the
component bounds: the Gaussian Bahadur and coupling terms contribute at most
\(\log^3n\) and \(\log^2n\), respectively, whereas the feasible-bootstrap and
bootstrap-coupling terms contribute at most \(\log^3n\) and
\(\log^{3/2}n\), respectively.

Write \(\lambda=n^{-a}\).  Since \(\tau_n=C_0\), the four powers of \(n\) in
\eqref{eq:poly-total-approximation-envelope} are
\[
 \delta_0=\frac12-a,
 \quad
 \delta_1=-\frac{\beta-1}{2(\beta+3)}+\frac a2,
 \quad
 \delta_2=-\frac12+a\left(\frac12+\frac1\beta\right),
 \quad
 \delta_3=-\frac{\beta-1}{2(2\beta-1)}
 +a\frac{\beta^2+\beta-1}{2\beta(2\beta-1)}.
\]
The exponent \(\delta_0\) is decreasing in \(a\), whereas
\(\delta_1,\delta_2,\delta_3\) are increasing.  The balance
\(\delta_0=\delta_2\) gives \(a=2\beta/(3\beta+2)\), but at this value
\(\delta_1>\delta_0\) and \(\delta_3>\delta_0\), so it cannot be optimal.

Balancing \(\delta_0=\delta_1\) gives
\[
a_{01}=\frac{2(\beta+1)}{3(\beta+3)},
\qquad
A_{01}=\frac{-\beta+5}{6(\beta+3)}.
\]
At this value, \(\delta_2<A_{01}\), and
\[
\delta_1-\delta_3
=-\frac{(\beta-1)(\beta^2-12\beta+2)}
{6\beta(2\beta-1)(\beta+3)}.
\]
Balancing \(\delta_0=\delta_3\) gives
\[
a_{03}=\frac{\beta(3\beta-2)}{5\beta^2-\beta-1},
\qquad
A_{03}=\frac{-\beta^2+3\beta-1}{2(5\beta^2-\beta-1)}.
\]
At this value, \(\delta_2<A_{03}\), and
\[
\delta_1-\delta_0
=
\frac{(\beta-1)(-\beta^2+12\beta-2)}
{2(\beta+3)(5\beta^2-\beta-1)}.
\]
Consequently, with \(\beta_0=6+\sqrt{34}\), the sharp optimizer is
\[
 (a_\beta^\sharp,r_\beta^\sharp)
 =
 \begin{cases}
 \left(
 \dfrac{2(\beta+1)}{3(\beta+3)},
 \dfrac{\beta-5}{6(\beta+3)}
 \right),&5<\beta\leq\beta_0,\\[3mm]
 \left(
 \dfrac{\beta(3\beta-2)}{5\beta^2-\beta-1},
 \dfrac{\beta^2-3\beta+1}{2(5\beta^2-\beta-1)}
 \right),&\beta>\beta_0.
 \end{cases}
\]
The switch occurs when \(\beta^2-12\beta+2=0\).  Thus the sharper,
piecewise choice \(a=a_\beta^\sharp\) yields
\[
\Delta\lesssim n^{-r_\beta^\sharp}\log^3 n
\]
and, by Theorem~\ref{thm:main-coverage},
\[
\left|
\PP\{f_*\notin\cs_\alpha(\what f)\}-\alpha
\right|
\lesssim
n^{-1}+n^{-r_\beta^\sharp}\log^3n.
\]

For the non-piecewise statement in the corollary, instead set
\[
 \widetilde a_\beta
 =\frac{2(\beta+1)}{3(\beta+3)},
 \qquad
 \widetilde r_\beta
 =\frac{(\beta-5)(\beta+10)}{6(\beta+3)(2\beta+5)}.
\]
At \(a=\widetilde a_\beta\), the four exponents simplify to
\[
 \delta_0=\delta_1
 =-\frac{\beta-5}{6(\beta+3)},
 \qquad
 \delta_2
 =-\frac{(\beta-1)(\beta+4)}{6\beta(\beta+3)},
\]
and
\[
 \delta_3
 =-\frac{(\beta-2)(\beta^2+4\beta-1)}
 {6\beta(\beta+3)(2\beta-1)}.
\]
Writing \(x=\beta-5>0\), direct calculation gives
\[
 -\delta_0-\widetilde r_\beta
 =-\delta_1-\widetilde r_\beta
 =\frac{(\beta-5)^2}{6(\beta+3)(2\beta+5)}\geq0,
\]
\[
 -\delta_2-\widetilde r_\beta
 =\frac{x^3+21x^2+192x+540}
 {6\beta(\beta+3)(2\beta+5)}>0,
\]
and
\[
 -\delta_3-\widetilde r_\beta
 =\frac{97x^2+879x+1980}
 {6\beta(\beta+3)(2\beta-1)(2\beta+5)}>0.
\]
Consequently, \(\delta_j\leq-\widetilde r_\beta\) for
\(j=0,1,2,3\), and hence
\[
 \Delta
 \lesssim
 n^{-\widetilde r_\beta}\log^3 n.
\]
Because \(\sqrt{\tau_n}=O(1)\), Theorem~\ref{thm:main-coverage} with
\(\eta\asymp n^{-1}\) proves part~(1).

Under exponential spectral decay, Proposition~\ref{prop:main-bias} and
Propositions~\ref{prop:gaussian-approximation-exponential-boundedness}
and~\ref{prop:bootstrap-approximation-exponential-boundedness} give, after
suppressing only their displayed logarithmic and tolerance factors, the
polynomial-order envelope
\begin{equation}
\label{eq:exp-total-approximation-envelope}
\Delta
\lesssim
\sqrt n\,\lambda
+\frac{\tau_n}{\sqrt{n\lambda}}
+\left(\frac{\tau_n^3}{n\lambda}\right)^{1/4}.
\end{equation}
Since \(\tau_n=C_0\), its three polynomial-order contributions are
\[
 \sqrt n\lambda,
 \qquad
 \frac{1}{\sqrt{n\lambda}},
 \qquad
 \left(\frac{1}{n\lambda}\right)^{1/4}.
\]
The active balance is between the first and third terms, giving
\(\lambda\asymp n^{-3/5}\) and the power \(n^{-1/10}\).  At
\(\eta\asymp n^{-1}\), the logarithmic multiplier of the active bootstrap
coupling term in
Proposition~\ref{prop:bootstrap-approximation-exponential-boundedness} is
\[
 (\log n)^{3/2}
 (\log(1/\lambda))^{1/(4\zeta)}
 (\log(n/\lambda))^{1/(4\zeta)}
 \lesssim
 (\log n)^{\frac32+\frac{1}{2\zeta}}.
\]
The bias term has the same power without this logarithmic multiplier, and all
remaining stochastic terms have a strictly smaller power of \(n\).  A final
application of Theorem~\ref{thm:main-coverage} proves part~(2).
\end{proof}

\subsection{Proof of Corollary~\ref{cor:subexp-density}}

\begin{lemma}[\textbf{Cordes inequality; see }\protect\cite{Cordes_1987}]
  \label{lem:cordes-inequality}
Let \(A\) and \(B\) be bounded, self-adjoint, positive semidefinite operators
on a Hilbert space \(\mathsf G\). Then, for every \(s\in[0,1]\),
\[
\|A^sB^s\|_{\mathsf G}\leq\|AB\|_{\mathsf G}^s.
\]
The positive-semidefinite case follows by applying the strictly positive
version to \(A+\varepsilon I\) and \(B+\varepsilon I\) and letting
\(\varepsilon\downarrow0\).
\end{lemma}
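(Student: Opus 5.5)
The plan is to prove the inequality first for strictly positive $A$ and $B$, that is, $A,B\succeq cI$ for some $c>0$. The positive-semidefinite case then follows by the $\varepsilon$-regularization indicated in the statement. Consider the set
\[
  S:=\bigl\{s\in[0,1]:\ \|A^sB^s\|_{\mathsf G}\leq\|AB\|_{\mathsf G}^s\bigr\}.
\]
Trivially $0\in S$, since $A^0=B^0=I$, and $1\in S$. I will show that $S$ is closed under midpoints and is closed as a subset of $[0,1]$. Then $S$ contains all dyadic rationals in $[0,1]$, and by closedness $S=[0,1]$.

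\emph{Midpoint step.} This is the key algebraic step. Let $s,t\in S$ and $u=(s+t)/2$. Since $B^uA^{2u}B^u$ is self-adjoint and positive, its norm equals its spectral radius $r(\cdot)$. This gives
\[
  \|A^uB^u\|^2=\|B^uA^{2u}B^u\|=r\bigl(B^uA^{s+t}B^u\bigr).
\]
I then use the identity $r(XY)=r(YX)$ for bounded operators. The nonzero spectra of $XY$ and $YX$ coincide, and if one of the spectra is $\{0\}$ then both radii vanish. Applying this with $X=B^{u}$ and $Y=A^{s+t}B^{u}$, and then cyclically moving $B^{t-u}$, I get
\[
  r\bigl(B^uA^{s+t}B^u\bigr)=r\bigl(B^{s}A^{s}A^{t}B^{t}\bigr).
\]
Using $r(\cdot)\le\|\cdot\|$ and submultiplicativity, this is at most
\[
  \|B^sA^s\|\,\|A^tB^t\|=\|A^sB^s\|\,\|A^tB^t\|\leq\|AB\|^{s+t}.
\]
Hence $\|A^uB^u\|\le\|AB\|^u$, so $u\in S$.

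\emph{Closedness.} For strictly positive $A$, the map $s\mapsto A^s$ is norm-continuous on $[0,1]$. By the functional calculus, $\|A^s-A^{s'}\|\le\sup_{x\in[c,\|A\|]}|x^s-x^{s'}|$, and the right-hand side tends to $0$ as $s'\to s$, because $x^s$ is jointly continuous on $[c,\|A\|]\times[0,1]$. The same holds for $B$. Therefore both $s\mapsto\|A^sB^s\|$ and $s\mapsto\|AB\|^s$ are continuous, and $S$ is closed. (If $\|AB\|=0$ then $A=0$ or $B=0$ cannot happen for strictly positive operators, so no issue arises with $0^s$.)

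\emph{Passage to the semidefinite case.} Apply the strictly positive result to $A_\varepsilon=A+\varepsilon I$ and $B_\varepsilon=B+\varepsilon I$. As $\varepsilon\downarrow0$, $\|A_\varepsilon B_\varepsilon\|\to\|AB\|$. For $s\in(0,1]$, $\|A_\varepsilon^s-A^s\|\le\sup_{x\in[0,\|A\|]}|(x+\varepsilon)^s-x^s|\le\varepsilon^s\to0$, and the same holds for $B$. Thus $\|A_\varepsilon^sB_\varepsilon^s\|\to\|A^sB^s\|$ and the inequality passes to the limit; the case $s=0$ is trivial.

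I expect the main obstacle to be the midpoint step, specifically the justification of $r(XY)=r(YX)$ in infinite dimensions and the correct cyclic regrouping of the fractional powers. The continuity arguments are routine once strict positivity is imposed, which is exactly why the regularization is done last.
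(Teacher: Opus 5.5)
Your proof is correct. The paper does not prove the strictly positive case at all: it cites Cordes and records only the $\varepsilon$-regularization. You carry out that regularization the same way, with the explicit bounds $\|(A+\varepsilon I)^s-A^s\|\le\varepsilon^s$ and $\|(A+\varepsilon I)(B+\varepsilon I)-AB\|\to0$.

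What you add is a self-contained proof of the cited core inequality by the standard log-convexity argument. The $C^*$-identity turns $\|A^uB^u\|^2$ into $r(B^uA^{s+t}B^u)$. The identity $r(XY)=r(YX)$ regroups this as $r(B^sA^s\cdot A^tB^t)$. Midpoint convexity on the dyadics, plus continuity in $s$, then finishes. You trade a citation for a short, checkable argument.

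Two minor points.
\begin{itemize}
\item Your verbal description of the regrouping is slightly off. After the first application of $r(XY)=r(YX)$ with $X=B^u$ and $Y=A^{s+t}B^u$, you have $r(A^sA^tB^tB^s)$, so the factor to move to the front is $B^s$, not $B^{t-u}$. Moving $B^{u-s}$ directly from $B^uA^{s+t}B^u$ also works, but it uses a negative power when $t<s$; you can avoid this by taking $s\le t$ without loss of generality.
\item Strict positivity is not actually needed. The midpoint step uses only nonnegative powers. Moreover, $s\mapsto A^s$ is norm-continuous on $(0,1]$ for any positive semidefinite $A$, by uniform continuity of $(x,s)\mapsto x^s$ on $[0,\|A\|]\times[\delta,1]$. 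So the dyadic argument already gives $(0,1]\subseteq S$ directly, and the regularization could be skipped.
\end{itemize}
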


\begin{lemma}[\protect\cite{gogolashvili2023importance}, Lemma~21]
  \label{lem:weight-mse-bound-under-infinite-moment}
Under Definition~\ref{def:bernstein-moment-condition-for-weight}, for every
integer \(m\geq2\) and \(\tau_n>0\),
\[
\mathbb E_{\mathbb Q}
\left(1-\frac{w(X)}{\rho(X)}\right)^2
\leq
\left\{
\frac12\tau_n^{-(m-1)}m!W^{m-2}\sigma^2
\right\}^{1/\gamma}.
\]
\end{lemma}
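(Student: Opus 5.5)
The plan is to reduce the left-hand side to a tail probability of the density ratio under $\QQ$ and then apply Markov's inequality together with the Bernstein-type moment condition of Definition~\ref{def:bernstein-moment-condition-for-weight}. Write $w(x)=\min\{\rho(x),\tau_n\}$.

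The first step is a pointwise bound on the integrand. On $\{\rho(X)\le\tau_n\}$ we have $w(X)=\rho(X)$, so the integrand $1-w(X)/\rho(X)$ vanishes. On $\{\rho(X)>\tau_n\}$ we have $1-w(X)/\rho(X)=1-\tau_n/\rho(X)\in(0,1)$. Points with $\rho=0$ carry no $\QQ$-mass and can be ignored. Hence
\[
\Big(1-\frac{w(X)}{\rho(X)}\Big)^2\le \mathbf 1\{\rho(X)>\tau_n\}\quad \QQ\text{-a.s.},
\qquad\text{so}\qquad
\mathbb E_{\mathbb Q}\Big(1-\frac{w(X)}{\rho(X)}\Big)^2\le \QQ\big(\rho(X)>\tau_n\big).
\]

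The second step bounds this tail. By Markov's inequality followed by Jensen's (or Lyapunov's) inequality with exponent $\gamma\ge1$, and then Definition~\ref{def:bernstein-moment-condition-for-weight},
\[
\QQ(\rho>\tau_n)\le \frac{\mathbb E_{\mathbb Q}[\rho^{m-1}]}{\tau_n^{m-1}}
\le \frac{\big(\mathbb E_{\mathbb Q}[\rho^{\gamma(m-1)}]\big)^{1/\gamma}}{\tau_n^{m-1}}
\le \frac{m!}{2}W^{m-2}\sigma^2\,\tau_n^{-(m-1)}=:x .
\]
Note that $x$ is exactly the quantity in braces in the claimed bound.

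The last step upgrades the bound from $x$ to $x^{1/\gamma}$, and this is the only point needing care. If $x\le1$, then $\gamma\ge1$ gives $x\le x^{1/\gamma}$, so the bound just derived already implies the claim. If $x>1$, the left-hand side is at most $1$ because the integrand lies in $[0,1]$, and $1<x^{1/\gamma}$, so the claim holds trivially. Combining the two cases proves the lemma for every integer $m\ge2$ and every $\tau_n>0$. No earlier result of the paper is needed beyond the moment condition itself.
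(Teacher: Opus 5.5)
Your proof is correct, and it shares the paper's skeleton: bound the integrand by $\mathbf 1\{\rho(X)\ge\tau_n\}$, then apply Markov's inequality and a moment comparison. The difference is the Markov exponent.

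\begin{itemize}
\item The paper applies Markov with the fractional exponent $p_m=(m-1)/\gamma$. It then uses Lyapunov in the form $\mathbb E_{\mathbb Q}\rho^{p_m}\le\{\mathbb E_{\mathbb Q}\rho^{\gamma(m-1)}\}^{1/\gamma^2}$, which lands directly on $x^{1/\gamma}$ with no case split.
\item You use exponent $m-1$ together with Jensen. This gives the sharper bound $\min\{x,1\}$, which you then weaken to $x^{1/\gamma}$ via your two-case argument.
\end{itemize}

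The paper's route is marginally more direct. Yours shows that the $1/\gamma$ power is only an artifact of the choice of exponent. If your bound $\min\{x,1\}$ were carried through Lemma~\ref{lem:lq-lw-bound-under-infinite-moment} into Lemma~\ref{lem:bound-a-under-infinite-moment}, the truncation threshold there would improve from order $\lambda^{-(1+d)\gamma/(m_0-1)}$ to order $\lambda^{-(1+d)/(m_0-1)}$.

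You also explicitly dispose of the $\mathbb Q$-null set $\{\rho=0\}$, which the paper leaves implicit.
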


\begin{lemma}[\protect\cite{gogolashvili2023importance}, Lemma~22]
  \label{lem:lq-lw-bound-under-infinite-moment}
Under Definition~\ref{def:bernstein-moment-condition-for-weight}, for every
integer \(m\geq2\), \(\tau_n>0\), and \(\lambda>0\),
\[
\left\|
(\oplq-\oplpw)(\oplq+\lambda\opid)^{-1}
\right\|
\leq
\kappa\lambda^{-1/2}\pd^{1/2}
\left\{
\frac12\tau_n^{-(m-1)}m!W^{m-2}\sigma^2
\right\}^{1/(2\gamma)}.
\]
\end{lemma}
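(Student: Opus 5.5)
The plan is to bound the operator norm by duality, after rewriting both operators as expectations under $\QQ$. Since $w_n(x)=\min\{\rho(x),\tau_n\}$ vanishes wherever $\rho$ does, we have $w_n(x)\,d\PP(x)=\{w_n(x)/\rho(x)\}\,d\QQ(x)$. Here $w_n/\rho\in[0,1]$ is well defined $\QQ$-almost surely. Consequently
\[
\oplq-\oplpw=\mathbb E_{\QQ}\Bigl[\Bigl(1-\tfrac{w_n(X)}{\rho(X)}\Bigr)\,k_X\otimes k_X\Bigr],
\]
with the expectation understood as a Bochner integral of bounded operators on $\mathsf H$. The operator norm will be computed as $\sup\langle g,(\oplq-\oplpw)(\oplq+\lambda\opid)^{-1}f\rangle$ over $f,g\in\mathsf H$ with $\norm{f}=\norm{g}=1$.

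Fix such $f,g$, put $h:=(\oplq+\lambda\opid)^{-1}f$ and $\delta(X):=1-w_n(X)/\rho(X)$. Then
\[
\langle g,(\oplq-\oplpw)h\rangle=\mathbb E_{\QQ}\bigl[\delta(X)\,g(X)\,h(X)\bigr].
\]
We estimate the two function values separately. First, $|g(X)|\le\kappa\norm{g}=\kappa$ by the bounded-kernel property. Second, by the reproducing property,
\[
h(X)=\bigl\langle(\oplq+\lambda\opid)^{-1/2}f,\,(\oplq+\lambda\opid)^{-1/2}k_X\bigr\rangle,
\]
so that $|h(X)|\le\lambda^{-1/2}\,\norm{(\oplq+\lambda\opid)^{-1/2}k_X}$. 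Applying Cauchy--Schwarz in $\sfL^2(\QQ)$ then gives
\[
\bigl|\langle g,(\oplq-\oplpw)h\rangle\bigr|\le\kappa\lambda^{-1/2}\bigl(\mathbb E_{\QQ}\delta(X)^2\bigr)^{1/2}\Bigl(\mathbb E_{\QQ}\norm{(\oplq+\lambda\opid)^{-1/2}k_X}^2\Bigr)^{1/2}.
\]

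The last factor is the effective dimension exactly. Indeed,
\[
\mathbb E_{\QQ}\norm{(\oplq+\lambda\opid)^{-1/2}k_X}^2=\tr_{\mathsf H}\bigl((\oplq+\lambda\opid)^{-1/2}\,\mathbb E_{\QQ}[k_X\otimes k_X]\,(\oplq+\lambda\opid)^{-1/2}\bigr)=\pd .
\]
Exchanging trace and expectation is justified because $k_X\otimes k_X$ is trace class with trace at most $\kappa^2$.

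For the remaining factor we invoke Lemma~\ref{lem:weight-mse-bound-under-infinite-moment}, which gives $\mathbb E_{\QQ}\delta(X)^2\le\{\tfrac12\tau_n^{-(m-1)}m!W^{m-2}\sigma^2\}^{1/\gamma}$. Substituting both factors and taking the supremum over $f,g$ yields the stated bound.

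The main point needing care is the first step, the change of measure for $\oplpw$. We must ensure that no mass of $\PP$ on $\{\rho=0\}$ contributes, and this holds because $w_n=0$ there. The rest is routine: the key choice is to split $h(X)$ through $(\oplq+\lambda\opid)^{-1/2}$ rather than bounding $\norm{h}_{\splq}$ directly, since that split is what produces the $\pd^{1/2}$ factor in the statement.
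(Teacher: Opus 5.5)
Your proof is correct and follows the paper's route: rewrite $\oplq-\oplpw$ as $\EE_{\QQ}[(1-w_n(X)/\rho(X))\,k_X\otimes k_X]$, split off $(\EE_{\QQ}(1-w_n/\rho)^2)^{1/2}$ by Cauchy--Schwarz, identify the remaining factor with the effective dimension, and conclude with Lemma~\ref{lem:weight-mse-bound-under-infinite-moment}. The only difference is cosmetic. The paper bounds the operator norm by a Hilbert--Schmidt norm and obtains the $\lambda^{-1/2}$ from $\tr(A_\lambda^2\oplq)\le\lambda^{-1}\pd$, where $A_\lambda=(\oplq+\lambda\opid)^{-1}$. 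You instead work with the bilinear form and take the $\lambda^{-1/2}$ from the test function.
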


\begin{lemma}[\protect\cite{gogolashvili2023importance}, Lemma~23]
  \label{lem:bound-a-under-infinite-moment}
Under Assumption~\ref{assump:pseudo-dimn} and
Definition~\ref{def:bernstein-moment-condition-for-weight}, if, for some
integer \(m_0\geq2\) and \(0<\lambda\leq1\),
\[
\tau_n
\geq
\left(
2^{2\gamma-1}\kappa^{2\gamma}E_d^{2\gamma}
m_0!W^{m_0-2}\sigma^2
\right)^{1/(m_0-1)}
\lambda^{-\frac{(1+d)\gamma}{m_0-1}},
\]
then, as operators on \(\mathsf H\),
\begin{enumerate}
\item
\(\|(\oplq-\oplpw)(\oplq+\lambda\opid)^{-1}\|\leq1/2\);
\item
\(\|\oplq(\oplpw+\lambda\opid)^{-1}\|\leq2\);
\item
\(\|\oplq^{1/2}(\oplpw+\lambda\opid)^{-1/2}\|\leq\sqrt2\).
\end{enumerate}
\end{lemma}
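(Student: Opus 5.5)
The plan is to reduce all three claims to the single perturbation bound in item 1, obtained from Lemma~\ref{lem:lq-lw-bound-under-infinite-moment}. Items 2 and 3 then follow by a Neumann-series argument and the Cordes inequality (Lemma~\ref{lem:cordes-inequality}).

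\emph{Item 1.} Apply Lemma~\ref{lem:lq-lw-bound-under-infinite-moment} with $m=m_0$. Since $0<\lambda\le1$, Assumption~\ref{assump:pseudo-dimn} gives $\pd\le E_d^2\lambda^{-d}$. Hence
\[
\bigl\|(\oplq-\oplpw)(\oplq+\lambda\opid)^{-1}\bigr\|
\le \kappa E_d\,\lambda^{-(1+d)/2}
\Bigl\{\tfrac12\tau_n^{-(m_0-1)}m_0!W^{m_0-2}\sigma^2\Bigr\}^{1/(2\gamma)} .
\]
Requiring the right-hand side to be at most $1/2$ and raising both sides to the power $2\gamma$ gives the condition
\[
\tau_n^{m_0-1}\ge 2^{2\gamma}\kappa^{2\gamma}E_d^{2\gamma}\lambda^{-(1+d)\gamma}\cdot\tfrac12 m_0!W^{m_0-2}\sigma^2 .
\]
After taking the $(m_0-1)$-th root, this is exactly the assumed lower bound on $\tau_n$. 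The only thing to check here is the exponent bookkeeping.

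\emph{Item 2.} Set $D:=(\oplq-\oplpw)(\oplq+\lambda\opid)^{-1}$. We have the factorization
\[
\oplpw+\lambda\opid=(\oplq+\lambda\opid)-(\oplq-\oplpw)=(\opid-D)(\oplq+\lambda\opid).
\]
By item 1, $\|D\|\le1/2$, so $\opid-D$ is invertible by a Neumann series with $\|(\opid-D)^{-1}\|\le2$. This yields
\[
\oplq(\oplpw+\lambda\opid)^{-1}=\oplq(\oplq+\lambda\opid)^{-1}(\opid-D)^{-1}.
\]
The spectral calculus gives $\|\oplq(\oplq+\lambda\opid)^{-1}\|\le1$, so the norm of the left-hand side is at most $2$.

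\emph{Item 3.} Apply Lemma~\ref{lem:cordes-inequality} with $A=\oplq$, $B=(\oplpw+\lambda\opid)^{-1}$ and $s=1/2$. Both operators are bounded, self-adjoint and positive on $\mathsf H$. This gives
\[
\bigl\|\oplq^{1/2}(\oplpw+\lambda\opid)^{-1/2}\bigr\|\le\bigl\|\oplq(\oplpw+\lambda\opid)^{-1}\bigr\|^{1/2}\le\sqrt2 .
\]

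None of these steps is a real obstacle. The most delicate point is matching the constants in item 1 exactly, namely the $\tfrac12$ inside the braces combining with $2^{2\gamma}$ to give $2^{2\gamma-1}$. A second point to confirm is that $\lambda\le1$ is what allows $\pd$ to be replaced by $E_d^2\lambda^{-d}$.
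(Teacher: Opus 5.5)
Your proposal is correct and follows the paper's proof essentially step for step. You get item 1 from Lemma~\ref{lem:lq-lw-bound-under-infinite-moment} with $m=m_0$ and $\pd\le E_d^2\lambda^{-d}$, item 2 from the factorization $\oplpw+\lambda\opid=(\opid-D)(\oplq+\lambda\opid)$ with a Neumann-series bound, and item 3 from the Cordes inequality with $s=1/2$; your exponent bookkeeping, where $\tfrac12\cdot2^{2\gamma}=2^{2\gamma-1}$ reproduces the stated threshold, checks out.
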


\begin{proof}[Proof of Corollary~\ref{cor:subexp-density}]
Let \(\tau_n=n^\delta\) for a fixed \(\delta>0\), and consider a tuning
parameter \(\lambda=n^{-a}\).  The polynomial spectral condition verifies
Assumption~\ref{assump:pseudo-dimn} with \(d=1/\beta\), while the exponential
condition does so with any fixed \(d\in(0,1]\).  Choose a fixed integer
\(m_0\geq2\) such
that
\[
 m_0-1>\frac{a(1+d)\gamma}{\delta}.
\]
It follows that
\[
 \lambda^{-\frac{(1+d)\gamma}{m_0-1}}
 =n^{\frac{a(1+d)\gamma}{m_0-1}}
 =o(n^\delta).
\]
Consequently, for all sufficiently large \(n\),
Lemma~\ref{lem:bound-a-under-infinite-moment} verifies
Assumption~\ref{assump:covariate-shift-estimation} with
\(\mathfrak C_1\leq\sqrt2\).  We may also write
\(\tau_n=\lambda^{-\varrho}\), where \(\varrho=\delta/a\).  Because the
Bernstein-type condition holds for every integer \(m_0\geq2\), this argument
is available for every fixed \(\delta>0\).  Taking \(\delta\) sufficiently
small also ensures all sample-size restrictions in the relevant spectral
approximation propositions, including \(a+\delta<1\).

For polynomial spectral decay, take
\[
 a=\widetilde a_\beta
 =\frac{2(\beta+1)}{3(\beta+3)},
 \qquad
 \widetilde r_\beta
 =\frac{(\beta-5)(\beta+10)}{6(\beta+3)(2\beta+5)}.
\]
  Restoring \(\tau_n=n^\delta\) in the four terms of
  \eqref{eq:poly-total-approximation-envelope}, their powers of \(n\) are
\[
 \delta_0,
 \qquad
 \delta_1+\delta\frac{\beta+1}{\beta+3},
 \qquad
 \delta_2+\delta,
 \qquad
 \delta_3+\delta\frac{3\beta-2}{2(2\beta-1)},
\]
where the preceding proof shows that
\(\delta_j\leq-\widetilde r_\beta\) for \(j=0,1,2,3\).  Both fractional
truncation coefficients in the display are smaller than one.  Hence
\[
 \Delta\lesssim
 n^{-\widetilde r_\beta+\delta}\log^3 n.
\]
The factor \(n^\delta\) is a uniform upper bound for the truncation loss of every
term; the active terms have a strictly smaller truncation power.  Thus any
remaining fixed logarithmic factors are absorbed by this polynomial slack.
Theorem~\ref{thm:main-coverage} now yields
\[
 \big|\PP\{f_*\notin\cs_\alpha(\what f)\}-\alpha\big|
 \lesssim
 n^{-1}+n^{-\widetilde r_\beta+\frac32\delta}\log^3 n.
\]
For part~(1), choose
\[
 \delta
 =\frac{\beta-5}{9(\beta+3)(2\beta+5)},
 \qquad
 \tau_n
 =n^{\frac{\beta-5}{9(\beta+3)(2\beta+5)}}.
\]
Then
\[
 \widetilde r_\beta-\frac32\delta
 =\frac{(\beta-5)(\beta+9)}{6(\beta+3)(2\beta+5)},
\]
and hence the preceding display gives the bound in part~(1).

  For exponential spectral decay, take \(a=3/5\).  Substituting
  \(\lambda=n^{-3/5}\) and \(\tau_n=n^\delta\) into
  \eqref{eq:exp-total-approximation-envelope}, and then multiplying the
  approximation error by the anti-concentration factor
  \(\sqrt{\tau_n}\), gives the conservative bound
\[
 \big|\PP\{f_*\notin\cs_\alpha(\what f)\}-\alpha\big|
 \lesssim
 n^{-1}+n^{-\frac1{10}+\frac32\delta}\log^3 n.
\]
Indeed, after multiplication by \(\sqrt{\tau_n}\), the active coupling term
has truncation power \(5\delta/4\), leaving a polynomial margin of \(\delta/4\) that
absorbs the fixed \(\zeta\)-dependent logarithmic factors.  Substituting
\[
 \delta=\frac1{150},
 \qquad
 \tau_n=n^{1/150},
\]
and using
\[
 \frac1{10}-\frac32\delta=\frac9{100}
\]
proves part~(2).  The two selected values of \(\delta\) satisfy the sufficient
positivity conditions
\[
 0<\delta<\frac{(\beta-5)(\beta+10)}
 {9(\beta+3)(2\beta+5)}
 \qquad\text{and}\qquad
 0<\delta<\frac1{15},
\]
respectively.  These upper bounds are also smaller than \(1-a\) for the
corresponding choices of \(a\), and hence imply \(a+\delta<1\).
\end{proof}

\subsection{Proof of Corollary~\ref{cor:moment-density-polynomial}}

\begin{definition}[Bounded \(\vartheta\)-th moment condition]
  \label{def:vartheta-moment-condition-for-weight}
There exist \(\vartheta>2\) and \(\Omega_\vartheta<\infty\) such that
\[
\mathbb E_{\mathbb P}\!\left[\rho(X)^\vartheta\right]
=
\mathbb E_{\mathbb Q}\!\left[\rho(X)^{\vartheta-1}\right]
\leq\Omega_\vartheta.
\]
\end{definition}

\begin{lemma}
  \label{lem:lq-lw-bound-under-vartheta-moment}
Under Definition~\ref{def:vartheta-moment-condition-for-weight}, for
\(\tau_n>0\) and \(\lambda>0\),
\[
\left\|
(\oplq-\oplpw)(\oplq+\lambda\opid)^{-1}
\right\|
\leq
\kappa\lambda^{-1/2}\pd^{1/2}
\tau_n^{-(\vartheta-1)/2}\Omega_\vartheta^{1/2}.
\]
\end{lemma}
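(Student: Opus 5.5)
We will mirror the argument of Lemma~\ref{lem:lq-lw-bound-under-infinite-moment}, replacing the Bernstein-type control of the truncation loss with a Markov-type bound coming from the $\vartheta$-th moment.

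\emph{Step 1: write the difference as an expectation.} Since $\oplq=\oplpw$ whenever $\rho\le\tau_n$, we have
\[
  \oplq-\oplpw=\mathbb E_{\mathbb P}\bigl[(\rho(X)-w(X))\,k_X\otimes k_X\bigr]
  =\mathbb E_{\mathbb Q}\bigl[(1-w(X)/\rho(X))\,k_X\otimes k_X\bigr].
\]

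\emph{Step 2: bound the operator norm by duality.} For unit vectors $f,g\in\mathsf H$, set $h=(\oplq+\lambda\opid)^{-1}g$. Then
\[
  \langle f,(\oplq-\oplpw)h\rangle
  =\mathbb E_{\mathbb Q}\bigl[(1-w/\rho)\,f(X)\,h(X)\bigr].
\]
Apply Cauchy--Schwarz, putting $(1-w/\rho)$ together with $f(X)$, whose absolute value is at most $\kappa$, on one side, and $h(X)$ on the other. This gives
\[
  \bigl|\langle f,(\oplq-\oplpw)h\rangle\bigr|
  \le\kappa\,\bigl\{\mathbb E_{\mathbb Q}(1-w/\rho)^2\bigr\}^{1/2}\,\|h\|_{\splq}.
\]

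\emph{Step 3: bound $\|h\|_{\splq}$ via the effective dimension.} By Lemma~\ref{lem:LQ_half_unitary_isomorphism},
\[
  \|h\|_{\splq}=\bigl\|\oplq^{1/2}(\oplq+\lambda\opid)^{-1}g\bigr\|\le\tfrac12\lambda^{-1/2},
\]
which is already at most $\lambda^{-1/2}\pd^{1/2}$ once $\pd\ge 1/4$. If the argument of Lemma~\ref{lem:lq-lw-bound-under-infinite-moment} instead goes through the Hilbert--Schmidt route (using $\sup_x\|(\oplq+\lambda\opid)^{-1/2}k_x\|$ together with $\pd$), we follow that route so the constant matches exactly. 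Either way, the factor $\kappa\lambda^{-1/2}\pd^{1/2}$ is obtained up to this harmless comparison. Taking the supremum over $f,g$ bounds the operator norm by $\kappa\lambda^{-1/2}\pd^{1/2}\{\mathbb E_{\mathbb Q}(1-w/\rho)^2\}^{1/2}$.

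\emph{Step 4: the key truncation estimate.} It remains to show $\mathbb E_{\mathbb Q}(1-w/\rho)^2\le\tau_n^{-(\vartheta-1)}\Omega_\vartheta$. Note that $0\le 1-w/\rho\le\mathbf 1\{\rho>\tau_n\}$. Hence
\[
  \mathbb E_{\mathbb Q}(1-w/\rho)^2\le\mathbb Q(\rho>\tau_n)
  \le\tau_n^{-(\vartheta-1)}\,\mathbb E_{\mathbb Q}\bigl[\rho^{\vartheta-1}\bigr]
  \le\tau_n^{-(\vartheta-1)}\Omega_\vartheta,
\]
where the second inequality is Markov's inequality and the last uses Definition~\ref{def:vartheta-moment-condition-for-weight}. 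Taking square roots and combining with Step 3 yields the claimed bound.

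The main obstacle is Step 3: matching exactly the constant $\kappa\lambda^{-1/2}\pd^{1/2}$ rather than a slightly different form. If needed, we fall back on the identical computation used for Lemma~\ref{lem:lq-lw-bound-under-infinite-moment}, with only the truncation moment from Step 4 changed.
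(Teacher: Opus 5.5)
Your Steps~1 and~4 match the paper; the Markov step is exactly Lemma~\ref{lem:weight-mse-bound-under-vartheta-moment}. Steps~2--3, however, take a different route.

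The paper writes $(\oplq-\oplpw)(\oplq+\lambda\opid)^{-1}=\int a(x)\,k_x\otimes A_\lambda k_x\,\mathbb Q(dx)$, with $a=1-w/\rho$ and $A_\lambda=(\oplq+\lambda\opid)^{-1}$. It then uses $\|\cdot\|\le\|\cdot\|_{HS}$, the Bochner triangle inequality and Cauchy--Schwarz. This gives the factor $\{\mathbb E_{\mathbb Q}\|k_X\otimes A_\lambda k_X\|_{HS}^2\}^{1/2}\le\kappa\{\tr(A_\lambda^2\oplq)\}^{1/2}\le\kappa\lambda^{-1/2}\pd^{1/2}$, which is $\kappa$ times the Hilbert--Schmidt norm of $\oplq^{1/2}A_\lambda$.

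Your duality argument uses $|f|\le\kappa$ and $\|h\|_{\splq}^2=\langle h,\oplq h\rangle_{\mathsf H}$ for $h\in\mathsf H$. It is valid and produces the operator norm $\kappa\|\oplq^{1/2}A_\lambda\|\le\frac{\kappa}{2}\lambda^{-1/2}$ instead, so it is sharper. It carries no effective-dimension factor. Fed into Lemma~\ref{lem:bound-a-under-vartheta-moment}, it would relax the truncation threshold from $\tau_n^{\vartheta-1}\gtrsim\lambda^{-(1+d)}$ to $\tau_n^{\vartheta-1}\gtrsim\lambda^{-1}$. The paper's Hilbert--Schmidt route is coarser; its only convenience is that it lands directly on the $\pd^{1/2}$ form in the statement.

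The one loose end is how you reconcile your bound with that stated form. The comparison $\frac12\lambda^{-1/2}\le\lambda^{-1/2}\pd^{1/2}$ needs $\pd\ge1/4$, which is not automatic. For instance, $\pd\le\tr(\oplq)/\lambda\le\kappa^2/\lambda$ is small when $\lambda$ is large or $\tr(\oplq)$ is small, yet the lemma is asserted for every $\lambda>0$.

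You do not need to fall back on the Hilbert--Schmidt route. Keep the exact operator norm instead of the crude $\frac12\lambda^{-1/2}$:
\[
\bigl\|\oplq^{1/2}(\oplq+\lambda\opid)^{-1}\bigr\|^2=\sup_j\frac{\mu_j}{(\mu_j+\lambda)^2}\le\frac1\lambda\sup_j\frac{\mu_j}{\mu_j+\lambda}\le\frac{\pd}{\lambda}.
\]
Equivalently, operator norm is at most Hilbert--Schmidt norm. Your argument then yields $\kappa\lambda^{-1/2}\min\{\tfrac12,\pd^{1/2}\}\tau_n^{-(\vartheta-1)/2}\Omega_\vartheta^{1/2}$, which implies the lemma for all $\lambda>0$ with no condition on $\pd$.

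A cosmetic point on Step~1: what you mean is that the integrand $\rho-w$ vanishes on $\{\rho\le\tau_n\}$, not that $\oplq=\oplpw$ there.
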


\begin{lemma}
  \label{lem:bound-a-under-vartheta-moment}
Under Assumption~\ref{assump:pseudo-dimn} and
Definition~\ref{def:vartheta-moment-condition-for-weight}, for
\(0<\lambda\leq1\), if
\begin{equation}
\label{eq:tn-condition-vartheta-moment}
\tau_n
\geq
\left(
4\kappa^2\Omega_\vartheta E_d^2\lambda^{-(1+d)}
\right)^{1/(\vartheta-1)},
\end{equation}
then, as operators on \(\mathsf H\),
\begin{enumerate}
\item
\(\|(\oplq-\oplpw)(\oplq+\lambda\opid)^{-1}\|\leq1/2\);
\item
\(\|\oplq(\oplpw+\lambda\opid)^{-1}\|\leq2\);
\item
\(\|\oplq^{1/2}(\oplpw+\lambda\opid)^{-1/2}\|\leq\sqrt2\).
\end{enumerate}
\end{lemma}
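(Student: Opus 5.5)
The plan is to reduce all three claims to the single perturbation bound in item~1, obtained from Lemma~\ref{lem:lq-lw-bound-under-vartheta-moment}, and then derive items~2 and~3 by operator algebra and the Cordes inequality (Lemma~\ref{lem:cordes-inequality}).

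For item~1, I will start from Lemma~\ref{lem:lq-lw-bound-under-vartheta-moment}:
\[
\bigl\|(\oplq-\oplpw)(\oplq+\lambda\opid)^{-1}\bigr\|\le \kappa\lambda^{-1/2}\pd^{1/2}\tau_n^{-(\vartheta-1)/2}\Omega_\vartheta^{1/2}.
\]
Next I control the effective dimension with Assumption~\ref{assump:pseudo-dimn}. Since $0<\lambda\le1$, the definition of $E_d$ gives $\pd\lambda^d\le E_d^2$, hence $\pd^{1/2}\le E_d\lambda^{-d/2}$. Substituting, the right-hand side is at most
\[
\bigl(\kappa^2E_d^2\Omega_\vartheta\lambda^{-(1+d)}\tau_n^{-(\vartheta-1)}\bigr)^{1/2}.
\]
The truncation condition \eqref{eq:tn-condition-vartheta-moment} says exactly that $\tau_n^{\vartheta-1}\ge4\kappa^2\Omega_\vartheta E_d^2\lambda^{-(1+d)}$. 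The quantity inside the square root is therefore at most $1/4$, which gives item~1.

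For item~2, set $\mathcal D:=(\oplq-\oplpw)(\oplq+\lambda\opid)^{-1}$. I will use the factorization
\[
\oplpw+\lambda\opid=(\oplq+\lambda\opid)-(\oplq-\oplpw)=(\opid-\mathcal D)(\oplq+\lambda\opid).
\]
Since $\|\mathcal D\|\le1/2$ by item~1, the Neumann series shows that $\opid-\mathcal D$ is invertible with $\|(\opid-\mathcal D)^{-1}\|\le2$. This gives
\[
(\oplpw+\lambda\opid)^{-1}=(\oplq+\lambda\opid)^{-1}(\opid-\mathcal D)^{-1},
\]
and therefore
\[
\oplq(\oplpw+\lambda\opid)^{-1}=\oplq(\oplq+\lambda\opid)^{-1}(\opid-\mathcal D)^{-1}.
\]
The first factor has norm at most $1$ by the spectral calculus, so the product has norm at most $2$.

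For item~3, I apply Lemma~\ref{lem:cordes-inequality} with $A=\oplq$, $B=(\oplpw+\lambda\opid)^{-1}$ and $s=1/2$. Both operators are bounded, self-adjoint and positive semidefinite on $\mathsf H$, and $B$ is bounded because $\lambda>0$. The inequality gives
\[
\|\oplq^{1/2}(\oplpw+\lambda\opid)^{-1/2}\|\le\|\oplq(\oplpw+\lambda\opid)^{-1}\|^{1/2}\le\sqrt2
\]
by item~2.

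I expect no serious obstacle; the only delicate points are bookkeeping. The effective-dimension substitution must use $\lambda\le1$, so that the $\max$ in the definition of $E_d$ applies. The order of the factors in $\mathcal D$ must match the factorization so that the Neumann-series inverse sits on the correct side.
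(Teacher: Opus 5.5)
Your proposal is correct and follows the paper's proof essentially step for step. Item~1 comes from Lemma~\ref{lem:lq-lw-bound-under-vartheta-moment} together with $\pd\le E_d^2\lambda^{-d}$ for $\lambda\le1$. Item~2 comes from the factorization $\oplpw+\lambda\opid=(\opid-\Delta_\lambda)(\oplq+\lambda\opid)$ with $\|(\opid-\Delta_\lambda)^{-1}\|\le2$, and item~3 follows from the Cordes inequality with $s=1/2$.
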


The following compatibility reduction is common to
Corollaries~\ref{cor:moment-density-polynomial}
and~\ref{cor:moment-density}.  Set \(q_\theta=\theta-1\).  Then \(q_\theta>1\).
Since \(d\mathbb Q=\rho\,d\mathbb P\), the moment condition in either corollary is
Definition~\ref{def:vartheta-moment-condition-for-weight} with
\(\vartheta=\theta\).  Hence, by
Lemma~\ref{lem:lq-lw-bound-under-vartheta-moment},
\[
 \left\|(\oplq-\oplpw)(\oplq+\lambda\opid)^{-1}\right\|
 \lesssim
 \lambda^{-1/2}\pd^{1/2}\tau_n^{-q_\theta/2}.
\]
Thus the perturbation proof of
Lemma~\ref{lem:bound-a-under-vartheta-moment} verifies
Assumption~\ref{assump:covariate-shift-estimation} whenever
\begin{equation}
 \tau_n\gtrsim (\lambda^{-1}\pd)^{1/q_\theta}.
 \label{eq:moment-corollary-compatibility}
\end{equation}
Once this compatibility condition holds, the proofs of the Gaussian and
bootstrap approximation propositions use only the bounded truncated-weight
envelope and the standing assumptions.  In particular, the Bernstein-type
condition appearing in the statements of the specialized bootstrap
propositions is not used in their proofs.  Repeating those proofs from
Propositions~\ref{prop:main-feasible-bootstrap} and
\ref{prop:main-bootstrap-coupling}, together with the corresponding
effective-dimension and spectral-tail lemmas, yields the same spectral
envelopes under \eqref{eq:moment-corollary-compatibility}.

\begin{proof}[Proof of Corollary~\ref{cor:moment-density-polynomial}]
Suppose that \(\mu_k\asymp k^{-\beta}\).  Lemma
\ref{lem:effective-dimension-bound-polynomial-decay} gives
\(\pd\lesssim\lambda^{-1/\beta}\), so we may take \(d=1/\beta\) in
Assumption~\ref{assump:pseudo-dimn}.  Set
\[
 v_{\beta,\theta}
 =\frac{\beta+1}{\beta q_\theta}.
\]
For \(\lambda=n^{-a}\), because \(n^\delta\to\infty\), the truncation level
\[
 \tau_n
 \asymp n^{av_{\beta,\theta}+\delta}
\]
satisfies \eqref{eq:moment-corollary-compatibility} for all sufficiently
large \(n\).  Setting \(\delta=0\), after multiplying the four terms in
\eqref{eq:poly-total-approximation-envelope} by the
anti-concentration factor \(\sqrt{\tau_n}\), their powers of \(n\) are
\[
\begin{aligned}
 e_0^{(0)}(a)
 &=\frac12-a+\frac{av_{\beta,\theta}}2,\\
 e_1^{(0)}(a)
 &=-\frac{\beta-1}{2(\beta+3)}+\frac a2
 +av_{\beta,\theta}
 \left(\frac{\beta+1}{\beta+3}+\frac12\right),\\
 e_2^{(0)}(a)
 &=-\frac12+a\left(\frac12+\frac1\beta\right)
 +\frac32av_{\beta,\theta},\\
 e_3^{(0)}(a)
 &=-\frac{\beta-1}{2(2\beta-1)}
 +a\frac{\beta^2+\beta-1}{2\beta(2\beta-1)}
 +av_{\beta,\theta}
 \left(\frac{3\beta-2}{2(2\beta-1)}+\frac12\right).
\end{aligned}
\]

For reference, the exact optimizer of these four exponents remains
piecewise.  Let
\[
 \beta_0=6+\sqrt{34},\qquad
 D_1=3\beta q_\theta(\beta+3)+2(\beta+1)^2,
\]
\[
 D_2=q_\theta(5\beta^2-\beta-1)+(\beta+1)(3\beta-2).
\]
Then
\[
 a_{\beta,\theta}^{\sharp}
 =
 \begin{cases}
 \dfrac{2\beta q_\theta(\beta+1)}{D_1},
 &5<\beta\leq\beta_0,\\[2mm]
 \dfrac{\beta q_\theta(3\beta-2)}{D_2},
 &\beta>\beta_0,
 \end{cases}
\]
\[
 t_{\beta,\theta}^{\sharp}
 =a_{\beta,\theta}^{\sharp}v_{\beta,\theta}
 =
 \begin{cases}
 \dfrac{2(\beta+1)^2}{D_1},
 &5<\beta\leq\beta_0,\\[2mm]
 \dfrac{(\beta+1)(3\beta-2)}{D_2},
 &\beta>\beta_0,
 \end{cases}
\]
\[
 r_{\beta,\theta}^{\sharp}
 =
 \begin{cases}
 \dfrac{\beta q_\theta(\beta-5)-4(\beta+1)^2}{2D_1},
 &5<\beta\leq\beta_0,\\[2mm]
 \dfrac{q_\theta(\beta^2-3\beta+1)
 -2(\beta+1)(3\beta-2)}{2D_2},
 &\beta>\beta_0.
 \end{cases}
\]
Indeed, the first branch balances \(e_0^{(0)}=e_1^{(0)}\), the second
balances \(e_0^{(0)}=e_3^{(0)}\), and the crossover equation is
\(\beta^2-12\beta+2=0\).  These sharper piecewise quantities are retained
only for comparison and are not needed in
Corollary~\ref{cor:moment-density-polynomial}.

For the non-piecewise statement, put
\[
 D=3\beta q_\theta(\beta+3)+2(\beta+1)^2,
 \qquad
 a_{\mathrm M}=\frac{2\beta q_\theta(\beta+1)}{D}.
\]
Thus
\[
 \lambda=n^{-a_{\mathrm M}},
 \qquad
 \tau_n\asymp n^{u_{\mathrm p}},
 \qquad
 u_{\mathrm p}
 =a_{\mathrm M}v_{\beta,\theta}+\delta
 =\frac{2(\beta+1)^2}{D}+\delta,
\]
which is the truncation power used in
Corollary~\ref{cor:moment-density-polynomial}.  At
\(\delta=0\), define
\[
 r_{01}
 =\frac{\beta q_\theta(\beta-5)-4(\beta+1)^2}{2D},
\]
\[
 r_{03}
 =\frac{q_\theta(\beta^3+2\beta^2-9\beta+2)
 -4(\beta+1)^2(2\beta-1)}
 {2(2\beta-1)D}
\]
and
\[
 r_{\mathrm M}
 =\frac{\beta q_\theta(\beta-5)(\beta+10)
 -4(\beta+1)^2(2\beta+5)}
 {2(2\beta+5)D}.
\]
Direct substitution gives
\[
 e_0^{(0)}(a_{\mathrm M})
 =e_1^{(0)}(a_{\mathrm M})=-r_{01},
 \qquad
 e_2^{(0)}(a_{\mathrm M})-e_0^{(0)}(a_{\mathrm M})
 =-\frac{2q_\theta(2\beta-1)}{D}<0,
\]
and \(e_3^{(0)}(a_{\mathrm M})=-r_{03}\).  Moreover,
\[
 r_{01}-r_{\mathrm M}
 =\frac{q_\theta\beta(\beta-5)^2}
 {2(2\beta+5)D}>0,
\]
\[
 r_{03}-r_{\mathrm M}
 =\frac{q_\theta(97\beta^2-91\beta+10)}
 {2(2\beta-1)(2\beta+5)D}>0.
\]
Hence every exponent is at most \(-r_{\mathrm M}\).  Restoring the factor
\(n^\delta\) in \(\tau_n\) adds, respectively,
\[
 \frac{\delta}{2},\qquad
 \frac{3\beta+5}{2(\beta+3)}\delta,\qquad
 \frac{3\delta}{2},\qquad
 \frac{5\beta-3}{2(2\beta-1)}\delta
\]
to the four coverage exponents.  All four coefficients are at most
\(3/2\), so Theorem~\ref{thm:main-coverage} gives
\[
 \big|\PP\{f_*\notin\cs_\alpha(\what f)\}-\alpha\big|
 \lesssim
 n^{-1}+n^{-\{r_{\mathrm M}-3\delta/2\}}\log^3n.
\]
The displayed expression for \(r_{\mathrm M}\) is the polynomial rate before
fixing the truncation slack.  Its positivity is equivalent to
\[
 q_\theta>
 \frac{4(\beta+1)^2(2\beta+5)}
 {\beta(\beta-5)(\beta+10)},
\]
which is assumed in Corollary~\ref{cor:moment-density-polynomial}.  Now choose
\[
 \delta
 =\frac{\beta q_\theta(\beta-5)(\beta+10)
 -4(\beta+1)^2(2\beta+5)}
 {3(2\beta+5)(\beta+10)D}
 =\frac{2r_{\mathrm M}}{3(\beta+10)}.
\]
Since \(\beta+10>1\), this choice lies strictly between zero and the
admissible upper bound \(2r_{\mathrm M}/3\).  Substitution gives
\[
 u_{\mathrm p}
 =\frac{\beta q_\theta(\beta-5)(\beta+10)
 +2(3\beta+28)(\beta+1)^2(2\beta+5)}
 {3(2\beta+5)(\beta+10)D}
\]
and, writing the resulting coverage exponent as \(r_{\mathrm p}\),
\[
 r_{\mathrm p}
 :=r_{\mathrm M}-\frac{3\delta}{2}
 =\frac{\beta+9}{\beta+10}r_{\mathrm M}
 =\frac{(\beta+9)
 \{\beta q_\theta(\beta-5)(\beta+10)
 -4(\beta+1)^2(2\beta+5)\}}
 {2(\beta+10)(2\beta+5)D},
\]
which are the truncation power and coverage exponent stated in
Corollary~\ref{cor:moment-density-polynomial}.
Moreover, the assumed moment threshold and the definition of \(D\) imply
\[
 0<r_{\mathrm p}
 <\frac{(\beta+9)(\beta-5)}
 {6(2\beta+5)(\beta+3)}
 <1.
\]
Consequently, the \(n^{-1}\) term in the preceding coverage bound is
absorbed by \(n^{-r_{\mathrm p}}\log^3 n\).

It remains to verify the listed transparent sufficient regimes.  The
numerator of \(r_{\mathrm M}\) can be written as
\[
 (\theta-9)\beta(\beta-5)(\beta+10)
 +4(\beta^2-112\beta-5).
\]
It is therefore positive when \(\theta\geq9\) and \(\beta\geq113\).
For the flexible regime, put
\[
 R(\beta)=
 \frac{4(\beta+1)^2(2\beta+5)}
 {\beta(\beta-5)(\beta+10)}.
\]
For every \(\beta>5\),
\[
 (\beta-5)\{R(\beta)-8\}
 =\frac{4(-\beta^2+112\beta+5)}{\beta(\beta+10)}
 \leq\frac{144}{5};
\]
the rational function in the middle is decreasing on \((5,\infty)\).
Consequently,
\(\theta>9\) and
\(\beta>5+144/\{5(\theta-9)\}\) imply
\(R(\beta)<\theta-1\).  Finally, the same numerator is increasing in
both parameters on each of
\(\{\theta\geq10,\beta\geq20\}\) and
\(\{\theta\geq11,\beta\geq15\}\), and its values at the respective
lower corners are \(1620\) and \(1660\).  This proves the remaining two
sufficient regimes.

Finally, direct substitution of the selected polynomial slack gives
\[
1-a_{\mathrm M}-u_{\mathrm p}
 =\frac{2\beta q_\theta(\beta+10)(3\beta^2+28\beta+55)
 +4(\beta+1)^2(2\beta+5)}
 {3(2\beta+5)(\beta+10)D}>0.
\]
Under the assumed two-sided polynomial spectral decay,
Lemma~\ref{lem:spectral-summation-rates} gives
\(\pd\asymp\lambda^{-1/\beta}\).  Together with the strict margin above,
this implies \(\tau_n/(n\lambda\pd)\to0\); the remaining fixed logarithmic
factors are absorbed by the same margin.  Thus all sample-size conditions in
the relevant approximation propositions hold for sufficiently large \(n\),
completing the proof.
\end{proof}

\subsection{Proof of Corollary~\ref{cor:moment-density}}

\begin{proof}
Suppose that \(\mu_k\asymp\exp(-c k^\zeta)\), and write
\[
 q_\theta=\theta-1,
 \qquad
 a=\frac{3q_\theta}{5q_\theta+3},
 \qquad
 \lambda=n^{-a}.
\]
By Lemma~\ref{lem:effective-dimension-bound-exponential-decay},
\[
 \pd\lesssim\{\log(1/\lambda)\}^{1/\zeta}.
\]
Choose a sufficiently large fixed constant \(C>0\) and set
\[
 \tau_n
 =C\left[
 \lambda^{-1}\{\log(1/\lambda)\}^{1/\zeta}
 \right]^{1/q_\theta}.
\]
Then \eqref{eq:moment-corollary-compatibility} holds and
\[
 \tau_n
 \asymp
 n^{\frac{3}{5\theta-2}}
 (\log n)^{\frac{1}{\zeta(\theta-1)}},
\]
as stated in the corollary.

After multiplying the three terms in
\eqref{eq:exp-total-approximation-envelope} by the anti-concentration factor
\(\sqrt{\tau_n}\), their polynomial powers of \(n\) are
\[
 e_0=\frac12-a+\frac{a}{2q_\theta},
 \qquad
 e_1=-\frac12+\frac a2+\frac{3a}{2q_\theta},
 \qquad
 e_2=\frac{a-1}{4}+\frac{5a}{4q_\theta}.
\]
Define
\[
 r_\theta
 =\frac{q_\theta-6}{2(5q_\theta+3)}
 =\frac{\theta-7}{2(5\theta-2)}.
\]
Direct substitution gives
\[
 e_0=e_2=-r_\theta,
 \qquad
 e_2-e_1=\frac{q_\theta}{2(5q_\theta+3)}>0.
\]
Thus the bias and bootstrap-coupling terms have the same polynomial order,
while the middle term is polynomially smaller.  The bias term carries the
factor \((\log n)^{1/\{2\zeta q_\theta\}}\).  The bootstrap-coupling term
carries
\[
 (\log n)^{\frac32+\frac{1}{2\zeta}
 +\frac{5}{4\zeta q_\theta}},
\]
where the final exponent is contributed by the factor
\(\tau_n^{5/4}\).  Hence the bootstrap-coupling term dominates
logarithmically, and the fixed logarithmic factors of the polynomially
smaller terms are absorbed.
Theorem~\ref{thm:main-coverage} consequently yields
\[
 \big|\PP\{f_*\notin\cs_\alpha(\what f)\}-\alpha\big|
 \lesssim
 n^{-1}
 n^{-r_\theta}
 (\log n)^{\frac32+\frac{1}{2\zeta}
 +\frac{5}{4\zeta q_\theta}}
 \lesssim
 n^{-r_\theta}
 (\log n)^{\frac32+\frac{1}{2\zeta}
 +\frac{5}{4\zeta q_\theta}},
\]
where \(0<r_\theta<1\) because \(\theta>7\).

Finally,
\[
 1-a-\frac{a}{q_\theta}
 =\frac{2q_\theta}{5q_\theta+3}>0.
\]
Hence both sample-size restrictions in the relevant approximation
propositions hold for all sufficiently large \(n\); the strict polynomial
margin absorbs the additional logarithmic factors in \(\tau_n\) and in the
failure-probability tolerances.  This completes the proof.
\end{proof}

\subsection{Proof of Theorem~\ref{thm:power}}

\begin{proof}[Proof of Theorem \ref{thm:power}]
The type II error probability is
\[
P\left(\sqrt{n}\|\hat f-f_0\|_{\splq}<\hat t_{\alpha}\right)
=
P\left(\|\hat f-f_0\|_{\splq}<\frac{\hat t_{\alpha}}{\sqrt n}\right).
\]
By the triangle inequality,
\[
\|f_*-f_0\|_{\splq}
\le
\|\hat f-f_0\|_{\splq}
+
\|\hat f-f_*\|_{\splq}.
\]
Hence
\[
\left\{\|\hat f-f_0\|_{\splq}<\frac{\hat t_{\alpha}}{\sqrt n}\right\}
\subseteq
\left\{
\delta
<
\|\hat f-f_*\|_{\splq}
+
\frac{\hat t_{\alpha}}{\sqrt n}
\right\}.
\]
Since \(\delta>0\) is fixed and
\[
\|\hat f-f_*\|_{\splq}
+
\frac{\hat t_{\alpha}}{\sqrt n}
\xrightarrow{P}0,
\]
the probability of the event on the right-hand side tends to zero. Therefore
\[
P\left(\sqrt{n}\|\hat f-f_0\|_{\splq}<\hat t_{\alpha}\right)\to 0,
\]
which is equivalent to
\[
P\left(\sqrt{n}\|\hat f-f_0\|_{\splq}>\hat t_{\alpha}\right)\to 1.
\]
\end{proof}

\subsection{Proof of Proposition~\ref{prop:gaussian-coupling-subgaussian}}

\begin{proof}
Fix \(\eta\in(0,1)\) and an integer \(m\in\{1,\ldots,n\}\).
Again let \(A_m:\splq\to\mathbb{R}^m\) be the coordinate projection onto the
first \(m\) eigenfunctions of \(\opv\). Since \(U_i\) is
\(\splq\)-\(b\)-sub-Gaussian, \(A_mU_i\) is \(b\)-sub-Gaussian in
\(\mathbb{R}^m\), and its covariance is bounded in operator norm by
\(\|\opv\|_{\splq}\). If this covariance is singular, apply the
finite-dimensional sub-Gaussian coupling theorem of
\cite{gaussianapproximationsubgaussian} on the range of
\(\Pi_m\opv\Pi_m\), whose dimension is at most \(m\) and whose covariance is
nonsingular. Every zero-variance coordinate of the centered vector
\(A_mU_i\) vanishes almost surely, so the resulting Gaussian vector can be
extended by zero on the orthogonal complement. Thus, in either case, the
theorem gives a Gaussian vector \(\zeta\in\mathbb{R}^m\) with covariance
\(\operatorname{diag}(\nu_1,\ldots,\nu_m)\) such that, with probability at
least \(1-\eta/3\),
\[
  \left\|
  \frac1{\sqrt n}\sum_{i=1}^n \Pi_mU_i-A_m^*\zeta
  \right\|_{\splq}
  \lesssim
  \|\opv\|_{\splq}^{1/2} b^3
  \frac{m^{3/2}\log(n)^2}{\sqrt n}
  (3/\eta)^{1/\log(mn)}.
\]
Let
\[
  Z=A_m^*\zeta+\sum_{s=m+1}^{\infty}\sqrt{\nu_s}\,h_s e_s,
\]
where \((h_s)_{s>m}\) are independent standard Gaussian variables,
independent of the data and of \(\zeta\). Then
\(Z\sim\mathrm{N}_{\splq}(0,\opv)\).

For the empirical tail, the \(\splq\)-sub-Gaussianity of \(U_i\) gives the
moment bound
\[
  \left\{
  \mathbb{E}\|(I-\Pi_m)U_i\|_{\splq}^{p}
  \right\}^{1/p}
  \lesssim b\sqrt p\,\sigma_{\splq}(m,\opv),
  \qquad p\ge2.
\]
The Hilbert-space Bernstein inequality \cite[Corollary~1]{pinelis1986remarks} therefore yields, with probability at
least \(1-\eta/3\),
\[
  \left\|
  \frac1{\sqrt n}\sum_{i=1}^n (I-\Pi_m)U_i
  \right\|_{\splq}
  \lesssim
  b\sigma_{\splq}(m,\opv)\sqrt{\log\!\left(\frac{6}{\eta}\right)}
  +\frac{b\sigma_{\splq}(m,\opv)\log\!\left(\frac{6}{\eta}\right)}{\sqrt n}.
\]
The last term is dominated by the first when
\(n\ge\log\!\left(\frac{2}{\eta}\right)\). Finally, Borell's inequality
applied to \((I-\Pi_m)Z\), whose second moment is
\(\sigma_{\splq}^2(m,\opv)\), gives
\[
  \|(I-\Pi_m)Z\|_{\splq}
  \lesssim
  \sigma_{\splq}(m,\opv)\sqrt{\log\!\left(\frac{6}{\eta}\right)}
\]
with probability at least \(1-\eta/3\). A union bound and the triangle
inequality complete the proof.
\end{proof}

\subsection{Proof of Proposition~\ref{prop:bootstrap-coupling-subgaussian}}

\begin{proof}
The anti-symmetric multiplier identity gives the oracle empirical covariance
\(\what{\opv}_{\oracle}\), and the covariance-comparison bound
\eqref{eq:abstract-bootstrap-coupling-l2q} applies with this empirical
covariance. Since adding a common deterministic element does not change an
empirical covariance,
\[
  \what{\opv}_{\oracle}-\opv
  =
  \frac1n\sum_{i=1}^n(U_i\otimes_{\splq}U_i-\opv)
  -
  \left(\frac1n\sum_{i=1}^nU_i\right)
  \otimes_{\splq}
  \left(\frac1n\sum_{i=1}^nU_i\right).
\]

We first make the covariance-estimation step explicit. For every orthogonal
projection \(P\) on \(\splq\) and every \(p\geq2\), the standard moment bound
for a Hilbert-space sub-Gaussian random element gives
\[
  \left\{\mathbb E\|PU_i\|_{\splq}^p\right\}^{1/p}
  \lesssim
  b\sqrt p\,
  \left\{\tr_{\splq}(P\opv P)\right\}^{1/2}.
\]
This follows by applying the sub-Gaussian increment bound to finite-dimensional
projections and then passing to the limit by monotone convergence. Indeed, on
each finite-dimensional range, Gaussian integration and
Definition~\ref{def:sub-gaussian-hilbert-space} give
\[
  \mathbb E\exp\{\theta\|PU_i\|_{\splq}^2\}
  \leq
  \mathbb E_g\exp\{\theta b^2
  \langle g,P\opv P g\rangle_{\splq}\},
\]
for every
\(0<\theta<\{2b^2\|P\opv P\|_{\splq}\}^{-1}\), where \(g\) is standard
Gaussian on that range. The Gaussian quadratic-form bound yields the displayed
moment estimate, and increasing finite-dimensional projections give the
general case.
Because
\(\|u\otimes_{\splq}u\|_{\mathrm{HS},\splq}=\|u\|_{\splq}^2\) and
\(\|\opv\|_{\mathrm{HS},\splq}\leq\tr_{\splq}(\opv)\), it follows that
\[
  \left\{
  \mathbb E\|U_i\otimes_{\splq}U_i-\opv\|_{\mathrm{HS},\splq}^p
  \right\}^{1/p}
  \lesssim
  b^2p\,\sigma_{\splq}^2(0,\opv).
\]
The preceding moment estimate satisfies the Bernstein moment condition with
variance proxy of order \(b^4\sigma_{\splq}^4(0,\opv)\) and scale of order
\(b^2\sigma_{\splq}^2(0,\opv)\). Hence the Hilbert-space Bernstein inequality
\cite[Corollary~1]{pinelis1986remarks}, applied in the Hilbert space of
Hilbert--Schmidt operators, yields
\[
  \left\|
  \frac1n\sum_{i=1}^n(U_i\otimes_{\splq}U_i-\opv)
  \right\|_{\mathrm{HS},\splq}
  \lesssim
  b^2\sigma_{\splq}^2(0,\opv)
  \left\{
    \sqrt{\frac{\log(2/\eta)}n}
    +\frac{\log(2/\eta)}n
  \right\}.
\]
Independence and Definition~\ref{def:sub-gaussian-hilbert-space} also show
directly that \(n^{-1}\sum_{i=1}^nU_i\) is \(\splq\)-\(b\)-sub-Gaussian with
covariance \(\opv/n\). Applying the preceding norm-moment bound with
\(p=\max\{2,\log(2/\eta)\}\) and then Markov's inequality gives
\[
  \left\|\frac1n\sum_{i=1}^nU_i\right\|_{\splq}
  \lesssim
  b\sigma_{\splq}(0,\opv)
  \sqrt{\frac{\log(2/\eta)}n}.
\]
Combining the last three displays and enlarging the logarithmic factor gives
\[
  \Delta_{\mathrm{HS}}
  \lesssim
  \left(\log\!\left(\frac{2}{\eta}\right)\right)^2
  \frac{b^2\sigma_{\splq}^2(0,\opv)}{\sqrt n}.
\]

For the tail, take \(P=\Pi_m^\perp\). The centering identity above gives
\[
\begin{aligned}
  \tr_{\splq}\{P(\what{\opv}_{\oracle}-\opv)P\}
  ={}&
  \frac1n\sum_{i=1}^n
  \left\{
    \|PU_i\|_{\splq}^2-
    \mathbb E\|PU_i\|_{\splq}^2
  \right\}\\
  &-
  \left\|P\left(\frac1n\sum_{i=1}^nU_i\right)\right\|_{\splq}^2.
\end{aligned}
\]
Moreover,
\[
  \left\{
  \mathbb E\left|
    \|PU_i\|_{\splq}^2-\mathbb E\|PU_i\|_{\splq}^2
  \right|^p
  \right\}^{1/p}
  \lesssim
  b^2p\,\sigma_{\splq}^2(m,\opv).
\]
Scalar Bernstein controls the first term in the tail identity. The same
sub-Gaussian closure argument applied to
\(P(n^{-1}\sum_{i=1}^nU_i)\), whose covariance trace is
\(\sigma_{\splq}^2(m,\opv)/n\), controls the second term by
\(b^2\sigma_{\splq}^2(m,\opv)\log(2/\eta)/n\).
After replacing \(\eta\) by a fixed fraction and taking a union bound, these
bounds hold simultaneously with probability at least \(1-3\eta\), and
\[
  \Delta_{\mathrm{tail}}
  \lesssim
  \log\!\left(\frac{2}{\eta}\right)
  \frac{b^2\sigma_{\splq}^2(m,\opv)}{\sqrt n}.
\]
Substitution into \eqref{eq:abstract-bootstrap-coupling-l2q} yields
\[
  \|\opv^{1/2}g-G\|_{\splq}
  \lesssim
  \left(\log\!\left(\frac{2}{\eta}\right)\right)^{3/2}
  \left[
    m^{1/4}
    \left\{
      \frac{b^4\sigma_{\splq}^4(0,\opv)}{n}
    \right\}^{1/4}
    +
    \sigma_{\splq}(m,\opv)
  \right].
\]
Here the empirical-tail contribution
\(b\,n^{-1/4}\sigma_{\splq}(m,\opv)\) is absorbed by the first term because
\(m\geq1\) and
\(\sigma_{\splq}(m,\opv)\leq\sigma_{\splq}(0,\opv)\).
Taking the infimum over \(m\ge1\) proves the claim.
\end{proof}

\subsection{Proof of Corollary~\ref{cor:bd-density-subgaussian}}

\paragraph{Polynomial spectral decay.}

\begin{proposition}
  \label{prop:gaussian-approximation-polynomial-sub-gaussianity}
  Suppose \(\mu_k\leq\alpha k^{-\beta}\) for some \(\alpha>0\) and
  \(\beta>1\). Under Assumptions~\ref{assump:error},
  \ref{assump:source-condition}, \ref{assump:covariate-shift-estimation},
  and~\ref{assump:sub-gaussian-feature}, when $\tau_n \ge \tau_0$, if $\tau_n \to \infty$ and $\lambda \to 0$ and
\begin{equation*}
  n
  >
  \max
  \left\{
  \frac{(\tau_n \kappa^2 + 1)\log(e+\pd)}{ C_2\lambda},
  \frac{\tau_n}{\lambda \pd}
  \right\}
\end{equation*}
where $C_2$ is the constant in Lemma~\ref{lem:bound-bahadur-2} after replacing
its failure probability by the fixed fraction of \(\eta\) used in the union
bound, then with probability at least \(1-\eta\), there exists
\(Z\sim\mathrm{N}_{\splq}(0,\opv)\) such that
\begin{equation*}
\begin{aligned}
\left\| \sqrt{n} (\hatf - \fpw) - Z \right\|_{\splq}
\lesssim
\frac{\tau_n}{\sqrt{n}}  \lambda^{-\frac{1}{2} - \frac{1}{2\beta}}  \log\!\left(\frac{1}{\lambda}\right)
\left(\log\!\left(\frac{2}{\eta}\right)\right)^2
+ \sqrt{\frac{\tau_n}{n}}
 \left(\frac{n}{\lambda}\right)^{\frac{3}{2(2+\beta)}} \log (n)^{2} \left(\frac{3}{\eta}\right)^{\frac{1}{\log ( n)}}.
\end{aligned}
\end{equation*}
\end{proposition}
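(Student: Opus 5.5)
The plan mirrors the proof of Proposition~\ref{prop:gaussian-approximation-polynomial-boundedness}. The only change is that the bounded-element Gaussian coupling is replaced by the sub-Gaussian coupling of Proposition~\ref{prop:gaussian-coupling-subgaussian}. I would apply Proposition~\ref{prop:main-bahadur} and Proposition~\ref{prop:gaussian-coupling-subgaussian}, each with failure probability $\eta/2$, and combine them by the triangle inequality and a union bound.

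\textbf{Bahadur term.} This step is unchanged. Lemma~\ref{lem:effective-dimension-bound-polynomial-decay} gives $\pd\lesssim\lambda^{-1/\beta}$ and $\log(e+\pd)\lesssim\log(1/\lambda)$. Substituting into $\Delta_{G,\mathrm{Bah}}$ yields the first displayed term
\[
\tau_n n^{-1/2}\lambda^{-1/2-1/(2\beta)}\log(1/\lambda)\log^2(2/\eta),
\]
after absorbing $\mathfrak C_1,\kappa,\overline\sigma,\norm{g_*}$ into constants.

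\textbf{Coupling term.} Lemma~\ref{lem:sub-gaussianity-of-ui} shows that the $U_i$ are centered and $\splq$-$b$-sub-Gaussian with $b$ independent of $n$. Proposition~\ref{prop:gaussian-coupling-subgaussian} therefore applies. Two inputs enter its bound.
\begin{itemize}
\item Lemma~\ref{lem:sigma-l2-trace-operator-norm-bound} gives $\|\opv\|_{\splq}\lesssim\tau_n$.
\item Lemmas~\ref{lem:tail-sum-sigma-bound} and~\ref{lem:sigma-bound-polynomial-decay} give $\sigma_{\splq}(m,\opv)\lesssim\sqrt{\tau_n/\lambda}\,m^{(1-\beta)/2}$.
\end{itemize}
For each $m\le n$ the coupling error is then bounded by
\[
\sqrt{\tau_n/\lambda}\;m^{-(\beta-1)/2}\sqrt{\log(6/\eta)}\;+\;\sqrt{\tau_n}\,\frac{m^{3/2}\log^2 n}{\sqrt n}\,(3/\eta)^{1/\log(mn)}.
\]

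\textbf{Choice of $m$.} I would balance the two polynomial factors by solving $\lambda^{-1/2}m^{-(\beta-1)/2}=m^{3/2}n^{-1/2}$. This gives $m\asymp (n/\lambda)^{1/(\beta+2)}$. Both terms then become
\[
\sqrt{\tau_n/n}\,(n/\lambda)^{3/(2(\beta+2))}.
\]
This matches the second displayed term, since $m^{3/2}=(n/\lambda)^{3/(2(\beta+2))}$.

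\textbf{Bookkeeping.} Several small checks finish the argument.
\begin{itemize}
\item The factor $(3/\eta)^{1/\log(mn)}$ is at most $(3/\eta)^{1/\log n}$ because $m\ge1$.
\item The residual $\sqrt{\log(6/\eta)}$ from the tail term is absorbed into the constants or into $\log^2 n$. If needed, I would instead enlarge $m$ by a logarithmic factor.
\item The requirement $m\le n$ holds because $\lambda\ge n^{-1}$ under the sample-size condition, so $(n/\lambda)^{1/(\beta+2)}\le n^{2/(\beta+2)}\le n$.
\end{itemize}

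\textbf{Expected obstacle.} The delicate point is verifying that Lemma~\ref{lem:sub-gaussianity-of-ui} supplies a sub-Gaussian constant $b$ that does not grow with $\tau_n$ or $\lambda^{-1}$. The displayed rate implicitly treats $b$ as a constant, so this must be confirmed rather than assumed. A secondary point is to check that the infimum over $m$ is realized with the constants stated in Proposition~\ref{prop:gaussian-coupling-subgaussian} uniformly in $m$.
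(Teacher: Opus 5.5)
Your proposal is correct and follows essentially the same route as the paper. You bound the Bahadur remainder via Proposition~\ref{prop:main-bahadur} with $\pd\lesssim\lambda^{-1/\beta}$, and then apply Proposition~\ref{prop:gaussian-coupling-subgaussian} with the $n$-independent $b$ from Lemma~\ref{lem:sub-gaussianity-of-ui}, $\|\opv\|_{\splq}\lesssim\tau_n$, $\sigma_{\splq}(m,\opv)\lesssim\sqrt{\tau_n/\lambda}\,m^{-(\beta-1)/2}$, and $m\asymp(n/\lambda)^{1/(\beta+2)}$; your direct observation that the sample-size condition forces $n\lambda>1$ (since $C_2\le 1$), hence $m\le n$, is a slightly cleaner substitute for the paper's $\min\{n,\cdot\}$ truncation, and the leftover $\sqrt{\log(6/\eta)}$ is indeed dominated by $\log^2 n\,(3/\eta)^{1/\log n}$.
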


\begin{proof}[Proof of Proposition~\ref{prop:gaussian-approximation-polynomial-sub-gaussianity}]
Lemmas~\ref{lem:sigma-l2-trace-operator-norm-bound},
\ref{lem:tail-sum-sigma-bound},
\ref{lem:effective-dimension-bound-polynomial-decay}, and
\ref{lem:sigma-bound-polynomial-decay} give
\[
\tr_{\splq}(\opv)\lesssim\tau_n\lambda^{-1/\beta},
\qquad
\|\opv\|_{\splq}\lesssim\tau_n,
\qquad
\sigma_{\splq}(m,\opv)
\lesssim\sqrt{\frac{\tau_n}{\lambda}}m^{-(\beta-1)/2}.
\]
By Lemma~\ref{lem:sub-gaussianity-of-ui}, the influence elements are
uniformly \(\splq\)-\(b\)-sub-Gaussian. Apply
Proposition~\ref{prop:gaussian-coupling-subgaussian} with
\[
m=\min\left\{n,
\left\lceil\left(\frac n\lambda\right)^{1/(\beta+2)}\right\rceil
\right\}.
\]
Both coupling terms are then bounded, up to their displayed logarithmic and
tolerance factors, by
\[
\sqrt{\frac{\tau_n}{n}}
\left(\frac n\lambda\right)^{\frac3{2(\beta+2)}}.
\]
The sample-size condition and \(\tau_n\to\infty\) imply \(n\lambda\to\infty\);
hence the untruncated ceiling is at most \(n\) for all sufficiently large
\(n\). Moreover,
\((3/\eta)^{1/\log(mn)}\leq(3/\eta)^{1/\log n}\).

Proposition~\ref{prop:main-bahadur} and
\(\pd\lesssim\lambda^{-1/\beta}\) give
\[
\Delta_{G,\mathrm{Bah}}
\lesssim
\frac{\tau_n}{\sqrt n}\lambda^{-\frac12-\frac1{2\beta}}
\log\!\left(\frac1\lambda\right)
\left(\log\!\left(\frac2\eta\right)\right)^2.
\]
Combining the Bahadur and coupling events after dividing \(\eta\) by a fixed
constant proves the result.
\end{proof}

\begin{proposition}
  \label{prop:bootstrap-approximation-polynomial-sub-gaussianity}
  Suppose \(\mu_k\leq\alpha k^{-\beta}\) for some \(\alpha>0\) and
  \(\beta>1\). Under Assumptions~\ref{assump:error}, \ref{assump:source-condition},
  \ref{assump:covariate-shift-estimation}, and
  \ref{assump:sub-gaussian-feature}, when
  $\tau_n \ge \tau_0$, if $\tau_n \to \infty$ and $\lambda \to 0$ and
\begin{equation*}
  n
  >
  \max
  \left\{
  \frac{(\tau_n \kappa^2 + 1)\log(e+\pd)}{ C_2\lambda},
  \frac{\tau_n}{\lambda \pd}
  \right\}
\end{equation*}
where $C_2$ is the constant in Lemma~\ref{lem:bound-bahadur-2} evaluated at
the component tolerance \(\eta_{\mathrm{tol}}=c\eta^2\) used in
Corollary~\ref{cor:main-approximate-bootstrap}, then, on a data event of
probability at least \(1-\eta\), there exists a coupling of \(B\) and \(Z\),
conditionally on \(\mathcal D\), such that
\(Z\mid\mathcal D\sim\mathrm{N}_{\splq}(0,\opv)\) and
\begin{equation*}
\begin{aligned}
\mathbb{P}\Biggl\{
\|Z-B\|_{\splq}
\lesssim\;&
\left(\log\!\left(\frac{2}{\eta}\right)\right)^{3}
\frac{\tau_n}{\sqrt{n}}
\lambda^{-\frac{1}{2}-\frac{1}{\beta}}
\\
&+\left(\log\!\left(\frac{2}{\eta}\right)\right)^{\frac{3}{2}}
\left(\frac{\tau_{n}}{\lambda}\right)^{\frac{1}{2}}
\left(
\frac{\tau_{n} \lambda^{1-\frac{2}{\beta}}}{\sqrt{n}}
\right)^{\frac{\beta-1}{2 \beta -1}}
\;\Bigm|\; \mathcal{D}
\Biggr\}
\ge 1-\eta.
\end{aligned}
\end{equation*}
\end{proposition}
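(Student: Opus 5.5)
The plan is to follow the proof of Proposition~\ref{prop:bootstrap-approximation-polynomial-boundedness}, but to use the sub-Gaussian coupling bound of Proposition~\ref{prop:bootstrap-coupling-subgaussian} in place of Proposition~\ref{prop:main-bootstrap-coupling}. The coupling is built through Corollary~\ref{cor:main-approximate-bootstrap}, whose triangle-inequality and Strassen--Dudley gluing argument is unchanged. We need the fixed tolerance $\eta_{\mathrm{tol}}=c\eta^2$, and we take $\Delta_{B,\mathrm{feas}}$ from Proposition~\ref{prop:main-feasible-bootstrap} together with the sub-Gaussian analogue of $\Delta_{B,\mathrm{coup}}$. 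Only the coupling term changes; its validity requires the influence elements $U_i$ to be $\splq$-$b$-sub-Gaussian with $b$ bounded in $n$. This is supplied by Lemma~\ref{lem:sub-gaussianity-of-ui} under Assumption~\ref{assump:sub-gaussian-feature}.

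\emph{Feasibility term.} Proposition~\ref{prop:main-feasible-bootstrap} with Lemma~\ref{lem:effective-dimension-bound-polynomial-decay}, giving $\pd\lesssim\lambda^{-1/\beta}$, yields
\[
\Delta_{B,\mathrm{feas}}\lesssim(\log(2/\eta))^3\,\frac{\tau_n}{\sqrt n}\,\lambda^{-1/2-1/\beta}.
\]
This is exactly the first displayed term. The factor $\log(2/\eta_{\mathrm{tol}})\asymp\log(2/\eta)$ is absorbed into the constants.

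\emph{Coupling term.} Lemma~\ref{lem:sigma-l2-trace-operator-norm-bound} gives $\sigma^2_{\splq}(0,\opv)=\tr_{\splq}(\opv)\lesssim\tau_n\lambda^{-1/\beta}$. Lemmas~\ref{lem:tail-sum-sigma-bound} and~\ref{lem:sigma-bound-polynomial-decay} give $\sigma_{\splq}(m,\opv)\lesssim(\tau_n/\lambda)^{1/2}m^{-(\beta-1)/2}$. Substituting both into Proposition~\ref{prop:bootstrap-coupling-subgaussian} leaves
\[
\inf_{m}\Big\{ m^{1/4}\,n^{-1/4}\,\tau_n^{1/2}\lambda^{-1/(2\beta)} + (\tau_n/\lambda)^{1/2}m^{-(\beta-1)/2}\Big\}.
\]
To balance the two terms we set $m^{1/4+(\beta-1)/2}=m^{(2\beta-1)/4}$ equal to $n^{1/4}\lambda^{-1/2+1/(2\beta)}$, that is,
\[
m\asymp\big(\sqrt n\,\lambda^{-1+1/\beta}\big)^{2/(2\beta-1)}.
\]
The tail term then becomes
\[
(\tau_n/\lambda)^{1/2}\big(\lambda^{1-1/\beta}/\sqrt n\big)^{(\beta-1)/(2\beta-1)}.
\]
The target display has $\big(\tau_n\lambda^{1-2/\beta}/\sqrt n\big)^{(\beta-1)/(2\beta-1)}$ instead, which is a weaker bound: it carries an extra factor $\tau_n\lambda^{-1/\beta}\geq1$ (for small $\lambda$ and $\tau_n\geq1$) inside the power. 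So it suffices to show the sharper bound, or to pick $m$ with that slack built in. Concretely, I would choose $m=\lceil(\sqrt n\,\lambda^{-1+2/\beta}/\tau_n)^{2/(2\beta-1)}\rceil$, check that the first term is no larger than the tail term at this $m$, and read off the stated expression.

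The step I expect to be the main obstacle is checking, after the $m$-balancing, that both terms sit exactly under the stated exponents, and that the chosen $m$ is at least $1$ and admissible. The latter needs $\sqrt n\gg\tau_n\lambda^{1-2/\beta}$, which follows from the sample-size condition $n>\tau_n/(\lambda\pd)$. Finally, the data event (probability $\geq1-\eta$) and the conditional coupling event are combined via the Markov step of Corollary~\ref{cor:main-approximate-bootstrap}. The extra logarithmic factors from the $\eta_{\mathrm{tol}}$ rescaling are absorbed into $(\log(2/\eta))^{3/2}$ and $(\log(2/\eta))^3$.
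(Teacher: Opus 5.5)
Your proposal is correct and follows the paper's proof essentially step for step. It uses the feasibility term from Proposition~\ref{prop:main-feasible-bootstrap} with $\pd\lesssim\lambda^{-1/\beta}$, feeds the same trace and spectral-tail bounds into Proposition~\ref{prop:bootstrap-coupling-subgaussian}, makes the identical choice $m=\lceil(\sqrt n\,\lambda^{-1+2/\beta}/\tau_n)^{2/(2\beta-1)}\rceil$ (at which the ratio of the finite-dimensional term to the tail term is $\tau_n^{-1/2}\lambda^{1/(2\beta)}\le 1$), and uses the two-layer gluing of Corollary~\ref{cor:main-approximate-bootstrap}.

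One small correction: you do not need $\sqrt n\gtrsim\tau_n\lambda^{1-2/\beta}$, and the sample-size condition $n>\tau_n/(\lambda\pd)$ does not give it in general. The ceiling already forces $m\ge 1$, and when the base is below one, the choice $m=1$ yields the stated bound directly; this is how the paper handles that case.
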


\begin{proof}[Proof of Proposition~\ref{prop:bootstrap-approximation-polynomial-sub-gaussianity}]
The spectral bounds used in the preceding proof imply
\[
\sigma_{\splq}^2(0,\opv)\lesssim\tau_n\lambda^{-1/\beta},
\qquad
\sigma_{\splq}(m,\opv)
\lesssim\sqrt{\frac{\tau_n}{\lambda}}m^{-(\beta-1)/2}.
\]
In Proposition~\ref{prop:bootstrap-coupling-subgaussian}, take
\[
m=\left\lceil
\max\left\{1,
\left(
\frac{\sqrt n}{\tau_n\lambda^{1-2/\beta}}
\right)^{\frac{2}{2\beta-1}}
\right\}
\right\rceil.
\]
The spectral-tail term is bounded by
\[
\sqrt{\frac{\tau_n}{\lambda}}
\left(
\frac{\tau_n\lambda^{1-2/\beta}}{\sqrt n}
\right)^{\frac{\beta-1}{2\beta-1}}.
\]
The finite-dimensional term obeys the same bound because the ratio of that
term to the preceding display is at most
\(\tau_n^{-1/2}\lambda^{1/(2\beta)}\leq1\). If the expression inside the
maximum is smaller than one, the choice \(m=1\) gives the same conclusion.

On the other hand, Proposition~\ref{prop:main-feasible-bootstrap} and
\(\pd\lesssim\lambda^{-1/\beta}\) give
\[
\Delta_{B,\mathrm{feas}}
\lesssim
\left(\log\!\left(\frac2\eta\right)\right)^3
\frac{\tau_n}{\sqrt n}\lambda^{-\frac12-\frac1\beta}.
\]
The conditional Strassen--Dudley argument in
Lemma~\ref{lem:strassen-coupling}, followed by the same two-layer union bound
as in Corollary~\ref{cor:main-approximate-bootstrap}, couples the oracle
bootstrap to \(Z\) and then combines it with the feasible-bootstrap bound.
The triangle inequality proves the assertion.
\end{proof}

\paragraph{Exponential spectral decay.}

\begin{proposition}
  \label{prop:gaussian-approximation-exponential-sub-gaussianity}
  Suppose \(\mu_k\leq\alpha\exp(-\beta k^\gamma)\) for some
  \(\alpha,\beta,\gamma>0\). Under Assumptions~\ref{assump:error},
  \ref{assump:source-condition}, \ref{assump:covariate-shift-estimation},
  and~\ref{assump:sub-gaussian-feature}, when $\tau_n \ge \tau_0$, if $\tau_n \to \infty$ and $\lambda \to 0$ and
\begin{equation*}
  n
  >
  \max
  \left\{
  \frac{(\tau_n \kappa^2 + 1)\log(e+\pd)}{ C_2\lambda},
  \frac{\tau_n}{\lambda \pd}
  \right\}
\end{equation*}
where $C_2$ is the constant in Lemma~\ref{lem:bound-bahadur-2} after replacing
its failure probability by the fixed fraction of \(\eta\) used in the union
bound, then with probability at least \(1-\eta\), there exists
\(Z\sim\mathrm{N}_{\splq}(0,\opv)\) such that
\begin{equation*}
\begin{aligned}
\left\| \sqrt{n} (\hatf - \fpw) - Z \right\|_{\splq}
&\lesssim
\left(\log \left(\frac{2}{\eta}\right)\right)^{2} \frac{\tau_{n}}{\sqrt{n\lambda} }\left(\log \left(\frac{1}{\lambda}\right)\right)^{\frac{1}{2 \gamma}} \log \log \left(\frac{1}{\lambda}\right)\\
&\hphantom{xxx}
+\left(\frac{1}{\eta}\right)^{\frac{1}{\log n}}\sqrt{\frac{\tau_n}{n}} \left[ \frac{1}{\beta} \log\left(\frac{n}{\lambda}\right) \right]^{\frac{3}{2\gamma}} \left(\log(n)\right)^{2} .
\end{aligned}
\end{equation*}
\end{proposition}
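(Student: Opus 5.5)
The plan mirrors the polynomial sub-Gaussian case (Proposition~\ref{prop:gaussian-approximation-polynomial-sub-gaussianity}). We split the error by the triangle inequality into the Bahadur remainder from Proposition~\ref{prop:main-bahadur} and the coupling error from Proposition~\ref{prop:gaussian-coupling-subgaussian}. Each step is run with a fixed fraction of \(\eta\), and a union bound finishes.

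\emph{Step 1: Bahadur term.} Lemma~\ref{lem:effective-dimension-bound-exponential-decay} gives \(\pd\lesssim\{\log(1/\lambda)\}^{1/\gamma}\), so \(\log(e+\pd)\lesssim\log\log(1/\lambda)\). Substituting into \(\Delta_{G,\mathrm{Bah}}\) yields the first displayed term
\[
\frac{\tau_n}{\sqrt{n\lambda}}\{\log(1/\lambda)\}^{1/(2\gamma)}\log\log(1/\lambda)\log^2(2/\eta).
\]
Here the square root of \(\pd\log(e+\pd)\) is bounded by \(\pd^{1/2}\log\log(1/\lambda)\).

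\emph{Step 2: covariance inputs.} From Lemma~\ref{lem:sigma-l2-trace-operator-norm-bound} we have \(\|\opv\|_{\splq}\lesssim\tau_n\). From Lemmas~\ref{lem:tail-sum-sigma-bound} and~\ref{lem:sigma-bound-exponential-decay},
\[
\sigma_{\splq}(m,\opv)\lesssim\sqrt{\tau_n/\lambda}\,m^{1/2-\gamma/2}e^{-\beta m^\gamma/2}.
\]
Lemma~\ref{lem:sub-gaussianity-of-ui} makes the \(U_i\) uniformly \(b\)-sub-Gaussian with \(b\) independent of \(n\). Proposition~\ref{prop:gaussian-coupling-subgaussian} then gives, for each \(m\le n\),
\[
\sqrt{\tau_n/\lambda}\,m^{1/2-\gamma/2}e^{-\beta m^\gamma/2}\sqrt{\log(6/\eta)}
+\sqrt{\tau_n}\,\frac{m^{3/2}\log^2 n}{\sqrt n}\,(3/\eta)^{1/\log(mn)}.
\]

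\emph{Step 3: choice of \(m\) (the main obstacle).} I would take
\[
m=\bigl\lceil\{\beta^{-1}\log(n/\lambda)\}^{1/\gamma}\bigr\rceil.
\]
Then \(e^{-\beta m^\gamma/2}\le(\lambda/n)^{1/2}\), and the tail term becomes \(\sqrt{\tau_n/n}\,m^{1/2-\gamma/2}\), which is dominated by the finite-dimensional term since \(m^{1/2-\gamma/2}\le 1+m^{3/2}\). The finite-dimensional term is at most
\[
\sqrt{\tau_n/n}\,[\beta^{-1}\log(n/\lambda)]^{3/(2\gamma)}\log^2 n\,(3/\eta)^{1/\log n},
\]
using \(mn\ge n\). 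This matches the second displayed term, with \((3/\eta)^{1/\log n}\) comparable to \((1/\eta)^{1/\log n}\) up to constants. The delicate points are the ceiling, which only changes constants, and checking \(m\le n\). The sample-size condition gives \(n\lambda\to\infty\), so \(\log(n/\lambda)\lesssim\log n\) and \(m\) is polylogarithmic, hence \(\le n\) for large \(n\). The \(\sqrt{\log(6/\eta)}\) factor on the tail term is absorbed because that term carries extra \(\sqrt{\lambda}\)-type slack. Combining the two events with a union bound completes the plan.
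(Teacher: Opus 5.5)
Your proposal matches the paper's proof essentially step for step: the Bahadur remainder via Proposition~\ref{prop:main-bahadur} with the exponential effective-dimension bound, the same covariance inputs, the sub-Gaussian coupling (Proposition~\ref{prop:gaussian-coupling-subgaussian} with Lemma~\ref{lem:sub-gaussianity-of-ui}), the same choice $m=\lceil\{\beta^{-1}\log(n/\lambda)\}^{1/\gamma}\rceil$ with $m\le n$ justified by $n\lambda\to\infty$, and a final union bound. One small correction: with this $m$ the factor $\lambda^{-1/2}$ is fully cancelled, so there is no leftover $\sqrt{\lambda}$ slack; the $\sqrt{\log(6/\eta)}$ on the tail term is instead absorbed because $\sqrt{\log(6/\eta)}\lesssim\log^2(n)\,(1/\eta)^{1/\log n}$ for all $\eta\in(0,1)$ and $n\geq3$.
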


\begin{proof}[Proof of Proposition~\ref{prop:gaussian-approximation-exponential-sub-gaussianity}]
Lemmas~\ref{lem:sigma-l2-trace-operator-norm-bound},
\ref{lem:tail-sum-sigma-bound},
\ref{lem:effective-dimension-bound-exponential-decay}, and
\ref{lem:sigma-bound-exponential-decay} imply
\[
\tr_{\splq}(\opv)
\lesssim\tau_n\left(\log\!\left(\frac1\lambda\right)\right)^{1/\gamma},
\qquad
\|\opv\|_{\splq}\lesssim\tau_n,
\]
and
\[
\sigma_{\splq}(m,\opv)
\lesssim
\sqrt{\frac{\tau_n}{\lambda}}m^{(1-\gamma)/2}
\exp(-\beta m^\gamma/2).
\]
Apply Proposition~\ref{prop:gaussian-coupling-subgaussian}, using
Lemma~\ref{lem:sub-gaussianity-of-ui}, with
\[
m=\min\left\{n,
\left\lceil
\left\{\frac1\beta\log\!\left(\frac n\lambda\right)\right\}^{1/\gamma}
\right\rceil
\right\}.
\]
Since \(e^{-\beta m^\gamma/2}\lesssim(n/\lambda)^{-1/2}\) whenever the
ceiling is at most \(n\), and
\(m^{(1-\gamma)/2}\leq1+m^{3/2}\), both coupling terms are bounded by
\[
\sqrt{\frac{\tau_n}{n}}
\left\{\frac1\beta\log\!\left(\frac n\lambda\right)\right\}^{3/(2\gamma)}
\log^2(n)
\left(\frac1\eta\right)^{1/\log n}.
\]
The sample-size condition and \(\tau_n\to\infty\) imply \(n\lambda\to\infty\);
hence the untruncated ceiling is at most \(n\) for all sufficiently large
\(n\).

Finally, Proposition~\ref{prop:main-bahadur} and the exponential
effective-dimension bound give
\[
\Delta_{G,\mathrm{Bah}}
\lesssim
\left(\log\!\left(\frac2\eta\right)\right)^2
\frac{\tau_n}{\sqrt{n\lambda}}
\left(\log\!\left(\frac1\lambda\right)\right)^{1/(2\gamma)}
\log\log\!\left(\frac1\lambda\right).
\]
A union bound completes the proof.
\end{proof}

\begin{proposition}
  \label{prop:bootstrap-approximation-exponential-sub-gaussianity}
  Suppose \(\mu_k\leq\alpha\exp(-\beta k^\gamma)\) for some
  \(\alpha,\beta,\gamma>0\). Under Assumptions~\ref{assump:error}, \ref{assump:source-condition},
  \ref{assump:covariate-shift-estimation}, and
  \ref{assump:sub-gaussian-feature}, when
  $\tau_n \ge \tau_0$, if $\tau_n \to \infty$ and $\lambda \to 0$ and
\begin{equation*}
  n
  >
  \max
  \left\{
  \frac{(\tau_n \kappa^2 + 1)\log(e+\pd)}{ C_2\lambda},
  \frac{\tau_n}{\lambda \pd}
  \right\}
\end{equation*}
where $C_2$ is the constant in Lemma~\ref{lem:bound-bahadur-2} evaluated at
the component tolerance \(\eta_{\mathrm{tol}}=c\eta^2\) used in
Corollary~\ref{cor:main-approximate-bootstrap}, then, on a data event of
probability at least \(1-\eta\), there exists a coupling of \(B\) and \(Z\),
conditionally on \(\mathcal D\), such that
\(Z\mid\mathcal D\sim\mathrm{N}_{\splq}(0,\opv)\) and
\begin{equation*}
\begin{aligned}
\mathbb{P}\Biggl\{
\|Z-B\|_{\splq}
\lesssim\;&
\left(\log \left(\frac{2}{\eta}\right)\right)^{3} \frac{\tau_{n}}{\sqrt{n \lambda}}
\left(\log \left(\frac{1}{\lambda}\right)\right)^{\frac{1}{\gamma}}\\
& +
\left(\log\!\left(\frac{2}{\eta}\right)\right)^{\frac{3}{2}}
\frac{\sqrt{\tau_n}}{n^{\frac{1}{4}}}
\left(\log\left(\frac{1}{\lambda}\right)\right)^{\frac{1}{2\gamma}}
\left(\log\left(\frac{n}{\lambda}\right)\right)^{\frac{1}{4\gamma}}
\;\Bigm|\; \mathcal{D}
\Biggr\}
\ge 1-\eta.
\end{aligned}
\end{equation*}
\end{proposition}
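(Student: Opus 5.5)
The plan is to follow the template of Proposition~\ref{prop:bootstrap-approximation-exponential-boundedness}, replacing the bounded-kernel coupling (Proposition~\ref{prop:main-bootstrap-coupling}) with its sub-Gaussian analogue, Proposition~\ref{prop:bootstrap-coupling-subgaussian}. The Gaussian limit \(Z\) is reached through Corollary~\ref{cor:main-approximate-bootstrap}. Its proof uses only two facts: a feasible-bootstrap bound \(\Delta_{B,\mathrm{feas}}\) holding on a data event, and a coupling between the conditional law of \(B_{\oracle}\) and \(\mathrm{N}_{\splq}(0,\opv)\). Both are evaluated at tolerance \(c\eta^2\). A Strassen--Dudley transfer (Lemma~\ref{lem:strassen-coupling}) and the Markov step then convert this into the stated conditional probability \(1-\eta\) on a data event of probability \(1-\eta\). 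This gives the \(\log(2/\eta)\) powers \(3\) and \(3/2\) up to constants.

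\emph{First term (feasible bootstrap).} Proposition~\ref{prop:main-feasible-bootstrap} is unchanged, since it needs only the bounded truncated weights and Assumption~\ref{assump:covariate-shift-estimation}. It yields \(\Delta_{B,\mathrm{feas}}\lesssim \log^3(2/\eta)\,\tau_n\pd/\sqrt{n\lambda}\). Lemma~\ref{lem:effective-dimension-bound-exponential-decay} gives \(\pd\lesssim\{\log(1/\lambda)\}^{1/\gamma}\), which produces exactly the first displayed term.

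\emph{Second term (coupling).} By Lemma~\ref{lem:sub-gaussianity-of-ui}, the \(U_i\) are \(\splq\)-\(b\)-sub-Gaussian with \(b\) bounded uniformly in \(n\), so Proposition~\ref{prop:bootstrap-coupling-subgaussian} applies. Lemmas~\ref{lem:sigma-l2-trace-operator-norm-bound} and~\ref{lem:effective-dimension-bound-exponential-decay} give \(\sigma^2_{\splq}(0,\opv)=\tr_{\splq}(\opv)\lesssim\tau_n\{\log(1/\lambda)\}^{1/\gamma}\). Hence the finite-dimensional term is
\[
m^{1/4}\bigl(b^4\sigma^4_{\splq}(0,\opv)/n\bigr)^{1/4}\lesssim m^{1/4}\sqrt{\tau_n}\,n^{-1/4}\{\log(1/\lambda)\}^{1/(2\gamma)}.
\]
Lemmas~\ref{lem:tail-sum-sigma-bound} and~\ref{lem:sigma-bound-exponential-decay} give the tail bound \(\sigma_{\splq}(m,\opv)\lesssim\sqrt{\tau_n/\lambda}\,m^{(1-\gamma)/2}e^{-\beta m^\gamma/2}\). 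I would take
\[
m=\bigl\lceil\{C\log(n/\lambda)\}^{1/\gamma}\bigr\rceil
\]
with a fixed \(C>1/(2\beta)\). This gives \(m^{1/4}\lesssim\{\log(n/\lambda)\}^{1/(4\gamma)}\), which is the factor in the second displayed term.

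\emph{Main obstacle: absorbing the tail term.} With this \(m\), \(e^{-\beta m^\gamma/2}\le (n/\lambda)^{-C\beta/2}\), so the tail term is at most
\[
\sqrt{\tau_n/\lambda}\,(\lambda/n)^{C\beta/2}\,\mathrm{poly}(\log(n/\lambda)) = \sqrt{\tau_n}\,n^{-1/4}\,\Bigl[(\lambda/n)^{C\beta/2-1/4}\lambda^{-1/2}n^{-1/4}\Bigr]\cdot\mathrm{polylog}.
\]
The bracket must be \(o(1)\) relative to the first part. I would enlarge \(C\) (say \(C\beta/2\ge 1\)) so that \((\lambda/n)^{C\beta/2-1/4}\) kills both \(\lambda^{-1/2}\) and the polylog factor. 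This uses \(n\lambda\to\infty\), which follows from the sample-size condition, and the fact that the \(\lambda^{C\beta/2-3/4}\) factor then has a nonnegative exponent. The care needed is to choose \(C\) large enough to absorb the polynomial factor \(m^{(1-\gamma)/2}\) and still keep only the \(\{\log(n/\lambda)\}^{1/(4\gamma)}\) loss. Summing the two terms and applying the union bound then completes the argument.
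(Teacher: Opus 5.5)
Your proposal is correct and follows essentially the paper's proof: the feasible-bootstrap bound of Proposition~\ref{prop:main-feasible-bootstrap} with $\pd\lesssim\{\log(1/\lambda)\}^{1/\gamma}$, the sub-Gaussian coupling of Proposition~\ref{prop:bootstrap-coupling-subgaussian} with $m\asymp\{\log(n/\lambda)\}^{1/\gamma}$, and the Strassen--Dudley/two-layer union bound from Corollary~\ref{cor:main-approximate-bootstrap}. The paper simply takes $C=1/\beta$, which already makes the tail term $\sqrt{\tau_n/n}$ times a polylogarithm, so enlarging $C$ is unnecessary; also, your displayed factorization of the tail term is off by a factor $(n\lambda)^{1/4}$ (the correct bracket is $(\lambda/n)^{C\beta/2-1/4}\lambda^{-1/4}$), but this is still at most $n^{-1/4}$ once $C\beta\geq1$, so your conclusion is unaffected.
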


\begin{proof}[Proof of Proposition~\ref{prop:bootstrap-approximation-exponential-sub-gaussianity}]
The exponential spectral bounds give
\[
\sigma_{\splq}^2(0,\opv)
\lesssim
\tau_n\left(\log\!\left(\frac1\lambda\right)\right)^{1/\gamma}
\]
and
\[
\sigma_{\splq}(m,\opv)
\lesssim
\sqrt{\frac{\tau_n}{\lambda}}m^{(1-\gamma)/2}
e^{-\beta m^\gamma/2}.
\]
In Proposition~\ref{prop:bootstrap-coupling-subgaussian}, take
\[
m=\left\lceil
\left\{\frac1\beta\log\!\left(\frac n\lambda\right)\right\}^{1/\gamma}
\right\rceil.
\]
The finite-dimensional term is then bounded by
\[
\frac{\sqrt{\tau_n}}{n^{1/4}}
\left(\log\!\left(\frac1\lambda\right)\right)^{1/(2\gamma)}
\left(\log\!\left(\frac n\lambda\right)\right)^{1/(4\gamma)}.
\]
The spectral-tail term is at most
\[
\sqrt{\frac{\tau_n}{n}}
\left(\log\!\left(\frac n\lambda\right)\right)^{(1-\gamma)/(2\gamma)},
\]
which is smaller than the preceding display for all sufficiently large
\(n\), because a polynomial factor \(n^{-1/4}\) dominates every fixed power
of the logarithm.

Proposition~\ref{prop:main-feasible-bootstrap} and the exponential
effective-dimension bound give
\[
\Delta_{B,\mathrm{feas}}
\lesssim
\left(\log\!\left(\frac2\eta\right)\right)^3
\frac{\tau_n}{\sqrt{n\lambda}}
\left(\log\!\left(\frac1\lambda\right)\right)^{1/\gamma}.
\]
Combining the sub-Gaussian oracle coupling with the feasible bootstrap through
Lemma~\ref{lem:strassen-coupling}, and applying the two-layer union bound from
Corollary~\ref{cor:main-approximate-bootstrap}, proves the result.
\end{proof}

\begin{proof}[Proof of Corollary~\ref{cor:bd-density-subgaussian}]
Choose \(C_0\geq\max\{\overline\rho,\tau_0\}\). Then \(w_n=\rho\),
\(\oplpw=\oplq\), and
Assumption~\ref{assump:covariate-shift-estimation} holds with
\(\mathfrak C_1=1\). We first establish the required spectral approximation
bounds directly for the fixed envelope \(\tau_n=C_0\). Because
\(w_n=\rho\leq\overline\rho\leq C_0\), every occurrence of the truncation
envelope in the nonasymptotic Bahadur, covariance, and feasible-bootstrap
bounds is controlled by \(C_0\). Repeating the proofs of the four
sub-Gaussian spectral approximation propositions with this fixed envelope,
rather than invoking their asymptotic statements, gives the same displayed
approximation bounds under the corresponding sample-size conditions. Thus
\(\tau_n\to\infty\) is not required in the bounded-ratio case.
The corresponding sample-size conditions hold for all sufficiently large
\(n\) under each tuning choice below.

We next establish the polynomial result. Since \(s\geq1/2\),
Proposition~\ref{prop:main-bias} gives
\(\Delta_{\mathrm{bias}}\lesssim\sqrt n\lambda\). At
\(\eta\asymp n^{-1}\), the fixed-envelope polynomial bounds just established
yield
\begin{equation}
\label{eq:poly-subgaussian-total-envelope}
\Delta
\lesssim
\log^3n\,
\bigl(\Delta^{\mathrm{sg}}_0+\Delta^{\mathrm{sg}}_1
+\Delta^{\mathrm{sg}}_2+\Delta^{\mathrm{sg}}_3\bigr),
\end{equation}
where
\[
\begin{aligned}
\Delta^{\mathrm{sg}}_0
&=\sqrt n\lambda,\\
\Delta^{\mathrm{sg}}_1
&=\sqrt{\frac{\tau_n}{n}}
\left(\frac n\lambda\right)^{\frac3{2(\beta+2)}},\\
\Delta^{\mathrm{sg}}_2
&=\frac{\tau_n}{\sqrt n}\lambda^{-\frac12-\frac1\beta},\\
\Delta^{\mathrm{sg}}_3
&=\sqrt{\frac{\tau_n}{\lambda}}
\left(
\frac{\tau_n\lambda^{1-2/\beta}}{\sqrt n}
\right)^{\frac{\beta-1}{2\beta-1}}.
\end{aligned}
\]
Here the Bahadur term is absorbed by
\(\Delta^{\mathrm{sg}}_2\), and
\((3/\eta)^{1/\log n}=O(1)\).

Before deriving the simpler rate stated in the corollary, we record the
sharper optimization underlying \eqref{eq:poly-subgaussian-total-envelope}.
With \(\tau_n=C_0\) and \(\lambda=n^{-a}\), the four powers of \(n\) are
\[
\frac12-a,\qquad
-\frac12+\frac{3(1+a)}{2(\beta+2)},\qquad
-\frac12+a\left(\frac12+\frac1\beta\right),\qquad
-\frac{\beta-1}{2(2\beta-1)}
+a\frac{5\beta-4}{2\beta(2\beta-1)}.
\]
For \(\beta>3\), their sharp balance is
\begin{equation}
\label{eq:poly-subgaussian-sharp-optimizer}
(a_\beta^\sharp,r_\beta^\sharp)
=
\begin{cases}
\left(
\dfrac{\beta(3\beta-2)}{4\beta^2+3\beta-4},
\dfrac{2\beta^2-7\beta+4}{2(4\beta^2+3\beta-4)}
\right),
&3<\beta\leq3+\sqrt5,\\[4mm]
\left(
\dfrac{2\beta}{3\beta+2},
\dfrac{\beta-2}{2(3\beta+2)}
\right),
&\beta>3+\sqrt5.
\end{cases}
\end{equation}
Indeed, the first branch balances
\(\Delta^{\mathrm{sg}}_0\) and \(\Delta^{\mathrm{sg}}_3\), whereas the second
balances \(\Delta^{\mathrm{sg}}_0\) and
\(\Delta^{\mathrm{sg}}_2\); the two formulas meet at
\(\beta=3+\sqrt5\), and direct substitution shows that the remaining terms
are smaller. Consequently,
\[
\Delta\lesssim n^{-r_\beta^\sharp}\log^3n.
\]
More generally, if \(\tau_n=n^t\) and the same \(a_\beta^\sharp\) is used,
the four powers of \(\tau_n\) in
\(\Delta^{\mathrm{sg}}_0,\ldots,\Delta^{\mathrm{sg}}_3\) are
\[
0,\qquad \frac12,\qquad 1,\qquad
\frac{4\beta-3}{2(2\beta-1)}<1.
\]
Thus, whenever \(a_\beta^\sharp+t<1\),
\[
\Delta\lesssim n^{-r_\beta^\sharp+t}\log^3n,
\qquad
\sqrt{\tau_n}\Delta
\lesssim n^{-r_\beta^\sharp+3t/2}\log^3n.
\]

For the non-piecewise statement of the corollary, set
\[
\widetilde a_\beta=\frac{2\beta}{3\beta+2},
\qquad
\widetilde r_\beta=\frac{\beta-3}{2(3\beta+2)}.
\]
At \(a=\widetilde a_\beta\), the negatives of the four powers of \(n\) are
\[
\frac{\beta-2}{2(3\beta+2)},\qquad
\frac{3\beta^2-7\beta-2}{2(\beta+2)(3\beta+2)},\qquad
\frac{\beta-2}{2(3\beta+2)},\qquad
\frac{(\beta-3)(3\beta-2)}
{2(3\beta+2)(2\beta-1)}.
\]
Each is at least \(\widetilde r_\beta\) when \(\beta>3\). Therefore
\[
\Delta\lesssim n^{-\widetilde r_\beta}\log^3n.
\]
Since \(\sqrt{\tau_n}=O(1)\), Theorem~\ref{thm:main-coverage} proves
part~(1).

Under exponential spectral decay, the fixed-envelope exponential bounds just
established give the polynomial-order envelope
\begin{equation}
\label{eq:exp-subgaussian-total-envelope}
\Delta
\lesssim_{\mathrm{poly}}
\sqrt n\lambda
+\frac{\tau_n}{\sqrt{n\lambda}}
+\sqrt{\frac{\tau_n}{n}}
+\frac{\sqrt{\tau_n}}{n^{1/4}},
\end{equation}
where \(\lesssim_{\mathrm{poly}}\) suppresses only the displayed logarithmic
and tolerance factors. With fixed \(\tau_n\) and \(\lambda=n^{-a}\), the
potentially active powers are
\[
\frac12-a,\qquad \frac{a-1}{2},\qquad -\frac14.
\]
The first two are balanced at the sharp polynomial-order choice \(a=2/3\),
where they equal \(-1/6\), while the last equals \(-1/4\). More generally,
with \(\lambda=n^{-2/3}\) and \(\tau_n=n^t\),
\[
\Delta
\lesssim
n^{-1/6+t}(\log n)^{3+1/\zeta},
\qquad
\sqrt{\tau_n}\Delta
\lesssim
n^{-1/6+3t/2}(\log n)^{3+1/\zeta}.
\]
For \(t=0\), the active stochastic term is the feasible-bootstrap term, whose
logarithmic factor is \((\log n)^{3+1/\zeta}\); all other stochastic terms
have strictly smaller powers of \(n\), so their fixed logarithmic factors are
absorbed. Theorem~\ref{thm:main-coverage} now proves part~(2).
\end{proof}

\subsection{Proof of Corollary~\ref{cor:subexp-density-subgaussian}}

\begin{proof}
Let \(\tau_n=n^t\) and \(\lambda=n^{-a}\). As in the proof of
Corollary~\ref{cor:subexp-density}, polynomial spectral decay verifies
Assumption~\ref{assump:pseudo-dimn} with \(d=1/\beta\), while exponential
spectral decay does so with any fixed \(d\in(0,1]\). Under
Definition~\ref{def:bernstein-moment-condition-for-weight}, we may choose a
fixed integer \(m_0\) sufficiently large that
\[
m_0-1>\frac{a(1+d)\gamma}{t}.
\]
Lemma~\ref{lem:bound-a-under-infinite-moment} then verifies
Assumption~\ref{assump:covariate-shift-estimation} with
\(\mathfrak C_1\leq\sqrt2\) for all sufficiently large \(n\).

For polynomial spectral decay, use
\[
a=\widetilde a_\beta=\frac{2\beta}{3\beta+2},
\qquad
t=\frac{\beta-3}{6(3\beta+2)}
=\frac{\widetilde r_\beta}{3}.
\]
By the calculation following
\eqref{eq:poly-subgaussian-sharp-optimizer}, restoring \(\tau_n=n^t\)
enlarges the approximation error by at most \(n^t\), and
Theorem~\ref{thm:main-coverage} contributes one further factor \(n^{t/2}\).
Consequently,
\[
\big|\PP\{f_*\notin\cs_\alpha(\what f)\}-\alpha\big|
\lesssim
n^{-1}
+n^{-\widetilde r_\beta+3t/2}\log^3n
\lesssim
n^{-\frac{\beta-3}{4(3\beta+2)}}\log^3n.
\]
The selected powers satisfy \(a+t<1\), so the sample-size conditions in the
polynomial approximation propositions hold for all sufficiently large \(n\).
This proves part~(1).

For exponential spectral decay, take \(a=2/3\) and \(t=1/150\). The
calculation following \eqref{eq:exp-subgaussian-total-envelope} gives
\[
\Delta\lesssim
n^{-1/6+t}(\log n)^{3+1/\zeta}
=n^{-4/25}(\log n)^{3+1/\zeta}.
\]
After multiplication by the anti-concentration factor \(\sqrt{\tau_n}\),
\[
\big|\PP\{f_*\notin\cs_\alpha(\what f)\}-\alpha\big|
\lesssim
n^{-1}
+n^{-1/6+3t/2}(\log n)^{3+1/\zeta}
\lesssim
n^{-47/300}(\log n)^{3+1/\zeta}.
\]
Here \(a+t<1\), and the strictly smaller polynomial powers of the remaining
terms absorb their fixed logarithmic factors. This proves part~(2).
\end{proof}

\subsection{Proof of Corollary~\ref{cor:moment-density-subgaussian}}

\begin{proof}
By the common moment-compatibility reduction in
\eqref{eq:moment-corollary-compatibility}, it is sufficient that
\[
\tau_n\gtrsim(\lambda^{-1}\pd)^{1/(\theta-1)}.
\]
The proof of
Proposition~\ref{prop:bootstrap-approximation-exponential-sub-gaussianity}
uses only this compatibility condition, the bounded truncated-weight
envelope, and the standing assumptions; the Bernstein-type condition is not
used.

Write \(\lambda=n^{-a}\) and \(\tau_n=n^t\). Under exponential spectral
decay,
\(\pd\lesssim\{\log(1/\lambda)\}^{1/\zeta}\). With
\[
a=\frac{8(\theta^2-1)}{3(4\theta^2+3\theta-3)},\qquad
t=\frac{19\theta+1}{6(4\theta^2+3\theta-3)},
\]
we have
\[
t-\frac{a}{\theta-1}
=\frac{\theta-5}{2(4\theta^2+3\theta-3)}>0.
\]
Thus the strict polynomial slack absorbs the logarithmic effective-dimension
factor and verifies \eqref{eq:moment-corollary-compatibility}.

After suppressing logarithmic factors, the four approximation terms in
\eqref{eq:exp-subgaussian-total-envelope} have powers
\[
\frac12-a,\qquad
-\frac12+\frac a2+t,\qquad
-\frac12+\frac t2,\qquad
-\frac14+\frac t2.
\]
The feasible-bootstrap term is the largest. Indeed, its excess over the bias
term equals
\[
-1+\frac{3a}{2}+t
=\frac{\theta-5}{6(4\theta^2+3\theta-3)}>0,
\]
and its excesses over the final two terms are, respectively,
\((a+t)/2>0\) and
\(-1/4+(a+t)/2>0\). Its negative exponent is
\[
\frac12-\frac a2-t
=\frac{2\theta^2-5\theta-1}
{3(4\theta^2+3\theta-3)}.
\]
Therefore
\[
\Delta
\lesssim
n^{-\frac{2\theta^2-5\theta-1}
{3(4\theta^2+3\theta-3)}}
(\log n)^{3+1/\zeta}.
\]
Multiplying by \(\sqrt{\tau_n}\) gives
\[
\frac{2\theta^2-5\theta-1}
{3(4\theta^2+3\theta-3)}-\frac t2
=
\frac{(\theta-5)(8\theta+1)}
{12(4\theta^2+3\theta-3)},
\]
which is positive for \(\theta>5\). The active feasible-bootstrap term has
logarithmic factor \((\log n)^{3+1/\zeta}\), while all remaining terms have
strictly smaller powers of \(n\).

Finally,
\[
1-a-t
=\frac{8\theta^2-\theta-3}
{6(4\theta^2+3\theta-3)}>0,
\]
so both sample-size restrictions hold for all sufficiently large \(n\).
Since the displayed coverage exponent is smaller than one, the \(n^{-1}\)
term in Theorem~\ref{thm:main-coverage} is absorbed. This proves the
corollary.
\end{proof}

\section{Proofs of Technical Lemmas}
\label{app:proofs-of-technical-lemmas}

This appendix collects auxiliary proofs and the derivation of the bootstrap
matrix representation.

\subsection{Additional lemmas}

\begin{lemma}
  \label{lem:tq-self-adjoint-in-l2q}
  Let $\opt: \mathsf{H} \to \mathsf{H}$ be a bounded linear self-adjoint
  operator on $\mathsf{H}$.
Then $\opt \oplq: \splq \to \splq$ is a self-adjoint operator in $\splq$.
\end{lemma}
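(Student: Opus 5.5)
We need to show that if $\opt$ is bounded and self-adjoint on $\sfH$, then $\opt\oplq$ is self-adjoint on $\splq$. The plan is to verify the defining identity
\[
\langle \opt\oplq f, g\rangle_{\splq}=\langle f, \opt\oplq g\rangle_{\splq}
\quad\text{for all } f,g\in\splq ,
\]
by moving everything into $\sfH$. The tool is the reproducing-property identity: for every $h\in\sfH$ and $g\in\splq$,
\[
\langle h, g\rangle_{\splq}=\int h(x)g(x)\,d\QQ(x)=\Big\langle h, \int k_x g(x)\,d\QQ(x)\Big\rangle_{\sfH}=\langle h,\oplq g\rangle_{\sfH}.
\]

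The steps are as follows.

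\textbf{Step 1: well-definedness.} First note that $\oplq$ maps $\splq$ into $\sfH$. The Bochner integral $\int k_x g(x)\,d\QQ(x)$ converges in $\sfH$ because $\|k_x\|_{\sfH}\le\kappa$ and $g\in\sfL^1(\QQ)$. So $\opt\oplq g\in\sfH$. Since $\sup_x|h(x)|\le\kappa\|h\|_{\sfH}$, every element of $\sfH$ is bounded, hence lies in $\splq$, and the composition $\opt\oplq:\splq\to\splq$ is defined. It is also bounded on $\splq$: the map $\oplq:\splq\to\sfH$ is bounded, and so is the inclusion $\sfH\to\splq$.

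\textbf{Step 2: the interchange.} The key identity above requires moving the $\sfH$ inner product inside the Bochner integral. This holds because $\langle h,\cdot\rangle_{\sfH}$ is a continuous linear functional and commutes with Bochner integration, combined with $\langle h,k_x\rangle_{\sfH}=h(x)$.

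\textbf{Step 3: the chain of equalities.} Apply the identity with $h=\opt\oplq f$, then use self-adjointness of $\opt$ in $\sfH$:
\[
\langle \opt\oplq f,g\rangle_{\splq}
=\langle \opt\oplq f,\oplq g\rangle_{\sfH}
=\langle \oplq f,\opt\oplq g\rangle_{\sfH}.
\]
Apply the identity once more, now with $h=\opt\oplq g$ and the roles of $f$ and $g$ swapped, to get $\langle f,\opt\oplq g\rangle_{\splq}$. This completes the symmetry.

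\textbf{Step 4: conclusion.} A bounded symmetric operator defined on all of $\splq$ is self-adjoint.

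The only real obstacle is the justification in Step 2. It is routine given the bounded kernel.
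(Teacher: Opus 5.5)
Your proof is correct and is essentially the paper's argument. Both rest on the duality $\langle g,h\rangle_{\splq}=\langle \oplq g,h\rangle_{\sfH}$ for $g\in\splq$, $h\in\sfH$, followed by the same chain of equalities using self-adjointness of $\opt$ in $\sfH$. The differences are minor. You derive the duality from the reproducing property under the Bochner integral rather than from the Mercer eigen-expansion; since this holds for every $g\in\splq$, it makes the paper's splitting $\splq=\overline{\ran(\oplq)}\oplus\ker(\oplq)$ unnecessary. You also record boundedness explicitly, so that symmetry yields self-adjointness.
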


\begin{lemma}
  \label{lem:bound-bahadur-2-2}
  For \(0<\eta<1\) and \(0<\lambda\leq1\), if
  \begin{equation*}
  \frac{(\tau_n \kappa^2 + 1)\log(e+\pd)}{n \lambda}
  <
  C_2
\end{equation*}
where $C_2$ is the constant in Lemma \ref{lem:bound-bahadur-2}, then, on the same event of probability at least $1 - \eta$ as in Lemma \ref{lem:bound-bahadur-1},
\begin{equation*}
  \bigl\|
    (\oplpnw + \lambda \opid)^{-\frac{1}{2}}
    (\oplpw + \lambda \opid)^{\frac{1}{2}}
  \bigr\|
  \le \sqrt{2}.
\end{equation*}
\end{lemma}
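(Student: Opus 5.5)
The statement is Lemma~\ref{lem:bound-bahadur-2-2}. It claims that \(\|(\oplpnw+\lambda\opid)^{-1/2}(\oplpw+\lambda\opid)^{1/2}\|\le\sqrt2\) on the event of Lemma~\ref{lem:bound-bahadur-1}, under the smallness condition with constant \(C_2\).

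The plan is to reduce this to the operator bound already proved in Lemma~\ref{lem:bound-bahadur-2}, using the \(C^*\)-identity \(\|T\|^2=\|T^*T\|\). Set
\[
T:=(\oplpnw+\lambda\opid)^{-1/2}(\oplpw+\lambda\opid)^{1/2}.
\]
Both square roots are bounded, self-adjoint and positive on \(\mathsf H\). The operator \((\oplpnw+\lambda\opid)^{-1/2}\) is bounded because \(\oplpnw\succeq 0\) and \(\lambda>0\). Hence
\[
T^*T=(\oplpw+\lambda\opid)^{1/2}(\oplpnw+\lambda\opid)^{-1}(\oplpw+\lambda\opid)^{1/2}.
\]
This is exactly the operator controlled in Lemma~\ref{lem:bound-bahadur-2}. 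On the same \(1-\eta\) event and under the same condition on \(C_2\), that lemma gives \(\|T^*T\|\le 2\), and therefore \(\|T\|\le\sqrt2\).

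For completeness, and in case one prefers not to cite Lemma~\ref{lem:bound-bahadur-2} as a black box, the bound can be rederived directly. Write
\[
A:=(\oplpw+\lambda\opid)^{-1/2}(\oplpnw-\oplpw)(\oplpw+\lambda\opid)^{-1/2},
\]
which is self-adjoint. Then
\[
(\oplpw+\lambda\opid)^{-1/2}(\oplpnw+\lambda\opid)(\oplpw+\lambda\opid)^{-1/2}=\opid+A.
\]
Taking inverses gives \(T^*T=(\opid+A)^{-1}\).

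Lemma~\ref{lem:bound-bahadur-1} bounds \(\|A\|\) by
\[
C_1(x+\sqrt x)\log(2/\eta),\qquad x=\frac{(\tau_n\kappa^2+1)\log(e+\pd)}{n\lambda}.
\]
The assumption \(x<C_2\le 1\) gives \(x\le\sqrt x\). The assumption \(x<C_2\le(4C_1\log(2/\eta))^{-2}\) gives \(\sqrt x\,C_1\log(2/\eta)<1/4\). Together these yield \(\|A\|<1/2\). Consequently \(\opid+A\succeq\tfrac12\opid\), so \(\|(\opid+A)^{-1}\|\le 2\) and \(\|T\|\le\sqrt2\).

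The only delicate step is the algebraic identification of \(T^*T\) with \((\opid+A)^{-1}\). This requires \(\opid+A\) to be boundedly invertible, which follows from \(\|A\|<1\) via a Neumann series. The factorization through the bounded invertible operator \((\oplpw+\lambda\opid)^{\pm1/2}\) is then justified.
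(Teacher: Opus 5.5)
Your proposal is correct and matches the paper's proof: the paper likewise writes $\|T\|^2=\sup_{\|f\|=1}\langle f,T^*Tf\rangle=\|T^*T\|$, identifies $T^*T$ with the operator in Lemma~\ref{lem:bound-bahadur-2}, and takes square roots. Your optional direct rederivation via $(\opid+A)^{-1}$ is also sound. Note that invertibility of $\opid+A$ is automatic, since it is a product of three bounded invertible operators; the condition $\|A\|<1/2$ is needed only to get $\|(\opid+A)^{-1}\|\le 2$.
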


\subsection{Proof of Lemma~\ref{lem:LQ_half_unitary_isomorphism}}

\begin{proof}
  Let \(f=\sum_{\mu_j>0} a_j\phi_j\in\overline{\ran(\oplq)}\). Then
  \[
  \oplq^{1/2}f=\sum_{\mu_j>0}\mu_j^{1/2}a_j\phi_j.
  \]
  By Lemma~\ref{lem:eigendecomposition_LQ}, this element belongs to \(\mathsf{H}\), and for
  \(g=\sum_{\mu_j>0} b_j\phi_j\in\overline{\ran(\oplq)}\),
  \[
  \left\langle \oplq^{1/2}f,\oplq^{1/2}g\right\rangle_{\mathsf{H}}
  =
  \sum_{\mu_j>0} a_j b_j
  =
  \langle f,g\rangle_{\splq}.
  \]
  Thus \(\oplq^{1/2}\) preserves inner products on \(\overline{\ran(\oplq)}\). It remains to note that it is onto \(\mathsf{H}\): if \(h=\sum_{\mu_j>0}h_j\phi_j\in\mathsf{H}\), then
  \[
  f_h:=\sum_{\mu_j>0}\frac{h_j}{\mu_j^{1/2}}\phi_j
  \]
  belongs to \(\overline{\ran(\oplq)}\), since
  \(\|f_h\|_{\splq}^2=\sum_{\mu_j>0}h_j^2/\mu_j=\|h\|_{\mathsf{H}}^2<\infty\), and
  \(\oplq^{1/2}f_h=h\). Hence \(\oplq^{1/2}:\overline{\ran(\oplq)}\to\mathsf{H}\) is a unitary isomorphism.
\end{proof}

\subsection{Proof of Lemma~\ref{lem:general-order-of-summation}}

\begin{proof}
  Since the eigenvalues are ordered non-increasingly, the definition of
  \(s^*\) implies that \(\mu_i>\lambda\) for \(i<s^*\) and
  \(\mu_i\le\lambda\) for \(i\ge s^*\). For sufficiently small
  \(\lambda\), \(s^*\ge2\). Hence, for \(i<s^*\),
  \[
    2^{-k}
    \le
    \left(\frac{\mu_i}{\mu_i+\lambda}\right)^k
    \le
    1,
  \]
  and therefore
  \[
    \sum_{i=1}^{s^*-1}
    \left(\frac{\mu_i}{\mu_i+\lambda}\right)^k
    \asymp s^* .
  \]
  On the other hand, for \(i>s^*\) we have \(\mu_i\le\lambda\), so
  \[
    2^{-k}\frac{\mu_i^k}{\lambda^k}
    \le
    \left(\frac{\mu_i}{\mu_i+\lambda}\right)^k
    \le
    \frac{\mu_i^k}{\lambda^k}.
  \]
  Consequently,
  \[
    \sum_{i=s^*+1}^{\infty}
    \left(\frac{\mu_i}{\mu_i+\lambda}\right)^k
    \asymp
    \frac{1}{\lambda^k}
    \sum_{i=s^*+1}^{\infty}\mu_i^k .
  \]
  The remaining term \(i=s^*\) is bounded above by one and below by
  zero, and is therefore absorbed by the \(s^*\) term. Combining the
  preceding two displays gives the claimed equivalence.
\end{proof}

\subsection{Proof of Lemma~\ref{lem:spectral-summation-rates}}

\begin{proof}
  In the polynomial case, \(s^*\asymp\lambda^{-1/\beta}\) and
  \[
    \lambda^{-k}\sum_{i=s^*+1}^{\infty}\mu_i^k
    \asymp
    \lambda^{-k}(s^*)^{1-k\beta}
    \asymp
    \lambda^{-1/\beta}.
  \]
  Lemma~\ref{lem:general-order-of-summation} gives the first rate.
  In the exponential case, \(s^*\asymp\{\log(1/\lambda)\}^{1/\gamma}\), while
  an integral comparison gives
  \[
    \lambda^{-k}\sum_{i=s^*+1}^{\infty}\mu_i^k
    \lesssim
    s^* .
  \]
  Lemma~\ref{lem:general-order-of-summation} then gives the second rate,
  since the leading \(s^*\) term also provides the matching lower bound.
\end{proof}

\subsection{Proof of Lemma~\ref{lem:sigma-lower-operator-bound}}

\begin{proof}
Decompose the influence element \(U_i\) in \eqref{eq:bahadur} as
\(U_{i,1}+U_{i,2}\), where
\begin{equation*}
  U_{i,1}
  = \bigl(\oplpw + \lambda \opid\bigr)^{-1} w(X_i)\,\eps_i k_{X_i},
  \qquad
  U_{i,2}
  = \bigl(\oplpw + \lambda \opid\bigr)^{-1}
    \bigl(w(X_i) k_{X_i} \otimes k_{X_i} - \oplpw\bigr)\bigl(f_* - \fpw\bigr).
\end{equation*}

As \( \mathbb{E}[\eps_i \mid X_i] = 0 \), we have
\begin{equation*}
  \begin{aligned}
  \opv
  &= \mathbb{E}\bigl[ U_i \otimes_{\splq} U_i \bigr]\\
  &= \mathbb{E}\bigl[ U_{i,1} \otimes_{\splq} U_{i,1} \bigr]
    + \mathbb{E}\bigl[ U_{i,2} \otimes_{\splq} U_{i,2} \bigr]\\
  &\succeq_{\splq} \mathbb{E}\bigl[ U_{i,1} \otimes_{\splq} U_{i,1} \bigr]\\
  &\succeq_{\splq} \underline{\sigma}^2 \,
      \mathbb{E}\Bigl\{
        \bigl[\bigl(\oplpw + \lambda \opid\bigr)^{-1} w(X_i) k_{X_i}\bigr]
        \otimes_{\splq}
        \bigl[\bigl(\oplpw + \lambda \opid\bigr)^{-1} w(X_i) k_{X_i}\bigr]
      \Bigr\}.
      \end{aligned}
\end{equation*}
\end{proof}

\subsection{Proof of Lemma~\ref{lem:simplification-of-a-useful-operator}}

\begin{proof}
For any \(f \in \splq\),
\begin{equation*}
\begin{aligned}
&\mathbb{E}\!\left\{
w(X_i)\left[
(\oplpw+\lambda \opid)^{-1} k_{X_i}
\right]
\otimes_{\splq}
\left[
\bigl((\oplpw+\lambda \opid)^{-1} k_{X_i}\bigr)
\right]
\right\} f\\
=& \mathbb{E}\!\left[
w(X_i)\,
\left\langle (\oplpw+\lambda \opid)^{-1} k_{X_i}, f \right\rangle_{\splq}\,
(\oplpw+\lambda \opid)^{-1} k_{X_i}
\right] \\
=& \mathbb{E}\!\left[
w(X_i)\,
\left\langle (\oplpw+\lambda \opid)^{-1} k_{X_i}, \oplq f \right\rangle\,
(\oplpw+\lambda \opid)^{-1} k_{X_i}
\right] \\
=& \mathbb{E}\!\left[
w(X_i)\,
(\oplpw+\lambda \opid)^{-1}
\left\langle k_{X_i}, (\oplpw+\lambda \opid)^{-1} \oplq f \right\rangle\,
k_{X_i}
\right] \\
=& (\oplpw+\lambda \opid)^{-1} \oplpw (\oplpw+\lambda \opid)^{-1} \oplq f.
\end{aligned}
\end{equation*}
\end{proof}

\subsection{Proof of Lemma~\ref{lem:same-spectrum-of-tlq-and-lq-half-tlq-half}}

\begin{proof}
  Let
  \[
    \mathsf H_Q:=\overline{\ran(\oplq)}
    \quad\text{and}\quad
    U:=\oplq^{1/2}:\mathsf H_Q\to\mathsf H.
  \]
  By Lemma~\ref{lem:LQ_half_unitary_isomorphism}, \(U\) is a unitary
  isomorphism.  Moreover,
  \(\splq=\mathsf H_Q\oplus\ker(\oplq)\), and
  \(\opt\oplq\) vanishes on \(\ker(\oplq)\).  For \(f,g\in\mathsf H_Q\),
  \[
  \begin{aligned}
    \langle f,\opt\oplq g\rangle_{\splq}
    &=
    \left\langle
      Uf,\,
      \oplq^{1/2}\opt\oplq^{1/2}Ug
    \right\rangle_{\mathsf H}.
  \end{aligned}
  \]
  Since \(\opt\) is positive self-adjoint,
  \(\oplq^{1/2}\opt\oplq^{1/2}\) is positive self-adjoint.  The preceding
  identity therefore shows that the restriction of \(\opt\oplq\) to
  \(\mathsf H_Q\) is also positive self-adjoint.  Its nonzero eigenvalues are
  precisely those of \(\opt\oplq\) on \(\splq\).

  We may now apply the min--max theorem and
  Lemma~\ref{lem:LQ_half_unitary_isomorphism}:
  \begin{equation*}
    \begin{aligned}
    \mu_k(\opt \oplq, \splq)
    &= \max_{\substack{V \subset \mathsf H_Q \\ \dim(V) = k}}
    \min_{\substack{f \in V \\ \|f\|_{\splq} = 1}}
    \langle f, \opt \oplq f \rangle_{\splq} \\
    &= \max_{\substack{V \subset \mathsf H_Q \\ \dim(V) = k}}
    \min_{\substack{f \in V \\ \|\oplq^{1/2} f\|_{\mathsf{H}} = 1}}
    \langle \oplq^{1/2} f, \oplq^{1/2} \opt \oplq^{1/2} \oplq^{1/2} f \rangle_{\mathsf{H}} \\
    &= \max_{\substack{W \subset \mathsf{H} \\ \dim(W) = k}}
    \min_{\substack{g \in W \\ \|g\|_{\mathsf{H}} = 1}}
    \langle g, \oplq^{1/2} \opt \oplq^{1/2} g \rangle_{\mathsf{H}} \\
    &= \mu_k(\oplq^{1/2} \opt \oplq^{1/2}, \mathsf{H}).
    \end{aligned}
  \end{equation*}
\end{proof}

\subsection{Proof of Lemma~\ref{lem:source-condition-bound-1}}

\begin{proof}
  If $0\le s\le \frac{1}{2}$ in Assumption \ref{assump:source-condition},
\begin{equation*}
    \begin{aligned}
  \left\| \bigl(\oplpw + \lambda \opid\bigr)^{-\frac{1}{2}} f_* \right\|
  =& \left\| \bigl(\oplpw + \lambda \opid\bigr)^{-\frac{1}{2}} \oplpwo^s g_{\ast} \right\| \\
  \le& \left\| \bigl(\oplpwo + \lambda \opid\bigr)^{-\frac{1}{2}} \oplpwo^s g_{\ast} \right\|\\
  \le& \lambda^{-\frac{1}{2} + s} \| g_{\ast} \|.
    \end{aligned}
\end{equation*}
If $\frac{1}{2} < s \le 1$ in Assumption \ref{assump:source-condition},
\begin{equation*}
\begin{aligned}
\left\|(\oplpw+\lambda \opid)^{-\frac{1}{2}} f_*\right\|
&\le \left\|\oplpwo^{s-\frac{1}{2}} g_{\ast}\right\| \\
&\le \left\|\oplq^{s-\frac{1}{2}} g_{\ast}\right\| \\
&\le \kappa^{2s-1}\left\|g_{\ast}\right\|.
\end{aligned}
\end{equation*}
In conclusion,
\begin{equation*}
\left\| (\oplpw + \lambda \opid)^{-\frac{1}{2}} f_* \right\|
\le \max\{\kappa^{2s-1}, 1\}\,\lambda^{\min\{s-\frac{1}{2}, 0\}}\,\|g_{\ast}\|.
\end{equation*}
\end{proof}

\subsection{Proof of Lemma~\ref{lem:bias-bound-in-h}}

\begin{proof}
Since
\[
  \fpw-f_*
  =
  -\lambda(\oplpw+\lambda\opid)^{-1}f_*,
\]
we have
\begin{align*}
  \|\fpw-f_*\|
  &=
  \lambda
  \left\|
    (\oplpw+\lambda\opid)^{-1}f_*
  \right\| \\
  &\le
  \lambda^{1/2}
  \left\|
    (\oplpw+\lambda\opid)^{-1/2}f_*
  \right\| \\
  &\le
  \max\{\kappa^{2s-1},1\}
  \lambda^{1/2+\min\{s-1/2,0\}}
  \|g_{\ast}\| \\
  &=
  \max\{\kappa^{2s-1},1\}
  \lambda^{\min\{s,1/2\}}
  \|g_{\ast}\|.
\end{align*}
The first inequality follows from
\[
  \left\|
    \lambda^{1/2}(\oplpw+\lambda\opid)^{-1/2}
  \right\|
  \le 1,
\]
and the second follows from
Lemma~\ref{lem:source-condition-bound-1}.
\end{proof}

\subsection{Proof of Lemma~\ref{lem:sigma-l2-trace-operator-norm-bound}}

\begin{proof}
  With the same notation as in the proof of Lemma \ref{lem:sigma-lower-operator-bound},
  \begin{equation*}
    \begin{aligned}
      \opv &= \mathbb{E} [U_{i,1} \otimes_{\splq} U_{i,1}]+ \mathbb{E} [U_{i,2} \otimes_{\splq} U_{i,2}] \\
      &=: \opv_1 + \opv_2
    \end{aligned}
  \end{equation*}

  \begin{equation*}
    \begin{aligned}
      \opv_1 &= \mathbb{E} \left\{ \left[(\oplpw + \lambda \opid)^{-1} w(X_i) \eps_i k_{X_i}\right] \otimes_{\splq} \left[(\oplpw + \lambda \opid)^{-1} w(X_i) \eps_i k_{X_i}\right] \right\} \\
&\preceq_{\splq} \tau_n \overline{\sigma}^2 \mathbb{E} \left\{ w(X_i) [(\oplpw + \lambda \opid)^{-1} k_{X_i}] \otimes_{\splq} [(\oplpw + \lambda \opid)^{-1} k_{X_i}] \right\}
    \end{aligned}
  \end{equation*}

By Lemma \ref{lem:simplification-of-a-useful-operator},
\begin{equation*}
\opv_1 \preceq_{\splq}\tau_n\overline{\sigma}^2\cdot
(\oplpw+\lambda \opid)^{-1} \oplpw (\oplpw+\lambda \opid)^{-1} \oplq .
\end{equation*}
Denote $\opt_1 = (\oplpw+\lambda \opid)^{-1} \oplpw (\oplpw+\lambda \opid)^{-1} \oplq$, $\opt_2 = \oplq^{\frac{1}{2}} (\oplpw+\lambda \opid)^{-1} \oplpw (\oplpw+\lambda \opid)^{-1} \oplq^{\frac{1}{2}}$.

With Lemma \ref{lem:same-spectrum-of-tlq-and-lq-half-tlq-half},
\begin{equation}\label{eq:t1-l2q-operator-norm-bound}
\begin{aligned}
\|\opt_1\|_{\splq} = \|\opt_2\|_\h \le \bigl(\mathfrak{C}_1\bigr)^2 .
\end{aligned}
\end{equation}
Then,
\begin{equation*}
\begin{aligned}
\|\opv_1\|_{\splq}
\le \tau_n \overline{\sigma}^2 \|\opt_1\|_{\splq}
\le \tau_n \overline{\sigma}^2 \bigl(\mathfrak{C}_1\bigr)^2 .
\end{aligned}
\end{equation*}

$\forall f \in \splq$,
\begin{equation*}
  \begin{aligned}
\left\langle f, \opv_1 f \right\rangle_{\splq}
\le \left\langle
f,\,
\tau_n \overline{\sigma}^2 \opt_1 f
\right\rangle_{\splq}
=\tau_n \overline{\sigma}^2
\left\langle
\oplq^{1/2} f,\,
\opt_2 \oplq^{1/2} f
\right\rangle .
  \end{aligned}
\end{equation*}

Similarly with Lemma \ref{lem:same-spectrum-of-tlq-and-lq-half-tlq-half},
\begin{equation}\label{eq:t1-l2-trace-bound}
\begin{aligned}
\operatorname{tr}_{\splq}(\opv_1)
&= \sum_{k=1}^\infty \mu_k(\opv_1 , \splq)\\
&\le \tau_n \overline{\sigma}^2 \sum_{k=1}^\infty
\mu_k(\opt_1,\splq) \\
&= \tau_n \overline{\sigma}^2 \sum_{k=1}^\infty
\mu_k(\opt_2, \h) \\
&= \tau_n \overline{\sigma}^2 \tr_{\mathsf{H}}(\opt_2) \\
&\leq \tau_n \overline{\sigma}^2
\bigl(\mathfrak{C}_1\bigr)^2
\tr_{\mathsf H}\bigl((\oplpw+\lambda\opid)^{-1}\oplpw\bigr) \\
&\leq \tau_n \overline{\sigma}^2
\bigl(\mathfrak{C}_1\bigr)^2 \pd.
\end{aligned}
\end{equation}
The last inequality follows from the domination of the truncated weighted-source
operator by the target operator.  Indeed, since \(w\le\rho\), for every
\(h\in\mathsf H\),
\[
  \langle h,(\oplq-\oplpw)h\rangle_{\mathsf H}
  =
  \int_{\mathfrak X}(\rho(x)-w(x))|h(x)|^2\,d\mathbb P(x)
  \ge0,
\]
and hence \(\oplpw\preceq_{\mathsf H}\oplq\).  The min--max principle and the
monotonicity of \(t\mapsto t/(t+\lambda)\) on \([0,\infty)\) therefore give
\[
  \tr_{\mathsf H}\bigl((\oplpw+\lambda\opid)^{-1}\oplpw\bigr)
  \le
  \tr_{\mathsf H}\bigl((\oplq+\lambda\opid)^{-1}\oplq\bigr)
  =
  \pd .
\]

\begin{equation*}
\begin{aligned}
\opv_2
&= \mathbb{E}\Bigl\{
\Bigl[
(\oplpw+\lambda \opid)^{-1}\bigl(w(X_i)\,k_{X_i}\otimes k_{X_i}-\oplpw\bigr)\,(f_*-\fpw)
\Bigr]\otimes_{\splq}
\\
&\hphantom{=\mathbb{E}\Bigl\{====}
\Bigl[
(\oplpw+\lambda \opid)^{-1}\bigl(w(X_i)\,k_{X_i}\otimes k_{X_i}-\oplpw\bigr)\,(f_*-\fpw)
\Bigr]
\Bigr\} \\
&\preceq_{\splq} \mathbb{E}\Bigl\{
\Bigl[
(\oplpw+\lambda \opid)^{-1}\bigl(w(X_i)\,k_{X_i}\otimes k_{X_i}\bigr)\,(f_*-\fpw)
\Bigr]\otimes_{\splq}
\\
&\hphantom{\le \mathbb{E}\Bigl\{======}
\Bigl[
(\oplpw+\lambda \opid)^{-1}\bigl(w(X_i)\,k_{X_i}\otimes k_{X_i}\bigr)\,(f_*-\fpw)
\Bigr]
\Bigr\} \\
&= \mathbb{E}\Bigl\{
\Bigl[w(X_i)\bigl(f_*(X_i)-\fpw(X_i)\bigr)\,(\oplpw+\lambda \opid)^{-1}k_{X_i}\Bigr]\otimes_{\splq}
\\
&\hphantom{=\mathbb{E}\Bigl\{======} 
\Bigl[w(X_i)\bigl(f_*(X_i)-\fpw(X_i)\bigr)\,(\oplpw+\lambda \opid)^{-1}k_{X_i}\Bigr]
\Bigr\} \\
&= \mathbb{E}\Bigl\{
w(X_i)\bigl(f_*(X_i)-\fpw(X_i)\bigr)^2 w(X_i)
\Bigl[(\oplpw+\lambda \opid)^{-1}k_{X_i}\Bigr]\otimes_{\splq}
\Bigl[(\oplpw+\lambda \opid)^{-1}k_{X_i}\Bigr]
\Bigr\} \\
&\preceq_{\splq}
\tau_n \kappa^2
\max\{\kappa^{2s-1},1\}^{2}
\lambda^{2\min\{s,1/2\}}
\norm{g_{\ast}}^2 \\
&\qquad {}\times
\mathbb{E}\Bigl\{
w(X_i)
\Bigl[(\oplpw+\lambda \opid)^{-1}k_{X_i}\Bigr]\otimes_{\splq}
\Bigl[(\oplpw+\lambda \opid)^{-1}k_{X_i}\Bigr]
\Bigr\} \\
&=
\tau_n \kappa^2
\max\{\kappa^{2s-1},1\}^{2}
\lambda^{2\min\{s,1/2\}} \\
&\qquad {}\times
\norm{g_{\ast}}^2\, \opt_1 .
\end{aligned}
\end{equation*}
where we use Lemma \ref{lem:bias-bound-in-h} to get the inequality
\begin{equation*}
\begin{aligned}
\bigl|f_*(X_i)-\fpw(X_i)\bigr|
= \bigl|\left\langle f_*-\fpw,\, k_{X_i}\right\rangle\bigr| 
\le \kappa \|f_*-\fpw\|
\le
\kappa
\max\{\kappa^{2s-1},1\}
\lambda^{\min\{s,1/2\}}
\|g_{\ast}\|.
\end{aligned}
\end{equation*}
Combining this bound with \eqref{eq:t1-l2q-operator-norm-bound} and
\eqref{eq:t1-l2-trace-bound} proves the result.
\end{proof}

\subsection{Proof of Lemma~\ref{lem:anti-concentration-variance-lower-bound}}

\begin{proof}
  With Lemma \ref{lem:sigma-lower-operator-bound} and \ref{lem:simplification-of-a-useful-operator},
  \begin{equation*}
    \begin{aligned}
\opv
&\succeq_{\splq} \underline{\sigma}^2 \,\mathbb{E}\Bigl\{
\Bigl[(\oplpw + \lambda \opid)^{-1} w(X_i) k_{X_i}\Bigr] \otimes_{\splq} \Bigl[(\oplpw + \lambda \opid)^{-1} w(X_i) k_{X_i}\Bigr]
\Bigl\} \\
&\succeq_{\splq} \underline{\rho} \underline{\sigma}^2 (\oplpw + \lambda \opid)^{-1} \oplpw (\oplpw + \lambda \opid)^{-1} \oplq .
    \end{aligned}
  \end{equation*}
  For any $f \in \splq$,
  \begin{equation*}
    \begin{aligned}
\langle f, \opv f \rangle_{\splq}
&\ge \underline{\rho} \underline{\sigma}^2 \langle f, (\oplpw + \lambda \opid)^{-1} \oplpw (\oplpw + \lambda \opid)^{-1} \oplq f \rangle_{\splq} \\
&= \underline{\rho} \underline{\sigma}^2 \langle \oplq f, (\oplpw + \lambda \opid)^{-1} \oplpw (\oplpw + \lambda \opid)^{-1} \oplq f \rangle_\h \\
&\ge \bigl(\mathfrak{C}_2\bigr)^{-2}
\underline{\rho} \underline{\sigma}^2
\langle \oplq f, (\oplq + \lambda \opid)^{-1} \oplq (\oplq + \lambda \opid)^{-1} \oplq f \rangle_{\h} \\
&= \bigl(\mathfrak{C}_2\bigr)^{-2}
\underline{\rho} \underline{\sigma}^2
\langle f, (\oplq + \lambda \opid)^{-1} \oplq (\oplq + \lambda \opid)^{-1} \oplq f \rangle_{\splq}.
    \end{aligned}
  \end{equation*}
\end{proof}

\subsection{Proof of Lemma~\ref{lem:bound-bahadur-1}}

\begin{proof}
Let
\[
\eta(x) = (\oplpw + \lambda \opid)^{-\frac12}\, w(x)\,(k_x \otimes k_x)(\oplpw + \lambda \opid)^{-\frac12}
\]

$\forall h \in \mathsf{H}$,
\begin{equation*}
\begin{aligned}
\eta(x)h
&= w(x)\,(\oplpw + \lambda \opid)^{-\frac12} 
   \langle k_x, (\oplpw + \lambda \opid)^{-\frac12} h \rangle \, k_x \\
\left[\eta(x)\right]^2 h
&= w^2(x)\,(\oplpw + \lambda \opid)^{-\frac12} 
   \langle k_x, (\oplpw + \lambda \opid)^{-1} \langle k_x, (\oplpw + \lambda \opid)^{-\frac12} h \rangle k_x \rangle \, k_x \\
&= w^2(x)\,\langle k_x, (\oplpw + \lambda \opid)^{-1} k_x \rangle\,
   (\oplpw + \lambda \opid)^{-\frac12} (k_x \otimes k_x)(\oplpw + \lambda \opid)^{-\frac12} h .
\end{aligned}
\end{equation*}

Thus,
\begin{equation*}
\begin{aligned}
\left[\eta(x)\right]^2 
&= w^2(x)\,\langle k_x, (\oplpw + \lambda \opid)^{-1} k_x \rangle (\oplpw + \lambda \opid)^{-\frac12} (k_x \otimes k_x)(\oplpw + \lambda \opid)^{-\frac12}.
\end{aligned}
\end{equation*}

As
\begin{equation*}
\begin{aligned}
 \bigl|\langle k_x , (\oplpw + \lambda \opid)^{-1} k_x \rangle\bigr|
&= \bigl|\operatorname{tr}\bigl((\oplpw + \lambda \opid)^{-1} k_x \otimes k_x \bigr)\bigr| \\
&\leq \lambda^{-1} \kappa^2
\end{aligned}
\end{equation*}

\begin{equation*}
\begin{aligned}
\bigl\| \mathbb{E}[(\eta(x))^2] \bigr\|
&\leq \frac{\tau_n\, \kappa^2}{\lambda}\,
   \bigl\| \mathbb{E}\bigl[ w(x) (\oplpw + \lambda \opid)^{-\frac12}
      (k_x \otimes k_x) (\oplpw + \lambda \opid)^{-\frac12} \bigr] \bigr\| \\
&= \frac{\tau_n\, \kappa^2}{\lambda}\,
   \bigl\| (\oplpw + \lambda \opid)^{-\frac12} \oplpw (\oplpw + \lambda \opid)^{-\frac12} \bigr\| \\
&\leq \frac{\tau_n\, \kappa^2}{\lambda}\, .
\end{aligned}
\end{equation*}

Since \(w_n\leq\rho\), we have \(\oplpw\preceq\oplq\).  The map
\(A\mapsto A(A+\lambda\opid)^{-1}\) is operator monotone on the cone of
positive operators.  Therefore,
\[
\operatorname{Tr}\!\left\{
(\oplpw+\lambda\opid)^{-1}\oplpw
\right\}
\leq
\operatorname{Tr}\!\left\{
(\oplq+\lambda\opid)^{-1}\oplq
\right\}
=\pd,
\]
and hence
\begin{equation*}
    \operatorname{Tr}\Bigl( \mathbb{E}\bigl[(\eta(x))^2\bigr] \Bigr)
\leq \tau_n\, \kappa^2 \lambda^{-1}\, \pd.
\end{equation*}

Let
\[
\xi(x) = \eta(x) - \mathbb{E}[\eta(x)].
\]

\[
\|\xi(x)\|
\leq \|\eta(x)\| + \|\mathbb{E}[\eta(x)]\|
\leq \frac{\tau_n \kappa^2}{\lambda} + 1.
\]

\begin{equation*}
\begin{aligned}
\bigl\| \mathbb{E}\Bigl[\Bigl(\sum_{i=1}^n \xi(X_i)\Bigr)^2\Bigr] \bigr\|
&= \Bigl\| \sum_{i=1}^n \mathbb{E}[\xi(X_i)^2] \Bigr\|
= n \bigl\|\mathbb{E}\eta^2 - (\mathbb{E}\eta)^2\bigr\| \\
&\leq n \bigl\|\mathbb{E}\eta^2\bigr\| + n \bigl\|(\mathbb{E}\eta)^2\bigr\|
\leq n\Bigl(\frac{\tau_n \kappa^2}{\lambda} + 1\Bigr).
\end{aligned}
\end{equation*}

\begin{equation*}
\begin{aligned}
\operatorname{Tr}\Bigl( \mathbb{E}\Bigl[\Bigl(\sum_{i=1}^n \xi(X_i)\Bigr)^2\Bigr] \Bigr)
&= n\, \operatorname{Tr}\Bigl( \mathbb{E}\bigl[(\eta)^2\bigr] - (\mathbb{E}\eta)^2 \Bigr)\\
&\leq  n\operatorname{Tr}\bigl(\mathbb{E}[(\eta)^2]\bigr)\\
&\leq n \tau_n\, \kappa^2 \lambda^{-1} \pd \\
&\leq n\Bigl(\frac{\tau_n \kappa^2}{\lambda} + 1\Bigr) \pd.
\end{aligned}
\end{equation*}

Consider \(0<\lambda\leq1\).

Applying Lemma \ref{lem:hs-inequality} with
\[
R = \frac{\tau_n \kappa^2 + 1}{\lambda},
\qquad
V = n \frac{\tau_n \kappa^2 + 1}{\lambda},
\qquad
W = e+\pd
\]

\[
\forall\, t \ge 
\sqrt{\frac{\tau_n \kappa^2 + 1}{n\lambda}}
+ \frac{\tau_n \kappa^2 + 1}{3 n \lambda}
\]

\begin{equation*}
    \begin{aligned}
&\mathbb{P}\Bigl(
  \bigl\|
    (\oplpw + \lambda \opid)^{-\frac12}
    (\oplpnw - \oplpw)
    (\oplpw + \lambda \opid)^{-\frac12}
  \bigr\|
  \ge t
\Bigr)\\
=&
\mathbb{P}\Bigl(
  \Bigl\|
    \frac{1}{n} \sum_{i=1}^n \xi(X_i)
  \Bigr\|
  \ge t
\Bigr)\\
\le&
4(e+\pd)
\exp\Bigl(
  -\frac{
    \lambda\, n\, t^2
  }{
    2\, (\tau_n \kappa^2 + 1)
    + \dfrac{2t(\tau_n \kappa^2 + 1)}{3}
  }
\Bigr).
    \end{aligned}
\end{equation*}

Let \(\epsilon>0\) be the positive solution of
\[
\frac{n \lambda\, \epsilon^2}
{2(\tau_n \kappa^2 + 1)\bigl(1 + \frac{\epsilon}{3}\bigr)}
=
\log\!\left(\frac{4(e+\pd)}{\eta}\right).
\]
Solving this quadratic equation and using
\(\sqrt{x^2+y}\leq x+\sqrt{y}\) for \(x,y\geq0\) give
\[
\begin{aligned}
\epsilon
={}&
\frac{\tau_n\kappa^2+1}{3n\lambda}
\log\!\left(\frac{4(e+\pd)}{\eta}\right)\\
&\quad+
\sqrt{
\left\{
\frac{\tau_n\kappa^2+1}{3n\lambda}
\log\!\left(\frac{4(e+\pd)}{\eta}\right)
\right\}^{2}
+
\frac{2(\tau_n\kappa^2+1)}{n\lambda}
\log\!\left(\frac{4(e+\pd)}{\eta}\right)
}\\
\leq{}&
\frac{2(\tau_n\kappa^2+1)}{3n\lambda}
\log\!\left(\frac{4(e+\pd)}{\eta}\right)
+
\sqrt{
\frac{2(\tau_n\kappa^2+1)}{n\lambda}
\log\!\left(\frac{4(e+\pd)}{\eta}\right)
}.
\end{aligned}
\]
Since \(\log\{4(e+\pd)/\eta\}>1\), this choice also satisfies the
applicability threshold of Lemma~\ref{lem:hs-inequality}; indeed,
\[
\epsilon
\geq
\sqrt{\frac{\tau_n\kappa^2+1}{n\lambda}}
+
\frac{\tau_n\kappa^2+1}{3n\lambda}.
\]
Consequently,
\[
\begin{aligned}
&\mathbb{P}\Bigl(
  \bigl\|
    (\oplpw + \lambda \opid)^{-\frac12}
    (\oplpnw - \oplpw)
    (\oplpw + \lambda \opid)^{-\frac12}
  \bigr\|
  \geq \epsilon
\Bigr)\\
&\qquad\leq
4(e+\pd)
\exp\!\left\{
-\log\!\left(\frac{4(e+\pd)}{\eta}\right)
\right\}
=\eta.
\end{aligned}
\]
Moreover, because \(\log(e+\pd)\geq1\) and
\(\log(2/\eta)\geq\log2\),
\[
\log\!\left(\frac{4(e+\pd)}{\eta}\right)
=
\log(e+\pd)+\log\!\left(\frac{4}{\eta}\right)
\lesssim
\log(e+\pd)\log\!\left(\frac{2}{\eta}\right).
\]
Together with
\(\{\log(2/\eta)\}^{1/2}\lesssim\log(2/\eta)\), substituting this comparison
into the preceding upper bound for \(\epsilon\)
proves the first assertion after increasing \(C_1\) by a universal constant.
Under the additional condition in the second assertion, the nonnegative ratio
satisfies
\[
\frac{\bigl(\tau_n \kappa^2 + 1\bigr)\log(e+\pd)}
     {n\lambda}
\leq
\sqrt{
  \frac{\bigl(\tau_n \kappa^2 + 1\bigr)\log(e+\pd)}
       {n\lambda}
}.
\]
Thus, the second assertion follows from the first after absorbing the factor
\(2C_1\) into the implicit constant.
\end{proof}

\subsection{Proof of Lemma~\ref{lem:bound-bahadur-2}}

\begin{proof}
With Lemma \ref{lem:bound-bahadur-1}, when
\begin{equation*}
  \frac{(\tau_n \kappa^2 + 1)\log(e+\pd)}{n \lambda}
  <
  \min \left\{
    1,\,
    \frac{1}{\bigl(4 C_1 \log \tfrac{2}{\eta}\bigr)^2}
  \right\},
\end{equation*}
\begin{equation*}
  \bigl\|
    (\oplpw + \lambda \opid)^{-\frac{1}{2}}
    (\oplpw - \oplpnw)
    (\oplpw + \lambda \opid)^{-\frac{1}{2}}
  \bigr\|
  \le \frac{1}{2},
\end{equation*}
then
\begin{equation*}
\begin{aligned}
  &(\oplpw + \lambda \opid)^{\frac{1}{2}}
    (\oplpnw + \lambda \opid)^{-1}
    (\oplpw + \lambda \opid)^{\frac{1}{2}} \\
  =&
    (\oplpw + \lambda \opid)^{\frac{1}{2}}
    \Big[
      (\oplpnw + \lambda \opid)^{-1}
      - (\oplpw + \lambda \opid)^{-1}
    \Big]
    (\oplpw + \lambda \opid)^{\frac{1}{2}}
    + \opid \\
  =&
    I
    +
    (\oplpw + \lambda \opid)^{-\frac{1}{2}}
    (\oplpw - \oplpnw)
    (\oplpw + \lambda \opid)^{-\frac{1}{2}}
    (\oplpw + \lambda \opid)^{\frac{1}{2}}
    (\oplpnw + \lambda \opid)^{-1}
    (\oplpw + \lambda \opid)^{\frac{1}{2}}.
\end{aligned}
\end{equation*}
Thus,
\begin{equation*}
\begin{aligned}
  \bigl\|
    &(\oplpw + \lambda \opid)^{\frac{1}{2}}
    (\oplpnw + \lambda \opid)^{-1}
    (\oplpw + \lambda \opid)^{\frac{1}{2}}
  \bigr\|\\
  \le&
  1
  +
  \frac{1}{2}
  \bigl\|
    (\oplpw + \lambda \opid)^{\frac{1}{2}}
    (\oplpnw + \lambda \opid)^{-1}
    (\oplpw + \lambda \opid)^{\frac{1}{2}}
  \bigr\|,
\end{aligned}
\end{equation*}
and hence
\begin{equation*}
  \bigl\|
    (\oplpw + \lambda \opid)^{\frac{1}{2}}
    (\oplpnw + \lambda \opid)^{-1}
    (\oplpw + \lambda \opid)^{\frac{1}{2}}
  \bigr\|
  \le 2.
\end{equation*}
\end{proof}


\subsection{Proof of Lemma~\ref{lem:bound-bahadur-3}}

\begin{proof}
\[ (\oplpw+\lambda \opid)^{-\frac{1}{2}}  \left(\frac{1}{n}\vkx^{\top}\bW\beps\right) = \frac{1}{n} \sum_{i=1}^n (\oplpw+\lambda \opid)^{-\frac{1}{2}} w(X_i) \eps_i k_{X_i} \]
\[ \mathbb E[ (\oplpw+\lambda \opid)^{-\frac{1}{2}} w(X_i) \eps_i k_{X_i} ] = 0. \]
\[ \| (\oplpw+\lambda \opid)^{-\frac{1}{2}} w(X_i) \eps_i k_{X_i} \| \le \tau_n \frac{\kappa \cdot \overline{\sigma}}{\sqrt{\lambda}} \]

\begin{equation*}
    \begin{aligned}
    & \mathbb{E} \| (\oplpw+\lambda \opid)^{-\frac{1}{2}} w(X_i) \eps_i k_{X_i} \|^2 \\
    & \le \overline{\sigma}^2 \cdot \mathbb{E} \| (\oplpw+\lambda \opid)^{-\frac{1}{2}} w(X_i) k_{X_i} \|^2 \\
    & = \overline{\sigma}^2 \cdot \mathbb{E} \tr ([ (\oplpw+\lambda \opid)^{-\frac{1}{2}} w(X_i) k_{X_i} ] \otimes [ (\oplpw+\lambda \opid)^{-\frac{1}{2}} w(X_i) k_{X_i} ]) \\
    & \le \tau_n \cdot \overline{\sigma}^2 \cdot \pd
    \end{aligned}
\end{equation*}

For
\[
Z_i
:=
(\oplpw+\lambda\opid)^{-\frac12}
w(X_i)\eps_i k_{X_i},
\]
the preceding bounds imply
\[
\sum_{i=1}^n\mathbb E\|Z_i\|^2
\leq
n\tau_n\overline{\sigma}^2\pd,
\qquad
\|Z_i\|
\leq
\frac{\tau_n\kappa\overline{\sigma}}{\sqrt{\lambda}}
\quad\text{a.s.}
\]
Therefore, the Hilbert-space Bernstein inequality
\cite[Corollary~1]{pinelis1986remarks} yields, with probability at least
\(1-\eta\),
\[
\left\|
\frac1n\sum_{i=1}^n Z_i
\right\|
\lesssim
\overline{\sigma}
\left[
\sqrt{
\frac{\tau_n\pd}{n}
\log\!\left(\frac{2}{\eta}\right)
}
+
\frac{\tau_n\kappa}{n\sqrt{\lambda}}
\log\!\left(\frac{2}{\eta}\right)
\right].
\]
Since
\(\sqrt{\log(2/\eta)}\lesssim\log(2/\eta)\) and
\(a+b\leq2\max\{a,b\}\), the asserted bound follows.
\end{proof}

\subsection{Proof of Lemma~\ref{lem:feasible-bootstrap-decomposition}}

\begin{proof}
\begin{equation*}
\begin{aligned}
B - B_{\oracle}
&= \frac{1}{n}\sum_{i=1}^n \sum_{j=1}^n h_{ij}\,\frac{\what V_i-\what V_j}{\sqrt{2}}
 - \frac{1}{n}\sum_{i=1}^n \sum_{j=1}^n h_{ij}\,\frac{V_i-V_j}{\sqrt{2}} \\
&= \frac{1}{n}\sum_{i=1}^n \sum_{j=1}^n h_{ij}\,\oplpnwl^{-1}\hat{\alpha}_{ij}
 - \frac{1}{n}\sum_{i=1}^n \sum_{j=1}^n h_{ij}\,\oplpwl^{-1}\alpha_{ij} \\
&= \frac{1}{n}\sum_{i=1}^n \sum_{j=1}^n h_{ij}\,
\bigl(\oplpnwl^{-1}-\oplpwl^{-1}\bigr)\hat{\alpha}_{ij}
+ \frac{1}{n}\sum_{i=1}^n \sum_{j=1}^n h_{ij}\,
\oplpwl^{-1}\bigl(\hat{\alpha}_{ij}-\alpha_{ij}\bigr) .
\end{aligned}
\end{equation*}
\end{proof}

\subsection{Proof of Lemma~\ref{lem:estimation-error-bound-in-h}}

\begin{lemma}
  \label{lem:pre-estimation-error-bound-in-h}
  Under Assumption \ref{assump:source-condition}, when
  \(\tau_n\geq\tau_0\), if
  \[
  \left\|(\oplpw + \lambda \opid)^{-\frac{1}{2}} (\oplpnw - \oplpw)(\oplpw + \lambda \opid)^{-\frac{1}{2}}\right\|
  < \delta \le \frac{1}{2}
  \]
and
  \[
  \left\|(\oplpw + \lambda \opid)^{-\frac{1}{2}} \left(\frac{1}{n}\vkx^{\top}\bW\beps\right)\right\|
  < \gamma,
  \]
then
\begin{equation*}
  \begin{aligned}
  \|\hatf - \fpw\|
  &\le \dfrac{2}{\sqrt{\lambda}}\left(\gamma + \delta \cdot \max\{\kappa^{2s-1}, 1\} \cdot \lambda^{\min\{s+\frac{1}{2}, 1\}}\cdot\|g_{\ast}\|\right), \\
  \|\hatf - f_*\|
  &\le \dfrac{2}{\sqrt{\lambda}}\left(\gamma +  \max\{\kappa^{2s-1}, 1\} \cdot \lambda^{\min\{s+\frac{1}{2}, 1\}}\cdot\|g_{\ast}\|\right).
  \end{aligned}
\end{equation*}
\end{lemma}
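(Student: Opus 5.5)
We will work from the exact error decomposition already used in the proof of Proposition~\ref{prop:main-bahadur}. Abbreviate $\oplpwl=\oplpw+\lambda\opid$, $\oplpnwl=\oplpnw+\lambda\opid$, $e:=n^{-1}\vkx^{\top}\bW\beps$ and $D:=\oplpnw-\oplpw$. The normal equations give
\[
\hatf-\fpw=\oplpnwl^{-1}e+\oplpnwl^{-1}D(f_*-\fpw).
\]
We also use $f_*-\fpw=\lambda\oplpwl^{-1}f_*$.

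The first step is to factor both terms through $\oplpwl^{\pm1/2}$. This gives
\[
\hatf-\fpw=\oplpwl^{-1/2}M\bigl[\oplpwl^{-1/2}e+\lambda\,\oplpwl^{-1/2}D\oplpwl^{-1/2}\,\oplpwl^{-1/2}f_*\bigr],
\qquad M:=\oplpwl^{1/2}\oplpnwl^{-1}\oplpwl^{1/2}.
\]
Next we bound $\|M\|\le 2$. The argument is the Neumann-type identity from the proof of Lemma~\ref{lem:bound-bahadur-2}: $M=\opid-\oplpwl^{-1/2}D\oplpwl^{-1/2}M$. Taking norms gives $\|M\|\le 1+\delta\|M\|$, so $\|M\|\le 1/(1-\delta)\le 2$ because $\delta\le 1/2$.

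With this bound in hand, we take $\mathsf H$-norms using $\|\oplpwl^{-1/2}\|\le\lambda^{-1/2}$. This yields
\[
\|\hatf-\fpw\|\le \frac{2}{\sqrt\lambda}\bigl(\gamma+\delta\lambda\|\oplpwl^{-1/2}f_*\|\bigr).
\]
We then apply Lemma~\ref{lem:source-condition-bound-1}, which gives $\lambda\|\oplpwl^{-1/2}f_*\|\le\max\{\kappa^{2s-1},1\}\lambda^{\min\{s+1/2,1\}}\|g_*\|$. This produces exactly the first displayed bound.

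For the second bound we write $\hatf-f_*=(\hatf-\fpw)+(\fpw-f_*)$. We bound $\|\fpw-f_*\|=\lambda\|\oplpwl^{-1}f_*\|\le\lambda^{1/2}\|\oplpwl^{-1/2}f_*\|$ and apply Lemma~\ref{lem:source-condition-bound-1} again, exactly as in Lemma~\ref{lem:bias-bound-in-h}. The bias term then contributes $\lambda^{-1/2}\max\{\kappa^{2s-1},1\}\lambda^{\min\{s+1/2,1\}}\|g_*\|$. Adding this to the first bound, and using $\delta\le1/2$, gives a total bias coefficient of $(2\delta+1)\lambda^{-1/2}\le 2\lambda^{-1/2}$. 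This is the stated second inequality.

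The only delicate point is the algebraic factorization. We must make sure that $D$ is sandwiched symmetrically, so that the hypothesis on $\|\oplpwl^{-1/2}D\oplpwl^{-1/2}\|$ can be used directly and the resolvent is controlled through $M$ without any extra probability event. Everything else is deterministic bookkeeping.
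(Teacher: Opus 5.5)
Your proof is correct and is essentially the paper's argument: both rest on the bound $\|(\oplpw+\lambda\opid)^{1/2}(\oplpnw+\lambda\opid)^{-1}(\oplpw+\lambda\opid)^{1/2}\|\le 2$, which follows from $\delta\le 1/2$. Both then use Lemma~\ref{lem:source-condition-bound-1} and the conversion $\|h\|\le\lambda^{-1/2}\|(\oplpw+\lambda\opid)^{1/2}h\|$. The only cosmetic difference is in how that bound enters. The paper splits $\hatf-\fpw$ into the linear Bahadur term plus a remainder whose $(\oplpw+\lambda\opid)^{1/2}$-weighted norm is at most $2\delta$ times the linear-term bound, giving the factor $1+2\delta\le 2$. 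You instead apply the exact identity $\hatf-\fpw=(\oplpnw+\lambda\opid)^{-1}\{n^{-1}\vkx^{\top}\bW\beps+(\oplpnw-\oplpw)(f_*-\fpw)\}$ directly and obtain $1/(1-\delta)\le 2$, which is marginally sharper but yields the same stated constant.
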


\begin{proof}[Proof of Lemma~\ref{lem:pre-estimation-error-bound-in-h}]
The Bahadur representation gives
\begin{equation*}
\begin{aligned}
\hatf - \fpw
&= (\oplpw + \lambda \opid)^{-1} \left[  \left(\frac{1}{n}\vkx^{\top}\bW\beps\right)
   +  (\oplpnw - \oplpw)(f_* - \fpw)\right] + u \\
&= \oplpwl^{-1} 
\left\{
  \mathbb{E}_n \bigl[w(X_i) k_{X_i} \eps_i\bigr]
   +  (\oplpnw - \oplpw)(f_* - \fpw)
\right\} + u .
\end{aligned}
\end{equation*}

\text{Here $u$ satisfies}
\begin{equation*}
\|(\oplpw+\lambda \opid)^{\frac{1}{2}} u\|
\le  
2\delta\left(\gamma + \delta \cdot \max\{\kappa^{2s-1}, 1\}
\cdot \lambda^{\min\{s+\frac{1}{2}, 1\}}\|g_{\ast}\|\right).
\end{equation*}

Then,
\begin{align*}
\left\|(\oplpw+\lambda \opid)^{\frac{1}{2}}(\hatf-\fpw)\right\|
&\le \gamma + \delta \cdot \left\|(\oplpw+\lambda \opid)^{\frac{1}{2}}(f_*-\fpw)\right\| \\
&\quad + 2\delta\left(\gamma + \delta \cdot \left\|(\oplpw+\lambda \opid)^{\frac{1}{2}}(f_*-\fpw)\right\|\right) \\
&\le 2\left(\gamma + \delta \cdot \max\{\kappa^{2s-1}, 1\} \cdot \lambda^{\min\{s+\frac{1}{2}, 1\}}\cdot\|g_{\ast}\|\right).
\end{align*}

\begin{align*}
\left\| (\oplpw+\lambda \opid)^{\frac{1}{2}}(\hatf-f_*)\right\|
&\le 2\left(\gamma + \delta \cdot \max\{\kappa^{2s-1}, 1\} \cdot \lambda^{\min\{s+\frac{1}{2}, 1\}}\cdot\|g_{\ast}\|\right) \\
&\quad + \max\{\kappa^{2s-1}, 1\} \cdot \lambda^{\min\{s+\frac{1}{2}, 1\}} \cdot \|g_{\ast}\| \\
&\le 2\left(\gamma + \max\{\kappa^{2s-1}, 1\} \cdot \lambda^{\min\{s+\frac{1}{2}, 1\}} \cdot \|g_{\ast}\|\right).
\end{align*}
\end{proof}

\begin{proof}[Proof of Lemma~\ref{lem:estimation-error-bound-in-h}]
Apply Lemmas~\ref{lem:bound-bahadur-1}, \ref{lem:bound-bahadur-2}, and
\ref{lem:bound-bahadur-3} with suitably split failure probabilities.  The
second sample-size condition implies
\[
\max\left\{
\sqrt{\frac{\tau_n\pd}{n}},
\frac{2\tau_n\kappa}{n\sqrt{\lambda}}
\right\}
\lesssim
\kappa\sqrt{\frac{\tau_n\pd}{n}}.
\]
The sample-size condition is meaningful only if \(\pd>0\), and hence
\(\calL_Q\neq0\).  Let \(\mu_1>0\) be a positive eigenvalue of \(\calL_Q\).
Since \(\lambda\to0\), for all sufficiently large \(n\),
\[
\pd
\geq
\frac{\mu_1}{\mu_1+\lambda}
\geq\frac12.
\]
It follows that \(\log(e+\pd)\lesssim\pd\).  Hence
Lemma~\ref{lem:bound-bahadur-1} and the first
sample-size condition give
\[
\left\|(\oplpw+\lambda\opid)^{-1/2}
(\oplpnw-\oplpw)
(\oplpw+\lambda\opid)^{-1/2}\right\|
\lesssim
\kappa\sqrt{\frac{\tau_n\pd}{n\lambda}}
\log\!\left(\frac{2}{\eta}\right),
\]
whereas Lemma~\ref{lem:bound-bahadur-3} gives
\[
\frac{1}{\sqrt{\lambda}}
\left\|(\oplpw+\lambda\opid)^{-1/2}
\left(\frac{1}{n}\vkx^{\top}\bW\beps\right)\right\|
\lesssim
\overline{\sigma}\kappa
\sqrt{\frac{\tau_n\pd}{n\lambda}}
\log\!\left(\frac{2}{\eta}\right).
\]
Substituting these two bounds into
Lemma~\ref{lem:pre-estimation-error-bound-in-h}, and using
\(\max\{\kappa^{2s-1},1\}\leq\kappa^{2s}\),
\(0<\lambda\leq1\), and
\(\|g_{\ast}\|_{\sfH}\leq1+\|g_{\ast}\|_{\sfH}\), proves both assertions.
\end{proof}

\subsection{Proof of Lemma~\ref{lem:feasible-bootstrap-inverse-bound}}

Denote \(\Delta_{\mathrm{inv}} = (\oplpnwl^{-1} - \oplpwl^{-1})u\),
where
\begin{equation*}
\begin{aligned}
u = \frac{1}{n}\sum_{i=1}^n \sum_{j=1}^n h_{ij}\,\hat{\alpha}_{ij}.
\end{aligned}
\end{equation*}

\begin{lemma}
  \label{lem:feasible-bootstrap-inverse-pre-bound}
Under Assumption~\ref{assump:covariate-shift-estimation}, writing
\(\lambda=\lambda_n\), we have
\begin{equation*}
\begin{aligned}
\|\Delta_{\mathrm{inv}}\|_{\splq}
&\le 
\mathfrak{C}_1 \|\oplpwl^{1/2} \oplpnwl^{-1} \oplpwl^{1/2}\| \|\oplpwl^{-1/2} (\oplpw - \oplpnw) \oplpwl^{-1/2}\| \|\oplpwl^{-1/2} u\|.
\end{aligned}
\end{equation*}
\end{lemma}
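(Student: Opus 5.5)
The plan is to reduce the target-norm bound to an $\mathsf H$-norm computation by first applying Assumption~\ref{assump:covariate-shift-estimation}, and then to factor the resulting $\mathsf H$-operator into the three displayed pieces using the resolvent identity already used in the proof of Proposition~\ref{prop:main-bahadur}. Throughout, $u=n^{-1}\sum_{i,j}h_{ij}\hat\alpha_{ij}\in\mathsf H$ is a finite combination of the $k_{X_i}$, so every operator below acts on an element of $\mathsf H$, and no domain issues arise.

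\textbf{Step 1 (transfer to $\mathsf H$).} Since $\Delta_{\mathrm{inv}}\in\mathsf H$, Lemma~\ref{lem:LQ_half_unitary_isomorphism} and the reproducing structure give $\|\Delta_{\mathrm{inv}}\|_{\splq}=\|\oplq^{1/2}\Delta_{\mathrm{inv}}\|$. Insert $\oplpwl^{-1/2}\oplpwl^{1/2}$ to obtain
\[
\|\oplq^{1/2}\Delta_{\mathrm{inv}}\|\le\|\oplq^{1/2}\oplpwl^{-1/2}\|\,\|\oplpwl^{1/2}\Delta_{\mathrm{inv}}\|\le\mathfrak C_1\|\oplpwl^{1/2}\Delta_{\mathrm{inv}}\|.
\]
This is exactly the step used for $\|u\|_{\splq}$ in the proof of Proposition~\ref{prop:main-bahadur}.

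\textbf{Step 2 (resolvent identity and factorization).} Use the second form of the identity derived there, namely
\[
\oplpnwl^{-1}-\oplpwl^{-1}=\oplpwl^{-1}(\oplpw-\oplpnw)\oplpnwl^{-1}.
\]
Multiplying on the left by $\oplpwl^{1/2}$ gives
\[
\oplpwl^{1/2}\Delta_{\mathrm{inv}}=\oplpwl^{-1/2}(\oplpw-\oplpnw)\oplpnwl^{-1}u.
\]
Now insert $\oplpwl^{-1/2}\oplpwl^{1/2}$ twice, once between $(\oplpw-\oplpnw)$ and $\oplpnwl^{-1}$, and once between $\oplpnwl^{-1}$ and $u$. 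This writes the expression as the product
\[
\bigl[\oplpwl^{-1/2}(\oplpw-\oplpnw)\oplpwl^{-1/2}\bigr]\bigl[\oplpwl^{1/2}\oplpnwl^{-1}\oplpwl^{1/2}\bigr]\bigl[\oplpwl^{-1/2}u\bigr].
\]
Submultiplicativity of the operator norm then yields the three factors in the stated order, up to commuting the scalar norms.

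\textbf{Main point of care.} The only real obstacle is justifying the insertions. The operators $\oplpwl$ and $\oplpnwl$ are bounded, self-adjoint, and bounded below by $\lambda\opid$ on $\mathsf H$, so their fractional powers, including the negative ones, are bounded and invertible. Hence every inserted identity $\oplpwl^{-1/2}\oplpwl^{1/2}=\opid$ is legitimate on all of $\mathsf H$. The step is purely deterministic: no probability enters, and the lemma holds for every realization of the data and of the multipliers.
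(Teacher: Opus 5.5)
Your proof is correct and is essentially the paper's argument: transfer to $\mathsf H$ via $\|\oplq^{1/2}\oplpwl^{-1/2}\|\le\mathfrak{C}_1$, apply the resolvent identity, insert $\oplpwl^{\pm 1/2}$, and use submultiplicativity. The only cosmetic difference is the form of the resolvent identity. The paper uses $\oplpnwl^{-1}(\oplpw-\oplpnw)\oplpwl^{-1}$, which puts the sandwiched-inverse factor before the perturbation factor exactly as displayed, while your mirrored form reverses their order; this is immaterial because the bound is a product of scalar norms.
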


\begin{proof}[Proof of Lemma~\ref{lem:feasible-bootstrap-inverse-pre-bound}]

\begin{equation*}
\begin{aligned}
\|\Delta_{\mathrm{inv}}\|_{\splq}
&= \bigl\|\oplq^{1/2}\Delta_{\mathrm{inv}}\bigr\| \\
&= \bigl\|\oplq^{1/2}(\oplpnwl^{-1}-\oplpwl^{-1})u\bigr\| \\
&= \bigl\|\oplq^{1/2}\oplpnwl^{-1}(\oplpw-\oplpnw)\oplpwl^{-1}u\bigr\| \\
&= \Bigl\|\Bigl(\oplq^{1/2}\oplpwl^{-1/2}\Bigr)
\Bigl(\oplpwl^{1/2}\oplpnwl^{-1}\oplpwl^{1/2}\Bigr)
\Bigl[\oplpwl^{-1/2}(\oplpw-\oplpnw)\oplpwl^{-1/2}\Bigr]
\oplpwl^{-1/2}u\Bigr\| \\
&\le \mathfrak{C}_1 \|\oplpwl^{1/2} \oplpnwl^{-1} \oplpwl^{1/2}\| \|\oplpwl^{-1/2} (\oplpw - \oplpnw) \oplpwl^{-1/2}\| \|\oplpwl^{-1/2} u\|.
\end{aligned}
\end{equation*}
\end{proof}

\begin{lemma}
  \label{lem:bound-of-oplpwlu-conditional-on-data}
  For \(0<\eta<1\) and \(\lambda>0\), conditionally on \(\mathcal D\),
  with probability at least \(1-\eta\),
\begin{equation*}
\begin{aligned}
\bigl\|\oplpwl^{-1/2}u\bigr\|
&\le 
\Bigl\{1+\sqrt{2\log\!\left(\frac{1}{\eta}\right)}\Bigr\}
\sqrt{\frac{4}{n}\sum_{i=1}^n
\bigl\|w(X_i)\,\eps_i\,\oplpwl^{-1/2}k_{X_i}\bigr\|^2
+ \frac{4}{n}\sum_{i=1}^n
\bigl\|\oplpwl^{-1/2}\oplpiw(f_*-\hatf)\bigr\|^2
 } .
\end{aligned}
\end{equation*}
\end{lemma}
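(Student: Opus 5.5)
Conditionally on $\mathcal D$, $u=\frac1n\sum_{i,j}h_{ij}\hat\alpha_{ij}$ is a centered Gaussian element of $\mathsf H$, because it is linear in the i.i.d.\ $N(0,1)$ multipliers $h_{ij}$. The same holds for $\oplpwl^{-1/2}u$, which is a deterministic bounded linear image of $u$. The plan is to bound its norm by a Gaussian concentration inequality around its mean norm and then compute the conditional second moment explicitly. Concretely, write $G:=\oplpwl^{-1/2}u$. By Jensen, $\mathbb E_h\|G\|\le(\mathbb E_h\|G\|^2)^{1/2}$. Borell--TIS, or equivalently Lipschitz concentration of $h\mapsto\|G\|$ with Lipschitz constant equal to the weak variance $\sigma_w:=\sup_{\|v\|\le1}(\mathbb E_h\langle v,G\rangle^2)^{1/2}\le(\mathbb E_h\|G\|^2)^{1/2}$, then gives, with conditional probability at least $1-\eta$,
\[
\|G\|\le \mathbb E_h\|G\|+\sigma_w\sqrt{2\log(1/\eta)}\le\Bigl\{1+\sqrt{2\log(1/\eta)}\Bigr\}\bigl(\mathbb E_h\|G\|^2\bigr)^{1/2}.
\]
This produces exactly the prefactor in the statement.

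It remains to bound $\mathbb E_h\|G\|^2$. Independence of the $h_{ij}$ gives $\mathbb E_h\|G\|^2=\frac1{n^2}\sum_{i\neq j}\|\oplpwl^{-1/2}\hat\alpha_{ij}\|^2$. Set $a_i:=w(X_i)\hat\eps_i\oplpwl^{-1/2}k_{X_i}$, so that $\oplpwl^{-1/2}\hat\alpha_{ij}=(a_i-a_j)/\sqrt2$. Using $\|a_i-a_j\|^2\le2\|a_i\|^2+2\|a_j\|^2$, we get
\[
\mathbb E_h\|G\|^2\le\frac{1}{n^2}\sum_{i,j}(\|a_i\|^2+\|a_j\|^2)=\frac2n\sum_i\|a_i\|^2 .
\]
Next split the residual. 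From $\hat\eps_i=Y_i-\hatf(X_i)=\eps_i+(f_*-\hatf)(X_i)$, and since $w(X_i)(f_*-\hatf)(X_i)k_{X_i}=\oplpiw(f_*-\hatf)$ by the reproducing property,
\[
a_i=w(X_i)\eps_i\oplpwl^{-1/2}k_{X_i}+\oplpwl^{-1/2}\oplpiw(f_*-\hatf).
\]
Applying $\|x+y\|^2\le2\|x\|^2+2\|y\|^2$ once more gives $\frac2n\sum_i\|a_i\|^2\le\frac4n\sum_i\|\cdot\|^2+\frac4n\sum_i\|\cdot\|^2$, matching the two terms under the square root.

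The main point to handle with care is the Gaussian concentration step in the infinite-dimensional Hilbert space $\mathsf H$. Here I would note that, conditionally on $\mathcal D$, $G$ is a finite linear combination of the fixed vectors $\oplpwl^{-1/2}\hat\alpha_{ij}$, hence a Gaussian vector in a finite-dimensional subspace. The norm is then a $\sigma_w$-Lipschitz function of the standard Gaussian vector $(h_{ij})$, so the standard Gaussian concentration inequality applies directly, with no separability issues. Everything else is elementary.
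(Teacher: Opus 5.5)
Your proposal is correct and follows the paper's proof step by step. Both arguments apply Borell's inequality to the conditionally Gaussian element $\oplpwl^{-1/2}u$ and use the identity $\mathbb{E}_h\|\oplpwl^{-1/2}u\|^2=n^{-2}\sum_{i,j}\|\oplpwl^{-1/2}\hat\alpha_{ij}\|^2$. They then apply $\|x+y\|^2\le 2\|x\|^2+2\|y\|^2$ twice, the second time after writing $\hat\eps_i=\eps_i+(f_*-\hatf)(X_i)$.

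Your additional justifications are sound and only make explicit what the paper's one-line appeal to Borell's inequality leaves implicit: that the weak variance is at most $\mathbb{E}_h\|\oplpwl^{-1/2}u\|^2$, and that the element lies in a finite-dimensional span, so Lipschitz concentration applies directly.
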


\begin{proof}[Proof of Lemma~\ref{lem:bound-of-oplpwlu-conditional-on-data}]
\(\oplpwl^{-1/2}u\mid\mathcal{D}\) is Gaussian.
By Borell's inequality under \(\mathbb{P}(\cdot\mid\mathcal{D})\), with probability \(1-\eta\),
\begin{equation*}
\begin{aligned}
\bigl\|\oplpwl^{-1/2}u\bigr\|
&\le \Bigl\{1+\sqrt{2\log\!\left(\frac{1}{\eta}\right)}\Bigr\}
\sqrt{\mathbb{E}_h\bigl\|\oplpwl^{-1/2}u\bigr\|^2}.
\end{aligned}
\end{equation*}

\begin{equation*}
\begin{aligned}
\mathbb{E}_h\bigl\|\oplpwl^{-1/2}u\bigr\|^2
&= \frac{1}{n^2}\sum_{i=1}^n\sum_{j=1}^n
\bigl\|\oplpwl^{-1/2}\hat{\alpha}_{ij}\bigr\|^2 .
\end{aligned}
\end{equation*}

\begin{equation*}
\begin{aligned}
\bigl\|\oplpwl^{-1/2}\hat{\alpha}_{ij}\bigr\|^2
&= \frac{1}{2}
\bigl\|w(X_i)\what{\eps}_i\,\oplpwl^{-1/2}k_{X_i}
      - w(X_j)\what{\eps}_j\,\oplpwl^{-1/2}k_{X_j}\bigr\|^2 \\
&\le \bigl\|w(X_i)\what{\eps}_i\,\oplpwl^{-1/2}k_{X_i}\bigr\|^2
   + \bigl\|w(X_j)\what{\eps}_j\,\oplpwl^{-1/2}k_{X_j}\bigr\|^2 .
\end{aligned}
\end{equation*}

\begin{equation*}
\begin{aligned}
w(X_i)\what{\eps}_i\,k_{X_i}
&= w(X_i)\bigl[Y_i-\hatf(X_i)\bigr]\,k_{X_i} \\
&= w(X_i)\Bigl[(Y_i-f_*(X_i)) + (f_*(X_i)-\hatf(X_i))\Bigr]\,k_{X_i} \\
&= w(X_i)\eps_i\,k_{X_i} + w(X_i)\bigl(f_*(X_i)-\hatf(X_i)\bigr)\,k_{X_i} \\
&= w(X_i)\eps_i\,k_{X_i} + \oplpiw(f_*-\hatf) .
\end{aligned}
\end{equation*}
Then,
\begin{equation*}
\begin{aligned}
\mathbb{E}_h \bigl\|\oplpwl^{-1/2}u\bigr\|^2
&\le \frac{2}{n}\sum_{i=1}^n
\bigl\|w(X_i)\what{\eps}_i\,\oplpwl^{-1/2}k_{X_i}\bigr\|^2 \\
&\le \frac{4}{n}\sum_{i=1}^n
\bigl\|w(X_i)\eps_i\,\oplpwl^{-1/2}k_{X_i}\bigr\|^2
+ \frac{4}{n}\sum_{i=1}^n
\bigl\|\oplpwl^{-1/2}\oplpiw(f_*-\hatf)\bigr\|^2 .
\end{aligned}
\end{equation*}
\end{proof}

\begin{lemma}
  \label{lem:supporting-lemma-1-for-feasible-bootstrap}
  Under Assumption~\ref{assump:error}, for \(0<\eta<1\) and
  \(\lambda>0\), with probability at least \(1-\eta\),
\begin{equation*}
\begin{aligned}
\frac{1}{n}\sum_{i=1}^n
\bigl\|w(X_i)\eps_i\,\oplpwl^{-1/2}k_{X_i}\bigr\|^2
&\le \tau_n \overline{\sigma}^2 \pd
+ 2\log\!\left(\frac{2}{\eta}\right)\left\{
 \frac{\overline{\sigma}^2 \kappa^2\tau_n^2}{n\lambda}
+ \sqrt{\frac{ \overline{\sigma}^4 \kappa^2\tau_n^3 \pd}{n\lambda}}
\right\}.
\end{aligned}
\end{equation*}
\end{lemma}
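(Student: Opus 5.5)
The bound is a scalar Bernstein inequality for the i.i.d. nonnegative variables
\[
\xi_i:=\bigl\|w(X_i)\eps_i\,\oplpwl^{-1/2}k_{X_i}\bigr\|^2
= w(X_i)^2\eps_i^2\,\bigl\langle k_{X_i},\oplpwl^{-1}k_{X_i}\bigr\rangle .
\]
The plan is to compute the mean of $\xi_i$, then an almost-sure bound and a variance proxy, and finally apply Bernstein to $n^{-1}\sum_i(\xi_i-\mathbb E\xi_i)$.

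\emph{Mean.} Use $\eps_i^2\le\overline\sigma^2$ and $w\le\tau_n$, so that
\[
\mathbb E\xi_i\le\tau_n\overline\sigma^2\,\mathbb E\bigl[w(X)\langle k_X,\oplpwl^{-1}k_X\rangle\bigr]
=\tau_n\overline\sigma^2\,\tr\bigl(\oplpwl^{-1}\oplpw\bigr).
\]
This trace is at most $\pd$, by the operator-monotonicity argument already used in Lemma~\ref{lem:sigma-l2-trace-operator-norm-bound}: $w\le\rho$ gives $\oplpw\preceq\oplq$. This produces the leading term $\tau_n\overline\sigma^2\pd$.

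\emph{Almost-sure bound.} Since $\langle k_x,\oplpwl^{-1}k_x\rangle\le\kappa^2/\lambda$, we get
\[
0\le\xi_i\le M:=\frac{\tau_n^2\overline\sigma^2\kappa^2}{\lambda}.
\]

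\emph{Variance proxy.} Bound
\[
\mathbb E\xi_i^2\le M\,\mathbb E\xi_i\le\frac{\tau_n^2\overline\sigma^2\kappa^2}{\lambda}\cdot\tau_n\overline\sigma^2\pd
=\frac{\overline\sigma^4\kappa^2\tau_n^3\pd}{\lambda}.
\]
Centering changes $M$ by at most a factor of $2$, and $|\xi_i-\mathbb E\xi_i|\le M$ already holds because $\xi_i\ge0$.

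\emph{Bernstein step.} Bernstein's inequality gives, with probability at least $1-\eta$,
\[
\frac1n\sum_i(\xi_i-\mathbb E\xi_i)\le\sqrt{\frac{2\,\mathbb E\xi_i^2\log(1/\eta)}{n}}+\frac{2M\log(1/\eta)}{3n}.
\]
Bounding $\log(1/\eta)$ and $\sqrt{2\log(1/\eta)}$ by $2\log(2/\eta)$ yields exactly the stated form,
\[
2\log(2/\eta)\Bigl\{\tfrac{\overline\sigma^2\kappa^2\tau_n^2}{n\lambda}+\sqrt{\tfrac{\overline\sigma^4\kappa^2\tau_n^3\pd}{n\lambda}}\Bigr\}.
\]
The square-root term needs $\sqrt{2\log(1/\eta)}\le2\log(2/\eta)$, which holds for all $\eta\in(0,1)$ since $\log(2/\eta)\ge\log 2$.

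The only mildly delicate step is identifying the mean with the trace $\tr(\oplpwl^{-1}\oplpw)$ and comparing it to $\pd$. This is a direct computation using $\mathbb E[w(X)k_X\otimes k_X]=\oplpw$, plus the already-established monotonicity, so I expect no real obstacle.
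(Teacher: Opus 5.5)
Your proposal is correct and matches the paper's proof. Both define the same $\xi_i$ and use the same almost-sure bound $\tau_n^2\overline{\sigma}^2\kappa^2/\lambda$, the same mean bound via $\oplpw\preceq\oplq$ and monotonicity of $t\mapsto t/(t+\lambda)$, and the same second-moment bound $\tau_n^3\overline{\sigma}^4\kappa^2\pd/\lambda$, followed by a scalar Bernstein inequality. Your explicit check of the Bernstein constants, including $\sqrt{2\log(1/\eta)}\le 2\log(2/\eta)$, is more careful than the paper, which just states the final display.
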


\begin{proof}[Proof of Lemma~\ref{lem:supporting-lemma-1-for-feasible-bootstrap}]
Let
\begin{equation*}
\begin{aligned}
\xi_i := \bigl\| w(X_i)\eps_i\,\oplpwl^{-1/2} k_{X_i} \bigr\|^2 .
\end{aligned}
\end{equation*}

Then
\begin{equation*}
\begin{aligned}
\xi_i
&\le \frac{\tau_n^2 \,\overline{\sigma}^2\, \kappa^2}{\lambda}.
\end{aligned}
\end{equation*}

Moreover,
\begin{equation*}
\begin{aligned}
\mathbb{E} \xi_i
&= \mathbb{E}\bigl\| w(X_i)\eps_i\,\oplpwl^{-1/2} k_{X_i} \bigr\|^2 \\
&\le \tau_n \overline{\sigma}^2\,\mathbb{E}\!\left[w(X_i)\,\bigl\|\oplpwl^{-1/2}k_{X_i}\bigr\|^2\right] \\
&= \tau_n \overline{\sigma}^2\,\mathbb{E}\!\left[w(X_i)\,
\bigl\langle \oplpwl^{-1/2}k_{X_i},\,\oplpwl^{-1/2}k_{X_i}\bigr\rangle\right] \\
&= \tau_n \overline{\sigma}^2\,\mathrm{tr}\!\left(\oplpwl^{-1}\oplpw\right) \\
&\le \tau_n \overline{\sigma}^2\,\pd.
\end{aligned}
\end{equation*}
The last inequality follows from \(w\leq\rho\), which implies
\(\oplpw\preceq_{\sfH}\oplq\), and the operator monotonicity of
\(x\mapsto x/(x+\lambda)\).

Then,
\begin{equation*}
\begin{aligned}
\mathbb{E} \xi_i^2
&\le \frac{\tau_n^3 \overline{\sigma}^4 \kappa^2}{\lambda}\,\pd.
\end{aligned}
\end{equation*}
Therefore, by Bernstein's inequality, with probability \(1-\eta\),
\begin{equation*}
\begin{aligned}
\mathbb{E}_n \xi_i
&\le \mathbb{E} \xi_i
+ 2\log\!\left(\frac{2}{\eta}\right)\left\{
 \frac{\tau_n^2\overline{\sigma}^2 \kappa^2}{n\lambda}
+ \sqrt{\frac{\tau_n^3 \overline{\sigma}^4 \kappa^2 \pd}{n\lambda}}
\right\}.
\end{aligned}
\end{equation*}
\end{proof}

\begin{lemma}
  \label{lem:supporting-lemma-2-for-feasible-bootstrap}
For \(0<\eta<1\) and \(\lambda>0\), with probability at least
\(1-\eta\),
\begin{equation*}
\begin{aligned}
\frac{1}{n}\sum_{i=1}^n \bigl\|\oplpwl^{-1/2}\oplpiw\bigr\|_{HS}^2
\le\;& \tau_n \kappa^2 \pd + 2\log\!\left(\frac{2}{\eta}\right)\left\{
 \frac{2\kappa^4\tau_n^2}{n\lambda}
+ \sqrt{ \frac{\kappa^6 \tau_n^3\pd}{n\lambda}}
\right\}.
\end{aligned}
\end{equation*}
\end{lemma}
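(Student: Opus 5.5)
This is the same scalar Bernstein argument used for Lemma~\ref{lem:supporting-lemma-1-for-feasible-bootstrap}, applied now to the nonnegative random variables
\[
\zeta_i:=\bigl\|\oplpwl^{-1/2}\oplpiw\bigr\|_{HS}^2 .
\]
The first task is to compute $\zeta_i$ exactly. Because $\oplpiw=w(X_i)\,k_{X_i}\otimes k_{X_i}$ has rank one, $\oplpwl^{-1/2}\oplpiw = w(X_i)\,(\oplpwl^{-1/2}k_{X_i})\otimes k_{X_i}$. Its Hilbert--Schmidt norm therefore factors as
\[
\zeta_i=w(X_i)^2\,\|k_{X_i}\|^2\,\bigl\|\oplpwl^{-1/2}k_{X_i}\bigr\|^2 .
\]

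From this form we obtain the three ingredients Bernstein's inequality needs.

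\emph{Almost-sure bound.} Using $w\le\tau_n$, $\|k_x\|^2\le\kappa^2$, and $\|\oplpwl^{-1/2}k_x\|^2\le\kappa^2/\lambda$, we get $\zeta_i\le \tau_n^2\kappa^4/\lambda=:M$.

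\emph{Mean.} Bounding one factor of $w$ by $\tau_n$ and $\|k_{X_i}\|^2$ by $\kappa^2$ gives
\[
\mathbb E\zeta_i\le \tau_n\kappa^2\,\mathbb E\bigl[w(X_i)\langle k_{X_i},\oplpwl^{-1}k_{X_i}\rangle\bigr]=\tau_n\kappa^2\,\tr(\oplpwl^{-1}\oplpw).
\]
The trace step is the identity $\mathbb E[w\,k\otimes k]=\oplpw$ already used in Lemma~\ref{lem:bound-bahadur-1}. Next, $w\le\rho$ gives $\oplpw\preceq\oplq$, and operator monotonicity of $x\mapsto x/(x+\lambda)$ then gives $\tr(\oplpwl^{-1}\oplpw)\le\pd$. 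Hence $\mathbb E\zeta_i\le\tau_n\kappa^2\pd$.

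\emph{Second moment.} From $\mathbb E\zeta_i^2\le M\,\mathbb E\zeta_i$ we get $\mathbb E\zeta_i^2\le \kappa^6\tau_n^3\pd/\lambda$.

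With these in hand we apply the one-sided scalar Bernstein inequality to $\frac1n\sum_i(\zeta_i-\mathbb E\zeta_i)$, using variance proxy $v=\kappa^6\tau_n^3\pd/\lambda$ and range $M$ (so $|\zeta_i-\mathbb E\zeta_i|\le M$). With probability at least $1-\eta$,
\[
\frac1n\sum_i\zeta_i\le \mathbb E\zeta_i+\sqrt{\frac{2v\log(1/\eta)}{n}}+\frac{2M\log(1/\eta)}{3n}.
\]
Finally, $\log(1/\eta)\le\log(2/\eta)$ and $\log(2/\eta)\ge\log 2$, which absorb the constants. The factor $2\log(2/\eta)$ multiplying $\{2\kappa^4\tau_n^2/(n\lambda)+\sqrt{\kappa^6\tau_n^3\pd/(n\lambda)}\}$ dominates both deviation terms, because $\sqrt{2\log(1/\eta)}\le 2\log(2/\eta)$ whenever $\log(2/\eta)\ge\log 2$. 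This yields the stated display.

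No step here is deep. The only point requiring care is the Hilbert--Schmidt factorization of the rank-one operator together with the trace identity. The other thing to check is that the constant $2$ in front of $\kappa^4\tau_n^2/(n\lambda)$ covers the range term in Bernstein: the centered variable is bounded by $M$, not $2M$, since $\zeta_i\ge0$ gives $\zeta_i-\mathbb E\zeta_i\le M$.
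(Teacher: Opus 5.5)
Your proposal is correct and follows the paper's proof essentially step for step. Both use the almost-sure bound $\tau_n^2\kappa^4/\lambda$, the mean bound $\tau_n\kappa^2\pd$ via the rank-one Hilbert--Schmidt factorization and $\tr(\oplpwl^{-1}\oplpw)\le\pd$, the second-moment bound $\mathbb{E}\zeta_i^2\le M\,\mathbb{E}\zeta_i$, and a scalar Bernstein inequality; your explicit constant check (e.g.\ $\sqrt{2\log(1/\eta)}\le 2\log(2/\eta)$) fills in what the paper leaves to ``by Bernstein's inequality.''
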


\begin{proof}[Proof of Lemma~\ref{lem:supporting-lemma-2-for-feasible-bootstrap}]
Let
\begin{equation*}
\begin{aligned}
\xi_i
&:= \bigl\|\oplpwl^{-1/2} \oplpiw\bigr\|_{HS}^2 \\
&= \bigl\|(\oplpw+\lambda \opid)^{-1/2}\, w(X_i)\, k_{X_i}\otimes k_{X_i}\bigr\|_{HS}^2 .
\end{aligned}
\end{equation*}
Then
\begin{equation*}
\begin{aligned}
0 \le \xi_i \le \frac{1}{\lambda}\,\tau_n^2 \kappa^4 .
\end{aligned}
\end{equation*}

Moreover,
\begin{equation*}
\begin{aligned}
\mathbb{E}\xi_i
&\le \tau_n\,\mathbb{E}\!\left[
w(X_i)\,\bigl\|(\oplpw+\lambda \opid)^{-1/2}(k_{X_i}\otimes k_{X_i})\bigr\|_{HS}^2
\right] \\
&= \tau_n\,\mathbb{E}\!\left[
w(X_i)\,\sum_{j=1}^{\infty}
\bigl\|(\oplpw+\lambda \opid)^{-1/2}(k_{X_i}\otimes k_{X_i}) e_j\bigr\|^2
\right] \\
&= \tau_n\,\mathbb{E}\!\left[
w(X_i)\,\sum_{j=1}^{\infty}
\bigl\|\langle k_{X_i},e_j\rangle\,(\oplpw+\lambda \opid)^{-1/2}k_{X_i}\bigr\|^2
\right] \\
&= \tau_n\,\mathbb{E}\!\left[
w(X_i)\,\bigl\|(\oplpw+\lambda \opid)^{-1/2}k_{X_i}\bigr\|^2
\sum_{j=1}^{\infty}\langle k_{X_i},e_j\rangle^2
\right] \\
&= \tau_n\,\mathbb{E}\!\left[
w(X_i)\,\bigl\|(\oplpw+\lambda \opid)^{-1/2}k_{X_i}\bigr\|^2\,\|k_{X_i}\|^2
\right] \\
&\le \tau_n \kappa^2\,\pd.
\end{aligned}
\end{equation*}
Thus,
\begin{equation*}
\begin{aligned}
\mathbb{E}\,\xi_i^2
\le \frac{1}{\lambda}\,\tau_n^3\,\kappa^6\,\pd.
\end{aligned}
\end{equation*}
By Bernstein's inequality, with probability \(1-\eta\),
\begin{equation*}
\begin{aligned}
\mathbb{E}_n \xi_i
\le \mathbb{E}\xi_i
+ 2\log\!\left(\frac{2}{\eta}\right)\left\{
\tau_n^2 \frac{2\kappa^4}{n\lambda}
+ \sqrt{\tau_n^3 \frac{\kappa^6 \pd}{n\lambda}}
\right\}.
\end{aligned}
\end{equation*}
\end{proof}

\begin{lemma}
    \label{lem:bound-of-oplpwlu}
    For \(0<\eta<1\), under Assumptions~\ref{assump:error}
    and~\ref{assump:source-condition}, suppose
    \(\tau_n\geq\tau_0\) for all sufficiently large \(n\),
    \(\tau_n\to\infty\), and \(\lambda=\lambda_n\to0\). If, for all
    sufficiently large \(n\),
\begin{equation*}
  n
  >
  \max
  \left\{
  \frac{(\tau_n \kappa^2 + 1)\log(e+\pd)}{ C_2\lambda},
  \frac{\tau_n}{\lambda \pd}
  \right\}
\end{equation*}
where $C_2$ is the constant in Lemma \ref{lem:bound-bahadur-2} after replacing
its failure probability by the fixed fraction of \(\eta\) used in the union
bound, then, for all sufficiently large \(n\), with probability at least
\(1-\eta\) over the data, conditionally on the data with
probability at least \(1-\eta\),
\[
\norm{\oplpwl^{-1/2}u} \lesssim
\overline{\sigma}\,\kappa^{2s+3}
\bigl(1+\norm{g_{\ast}}_{\sfH}\bigr)
\sqrt{\tau_n \pd}
\left(1+\sqrt{\frac{\tau_n\pd}{n\lambda}}\right)
\left(\log\!\left(\frac{2}{\eta}\right)\right)^2.
\]
\end{lemma}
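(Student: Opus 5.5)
The plan starts from Lemma~\ref{lem:bound-of-oplpwlu-conditional-on-data}. Conditionally on $\calD$ and with probability at least $1-\eta$ over the multipliers, it reduces $\norm{\oplpwl^{-1/2}u}$ to $\{1+\sqrt{2\log(1/\eta)}\}$ times the square root of two empirical averages: a noise average $n^{-1}\sum_i\norm{w(X_i)\eps_i\oplpwl^{-1/2}k_{X_i}}^2$ and a plug-in average $n^{-1}\sum_i\norm{\oplpwl^{-1/2}\oplpiw(f_*-\hatf)}^2$. It then remains to bound these two data-dependent averages on a single data event of probability at least $1-\eta$. All failure probabilities are split by fixed fractions, so the logarithmic factors change only by universal constants.

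For the noise average I would invoke Lemma~\ref{lem:supporting-lemma-1-for-feasible-bootstrap}. Under the second sample-size condition $n>\tau_n/(\lambda\pd)$ we have $\tau_n^2/(n\lambda)\le\tau_n\pd$, and $\sqrt{\tau_n^3\pd/(n\lambda)}\le\tau_n\pd$. Hence the whole right-hand side is $\lesssim\overline\sigma^2\kappa^2\tau_n\pd\log(2/\eta)$. For the plug-in average I would write $\norm{\oplpwl^{-1/2}\oplpiw h}\le\norm{\oplpwl^{-1/2}\oplpiw}_{HS}\norm{h}$ with $h=f_*-\hatf$, which gives
\[
\frac1n\sum_i\norm{\oplpwl^{-1/2}\oplpiw(f_*-\hatf)}^2\le\norm{\hatf-f_*}^2\cdot\frac1n\sum_i\norm{\oplpwl^{-1/2}\oplpiw}_{HS}^2 .
\]
Lemma~\ref{lem:supporting-lemma-2-for-feasible-bootstrap}, together with the same sample-size simplification, bounds the second factor by $\lesssim\kappa^4\tau_n\pd\log(2/\eta)$. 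Lemma~\ref{lem:estimation-error-bound-in-h} bounds the first factor by $\big[\sqrt{\tau_n\pd/(n\lambda)}+\lambda^{\min\{s,1/2\}}\big]\overline\sigma\kappa^{2s+1}(1+\norm{g_*})\log(2/\eta)$. Since $\lambda\le1$, the bias piece is at most $1$, so $\norm{\hatf-f_*}\lesssim\overline\sigma\kappa^{2s+1}(1+\norm{g_*})(1+\sqrt{\tau_n\pd/(n\lambda)})\log(2/\eta)$.

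Combining the pieces, the square root of the sum becomes
\[
\lesssim\sqrt{\tau_n\pd}\,\Big[\overline\sigma\kappa+\kappa^2\cdot\overline\sigma\kappa^{2s+1}(1+\norm{g_*})\big(1+\sqrt{\tfrac{\tau_n\pd}{n\lambda}}\big)\Big]\log(2/\eta).
\]
Since $\kappa>1$ and $s\ge0$, this is $\lesssim\overline\sigma\kappa^{2s+3}(1+\norm{g_*})\sqrt{\tau_n\pd}\,(1+\sqrt{\tau_n\pd/(n\lambda)})\log(2/\eta)$. Multiplying by the Borell factor $1+\sqrt{2\log(1/\eta)}\lesssim\log(2/\eta)$ yields the stated $\log^2(2/\eta)$.

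The main obstacle is the bookkeeping of logarithms and constants. Each lemma contributes a factor $\log(2/\eta)$, and under the square root the product of the $\norm{\hatf-f_*}^2$ factor and the Bernstein factor gives $\log^{3/2}$. I would absorb this into $\log^2$ using $\sqrt{\log(2/\eta)}\lesssim\log(2/\eta)$. A second point is to check that the lower-order Bernstein terms are genuinely dominated. This uses $\pd\ge1/2$ for large $n$, exactly as in the proof of Lemma~\ref{lem:estimation-error-bound-in-h}, together with the second sample-size condition. I would also verify that Lemma~\ref{lem:estimation-error-bound-in-h} can be applied on the same data event by using the union bound.
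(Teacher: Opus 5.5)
Your proposal is correct and follows the paper's route. The steps match: Lemma~\ref{lem:bound-of-oplpwlu-conditional-on-data}; the two supporting Bernstein lemmas simplified via $n>\tau_n/(\lambda\pd)$ (this alone suffices, so you do not need $\pd\ge 1/2$ here); the Hilbert--Schmidt factorization of the plug-in term; and the second bound of Lemma~\ref{lem:estimation-error-bound-in-h} with $\lambda^{\min\{s,1/2\}}\le 1$.

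The one thing to tidy is the logarithm count. The square root contributes $\log^{3/2}(2/\eta)$ and the Borell factor only $\sqrt{\log(2/\eta)}$, which gives exactly $\log^2(2/\eta)$. Rounding $\log^{3/2}$ up to $\log^2$ and then bounding the Borell factor by $\log(2/\eta)$, as you describe, would overshoot to $\log^3$.
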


\begin{proof}[Proof of Lemma~\ref{lem:bound-of-oplpwlu}]
Apply Lemmas~\ref{lem:supporting-lemma-1-for-feasible-bootstrap},
\ref{lem:supporting-lemma-2-for-feasible-bootstrap}, and
\ref{lem:estimation-error-bound-in-h} with suitably split failure
probabilities.  The second sample-size condition implies
\[
\begin{aligned}
\frac{1}{n}\sum_{i=1}^n
\norm{w(X_i)\eps_i\oplpwl^{-1/2}k_{X_i}}^2
&\lesssim
\overline\sigma^2\kappa^2\tau_n\pd
\log\!\left(\frac{2}{\eta}\right),\\
\frac{1}{n}\sum_{i=1}^n
\norm{\oplpwl^{-1/2}\oplpiw}_{HS}^2
&\lesssim
\kappa^4\tau_n\pd
\log\!\left(\frac{2}{\eta}\right).
\end{aligned}
\]
Furthermore,
\[
\frac{1}{n}\sum_{i=1}^n
\norm{\oplpwl^{-1/2}\oplpiw(f_*-\hatf)}^2
\leq
\norm{f_*-\hatf}^2\,
\frac{1}{n}\sum_{i=1}^n
\norm{\oplpwl^{-1/2}\oplpiw}_{HS}^2.
\]
Substituting these bounds into
Lemma~\ref{lem:bound-of-oplpwlu-conditional-on-data}, followed by the second
bound in Lemma~\ref{lem:estimation-error-bound-in-h}, yields
\[
\begin{aligned}
\norm{\oplpwl^{-1/2}u}
\lesssim{}&
\overline\sigma\,\kappa^{2s+3}
\bigl(1+\norm{g_{\ast}}_{\sfH}\bigr)\sqrt{\tau_n\pd}\\
&\quad{}\times
\left[
1+\sqrt{\frac{\tau_n\pd}{n\lambda}}
+\lambda^{\min\{s,1/2\}}
\right]
\left(\log\!\left(\frac{2}{\eta}\right)\right)^2.
\end{aligned}
\]
Since \(0<\lambda\leq1\) for all sufficiently large \(n\), the last
\(\lambda\)-term is absorbed by \(1\), proving the stated bound.  Replacing
\(\eta\) by fixed fractions of \(\eta\) in the data and conditional
multiplier bounds changes only universal constants inside the logarithm.
\end{proof}

\begin{proof}[Proof of Lemma~\ref{lem:feasible-bootstrap-inverse-bound}]
Apply Lemmas~\ref{lem:bound-bahadur-1},
\ref{lem:bound-bahadur-2},
\ref{lem:feasible-bootstrap-inverse-pre-bound}, and
\ref{lem:bound-of-oplpwlu}, with suitably split failure probabilities.  This
gives
\[
\begin{aligned}
\norm{\Delta_{\mathrm{inv}}}_{\splq}
\lesssim{}&
\mathfrak C_1\overline\sigma\,\kappa^{2s+4}
\bigl(1+\norm{g_{\ast}}_{\sfH}\bigr)
\tau_n\sqrt{\frac{\pd\log(e+\pd)}{n\lambda}}\\
&\quad{}\times
\left(1+\sqrt{\frac{\tau_n\pd}{n\lambda}}\right)
\left(\log\!\left(\frac{2}{\eta}\right)\right)^3.
\end{aligned}
\]
By the argument in the proof of
Lemma~\ref{lem:estimation-error-bound-in-h},
\(\log(e+\pd)\lesssim\pd\) for all sufficiently large \(n\).  Hence
\[
\sqrt{\frac{\log(e+\pd)}{\pd}}\lesssim1.
\]
The first sample-size condition also gives
\[
\sqrt{\frac{\log(e+\pd)}{\pd}}\,
\sqrt{\frac{\tau_n\pd}{n\lambda}}
=
\sqrt{\frac{\tau_n\log(e+\pd)}{n\lambda}}
\leq
\frac{\sqrt{C_2}}{\kappa}
\leq1.
\]
Both terms are therefore bounded by
\(\tau_n\pd/\sqrt{n\lambda}\), which proves the claim.
\end{proof}

\subsection{Proof of Lemma~\ref{lem:feasible-bootstrap-residual-bound}}

\begin{lemma}
  \label{lem:feasible-bootstrap-residual-conditional-bound}
  Under Assumption~\ref{assump:covariate-shift-estimation}, writing
  \(\lambda=\lambda_n\), for \(0<\eta<1\), conditionally on \(\mathcal D\),
  with probability at least \(1-\eta\),
\begin{equation*}
\|\Delta_{\mathrm{res}}\|_{\splq}
\le 
\mathfrak{C}_1\,\|\fpw - \hatf\|\,
\bigl(1 + \sqrt{2 \log\!\left(\frac{1}{\eta}\right)}\bigr)
\sqrt{\frac{2}{n} \sum_{i=1}^n \bigl\|\oplpwl^{-\frac{1}{2}} \oplpiw\bigr\|_{HS}^{2} } .
\end{equation*}
\end{lemma}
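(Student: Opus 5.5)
The plan is to mirror the structure of the proof of Lemma~\ref{lem:feasible-bootstrap-inverse-pre-bound}, isolating the only data-dependent factor, namely $f_{P^w}-\what f$. First rewrite $\hat\alpha_{ij}-\alpha_{ij}$. Since $\what\eps_i-\epsilon_i^{P^w}=\fpw(X_i)-\what f(X_i)$, we have
\[
\hat\alpha_{ij}-\alpha_{ij}=\frac{1}{\sqrt2}\Bigl(\oplpiw-\oplpjw\Bigr)(\fpw-\what f).
\]
Hence $\Delta_{\mathrm{res}}=\oplpwl^{-1}\mathcal{S}(\fpw-\what f)$, where
\[
\mathcal S:=\frac1n\sum_{i,j}h_{ij}\,\frac{\oplpiw-\oplpjw}{\sqrt2}
\]
is a random operator that, conditionally on $\mathcal D$, is centered Gaussian and does not depend on $\what f$. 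Next transfer to the target norm. Write $\|\Delta_{\mathrm{res}}\|_{\splq}=\|\oplq^{1/2}\Delta_{\mathrm{res}}\|$ and factor
\[
\oplq^{1/2}\oplpwl^{-1}=\bigl(\oplq^{1/2}\oplpwl^{-1/2}\bigr)\oplpwl^{-1/2}.
\]
Assumption~\ref{assump:covariate-shift-estimation} then gives
\[
\|\Delta_{\mathrm{res}}\|_{\splq}\le\mathfrak C_1\,\|\oplpwl^{-1/2}\mathcal S\|\,\|\fpw-\what f\|.
\]
Bound the operator norm by the Hilbert--Schmidt norm, $\|\oplpwl^{-1/2}\mathcal S\|\le\|\oplpwl^{-1/2}\mathcal S\|_{HS}$.

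Conditionally on $\mathcal D$, $\oplpwl^{-1/2}\mathcal S$ is a centered Gaussian element of the Hilbert space of Hilbert--Schmidt operators on $\mathsf H$. Borell's inequality, used exactly as in Lemma~\ref{lem:bound-of-oplpwlu-conditional-on-data}, gives with conditional probability at least $1-\eta$
\[
\|\oplpwl^{-1/2}\mathcal S\|_{HS}\le\bigl(1+\sqrt{2\log(1/\eta)}\bigr)\bigl(\mathbb E_h\|\oplpwl^{-1/2}\mathcal S\|_{HS}^2\bigr)^{1/2}.
\]
(Borell applied to the norm actually requires the weak variance; since the weak variance is at most the expected squared norm, this form holds.) By independence of the $h_{ij}$,
\[
\mathbb E_h\|\cdot\|_{HS}^2=\frac1{n^2}\sum_{i,j}\frac12\bigl\|\oplpwl^{-1/2}(\oplpiw-\oplpjw)\bigr\|_{HS}^2.
\]
Using $\tfrac12\|a-b\|^2\le\|a\|^2+\|b\|^2$ and summing over $i,j$ bounds this by $\tfrac2n\sum_i\|\oplpwl^{-1/2}\oplpiw\|_{HS}^2$. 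This gives exactly the claimed constant.

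The main subtlety is ordering the factors so that the conditional Gaussian step involves no $\what f$ inside the random element. Linearity in $\fpw-\what f$ handles this: $\what f$ is $\mathcal D$-measurable, so it is pulled out as a fixed vector before Borell is applied. Apart from that, the argument only needs the operator/Hilbert--Schmidt comparison and the elementary bound $\tfrac12\|a-b\|^2\le\|a\|^2+\|b\|^2$.
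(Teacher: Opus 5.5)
Your proposal is correct and follows the paper's proof essentially step for step. The shared steps are: the rewriting $\hat\alpha_{ij}-\alpha_{ij}=\tfrac{1}{\sqrt2}(\oplpiw-\oplpjw)(\fpw-\hatf)$, the factorization through $\oplq^{1/2}\oplpwl^{-1/2}$ using Assumption~\ref{assump:covariate-shift-estimation}, the operator-norm-to-Hilbert--Schmidt comparison, Borell's inequality conditionally on $\mathcal D$, and the bound $\tfrac12\|a-b\|^2\le\|a\|^2+\|b\|^2$. You also make explicit two points the paper leaves implicit: the weak variance in Borell's inequality is dominated by $\mathbb E_h\|\cdot\|_{HS}^2$, and the $\mathcal D$-measurable factor $\fpw-\hatf$ must be pulled out before the Gaussian step.
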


\begin{proof}[Proof of Lemma~\ref{lem:feasible-bootstrap-residual-conditional-bound}]
\begin{equation*}
\begin{aligned}
\hat{\alpha}_{ij} - \alpha_{ij}
&= \frac{1}{\sqrt{2}}
\Bigl(w(X_i)\what{\eps}_i k_{X_i} - w(X_j)\what{\eps}_j k_{X_j}\Bigr) - \frac{1}{\sqrt{2}}
\Bigl(w(X_i)\epsilon_i^{P^w} k_{X_i} - w(X_j)\epsilon_j^{P^w} k_{X_j}\Bigr) \\
&= \frac{1}{\sqrt{2}}
\Bigl[
w(X_i)\bigl(\fpw(X_i) - \hatf(X_i)\bigr) k_{X_i}
- w(X_j)\bigl(\fpw(X_j) - \hatf(X_j)\bigr) k_{X_j}
\Bigr] \\
&= \frac{1}{\sqrt{2}}
\Bigl[\oplpiw(\fpw - \hatf) - \oplpjw(\fpw - \hatf)\Bigr] \\
&= \frac{1}{\sqrt{2}} \bigl(\oplpiw - \oplpjw\bigr)(\fpw - \hatf).
\end{aligned}
\end{equation*}
\text{Hence,}
\begin{equation*}
\Delta_{\mathrm{res}}
= \oplpwl^{-1}
\left[
 \frac{1}{n} \sum_{i=1}^n \sum_{j=1}^n
h_{ij} \frac{1}{\sqrt{2}} \bigl(\oplpiw - \oplpjw\bigr)(\fpw - \hatf)
\right].
\end{equation*}
\begin{equation*}
\|\Delta_{\mathrm{res}}\|_{\splq}
\le \mathfrak{C}_1\,\|\fpw - \hatf\|\,\bigl\|\oplpwl^{-\frac{1}{2}} u\bigr\|_{HS},
\end{equation*}
\text{where}
\begin{equation*}
u
= \frac{1}{n} \sum_{i=1}^n \sum_{j=1}^n
h_{ij} \frac{1}{\sqrt{2}} \bigl(\oplpiw - \oplpjw\bigr).
\end{equation*}
\text{By Borell's inequality under \(\mathbb{P}(\cdot\mid\mathcal{D})\), with probability at least $1-\eta$,}
\begin{equation*}
\bigl\|\oplpwl^{-\frac{1}{2}} u\bigr\|_{HS}
\le \bigl(1 + \sqrt{2 \log\!\left(\frac{1}{\eta}\right)}\bigr)
\sqrt{\mathbb{E}_h \bigl\|\oplpwl^{-\frac{1}{2}} u\bigr\|_{HS}^2 } .
\end{equation*}

\text{And}
\begin{equation*}
\begin{aligned}
\mathbb{E}_h \bigl\|\oplpwl^{-\frac{1}{2}} u\bigr\|_{HS}^2
&= \frac{1}{n^2} \sum_{i=1}^n \sum_{j=1}^n \frac{1}{2}
\bigl\| \oplpwl^{-\frac{1}{2}} \oplpiw - \oplpwl^{-\frac{1}{2}} \oplpjw \bigr\|_{HS}^2 \\
&\le \frac{1}{n^2} \sum_{i=1}^n \sum_{j=1}^n
\left[
\bigl\|\oplpwl^{-\frac{1}{2}} \oplpiw\bigr\|_{HS}^2
+ \bigl\|\oplpwl^{-\frac{1}{2}} \oplpjw\bigr\|_{HS}^2
\right] \\
&= \frac{2}{n} \sum_{i=1}^n \bigl\|\oplpwl^{-\frac{1}{2}} \oplpiw\bigr\|_{HS}^2 .
\end{aligned}
\end{equation*}
\end{proof}

\begin{proof}[Proof of Lemma~\ref{lem:feasible-bootstrap-residual-bound}]
On the intersection of the data events in
Lemmas~\ref{lem:estimation-error-bound-in-h} and
\ref{lem:supporting-lemma-2-for-feasible-bootstrap}, the second sample-size
condition gives
\[
\sqrt{\frac{2}{n}\sum_{i=1}^n
\norm{\oplpwl^{-1/2}\oplpiw}_{HS}^2}
\lesssim
\kappa^2\sqrt{\tau_n\pd}
\sqrt{\log\!\left(\frac{2}{\eta}\right)}.
\]
Conditionally on the data,
Lemma~\ref{lem:feasible-bootstrap-residual-conditional-bound} and the first
bound in Lemma~\ref{lem:estimation-error-bound-in-h} therefore imply
\[
\norm{\Delta_{\mathrm{res}}}_{\splq}
\lesssim
\mathfrak C_1\overline\sigma\,\kappa^{2s+3}
\bigl(1+\norm{g_{\ast}}_{\sfH}\bigr)
\frac{\tau_n\pd}{\sqrt{n\lambda}}
\left(\log\!\left(\frac{2}{\eta}\right)\right)^2.
\]
Using fixed fractions of \(\eta\) for the two data events and the conditional
multiplier event gives the asserted two-layer probability statement without
changing the displayed logarithmic order.
\end{proof}

\subsection{Proof of Lemma~\ref{lem:strassen-coupling}}

\begin{proof}
Let \(H(\cdot\mid\mathcal A)\) be a regular conditional distribution of the
pair \((X',Y')\) given \(\mathcal A\). Define its two marginal kernels by
\[
  F(C\mid\mathcal A)=H(C\times\mathsf S\mid\mathcal A),
  \qquad
  G(C\mid\mathcal A)=H(\mathsf S\times C\mid\mathcal A),
\]
for Borel \(C\subset\mathsf S\). Then \(G(\cdot\mid\mathcal A)\) is a regular
conditional distribution of \(Y'\) given \(\mathcal A\), and, because
\(X\mid\mathcal A\stackrel d=X'\mid\mathcal A\),
\(F(\cdot\mid\mathcal A)\) is also a version of the regular conditional
distribution of \(X\) given \(\mathcal A\).

The conditional Strassen--Dudley theorem of
\citet[Theorem~4]{monrad1991nearby} gives a random element \(Y\), with
conditional law \(G(\cdot\mid\mathcal A)\) given \(\mathcal A\), provided that
\[
  \mathbb E
  \sup_{C}
  \left[
    F(C\mid\mathcal A)
    -
    G(C^r\mid\mathcal A)
  \right]
  \leq \varepsilon ,
\]
where the supremum is over Borel sets \(C\subset\mathsf S\) and
\[
  C^r=\{s\in\mathsf S:\inf_{t\in C}d(s,t)\le r\}.
\]
The extra uniform random variable in the assumptions is the randomization
needed in that theorem to realize the resulting conditional transition kernel
on the original probability space.

It remains to check this displayed condition. Let
\[
  D_r=\{(s,t)\in\mathsf S\times\mathsf S:d(s,t)\le r\}.
\]
For any Borel set \(C\subset\mathsf S\), the set inclusion
\[
  (C\times\mathsf S)\cap D_r
  \subset
  \mathsf S\times C^r
\]
holds by the definition of \(C^r\). Hence
\[
  C\times\mathsf S
  \subset
  (\mathsf S\times C^r)\cup D_r^c .
\]
Applying the conditional probability kernel \(H(\cdot\mid\mathcal A)\) to this
inclusion gives, outside a fixed null set in the underlying probability space
and simultaneously for all Borel \(C\),
\[
\begin{aligned}
  F(C\mid\mathcal A)
  -
  G(C^r\mid\mathcal A)
  &\le
  H(D_r^c\mid\mathcal A) \\
  &=
  P\{d(X',Y')>r\mid\mathcal A\}.
\end{aligned}
\]
The right-hand side does not depend on \(C\), and therefore the same bound holds
after taking the supremum over Borel \(C\). Hence
\[
\begin{aligned}
  \mathbb E
  \sup_C
  \left[
    F(C\mid\mathcal A)-G(C^r\mid\mathcal A)
  \right]
  &\le
  \mathbb E P\{d(X',Y')>r\mid\mathcal A\} \\
  &=
  P\{d(X',Y')>r\}
  \le \varepsilon .
\end{aligned}
\]
The conditional Strassen--Dudley theorem applies and yields an
\(\mathsf S\)-valued random element \(Y\) satisfying
\(Y\mid\mathcal A\stackrel d=Y'\mid\mathcal A\) and
\(P\{d(X,Y)>r\}\le\varepsilon\).
\end{proof}

\subsection{Proof of Lemma~\ref{lem:weight-mse-bound-under-infinite-moment}}

\begin{proof}
Because \(w(X)=\min\{\rho(X),\tau_n\}\),
\[
1-\frac{w(X)}{\rho(X)}
=
\left(1-\frac{\tau_n}{\rho(X)}\right)\mathbf{1}\{\rho(X)\ge \tau_n\}.
\]
In particular,
\[
\left(1-\frac{w(X)}{\rho(X)}\right)^2
\le \mathbf{1}\{\rho(X)\ge \tau_n\}.
\]
Let \(p_m=(m-1)/\gamma\). Since \(\gamma\geq1\), Lyapunov's inequality
and Definition~\ref{def:bernstein-moment-condition-for-weight} give
\[
\mathbb E_{\mathbb Q}\rho(X)^{p_m}
\leq
\left\{
\mathbb E_{\mathbb Q}\rho(X)^{\gamma(m-1)}
\right\}^{1/\gamma^2}
\leq
\left(\frac12m!W^{m-2}\sigma^2\right)^{1/\gamma}.
\]
Consequently, Markov's inequality yields
\begin{equation*}
\begin{aligned}
\mathbb{E}_{\mathbb Q}\left(1-\frac{w(X)}{\rho(X)}\right)^2
&\le
\mathbb{Q}\{\rho(X)\ge \tau_n\}\\
&\le
\tau_n^{-p_m}\mathbb{E}_{\mathbb Q}\rho(X)^{p_m}\\
&\le
\tau_n^{-\frac{m-1}{\gamma}}
\left(\frac{1}{2}m!W^{m-2}\sigma^2\right)^{\frac{1}{\gamma}}\\
&=
\left(
\frac{1}{2}\tau_n^{-(m-1)}m!W^{m-2}\sigma^2
\right)^{\frac{1}{\gamma}}.
\end{aligned}
\end{equation*}
\end{proof}

\subsection{Proof of Lemma~\ref{lem:lq-lw-bound-under-infinite-moment}}

\begin{proof}
Since \(q=\rho p\) and \(w(x)=\min\{\rho(x),\tau_n\}\),
\[
\oplpw f
=
\int w(x)k_x f(x)\,p(x)\,dx
=
\int \frac{w(x)}{\rho(x)}k_x f(x)\,q(x)\,dx.
\]
Hence, with \(a(x):=1-w(x)/\rho(x)\),
\[
\oplq-\oplpw
=
\int a(x)\,k_x\otimes_{\h}k_x\,\mathbb{Q}(dx).
\]
For \(A_\lambda=(\oplq+\lambda\opid)^{-1}\), self-adjointness gives
\[
(k_x\otimes_{\h}k_x)A_\lambda
=
k_x\otimes_{\h} A_\lambda k_x.
\]
Thus
\[
(\oplq-\oplpw)(\oplq+\lambda\opid)^{-1}
=
\int a(x)\,k_x\otimes_{\h} A_\lambda k_x\,\mathbb{Q}(dx).
\]
Using \(\|T\|\le \|T\|_{HS}\), Jensen's inequality in Hilbert--Schmidt
norm, and Cauchy--Schwarz,
\begin{equation*}
\begin{aligned}
\|(\oplq-\oplpw)(\oplq+\lambda\opid)^{-1}\|
&\le
\int |a(x)|\,\|k_x\otimes_{\h}A_\lambda k_x\|_{HS}\,\mathbb{Q}(dx)\\
&\le
\left(\mathbb{E}_{\mathbb Q}a(X)^2\right)^{1/2}
\left(\mathbb{E}_{\mathbb Q}
\|k_X\otimes_{\h}A_\lambda k_X\|_{HS}^{2}
\right)^{1/2}.
\end{aligned}
\end{equation*}
For a rank-one operator,
\(\|u\otimes v\|_{HS}=\|u\|\,\|v\|\), and
\(\|k_X\|\le \kappa\). Therefore,
\begin{equation*}
\begin{aligned}
\mathbb{E}_{\mathbb Q}
\|k_X\otimes_{\h}A_\lambda k_X\|_{HS}^{2}
&\le
\kappa^2\,
\mathbb{E}_{\mathbb Q}\|A_\lambda k_X\|^2\\
&=
\kappa^2\,
\tr_{\h}\!\left(A_\lambda^2\oplq\right)\\
&\le
\kappa^2\lambda^{-1}
\tr_{\h}\!\left(A_\lambda\oplq\right)
=
\kappa^2\lambda^{-1}\pd,
\end{aligned}
\end{equation*}
because \(A_\lambda\preceq \lambda^{-1}\opid\). Combining this with
Lemma~\ref{lem:weight-mse-bound-under-infinite-moment} yields
\[
\bigl\|(\oplq-\oplpw)(\oplq+\lambda\opid)^{-1}\bigr\|
\le
\kappa\lambda^{-1/2}\pd^{1/2}
\left(
\frac{1}{2}\tau_n^{-(m-1)}m!W^{m-2}\sigma^2
\right)^{\frac{1}{2\gamma}} .
\]
\end{proof}

\subsection{Proof of Lemma~\ref{lem:bound-a-under-infinite-moment}}

\begin{proof}
Let
\[
  \Delta_\lambda
  :=(\oplq-\oplpw)(\oplq+\lambda \opid)^{-1}.
\]
By Lemma~\ref{lem:lq-lw-bound-under-infinite-moment} with \(m=m_0\),
\begin{equation*}
\begin{aligned}
\|\Delta_\lambda\|
&\le
\kappa \lambda^{-1/2} \pd^{1/2}
\left(
\frac{1}{2}\tau_n^{-(m_0-1)}m_0!W^{m_0-2}\sigma^2
\right)^{\frac{1}{2\gamma}}  \\
&\le
\kappa E_d \lambda^{-\frac{1+d}{2}}
\left(
\frac{1}{2}\tau_n^{-(m_0-1)}m_0!W^{m_0-2}\sigma^2
\right)^{\frac{1}{2\gamma}},
\end{aligned}
\end{equation*}
where the second inequality follows from
\(\pd\le E_d^2\lambda^{-d}\). The lower bound imposed on \(\tau_n\)
is exactly the condition that the last display is at most \(1/2\).
This proves the first claim.

For the second claim, use the factorization
\begin{equation*}
\begin{aligned}
\oplpw+\lambda \opid
&= \oplq+\lambda \opid-(\oplq-\oplpw) \\
&= \bigl(\opid-\Delta_\lambda\bigr)(\oplq+\lambda \opid).
\end{aligned}
\end{equation*}
Since \(\|\Delta_\lambda\|\le 1/2\), the operator
\(\opid-\Delta_\lambda\) is invertible and
\[
  \|(\opid-\Delta_\lambda)^{-1}\|
  \le \frac{1}{1-\|\Delta_\lambda\|}
  \le 2 .
\]
Consequently,
\begin{equation*}
\begin{aligned}
\|\oplq(\oplpw+\lambda \opid)^{-1}\|
&=
\bigl\|\oplq(\oplq+\lambda \opid)^{-1}
(\opid-\Delta_\lambda)^{-1}\bigr\| \\
&\le
\|\oplq(\oplq+\lambda \opid)^{-1}\|
\|(\opid-\Delta_\lambda)^{-1}\|
\le 2,
\end{aligned}
\end{equation*}
because \(0\preceq \oplq(\oplq+\lambda \opid)^{-1}\preceq \opid\).

It remains to prove the third claim. By the Cordes inequality
in Lemma~\ref{lem:cordes-inequality}, applied with
\(A=\oplq\), \(B=(\oplpw+\lambda \opid)^{-1}\), and \(s=1/2\),
\[
\|\oplq^{1/2}(\oplpw+\lambda \opid)^{-1/2}\|
\le
\|\oplq(\oplpw+\lambda \opid)^{-1}\|^{1/2}
\le \sqrt{2}.
\]
\end{proof}

\subsection{Proof of Lemma~\ref{lem:weight-mse-bound-under-second-moment}}

\begin{lemma}
  \label{lem:weight-mse-bound-under-second-moment}
Under Definition~\ref{def:second-moment-condition-for-weight} with
\(\theta=2\), for \(\tau_n>0\),
\[
\mathbb E_{\mathbb Q}
\left(1-\frac{w(X)}{\rho(X)}\right)^2
\leq\tau_n^{-1}\Omega.
\]
\end{lemma}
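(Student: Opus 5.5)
The statement to prove is Lemma~\ref{lem:weight-mse-bound-under-second-moment}: under Definition~\ref{def:second-moment-condition-for-weight} with \(\theta=2\), \(\mathbb E_{\mathbb Q}(1-w(X)/\rho(X))^2\le\tau_n^{-1}\Omega\). I will follow the same route as the proof of Lemma~\ref{lem:weight-mse-bound-under-infinite-moment}: reduce to a tail probability under \(\mathbb Q\), then apply Markov's inequality with the available moment.

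\emph{Step 1 (pointwise bound).} Since \(w(X)=\min\{\rho(X),\tau_n\}\), we have the identity
\[
1-\frac{w(X)}{\rho(X)}=\Bigl(1-\frac{\tau_n}{\rho(X)}\Bigr)\mathbf 1\{\rho(X)\ge\tau_n\}.
\]
The factor in parentheses lies in \([0,1)\) on the event \(\{\rho(X)\ge\tau_n\}\). Hence the square is bounded by \(\mathbf 1\{\rho(X)\ge\tau_n\}\). On the set \(\{\rho=0\}\) the ratio is undefined, but that set is \(\mathbb Q\)-null because \(d\mathbb Q=\rho\,d\mathbb P\), so it does not affect the expectation.

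\emph{Step 2 (moment transfer).} Converting the moment to the target measure gives
\[
\mathbb E_{\mathbb Q}\rho(X)=\mathbb E_{\mathbb P}\rho(X)^2\le\Omega.
\]
This is exactly the identity recorded in Definition~\ref{def:vartheta-moment-condition-for-weight}, specialized to exponent \(2\).

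\emph{Step 3 (Markov).} Markov's inequality under \(\mathbb Q\) then yields
\[
\mathbb Q\{\rho(X)\ge\tau_n\}\le\frac{\mathbb E_{\mathbb Q}\rho(X)}{\tau_n}\le\frac{\Omega}{\tau_n}.
\]
Combining this with Step 1 completes the argument.

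There is no real obstacle here. The only point needing care is the measure change in Step 2, namely that \(\mathbb E_{\mathbb Q}\rho=\mathbb E_{\mathbb P}\rho^2\) follows from \(\mathbb Q\ll\mathbb P\) with density \(\rho\). As a remark, the slightly sharper bound \((1-\tau_n/\rho)^2\le(\rho-\tau_n)/\rho\) would not improve the order, so the indicator bound is enough.
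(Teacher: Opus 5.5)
Your proof is correct and follows the paper's argument exactly: bound the squared term pointwise by \(\mathbf 1\{\rho(X)\ge\tau_n\}\), apply Markov's inequality under \(\mathbb Q\), and use \(\mathbb E_{\mathbb Q}\rho(X)=\mathbb E_{\mathbb P}\rho(X)^2\le\Omega\). Your remark that \(\{\rho=0\}\) is \(\mathbb Q\)-null is a small point of care that the paper leaves implicit.
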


\begin{proof}
  \begin{align*}
\mathbb{E}_{\mathbb{Q}}\!\left(1-\frac{w(X)}{\rho(X)}\right)^{2}
&= \int \left(1-\frac{w(x)}{\rho(x)}\right)^{2}\,\mathbb{Q}(dx) \\
&= \int_{\rho \ge \tau_{n}} \left(1-\frac{\tau_{n}}{\rho(x)}\right)^{2}\,\mathbb{Q}(dx) \\
&\le \int_{\rho \ge \tau_{n}} 1\,\mathbb{Q}(dx) \\
&= \mathbb{Q}\!\left(\rho(X) \ge \tau_{n}\right) \\
&\le \tau_{n}^{-1}\mathbb{E}_{\mathbb Q}\rho(X)
\le \tau_{n}^{-1}\Omega .
\end{align*}
\end{proof}

\subsection{Proof of Lemma~\ref{lem:lq-lw-bound-under-second-moment}}

\begin{lemma}
  \label{lem:lq-lw-bound-under-second-moment}
Under Definition~\ref{def:second-moment-condition-for-weight} with
\(\theta=2\), for \(\tau_n>0\) and \(\lambda>0\),
\[
\left\|
(\oplq-\oplpw)(\oplq+\lambda\opid)^{-1}
\right\|
\leq
\kappa\lambda^{-1/2}\pd^{1/2}\tau_n^{-1/2}\Omega^{1/2}.
\]
\end{lemma}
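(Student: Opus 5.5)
The plan is to repeat, nearly verbatim, the argument used for Lemma~\ref{lem:lq-lw-bound-under-infinite-moment}. The only change is the source of the bound on the truncation loss: we invoke Lemma~\ref{lem:weight-mse-bound-under-second-moment} in place of Lemma~\ref{lem:weight-mse-bound-under-infinite-moment}.

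\emph{Step 1: the truncation discrepancy as a $\mathbb Q$-average of rank-one operators.} Since $d\mathbb P=\rho^{-1}d\mathbb Q$ on $\{\rho>0\}$, set $a(x):=1-w(x)/\rho(x)\in[0,1]$. Then
\[
\oplq-\oplpw=\int a(x)\,k_x\otimes_{\h}k_x\,\mathbb Q(dx).
\]
With $A_\lambda:=(\oplq+\lambda\opid)^{-1}$ self-adjoint, we have $(k_x\otimes_{\h}k_x)A_\lambda=k_x\otimes_{\h}A_\lambda k_x$. Hence $(\oplq-\oplpw)A_\lambda$ is the Bochner integral of $a(x)\,k_x\otimes_{\h}A_\lambda k_x$ against $\mathbb Q$.

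\emph{Step 2: Cauchy--Schwarz in Hilbert--Schmidt norm.} Bound the operator norm by the Hilbert--Schmidt norm and move the norm inside the integral. Cauchy--Schwarz then gives
\[
\|(\oplq-\oplpw)A_\lambda\|\le\bigl(\mathbb E_{\mathbb Q}a(X)^2\bigr)^{1/2}\bigl(\mathbb E_{\mathbb Q}\|k_X\otimes_{\h}A_\lambda k_X\|_{HS}^2\bigr)^{1/2}.
\]
For the second factor, use $\|u\otimes v\|_{HS}=\|u\|\|v\|$ and $\|k_X\|\le\kappa$. This yields $\kappa^2\,\mathbb E_{\mathbb Q}\|A_\lambda k_X\|^2=\kappa^2\tr_{\h}(A_\lambda^2\oplq)$. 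Since $A_\lambda\preceq\lambda^{-1}\opid$, this is at most $\kappa^2\lambda^{-1}\tr_{\h}(A_\lambda\oplq)=\kappa^2\lambda^{-1}\pd$.

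\emph{Step 3: the weight factor.} Lemma~\ref{lem:weight-mse-bound-under-second-moment} gives $\mathbb E_{\mathbb Q}a(X)^2\le\tau_n^{-1}\Omega$. The key point is that $a^2\le\mathbf 1\{\rho\ge\tau_n\}$, and Markov's inequality applies under $\mathbb Q$ because $\mathbb E_{\mathbb Q}\rho(X)=\mathbb E_{\mathbb P}\rho(X)^2\le\Omega$ by Definition~\ref{def:second-moment-condition-for-weight} with $\theta=2$. Multiplying the square roots of the two factors yields the claimed bound $\kappa\lambda^{-1/2}\pd^{1/2}\tau_n^{-1/2}\Omega^{1/2}$.

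I expect no step to be a genuine obstacle. The only points needing care are measure-theoretic. First, the change of measure in Step~1 must be justified; this uses $\mathbb Q\ll\mathbb P$ and the fact that $w=0$ wherever $\rho=0$. Second, the interchange of norm and integral in Step~2 requires Bochner integrability, which follows from the boundedness of the kernel.
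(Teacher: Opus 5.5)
Your proposal is correct and follows the paper's proof essentially verbatim. The paper likewise reuses the Hilbert--Schmidt/Cauchy--Schwarz computation from Lemma~\ref{lem:lq-lw-bound-under-infinite-moment} to reach $\kappa\lambda^{-1/2}\pd^{1/2}\bigl[\mathbb E_{\mathbb Q}(1-w(X)/\rho(X))^2\bigr]^{1/2}$, then substitutes Lemma~\ref{lem:weight-mse-bound-under-second-moment}, which is proved by your argument $a^2\le\mathbf 1\{\rho\ge\tau_n\}$ plus Markov's inequality with $\mathbb E_{\mathbb Q}\rho=\mathbb E_{\mathbb P}\rho^2\le\Omega$.
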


\begin{proof}
The proof is identical to the Hilbert--Schmidt argument in
Lemma~\ref{lem:lq-lw-bound-under-infinite-moment}; the only input that
changes is the moment bound for the truncated density ratio. Indeed, the
same calculation gives
\[
\left\|
\left(\oplq-\oplpw\right)\left(\oplq+\lambda \opid\right)^{-1}
\right\|
\le
\kappa\lambda^{-1/2}\pd^{1/2}
\left[
\mathbb{E}_{\mathbb Q}
\left(1-\frac{w(X)}{\rho(X)}\right)^2
\right]^{1/2}.
\]
Applying Lemma~\ref{lem:weight-mse-bound-under-second-moment} yields
\[
\left\|
\left(\oplq-\oplpw\right)\left(\oplq+\lambda \opid\right)^{-1}
\right\|
\le
\kappa\lambda^{-1/2}\pd^{1/2}\tau_n^{-1/2}\Omega^{1/2}.
\]
\end{proof}

\subsection{Proof of Lemma~\ref{lem:bound-a-under-second-moment}}

\begin{lemma}
  \label{lem:bound-a-under-second-moment}
Under Assumption~\ref{assump:pseudo-dimn} and
Definition~\ref{def:second-moment-condition-for-weight} with \(\theta=2\),
for \(0<\lambda\leq1\), if
\begin{equation}
\label{eq:tn-condition}
\tau_n\geq4\kappa^2\Omega E_d^2\lambda^{-(1+d)},
\end{equation}
then, as operators on \(\mathsf H\),
\begin{enumerate}
\item
\(\|(\oplq-\oplpw)(\oplq+\lambda\opid)^{-1}\|\leq1/2\);
\item
\(\|\oplq(\oplpw+\lambda\opid)^{-1}\|\leq2\);
\item
\(\|\oplq^{1/2}(\oplpw+\lambda\opid)^{-1/2}\|\leq\sqrt2\).
\end{enumerate}
\end{lemma}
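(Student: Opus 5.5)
The argument follows the perturbation scheme already used for Lemma~\ref{lem:bound-a-under-infinite-moment}. The only change is the input that controls the truncation bias, which now comes from the second-moment bound of Lemma~\ref{lem:lq-lw-bound-under-second-moment}. Set $\Delta_\lambda:=(\oplq-\oplpw)(\oplq+\lambda\opid)^{-1}$.

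\textbf{Claim 1.} Lemma~\ref{lem:lq-lw-bound-under-second-moment} gives
\[
\|\Delta_\lambda\|\le \kappa\lambda^{-1/2}\pd^{1/2}\tau_n^{-1/2}\Omega^{1/2}.
\]
Assumption~\ref{assump:pseudo-dimn} gives $\pd\le E_d^2\lambda^{-d}$ for $0<\lambda\le1$. Substituting, the right-hand side is at most
\[
\kappa E_d\Omega^{1/2}\lambda^{-(1+d)/2}\tau_n^{-1/2}.
\]
The truncation condition \eqref{eq:tn-condition} is exactly the requirement that this quantity be at most $1/2$: squaring it gives $\kappa^2E_d^2\Omega\lambda^{-(1+d)}\tau_n^{-1}\le 1/4$. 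This proves the first claim.

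\textbf{Claim 2.} Factor
\[
\oplpw+\lambda\opid=(\opid-\Delta_\lambda)(\oplq+\lambda\opid).
\]
Since $\|\Delta_\lambda\|\le1/2$, a Neumann series shows $\opid-\Delta_\lambda$ is invertible with $\|(\opid-\Delta_\lambda)^{-1}\|\le2$. Therefore
\[
\oplq(\oplpw+\lambda\opid)^{-1}=\oplq(\oplq+\lambda\opid)^{-1}(\opid-\Delta_\lambda)^{-1}.
\]
Because $0\preceq\oplq(\oplq+\lambda\opid)^{-1}\preceq\opid$, the first factor has norm at most $1$, so the product has norm at most $2$.

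\textbf{Claim 3.} Apply the Cordes inequality (Lemma~\ref{lem:cordes-inequality}) with $A=\oplq$, $B=(\oplpw+\lambda\opid)^{-1}$ and $s=1/2$. This gives
\[
\|\oplq^{1/2}(\oplpw+\lambda\opid)^{-1/2}\|\le\|\oplq(\oplpw+\lambda\opid)^{-1}\|^{1/2}\le\sqrt2.
\]

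The only step needing care is the bookkeeping in Claim 1, namely that the stated threshold on $\tau_n$ carries exactly the factor $4$ and the exponent $-(1+d)$. This is immediate once the Lemma~\ref{lem:lq-lw-bound-under-second-moment} bound is squared, so I do not expect a genuine obstacle.
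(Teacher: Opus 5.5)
Your proposal is correct and matches the paper's proof step for step. Like the paper, you bound $\|\Delta_\lambda\|$ via Lemma~\ref{lem:lq-lw-bound-under-second-moment} together with $\pd\le E_d^2\lambda^{-d}$, use the factorization $\oplpw+\lambda\opid=(\opid-\Delta_\lambda)(\oplq+\lambda\opid)$ with a Neumann-series bound for the second claim, and finish with the Cordes inequality; your squaring of the threshold to recover the factor $4$ and the exponent $-(1+d)$ is exactly what condition~\eqref{eq:tn-condition} encodes.
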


\begin{proof}
Let again
\[
  \Delta_\lambda
  :=(\oplq-\oplpw)(\oplq+\lambda \opid)^{-1}.
\]
By Lemma~\ref{lem:lq-lw-bound-under-second-moment} and
\(\pd\le E_d^2\lambda^{-d}\),
\begin{equation*}
\begin{aligned}
\|\Delta_\lambda\|
&\le
\kappa \lambda^{-1/2}\pd^{1/2}\tau_n^{-1/2}\Omega^{1/2} \\
&\le
\kappa E_d\Omega^{1/2}\lambda^{-\frac{1+d}{2}}\tau_n^{-1/2}.
\end{aligned}
\end{equation*}
Condition~\eqref{eq:tn-condition} implies that the right-hand side is
at most \(1/2\), proving the first claim.

The remaining two claims follow from the same deterministic perturbation
argument as in Lemma~\ref{lem:bound-a-under-infinite-moment}. Namely,
\[
\oplpw+\lambda \opid
=
(\opid-\Delta_\lambda)(\oplq+\lambda \opid),
\qquad
\|(\opid-\Delta_\lambda)^{-1}\|\le 2.
\]
Therefore,
\[
\|\oplq(\oplpw+\lambda \opid)^{-1}\|
\le
\|\oplq(\oplq+\lambda \opid)^{-1}\|
\|(\opid-\Delta_\lambda)^{-1}\|
\le 2,
\]
and Lemma~\ref{lem:cordes-inequality} gives
\[
\|\oplq^{1/2}(\oplpw+\lambda \opid)^{-1/2}\|
\le
\|\oplq(\oplpw+\lambda \opid)^{-1}\|^{1/2}
\le \sqrt{2}.
\]
\end{proof}

\subsection{Proof of Lemma~\ref{lem:weight-mse-bound-under-vartheta-moment}}

\begin{lemma}
  \label{lem:weight-mse-bound-under-vartheta-moment}
Under Definition~\ref{def:vartheta-moment-condition-for-weight}, for
\(\tau_n>0\),
\[
\mathbb E_{\mathbb Q}
\left(1-\frac{w(X)}{\rho(X)}\right)^2
\leq
\tau_n^{-(\vartheta-1)}\Omega_\vartheta.
\]
\end{lemma}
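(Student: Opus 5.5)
The proof follows the argument of Lemma~\ref{lem:weight-mse-bound-under-second-moment}; only the final tail estimate changes. Since \(w(X)=\min\{\rho(X),\tau_n\}\), the ratio \(w/\rho\) equals one on \(\{\rho<\tau_n\}\), while on \(\{\rho\ge\tau_n\}\)
\[
1-\frac{w(X)}{\rho(X)}=1-\frac{\tau_n}{\rho(X)}\in[0,1).
\]
This gives the pointwise bound
\[
\left(1-\frac{w(X)}{\rho(X)}\right)^2\le \mathbf 1\{\rho(X)\ge\tau_n\}.
\]
Taking \(\mathbb Q\)-expectations therefore reduces the claim to a tail bound:
\[
\mathbb E_{\mathbb Q}\left(1-\frac{w(X)}{\rho(X)}\right)^2\le\mathbb Q\{\rho(X)\ge\tau_n\}.
\]

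We then apply Markov's inequality with exponent \(\vartheta-1>1\), which gives
\[
\mathbb Q\{\rho(X)\ge\tau_n\}\le\tau_n^{-(\vartheta-1)}\,\mathbb E_{\mathbb Q}\!\left[\rho(X)^{\vartheta-1}\right].
\]
The change of measure \(d\mathbb Q=\rho\,d\mathbb P\) recorded in Definition~\ref{def:vartheta-moment-condition-for-weight} shows that
\[
\mathbb E_{\mathbb Q}\!\left[\rho(X)^{\vartheta-1}\right]=\mathbb E_{\mathbb P}\!\left[\rho(X)^{\vartheta}\right]\le\Omega_\vartheta.
\]
Combining the last three displays yields \(\tau_n^{-(\vartheta-1)}\Omega_\vartheta\), which is the stated bound.

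We do not expect a real obstacle. The one point that needs checking is the set \(\{\rho=0\}\), where \(w/\rho\) is undefined. This set has \(\mathbb Q\)-measure zero because \(d\mathbb Q=\rho\,d\mathbb P\), so the integrand can be defined arbitrarily there without affecting the expectation. One could sharpen the bound by keeping the factor \((1-\tau_n/\rho)^2\) instead of dropping it, but the crude indicator bound is sufficient here.
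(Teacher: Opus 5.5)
Your proof is correct and matches the paper's argument essentially line for line. Both use the pointwise indicator bound on $\{\rho\ge\tau_n\}$, then Markov's inequality at exponent $\vartheta-1$, then the identity $\mathbb E_{\mathbb Q}[\rho^{\vartheta-1}]=\mathbb E_{\mathbb P}[\rho^{\vartheta}]\le\Omega_\vartheta$ from Definition~\ref{def:vartheta-moment-condition-for-weight}; your remark that $\{\rho=0\}$ is $\mathbb Q$-null is a detail the paper leaves implicit.
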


\begin{proof}
Since \(w(X)=\min\{\rho(X),\tau_n\}\),
\begin{align*}
\mathbb{E}_{\mathbb{Q}}\!\left(1-\frac{w(X)}{\rho(X)}\right)^2
&\leq
\mathbb{Q}\!\left\{\rho(X)\geq\tau_n\right\} \\
&\leq
\tau_n^{-(\vartheta-1)}
\mathbb{E}_{\mathbb{Q}}\!\left[\rho(X)^{\vartheta-1}\right] \\
&\leq
\tau_n^{-(\vartheta-1)}\Omega_{\vartheta},
\end{align*}
where the second inequality follows from Markov's inequality.
\end{proof}

\subsection{Proof of Lemma~\ref{lem:lq-lw-bound-under-vartheta-moment}}

\begin{proof}
The Hilbert--Schmidt argument used in
Lemma~\ref{lem:lq-lw-bound-under-infinite-moment} gives
\begin{equation*}
\left\|
  (\oplq-\oplpw)(\oplq+\lambda\opid)^{-1}
\right\|
\leq
\kappa\lambda^{-1/2}\pd^{1/2}
\left[
  \mathbb{E}_{\mathbb{Q}}\!\left(1-\frac{w(X)}{\rho(X)}\right)^2
\right]^{1/2}.
\end{equation*}
The claim follows by applying
Lemma~\ref{lem:weight-mse-bound-under-vartheta-moment}.
\end{proof}

\subsection{Proof of Lemma~\ref{lem:bound-a-under-vartheta-moment}}

\begin{proof}
Let
\[
  \Delta_\lambda
  :=(\oplq-\oplpw)(\oplq+\lambda\opid)^{-1}.
\]
By Lemma~\ref{lem:lq-lw-bound-under-vartheta-moment} and
\(\pd\leq E_d^2\lambda^{-d}\),
\begin{align*}
\|\Delta_\lambda\|
&\leq
\kappa\lambda^{-1/2}\pd^{1/2}
\tau_n^{-\frac{\vartheta-1}{2}}\Omega_{\vartheta}^{1/2} \\
&\leq
\kappa E_d\Omega_{\vartheta}^{1/2}
\lambda^{-\frac{1+d}{2}}
\tau_n^{-\frac{\vartheta-1}{2}}.
\end{align*}
Condition~\eqref{eq:tn-condition-vartheta-moment} makes the last display at
most \(1/2\), proving the first claim.

The remaining claims follow from the deterministic perturbation argument in
Lemma~\ref{lem:bound-a-under-infinite-moment}. In particular,
\[
  \oplpw+\lambda\opid
  =
  (\opid-\Delta_\lambda)(\oplq+\lambda\opid),
  \qquad
  \|(\opid-\Delta_\lambda)^{-1}\|\leq2.
\]
Consequently,
\[
  \|\oplq(\oplpw+\lambda\opid)^{-1}\|
  \leq2,
\]
and Lemma~\ref{lem:cordes-inequality} yields
\[
  \|\oplq^{1/2}(\oplpw+\lambda\opid)^{-1/2}\|
  \leq
  \|\oplq(\oplpw+\lambda\opid)^{-1}\|^{1/2}
  \leq\sqrt2.
\]
\end{proof}

\subsection{Proof of Lemma~\ref{lem:effective-dimension-bound-polynomial-decay}}

\begin{proof}
\begin{equation*}
\begin{aligned}
\pd
&= \sum_{i=1}^{\infty} \frac{\mu_i}{\mu_i+\lambda} \\
&\le \sum_{i=1}^{\infty} \frac{\alpha i^{-\beta}}{\alpha i^{-\beta}+\lambda} \\
&= \sum_{i=1}^{\infty} \frac{\alpha}{\alpha + \lambda i^{\beta}} \\
&\le \int_0^\infty \frac{\alpha}{\alpha + \lambda x^\beta}\, dx \\
\end{aligned}
\end{equation*}

Let \(t=(\lambda/\alpha)^{1/\beta}x\). Then
\begin{equation*}
  \begin{aligned}
\pd
&\le
\left(\frac{\alpha}{\lambda}\right)^{1/\beta}
\left[
\int_0^1 \frac{1}{1+t^\beta}\, dt
+ \int_1^\infty \frac{1}{1+t^\beta}\, dt
\right] \\
&\le
\left(\frac{\alpha}{\lambda}\right)^{1/\beta}
\left[ 1 + \frac{1}{\beta-1} \right] \\
&=
\alpha^{1/\beta}\frac{\beta}{\beta-1}\lambda^{-1/\beta}.
\end{aligned}
\end{equation*}
\end{proof}

\subsection{Proof of Lemma~\ref{lem:sigma-bound-polynomial-decay}}

\begin{proof}
  \begin{equation*}
\begin{aligned}
\sigma^2(m, \oplq)
&= \sum_{i=m+1}^{\infty} \mu_i\\
&\le \int_m^{\infty} \alpha \cdot t^{-\beta}\, dt \\
&= \alpha \left[ \frac{1}{1-\beta} t^{1-\beta} \right]_m^{\infty} \\
&= \frac{\alpha}{\beta-1} m^{1-\beta}.
\end{aligned}
\end{equation*}
Taking square roots proves the claim.
\end{proof}

\subsection{Proof of Lemma~\ref{lem:tail-bound-exponential-decay-integral}}

\begin{proof}
Let \(u=\beta x^\gamma\).  Then
\[
\int_a^\infty e^{-\beta x^\gamma}\,dx
=
\frac{1}{\gamma\beta^{1/\gamma}}
\int_{\beta a^\gamma}^{\infty} u^{1/\gamma-1}e^{-u}\,du .
\]
The standard incomplete-gamma tail estimate
\[
\int_t^\infty u^{q-1}e^{-u}\,du
\sim
t^{q-1}e^{-t},
\qquad t\to\infty,
\]
with \(q=1/\gamma\), yields
\[
\int_a^\infty e^{-\beta x^\gamma}\,dx
\sim
\frac{1}{\gamma\beta^{1/\gamma}}
(\beta a^\gamma)^{1/\gamma-1}e^{-\beta a^\gamma}
=
\frac{e^{-\beta a^\gamma}}{\beta\gamma a^{\gamma-1}}.
\]
\end{proof}

\subsection{Proof of Lemma~\ref{lem:effective-dimension-bound-exponential-decay}}

\begin{proof}
\begin{equation*}
\begin{aligned}
\pd
&= \sum_{i=1}^{\infty} \frac{\mu_i}{\mu_i+\lambda} \\
&\le \sum_{i=1}^{\infty} \frac{\alpha \exp(-\beta i^{\gamma})}{\alpha \exp(-\beta i^{\gamma})+\lambda} \\
&= \sum_{i=1}^{\infty} \frac{\alpha}{\alpha + \lambda \exp(\beta i^{\gamma})} \\
&\le \int_0^\infty \frac{\alpha}{\alpha + \lambda \exp(\beta x^{\gamma})}\, dx \\
&= \left[\int_0^{\left[\frac{1}{\beta}\ln\left(\frac{\alpha}{\lambda}\right)\right]^{\frac{1}{\gamma}}}+\int_{\left[\frac{1}{\beta}\ln\left(\frac{\alpha}{\lambda}\right)\right]^{\frac{1}{\gamma}}}^\infty\right] \frac{\alpha}{\alpha + \lambda \exp(\beta x^{\gamma})}\, dx\\
&\le \left[\frac{1}{\beta}\ln\left(\frac{\alpha}{\lambda}\right)\right]^{\frac{1}{\gamma}} + 
\frac{\alpha}{\lambda}\int_{\left[\frac{1}{\beta}\ln\left(\frac{\alpha}{\lambda}\right)\right]^{\frac{1}{\gamma}}}^\infty  \exp(-\beta x^{\gamma})\, dx\\
&\lesssim \left[\frac{1}{\beta}\ln\left(\frac{\alpha}{\lambda}\right)\right]^{\frac{1}{\gamma}} 
+ \frac{\alpha}{\lambda} \cdot \frac{\exp\left(-\beta \left[\frac{1}{\beta}\ln\left(\frac{\alpha}{\lambda}\right)\right]\right)}{\beta \gamma \left[\frac{1}{\beta}\ln\left(\frac{\alpha}{\lambda}\right)\right]^{\frac{\gamma-1}{\gamma}}} \\
&= \left[\frac{1}{\beta}\ln\left(\frac{\alpha}{\lambda}\right)\right]^{\frac{1}{\gamma}} 
+ \frac{1}{\beta \gamma} \cdot \left[\frac{1}{\beta}\ln\left(\frac{\alpha}{\lambda}\right)\right]^{\frac{1}{\gamma}-1} \\
&\lesssim \left(\log \frac{1}{\lambda}\right)^{\frac{1}{\gamma}}.
\end{aligned}
\end{equation*}
The last comparison holds for all sufficiently small \(\lambda>0\), which
gives constants \(\lambda_0\in(0,1)\) and
\(C_{\alpha,\beta,\gamma}>0\) as stated.
\end{proof}

\subsection{Proof of Lemma~\ref{lem:sigma-bound-exponential-decay}}

\begin{proof}
  \begin{equation*}
\begin{aligned}
\sigma^2(m, \oplq)
&= \sum_{i=m+1}^{\infty} \mu_i\\
&\le \int_m^{\infty} \alpha \cdot \exp(-\beta t^{\gamma})\, dt \\
&\leq C_{\alpha,\beta,\gamma}^2
m^{1-\gamma}\exp(-\beta m^\gamma).
\end{aligned}
\end{equation*}
Here Lemma~\ref{lem:tail-bound-exponential-decay-integral} gives the last
inequality for all sufficiently large \(m\); increasing
\(C_{\alpha,\beta,\gamma}\) covers the finitely many remaining positive
integers. Taking square roots proves the claim.
\end{proof}

\subsection{Proof of Lemma~\ref{lem:tail-sum-sigma-bound}}

\begin{proof}
  As in the proof of Lemma \ref{lem:sigma-l2-trace-operator-norm-bound},
  \begin{equation*}
\begin{aligned}
\opv &= \opv_1 + \opv_2, \\
\opv_1 &\preceq_{\splq} \tau_n \overline{\sigma}^2 \cdot \opt_1, \\
\opv_2
&\preceq_{\splq}
\tau_n \kappa^2
\max\{\kappa^{2s-1},1\}^{2}
\lambda^{2\min\{s,1/2\}}
\|g_{\ast}\|^2\, \opt_1, \\
\opv_1 + \opv_2
&\preceq_{\splq}
\tau_n\!\left(
  \overline{\sigma}^2
  + \kappa^2 \max\{\kappa^{2s-1},1\}^{2}
  \lambda^{2\min\{s,1/2\}}\|g_{\ast}\|^2
\right) \opt_1.
\end{aligned}
\end{equation*}
where
\begin{equation*}
\opt_1 = (\oplpw+\lambda \opid)^{-1} \oplpw (\oplpw+\lambda \opid)^{-1} \oplq.
\end{equation*}
$\forall\, f \in \splq$,
\begin{equation*}
\begin{aligned}
\langle f, \opt_1 f \rangle_{\splq}
&= \left\langle f,\,
(\oplpw+\lambda \opid)^{-1} \oplpw (\oplpw+\lambda \opid)^{-1} \oplq f
\right\rangle_{\splq} \\
&= \left\langle \oplq f,\,
(\oplpw+\lambda \opid)^{-1} \oplpw (\oplpw+\lambda \opid)^{-1} \oplq f
\right\rangle \\
&\le \left\langle \oplq f, \frac{1}{\lambda} \oplq f \right\rangle \\
&= \frac{1}{\lambda} \langle f, \oplq f \rangle_{\splq}.
\end{aligned}
\end{equation*}

Thus,
\begin{equation*}
\opt_1 \preceq_{\splq} \frac{1}{\lambda} \oplq.
\end{equation*}

By the min--max eigenvalue monotonicity for compact positive operators and the
identity
\begin{equation*}
\sigma^2_{\splq}(m,\oplq) = \sigma^2(m,\oplq),
\end{equation*}
we have
\begin{equation*}
\sigma^2_{\splq}(m,\opv)
\le
\frac{\tau_n}{\lambda}
\left(
  \overline{\sigma}^2
  + \kappa^2 \max\{\kappa^{2s-1},1\}^{2}
  \lambda^{2\min\{s,1/2\}}\|g_{\ast}\|^2
\right)
\sigma^2(m,\oplq).
\end{equation*}
\end{proof}

\subsection{Proof of Lemma~\ref{lem:sub-gaussianity-of-ui}}

\begin{proof}
\begin{equation*}
\begin{aligned}
\widetilde{U}_i
&= (\oplpw+\lambda \opid)^{-1}
\Bigl[
w(X_i)\,\eps_i\,k_{X_i}
+ w(X_i)\bigl(k_{X_i}\otimes k_{X_i}\bigr)\,(f_*-\fpw)
\Bigr] \\
&= w(X_i)\bigl(\eps_i+f_*(X_i)-\fpw(X_i)\bigr)\,
(\oplpw+\lambda \opid)^{-1} k_{X_i} .
\end{aligned}
\end{equation*}

Fix \(t\in\splq\). Since
\((\oplpw+\lambda\opid)^{-1}\oplq t\in\mathsf H\) is deterministic, the
weighted-feature part of Assumption~\ref{assump:sub-gaussian-feature} and
{\protect\cite[Proposition~2.5.2]{vershynin2018high}} imply
\begin{equation*}
\begin{aligned}
&\left\|
\left\langle w(X_i)k_{X_i},
(\oplpw+\lambda\opid)^{-1}\oplq t
\right\rangle
-
\mathbb E\!\left[
\left\langle w(X_i)k_{X_i},
(\oplpw+\lambda\opid)^{-1}\oplq t
\right\rangle
\right]
\right\|_{\psi_2}\\
&\qquad\leq C_{sg}b_0
\left(
\mathbb E\left[
\left\langle w(X_i)k_{X_i},
(\oplpw+\lambda\opid)^{-1}\oplq t
\right\rangle^2
\right]
\right)^{1/2}.
\end{aligned}
\end{equation*}
Here the condition applies with \(w=w_n=w_{\tau_n}\), since
\(\tau_n\geq\tau_0\). The same bound holds for the uncentered scalar
projection after changing the universal constant, because \(b_0\geq1\) and
the absolute mean is bounded by the square root of the second moment.

\begin{equation*}
\begin{aligned}
\bigl|\langle \widetilde{U}_i, t \rangle_{\splq}\bigr|
&= \bigl|\langle \widetilde{U}_i, \oplq t \rangle\bigr| \\
&\le (\overline{\sigma}+\kappa\|f_*\|)\,
\bigl|\langle w(X_i)\,(\oplpw+\lambda \opid)^{-1} k_{X_i},\, \oplq t \rangle\bigr| \\
&= (\overline{\sigma}+\kappa\|f_*\|)\,
\bigl|\langle w(X_i)k_{X_i},\, (\oplpw+\lambda \opid)^{-1} \oplq t \rangle\bigr| .
\end{aligned}
\end{equation*}

\begin{equation*}
\begin{aligned}
\bigl\|\langle \widetilde{U}_i, t \rangle_{\splq}\bigr\|_{\psi_2}
&\le (\overline{\sigma}+\kappa\|f_*\|)\,
\bigl\|\langle w(X_i)k_{X_i},\, (\oplpw+\lambda \opid)^{-1} \oplq t \rangle\bigr\|_{\psi_2} \\
&\le C_{sg} \cdot b_0\,(\overline{\sigma}+\kappa\|f_*\|)\,
\left(
\mathbb{E}\!\left[
\left\langle w(X_i)k_{X_i},\, (\oplpw+\lambda \opid)^{-1} \oplq t \right\rangle^{2}
\right]
\right)^{1/2} \\
&\le C_{sg} \cdot b_0\,\frac{\overline{\sigma}+\kappa\|f_*\|}{\underline{\sigma}}\,
\left\{
\mathbb{E}\!\left[\langle U_i, t \rangle_{\splq}^{2}\right]
\right\}^{1/2} .
\end{aligned}
\end{equation*}
where we used Lemma~\ref{lem:sigma-lower-operator-bound}.

Since \(U_i=\widetilde U_i-\mathbb E\widetilde U_i\), centering changes only
the universal constant. Hence
\begin{equation*}
\begin{aligned}
\bigl\|\langle U_i, t \rangle_{\splq}\bigr\|_{\psi_2}
&\le C_{sg}^2 \cdot b_0\,\frac{\overline{\sigma}+\kappa\|f_*\|}{\underline{\sigma}}\,
\left(
\mathbb{E}\!\left[\langle U_i, t \rangle_{\splq}^{2}\right]
\right)^{1/2} .
\end{aligned}
\end{equation*}

Again by {\protect\cite[Proposition~2.5.2]{vershynin2018high}}, \(U_i\) is $\splq$-\(C_{sg}^3  b_0 \dfrac{\overline{\sigma}+\kappa\|f_*\|}{\underline{\sigma}}\)-sub-Gaussian.
\end{proof}

\subsection{Proof of Lemma~\ref{lem:tq-self-adjoint-in-l2q}}

\begin{proof}
  By the spectral decomposition of \(\oplq\),
  \[
    \splq
    =
    \overline{\ran(\oplq)}
    \oplus_{\splq}^{\perp}
    \ker(\oplq),
    \qquad
    \overline{\ran(\oplq)}
    =
    \overline{\mathrm{span}\{\phi_j:\mu_j>0\}} .
  \]
  Moreover, when \(\mathsf{H}\) is viewed as a subspace of \(\splq\), Lemma~\ref{lem:eigendecomposition_LQ} gives
  \(\mathsf{H}\subset\overline{\ran(\oplq)}\). We first record the identity
  \begin{equation}\label{eq:lq-l2-h-duality}
    \langle u,h\rangle_{\splq}
    =
    \langle \oplq u,h\rangle_{\mathsf{H}},
    \qquad u\in\splq,\ h\in\mathsf{H}.
  \end{equation}
  Indeed, writing \(u=\sum_j u_j\phi_j\) in \(\splq\) and
  \(h=\sum_{\mu_j>0}h_j\phi_j\in\mathsf{H}\), the kernel component of \(u\) is
  orthogonal to \(h\), and
  \[
    \langle \oplq u,h\rangle_{\mathsf{H}}
    =
    \sum_{\mu_j>0}\frac{\mu_j u_jh_j}{\mu_j}
    =
    \sum_{\mu_j>0}u_jh_j
    =
    \langle u,h\rangle_{\splq}.
  \]
  Now take \(f,g\in\splq\) and decompose
  \(f=f_r+f_0\), \(g=g_r+g_0\), with
  \(f_r,g_r\in\overline{\ran(\oplq)}\) and
  \(f_0,g_0\in\ker(\oplq)\). Since \(\oplq f_0=\oplq g_0=0\) and
  \(\opt\oplq f_r,\opt\oplq g_r\in\mathsf{H}\subset\overline{\ran(\oplq)}\),
  the kernel components do not contribute to the \(\splq\) inner products. Hence,
  by \eqref{eq:lq-l2-h-duality} and the self-adjointness of \(\opt\) on
  \(\mathsf{H}\),
  \begin{align*}
    \langle f,\opt\oplq g\rangle_{\splq}
    &=
    \langle f_r,\opt\oplq g_r\rangle_{\splq} \\
    &=
    \langle \oplq f_r,\opt\oplq g_r\rangle_{\mathsf{H}} \\
    &=
    \langle \opt\oplq f_r,\oplq g_r\rangle_{\mathsf{H}} \\
    &=
    \langle \opt\oplq f_r,g_r\rangle_{\splq} \\
    &=
    \langle \opt\oplq f,g\rangle_{\splq}.
  \end{align*}
  Thus \(\opt\oplq\) is self-adjoint in \(\splq\).
\end{proof}

\subsection{Proof of Lemma~\ref{lem:bound-bahadur-2-2}}

\begin{proof}
As $\oplpw$ and $\oplpnw$ are self-adjoint, we have
\begin{equation*}
\begin{aligned}
  &\bigl\|
    (\oplpnw + \lambda \opid)^{-\frac{1}{2}}
    (\oplpw + \lambda \opid)^{\frac{1}{2}}
  \bigr\|^2\\
  =&
  \sup_{\|f\| = 1}
  \left\langle
    (\oplpnw + \lambda \opid)^{-\frac{1}{2}}
    (\oplpw + \lambda \opid)^{\frac{1}{2}} f,
    (\oplpnw + \lambda \opid)^{-\frac{1}{2}}
    (\oplpw + \lambda \opid)^{\frac{1}{2}} f
  \right\rangle \\
  =&
  \sup_{\|f\| = 1}
  \left\langle
    f,\,
    (\oplpw + \lambda \opid)^{\frac{1}{2}}
    (\oplpnw + \lambda \opid)^{-1}
    (\oplpw + \lambda \opid)^{\frac{1}{2}} f
  \right\rangle \\
  =&
  \bigl\|
    (\oplpw + \lambda \opid)^{\frac{1}{2}}
    (\oplpnw + \lambda \opid)^{-1}
    (\oplpw + \lambda \opid)^{\frac{1}{2}}
  \bigr\|.
\end{aligned}
\end{equation*}
Combining this with Lemma \ref{lem:bound-bahadur-2}, we have the desired result.
\end{proof}

\subsection[Derivation of the bootstrap matrix representation]
{Derivation of Equation~\eqref{eq:B-def}}

Starting from the first representation of \(B\) in
\eqref{eq:B-def}, we have
    \begin{equation*}
    \begin{aligned}
    B &= \frac{1}{n} \sum_{i=1}^n \sum_{j=1}^n h_{ij} \frac{\what V_i - \what V_j}{\sqrt{2}} \\
    &= \frac{1}{\sqrt{2}} \cdot \frac{1}{n} \left[ \sum_{i=1}^n \what V_i \sum_{j=1}^n h_{ij} - \sum_{j=1}^n \what V_j \sum_{i=1}^n h_{ij} \right] \\
    &= \frac{1}{\sqrt{2}} \cdot \frac{1}{n} \left[ \sum_{i=1}^n \what V_i \sum_{j=1}^n h_{ij} - \sum_{i=1}^n \what V_i \sum_{j=1}^n h_{ji} \right]\\
    &= \frac{1}{\sqrt{2}} \cdot \frac{1}{n} \sum_{i=1}^n \what V_i \sum_{j=1}^n (h_{ij} - h_{ji})\\
    &= \frac{1}{\sqrt{2}} \vvx^{\top}
    \diag(\what\beps) (\mah - \mah^{\top}) \vone .
    \end{aligned}
    \end{equation*}

where
\begin{equation*}
    \begin{aligned}
    \vvx^{\top} &= \left[ \frac{1}{n} w(X_1) (\oplpnw + \lambda \opid)^{-1} k_{X_1}, \dots, \frac{1}{n} w(X_n) (\oplpnw + \lambda \opid)^{-1} k_{X_n} \right] \\
&= \frac{1}{n} (\oplpnw + \lambda \opid)^{-1} \vkx^{\top} \maw.
    \end{aligned}
\end{equation*}
As
\begin{equation*}
    \begin{aligned}
    (\oplpnw + \lambda \opid) \vkx^{\top} &= [(\oplpnw + \lambda \opid) k_{X_1}, \dots, (\oplpnw + \lambda \opid) k_{X_n}]\\
    \oplpnw k_{X_j} &= \frac{1}{n} \sum_{i=1}^n w(X_i) k(X_i, X_j) k_{X_i} \\
    &= \frac{1}{n} (\vkx^{\top} \maw \mak)_j,
\end{aligned}
\end{equation*}
we have
\begin{equation*}
    \begin{aligned}
    (\oplpnw + \lambda \opid) \vkx^{\top} &= \vkx^{\top} \left( \frac{1}{n} \maw \mak + \lambda \maid \right) \\
    (\oplpnw + \lambda \opid)^{-1} \vkx^{\top} &= \vkx^{\top} \left( \frac{1}{n} \maw \mak + \lambda \maid \right)^{-1}.
    \end{aligned}
\end{equation*}
Then,
\begin{equation*}
    \begin{aligned}
    \vvx^{\top} = \vkx^{\top} (\maw \mak + n \lambda \maid)^{-1} \maw &= \vkx^{\top} (\mak + n \lambda \maw^{-1})^{-1}.
    \end{aligned}
\end{equation*}

\end{document}